\definecolor{darkblue}{RGB}{0,91,163} %
\definecolor{lightgreen}{RGB}{169,245,169} %
\definecolor{lightgreen2}{RGB}{152,228,152} %
\definecolor{lightred}{RGB}{250,88,88} %
\newcommand{\SetFigFont}[3]{}
\newtheorem{Def}{Definition}[section]
\newtheorem{Thm}[Def]{Theorem}
\newtheorem{Prp}[Def]{Proposition}
\newtheorem{Lemma}[Def]{Lemma}
\newtheorem{Remark}[Def]{Remark}
\newtheorem{Notation}[Def]{Notation}
\newtheorem{Corollary}[Def]{Corollary}
\newtheorem{Example}[Def]{Example}
\newtheorem{Assumption}[Def]{Assumption}
\newcommand{\beq}{\begin{equation}}
\newcommand{\eeq}{\end{equation}}
\newcommand{\Proof}{\begin{proof}}
\newcommand{\QED}{\end{proof} \noindent}
\newcommand{\QEDrem}{\ \hfill $\Diamond$}
\newcommand{\la}{\langle}
\newcommand{\ra}{\rangle}
\newcommand{\Sl}{\mbox{$\prec \!\!$ \nolinebreak}}
\newcommand{\Sr}{\mbox{\nolinebreak $\succ$}}
\newcommand{\C}{\mathbb{C}}
\newcommand{\R}{\mathbb{R}}
\newcommand{\1}{\mbox{\textrm 1 \hspace{-1.05 em} 1}}
\newcommand{\Z}{\mathbb{Z}}
\newcommand{\N}{\mathbb{N}}
\newcommand{\F}{\mathscr{F}} %
\newcommand{\Pdd}{\mbox{$\partial$ \hspace{-1.2 em} $/$}}
\newcommand{\slsh}{\mbox{ \hspace{-1.13 em} $/$}}
\newcommand{\Aslsh}{\mbox{ $\!\!A$ \hspace{-1.2 em} $/$}}
\newcommand{\na}{n_{\mathrm{a}}}
\newcommand{\scrU}{{\mathscr{U}}}
\newcommand{\scrA}{{\mathscr{A}}}
\newcommand{\D}{{\mathscr{D}}}
\DeclareMathOperator{\re}{Re}
\DeclareMathOperator{\im}{Im}
\DeclareMathOperator{\Tr}{Tr}
\DeclareMathOperator{\tr}{tr}
\DeclareMathOperator{\supp}{supp}
\renewcommand{\O}{{\mathscr{O}}}
\renewcommand{\L}{{\mathcal{L}}}
\newcommand{\Sact}{{\mathcal{S}}}
\newcommand{\T}{{\mathcal{T}}}
\newcommand\B{{\mathscr{B}}}
\newcommand\calB{{\mathcal{B}}}
\newcommand{\U}{\text{\textrm{U}}}
\newcommand{\SU}{\text{\textrm{SU}}}
\newcommand{\J}{\mathfrak{J}}
\newcommand{\Jdiff}{\mathfrak{J}^\text{\textrm{\tiny{diff}}}}
\newcommand{\Jtest}{\mathfrak{J}^\text{\textrm{\tiny{test}}}}
\newcommand{\Jlin}{\mathfrak{J}^\text{\textrm{\tiny{lin}}}}
\newcommand{\Jfield}{\mathfrak{J}^\text{\textrm{\tiny{field}}}}
\newcommand{\Gdiff}{\Gamma^\text{\textrm{\tiny{diff}}}}
\newcommand{\Gtest}{\Gamma^\text{\textrm{\tiny{test}}}}
\newcommand{\Ctest}{C^\text{\textrm{\tiny{test}}}}
\renewcommand{\u}{\mathfrak{u}}
\renewcommand{\v}{\mathfrak{v}}
\newcommand{\w}{\mathfrak{w}}
\newcommand{\UMNS}{U_\text{\textrm{\tiny{MNS}}}}
\renewcommand{\H}{\mathscr{H}}
\newcommand{\Lin}{\text{\textrm{L}}}
\newcommand{\itemD}{\item[{\raisebox{0.125em}{\tiny $\blacktriangleright$}}]}
\newcommand*\mcupinn[2]{\vcenter{\hbox{$\mathsurround=0pt
  \ifx\displaystyle#1\textstyle\else#1\fi\bigcup$}}}
\newcommand{\scrM}{\mycal M}
\newcommand{\scrN}{\mycal N}
\newcommand{\x}{{\textit{\myfont x}}}
\newcommand{\y}{{\textit{\myfont y}}}
\newcommand*{\myfont}{\fontfamily{ppl}\selectfont}		%
\DeclareFontFamily{OT1}{rsfso}{}
\DeclareFontShape{OT1}{rsfso}{m}{n}{ <-7> rsfso5 <7-10> rsfso7 <10-> rsfso10}{}
\DeclareMathAlphabet{\mycal}{OT1}{rsfso}{m}{n}
\newcommand{\wnabla}{\widetilde{\nabla}}
\newcommand\Later[1]{\marginpar {\flushleft\sffamily\footnotesize {\texttt{\textcolor{lightgreen2}{Later}}}: \textcolor{lightgreen}{#1}}}
\newcommand\Chd[1]{}
\newcommand\chd[1]{#1}
\renewcommand\Later[1]{}	%
\newcommand\Remt[1]{\textsc{#1}}	%
\newcommand\Thmt[1]{\textsc{\textbf{#1}}}	%
\titleformat{\chapter}[display]
{\normalfont\Large}{\filcenter{\chaptertitlename}\ {\Large \thechapter}}{2pt}{\filcenter\Huge}
\begin{document}

\begin{titlepage}
\newgeometry{bottom=1in,left=1in,right=1in}
\begin{center}
\textls{\Huge Dynamics of Causal Fermion Systems\\[.5em]%
{\LARGE Field Equations and Correction Terms\\[.8em]%
for a New Unified Physical Theory}
}\\[3.1em]

{\Large Dynamik kausaler Fermionensysteme}\\[.3em]
{\em Feldgleichungen und Korrekturterme\\[.3em]
für eine neue vereinheitlichte Theorie}\\

\vspace{1.3cm}

\includegraphics[scale=.17]{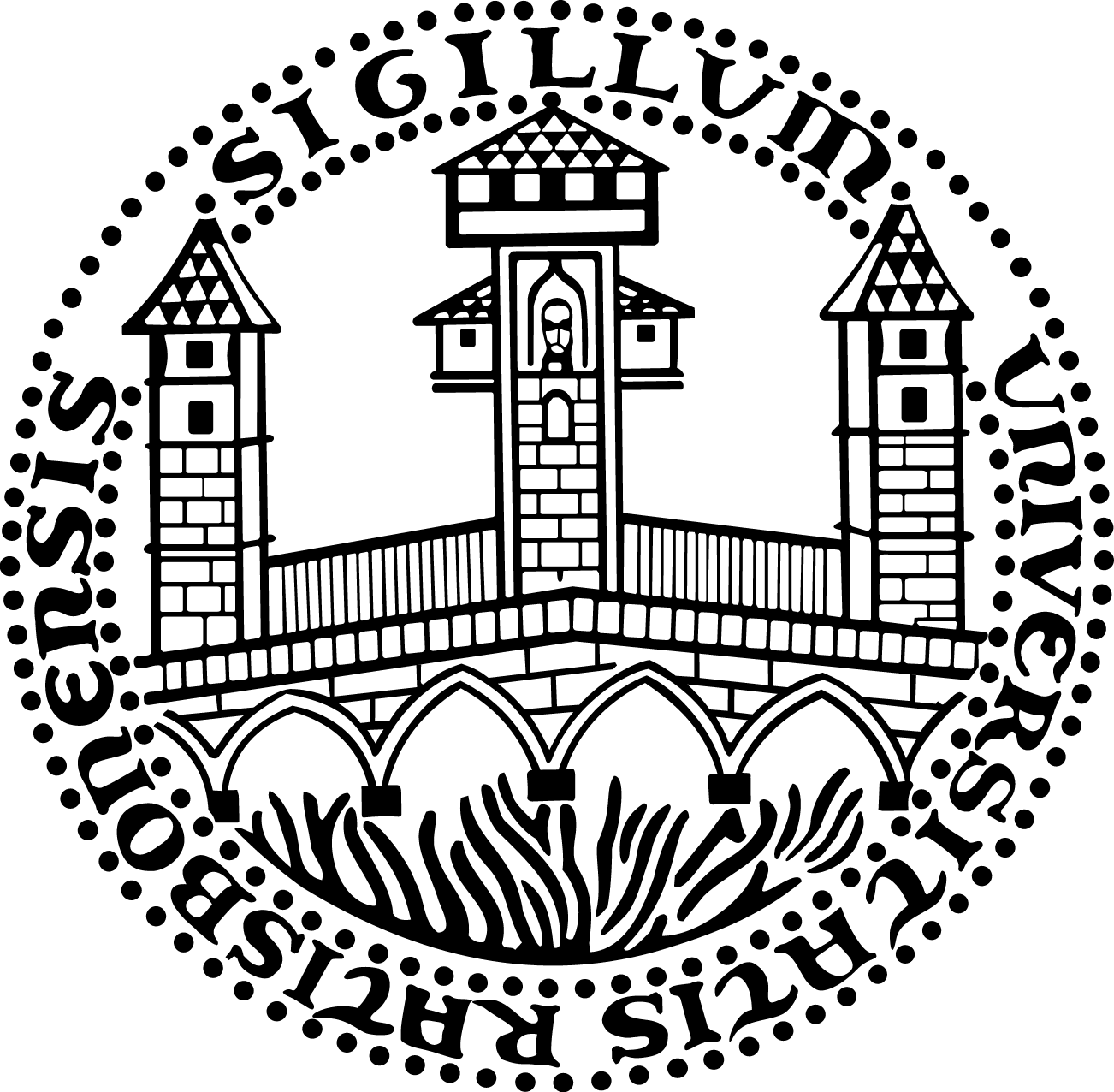}
\vspace{1.5cm}

\textls{\setstretch{1.2}
{\Large Dissertation}\\
zur Erlangung des Doktorgrades\\
der Naturwissenschaften (Dr.\:rer.\:nat)\\
der Fakultät für Mathematik\\
der Universität Regensburg\\[2.5em]%

vorgelegt von\\
Johannes Kleiner\\
aus Kempten (Allgäu)\\
im Jahr 2017\\ 
}
\end{center}
\newpage \thispagestyle{empty} \	%
\end{titlepage}

\newpage \thispagestyle{empty} \
\ \\ \vfill
\noindent Promotionsgesuch eingereicht am 19.07.2017.\\[2em]
Die Arbeit wurde angeleitet von Prof. Dr. Felix Finster.\\[2em]
\begin{tabular}{@{}lll}
Prüfungsausschuss: & Vorsitzender: & Prof. Dr. Denis-Charles Cisinski \\[.4em]
 & 1.\! Gutachter: & Prof. Dr. Felix Finster \\[.4em] 
 & 2.\! Gutachter: & Prof. Dr. Peter Pickl  (LMU München) \\[.4em] 
 & weiterer Prüfer: & Prof. Dr. Helmut Abels \\ 
\end{tabular} 
\vspace{1cm}

\newpage \thispagestyle{empty} \ %
\newpage \thispagestyle{empty} %

\newpage \thispagestyle{empty}

\vspace*{10cm}
\textit{I dedicate this thesis
to humanity in the hope that it will continue its bright path to
establish a world which is just and good to all.\Chd{Ich würde das lieber so lassen. Inhaltlich ist meiner Ansicht nach ganz klar, dass sich ``it'' auf das letzte Substantiv bezieht. Wäre das okay?}%
} 

\newpage \thispagestyle{empty} \ %

\newpage \thispagestyle{empty}

\vspace*{.5cm}
\begin{center}
{\Large Abstract}
\end{center}
 \begin{quote}\small
The theory of causal fermion systems is a new physical theory which aims to describe a fundamental level of physical reality.
Its mathematical core is the causal action principle. In this thesis, we develop a
formalism which connects the causal action principle to a suitable notion of fields
on space-time. We derive field equations from the causal action principle and find that the dynamics induced by the field equations conserve a symplectic form
which gives rise to an Hamiltonian time evolution
if the causal fermion system admits a notion of `time'.
In this way, we establish the dynamics of causal fermion systems.
\\
Remarkably, the causal action principle implies that there are correction terms to the field equations, which we subsequently derive and study. In particular, we prove that there is a stochastic and a non-linear correction term and investigate how they relate to the Hamiltonian time evolution. 
Furthermore, we give theorems which generalize the connection between symmetries and conservation laws in Noether's theorems to the theory of causal fermion systems.
The appearance of the particular correction terms is reminiscent of dynamical collapse models in quantum theory.
\end{quote}

\vspace{.5cm} 
 
\begin{center}
{\Large Kurzzusammenfassung}
\end{center}
\begin{quote}\small
Die Theorie der kausalen Fermionensysteme ist eine kürzlich entwickelte phy\-si\-ka\-lische Theorie, welche die fundamentalen Theorien
in der Physik vereinheitlicht und somit einen neuen Vorschlag für die Beschreibung der grundlegenden Strukturen der physikalischen Realität darstellt.
Die maßgebliche mathematische Struktur in dieser Theorie bildet das sogenannte kausale Wirkungsprinzip. Gegenstand dieser Arbeit ist es,
einen Formalismus zu entwickeln, welcher aufbauend auf diesem abstrakten Prinzip eine Beschreibung liefert, \chd{die Parallelen zu den üblichen Theorien
in der Physik hat: Die Dynamik von kausalen Fermionensystemen.}
\Chd{Ist dieser Satz so in Ordnung?}%

Ergebnis dieser Forschungen ist der sogenannte Jet-Formalismus von kausalen Fermionensystemen, welcher als verallgemeinerte physikalische Felder $1$-Jets
zu Grunde legt. Basierend auf den Euler-Lagrange Gleichungen des kausalen Wirkungsprinzips leiten wir Feldgleichungen ab und studieren deren Lösungen.
Es zeigt sich, dass eine symplektische Form konstruiert werden kann, welche unter der von den Feldgleichungen induzierten Dynamik erhalten ist.
Dies führt zur Definition einer Hamiltonschen Zeitentwicklung für kausale Fermionensysteme.

Bemerkenswerterweise führt das kausale Wirkungsprinzip zu Korrekturtermen für die Feldgleichungen, die wir im weiteren Verlauf der Arbeit untersuchen.
Unsere Theoreme etablieren einen stochastischen und einen nicht-linearen Korrekturterm, sowie deren Beziehung zu markoskopischen Gleichungen und zur
eingangs erwähnten symplektischen Form. Das Auftreten dieser speziellen Korrekturterme deutet auf Parallelen zu dynamischen Kollaps-Theorien in der nicht-relativistischen Quantenmechanik hin.
Die Arbeit enthält außerdem Theoreme, welche den durch die Noetherschen Theoreme gegebenen Zusammenhang zwischen Symmetrien und Erhaltungsgrößen
auf die Theorie der kausalen Fermionensysteme übertragen.
\end{quote}

\newpage \thispagestyle{empty} \ 
\newpage \thispagestyle{empty} 	%
\begin{center}
{\Large Acknowledgements}
\end{center}

First of all, I want to thank the one person whom I could talk to about the details of my research, my supervisor Felix Finster.
I am grateful for the inspiration, conversations, ways of thinking and knowledge which you gave me,
and deeply appreciate the freedom you provided for me to study other fields and visit other institutions. 

Next, I want to thank my personal friends and my family who enriched my life so considerably during my studies and during my PhD.
For providing for all my needs and comforts, both material and mental, I want to thank Lucia, Teresa and Meinrad.
For being the most supportive partner, I want to thank Barbara.  
For spurring my mind with the most inspiring conversations about the foundations of reality, I want to thank Robin and Clemens.
For the most enlightening late hour coffee house physics conversations and the most fascinating dance moves, I want to thank Jan-Hendrik.
For being the most reliable and honest friends over many years, and for being 
hidden safe harbours in my life, I want to thank Martin, Sven and Clemens.

For getting entangled in spiralling scientific and philosophic debates, I want to thank the many friends which I have \chd{made during} the Rethinking workshops, and in particular
my co-organizers Robin, Fede, Jan-Hendrik and Franz.
\Chd{``former'' raus}%
For putting up with my ever critical attitude, and for much enlightenment, I want to thank the members of my reading circle in Regensburg.
For co-organizing and being a source of countless ideas, I want to thank Hermann Josef, Sven and Barbara.
For being the most fun company and for keeping my bounds with home, I want to thank my Allgäu friends, in particular Thommy, as well as Soldo and
Elfriede.

Finally, my thanks go to the many friends I cannot mention here, in particular from my studies in Regensburg, Heidelberg and Freiburg.
In endless conversations, you have given me orientation in life and goals to strive for. Thanks so much for that.
They go to my former supervisors, who have invested so much of their time and effort into furthering my education, to my former and current colleagues, who I have shared with many jolly and amusing moments and to the many wonderful people who I have met in other scientific institutions, in particular in Oxford and Marseille.
All of you have given me confidence and trust in science, and have shown me the warm-hearted part of this joint endeavour of ours.

Concerning institutions, most of all I want to thank the Studienstiftung des Deutschen Volkes for providing trust in my capabilities and visions by financing my PhD and
many journeys.
Next, I want to thank the Department of Mathematics at the University of Regensburg for providing my daily needs and a very lively mathematical atmosphere, and in particular the 
DFG Graduate School GRK 1692 for financing journeys and for enabling one of the most dearest activities of my PhD, the organization of the Rethinking Foundations of Physics Workshops, by providing financial means and infrastructure. Thank you so much for that! 
Finally, I want to thank the Center of Mathematical Sciences and Applications of Harvard University, the Department of Computer Science of Oxford University and the 
Centre de Physique Théorique of Aix-Marseille Université for hospitality and support.

 \thispagestyle{empty}

\newpage \thispagestyle{empty} \  %

\newpage
\setcounter{page}{1}
\pagenumbering{roman}
\tableofcontents

\chapter{Introduction}\label{DissIntroduction}
\pagenumbering{arabic}
\setcounter{page}{1}

The theory of causal fermion systems is a new physical theory which aims to describe a fundamental level of physical reality.
\Chd{Seitenzahlen des Inhaltsverzeichnisses auf ``roman'' geändert.}%
Since it gives quantum mechanics, general relativity and quantum field theory as limiting cases (\cite{cfs,qft}),
it constitutes a unification of the currently accepted fundamental physical theories. 

The mathematical structure of the theory of causal fermion systems differs substantially from the mathematical structure of 
quantum mechanics, quantum field theory and general relativity.
The basic object is a \textit{causal fermion system}, defined as 
a triple $(\H,\F,\rho)$, where $\H$ is a separable complex Hilbert space, $\F \subset \Lin(\H)$ is the set of all those self-adjoint linear operators on $\H$ which (counting multiplicities) have at most~$n$ positive and at most~$n$ negative eigenvalues, where $n$ is a parameter of the theory, and $\rho$ is a Borel measure on $\F$.
\Chd{``regular'' gelöscht.}%
Its basic principle is the \emph{causal action principle}, which consists of minimizing the
action
\beq\label{ICausalAction}
\Sact(\rho) = \int_\F \int_\F \L(x,y)\: d\rho(x) \: d\rho(y)
\eeq
under variations of the measure $\rho$, taking into account additional constraints, for example $\rho(\F)$ to be constant.
Here, $\L(x,y) := \frac{1}{4n} \sum _{i,j=1}^{2n}
\big( | \lambda_i^{xy} | - | \lambda_j^{xy} | \big)^2$ is the Lagrangian of the theory, where $\lambda_i^{xy}$ denote the 
eigenvalues of the operator product $x y$ for $x,y \in \F$.

The goal of this thesis is to connect this mathematical structure to a description which is more similar to the currently used physical theories:
A formulation in terms of fields on space-time, where in the context of causal fermion systems, for physical reasons, space-time $M$ is defined as
\[ M := \supp \rho\: . \]

To achieve this goal of constructing an effective description of the theory in terms of fields on space-time, we consider variations of a measure $\rho$ which are infinitesimally described by a function $b: \F \rightarrow \R$ and  by a vector field
$v$ on $\F$. Combined, these two form a \emph{jet} $\v := (b,v)$. It turns out that the restriction of jets $\v$ to $M \subset \F$
yields a fruitful and satisfying notion of fields on space-time in this context.

Based on jets, we develop the so-called \emph{jet-formalism} of causal fermion systems. Starting from the Euler-Lagrange equations of the causal action principle, we derive
field equations for jets on space-time and show that those field-equations give rise to a symplectic form. This symplectic form is expressed in terms of 
so-called \emph{surface layer integrals} which generalize surface integrals to causal fermion systems. 
If space-time $M$ contains a suitable notion of Cauchy surfaces (and hence time), the surface layer integrals can be associated to Cauchy surfaces at different times.
We prove that this yields a symplectic form which is conserved with respect to the time-evolution induced by the field equations. We call this a \emph{Hamiltonian time evolution}.

Importantly, the causal action principle implies that there are correction terms to the field equations, which we subsequently derive and study. In particular, we prove that there is a stochastic and a non-linear correction term and investigate how they relate to the Hamiltonian time evolution. In this way, we establish the \emph{dynamics of causal fermion systems}.
 An important first step in our investigations is the construction of theorems which generalize Noether's theorems to the setting of causal fermion systems.\medskip

Throughout this thesis, we refer to two different settings. The first is the \emph{setting of causal fermion systems} which we have described above (second paragraph of the introduction).
A second setting is the \emph{general setting}, where we assume that $\F$ is a smooth, finite dimensional manifold, $\rho$ is a Borel measure on $\F$ and $\L: \F \times \F \rightarrow \R^+_0$ is a general function.
\Chd{``generalized'' $\rightarrow$ ``general'' hier und im Folgenden}%
\Chd{``regular'' gelöscht}%
The name of this setting comes about because it is more general than the setting of causal fermion systems (and also easier accessible).  Nevertheless, the physically interesting case is the setting of causal fermion systems.

When working in the general setting, we use different regularity assumptions for the Lagrangian $\L$, depending on what is necessary to make all constructions well-defined. To make this easily apparent, we introduce the names\\[-2em]
\begin{itemize}
\itemsep-.5em
\item[-] \emph{lower semi-continuous setting}, if we work in the general setting and $\L$ is assumed to be lower semi-continuous,
\item[-] \emph{Lipschitz-continuous setting}, if we work in the general setting and $\L$ is assumed to be Lipschitz-continuous, and
\item[-] \emph{smooth setting}, if we work in the general setting and $\L$ is assumed to be smooth.\\[-2em]
\end{itemize}
Furthermore, we define the
\\[-2em]
\begin{itemize}
\itemsep-.5em
\item[-] \emph{compact setting} as the Lipschitz-continuous setting with the additional assumption of $\F$ being compact.
\\[-2em]
\end{itemize}

For simplicity and clarity, in this introduction, we mostly outline our results and methods in the general setting.
\Chd{Folgendes weggelassen: ``In all parts of this thesis, we first establish the results in the general setting. The connection of the results to the setting of causal fermion systems is explained in a second step.''}%
As usual, we aim for a concise presentation and hence omit technical details, referring to later parts of this thesis whenever necessary.\medskip

This dissertation and its introduction are structured as follows.  In a first step towards finding a connection to fields on space-time, we prove generalizations of Noether's theorems to causal fermion systems. An outline of this work is given in Section~\ref{INoether}, all details can be found in Chapter~\ref{DissNoether}. 
Chapter~\ref{DissJet} contains the jet-formalism of causal fermion systems, the derivation of field equations and the construction of the above-mentioned symplectic form and of the Hamiltonian time evolution. The results are outlined in Section~\ref{JIntr}. 
Chapter~\ref{DissStoch} is concerned with the derivation of correction terms to the field equations as well as the investigation of its implications.
This is outlined in Section~\ref{Indfield}.
We conclude this introduction in Section~\ref{IModelQT} by explaining the connection of the results of this thesis to foundations of quantum theory. This connection has been one of the main driving forces behind these investigations.

Chapter~\ref{DissIntroCFS} is devoted to give an easily accessible introduction to the theory of causal fermion systems, focussing on the 
basic concepts and the general physical picture behind. The full mathematical setup of the setting of causal fermion systems is introduced in Section~\ref{Nseccfsbasic}.
\medskip

We remark that Chapter~\ref{DissIntroCFS} has been published as~\cite{dice2014}, Chapter~\ref{DissNoether} as~\cite{noether} and Chapter~\ref{DissJet} as~\cite{jet}. Even though these chapters relate to one another, they are self-contained, allowing the reader to leap forward to any one of them at any point. 
Being the most recent development, Chapter~\ref{DissStoch} has not yet been published. It is not self-contained, but a reading
of Sections~\ref{Jsecnoncompact},~\ref{JsecEL} as well as~\ref{JWeakElLSC} to~\ref{JCondS1} contains all necessary prerequisites.

\section{Noether-Like Theorems}\label{INoether}
In modern physics, the connection between symmetries and conservation laws
is of central importance. For continuous symmetries, this connection is made mathematically precise
by Noether's theorem~\cite{noetheroriginal}.

In a first part of this thesis, we explore symmetries and the resulting conservation laws
in the framework of causal fermion systems. The difficulty is that a priori, it is not clear at all how 
the definitions of symmetries in contemporary physics (cf. Section~\ref{Nsecclass}) can be generalized to causal fermion systems,
nor which form conservation laws could take. We now present the answers to these questions in the compact setting.

To formalize the notion of symmetries, we consider mappings~\eqref{NPhidef},
\[
\Phi : (-\tau_{\max}, \tau_{\max}) \times M \rightarrow \F
\qquad \text{with} \qquad \Phi(0,.) = \1 \:,
\] 
denoted as $\Phi_\tau(x)$, which are called {\em{variations}} of~$M$ in~$\F$.

It turns out that three definitions of symmetries can be considered.
A variation~$\Phi$ is a {\em{symmetry
of the Lagrangian}} (Definition~\ref{Ndefsymmlagr}) if 
\[
\L \big( x, \Phi_\tau(y) \big) = \L \big( \Phi_{-\tau}(x), y \big)
\qquad \text{for all~$\tau \in (-\tau_{\max}, \tau_{\max})$
and~$x, y \in M$\:.}
\]
It is a \emph{symmetry of the universal measure} (Definition~\ref{Ndefsymmrho}) if
\[
(\Phi_\tau)_* \rho = \rho
\qquad \text{for all~$\tau \in (-\tau_{\max}, \tau_{\max})$\:,}
\]
where $(\Phi_\tau)_* \rho$ is the push-forward measure,
and a \emph{generalized integrated symmetry} (Definition~\ref{Ndefgis}) if
\[
\int_M d\rho(x) \int_\Omega d\rho(y) \Big( \L\big( \Phi_\tau(x), y\big) - \L(x,y) \Big)  = 0 \:.
\]
Generalized integrated symmetries unite symmetries of the Lagrangian and symmetries of the universal measure, cf. Section~\ref{Nsecsymmgis}.
All of these notions of symmetries give rise to conservation laws (Theorems~\ref{Nthmsymmlag},~\ref{Nthmsymmum},~\ref{Nthmsymmgis}).\medskip
\begin{itemize}[leftmargin=5.3em]
\item[\textbf{Theorem.}]\em{
Let~$\rho$ be a measure which minimizes the causal variational principle and $\Phi_\tau$ a 
symmetry of the Lagrangian which is differentiable as in Definition~\ref{Ndefsymm}. Then for any compact subset~$\Omega \subset M$,
we have
\beq \label{IConserved}
\frac{d}{d\tau} \int_\Omega d\rho(x) \int_{M \setminus \Omega} d\rho(y)\:
\Big( \L \big( \Phi_\tau(x),y \big) - \L \big( \Phi_{-\tau}(x), y \big) \Big) \Big|_{\tau=0} = 0 \:.
\eeq}
\end{itemize}

\begin{itemize}[leftmargin=5.3em]
\item[\textbf{Theorem.}]\em{
Let~$\rho$ be a minimizing measure and $\Phi_\tau$ a symmetry of the universal measure which is differentiable as specified in Definition~\ref{Ndefsymm}.
For any compact subset~$\Omega \subset M$,
\beq \label{IConservedUniv}
\frac{d}{d\tau} \int_\Omega d\rho(x) \int_{M \setminus \Omega} d\rho(y)\: \Big( 
\L \big( \Phi_\tau(x), y \big)  - \L \big( x, \Phi_\tau(y) \big) \Big) \Big|_{\tau=0} = 0 \:.
\eeq
}
\end{itemize}

\begin{itemize}[leftmargin=5.3em]
\item[\textbf{Theorem.}]\em{
Let~$\rho$ be a minimizing measure and $\Phi_\tau$ a generalized integrated symmetry which is differentiable as specified in Definition~\ref{Ndefsymm}.
Then for any compact subset~$\Omega \subset M$,~\eqref{IConservedUniv} holds.
}
\end{itemize}

All of these conservation laws are examples of {\em{surface layer integrals}}, as introduced in Section~\ref{Nsecsli}.
The structure of surface layer integrals can be understood most easily 
in the special situation that the Lagrangian is of short range
in the sense that~$\L(x,y)$ vanishes unless~$x$ and~$y$ are close together.
In this situation, we only get a contribution to the double integrals~\eqref{IConserved} and~\eqref{IConservedUniv}
if both~$x$ and~$y$ are close to the boundary~$\partial \Omega$ of $\Omega$, as indicated by the
dark grey region in Figure~\ref{Ifignoether1}. Therefore, surface layer integrals can be understood as an adaptation
of surface integrals to the setting of causal variational principles
(for a more detailed explanation see Section~\ref{Nsecsli}). Figure~\ref{Ifignoether1} illustrates
the similarity between surface integrals (left) and surface layer integrals (right).

\begin{figure}\centering
\psscalebox{1.0 1.0} %
{
\begin{pspicture}(0,-1.511712)(10.629875,1.511712)
\definecolor{colour0}{rgb}{0.8,0.8,0.8}
\definecolor{colour1}{rgb}{0.6,0.6,0.6}
\pspolygon[linecolor=black, linewidth=0.002, fillstyle=solid,fillcolor=colour0](6.4146066,0.82162136)(6.739051,0.7238436)(6.98794,0.68384355)(7.312384,0.66162133)(7.54794,0.67939913)(7.912384,0.7593991)(8.299051,0.8705102)(8.676828,0.94162136)(9.010162,0.9549547)(9.312385,0.9371769)(9.690162,0.8571769)(10.036829,0.7371769)(10.365718,0.608288)(10.614607,0.42162135)(10.614607,-0.37837866)(6.4146066,-0.37837866)
\pspolygon[linecolor=black, linewidth=0.002, fillstyle=solid,fillcolor=colour1](6.4146066,1.2216214)(6.579051,1.1616213)(6.770162,1.1127324)(6.921273,1.0905102)(7.103495,1.0816213)(7.339051,1.0549546)(7.530162,1.0638436)(7.721273,1.0993991)(7.8857174,1.1393992)(8.10794,1.2060658)(8.299051,1.2549547)(8.512384,1.3038436)(8.694607,1.3260658)(8.890162,1.3305103)(9.081273,1.3393991)(9.379051,1.3216213)(9.659051,1.2593992)(9.9746065,1.1705103)(10.26794,1.0460658)(10.459051,0.94384354)(10.614607,0.82162136)(10.610162,0.028288014)(10.414606,0.1660658)(10.22794,0.26828802)(10.010162,0.37051025)(9.663495,0.47273245)(9.356829,0.53051025)(9.054606,0.548288)(8.814607,0.54384357)(8.58794,0.5171769)(8.387939,0.48162135)(8.22794,0.44162133)(7.90794,0.34828803)(7.6946063,0.29939914)(7.485718,0.26828802)(7.272384,0.26828802)(7.02794,0.28162134)(6.82794,0.3171769)(6.676829,0.35273245)(6.543495,0.38828802)(6.4146066,0.42162135)
\pspolygon[linecolor=black, linewidth=0.002, fillstyle=solid,fillcolor=colour0](0.014606438,0.82162136)(0.3390509,0.7238436)(0.5879398,0.68384355)(0.9123842,0.66162133)(1.1479398,0.67939913)(1.5123842,0.7593991)(1.8990508,0.8705102)(2.2768288,0.94162136)(2.610162,0.9549547)(2.9123843,0.9371769)(3.290162,0.8571769)(3.6368287,0.7371769)(3.9657176,0.608288)(4.2146063,0.42162135)(4.2146063,-0.37837866)(0.014606438,-0.37837866)
\psbezier[linecolor=black, linewidth=0.04](6.4057174,0.8260658)(7.6346064,0.45939913)(7.8634953,0.8349547)(8.636828,0.92828804)(9.410162,1.0216213)(10.165717,0.7927325)(10.614607,0.42162135)
\psbezier[linecolor=black, linewidth=0.04](0.005717549,0.8260658)(1.2346064,0.45939913)(1.4634954,0.8349547)(2.2368286,0.92828804)(3.0101619,1.0216213)(3.7657175,0.7927325)(4.2146063,0.42162135)
\rput[bl](2.0101619,0.050510235){$\Omega$}
\rput[bl](8.759051,0.0016213481){\normalsize{$\Omega$}}
\psline[linecolor=black, linewidth=0.04, arrowsize=0.09300000000000001cm 1.0,arrowlength=1.7,arrowinset=0.3]{->}(1.9434953,0.85495466)(1.8057176,1.6193991)
\rput[bl](2.0946064,1.1705103){$\nu$}
\psbezier[linecolor=black, linewidth=0.02](6.4146066,0.42384356)(7.6434956,0.057176903)(7.872384,0.43273246)(8.645718,0.52606577)(9.419051,0.61939913)(10.174606,0.39051023)(10.623495,0.019399125)
\psbezier[linecolor=black, linewidth=0.02](6.410162,1.2193991)(7.639051,0.8527325)(7.86794,1.228288)(8.6412735,1.3216213)(9.414606,1.4149547)(10.170162,1.1860658)(10.619051,0.8149547)
\rput[bl](8.499051,0.9993991){\normalsize{$y$}}
\rput[bl](7.8657174,0.49273247){\normalsize{$x$}}
\psdots[linecolor=black, dotsize=0.06](8.170162,0.65273243)
\psdots[linecolor=black, dotsize=0.06](8.796828,1.1327325)
\rput[bl](3.6146064,0.888288){$\scrN$}
\rput[bl](1.1146064,-1.4117119){$\displaystyle \int_\scrN \cdots\, d\mu_\scrN$}
\rput[bl](5.7146063,-1.511712){$\displaystyle \int_\Omega d\rho(x) \int_{M \setminus \Omega} d\rho(y)\: \cdots\:\L(x,y)$}
\end{pspicture}
}
\caption{A surface integral and a corresponding surface layer integral.}
\label{Ifignoether1}
\end{figure}
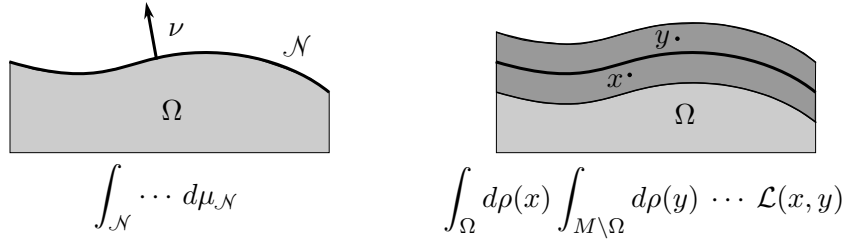

In Section~\ref{Nseccfs}, we give the definitions of the above symmetries in the setting of causal fermion systems and derive the corresponding conservation laws.
For brevity we do not repeat this here. Building on those generalizations, in Section~\ref{Nsecexcurrent}, we show that 
a transformation
\[
\Phi_\tau \,:\, \R \times \F \rightarrow \F\:,\qquad \Phi(\tau, x) = \scrU_\tau \,x\, \scrU_\tau^{-1} \:.
\]
with
\[
\scrU_\tau := \exp(i \tau \scrA) \: ,
\]
where~$\scrA$ is a bounded symmetric operator on~$\H$, yields a symmetry of the Lagrangian (cf. Lemma~\ref{NPhiSymmLag}).
We proof that in a limit of the theory of causal fermion systems, evaluation of the conservation law corresponding to~\eqref{IConserved} for this symmetry
yields current conservation for the Dirac equation (Theorem~\ref{Ncorcurrent}).
\begin{itemize}[leftmargin=5.3em]
\item[\textbf{Theorem.}]\em{
Let~$(\H, \F, \rho^\varepsilon)$ be local minimizers of the causal action which
describe the Minkowski vacuum~\eqref{NPvac}.
Considering the limiting procedure explained in Figure~\ref{Nfignoether2} and taking the
continuum limit, the conservation laws of Theorem~\ref{Nthmcurrent} go over to
a linear combination of the probability integrals in every generation.
}
\end{itemize}

Finally, in Section~\ref{NsecexEM}, we establish a connection to energy-momentum conservation. To this end, we give a suitable definition of Killing symmetries
(Definition~\ref{Ndefkilling}) and derive a corresponding conservation law (Theorem~\ref{NThmKillingCons}). Theorem~\ref{NthmEMcons} establishes that
the conservation law associated to Killing symmetries of causal fermion systems indeed yields energy-momentum conservation.
Thus the conservation laws of charge and energy-momentum can be viewed as
special cases of more general conservation laws which are intrinsic to causal fermion systems.

We remark that our conservation laws also apply to ``quantum space-times'' (\cite{lqg}) which cannot
be approximated by a Lorentzian manifold.  

\section{Hamiltonian Formulation and Linearized Field Equations}\label{JIntr}

In Chapter~\ref{DissJet}, we develop a formalism to define the dynamics of the theory of causal fermion systems,
where, as mentioned above, the term `dynamics' loosely refers to objects which propagate on space-time. The difficulty here
is that the theory of causal fermion systems, as introduced above, has a structure which is completely distinct from any formulation in terms of
dynamics. Hence, even which objects to consider is a priori unclear, let alone how to construct field equations and conserved quantities.
We now explain our constructions and results in the general setting.

Let~$\rho$ be a minimizer of the causal variational principle (for mathematical details
see Section~\ref{Jsecnoncompact} below).
The key step towards describing the causal variational principle in terms of a Hamiltonian time evolution
is to consider variations of $\rho$ described by
a diffeomorphism $F: \F \rightarrow \F$ and a weight function $f: \F \rightarrow \R_0^+$.
More precisely, we consider families $(F_\tau)_{\tau \in \R}$ and $(f_\tau)_{\tau \in \R}$
of such diffeomorphisms and weight functions and
form a corresponding family~$(\rho_\tau)_{\tau \in \R}$ of measures by
\begin{align}\label{JIntrRhoTau}
\rho_\tau = (F_\tau)_* \big( f_\tau \, \rho \big) \, .
\end{align}
Here $(F_\tau)_*\mu$ denotes the push-forward of the measure $\mu$.
Thus the measure $\rho_\tau$ is obtained from $\rho$ by first multiplying with the weight function $f_\tau$ and then ``transporting'' the resulting measure with the diffeomorphism $F_\tau$ on $\F$.

Infinitesimal versions of the variation~\eqref{JIntrRhoTau} consist of a scalar part corresponding to the
$\tau$-derivative of $f_\tau$ and a vectorial part corresponding to the $\tau$-derivative of $F_\tau$.
Thus variations of the form~\eqref{JIntrRhoTau} can be described infinitesimally by 
a pair~$(b,v)$ of a real-valued function and a vector field.
It turns out that in the context of causal fermion systems, the pairs~$(b,v)$, if restricted to $M$, are suitable candidates for 
physical fields on space-time. Put differently, pairs~$(b,v)$ can be viewed as generalized physical fields.
We remark at this point that the definition of {\em{space-time}}~$M$ as $M := \supp \rho \subset \F $
indeed generalizes the usual notion of space-time (being Minkowski space or a Lorentzian manifold)
in the setting of causal fermion systems (for details see~\cite[Section~1.2]{cfs} or~\cite[Sections~4 and~5]{lqg}).

In order to have a short name which cannot be confused with common notions in physics,
we refer to the pairs~$(b,v)$ as {\em{jets}}, being elements of the
corresponding {\em{jet space}}\footnote{The connection to jets in differential geometry
(see for example~\cite{saunders}) is obtained by considering real-valued functions on~$\F$.
Then their one-jets are elements in~$C^\infty(\F) \oplus \Gamma(\F, T^*\F)$. Identifying the
cotangent space with the tangent space gives our jet space~$\J$.}
\[ \J := \big\{ \v = (b,v) \text{ with } b : \F \rightarrow \R \text{ and } v \in \Gamma(\F) \} \]
(see~\eqref{JJdef} and~\eqref{JJDiffLip}).

We are interested in variations of the form~\eqref{JIntrRhoTau} which are minimizers of the causal action also
for~$\tau \neq 0$. Such ``families of minimizers'' are of interest because 
in the theory of causal fermion systems, they
correspond to variations which satisfy the physical equations.
The requirement of~$\rho_\tau$ being a minimizer for all~$\tau$ gives rise to
conditions for the jet~$\v = \partial_\tau \rho_\tau|_{\tau=0} \in \J$ describing the infinitesimal
variation. In view of the similarities and the correspondence to classical field theory
(as worked out in~\cite[\S1.4.1 and Chapters~3-5]{cfs} and~\cite{perturb}),
we refer to these conditions as the \textit{linearized field equations}.
In order to specify the linearized field equations, we introduce the function
\beq\label{Iell}
\ell(x) = \int_\F \L(x,y)\: d\rho(y) - \frac{\nu}{2} 
\eeq
which is used in the Euler-Lagrange equations of causal variational principles (Section~\ref{JsecEL}).
The parameter~$\nu \geq 0$ is the Lagrange multiplier corresponding to the volume constraint.
We define the differentiable one-jets $\Jdiff \subset \J$ to consists of all those jets $\u = (a,u)$ for which $\ell$ is differentiable in the direction of $u$ on $M$,
\[
\Jdiff := \{ \u= (a,u) \in \J \, \big| \, D^+_{u} \ell|_M = D^-_{u} \ell|_M \} \: ,
\]
where $D^+_u$ denotes the right directional semi-derivative~\eqref{JDvL}
and $D^-_u$ denotes the left directional semi-derivative~\eqref{JLeftSemiDeriv}.
Also, we define the {\em{derivative}}~$\nabla_{\v}$ \emph{in the direction of a one-jet}~$\v = (b, v)$ as
as a combination of multiplication and differentiation,
\[ \nabla_{\v} \eta(x)= b(x) \,\eta (x) + D_v \eta(x) \]
(where~$D_v$ is the usual directional derivative of functions on~$\F)$, and similarly for semi-derivatives $ \nabla^+_{\v}$ and $ \nabla^-_{\v}$
(cf.~\eqref{JDjetSemi}). The notation~$\nabla_{1, \v}$ and~$\nabla_{2, \v}$ denotes partial derivatives acting on the
first and second argument of~$\L(x,y)$, respectively.

For differentiable one-jets, the Euler-Lagrange equations of the causal variational principle imply the so-called
\emph{weak Euler-Lagrange equations},
\beq\label{IWElDiff}
\nabla_{\u} \ell(x) = 0 \qquad \text{for all~$x \in M$ and~$\u \in \Jdiff$}
\eeq
(cf.~\ref{JWElDiff}), which are the starting point of the derivation of the linearized field equations.

Using the above definitions, the \emph{linearized field equations} can be written as
\begin{align}\label{JIntreqLinFieldEq}
\nabla_{\u} \bigg( \int_M \big( \nabla_{1, \v} + \nabla_{2, \v} \big) \L(x,y)\: d\rho(y) - \nabla_\v \:\frac{\nu}{2} \bigg)
= 0
\end{align}
for all~$\u \in \Jtest$ and~$x\in M$. (For details see Lemma~\ref{Jlemmalin} and~\eqref{Jeqlinlip}.)

Here, the {\em{test jets}}~$\, \Jtest \subset \Jdiff$ are defined as a subspace of the differentiable one-jets used to test the 
requirement of minimality in a weak sense. This so-called {\em{weak evaluation of the Euler-Lagrange
equations}} is an important mathematical and physical concept because
by choosing~$\Jtest$ appropriately, one can restrict attention to the part of the information contained
in the Euler-Lagrange equations which is relevant for the application in mind.
In order to illustrate how this works, we give a typical example:
For the description of macroscopic physics, one would like to
disregard effects which come into play only on the Planck scale. To this end,
one chooses~$\Jtest$ as a space of jets which vary only on the macroscopic scale,
so that ``fluctuations on the Planck scale are filtered out''.

The above results hold in the general setting. Next, we restrict to the smooth setting, i.e. we assume that $\L$ is smooth, cf. Section~\ref{JSecSmooth}. The assumptions in this section ensure that $\Jdiff = \J$. We consider the set $\calB$ of all measures of the form~\eqref{JIntrRhoTau} which satisfy the weak Euler-Lagrange 
equations~\eqref{IWElDiff}
and assume that this set is a smooth Fr{\'e}chet manifold. 
Then infinitesimal variations of~\eqref{JIntrRhoTau} which are solutions of~\eqref{JIntreqLinFieldEq} are 
vectors of the tangent space~$T_\rho \calB$.
In this setting, we show that the structure of the causal variational principle
gives rise to a {\em{symplectic form}} on $\calB$ (Section~\ref{JSecSympForm}).
Namely, for any~$\u, \v \in T_\rho \calB$ and~$x,y \in M$, let
\[ \sigma_{\u, \v}(x,y) := \nabla_{1,\u} \nabla_{2,\v} \L(x,y) - \nabla_{1,\v} \nabla_{2,\u} \L(x,y) \:. \]
Given a compact subset~$\Omega \subset \F$, we define the bilinear form
\beq \label{JIntrOSI}
\sigma_\Omega \::\: T_\rho \calB \times T_\rho \calB \rightarrow \R\:,
\qquad \sigma_\Omega(\u, \v) = \int_\Omega d\rho(x) \int_{M \setminus \Omega} d\rho(y)\:
\sigma_{\u, \v}(x,y) \:.
\eeq
Thus here we again make use of the structure of a surface layer integral.
The following theorem holds (Theorem~\ref{JthmOSI}).
\begin{itemize}[leftmargin=5.3em]
\item[\textbf{Theorem.}] \em{
For any compact subset~$\Omega \subset \F$, the surface layer integral~\eqref{JIntrOSI}
vanishes for all~$\u, \v \in T_\rho \calB$.}
\end{itemize}

This theorem has the following connection to conservation laws.
Let us assume that~$M$ admits a sensible notion of ``spatial hypersurfaces'' and
that the jets~$\u, \v \in T_\rho \calB$ have suitable decay properties at spatial infinity.
Then, given two such hypersurfaces $N_1$ and $N_2$, one can chose a sequence~$\Omega_n \subset M$ of compact sets
which form an exhaustion of the space-time strip between $N_1$ and $N_2$
(see Figure~\ref{Jfigjet1}~(a) and~(b)). Let us denote this space-time strip by $\Omega$.
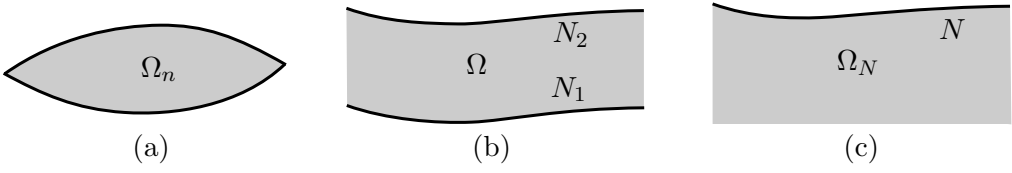
\begin{figure}\centering
{
\begin{pspicture}(0,-1.0682992)(16.27295,1.0682992)
\definecolor{colour0}{rgb}{0.8,0.8,0.8}
\pspolygon[linecolor=colour0, linewidth=0.02, fillstyle=solid,fillcolor=colour0](9.348506,-0.5327436)(9.348506,1.0361453)(9.63295,0.92947865)(9.926284,0.8805898)(10.339617,0.84058976)(10.837395,0.85836756)(11.326283,0.9072564)(11.668506,0.93836755)(11.988505,0.96058977)(12.184061,0.94281197)(12.801839,0.9961453)(13.241839,1.0050342)(13.259617,-0.537188)(12.948505,-0.50163245)(12.539617,-0.50163245)(12.11295,-0.48385468)(11.748506,-0.497188)(11.459617,-0.47496578)(11.246284,-0.457188)(10.9529505,-0.46163246)(10.717395,-0.4349658)(10.406283,-0.42163244)(9.997395,-0.39941022)(13.259617,-0.5327436)
\pspolygon[linecolor=colour0, linewidth=0.02, fillstyle=solid,fillcolor=colour0](4.5396166,-0.27941024)(4.530728,0.9872564)(4.815172,0.8805898)(5.1085057,0.83170086)(5.530728,0.7961453)(5.9573946,0.7872564)(6.4596167,0.8139231)(6.819617,0.84947866)(7.0773945,0.8761453)(7.3662834,0.8939231)(7.9840612,0.94725645)(8.424061,0.95614535)(8.424061,-0.31052133)(8.03295,-0.33274356)(7.624061,-0.3460769)(7.3040614,-0.38163245)(7.0373945,-0.40385467)(6.779617,-0.4438547)(6.468506,-0.47496578)(6.1885056,-0.50163245)(5.8773947,-0.51052135)(5.55295,-0.497188)(5.179617,-0.4482991)(4.8462834,-0.377188)
\pspolygon[linecolor=colour0, linewidth=0.02, fillstyle=solid,fillcolor=colour0](0.03295012,0.12947866)(0.19739456,0.24503422)(0.35739458,0.3250342)(0.57517236,0.43614534)(0.9485057,0.57836753)(1.3885057,0.6939231)(1.8151723,0.74725646)(2.121839,0.7605898)(2.43295,0.742812)(2.7973945,0.6539231)(3.121839,0.542812)(3.441839,0.3917009)(3.681839,0.25836754)(3.5973945,0.13836755)(3.39295,0.013923102)(3.130728,-0.123854674)(2.8551724,-0.24385467)(2.4773946,-0.31941023)(2.161839,-0.36829913)(1.8151723,-0.38163245)(1.5173945,-0.37274358)(1.081839,-0.30607688)(0.62850565,-0.16385467)(0.281839,-0.0038546752)
\rput[bl](1.8151723,0.025034213){$\Omega_n$}
\rput[bl](6.0973945,0.12281199){\normalsize{$\Omega$}}
\psbezier[linecolor=black, linewidth=0.04](0.02406123,0.12281199)(0.7173699,0.6167855)(1.5001423,0.76261884)(2.0707278,0.7672564358181418)(2.6413136,0.77189404)(3.0712237,0.6218133)(3.7240613,0.242812)
\psbezier[linecolor=black, linewidth=0.04](0.009616784,0.12947866)(0.52181435,-0.15099226)(0.95347565,-0.34627005)(1.5962834,-0.38941023084852644)(2.2390912,-0.4325504)(3.124557,-0.3104089)(3.709617,0.24947865)
\psbezier[linecolor=black, linewidth=0.04](4.525172,-0.297188)(5.046259,-0.4843256)(5.7001424,-0.5373812)(6.1696167,-0.518299119737415)(6.6390915,-0.49921706)(7.315668,-0.33707556)(8.438506,-0.3282991)
\psbezier[linecolor=black, linewidth=0.04](4.520728,0.9917009)(5.0418143,0.8045633)(5.5401425,0.7915077)(5.991839,0.783923102484807)(6.443536,0.7763385)(7.3112235,0.95181334)(8.434061,0.96058977)
\psbezier[linecolor=black, linewidth=0.04](9.338506,1.0494787)(9.859592,0.8623411)(10.375698,0.85372996)(10.751839,0.8683675469292492)(11.12798,0.88300514)(12.129002,1.0095911)(13.251839,1.0183675)
\rput[bl](10.97295,0.10947866){\normalsize{$\Omega_N$}}
\rput[bl](1.7040613,-1.0682992){(a)}
\rput[bl](6.1751723,-1.0638547){(b)}
\rput[bl](11.059617,-1.0638547){(c)}
\rput[bl](7.2085056,-0.23718801){\normalsize{$N_1$}}
\rput[bl](7.2351723,0.482812){\normalsize{$N_2$}}
\rput[bl](12.319616,0.5850342){\normalsize{$N$}}
\end{pspicture}
}
\caption{Choices of space-time regions.}
\label{Jfigjet1}
\end{figure}%
Considering the surface layer integrals~\eqref{JIntrOSI} for~$\Omega_n$ and passing to the limit,
the above theorem implies that also the surface layer integral corresponding to~$\Omega$ vanishes,
so that the contributions from $N_1$ and $N_2$ coincide. (This is made precise in Section~\ref{JSecSympForm}.)
Let us define a surface layer integral corresponding to any spatial hypersurface $N$ as
\begin{align}\label{JIntrOSIN}
 \sigma_{N}(\u, \v) := \int_{\Omega_N} d\rho(x) \int_{M \setminus {\Omega_N}} d\rho(y) \:\sigma_{\u, \v}(x,y)
\end{align}
where $\Omega_N$ is a set with~$\partial \Omega_N = N$ as shown in Figure~\ref{Jfigjet1}~(c).
Then the above theorem implies that this quantity is well-defined and that
$ \sigma_{N}(\u, \v)  =  \sigma_{{N'}}(\u, \v)$
for any two such hypersurfaces $N$ and $N'$. In other words, this surface layer integral it is independent of
the particular choice of hypersurface $N$.

If the hypersurfaces~$N$ are similar to a foliation of Cauchy surfaces $N_t$ and if the parameter $t$ of the foliation can be interpreted as time, the last theorem implies that the bilinear form~$\sigma_{{N_t}}$
is preserved under the time evolution. This is what we call \textit{Hamiltonian time evolution}.

To avoid misunderstandings, we point out that, in contrast to classical field theory, in our setting
the time evolution is {\em{not defined infinitesimally}} by a Hamiltonian or a Hamiltonian vector field
(this is obvious from the fact that causal fermion systems allow for the description of discrete space-times,
where a continuous time evolution makes no sense).
Instead, the time evolution should be thought of as a mapping from the jets
in a surface layer around~$N_1$ to the jets in a surface layer around~$N_2$.
This mapping is a {\em{symplectomorphism}} with regard
to~$\sigma_{{N_1}}$ and~$\sigma_{{N_2}}$, respectively.
For clarity, we also note that it is essential here that~$\F$ and~$M$ are non-compact because otherwise, Theorem~\ref{JthmOSI} would immediately imply
that~$\sigma_{_N}(\u, \v) \equiv 0$. For non-compact $M$, however,
Theorem~\ref{JthmOSI} only applies to the
difference~$\sigma_{{N_1}}\! - \sigma_{{N_2}}$, thereby giving a conservation law
for a non-trivial surface layer integral.

The independence of~\eqref{JIntrOSIN} from $N$ allows us to define a bilinear form $\sigma$ on $\calB$ by
\[ \sigma \::\: T_\rho \calB \times T_\rho \calB \rightarrow \R\:,
\qquad \sigma(\u, \v) := \sigma_{N}(\u, \v) \]
for an arbitrary choice of hypersurface~$N$. This bilinear form turns out to be closed
(see Lemma~\ref{Jlemmaclosed}), thus defining a {\em{presymplectic form}} on $\calB$. 
Finally, by restricting~$\sigma$ to a suitable subspace of~$T_\rho \calB$,
we arrange that~$\sigma$ is non-degenerate, giving a {\em{symplectic form}} (see end of Section~\ref{JSecSympForm}).

In the theory of causal fermion systems, the Lagrangian is not a smooth, but
merely a Lipschitz-continuous function (see Section~\ref{Nseccfsbasic},~\cite[Section~1.1]{cfs} and~\cite{support}).
In order to cover this situation, in Section~\ref{JlowerSemiCont} we treat the more general 
{\em{lower semi-continuous setting}} where we have a
lower semi-continuous Lagrangian. Our main motivation for this generalization compared to Lipschitz-continuity is
that many simple examples are easier to state if we allow for discontinuities of the Lagrangian.
After defining jet spaces as infinitesimal versions of families of solutions similar as described above
(see Sections~\ref{JWeakElLSC} and~\ref{JSmoothVar}) and establishing the necessary conditions for the linearized field equations to be well-defined (see Definition~\ref{Jdeflin}),
we again establish a conservation law for the bilinear
form $\sigma_\Omega(\u,\v)$ (see Theorem~\ref{JthmOSIlip}). 

Our results apply directly to the setting of causal fermion systems if one restricts the set of operators $\F$ to operators of maximal rank, i.e.
with $n$ positive and $n$ negative eigenvalues. This is explained in detail in Section~\ref{Jseccfs}.

In Section~\ref{Jseclattice}, we illustrate our constructions in an example
which is simple enough for an explicit analysis but nevertheless
captures some features of a physical field theory.
We choose $\L(x,y)$  in such a way that the minimizing measure is
supported on a two-dimensional lattice. This reflects a general feature
of the theory of causal fermion systems that space-time ``discretizes itself''
on the Planck scale, thus avoiding the ultraviolet divergences of quantum field theory
(see~\cite[Section~4]{rrev}).

\section{Stochastic and Non-Linear Correction Terms}\label{Indfield}

Chapter~\ref{DissStoch} is devoted to finding correction terms to the linearized field equations.
The \emph{motivation} to look for correction terms in the theory of causal fermion systems is the following. The linearized field equations~\eqref{JIntreqLinFieldEq}
are a consequence of the weak Euler-Lagrange equations~\eqref{IWElDiff}. But since the Lagrangian $\L$ in the theory of causal fermion systems
(and hence the function $\ell$ as defined in~\eqref{Iell}) are not differentiable but only Lipschitz-continuous, instead of~\eqref{IWElDiff}, we only have
\beq \label{IELweaklip}
\nabla^+_{\u} \ell(x) \geq 0 \qquad \text{for all~$x \in M$ and~$\u \in \J|_M$}\:,
\eeq
where the semi-derivative in the direction of a jet $\u = (a,u) \in \J$ is defined as (cf.~\eqref{JELweaklip})
\begin{align}\label{IDjetSemi}
\nabla^+_{\u} \ell(x) := a(x)\, \ell(x) + \big(D^+_u \ell \big)(x) \:.
\end{align}
Due to the inequality in~\eqref{IELweaklip}, one can expect additional terms to appear in~\eqref{JIntreqLinFieldEq}.

Whereas the this motivation to look for correction terms is simple, it turns out that many ways to derive field \emph{equations} which include a correction term turn out
unsuccessful. The successful way turns out to be deeply rooted in the structure of the theory of causal fermion systems. Put in very simple terms,
one uses the definition of $\ell$ as in~\eqref{Iell}, evaluated for a family of measures as~\eqref{JIntrRhoTau},
\[
\ell_\tau(x) := \int_\F \L(x,y)\: d\rho_\tau(y) - \frac{\nu}{2} \: .
\]
Composing this definition
with the flow $\Phi$ of the vectorial component $w$ of jet $\w \in \J$ in a suitable way, and taking semi-derivatives with respect to the parameters of the
family~\eqref{JIntrRhoTau} and of the flow $\Phi$ gives the~\emph{non-differentiable linearized field equations} (Proposition~\ref{SmodFieldEqA}),
\beq \label{INonDiffFEq}
\frac{1}{2} \: \big( \nabla^+_{\w} - \nabla^+_{- \w} \big)  \: \int_M d\rho(y) \: 
\big( \nabla^+_{1,\v} + \nabla^+_{2,\v} \big) \: \L \big( x, y \big)  - \nabla^+_\w \nabla^+_\v \: \frac{\nu}{2}\: = \chi_{\w,\v}(x)
\eeq
for \emph{all} $\w \in \J$ and $x \in M$, where the term $\chi_{\w,\v}(x)$ is defined as
\beq\label{IStochTerm}
\chi_{\w,\v}(x) =   \frac{1}{2} \: \frac{d}{d s}^{\!+}_{|_0}  \frac{d}{d \tau}^{\!+}_{|_0}  \: 
f_\tau(x) \: \Big( \ell_\tau \big (F_\tau(\Phi_s (x)) \big)   -  \ell_\tau \big( F_\tau(\Phi_{-s}(x))\Big) \: ,
\eeq
and where the subscripts indicate that the semi-derivatives are evaluated at $s=\tau=0$.
Thus $\chi_{\w,\v}(x)$ is the desired correction term!

The explanations in Section~\ref{SNonDiffFieldEq} and Example~\ref{SExStoch}
show that $\chi_{\w,\v}(x)$ is a term which has a varying sign as $x$ changes. Since this sign
depends on the microscopic structure of space-time $M$ and the point-wise behaviour of $\ell$,
$\chi_{\w,\v}(x)$ can be interpreted as a \emph{stochastic term}.
Note that if $\int_M d\rho(y) \big( \nabla_{1,\v} + \nabla_{2,\v} \big) \L \big( x, y \big)$ exists and is differentiable, the left hand side of~\eqref{INonDiffFEq} is equal
to the left hand side of the linearized field equations~\eqref{JIntreqLinFieldEq}. Hence it indeed constitutes a generalization thereof.

The appearance of a correction term brings up the question of why a corresponding correction term is not apparent in contemporary experiments.
We address this question in Section~\ref{StochTermVan} by defining the following physically motivated assumption (Definition~\ref{SmacSymm0}).
\begin{itemize}[leftmargin=5.9em]
\item[\textbf{Definition.}]\em{
`Symmetric derivatives vanish macroscopically' 
if for every minimizer~$\rho$ of the causal variational principle, every vector field $w \in \Gamma(T\F)$ and
every macroscopic region $\tilde \Omega \in \mathscr M$,
\beq \label{ISymmMVanish}
\frac{1}{2} \,  \int_{\tilde \Omega} \big( D^+_w  - D^+_{-w} \big) \, \ell(x)  \, d\rho(x) = 0 \:. 
\eeq
}
\end{itemize}
Here, we assume that the specification of which subsets of $M$ are macroscopic is part of the data of any application of the theory. (But cf.
Remark~\ref{SHamiltInt}.) Formally, we assume that this data is given as a set $\mathscr M$ of subsets of $\F$, cf. Definition~\ref{SDefMac}.
The assumption of symmetric derivatives to vanish macroscopically is reasonable because the integrand has a varying sign, and hence the different contributions at different space-time points, if summed over a macroscopic region, might cancel. 

Proposition~\ref{SstochTermVanishes} shows that if symmetric derivatives vanish macroscopically, the stochastic term vanishes macroscopically as well,
\beq\label{IVanishM}
\int_{\tilde \Omega} \chi_{\w,\v}(x) \, d\rho(x) = 0 \: .
\eeq
(The same result would hold if the right hand side of~\eqref{ISymmMVanish} and~\eqref{IVanishM} were replaced by~$\varepsilon$.)
Hence we can justify the above assumption a posteriori as an explanation of why correction terms to field equations are currently not observed.
We point out, however, that the assumption could very well be wrong -- investigations on minimizers of the causal action principle could prove
that~\eqref{ISymmMVanish} does not vanish, cf. Chapter~\ref{DissOutlook}. In this way, the theory of causal fermion systems could allow to deduce what
counts as a `macroscopic region' from first principles.

After studying the connection to the differentiable case of Chapter~\ref{DissJet} in detail in Section~\ref{SConDiff}, we investigate the symplectic form~\eqref{JIntrOSIN}
in the present setting. In contrast to the theorem mentioned in Section~\ref{JIntr}, it does not vanish, but additional terms appear, as shown by Theorem~\ref{SthmOSIlipNonDiff}.
\begin{itemize}[leftmargin=5.3em]
\item[\textbf{Theorem.}]\em{
Let $\w$ and $\v$ be solutions of the non-differentiable linearized field equations~\eqref{INonDiffFEq}.
Then for any compact~$\Omega \in \Sigma(\F)$, the symplectic form~\eqref{JIntrOSI}
satisfies
\[
\sigma_{ \Omega}(\w, \v) = \int_{ \Omega}  \widetilde \chi_{\w,\v}(x) \: d\rho(x)  - \int_{ \Omega} \wnabla_{[\w,\v]} \, \ell(x) \: d\rho(x) \: ,
\]
where
\[
\widetilde \chi_{\w,\v}(x) = \frac{1}{2} \big( \chi_{\w,\v}(x) -  \chi_{\w,-\v}(x) - \chi_{\v,\w}(x) + \chi_{\v,-\w}(x) \big) 
\]
and
\[
\wnabla_{[\w,\v]} = \frac{1}{2} \, \big( \, \nabla^+_{[\w,\v]} - \nabla^+_{-[\w,\v]} \, \big) \: .
\]
}
\end{itemize}
This theorem says that the Hamiltonian time evolution is broken on the microscopic level
by the stochastic term and by the non-differentiability of $\ell$.
However, if symmetric derivatives vanish macroscopically, the Hamiltonian time evolution is conserved macroscopically, as shown by Proposition~\ref{SthmOSIlipNonDiffMac} (cf. Remark~\ref{SHamiltInt}).
\begin{itemize}[leftmargin=5.3em]
\item[\textbf{Theorem.}]\em{
If symmetric derivatives vanish macroscopically, for any solutions $\w$ and $\v$ of the non-differentiable linearized field equations and any compact macroscopic region $\tilde \Omega$,
\[
\sigma_{\tilde \Omega}(\w, \v) = 0 \: .
\]
}
\end{itemize}

Finally, in Section~\ref{SO2}, we derive quadratic corrections to the linearized field equations. Using the notations
\[
\nabla^{+}_{\v,\v} := \ddot f_0 + 2 \dot f_0 D^+_v + D^+_v D^+_v \: 
\]
for any family~\eqref{JIntrRhoTau} with generator $\v = (\dot f_0 , v) \in \J$ (where the dot indicates a $\tau$-derivative),
as well as
\[
\wnabla_\w := \frac{1}{2} \big( \, \nabla^+_{\w} - \nabla^+_{-\w} \, \big) \:,
\]
we can thus specify the main theorem of Chapter~\ref{DissStoch}:
\begin{itemize}[leftmargin=5.3em]
\item[\textbf{Theorem.}] \textbf{(Full non-differentiable field equations to second order)}\\
\em{
For every family~\eqref{JIntrRhoTau} of minimizers with generator $\v$, any $x \in M$ and any $\w \in \J$, we have
\begin{align}\begin{split}\label{IFull}
&\wnabla_\w  \int_M d\rho(y) \: \big( \nabla^+_{1,\v} + \nabla^+_{2,\v} \big) \: \L \big( x, y \big)  - \wnabla_\w \nabla^+_\v \: \frac{\nu}{2}\: \\
& + \wnabla_\w \int_M d\rho(y) \: \Big( \nabla^+_{1,\v,\v} + \nabla^+_{2,\v,\v} +  2 \, \nabla^+_{1,\v} \nabla^+_{2,\v}  \Big) \: 
\L(x,y) - \wnabla_\w \:  \nabla^+_{\v,\v} \: \frac{\nu}{2} \\
& \quad = \chi_{\w,\v}(x) + \chi^{(2)}_{\w,\v}(x) \: ,
\end{split}\end{align}
where~$\chi_{\w,\v}(x)$ is as in~\eqref{IStochTerm} and
\[
\chi^{(2)}_{\w,\v}(x) =   \frac{1}{2} \: \frac{d}{d s}^{\!+}_{|_0}  \frac{d^{\, 2}}{d \tau^2}^{\!+}_{|_0}  \: 
f_\tau(x) \: \Big( \ell_\tau \big (F_\tau(\Phi_s (x)) \big)   -  \ell_\tau \big( F_\tau(\Phi_{-s}(x))\Big) \: .
\]
}
\end{itemize}

We conclude Chapter~\ref{DissStoch} in Section~\ref{SNoether} by adapting the conservation law~\eqref{IConserved} to the regularity assumptions in Chapter~\ref{DissStoch} 
(Proposition~\ref{SNoetherNonDiff}). It turns out that the conservation law is broken by the appearance of a term with varying sign on the right hand side of~\eqref{IConserved}.
However, if symmetric derivatives vanish macroscopically, the additional term vanishes (Proposition~\ref{SConservLawMacroscopic}), showing that under this assumption,~\eqref{IConserved} (and hence also current conservation
and conservation of energy-momentum) hold macroscopically  but not microscopically. This result can be interpreted as giving further support to the assumption
of symmetric derivatives to vanish macroscopically.

The results of Chapter~\ref{DissStoch} can be applied to the setting causal fermion systems as explained in Section~\ref{Jseccfs}.

We expect that the correction terms in the full non-differentiable field equations to second order yield modifications of the field equations in the continuum limit, thus possibly opening the doors to experimental predictions.

\section{Connection to Foundations of Quantum Theory}\label{IModelQT}

As mentioned above, in the so-called continuum limit~\cite{cfs,qft}, the Euler-Lagrange equations of the causal action principle give rise to the fundamental equations of quantum theory, general relativity and quantum field theory.
The analysis of the jet-formalism in the continuum limit is still open (cf.~Chapter~\ref{DissOutlook}). However, we have strong reasons for the following
conjecture.
\begin{itemize}[leftmargin=5.3em]
\item[\em{Conjecture.}]\em{
In the continuum limit, the linearized field equations~\eqref{JIntreqLinFieldEq} yield the fundamental equations of contemporary physics, in particular the Dirac equation.
}
\end{itemize}

In this section we evaluate the consequences of this conjecture with respect to quantum theory.
Since the Dirac equation reduces to the Pauli equation or the Schr\"odinger equation in the non-relativistic limit (cf.~\cite{bjorken, peskin+schroeder}),
the conjecture thus implies that the full non-differentiable field equations to second order~\eqref{IFull}
give rise to correction terms for the Dirac, Pauli and Schrödinger equation.

It is well-known in the foundations of quantum theory that the linearity of the Schr\"odinger equation
conflicts with the von Neumann collapse postulate if one assumes that measurement apparati are composed of objects (``atoms'')
which themselves follow the laws of quantum theory. 
This problem, referred to as the \textit{measurement problem} (cf.~\cite{peres}), can be remedied in several ways, leading to modifications of the original quantum mechanics as formulated by von Neumann~\cite{vneumann}. Since the theory of causal fermion systems is a candidate for a unified physical theory, this raises the question
of what its implications are on the measurement problem.

Equation~\eqref{IFull} shows that to second order, the corrections for the field equations from the theory of causal fermion systems consist of a stochastic term and a quadratic term.
Thus, based on the above conjecture we expect that if one evaluates~\eqref{IFull} in the continuum limit, the Dirac equation (and hence also the Pauli and Schrödinger equation) arise equipped with an additional stochastic and an additional quadratic term.

This is similar to modifications of the Schrödinger equation which are referred to as spontaneous localization or dynamical collapse models
(see~\cite{pearle0, ghirardi, bassi,TumulkaQFTCollapse} or~\cite[Chapter~8]{joos} for a small sample of the vast literature on this topic). In those models,
one adds a specific quadratic and a specific stochastic term to the Schrödinger equation in order to break the linearity in a way which is compatible with the experimental observations to date.
Thus, according to the above conjecture, the theory of causal fermion systems seems to be an effective dynamical collapse theory, where the term ``effective'' points to the fact that the theory of causal fermion systems only appears to be such a model in the continuum \chd{limit. The} fundamental dynamics is yet different from the modified Schrödinger/Pauli/Dirac equation.

Clearly, it remains open at this point which form the correction terms in~\eqref{IFull} take if evaluated in the continuum limit, and how exactly they compare with the above-mentioned models. But this analysis is to be carried out in the near future. An answer will insofar be interesting as the theory of causal fermion systems and the jet-formalism are completely covariant, i.e. they do not depend on any choice of coordinates on $\F$ (cf. also Section~\ref{CsecPrinciples}). Hence, if one avoids to break general covariance when taking the limit,
the correction terms for the Dirac equation necessarily are covariant as well.

Concerning the conservation laws of Chapter~\ref{DissNoether} and the generalization to the non-differentiable setting in Section~\ref{SNoether}, we remark the following.
Suppose that space-time $M$ is a globally hyperbolic manifold so that we can sensibly talk about ``times''.
Assume that the wave function undergoes a collapse at some time~$t_c$.
It is a reasonable assumption that the continuum limit should still be a good
description at some earlier time~$t_0<t_c$ and some later time~$t_1>t_c$.
In this situation the conservation law of Theorem~\ref{SNoetherNonDiff}
states that the current integrals at times~$t_0$ and~$t_1$ do not coincide if~$t_0$ and~$t_1$
are close to each other. The sign-varying contributions from the symmetric directional derivatives
$\frac{1}{2} \big( D^+_w - D^+_{-} \big) \ell(x)$, integrated over the time-strip from~$t_0$ to~$t_1$
destroy current conservation as well as conservation of energy-momentum.

But, if Definition~\ref{SmacSymm0} (`symmetric derivatives vanish macroscopically') holds,
Proposition~\ref{SConservLawMacroscopic} implies that if the time-strip from~$t_0$ to~$t_1$
is a macroscopic region, the collapse mechanism necessarily preserves the normalization of the wave function.
Thus, in contrast to some continuous dynamical localization models, given Assumption~\ref{SmacSymm0},
in our approach it does not seem to be necessary to rescale the wave function so as to arrange its proper normalization,
and similarly for the conservation of energy-momentum in the collapse process. \medskip

Clearly, even so our results suggest a particular resolution of the measurement problem based on the theory
of causal fermion systems, there remain many open questions about the relation of the theory to foundations of quantum mechanics.
(A summary of how entanglement and non-locality arise in this context is given in Section~\ref{CFoundations}.)
For an overview over the most relevant problems in this context, see the Chapter~\ref{DissOutlook}.
We plan to carry out further investigations of those and related matters in the near future.

\chapter[Causal Fermion Systems as a Candidate for a Unified Physical Theory]{Causal Fermion Systems as a Candidate for a Unified Physical Theory}\label{DissIntroCFS}

This chapter represents an introduction to causal fermion systems which is intended to explain
the basic concepts and the general physical picture behind the theory in an easily accessible way. 
In order to achieve this goal, we avoid technical details
if they are not crucial for understanding and explain even simple notions
which are otherwise assumed to be known by the reader of this thesis.
A thorough introduction to the mathematical setup of causal fermion systems which is relevant for Chapters~\ref{DissNoether},~\ref{DissJet} and~\ref{DissStoch}
can be found in Section~\ref{Nseccfsbasic} and in~\cite[Chapter 1]{cfs}.

This chapter is organized as follows. In Section~\ref{CTheory} we define the basic objects of the theory.
In Section~\ref{CIntro-Minkvac} we proceed
by explaining how those objects appear naturally in the familiar physical situation of Dirac particles
in Minkowski space. In Section~\ref{Csecinherent} it is shown how the objects of quantum
mechanics are encoded in a causal fermion system.
Section~\ref{Csecminkvac} explains for the example of the Minkowski vacuum how a
causal fermion system encodes causal structure.
In Section~\ref{Csecgenst} we exemplify how to describe other physical situations or more general space-times.
In Section~\ref{CContinuum} we outline a limiting case in which the causal fermion system
can be described by a second-quantized Dirac field coupled to classical gauge fields and gravity.
Section~\ref{CFoundations} adds some remarks on the resulting perspective on foundations of quantum mechanics
concerning non-locality and entanglement (for remarks about the measurement problem, see Section~\ref{IModelQT}).
In Section~\ref{CClarifyingRemarks} we conclude
with a few clarifying remarks.
This chapter has been published in a slighly modified form as~\cite{dice2014}.

\section{The Theory} \label{CTheory}
The general structure of the theory of causal fermion systems can be understood in analogy to
general relativity. In general relativity, our universe is described by a four-dimensional
space-time (Lorentzian manifold) together with particles and fields.
However, not every configuration of Lorentzian metric, particles and fields is
considered to be ``physical'' in the sense that it could be realized in nature.
Namely, for the configuration to be physically realizable, the Einstein equations
must hold.
Moreover, the particles must satisfy the equations of motion, and the additional fields
must obey the field equations (like Maxwell's equations).
This means that in general relativity, there are two conceptual parts: on the one hand one has
mathematical objects describing possible configurations, and on the other hand there is
a principle which singles out the physical configurations. \\[-0.75em]

The theory of causal fermion systems has the same conceptual structure
consisting of mathematical objects and a principle which singles out the physical configurations.
We first introduce the mathematical objects:
\begin{Def}  {\em{ \textbf{(Causal fermion system)}\\[-1.5em] \label{CCFS}
\begin{itemize}[leftmargin=2em]
\itemsep0em
\itemD Let $(\H,\langle.|.\rangle_\H)$ be a separable complex Hilbert space.
\itemD Given a parameter~$n \in \N$ (the {\em{spin dimension}}), let $\F \subset \Lin(\H)$ be the set of all self-adjoint operators on $\H$ of finite rank, which (counting multiplicities) have at most~$n$ positive and
at most~$n$ negative eigenvalues.
\itemD Let $\rho$ be a positive measure on $\F$ (the {\em{universal measure}}).
\end{itemize}
Then $(\H,\F,\rho)$ is a \textit{causal fermion system}. }}
\end{Def} \noindent
Here separable means that the Hilbert space has an at most countable orthonormal basis.
Mapping the basis vectors to each other, one sees that any two Hilbert spaces are isomorphic,
provided that their dimensions coincide. Therefore, the structure~$(\H, \F)$
is completely determined by the parameters~$n \in \N$ and~$f:= \dim \H \in \N \cup \{\infty\}$.
Apart from these parameters, the only object specifying a causal fermion system is the
universal measure~$\rho$.

It will be outlined below that this definition indeed generalizes mathematical structures used in contemporary physics. The picture is that one causal fermion system describes a space-time together with all structures and objects therein (including the metric, particles and fields). \\[-0.75em]

Next, we state the principle which singles out the physical configurations.
Similar to the Lagrangian formulation of contemporary physics, we work with a
variational principle, referred to as the {\em{causal action principle}}. It states that
a causal fermion system which can be realized in nature should be
a minimizer of the so-called causal action.
In order to formulate the causal action principle, we assume that the
Hilbert space $(\H,\langle.|.\rangle_\H)$ and the spin dimension $n$ have been chosen.
Let $\F$ be as in Definition~\ref{CCFS} above. 
Then for any~$x, y \in \F$, the product~$x y$ is an operator
of rank at most~$2n$. We denote its non-trivial eigenvalues (counting algebraic multiplicities)
by~$\lambda^{xy}_1, \ldots, \lambda^{xy}_{2n} \in \C$.
We introduce the {\em{spectral weight}}~$| \,.\, |$ of an operator as the sum of the absolute values
of its eigenvalues. In particular, the spectral weight of the operator
products~$xy$ and~$(xy)^2$ is defined by
\[ |xy| = \sum_{i=1}^{2n} \big|\lambda^{xy}_i \big|
\qquad \text{and} \qquad \left| (xy)^2 \right| = \sum_{i=1}^{2n} \big| \lambda^{xy}_i \big|^2 \:. \]
Next, the {\em{Lagrangian}} $\L : \F \times \F \rightarrow \R_0^+$ is defined by
\begin{align}
\label{CLagr}
\L(x,y) := \big|(xy)^2 \big| - \frac{1}{2n} |xy|^2 = \frac{1}{4n} \sum _{i,j=1}^{2n}
\Big( \big| \lambda_i^{xy} \big| - \big| \lambda_j^{xy} \big| \Big)^2 \:.
\end{align}
The particular form of this Lagrangian is the result of research carried
out over several years (see Section~\ref{Cseccap}).

\begin{Def}  {\em{ \label{CCVP} \textbf{(Causal action principle)}
The {\em{causal action}}~$\Sact$ is obtained by integrating the Lagrangian with respect to the universal measure,
\[ \Sact(\rho) = \iint_{\F \times\F} \L(x,y)\: d\rho(x) \: d\rho(y) \:. \]
The {\em{causal action principle}} is to minimize~$\Sact$ under variations of the universal measure,
taking into account the following constraints:
\begin{align}
&\text{\em{volume constraint:}} & \rho(\F) = \text{const} \quad\;\; & \label{Cvolconstraint} \\
&\text{\em{trace constraint:}} &  \int_\F \tr(x)\: d\rho(x) = \text{const}& \label{Ctrconstraint} \\
&\text{\em{boundedness constraint:}} &   \T := \iint_{\F \times \F} |xy|^2\: d\rho(x)\, d\rho(y) &\leq C \:,
\end{align}
where~$C$ is a given constant (and~$\tr$ denotes the trace of a linear operator on~$\H$). }}
\end{Def}

In mathematical terms, the measure~$\rho$ is varied within the class of
positive regular Borel measures on~$\F$, where on~$\F$ one takes the topology
induced by the $\sup$-norm on~$\Lin(\H)$ (for basic definitions see for example~\cite[Chapters~2 and~5]{rudin}
or~\cite[Chapter~X]{halmosmt}).
The volume and trace constraints are needed in order to avoid trivial minimizers and are important for the analysis of the corresponding Euler-Lagrange equations because they give rise to Lagrange multiplier terms.
The boundedness constraint is needed in order to ensure the existence of minimizers.
In most applications, it does not give rise to a Lagrange multiplier term.
Therefore, it does not seem to have any physical consequences. \\[-0.75em]

This concludes the outline of the mathematical definition of the theory.
In order to obtain a physical theory, we need to give the mathematical objects a physical
interpretation. It is one of the main objectives of the next sections to do so by explaining 
how the above mathematical objects relate to the common notions in physics.
The conclusion will be that
causal fermion systems are indeed a candidate for a fundamental physical theory.

\section{Example: Dirac Wave Functions in Minkowski Space}\label{CIntro-Minkvac}
As a first step towards explaining how causal fermion systems relate to contemporary physics, we
now explain how the familiar physical situation of Dirac particles in Minkowski space can be
described by a causal fermion system.

Let~$\scrM$ be Minkowski space and $\mu$ the natural volume measure thereon, i.e.~$d\mu = d^4\x$ if $\x=(\x^0,\x^1,\x^2,\x^3)$ is an inertial frame (we use the signature convention $(+,-,-,-)$).
We consider a finite number of~$f$ Dirac particles described by one-particle wave
functions~$\psi_1, \ldots, \psi_f$ which are solutions of the Dirac equation,
\begin{align}
\label{CDiracEq}
\big( i \gamma ^j \partial_j - m \big)\, \psi_k = 0, \qquad k=1,\ldots, f\:,
\end{align}
where~$m$ is the rest mass, and~$\gamma^j$ are Dirac matrices in the Dirac representation.
For simplicity, we assume that the wave functions~$\psi_1, \ldots, \psi_f$ are continuous.

Before going on, we remark that this description of the $f$-particle system
by $f$ one-particle wave functions departs from the usual Fock space description.
The connection to Fock spaces will be explained later in this chapter (see
Section~\ref{CFoundations}).
For the moment, it is preferable to work with the one-particle wave functions.
We also remark that the assumption of considering a finite number of continuous wave functions
merely is a technical simplification for our presentation. All constructions can be extended
to an infinite number of possibly discontinuous wave functions
(for details see~\cite[Section~4]{finite} or~\cite[Chapter~1]{cfs}).

The wave functions $\psi_k$ span a vector space which we denote by $\H$,
\beq \label{CHspan}
\H := \textrm{span}(\psi_1, \ldots,\psi_f) \:.
\eeq
On~$\H$ we consider the usual scalar product on solutions of the Dirac equation
\begin{align}
\label{CScalProd}
\la \psi | \phi \ra_\H := 2 \pi \int_{t=\textrm{const}} (\overline \psi \gamma^0  \phi) (t,\vec x) \:d^3 x
\end{align}
(here~$\overline{\psi} = \psi^\dagger \gamma^0$ is the adjoint spinor, where the dagger denotes complex conjugation and transposition). If one evaluates~\eqref{CScalProd} for~$\phi=\psi$,
the integrand can be written as~$(\overline{\psi}\gamma^0\psi)(t,\vec{x}) = (\psi^\dagger \psi)(t,\vec{x})$,
having the interpretation as the probability density of the Dirac particle corresponding to~$\psi$
to be at the position~$\vec{x}$. In view of the conservation of probability (being a consequence
of current conservation), the integral in~\eqref{CScalProd} is time independent.
Since the probability density is positive, the inner product~\eqref{CScalProd} is indeed positive definite.
We thus obtain an $f$-dimensional Hilbert space~$(\H, \la .|. \ra_\H)$.

For any~$\x \in \scrM$, we now introduce the sesquilinear form
\[ b_\x :  \H \times \H \rightarrow \C \:,\qquad b_\x(\psi, \phi) = -(\overline \psi \phi) (\x) \:, \]
which maps two solutions of the Dirac equation to their inner product at $\x$.
The sesquilinear form $b_\x$ can be represented by a self-adjoint operator $F(\x)$ on $\H$,
which is uniquely defined by the relations
\[ \la \psi | F(\x) \phi \ra_\H =b_\x(\psi,\phi) \qquad \text{for all~$\psi, \phi \in \H$}\:. \]
More concretely, in the basis~$(\psi_k)_{k = 1, \ldots,f}$ of~$\H$, the last relation can be written as
\begin{align} \label{CFdef}
\la \psi_i | F(\x) \psi_j \ra_\H = - \big(\overline{\psi_i} \psi_j \big)(\x) \:.
\end{align}
If the basis is orthonormal, the calculation
\[ F(\x) \,\psi_j = \sum_{i=1}^f \la \psi_i | F(\x) \psi_j \ra_\H\; \psi_i
= - \sum_{i=1}^f \big(\overline{\psi_i} \psi_j \big)(\x)\; \psi_i \]
(where we used the completeness relation~$\phi = \sum_i \la \psi_i | \phi \ra\, \psi_i$),
shows that the operator~$F(\x)$ has the matrix representation
\[ \big(F(\x) \big)^i_j = - \big(\overline{\psi_i} \psi_j \big)(\x) \:. \]
In physical terms, the matrix element~$-(\overline{\psi_i} \psi_j)(\x)$ gives information on the correlation of the
wave functions~$\psi_i$ and~$\psi_j$ at the space-time point~$\x$.
Therefore, we refer to~$F(\x)$ as the {\em{local correlation operator}} at~$\x$.

Let us analyze the properties of $F(\x)$. First of all, the calculation
\[ \la F(\x) \,\psi \,|\, \phi \ra_\H = \overline{ \la \phi \,|\, F(\x) \,\psi \,\ra_\H}
= -\overline{(\overline \phi \psi) (\x)} = -(\overline \psi \phi) (\x) = \la \psi \,|\, F(\x) \,\phi \ra_\H \]
shows that the operator~$F(\x)$ is self-adjoint
(where we denoted complex conjugation by a bar).
Furthermore, since the pointwise inner product $(\overline \psi \phi)(\x)$ has signature $(2,2)$,
we know that~$b_\x$ has signature $(p,q)$ with $p,q \leq 2$.
As a consequence, the operator $F(\x)$ has at most two positive and at most two negative eigenvalues
(counting multiplicities). It follows immediately, that~$F(\x) \in \F$ if the spin dimension
in Definition~\ref{CCVP} is chosen as~$n=2$.

Constructing the operator $F(\x) \in \F$ for every space-time point $\x \in M$, we
obtain the mapping
\begin{align*}
F: \: & \scrM \rightarrow \F \:,\qquad \x \mapsto F(\x) \:.
\end{align*}
This allows us to introduce a measure $\rho$ on $\F$ as follows. For any~$\Omega \subset \F$,
one takes the pre-image $F^{-1}(\Omega) \subset \scrM$ and computes its space-time volume,
\[ \rho(\Omega) := \mu \big( F^{-1}(\Omega) \big) \:. \]
This gives rise to the so-called {\em{push-forward measure}} which in mathematics
is denoted by~$\rho = F_\ast \mu$
(see for example~\cite[Section~3.6]{bogachev}; we remark for the mathematically oriented reader that the $\sigma$-algebra of
$\rho$-measurable sets is defined as all sets~$\Omega \subset \F$ whose pre-image~$F^{-1}(\Omega)$
is $\mu$-measurable).

Putting the above structures together, we obtain a causal fermion system~$(\H, \F, \rho)$
of spin dimension two. Thus we have succeeded in constructing a causal fermion system
starting from a system of Dirac wave functions in Minkowski space.
But it is not obvious how much of the information on the physical system is encoded in the
causal fermion system. In other words, taking the causal fermion system~$(\H, \F, \rho)$
as the starting point, the question is which structures of the original system can be recovered.
For example, is the Minkowski metric still determined?
\label{CIntro-Question1}
Is it possible to reconstruct the Dirac wave functions? Precise answers to these questions will be given
in Section~\ref{Csecinherent} below. In preparation, we now give a few hints.

We first explain what the points of Minkowski space correspond to in our
causal fermion system. Recall that to every space-time point~$\x \in \scrM$ we associated
a linear operator~$F(\x) \in \F$. Hence the space-time points correspond to the subset~$F(\scrM) \subset \F$.
This subset can also be characterized as the set where the measure~$\rho$ is non-zero.
In mathematical terms, this is captured in the notion of the {\em{support}} of the universal measure,
defined as the set of all the points of~$\F$ such that every open neighborhood of this point
has a non-zero measure. Then (for details see~\cite[Chapter~1]{cfs})
\begin{align}
\label{Csupp-rho}
\supp{\rho} = \overline{ F(\scrM) }\:,
\end{align}
where the bar denotes the closure.
In all situations of physical interest, the mapping~$F$ will be injective and its image closed
(see again~\cite[Chapter~1]{cfs}). Provided that this is the case,
identifying~$\x \in \scrM$ with the corresponding operator~$F(\x) \in \F$
makes it possible to identify Minkowski space with the support of~$\rho$ as a topological space.
Under suitable smoothness and non-degeneracy assumptions, one can identify~$\scrM$
with~$\supp \rho$ even as a differentiable manifold.
We make this identification manifest by using the letter~$x$ for the operator~$F(\x)$.
In order to avoid confusion, we use two different fonts in this chapter, making it possible for the
reader to distinguish a point~$\x \in \scrM$ of Minkowski space from
the corresponding point~$x \in M := \supp \rho$.
Once the reader has become familiar with our concepts, the different fonts will be unnecessary.

This consideration shows that the topological and differentiable structures of our space-time
are encoded in the causal fermion system.
Clearly, Minkowski space also has metric and causal structures, which we have not yet
addressed. The general idea for recovering these structures is to take operators~$x,y \in \supp \rho$
and to analyze the eigenvalues of the operator product~$x y$.
The eigenvalues of such operator products contain plenty of information, inducing
relations and structures between the space-time points.
This will be explained more concretely in the next section.

\section{Inherent Structures}\label{Csecinherent} 
Let~$(\H, \F, \rho)$ be a causal fermion system of spin dimension $n$ (see Definition~\ref{CCFS}).
We now introduce additional objects which will turn out to generalize familiar notions in physics.
All of these structures are inherent in the sense that
we only use information already encoded in the causal fermion system.

Motivated by the consideration above (see the paragraph before~\eqref{Csupp-rho}),
{\em{space-time}}~$M$ is defined as the support of the universal measure,
\[ M := \text{supp}\, \rho \subset \F \:. \]
On~$M$ we introduce the following notion of causality.
Recall that for~$x, y \in M$, the product $x y$ is an operator of rank at most $2n$.
We again denote its non-trivial eigenvalues (counting algebraic multiplicities)
by~$\lambda^{xy}_1, \ldots, \lambda^{xy}_{2n} \in \C$.

\begin{Def}  {\em{ \textbf{(Causality)} \label{CKS}
The space-time points~$x$ and $y$ are defined to be\\[-1.5em]
\begin{itemize}[leftmargin=2em, itemsep=0.1em] \itemsep0em
\itemD \textit{spacelike} separated if all the~$\lambda^{xy}_j$ have the same absolute value.
\itemD \textit{timelike} separated if the~$\lambda_{i}^{xy}$ do not all have the same absolute value
and are all real.
\itemD \textit{lightlike} separated if the~$\lambda_{i}^{xy}$ do not all  have the same absolute value
and are not all real.
\end{itemize} }}
\end{Def}
This definition is compatible with the causal action in the following sense.
If the points~$x$ and~$y$ are spacelike separated, then all the~$\lambda^{xy}_j$ have the same absolute value,
so that the Lagrangian~$\L(x,y)$ vanishes according to~\eqref{CLagr}.
In a more physical language, this means that no interaction takes place 
between regions with spacelike separation
(this does not exclude nonlocal correlations and entanglement, as will be
discussed in Section~\ref{CFoundations}).
In this way, our setting incorporates a general version of the principle of causality.

The next step is to introduce wave functions. 
The construction is guided by the usual structure of
a Dirac wave function~$\psi$, which to every space-time point~$\x$
associates a spinor~$\psi(\x)$. The latter is a vector in the corresponding spinor space~$S_\x \scrM \simeq \C^4$,
which is endowed with the inner product~$\overline{\psi} \phi$ of signature~$(2,2)$.
In the setting of causal fermion systems, for a space-time point~$x \in M$ we define the
{\em{spin space}}~$S_x \subset \H$ as the image of the operator~$x$,
\[ S_x := x(\H) \:. \]
It is a subspace of~$\H$ of dimension at most~$2n$.
On~$S_x$ we introduce the inner product
\beq \label{CSprod}
\Sl .|. \Sr_x : S_x \times S_x \rightarrow \C \:,\qquad
\Sl u | v \Sr_x := - \la u | x v \ra _ \H \:,
\eeq
referred to as the {\em{spin scalar product}}.
Since~$x$ has at most~$n$ positive and at most~$n$ negative eigenvalues,
the spin scalar product is an indefinite inner product of signature $(p,q)$ with $p,q \leq n$.
A {\em{wave function}}~$\psi$ is defined as a function
which to every~$x \in M$ associates a vector of the corresponding spin space,
\[ \psi \::\: M \rightarrow \H \qquad \text{with} \qquad \psi(x) \in S_x \quad \text{for all~$x \in M$}\:. \]

Clearly, it is not sufficient to define wave functions abstractly, but we need to specify
those wave functions which are realized in the physical system.
Using a familiar physical language, we need to declare which one-particle states are
occupied (for the connection to multi-particle Fock states see Section~\ref{CFoundations}).
To this end, to every vector~$u \in \H$ of the Hilbert space we associate a wave function~$\psi^u$
by projecting the vector~$u$ to the spin spaces, i.e.
\beq \label{Cpsiudef}
\psi^u \::\: M \rightarrow \H\:,\qquad \psi^u(x) := \pi_x u \in S_x \:,
\eeq
where~$\pi_x$ is the orthogonal projection in~$\H$ on the subspace~$x(\H) \subset \H$.
We refer to~$\psi^u$ as the {\em{physical wave function}} corresponding to the vector~$u \in \H$.

Finally, we define the {\em{kernel of the fermionic projector}}~$P(x,y)$
for any~$x, y \in M$ by
\beq \label{CPxydef}
P(x,y) = \pi_x \,y|_{S_y} \::\: S_y \rightarrow S_x
\eeq
(where~$|_{S_y}$ denotes the restriction to the subspace~$S_y \subset \H$).
This object is useful for analyzing the relations and structures between space-time points.
In particular, the kernel of the fermionic projector encodes the causal structure
and makes it possible to compute the eigenvalues~$\lambda^{xy}_1, \ldots, \lambda^{xy}_{2n}$
which appear in the Lagrangian~\eqref{CLagr}. In order to see how this comes about,
we first define the {\em{closed chain}} as the product
\beq \label{CAxydef}
A_{xy} = P(x,y)\, P(y,x) \::\: S_x \rightarrow S_x\:.
\eeq
Computing powers of the closed chain and using that~$y \pi_y = y$
(because the image and kernel of self-adjoint operators are orthogonal), we obtain
\[ A_{xy} = (\pi_x y)(\pi_y x)|_{S_x} = \pi_x\, yx|_{S_x} \qquad \text{and thus} \qquad
(A_{xy})^p = \pi_x\, (yx)^p|_{S_x} \:. \]
Taking the trace, we obtain for all~$p \in \N$,
\begin{align*}
\Tr_{S_x} \big( (A_{xy})^p \big) &= \Tr_{S_x} \big(\pi_x\, (yx)^p|_{S_x} \big)
= \tr \big(\pi_x\, (yx)^p|_{S_x} \big) \\
&= \tr \big((yx)^p \pi_x \big)
= \tr \big((yx)^p \big) = \tr \big( (xy)^p \big)
\end{align*}
(where~$\tr$ again denotes the trace of a linear operator on~$\H$).
Since the coefficients of the characteristic polynomial of an operator
can be expressed in terms of traces of powers of the corresponding matrix,
we conclude that the
eigenvalues of the closed chain coincide with the non-trivial
eigenvalues~$\lambda^{xy}_1, \ldots, \lambda^{xy}_{2n}$ of the operator~$xy$ in
Definition~\ref{CCVP}.
In this way, one can recover the~$\lambda^{xy}_1, \ldots, \lambda^{xy}_{2n}$ as the eigenvalues of 
a $(2n \times 2n)$-matrix.
In particular, the kernel of the fermionic operator encodes the causal structure of~$M$.

The kernel of the fermionic projector is the starting point for constructions which
unveil the geometric structures of a causal fermion system.
More specifically, this kernel gives rise to a spin connection and corresponding curvature.
Moreover, one can introduce tangent spaces endowed with a Lorentzian metric
together with a corresponding metric connection and curvature.
For brevity, we cannot enter these topics here. Instead we refer the interested
reader to~\cite{lqg, topology}, where also questions concerning the topology
of causal fermion systems are treated.
The important point to keep in mind is that all these constructions are tailored
in order to understand the meaning of information contained in the causal
fermion system. No additional input is required.
The system is completely determined by the causal fermion system~$(\H, \F, \rho)$.
In particular, when varying the universal measure in the causal action principle,
one also varies all the derived structures mentioned above.

\section{The Minkowski Vacuum} \label{Csecminkvac} 
In order to illustrate the above inherent structures, we now return to the example of Dirac particles in Minkowski
space introduced in Section~\ref{CIntro-Minkvac}.
In this example, the Hilbert space~$\H$ is spanned by solutions of the Dirac equation.
Thus a vector~$u \in \H$ is a Dirac wave function, which at a point~$\x \in \scrM$ of
Minkowski space takes values in the corresponding spinor space, $u(\x) \in S_\x\scrM$.
On the other hand, in the previous section we introduced the corresponding physical wave function~$\psi^u$,
which at a point~$x = F(\x) \in M \subset \F$ takes values in the corresponding spin space,
$\psi^u(x) \in S_x$. We now show that these objects can be identified.
Indeed, for any~$u, v \in S_x \subset \H$,
\begin{align*}
\Sl \psi^u(x) \,|\, \psi^v(x) \Sr_x &\overset{\eqref{CSprod}}{=} -\la \pi_x u \,|\, x\, \pi_x v \ra_\H
= -\la u \,|\, x\, v \ra_\H = -\la u \,|\, F(\x)\, v \ra_\H
\overset{\eqref{CFdef}}{=} \overline{u(\x)} v(\x)\:.
\end{align*}
This shows that the inner products on~$S_\x\scrM$ and~$S_x$ are compatible.
It implies that, after choosing suitable bases, one can indeed identify~$S_\x\scrM$ with~$S_x$
(for details see~\cite[Section~1.2]{cfs} or~\cite[Section~4]{lqg}).
This identification implies that~$\psi^u(x) = u(\x)$ for all~$u \in \H$ and~$x \in M$
respectively~$\x \in \scrM$.

Next, it is instructive to bring the kernel of the fermionic projector~\eqref{CPxydef} into a more tractable form.
To this end, we choose an orthonormal basis~$u_1, \ldots, u_f$ of~$\H$. Then
for any~$\phi \in S_y$,
\begin{align*}
P(x,y)\, \phi &= \pi_x \,y\, \phi \overset{(\star)}{=} \sum_{\ell=1}^f \big(\pi_x u_\ell\big) \, \la u_\ell |  \,y\, \phi \ra_\H \\
&\!\!\overset{\eqref{CSprod}}{=} -\sum_{\ell=1}^f  \big(\pi_x u_\ell\big) \,  \Sl \pi_y u_\ell | \phi \Sr_y
\overset{\eqref{Cpsiudef}}{=}  -\sum_{\ell=1}^f  \psi^{u_\ell}(x) \,  \Sl \psi^{u_\ell}(y) \,|\, \phi \Sr_y \:,
\end{align*}
where in~$(\star)$ we used the completeness relation. Using the above identifications
of spinors and their inner products, we can write this formula in the shorter form
\beq \label{CPxyuseful}
P(x,y) = -\sum_\ell u_\ell(\x)\, \overline{u_\ell(\y)}\:.
\eeq
This shows that the kernel of the fermionic projector is composed of all the physical wave functions
of the system.

In order to work in a more concrete example, we next consider the {\em{Minkowski vacuum}}. 
To this end, we want to implement the concept of the Dirac
sea which in non-technical terms states that 
in the vacuum all the negative-energy states of the Dirac equation should be occupied
(see Section~\ref{Csec73} for further explanations of this point).
In order to implement this concept, one needs to consider an infinite number of
physical wave functions. This can be achieved simply by letting~$\H$ in Definition~\ref{CCFS}
be an infinite-dimensional Hilbert space.
However, a difficulty arises in the construction of the local correlation operators,
because the Dirac wave functions (being square-integrable functions) are in general not defined pointwise,
so that the right side of~\eqref{CFdef} is ill-defined.
In order to resolve this problem, one needs to introduce an ultraviolet regularization.
For conceptual clarity, we postpone the explanation of the ultraviolet regularization to Section~\ref{CContinuum}
and now merely mention that
an ultraviolet regularization amounts to modifying the Dirac wave functions
on a microscopic scale~$\varepsilon$, which can be thought of as the Planck scale.
In order to avoid the technical issues involved in the regularization,
we here simply use the formula~\eqref{CPxyuseful},
but now sum over all negative-energy solutions of the Dirac equation.
This sum can be rewritten as an integral over the lower mass shell
(see again~\cite[Section~1.2]{cfs}),
\beq \label{CPxyvac}
P(x,y) = \int \frac{d^4k}{(2 \pi)^4}\:(k_j \gamma^j+m)\: \delta(k^2-m^2)\: \Theta(-k_0)\: e^{-ik(\x-\y)} \:.
\eeq
In this formula, the necessity for an ultraviolet regularization is apparent
in the fact that the Fourier integral is not defined pointwise, but only in the distributional sense.
More precisely, the distribution~$P(x,y)$ is singular if the vector~$\xi := \y- \x$ is lightlike,
but it is a smooth function otherwise
(as can be verified for example by explicit computation).
As a consequence, a typical ultraviolet regularization will affect the behavior of~$P(x, y)$ only
in a small neighborhood of the light cone of the form~$\big| |\xi^0| - |\vec{\xi}| \big| \lesssim \varepsilon$.
With this in mind, for the following argument we may disregard the ultraviolet regularization
simply by restricting attention to the region outside this neighborhood.

The representation~\eqref{CPxyvac} allows us to understand the relation
between the Definition~\ref{CKS} and the usual notion of causality in Minkowski space:
Since the expression~\eqref{CPxyvac} is Lorentz invariant and is composed of a vector
and a scalar component, the function~$P(x, y)$ can be written as
\[ P(x, y) = \alpha\, \xi_j \gamma^j + \beta\:\1 \]
with two complex-valued functions~$\alpha$ and~$\beta$ (where again~$\xi =\y-\x$).
Taking the conjugate with respect to the spin scalar product, we see that
\[ P(y, x) = P(x, y)^* = \overline{\alpha}\, \xi_j \gamma^j + \overline{\beta}\:\1 \:. \]
As a consequence,
\[ A_{xy} = P(x,y)\, P(y,x) = a\, \xi_j \gamma^j + b\, \1 \]
with two real-valued functions~$a$ and $b$ given by
\[ a = \alpha \overline{\beta} + \beta \overline{\alpha} \:,\qquad
b = |\alpha|^2 \,\xi^2 + |\beta|^2 \:. \]
Applying the formula~$(A_{xy} - b \1)^2 = a^2\:\xi^2\,\1$,
the roots of the characteristic polynomial of~$A_{xy}$ are computed by
\[ b \pm \sqrt{a^2\: \xi^2} \:. \]
Thus if the vector~$\xi$ is timelike, the term~$\xi^2$ is positive, so that
the~$\lambda_j$ are all real. By explicit computation one sees that the coefficients~$a$
and~$b$ are non-zero (see~\cite[Section~\S1.2.5]{cfs}), implying that the
eigenvalues~$\lambda_j$ do not all have the same absolute value. Conversely, if the vector~$\xi$
is spacelike, then the term~$\xi^2$ is negative. Thus the~$\lambda_j$ form a complex conjugate
pair, implying that they all have the same absolute value.
We conclude that the notions of spacelike and timelike as defined for causal fermion systems in
Definition~\ref{CKS} indeed agree with the usual notions in Minkowski space.
We remark that this simple argument cannot be used for lightlike directions because
in this case the distribution~$P(x,y)$ is singular, making it necessary to consider an
ultraviolet regularization (the reader interested in the technical details is referred to~\cite{reg}).

To summarize, we have seen that the inherent structures of a causal fermion system
give back the usual causal structure if one considers the Dirac sea vacuum in Minkowski space.
Indeed, a more detailed analysis reveals that the additional inherent structures
mentioned at the end of Section~\ref{Csecinherent} also give back the
geometric structures of Minkowski space (like the metric and the
connection).

\section{Description of More General Space-Times} \label{Csecgenst}
The constructions explained above also apply to more general physical situations.
First, one can consider systems involving {\em{particles}}
and {\em{anti-particles}} by occupying additional states and removing
states from the Dirac sea, respectively.
Moreover, our construction also apply in {\em{curved space-time}} (see~\cite{finite, infinite})
or in the presence of an {\em{external potential}} (see~\cite{cfs}). In all these situations, the
resulting causal fermion systems again encode all the information on the physical system (see~\cite{lqg, cfs}).

The framework of causal fermion systems also allows to describe generalized space-times,
sometimes referred as \emph{quantum space-times}.
We now illustrate this concept in the simple example of a {\em{space-time lattice}}. Thus we replace Minkowski
space by a four-dimensional lattice~$\scrM := (\varepsilon \Z)^4$ of lattice spacing~$\varepsilon$.
Likewise, the volume measure~$d^4\x$ is replaced by a counting measure~$\mu$
(thus~$\mu(\Omega)$ is equal to the number of lattice points contained in~$\Omega$).
Restricting the Dirac spinors of Minkowski space to the lattice, one gets a
spinor space~$S_\x \scrM$ at every point~$\x \in \scrM$. Dirac wave functions~$\psi_1, \ldots, \psi_f$
can again be introduced as mappings which to every~$\x \in \scrM$ associate a
vector in the corresponding spinor space. These Dirac wave functions can be chosen
for example as solutions of a discretized version of the Dirac equation. Again choosing~$\H$
as the span of the wave functions~\eqref{CHspan} and choosing a suitable scalar product~$\la .|. \ra_\H$,
one defines the local correlation operators again by~\eqref{CFdef}.
Introducing the universal measure as the push-forward of the counting measure~$\mu$,
we obtain a causal fermion system~$(\H, \F, \rho)$ of spin dimension two.
The only difference to the causal fermion system in Minkowski space as constructed
in Section~\ref{CIntro-Minkvac} is that now the universal measure is not a continuous
but a discrete measure. 

When describing the Dirac sea vacuum on the lattice, the lattice spacing gives rise to
a natural ultraviolet regularization on the scale~$\varepsilon$.
For example, one may consider all plane-wave solutions~$\psi(x) \sim e^{i kx}$ of the Dirac
equation whose four-momenta lie in the first Brillouin zone, i.e.~$-\pi < \varepsilon \,k_j \leq \pi$ for all~$j=0,\ldots, 3$.
Then one introduces~$\H$ as the Hilbert space generated by
all these plane-wave solutions restricted to the lattice.

Other examples of discrete or singular space-times are described in~\cite{topology}.

\section{The Continuum Limit} \label{CContinuum} 
In the previous sections we saw that a causal fermion system has inherent structures which
generalize corresponding notions in quantum theory and relativity.
The next task is to analyze the
dynamics of these objects as described by the causal action principle.
To this end, one considers the Euler-Lagrange (EL) equations corresponding to the
causal action. These equations have a mathematical structure
which is quite different from conventional physical equations  (see~\cite{lagrange}). Therefore, the main
difficulty is to reexpress the EL equations in terms of the inherent structures so as to
make them comparable with the equations of contemporary physics.
This can indeed be accomplished in the so-called continuum limit.
Since the mathematical methods needed for the analysis of the continuum limit
go beyond the scope of this introduction (for details see~\cite{cfs, PFP}),
here we can only explain the general concept and discuss the obtained results.

We outlined in Sections~\ref{CIntro-Minkvac} and~\ref{Csecminkvac} how
to describe the Minkowski vacuum by a causal fermion system.
Recall that the construction required an {\em{ultraviolet regularization}} on a microscopic scale~$\varepsilon$.
Such a regularization can be performed in many different ways.
The simplest method is to smooth out the wave functions on the microscopic scale by
convolution with a test function. Another method is to introduce a cutoff in momentum space
on the scale~$\varepsilon^{-1}$. Alternatively, one can regularize by putting the system
on a four-dimensional lattice with lattice spacing~$\varepsilon$ (for example as explained
in Section~\ref{Csecgenst} above).
It is important to note that each regularization gives rise to a different causal fermion system,
describing a physical space-time with a different microstructure.
Thus in the context of causal fermion systems, the regularization has a physical
significance. The freedom in regularizing reflects our lack of knowledge on the
microstructure of physical space-time.
When analyzing the EL equations corresponding to the causal action, it is not obvious why the effective
macroscopic equations should be independent of the regularization details.
Therefore, it is necessary to consider a sufficiently large class of regularizations,
and one needs to analyze carefully how the results depend on the regularization.
This detailed analysis, referred to as the {\em{method of variable regularization}}
(for more explanations see~\cite[\S4.1]{PFP}),
reveals that for a large class of regularizations,
the structure of the effective macroscopic equations is indeed independent of the regularization
(for details see~\cite[Chapters~3-5]{cfs}).

The {\em{continuum limit}} is a method for evaluating the EL equations corresponding
to the causal action in the limit~$\varepsilon \searrow 0$ when the ultraviolet regularization
is removed. The effective equations obtained in this limit can be evaluated conveniently in
a formalism in which the unknown microscopic structure of space-time (as described by the regularization)
enters only in terms of a finite (typically small) number of so-called {\em{regularization parameters}}.

It turns out that the causal fermion system describing the Minkowski vacuum
satisfies the EL equations in the continuum limit (for any choice of the regularization parameters).
If one considers instead a system involving additional particles and anti-particles,
it turns out the EL equations in the continuum limit no longer hold.
In order to again satisfy these equations, we need to introduce an interaction.
In mathematical terms, this means that the universal measure~$\rho$ must be modified.
Expressed in terms of the inherent structures of a causal fermion system, all the physical wave functions
$\psi^{u_k}(x)$ must be changed collectively.
The analysis shows that this collective behavior of all physical wave functions (including the states
of the Dirac sea) can be described by inserting a potential~$\B$ into the
Dirac equation~\eqref{CDiracEq},
\beq \label{CCont:DiracEq}
\big( i \Pdd + \B - m \big) \,u_k(\x) = 0, \qquad k=1,\ldots, f
\eeq
(where as usual~$\Pdd = \gamma^j \partial_j$).
Moreover, the EL equations in the continuum limit are satisfied if and only if the potential~$\B$
satisfies field equations. Before specifying these field equations,
we point out that in the above procedure, the
potential~$\B$ merely is a convenient device in order to describe the collective
behavior of all physical wave functions. It should not be considered as a fundamental
object of the theory.
We also note that, in order to describe variations of the physical wave functions,
the potential in~\eqref{CCont:DiracEq} can be chosen arbitrarily. Each choice of~$\B$
describes a different variation of the physical wave functions.
The EL equations in the continuum limit single out the physically admissible potentials
as being those which satisfy the field equations.

In~\cite{cfs} the continuum limit is worked out in several steps beginning from simple systems
and ending with a system realizing the fermion configuration of the standard model.
For each of these systems, the continuum limit gives rise to effective equations for
second-quantized fermion fields
coupled to classical bosonic gauge fields (for the connection to second-quantized bosonic
fields see Section~\ref{CsecQFT} below).
To explain the structure of the obtained results, it is preferable to first describe the
system modelling the leptons as analyzed in~\cite[Chapter~4]{cfs}.
The input to this model is the configuration of the leptons in the standard model
without interaction. Thus the fermionic projector of the vacuum is assumed to be
composed of three generations of Dirac particles of masses~$m_1, m_2, m_3>0$
(describing~$e$, $\mu$, $\tau$)
as well as three generations of Dirac particles of masses~$\tilde{m}_1, \tilde{m}_2,
\tilde{m}_3 \geq 0$ (describing the corresponding neutrinos).
Furthermore, we assume that the regularization of the neutrinos breaks the chiral
symmetry (implying that we only see their left-handed components).
We point out that the definition of the model does not involve any assumptions on the interaction.

The detailed analysis in~\cite[Chapter~4]{cfs} reveals that the effective interaction in the continuum limit
has the following structure.
The fermions satisfy the Dirac equation coupled to a left-handed $\SU(2)$-gauge
potential~$A_L=\big( A_L^{ij} \big)_{i,j=1,2}$,
\[ \left[ i \Pdd + \begin{pmatrix} \Aslsh_L^{11} & \Aslsh_L^{12}\, \UMNS^* \\[0.2em]
\Aslsh_L^{21}\, \UMNS & -\Aslsh_L^{11} \end{pmatrix} \chi_L
- m Y \right] \!\psi = 0 \:, \]
where we used a block matrix notation (in which the matrix entries are
$3 \times 3$-matrices). Here~$mY$ is a diagonal matrix composed of the fermion masses,
\beq \label{CmY}
mY = \text{diag} (\tilde{m}_1, \tilde{m}_2, \tilde{m}_3,\: m_1, m_2, m_3)\:,
\eeq
and~$\UMNS$ is a unitary $3 \times 3$-matrix (taking the role of the
Maki-Nakagawa-Sakata matrix in the standard model).
The gauge potentials~$A_L$ satisfy a classical Yang-Mills-type equation, coupled
to the fermions. More precisely, writing the isospin dependence of the gauge potentials according
to~$A_L = \sum_{\alpha=1}^3 A_L^\alpha \sigma^\alpha$ in terms of Pauli matrices,
we obtain the field equations
\beq \label{Cl:YM}
\partial^k \partial_l (A^\alpha_L)^l - \Box (A^\alpha_L)^k - M_\alpha^2\, (A^\alpha_L)^k = c_\alpha\,
\overline{\psi} \big( \chi_L \gamma^k \, \sigma^\alpha \big) \psi\:,
\eeq
valid for~$\alpha=1,2,3$ (for notational simplicity, we wrote the Dirac current for one Dirac
particle; for a second-quantized Dirac field, this current is to be replaced by the expectation value
of the corresponding fermionic field operators). Here~$M_\alpha$ are the bosonic masses and~$c_\alpha$
the corresponding coupling constants.
The masses and coupling constants of the two off-diagonal components are
equal, i.e.\ $M_1=M_2$ and~$c_1 = c_2$,
but they may be different from the mass and coupling constant
of the diagonal component~$\alpha=3$. Generally speaking, the mass ratios~$M_1/m_1$, $M_3/m_1$ as
well as the coupling constants~$c_1$, $c_3$ depend on the regularization. For a given regularization,
they are computable.

Finally, the model involves a gravitational field described by the Einstein equations
\beq \label{Cl:Einstein}
R_{jk} - \frac{1}{2}\:R\: g_{jk} + \Lambda\, g_{jk} = \kappa\, T_{jk} \:,
\eeq
where~$R_{jk}$ denotes the Ricci tensor, $R$ is scalar curvature, and~$T_{jk}$
is the energy-momentum tensor of the Dirac field. Moreover, $\kappa$ and~$\Lambda$ denote the
gravitational and the cosmological constants, respectively.
We find that the gravitational constant scales like~$\kappa \sim \delta^2$, where~$\delta \geq \varepsilon$ is
the length scale on which the chiral symmetry is broken.

In~\cite[Chapter~5]{cfs} a system is analyzed which realizes the
configuration of the leptons and quarks in the standard model.
The result is that the field equation~\eqref{Cl:YM} is replaced by
field equations for the electroweak and strong interactions after spontaneous
symmetry breaking (the dynamics of the corresponding Higgs field has not yet been analyzed).
Furthermore, the system again involves gravity~\eqref{Cl:Einstein}.

A few clarifying remarks are in order. First, the above field equations come with
corrections which for brevity we cannot discuss here (see~\cite[Sections~3.8, 4.4 and~4.6]{cfs}).
Next, it is worth noting that, although
the states of the Dirac sea are explicitly taken into account in our analysis, they do not
enter the field equations. More specifically, in a perturbative treatment,
the divergences of the Feynman diagram
describing the vacuum polarization drop out of the EL equations of the causal action.
Similarly, the naive ``infinite negative energy density'' of the
sea drops out of the Einstein equations, making it unnecessary to subtract any counter terms.
We finally remark that the only free parameters of the theory are the masses in~\eqref{CmY}
as well as the parameter~$\delta$ which determines the gravitational constant.
The coupling constants, the bosonic masses and the mixing matrices
are functions of the regularization parameters
which are unknown due to our present lack of knowledge on the microscopic structure of space-time.
The regularization parameters cannot be chosen arbitrarily because they must satisfy certain
relations. But except for these constraints, the regularization parameters are currently treated
as free empirical parameters.

To summarize, the dynamics in the continuum limit is described by Dirac spinors
coupled to classical gauge fields and gravity. The effective continuum theory is manifestly covariant
under general coordinate transformations.
The only limitation of the continuum limit is that the bosonic fields are merely classical.
However, as will be briefly mentioned in Section~\ref{CsecQFT}, a detailed analysis
which goes beyond the continuum limit gives rise even to second-quantized bosonic fields.
Based on these results, the theory of causal fermion systems
seems to be a promising candidate for a unified physical theory.

\section{Entanglement and Nonlocality}\label{CFoundations}

For general remarks on the connection of the theory of causal fermion systems to foundations of quantum theory,
we refer to Section~\ref{IModelQT}. Here, based on the previous explanations, we add some remarks on the connection to
{\em{nonlocality}} and {\em{entanglement}}. Both are experimentally tested features of quantum mechanics
which need to be explained by any fundamental theory which gives quantum theory as a limiting case.

To understand the role of nonlocality, one should keep in mind that in a causal fermion system,
a fermion is described by a physical wave function~$\psi^u(x)$ as defined in~\eqref{Cpsiudef}.
As in standard quantum mechanics, these wave functions are nonlocal objects
spread out in space-time, giving rise to the usual nonlocal correlations for
one-particle measurements.

In order to describe entanglement, one needs to work with multi-particle wave functions.
The simplest method to obtain the connection to those is to choose
an orthonormal basis~$u_1, \ldots, u_f$ of~$\H$ and to form the $f$-particle Hartree-Fock state
\beq \label{Cantisymm}
\Psi := \psi^{u_1} \wedge \cdots \wedge \psi^{u_f} \:.
\eeq
Clearly, the choice of the orthonormal basis is unique only up to the unitary transformations
\[ u_i \rightarrow \tilde{u}_i = \sum_{j=1}^f U_{ij} \,u_j \quad \text{with} \quad U \in \U(f)\:. \]
Due to the anti-symmetrization, this transformation changes the corresponding Hartree-Fock state
only by an irrelevant phase factor,
\[ \psi^{\tilde{u}_1} \wedge \cdots \wedge \psi^{\tilde{u}_f} = \det U \;
\psi^{u_1} \wedge \cdots \wedge \psi^{u_f} \:. \]
Thus the configuration of the physical wave functions can be described by a
fermionic multi-particle wave function.

The shortcoming of the above construction is that the Hartree-Fock state~\eqref{Cantisymm}
does not allow for the description of entanglement.
But entanglement arises naturally if the effect of {\em{microscopic mixing}} is taken into
account, as we now briefly outline.
Microscopic mixing is based on the observation that the causal action of
a Dirac sea configuration is smaller if the physical wave functions have fluctuations
on the microscopic scale. To be more precise, one constructs a universal measure~$\rho$
which consists of~$L$ components, i.e.\ $\rho = \rho_1 + \cdots +\rho_L$.
This also gives rise to a decomposition of the corresponding space-time, i.e.\
$M= M_1 \cup \cdots \cup M_L$ with~$M_\ell := \supp \rho_\ell$.
Now one considers variations of the measures~$\rho_\ell$ obtained by modifying the
phases of the physical wave functions in the sub-space-times~$M_\ell$.
Minimizing the causal action under such variations, one sees that 
the kernel of the fermionic projector~$P(x,y)$ becomes very small if~$x$ and~$y$
are in different sub-space-times. This effect can be understood similar to
a dephasing of the physical wave functions in different sub-space-times.

The resulting space-time~$M$ has a structure which cannot be understood classically.
One way of visualizing~$M$ is that it consists of
different global space-times~$M_\ell$ which are interconnected by relations between them.
An alternative intuitive picture is to regard~$M$ as a single space-time which is
``fine-grained'' on the microscopic scale by the sub-space-times $M_\ell$.
For the physical wave functions, the above dephasing effect means that
every physical wave function~$\psi^u(x)$ has {\em{fluctuations}} on the microscopic scale.
Moreover, comparing~$\psi^u(x)$ and~$\psi^u(y)$ for~$x$ and~$y$ in the same sub-space-time,
one finds {\em{nonlocal correlations}} on the macroscopic scale.
A detailed analysis shows that taking averages over the sub-space-times
gives rise to an effective description of the interaction in terms of multi-particle wave functions and
Fock spaces (see~\cite[Sections~5, 6 and~8]{qft}).
In particular, this gives agreement with the usual description of entanglement.

To summarize, entanglement arises naturally in the framework of causal
fermion systems when taking into account the effect of microscopic mixing.
The reader who wants to understand the concept of microscopic mixing
on a deeper quantitative level is referred to~\cite{qft}. 
Further remarks about the connection of microscopic mixing to the conservation laws which are
established in Chapter~\ref{DissNoether} are given in Section~\ref{Nremmicro}.

\section{Clarifying Remarks}\label{CClarifyingRemarks} 
This section aims to address some of the questions which might have come to the mind
of the reader.
\addtocontents{toc}{\protect\setcounter{tocdepth}{-1}}	%

\subsection{Where does the name ``causal fermion system'' come from?} \label{Ccfsname}
The term {\em{``causality''}} in the name causal fermion system refers to the fact that there are causal
relations among the space-time points (see Definition~\ref{CKS}). The causal action is ``causal''
because it vanishes for space-time points with spacelike separation.
In this way, the notion of causality is intimately connected with the framework of causal fermion systems.
The term ``fermion'' refers to the fact that a causal fermion system encodes
physical wave functions~$\psi^u(x)$ (see~\eqref{Cpsiudef}) which are interpreted as
{\em{fermionic}} wave functions (like Dirac waves).
This interpretation as fermionic wave functions is justified because,
rewriting the configuration of the physical wave functions in the
Fock space formalism, one obtains
a totally anti-symmetric multi-particle state (see~\eqref{Cantisymm}).
{\em{Bosonic}} fields appear in the causal fermion systems merely as a device to describe the collective
behavior of the fermions (see~\eqref{CCont:DiracEq}).

\subsection{Why this form of the causal action principle?} \label{Cseccap}
The first attempts to formulate a variational principle in space-time
in terms of fermionic wave functions can be found in the unpublished preprint~\cite{endlich}.
The variational principle proposed in~\cite[Section~3.5]{PFP} coincides with
the causal action principle, except that it is formulated in the setting of discrete space-times
and that the constraints~\eqref{Cvolconstraint} and~\eqref{Ctrconstraint} are missing.
The general structure of the Lagrangian~\eqref{CLagr} can be understood from the requirements that
it should be non-negative and that it should vanish for spacelike separation.
The detailed form of the Lagrangian~\eqref{CLagr} is determined uniquely by demanding that the
Dirac sea vacuum should be a stable minimizer of the variational principle
(as is made precise by the notion of ``state stability''; see~\cite[Section~5.6]{PFP}).
The necessity and significance of the constraints~\eqref{Cvolconstraint} and~\eqref{Ctrconstraint} became clear
when analyzing the existence theory~\cite{discrete, continuum} and deriving the EL equations~\cite{lagrange}.
It should also be noted that the so-called identity constraint considered in~\cite{continuum}
has turned out to be a too strong condition which is not compatible with the so-called spatial
normalization of the fermionic projector as discussed in~\cite[Section~2.2]{norm} and established
by Chapter~\ref{DissNoether} (cf. Remark~\ref{Nremnorm}).

\subsection{Why the name ``continuum limit''?}
Causal fermion systems were first analyzed in the more restrictive formulation of discrete space-times
(see~\cite[Section~3.3]{PFP}). In this setting, the continuum limit as introduced in~\cite[Chapter~4]{PFP}
arises as the limit when the discretization scale~$\varepsilon$ tends to zero, meaning that
the discrete space-time goes over to a space-time continuum.
The more general notion of causal fermion systems given here allows for the description
of both continuous and discrete space-times. Then the parameter~$\varepsilon$ should be regarded as
a regularization length, but space-time could very well be continuous on this scale.
In this more general context, the notion ``continuum limit'' merely means that
we take the limit~$\varepsilon \searrow 0$ in which space-time $M:= \supp \rho$ goes over to the
{\em{usual}} space-time continuum~$\scrM$ (i.e.\ Minkowski space or a Lorentzian manifold).

\subsection{Connection to the notion of the Dirac sea} \label{Csec73}
The concept of the Dirac sea was introduced by Dirac in order to remedy the problem of the
negative-energy solutions of the Dirac equation.
Dirac's original conception was that in vacuum all negative-energy states are occupied.
Due to the Pauli exclusion principle, additional particles must occupy states of positive energy.
This concept led to the prediction of anti-particles, which are described as ``holes'' in the sea.

If taken literally, the concept of the Dirac sea leads to problems such as an infinite negative
energy density or an infinite charge density. This is the main reason why in modern quantum field theory,
the concept of the Dirac sea is no longer apparent. It
has been replaced by Wick ordering and the reinterpretation of creation and annihilation operators
corresponding to the negative-energy states.
Therefore, it is a common view that the Dirac sea is merely a historical relic which is no longer
needed.

In the theory of causal fermion system, Dirac's original concept is revived.
Namely, when constructing a causal fermion system starting from a classical space-time
the states of the Dirac sea need to be taken into account
(cf.~\eqref{CPxyuseful} and~\eqref{CPxyvac} in the Minkowski vacuum).
This can be understood as follows. It is a general concept behind causal fermion systems
that all structures in space-time should be encoded in the physical wave functions.
This concept only works if there are ``sufficiently many'' physical wave functions.
More specifically, this is the case if the causal fermion system is composed of
a regularized Dirac sea configuration, possibly with additional particles and/or anti-particles.

In contrast to the problems in the naive Dirac sea picture, in the description with
causal fermion systems the ensemble of the sea states does {\em{not}} give rise to an infinite negative energy
density or an infinite charge density. Namely, due to the specific form of the causal action principle,
the sea states drop out of the Euler-Lagrange equations in the continuum limit.

\subsection{Connection to quantum field theory}\label{CsecQFT}
The continuum limit gives an effective description of the interaction on the level of
second-quantized fermionic fields coupled to classical bosonic fields.
A full quantum field theory, in which also the bosonic fields are quantized,
arises if the effect of microscopic mixing is taken into account.
We refer the reader to Section~\ref{CFoundations} as well as to~\cite{qft}.
The detailed analysis of the resulting Feynman diagrams, renormalization
and a comparison with standard quantum field theory is work in progress.

\subsection{Which physical principles are incorporated in a causal fermion system?}\label{CsecPrinciples}
Causal fermion systems evolved from an attempt to combine several physical principles
in a coherent mathematical framework. As a result, these principles appear in the
framework in a specific way:
\begin{itemize}[leftmargin=1.3em, itemsep=0.2em]
\itemD The {\textbf{principle of causality}} is built into a causal fermion system in a specific way,
as explained in Section~\ref{Ccfsname} above.
\itemD The {\textbf{Pauli exclusion principle}} is incorporated in a causal fermion system,
as can be seen in various ways. 
One formulation of the Pauli exclusion principle states that every fermionic one-particle state
can be occupied by at most one particle. In this formulation, the Pauli exclusion principle
is respected because every wave function can either be represented in the form~$\psi^u$
(the state is occupied) with~$u \in \H$ or it cannot be represented as a physical wave function
(the state is not occupied). But it is impossible to describe higher occupation numbers.
When working with multi-particle wave functions, the Pauli exclusion principle becomes apparent
in the total anti-symmetrization of the wave function (see~\eqref{Cantisymm}).
\itemD A {\textbf{local gauge principle}} becomes apparent once we choose
basis representations of the spin spaces and write the wave functions in components.
Denoting the signature of~$(S_x, \Sl .|. \Sr_x)$ by~$(p(x),q(x))$, we choose
a pseudo-orthonormal basis~$(\mathfrak{e}_\alpha(x))_{\alpha=1,\ldots, p+q}$ of~$S_x$.
Then a wave function~$\psi$ can be represented as
\[ \psi(x) = \sum_{\alpha=1}^{p+q} \psi^\alpha(x)\: \mathfrak{e}_\alpha(x) \]
with component functions~$\psi^1, \ldots, \psi^{p+q}$.
The freedom in choosing the basis~$(\mathfrak{e}_\alpha)$ is described by the
group~$\U(p,q)$ of unitary transformations with respect to an inner product of signature~$(p,q)$.
This gives rise to the transformations
\[ \mathfrak{e}_\alpha(x) \rightarrow \sum_{\beta=1}^{p+q} U^{-1}(x)^\beta_\alpha\;
\mathfrak{e}_\beta(x) \qquad \text{and} \qquad
\psi^\alpha(x) \rightarrow  \sum_{\beta=1}^{p+q} U(x)^\alpha_\beta\: \psi^\beta(x) \]
with $U \in \U(p,q)$.
As the basis~$(\mathfrak{e}_\alpha)$ can be chosen independently at each space-time point,
one obtains {\em{local gauge transformations}} of the wave functions,
where the gauge group is determined to be the isometry group of the spin scalar product.
The causal action is
{\em{gauge invariant}} in the sense that it does not depend on the choice of spinor bases.
\itemD The {\textbf{equivalence principle}} is incorporated in the following general way.
Space-time $M:= \supp \rho$ together with the universal measure~$\rho$ form a topological
measure space, being a more general structure than a Lorentzian manifold.
Therefore, when describing~$M$ by local coordinates, the freedom in choosing such
coordinates generalizes the freedom in choosing general reference frames in a space-time manifold.
Therefore, the equivalence principle of general relativity is respected. The causal action is {\em{generally
covariant}} in the sense that it does not depend on the choice of coordinates.
\end{itemize}

\subsection{Philosophical remarks}
Since causal fermion systems are a candidate for a unified physical theory,
one may take a consistent realist point of view and assume that our universe is a causal fermion
system. Here by ``realist point of view'' we mean that one assumes that there is a
reality independent of human observation and that one can describe this reality in a mathematical
language. ``Consistent'' means that this point of view does not lead to contradictions or
inconsistencies. Finally, by ``universe is a causal fermion system'' we mean that
the fundamental entities of our universe are the causal fermion system~$(\H,\F,\rho)$ as well as its inherent
structures.

This position could be investigated from a philosophical point of view.
We find the following points interesting:
\begin{itemize}[leftmargin=1.3em]
\itemD Space-time is a set of operators. The relations between space-time points
are all encoded in properties of products of these operators.
No additional structures need to be specified.
\itemD Similar to the picture in dynamical collapse theories, the basic object to
describe a fermion is the physical wave function~$\psi^u(x)$.
The particle character, however, comes about merely as a consequence of the dynamics
as described by the causal action principle.
\itemD The structures of space-time and matter are described in terms of a single
object: the universal measure. In particular, it is no longer possible to separate
space-time from the matter content therein.
This seems to go a step further than relativity: In relativity, space and time
do not exists separately, but are combined to space-time.
In the approach of causal fermion systems, space-time does not exist without
the matter content (including the Dirac sea). 
Space-time and the matter content are combined in one object.
\end{itemize}
A further investigation of these and related points might offer new perspectives on
questions in philosophy of physics.

\addtocontents{toc}{\protect\setcounter{tocdepth}{3}}		%

\chapter{Noether-Like Theorems}\label{DissNoether}

In this chapter, we explore symmetries and the resulting conservation laws
in the framework of causal fermion systems.
We prove that there are indeed conservation laws, which however have a
structure which is quite different from that of the classical Noether theorem.
These conservation laws are so general that they apply to ``quantum space-times'' which cannot
be approximated by a Lorentzian manifold.  We prove that in the proper limiting case,
our conservation laws simplify to charge conservation and the conservation of energy and momentum
in Minkowski space.

In order to make this chapter easily accessible and self-contained, we develop our concepts step by step.
Section~\ref{Nsecprelim} provides the necessary background:
After a brief review of the classical Noether theorem (Section~\ref{Nsecclass}), we
introduce causal variational principles in the compact setting (a mathematical simplification
of the setting of causal fermion systems) in Section~\ref{Nsecintroc} and define the concept of
surface layer integrals in Section~\ref{Nsecsli}.

After these preparations, in Section~\ref{Nseccompact} we prove conservation laws
for causal variational principles in the compact setting.
We distinguish two different kinds of symmetries: symmetries of the
Lagrangian (see Definition~\ref{Ndefsymmrho} and Theorem~\ref{Nthmsymmum})
and symmetries of the universal measure (see
Definition~\ref{Ndefsymmlagr} and Theorem~\ref{Nthmsymmlag}).
These symmetries and the corresponding conservation laws can be combined
in so-called {\em{generalized integrated symmetries}} (see Definition~\ref{Ndefgis}
and Theorem~\ref{Nthmsymmgis}).

In Section~\ref{Nseccfs} we generalize the previous results to the setting
of causal fermion systems. After a brief introduction to the mathematical setup
(Section~\ref{Nseccfsbasic}), we derive corresponding Noether-like theorems
(see Theorem~\ref{Nthmsymmgis2}, Corollary~\ref{Ncorsymmlag} and Corollary~\ref{Ncorsymmum}
in Section~\ref{Nseccfsnoether}).
In the following Sections~\ref{Nsecexcurrent} and~\ref{NsecexEM}, we work out 
examples which give the correspondence to current conservation (Theorem~\ref{Nthmcurrentmink})
and to the conservation of energy-momentum (Corollary~\ref{NcorEMcons}) in a limiting case.
In Section~\ref{Nsecremark}, the mathematical assumptions and the physical picture is discussed and
clarified by a few remarks. 
In Section~\ref{Nsecexrho} it is explained why the conservation laws corresponding to
symmetries of the universal measure are trivially satisfied in Minkowski space and do not capture any
interesting dynamical information.
Finally, in Section~\ref{Nremmicro}, we explain the relation to
the mechanism of microscopic mixing of the wave functions (as introduced in~\cite[Section~3]{qft} and briefly
explained in Section~\ref{CFoundations}). This chapter has been published with minor modifications as~\cite{noether}.

\section{Preliminaries} \label{Nsecprelim}
\subsection{The Classical Noether Theorem} \label{Nsecclass}
We now briefly review Noether's theorem~\cite{noetheroriginal} in the form most suitable for our
purposes (similar formulations are found in~\cite[Section~13.7]{goldstein} or~\cite[Chapter~III]{barutbook}).
For simplicity, we begin in four-dimensional Minkowski space~$\scrM$.
In the Lagrangian formulation of classical field theory, one seeks for 
critical points of an action of the form
\[ \Sact = \int_{\scrM} \L \big( \psi(x), \psi_{,j}(x), x\big)\: d^4x \]
(where~$\psi$ is for example a scalar, tensor or spinor field, and~$\psi_{,j} \equiv \partial_j \psi$
denotes the partial derivative).
The critical field configurations satisfy the Euler-Lagrange (EL) equations
\beq \label{NELclass}
\frac{\partial \L}{\partial \psi} - \frac{\partial}{\partial x^j} \left( \frac{\partial \L}{\partial \psi_{,j}} \right) = 0 \:.
\eeq
Symmetries are formulated in terms of variations of the
field and the space-time coordinates. More precisely, for given~$\tau_{\max}>0$
we consider smooth families~$(\psi_\tau)$ and~$(x_\tau)$ parametrized
by~$\tau \in (-\tau_{\max}, \tau_{\max})$ with~$\psi_\tau|_{\tau=0}=\psi$ and~$x_\tau|_{\tau=0}=x$.
We assume that these variations describe a {\em{symmetry of the action}},
meaning that for every compact space-time region~$\Omega \subset \scrM$ and
every field configuration~$\psi$ the equation
\beq \label{Nsymm}
\int_\Omega \L\big( \psi(x), \psi_{,j}(x), x\big)\: d^4x
= \int_{\Omega'} \L\big( \psi_\tau(y), (\psi_\tau)_{,j}(y), y \big)\: d^4y
\eeq
holds for all~$\tau \in (-\tau_{\max}, \tau_{\max})$,
where~$\Omega' = \{x_\tau \,|\, x \in \Omega\}$ is the transformed region.
The corresponding {\em{Noether current}}~$J$ is defined by
\[ J^k = \frac{\partial \L}{\partial \psi_{,k}} \,\delta \psi
+ \L\: \delta x^k - \frac{\partial \L}{\partial \psi_{,k}}\: \partial_j \psi\: \delta x^j \:, \]
where~$\delta x$ and~$\delta \psi$ are the first variations
\[ \delta x := \frac{d}{d\tau}\, x_\tau |_{\tau=0} \qquad \text{and} \qquad
\delta \psi(x) := \frac{d}{d\tau}\, \psi_\tau(x_\tau) |_{\tau=0} \:. \]
Noether's theorem states that if~$\psi$ satisfies the EL equations, then
the Noether current is divergence-free,
\[ \partial_k J^k = 0 \:. \]
Using the Gau{\ss} divergence theorem, one may integrate this equation to obtain
a corresponding conserved quantity. To this end, one chooses a space-time region~$\Omega$
whose boundary~$\partial \Omega$ consists of two space-like hypersurfaces~$\scrN_1$
and~$\scrN_2$. Then\Chd{$d\mu_{\scrN_1}(x) \rightarrow d\mu_{\scrN_1}$}%
\beq \label{Nconserve}
\int_{\scrN_1} J^k \nu_k\: d\mu_{\scrN_1} = \int_{\scrN_2} J^k \nu_k\: d\mu_{\scrN_2} \:,
\eeq
where~$\nu$ is the future-directed normal, and~$d\mu_{\scrN_{1\!/\!2}}$ is the induced volume measure
(if~$\Omega$ is unbounded, one needs to assume suitable decay of~$J^k$ at infinity).

We now mention two well-known applications of Noether's theorem which will be most relevant here.
The first application is to consider the Lagrangian of a quantum mechanical wave function~$\psi$
(like the Schr\"odinger, Klein-Gordon or Dirac Lagrangian) and to consider global phase transformations of
the wave function,
\beq \label{Nglobalphase}
\psi_\tau(x) = e^{i \tau} \psi(x) \:, \qquad x_\tau = x \:.
\eeq
Then the symmetry condition~\eqref{Nsymm} is satisfied because the
Lagrangian depends only on the modulus of~$\psi$. 
The corresponding Noether current is the probability current,
giving rise to {\em{current conservation}}.
We remark that, if the quantum mechanical wave function is coupled to an electromagnetic field,
then this current coincides, up to a multiplicative constant, with the electromagnetic current of the
particle. Therefore, the conservation law can also be interpreted as the {\em{conservation of electric
charge}}. The second application is to consider translations in space-time, i.e.
\[ \psi_\tau(x) = \psi(x) \:, \qquad x_\tau = x + \tau v \]
with a fixed vector~$v \in \scrM$. In this case, the symmetry condition~\eqref{Nsymm}
is satisfied if we assume that~$\L=\L(\phi, \phi_{,j})$ does not depend explicitly on~$x$.
After a suitable symmetrization procedure (see~\cite[\S32 and~\S94]{landau2} or the
systematic treatment in~\cite{forger+roemer}),
the corresponding Noether current can be written as
\[ J^k = T^{kj} v_j \:, \]
where~$T_{jk}$ is the energy-momentum tensor.
Noether's theorem yields the {\em{conservation of energy and momentum}}.

Noether's theorem also applies in curved space-time. In this case, the Lagrangian
involves the Lorentzian metric. As a consequence, the symmetry condition~\eqref{Nsymm}
implies that the metric must be invariant under the variation~$x_\tau$.
This is made precise by the notion of a {\em{Killing field}}~$K$, being a
vector field which satisfies the Killing equation
\[ \nabla_i K_j = - \nabla_j K_i \]
(see for example~\cite[Section~2.6]{hawking+ellis} or~\cite[Section~1.9]{straumann}).
If space-time admits a Killing field~$K$, the corresponding Noether current is
most conveniently constructed as follows. 
As a consequence of the Einstein equations, the energy-momentum tensor is divergence-free,
\[ \nabla_j T^{jk} = 0 \:. \]
This by itself does not give rise to conserved quantities because the Gau{\ss} divergence theorem
only applies to vector fields, but not to tensor fields.
However, a direct computation shows that contracting the energy-momentum tensor
with the Killing field,
\[ J^k := T^{kj} K_j \:, \]
gives rise to a divergence-free vector field (see~\cite[Section~3.2]{hawking+ellis}
or~\cite[Section~2.4]{straumann}). Hence integration again gives a conservation law of the form~\eqref{Nconserve}.

\subsection{Causal Variational Principles in the Compact Setting} \label{Nsecintroc}
We now introduce the setting of causal variational principles in the compact case,
slightly generalizing the presentation in~\cite[Section~1.2]{support}.
Let~$\F$ be a smooth {\em{compact}} manifold and~$\L \in C^{0,1}(\F \times \F, \R^+_0)$
a non-negative Lipschitz-continuous function which is symmetric, i.e.
\beq \label{NsymmL}
\L(x,y) = \L(y,x) \qquad \text{for all~$x,y \in \F$}\:.
\eeq
The {\em{causal variational principle}} is to minimize the action~$\Sact$ defined by
\beq \label{NSdefcompact}
\Sact(\rho) = \iint_{\F \times \F} \L(x,y)\: d\rho(x)\: d\rho(y)
\eeq
under variations of~$\rho$ in the class of (positive) normalized regular Borel measures.
The existence of minimizers follows immediately from abstract compactness arguments
(see~\cite[Section~1.2]{continuum}).

In what follows, we let~$\rho$ be a given minimizing measure, referred to as the {\em{universal measure}}.
The resulting EL equations are derived in~\cite[Section~3.1]{support}. For the sake of self-consistency,
we now state them and repeat the proof.
\begin{Lemma} {\Thmt{(Euler-Lagrange equations)}} \label{NlemmaEL}
Let~$\rho$ be a minimizing measure of the causal variational principle~\eqref{NSdefcompact}.
Then the function~$\ell \in C^{0,1}(\F)$ defined by
\beq
\ell(x) = \int_\F \L(x,y)\: d\rho(y) \label{Nldef}
\eeq
is minimal on the support of~$\rho$,
\beq \label{NEL1}
\ell|_{\supp \rho} \,\equiv\, \inf_\F \ell \:.
\eeq
\end{Lemma}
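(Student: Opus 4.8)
The plan is to rewrite the action in the form $\Sact(\rho) = \int_\F \ell(x)\, d\rho(x)$ — which follows from the symmetry~\eqref{NsymmL} of $\L$ together with Fubini's theorem — and then to probe the minimality of $\rho$ against two particularly simple families of admissible competitor measures: convex combinations and point masses.

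First I would test minimality against convex combinations. For an arbitrary normalized positive regular Borel measure $\tilde\rho$ on $\F$, the measure $\rho_\tau := (1-\tau)\,\rho + \tau\, \tilde\rho$ is again admissible for every $\tau \in [0,1]$, and by bilinearity of the double integral together with the symmetry of $\L$,
\[
\Sact(\rho_\tau) = (1-\tau)^2\, \Sact(\rho) + 2\,\tau(1-\tau) \int_\F \ell(x)\, d\tilde\rho(x) + \tau^2\, \Sact(\tilde\rho)\:.
\]
This is a polynomial of degree two in $\tau$, so its right derivative at $\tau=0$ exists trivially and equals $2\big( \int_\F \ell\, d\tilde\rho - \Sact(\rho)\big)$. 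Since $\rho$ minimizes $\Sact$ and $\tau = 0$ is the left endpoint of the parameter interval, this derivative must be nonnegative, whence
\[
\int_\F \ell(x)\, d\tilde\rho(x) \;\geq\; \Sact(\rho) \qquad \text{for every admissible } \tilde\rho\:.
\]

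Next I would specialize $\tilde\rho$ to the Dirac measure $\delta_x$ at an arbitrary point $x \in \F$ (again a normalized positive regular Borel measure), obtaining $\ell(x) \geq \Sact(\rho)$ for all $x \in \F$, that is $\inf_\F \ell \geq \Sact(\rho)$. Conversely, $\Sact(\rho) = \int_\F \ell\, d\rho \geq \inf_\F \ell$ because $\rho$ is normalized. Hence $\Sact(\rho) = \inf_\F \ell$, and therefore $\int_\F \big(\ell - \inf_\F \ell\big)\, d\rho = 0$ with a nonnegative integrand, so $\ell = \inf_\F \ell$ holds $\rho$-almost everywhere.

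Finally, to promote this to an identity on all of $\supp\rho$, I would use that $\ell \in C^{0,1}(\F)$: indeed $|\ell(x) - \ell(x')| \leq \int_\F |\L(x,y) - \L(x',y)|\, d\rho(y) \leq L\, \rho(\F)\, d(x,x')$, where $L$ is a Lipschitz constant of $\L$. Thus the set $\{\ell = \inf_\F \ell\}$ is closed and carries full $\rho$-measure; if some $x \in \supp\rho$ satisfied $\ell(x) > \inf_\F \ell$, then by continuity $\ell$ would exceed $\inf_\F \ell$ on an open neighborhood of $x$, which has strictly positive $\rho$-measure, contradicting $\int_\F (\ell - \inf_\F \ell)\, d\rho = 0$. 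This gives~\eqref{NEL1}. No step here poses a genuine difficulty; the only points deserving an explicit line are the admissibility of the convex-combination and point-mass competitors within the class of normalized positive regular Borel measures, and the passage from the $\rho$-almost-everywhere statement to the pointwise statement on $\supp\rho$ via continuity of $\ell$.
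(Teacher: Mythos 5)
Your proof is correct and follows essentially the same route as the paper: a first-variation argument testing the minimizing measure against convex combinations with Dirac measures, yielding $\inf_\F \ell \geq \Sact(\rho) = \int_\F \ell\, d\rho$ and hence the claim. The only differences are cosmetic — you test with $\delta_x$ at arbitrary points instead of at a point where $\ell$ attains its infimum, and you spell out the continuity argument on $\supp\rho$ that the paper leaves implicit in its final sentence.
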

\noindent We remark that in Chapters~\ref{DissJet} and~\ref{DissStoch},
we add a constant $-\frac{\nu}{2}$ to the right hand side of~\eqref{Nldef} which we choose
such that $ \inf_\F \ell = 0$ (compare e.g.~\eqref{Jelldef} and~\eqref{JELstrong}). However,
in the present context, this additional property of $\ell$ is not necessary.
\Proof Carrying out one of the integrals, one sees that
\beq \label{NSl}
\Sact(\rho) = \iint_{\F \times \F} \L(x,y)\, d\rho(x)\: d\rho(y) = \int_\F \ell\: d\rho \:.
\eeq
Since~$\ell$ is continuous and~$\F$ is compact, there clearly is~$y \in \F$ with
\[ \ell(y) = \inf_\F \ell \:. \]
We consider for~$\tau \in [0,1]$ the family of normalized regular Borel measures
\[ \tilde{\rho}_\tau = (1-\tau)\, \rho + \tau \, \delta_y \:, \]
where~$\delta_y$ denotes the Dirac measure supported at~$y$. Applying this formula in~\eqref{NSdefcompact}
and differentiating, we obtain for the first variation
\[ \delta \Sact := \lim_{t \searrow 0} \frac{\Sact \big(\tilde{\rho}_\tau \big) - \Sact\big(\tilde{\rho}_0 \big)}{\tau}
= -2 \Sact(\rho) + 2 \ell(y)\:. \]
Since~$\rho$ is a minimizer, $\delta \Sact$ is non-negative. Hence
\[ \inf_\F \ell = \ell(y) \:\geq\: \Sact(\rho) \overset{\eqref{NSl}}{=} \int_\F \ell\: d\rho \:. \]
It follows that~$\ell$ is constant on the support of~$\rho$, giving the result.
\QED

As explained in detail in Chapter~\ref{DissIntroCFS}, the physical picture is that the universal measure gives rise to a space-time and also induces all
the objects therein. In the compact setting considered here, one only obtains space-time
endowed with a causal structure in the following way.
Space-time is defined as the support of the universal measure,
\[ \text{\em{space-time}} \qquad M:= \supp \rho \:. \]
For a space-time point~$x \in M$,
we define the open {\em{light cone}} ${\mathcal{I}}(x)$ and the
closed light cone~${\mathcal{J}}(x)$ by
\[ {\mathcal{I}}(x) = \{ y \in M \:|\: \L(x,y) > 0 \} \qquad \text{and} \qquad
 {\mathcal{J}}(x) = \overline{{\mathcal{I}}(x)}\:. \]
This makes it possible to define a {\em{causal structure}} on space-time
by saying that two space-time points~$x, y \in M$ are
{\em{timelike}} separated if $\L(x,y)>0$ and {\em{spacelike}} separated if~$\L(x,y)=0$.
We remark that, in the setting of causal fermion systems, these notions
indeed agree with the usual notion of causality in Minkowski space or on a globally
hyperbolic manifold (cf. Section~\ref{Csecminkvac} and~\cite{cfs}).

\subsection{The Concept of Surface Layer Integrals} \label{Nsecsli}
It is not at all obvious how the classical Noether theorem should be generalized
to causal variational principles. First, the mathematical structure of the
EL equations~\eqref{NEL1} is completely different from that
of the classical EL equations~\eqref{NELclass}.
Moreover, for writing down surface integrals as in~\eqref{Nconserve}
one needs structures like the Lorentzian metric as well as the normal to a hypersurface
and the induced volume measure thereon. All these structures are
not directly available in the setting of causal variational principles.
Therefore, it is a priori not clear how conservation laws should be stated.

The first task is to introduce an analog of the surface integral in~\eqref{Nconserve}.
The only objects to our disposal are the Lagrangian~$\L(x,y)$ and the universal measure~$\rho$.
We make the assumption that the Lagrangian is of {\em{short range}} in the following sense.
We let~$d \in C^0(M \times M, \R^+_0)$ be a distance function on~$M$
(since~$M$ is compact, any two such distance functions are equivalent). The assumption
of short range means that~$\L$ vanishes on distances larger than~$\delta$, i.e.
\beq \label{Nshortrange}
d(x,y) > \delta \quad \Longrightarrow \quad \L(x,y) = 0
\eeq
Then a double integral of the form
\beq \label{Nintdouble}
\int_\Omega \bigg( \int_{M \setminus \Omega} \cdots\: \L(x,y)\: d\rho(y) \bigg)\, d\rho(x)
\eeq
only involves pairs~$(x,y)$ of distance at most~$\delta$,
where~$x$ is in~$\Omega$ and~$y$ is in the complement~$M \setminus \Omega$.
Thus the integral only involves points in a layer around the boundary of~$\Omega$
of width~$\delta$, i.e.
\[ x, y \in B_\delta \big(\partial \Omega \big) \:. \]
Therefore, a double integral of the form~\eqref{Nintdouble} can be regarded as an approximation
of a surface integral on the length scale~$\delta$, as shown in Figure~\ref{Nfignoether1}.
\begin{figure}\centering
\psscalebox{1.0 1.0} %
{
\begin{pspicture}(0,-1.511712)(10.629875,1.511712)
\definecolor{colour0}{rgb}{0.8,0.8,0.8}
\definecolor{colour1}{rgb}{0.6,0.6,0.6}
\pspolygon[linecolor=black, linewidth=0.002, fillstyle=solid,fillcolor=colour0](6.4146066,0.82162136)(6.739051,0.7238436)(6.98794,0.68384355)(7.312384,0.66162133)(7.54794,0.67939913)(7.912384,0.7593991)(8.299051,0.8705102)(8.676828,0.94162136)(9.010162,0.9549547)(9.312385,0.9371769)(9.690162,0.8571769)(10.036829,0.7371769)(10.365718,0.608288)(10.614607,0.42162135)(10.614607,-0.37837866)(6.4146066,-0.37837866)
\pspolygon[linecolor=black, linewidth=0.002, fillstyle=solid,fillcolor=colour1](6.4146066,1.2216214)(6.579051,1.1616213)(6.770162,1.1127324)(6.921273,1.0905102)(7.103495,1.0816213)(7.339051,1.0549546)(7.530162,1.0638436)(7.721273,1.0993991)(7.8857174,1.1393992)(8.10794,1.2060658)(8.299051,1.2549547)(8.512384,1.3038436)(8.694607,1.3260658)(8.890162,1.3305103)(9.081273,1.3393991)(9.379051,1.3216213)(9.659051,1.2593992)(9.9746065,1.1705103)(10.26794,1.0460658)(10.459051,0.94384354)(10.614607,0.82162136)(10.610162,0.028288014)(10.414606,0.1660658)(10.22794,0.26828802)(10.010162,0.37051025)(9.663495,0.47273245)(9.356829,0.53051025)(9.054606,0.548288)(8.814607,0.54384357)(8.58794,0.5171769)(8.387939,0.48162135)(8.22794,0.44162133)(7.90794,0.34828803)(7.6946063,0.29939914)(7.485718,0.26828802)(7.272384,0.26828802)(7.02794,0.28162134)(6.82794,0.3171769)(6.676829,0.35273245)(6.543495,0.38828802)(6.4146066,0.42162135)
\pspolygon[linecolor=black, linewidth=0.002, fillstyle=solid,fillcolor=colour0](0.014606438,0.82162136)(0.3390509,0.7238436)(0.5879398,0.68384355)(0.9123842,0.66162133)(1.1479398,0.67939913)(1.5123842,0.7593991)(1.8990508,0.8705102)(2.2768288,0.94162136)(2.610162,0.9549547)(2.9123843,0.9371769)(3.290162,0.8571769)(3.6368287,0.7371769)(3.9657176,0.608288)(4.2146063,0.42162135)(4.2146063,-0.37837866)(0.014606438,-0.37837866)
\psbezier[linecolor=black, linewidth=0.04](6.4057174,0.8260658)(7.6346064,0.45939913)(7.8634953,0.8349547)(8.636828,0.92828804)(9.410162,1.0216213)(10.165717,0.7927325)(10.614607,0.42162135)
\psbezier[linecolor=black, linewidth=0.04](0.005717549,0.8260658)(1.2346064,0.45939913)(1.4634954,0.8349547)(2.2368286,0.92828804)(3.0101619,1.0216213)(3.7657175,0.7927325)(4.2146063,0.42162135)
\rput[bl](2.0101619,0.050510235){$\Omega$}
\rput[bl](8.759051,0.0016213481){\normalsize{$\Omega$}}
\psline[linecolor=black, linewidth=0.04, arrowsize=0.09300000000000001cm 1.0,arrowlength=1.7,arrowinset=0.3]{->}(1.9434953,0.85495466)(1.8057176,1.6193991)
\rput[bl](2.0946064,1.1705103){$\nu$}
\psbezier[linecolor=black, linewidth=0.02](6.4146066,0.42384356)(7.6434956,0.057176903)(7.872384,0.43273246)(8.645718,0.52606577)(9.419051,0.61939913)(10.174606,0.39051023)(10.623495,0.019399125)
\psbezier[linecolor=black, linewidth=0.02](6.410162,1.2193991)(7.639051,0.8527325)(7.86794,1.228288)(8.6412735,1.3216213)(9.414606,1.4149547)(10.170162,1.1860658)(10.619051,0.8149547)
\rput[bl](8.499051,0.9993991){\normalsize{$y$}}
\rput[bl](7.8657174,0.49273247){\normalsize{$x$}}
\psdots[linecolor=black, dotsize=0.06](8.170162,0.65273243)
\psdots[linecolor=black, dotsize=0.06](8.796828,1.1327325)
\psline[linecolor=black, linewidth=0.02](6.1146064,1.2216214)(6.103495,0.82162136)
\rput[bl](5.736829,0.8993991){\normalsize{$\delta$}}
\rput[bl](3.6146064,0.888288){$\scrN$}
\rput[bl](1.1146064,-1.4117119){$\displaystyle \int_\scrN \cdots\, d\mu_\scrN$}
\rput[bl](5.7146063,-1.511712){$\displaystyle \int_\Omega d\rho(x) \int_{M \setminus \Omega} d\rho(y)\: \cdots\:\L(x,y)$}
\psline[linecolor=black, linewidth=0.02](6.0146065,1.2216214)(6.2146063,1.2216214)
\psline[linecolor=black, linewidth=0.02](6.0146065,0.82162136)(6.2146063,0.82162136)
\end{pspicture}
}
\caption{A surface integral and a corresponding surface layer integral.}
\label{Nfignoether1}
\end{figure}
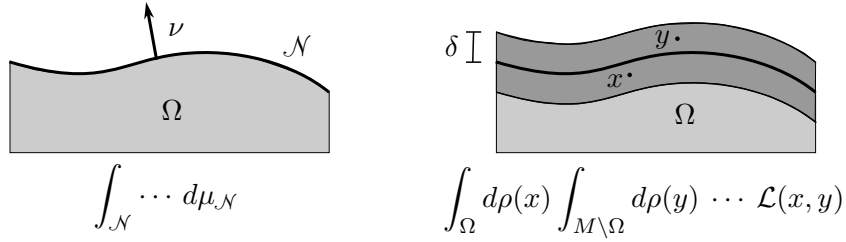
We refer to integrals of the form~\eqref{Nintdouble} as {\em{surface layer integrals}}.
In the setting of causal variational principles, they take the role of surface integrals
in Lorentzian geometry.
Our strategy is to find expressions for the integrand ``\ldots'' in~\eqref{Nintdouble} such that the
surface layer integral vanishes. Choosing~$\Omega$ as a space-time region such that~$\partial \Omega$
has two connected components~$\scrN_1$ and~$\scrN_2$, one then obtains
a conservation law similar to~\eqref{Nconserve}, with the surface integrals replaced by
corresponding surface layer integrals.

We remark for clarity that the correspondence between surface integrals and surface layer integrals
could be made mathematically precise by taking the limit~$\delta \searrow 0$.
However, this would make it necessary to consider a family of Lagrangians~$\L_\delta$
together with corresponding minimizers~$\rho_\delta$.
This seems an interesting technical problem for the future.
For our purposes, it suffices to identify the surface layer integrals~\eqref{Nintdouble}
as the objects which replace the usual surface integrals.

We finally remark that, in the physical setting of causal fermion systems, the
condition of short range~\eqref{Nshortrange} will be replaced by the weaker requirement
that the main contribution to the double integral~\eqref{Nintdouble} comes from pairs
of points~$(x,y)$ whose distance is at most~$\delta$. This will be explained in 
detail in Section~\ref{Nseccurcor}, where will also identify the length scale~$\delta$
with the Compton scale (see the paragraph after after~\eqref{NUSymm20}). 

\section{Noether-Like Theorems in the Compact Setting} \label{Nseccompact}
We now derive Noether-like theorems in the compact setting.
We consider two different symmetries: symmetries of the Lagrangian (Theorem~\ref{Nthmsymmlag})
and symmetries of the universal measure (Theorem~\ref{Nthmsymmum}).
In Section~\ref{Nsecsymmgis}, these symmetries will be combined in
the notion of generalized integrated symmetries (Theorem~\ref{Nthmsymmgis}).

\subsection{Symmetries of the Lagrangian} \label{Nsecsymmlag}
The assumption~\eqref{Nsymm} can be understood as a symmetry condition
for the Lagrangian. We now want to impose a similar symmetry condition for the
Lagrangian~$\L(x,y)$ of a causal variational principle. The most obvious method would be to
consider a one-parameter group of diffeomorphisms~$\Phi_\tau$,
\beq \label{Ntry}
\Phi : \R \times \F \rightarrow \F \qquad \text{with} \qquad
\Phi_\tau \Phi_{\tau'} = \Phi_{\tau+\tau'}
\eeq
and to impose that~$\L$ be invariant under these diffeomorphisms in the sense that
\beq \label{NsymmF1}
\L(x,y) = \L \big( \Phi_\tau(x), \Phi_\tau(y) \big) \qquad \text{for all~$\tau \in \R$
and~$x, y \in \F$\:.}
\eeq
However, this condition is unnecessarily strong for two reasons. First, it suffices to consider families which
are defined locally for~$\tau \in (-\tau_{\max}, \tau_{\max})$. Second, the mapping~$\Phi$ does not
need to be defined on all of~$\F$. Instead, it is more appropriate to impose the symmetry condition
only on space-time~$M \subset \F$.
This leads us to  consider instead of~\eqref{Ntry} a mapping
\beq \label{NPhidef}
\Phi : (-\tau_{\max}, \tau_{\max}) \times M \rightarrow \F
\qquad \text{with} \qquad \Phi(0,.) = \1 \:.
\eeq
We also write~$\Phi_\tau(x) \equiv \Phi(\tau,x)$ and refer to~$\Phi_\tau$
as a {\textbf{variation}} of~$M$ in~$\F$.
Next, we need to specify what we mean by ``smoothness'' of this variation.
This is a subtle point because in view of the results in~\cite{support},
the universal measure does not need to be smooth (in the sense that it cannot in general be written as
a smooth function times the Lebesgue measure), and therefore the function~$\ell$
will in general only be Lipschitz continuous.
Our Noether-like theorems require only that the function~$\ell$ be differentiable in the
direction of the variations:\medskip
\begin{Def} \label{Ndefsymm} A variation~$\Phi_\tau$ of the form~\eqref{NPhidef} is
{\textbf{continuously differentiable}} if the composition
\[ \ell \circ \Phi \::\: (-\tau_{\max}, \tau_{\max}) \times M \rightarrow \R \]
is continuous and if its partial derivative~$\partial_\tau (\ell \circ \Phi)$ exists
and is continuous.
\end{Def}
The next question is how to adapt the symmetry condition~\eqref{NsymmF1} to the mapping~$\Phi$
defined only on~$(-\tau_{\max}, \tau_{\max}) \times M$.
This is not obvious because setting~$\tilde{x} = \Phi_\tau(x)$ and using the group property,
the condition~\eqref{NsymmF1} can be written equivalently as
\beq \label{NsymmF2}
 \L \big( \Phi_{-\tau}(\tilde{x}), y \big) = \L \big( \tilde{x},\Phi_\tau(y) \big) \qquad \text{for all~$\tau \in \R$
 and~$\tilde{x}, y \in \F$\:.}
\eeq
But if we restrict attention to pairs~$x,y \in M$, the equations in~\eqref{NsymmF1} and~\eqref{NsymmF2}
are different. It turns out that the correct procedure is to work with the expression in~\eqref{NsymmF2}.

\begin{Def} \label{Ndefsymmlagr} A variation~$\Phi_\tau$ of the form~\eqref{NPhidef} is a {\textbf{symmetry
of the Lagrangian}} if
\beq \label{Nsymmlagr}
\L \big( x, \Phi_\tau(y) \big) = \L \big( \Phi_{-\tau}(x), y \big)
\qquad \text{for all~$\tau \in (-\tau_{\max}, \tau_{\max})$
and~$x, y \in M$\:.}
\eeq
\end{Def}

We now state our first Noether-like theorem.
\begin{Thm} \label{Nthmsymmlag} Let~$\rho$ be a minimizing measure and $\Phi_\tau$ a 
continuously differentiable symmetry of the Lagrangian. Then for any compact subset~$\Omega \subset M$,
we have
\beq
\label{NconservationLagrEq}
\frac{d}{d\tau} \int_\Omega d\rho(x) \int_{M \setminus \Omega} d\rho(y)\:
\Big( \L \big( \Phi_\tau(x),y \big) - \L \big( \Phi_{-\tau}(x), y \big) \Big) \Big|_{\tau=0} = 0 \:.
\eeq
\end{Thm}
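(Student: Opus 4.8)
The plan is to rewrite the surface-layer integral in~\eqref{NconservationLagrEq} so that its~$\tau$-dependence is carried entirely by the function~$\ell$ of Lemma~\ref{NlemmaEL}, at which point the differentiability assumption of Definition~\ref{Ndefsymm} and the Euler-Lagrange equation~\eqref{NEL1} apply directly. The first ingredient is that the ``diagonal'' contribution, in which both~$x$ and~$y$ range over~$\Omega$, vanishes identically. Indeed, set~$P_\tau := \int_\Omega d\rho(x)\int_\Omega d\rho(y)\, \L(\Phi_\tau(x), y)$; applying the symmetry~\eqref{NsymmL} of the Lagrangian, then the symmetry condition~\eqref{Nsymmlagr} (legitimate because~$x,y\in\Omega\subset M$), and finally renaming~$x\leftrightarrow y$, one rewrites~$\L(\Phi_\tau(x),y)=\L(y,\Phi_\tau(x))=\L(\Phi_{-\tau}(y),x)$ under the double integral and concludes~$P_\tau=P_{-\tau}$. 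Hence~$\int_\Omega d\rho(x)\int_\Omega d\rho(y)\,(\L(\Phi_\tau(x),y)-\L(\Phi_{-\tau}(x),y))=P_\tau-P_{-\tau}=0$.

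Next I would extend the inner integral from~$M\setminus\Omega$ to all of~$M$: writing~$\int_M=\int_\Omega+\int_{M\setminus\Omega}$ and subtracting the diagonal term just shown to vanish, the quantity appearing in~\eqref{NconservationLagrEq} equals
\[ \int_\Omega d\rho(x) \int_M d\rho(y)\, \Big( \L\big(\Phi_\tau(x),y\big) - \L\big(\Phi_{-\tau}(x),y\big) \Big) = \int_\Omega \Big( \ell\big(\Phi_\tau(x)\big) - \ell\big(\Phi_{-\tau}(x)\big) \Big)\, d\rho(x)\,, \]
where the last equality is the definition~\eqref{Nldef} of~$\ell$ together with~$\supp\rho=M$. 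This is the decisive move: the hypotheses say nothing about the~$\tau$-differentiability of~$\L(\Phi_\tau(\cdot),\cdot)$ for fixed arguments, whereas Definition~\ref{Ndefsymm} guarantees precisely that~$\ell\circ\Phi$ is continuously differentiable in~$\tau$.

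It then remains to differentiate at~$\tau=0$. Since~$\Omega$ is compact and~$\partial_\tau(\ell\circ\Phi)$ is continuous, differentiation under the integral sign is justified, so the~$\tau$-derivative of the right-hand side above at~$\tau=0$ equals~$2\int_\Omega \partial_\tau(\ell\circ\Phi)(0,x)\,d\rho(x)$. Now the Euler-Lagrange equation~\eqref{NEL1} enters: for each fixed~$x\in M$ one has~$\ell(x)=\inf_\F\ell$, while~$\ell(\Phi_\tau(x))\ge\inf_\F\ell$ because~$\Phi_\tau(x)\in\F$; hence the differentiable function~$\tau\mapsto\ell(\Phi_\tau(x))$ attains a global minimum at~$\tau=0$ and its derivative there vanishes. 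Thus~$\partial_\tau(\ell\circ\Phi)(0,x)=0$ for all~$x\in M\supset\Omega$, and the claim follows. The heart of the argument is conceptual --- recognising that restoring the identically vanishing diagonal contribution turns an otherwise intractable~$\tau$-derivative into one governed entirely by the minimality of~$\ell$ on~$M$; the remaining points (Fubini for the finite measure~$\rho$ on the compact space~$\F$, and differentiation under the integral over the compact~$\Omega$) are routine.
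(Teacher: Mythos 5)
Your proposal is correct and follows essentially the same route as the paper: both reduce the surface layer integral to the identity $\int_\Omega d\rho(x)\int_{M\setminus\Omega}d\rho(y)\,\big(\L(\Phi_\tau(x),y)-\L(\Phi_{-\tau}(x),y)\big)=\int_\Omega\big(\ell(\Phi_\tau(x))-\ell(\Phi_{-\tau}(x))\big)\,d\rho(x)$ (the paper's~\eqref{Nform}, which it obtains with test functions $f$ and the $1-(1-f)$ split, you by noting the $\Omega\times\Omega$ contribution cancels by~\eqref{NsymmL} and~\eqref{Nsymmlagr}), then differentiate under the integral using compactness of~$\Omega$ and Definition~\ref{Ndefsymm}, and conclude via the EL equations~\eqref{NEL1} that $\tau\mapsto\ell(\Phi_{\pm\tau}(x))$ is minimized at $\tau=0$, so its derivative vanishes. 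The differences are purely cosmetic.
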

Before coming to the proof, we explain the connection to surface layer integrals.
To this end, let us assume that~$\Phi_\tau$ and the Lagrangian are differentiable in the sense that the derivatives
\beq \label{Ndifferentiable}
\frac{d}{d\tau} \Phi_\tau(x)\big|_{\tau=0} =: u(x) \qquad \text{and} \qquad
\frac{d}{d\tau} \L\big( \Phi_\tau(x),y \big) \big|_{\tau=0}
\eeq
exist for all~$x, y \in M$ and are continuous on~$M$ respectively~$M \times M$.
Then one may exchange differentiation and integration in~\eqref{NconservationLagrEq}
and apply the chain rule to obtain
\[ \int_\Omega d\rho(x) \int_{M \setminus \Omega} d\rho(y)\: D_{u(x)} \L(x,y) = 0 \:, \]
where~$D_{u(x)}$ is the derivative in the direction of the vector field~$u(x)$.
This expression is a surface layer integral as in~\eqref{Nintdouble}.
In general, the derivatives in~\eqref{Ndifferentiable} need {\em{not}} exist, because
we merely imposed the weaker differentiability assumption of Definition~\ref{Ndefsymm}.
In this case, the statement of the theorem implies that the derivative 
of the integral in~\eqref{NconservationLagrEq} exists and vanishes.

\Proof[Proof of Theorem~\ref{Nthmsymmlag}.] We multiply~\eqref{Nsymmlagr} by a bounded measurable function~$f$ on~$M$
and integrate. This gives
\begin{align*}
0 &= \iint_{M \times M} f(x)\, f(y)\: \Big(
\L \big( x, \Phi_\tau(y) \big) - \L \big( \Phi_{-\tau}(x), y \big) \Big)\, d\rho(x)\, d\rho(y) \\
&=\iint_{M \times M} f(x)\, f(y)\: \Big(
\L \big( \Phi_\tau(x),y \big) - \L \big( \Phi_{-\tau}(x), y \big) \big) \Big) \, d\rho(x)\, d\rho(y)\:,
\end{align*}
where in the last step we used the symmetry of the Lagrangian~\eqref{NsymmL}
and the symmetry of the integrand in~$x$ and~$y$.
We replace~$f(y)$ by~$1 - (1-f(y))$, multiply out and use the definition of~$\ell$, \eqref{Nldef}.
We thus obtain
\begin{align*}
0 &=\int_M f(x)\: 
\Big(  \ell \big( \Phi_\tau(x)\big) - \ell\big( \Phi_{-\tau}(x)) \Big) \,d\rho(x) \\
&\quad -\iint_{M \times M} f(x)\, \big(1-f(y) \big)\:
\Big( \L \big( \Phi_\tau(x),y \big) - \L \big( \Phi_{-\tau}(x), y \big) \Big) \, d\rho(x)\, d\rho(y)\:.
\end{align*}
Choosing~$f$ as the characteristic function of~$\Omega$, we obtain the identity
\beq \begin{split}
\int_\Omega &d\rho(x) \: \int_{M \setminus \Omega} d\rho(y)\:
\Big( \L \big( \Phi_\tau(x), y \big) - \L \big(\Phi_{-\tau}(x),  y \big) \Big) \\
&= \int_\Omega \Big( \ell \big( \Phi_\tau(x) \big)  - \ell \big( \Phi_{-\tau}(x) \big) \Big)\: d\rho(x) \:.
\end{split} \label{Nform}
\eeq
Using that~$\ell(\Phi_\tau(x))$ is continuously differentiable (see Definition~\ref{Ndefsymm})
and that~$\Omega$ is compact, we conclude that
the right side of this equation is differentiable at~$\tau=0$. Moreover,
we are allowed to exchange the $\tau$-differentiation with integration.
The EL equations~\eqref{NEL1} imply that
\beq \label{Nlfirst}
\frac{d}{d\tau} \ell \big( \Phi_{\tau}(x) \big) \Big|_{\tau=0} = 0 =
\frac{d}{d\tau} \ell \big( \Phi_{-\tau}(x) \big) \Big|_{\tau=0} \:.
\eeq
Hence the right side of~\eqref{Nform} is differentiable at~$\tau=0$,
and the derivative vanishes. This gives the result.
\QED

\subsection{Symmetries of the Universal Measure} \label{Nsecsymmum}
We now prove a conservation law for a different type of symmetry.

\begin{Def} \label{Ndefsymmrho} A variation~$\Phi_\tau$
of the form~\eqref{NPhidef} is a {\textbf{symmetry of the universal measure}} if
\beq \label{Nsymmrho}
(\Phi_\tau)_* \rho = \rho
\qquad \text{for all~$\tau \in (-\tau_{\max}, \tau_{\max})$\:.}
\eeq
\end{Def} \noindent
Here~$(\Phi_\tau)_* \rho$ is the push-forward measure
(defined by~$((\Phi_\tau)_* \rho)(\Omega) := \rho(\Phi_\tau^{-1}(\Omega))$).

\begin{Thm} \label{Nthmsymmum} Let~$\rho$ be a minimizing measure and $\Phi_\tau$  be a
continuously differentiable symmetry of the universal measure. Then for any compact subset~$\Omega \subset M$,
\[ \frac{d}{d\tau} \int_\Omega d\rho(x) \int_{M \setminus \Omega} d\rho(y)\: \Big( 
\L \big( \Phi_\tau(x), y \big)  - \L \big( x, \Phi_\tau(y) \big) \Big) \Big|_{\tau=0} = 0 \:. \]
\end{Thm}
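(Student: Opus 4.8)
The plan is to adapt the proof of Theorem~\ref{Nthmsymmlag}, replacing the symmetry of the Lagrangian by the invariance~\eqref{Nsymmrho} of the universal measure. First I would rewrite the surface layer integral in the statement: denoting by $\chi_\Omega$ the characteristic function of $\Omega \subset M$, it equals
\[
\iint_{M \times M} \chi_\Omega(x)\,\big(1 - \chi_\Omega(y)\big)\, \Big( \L\big(\Phi_\tau(x), y\big) - \L\big(x, \Phi_\tau(y)\big) \Big)\, d\rho(x)\, d\rho(y)\:.
\]
Expanding $\chi_\Omega(x)\big(1 - \chi_\Omega(y)\big) = \chi_\Omega(x) - \chi_\Omega(x)\chi_\Omega(y)$, the contribution weighted by $\chi_\Omega(x)\chi_\Omega(y)$ vanishes: relabelling $x \leftrightarrow y$ and using the symmetry~\eqref{NsymmL} of the Lagrangian turns $\iint \chi_\Omega(x)\chi_\Omega(y)\, \L(\Phi_\tau(x), y)\, d\rho\, d\rho$ into $\iint \chi_\Omega(x)\chi_\Omega(y)\, \L(x, \Phi_\tau(y))\, d\rho\, d\rho$, so the two terms cancel. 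Thus the integral in the statement reduces to $\int_\Omega d\rho(x) \int_M d\rho(y)\, \big( \L(\Phi_\tau(x), y) - \L(x, \Phi_\tau(y)) \big)$.

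Second, I would carry out the inner $y$-integrals. The term $\int_M \L(\Phi_\tau(x), y)\, d\rho(y)$ is by definition $\ell\big(\Phi_\tau(x)\big)$. For $\int_M \L(x, \Phi_\tau(y))\, d\rho(y)$ I would invoke the change-of-variables formula for push-forward measures together with the hypothesis $(\Phi_\tau)_* \rho = \rho$: applied to the Borel function $y \mapsto \L(x,y)$ (for fixed $x$), it gives $\int_M \L(x, \Phi_\tau(y))\, d\rho(y) = \int_M \L(x,y)\, d\big((\Phi_\tau)_*\rho\big)(y) = \int_M \L(x,y)\, d\rho(y) = \ell(x)$. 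Hence the surface layer integral in the statement equals $\int_\Omega \big( \ell(\Phi_\tau(x)) - \ell(x) \big)\, d\rho(x)$, exactly as in formula~\eqref{Nform} of the previous proof.

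Finally, I would differentiate this expression at $\tau = 0$. Since $\ell$ is in general only Lipschitz, this is where the regularity of the variation enters: because $\Phi_\tau$ is continuously differentiable in the sense of Definition~\ref{Ndefsymm} and $\Omega$ is compact, $\partial_\tau(\ell \circ \Phi)$ is bounded on $(-\tau', \tau') \times \Omega$ for small $\tau' > 0$, so the $\tau$-derivative may be interchanged with the integral over $\Omega$. The Euler--Lagrange equations~\eqref{NEL1} state that $\ell \equiv \inf_\F \ell$ on $M = \supp\rho$; since $\Phi_0(x) = x \in M$, the function $\tau \mapsto \ell(\Phi_\tau(x))$ attains its minimum at $\tau = 0$, so $\partial_\tau \ell(\Phi_\tau(x))\big|_{\tau=0} = 0$ for every $x \in M$. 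Therefore $\frac{d}{d\tau}\int_\Omega \big(\ell(\Phi_\tau(x)) - \ell(x)\big)\, d\rho(x)\big|_{\tau=0} = 0$, which is the assertion. The only step requiring genuine care — though not difficulty — is the measure-theoretic identity in the second paragraph; everything else is a near-verbatim adaptation of the proof of Theorem~\ref{Nthmsymmlag}.
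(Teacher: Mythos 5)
Your proposal is correct and follows essentially the same route as the paper: the $\Omega\times\Omega$ part is removed using the symmetry of $\L$, the inner integral is evaluated via $\ell(\Phi_\tau(x))$ and the push-forward identity $(\Phi_\tau)_*\rho=\rho$, and the $\tau$-derivative at $0$ vanishes by the Euler--Lagrange equations together with compactness of $\Omega$ and the assumed differentiability of $\ell\circ\Phi$. The paper merely phrases the first reduction by testing with a bounded function $f$ and writing $f(y)=1-(1-f(y))$, which is the same algebra as your characteristic-function decomposition.
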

\Proof We again let~$f$ be a bounded measurable function on~$M$. Then, by symmetry
in~$x$ and~$y$,
\[ \iint_{M \times M} f(x)\,f(y)\: \Big( \L \big( \Phi_\tau(x), y \big) -\L \big( x, \Phi_\tau(y) \big) \Big)
\: d\rho(x)\, d\rho(y) = 0 \:. \]
We replace~$f(y)$ by~$1 - (1-f(y))$ and multiply out. The double integrals which do not involve~$f(y)$
can be simplified as follows,
\begin{align*}
&\iint_{M \times M} f(x)\, \L \big( \Phi_\tau(x), y \big) \: d\rho(x)\, d\rho(y) = \int_M f(x)\:\ell \big( \Phi_\tau(x) \big) 
\: d\rho(x) \\
&\iint_{M \times M} f(x)\, \L \big( x, \Phi_\tau(y) \big) \: d\rho(x)\, d\rho(y) =
\iint_{\F \times \F} f(x)\, \L \big( x, \Phi_\tau(y) \big) \: d\rho(x)\, d\rho(y) \\
&=\iint_{\F \times \F} f(x)\, \L(x,y) \: d\rho(x)\, d \big((\Phi_\tau)_*\rho \big)(y) 
\overset{(\star)}{=} \iint_{\F \times \F} f(x)\, \L(x,y) \: d\rho(x)\, d \rho (y) \\
&= \iint_{M \times M} f(x)\, \L(x,y) \: d\rho(x)\, d \rho (y)
= \int_M f(x)\, \ell(x)\, d\rho(x) \:,
\end{align*}
where in~($\star$) we used the symmetry assumption~\eqref{Nsymmrho}.
We thus obtain
\begin{align*}
0 &= -\iint_{M \times M} f(x)\,\big(1-f(y) \big)\: \Big( \L \big( \Phi_\tau(x), y \big) 
- \L \big( x, \Phi_\tau(y) \big) \Big)\: d\rho(x)\, d\rho(y) \\
&\quad + \int_M f(x)\, \Big( \ell \big( \Phi_\tau(x) \big) - \ell(x) \Big) \: d\rho(x)\:.
\end{align*}
Choosing~$f$ as the characteristic function of~$\Omega$ gives
\[ \int_\Omega d\rho(x) \int_{M \setminus \Omega} d\rho(y)\: \Big( \L \big( \Phi_\tau(x), y \big) -
\L \big( x, \Phi_\tau(y) \big) \Big)
= \int_\Omega \Big(  \ell \big( \Phi_\tau(x) \big) - \ell(x) \Big) \: d\rho(x) \:. \]
Now the $\tau$-derivative can be computed just as in the proof of Theorem~\ref{Nthmsymmlag}.
\QED

\subsection{Generalized Integrated Symmetries} \label{Nsecsymmgis}
We now combine the symmetries of the previous sections
in the notion of ``generalized integrated symmetries.''
Our method is based on the following simple but useful identity.
\begin{Prp} \label{Nprpuseful} Let~$\Phi_\tau$ be a variation of the form~\eqref{NPhidef}. Then
\begin{align}
\int_M d\rho&(x) \int_\Omega d\rho(y) \Big( \L\big( \Phi_\tau(x), y\big) - \L(x,y) \Big) \label{Nid1} \\
=\:& \int_\Omega \Big( \ell \big( \Phi_\tau(x)\big) - \ell(x) \Big)\, d\rho(x) \label{Nid2} \\
&- \int_\Omega d\rho(x) \int_{M \setminus \Omega} d\rho(y) \:\Big( \L\big( \Phi_\tau(x), y\big) -
\L\big( x, \Phi_\tau(y) \big) \Big) \:. \label{Nid3}
\end{align}
\end{Prp}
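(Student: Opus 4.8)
The plan is to prove the identity \eqref{Nid1}--\eqref{Nid3} by a direct manipulation of double integrals over $M \times M$, mirroring the technique already used in the proofs of Theorems~\ref{Nthmsymmlag} and~\ref{Nthmsymmum}. The key observation is that both sides can be reached from the single auxiliary quantity $\iint_{M \times M} \mathbf{1}_\Omega(x)\, \big( \L(\Phi_\tau(x),y) - \L(x,y) \big)\, d\rho(x)\,d\rho(y)$ by splitting the inner integral over $M$ into an integral over $\Omega$ plus an integral over $M \setminus \Omega$.

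First I would start from the left-hand side \eqref{Nid1} and write $M = \Omega \,\cup\, (M \setminus \Omega)$ in the $y$-integration, giving
\[
\text{\eqref{Nid1}} = \int_\Omega d\rho(x) \int_\Omega d\rho(y) \Big( \L\big( \Phi_\tau(x), y\big) - \L(x,y) \Big)
+ \int_\Omega d\rho(x) \int_{M \setminus \Omega} d\rho(y) \Big( \L\big( \Phi_\tau(x), y\big) - \L(x,y) \Big).
\]
For the first of these two terms, the domain $\Omega \times \Omega$ is symmetric under exchanging $x$ and $y$, and by the symmetry of the Lagrangian~\eqref{NsymmL} we may replace $\L(\Phi_\tau(x),y)$ by $\L(x,\Phi_\tau(y))$ and $\L(x,y)$ by $\L(y,x)=\L(x,y)$ under the double integral; hence
\[
\int_\Omega d\rho(x) \int_\Omega d\rho(y) \Big( \L\big( \Phi_\tau(x), y\big) - \L(x,y) \Big)
= \int_\Omega d\rho(x) \int_\Omega d\rho(y) \Big( \L\big( x, \Phi_\tau(y)\big) - \L(x,y) \Big).
\]
Next I would add and subtract: to the first term on the right-hand side of the split I add and subtract $\int_\Omega d\rho(x)\int_{M\setminus\Omega} d\rho(y)\, \big(\L(x,\Phi_\tau(y)) - \L(x,y)\big)$. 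Using the just-derived symmetric identity, the $\Omega\times\Omega$ contribution plus the $\Omega\times(M\setminus\Omega)$ contribution with argument $\Phi_\tau$ on the second slot recombine into $\int_\Omega d\rho(x)\int_M d\rho(y)\,\big(\L(x,\Phi_\tau(y)) - \L(x,y)\big)$. Carrying out the inner $y$-integral over all of $M$ and recalling the definition~\eqref{Nldef} of $\ell$, this equals $\int_\Omega \big(\ell(\Phi_\tau(x)) - \ell(x)\big)\,d\rho(x)$, which is precisely \eqref{Nid2}. Collecting the leftover terms yields exactly $-\int_\Omega d\rho(x)\int_{M\setminus\Omega} d\rho(y)\,\big(\L(\Phi_\tau(x),y) - \L(x,\Phi_\tau(y))\big)$, which is \eqref{Nid3}, completing the proof.

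I do not expect any serious obstacle here: the statement is a purely algebraic bookkeeping identity and requires no minimality of $\rho$, no differentiability of $\Phi_\tau$, and no convergence subtleties beyond the (finite, since $\rho$ is normalized and $\L$ bounded) integrability of $\L$ and of $\ell \circ \Phi_\tau$ on the compact set $\Omega$. The only point demanding a little care is keeping track of which of the two Lagrangian arguments carries $\Phi_\tau$ at each stage and invoking the symmetry~\eqref{NsymmL} only on the symmetric domain $\Omega \times \Omega$ — on the asymmetric piece $\Omega \times (M\setminus\Omega)$ one must \emph{not} symmetrize, which is exactly why the residual term \eqref{Nid3} appears with the antisymmetric combination $\L(\Phi_\tau(x),y) - \L(x,\Phi_\tau(y))$. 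Once the subsequent notion of a generalized integrated symmetry is imposed (forcing the left-hand side \eqref{Nid1} to vanish), this identity will turn the vanishing of \eqref{Nid2} — which follows from the Euler--Lagrange equation~\eqref{NEL1} as in the earlier proofs — into the desired conservation law for the surface layer integral \eqref{Nid3}.
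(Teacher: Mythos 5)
Your argument has a genuine gap, and it sits in the very first display. In~\eqref{Nid1} it is the $x$-integration that runs over~$M$ and the $y$-integration that runs over~$\Omega$, so the only legitimate decomposition is in the $x$-variable, namely $\int_\Omega d\rho(x)\int_\Omega d\rho(y)(\cdots) + \int_{M\setminus\Omega} d\rho(x)\int_\Omega d\rho(y)(\cdots)$. Your split instead places $M\setminus\Omega$ on the $y$-slot. Since the integrand $\L(\Phi_\tau(x),y)-\L(x,y)$ is \emph{not} symmetric in $x$ and $y$ (only $\L$ itself is; the variation sits on one slot), the term $\int_{M\setminus\Omega} d\rho(x)\int_\Omega d\rho(y)\,(\L(\Phi_\tau(x),y)-\L(x,y))$ is not equal to $\int_\Omega d\rho(x)\int_{M\setminus\Omega} d\rho(y)\,(\L(\Phi_\tau(x),y)-\L(x,y))$; transposing it requires Fubini plus~\eqref{NsymmL}, which moves $\Phi_\tau$ to the second argument. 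In fact your first right-hand side recombines to $\int_\Omega d\rho(x)\int_M d\rho(y)\,(\L(\Phi_\tau(x),y)-\L(x,y))$, which by~\eqref{Nldef} is exactly~\eqref{Nid2}; so your opening line already asserts that~\eqref{Nid1} equals~\eqref{Nid2}, i.e.\ that the surface layer term~\eqref{Nid3} vanishes identically — which is precisely what the proposition does \emph{not} say. A second error then compensates: the identification $\int_\Omega d\rho(x)\int_M d\rho(y)\,\big(\L(x,\Phi_\tau(y))-\L(x,y)\big)=\int_\Omega\big(\ell(\Phi_\tau(x))-\ell(x)\big)\,d\rho(x)$ is false, because with $\Phi_\tau$ acting on the integration variable the inner integral is $\int_\F \L(x,\,\cdot\,)\,d\big((\Phi_\tau)_*\rho\big)$, not $\ell(\Phi_\tau(x))$; to invoke~\eqref{Nldef} the variation must sit on the non-integrated argument, $\int_M \L(\Phi_\tau(x),y)\,d\rho(y)=\ell(\Phi_\tau(x))$. (There is also a sign slip: within your own bookkeeping the leftover is $+\int_\Omega d\rho(x)\int_{M\setminus\Omega}d\rho(y)\,(\L(\Phi_\tau(x),y)-\L(x,\Phi_\tau(y)))$, the negative of~\eqref{Nid3}.) Each of the two main errors is of exactly the size of the surface layer integral, and they cancel to reproduce the (true) final identity — so the conclusion is correct, but the derivation does not establish it.

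The repair is close to what you intended and is the paper's route: split~\eqref{Nid1} in the $x$-variable as above; rewrite the $\Omega\times\Omega$ piece as $\int_\Omega d\rho(x)\int_M d\rho(y)-\int_\Omega d\rho(x)\int_{M\setminus\Omega}d\rho(y)$, keeping $\Phi_\tau$ on the first slot; carry out the $y$-integration in the $\Omega\times M$ piece with~\eqref{Nldef} to obtain~\eqref{Nid2}; and in the $(M\setminus\Omega)\times\Omega$ piece exchange the order of integration and use~\eqref{NsymmL} to move $\Phi_\tau$ to the second argument, so that together with the subtracted $\Omega\times(M\setminus\Omega)$ piece it yields~\eqref{Nid3}. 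Your closing remarks are right — no minimality of $\rho$, no differentiability of $\Phi_\tau$, and no convergence issues enter in the compact setting, and the antisymmetric combination in~\eqref{Nid3} indeed records exactly where symmetrization is forbidden — but the bookkeeping must track which argument carries $\Phi_\tau$ and which variable ranges over $M$ at every stage.
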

\Proof We rewrite the integration domains as follows,
\begin{align}
\int_M & d\rho(x) \int_\Omega d\rho(y) \Big( \L\big( \Phi_\tau(x), y\big) - \L(x,y) \Big) \notag \\
&= \int_\Omega d\rho(x) \int_\Omega d\rho(y) \Big( \L\big( \Phi_\tau(x), y\big) - \L(x,y) \Big) \notag \\
&\quad+ \int_{M \setminus \Omega} d\rho(x) \int_\Omega d\rho(y) \Big( \L\big( \Phi_\tau(x), y\big) - \L(x,y) \Big) \notag \\
&= \int_\Omega d\rho(x) \int_M d\rho(y) \Big( \L\big( \Phi_\tau(x), y\big) - \L(x,y) \Big) \label{Nint1} \\
&\quad- \int_\Omega d\rho(x) \int_{M \setminus \Omega} d\rho(y) \Big( \L\big( \Phi_\tau(x), y\big) - \L(x,y) \Big) \\
&\quad+ \int_{M \setminus \Omega} d\rho(x) \int_\Omega d\rho(y) \Big( \L\big( \Phi_\tau(x), y\big) - \L(x,y) \Big) \:.
\label{Nint3}
\end{align}
In~\eqref{Nint1} we can carry out the $y$-integration using~\eqref{Nldef}.
In~\eqref{Nint3} we exchange the integrals and use that the Lagrangian is symmetric in its
two arguments~\eqref{NsymmL}. This gives the result.
\QED

Note that the term~\eqref{Nid3} is a surface layer integral. The term~\eqref{Nid2}, on the
other hand, only involves~$\ell$, and therefore its first variation vanishes in view
of the EL equations~\eqref{NEL1}. We thus obtain a conservation law, provided that
the term~\eqref{Nid1} vanishes. This motivates the following definition.

\begin{Def} \label{Ndefgis}
A variation~$\Phi_\tau$ of the form~\eqref{NPhidef} is a {\textbf{generalized integrated symmetry}} in
the space-time region~$\Omega \subset M$ if
\beq \label{Nsymmgis}
\int_M d\rho(x) \int_\Omega d\rho(y) \Big( \L\big( \Phi_\tau(x), y\big) - \L(x,y) \Big)  = 0 \:.
\eeq
\end{Def}
This notion of symmetry indeed generalizes our previous notions of symmetry
(see Definitions~\ref{Ndefsymmlagr} and~\ref{Ndefsymmrho}) in the sense that
symmetries of the Lagrangian and of the universal measure imply that~\eqref{Nsymmgis}
holds for first variations. Namely, if~$\Phi_\tau$ is a symmetry of the universal measure,
we can use~\eqref{Nsymmrho} to obtain
\beq \begin{split}
&\int_M d\rho(x) \int_\Omega d\rho(y) \Big( \L\big( \Phi_\tau(x), y\big) - \L(x,y) \Big) \\
&= \int_\F d \big( (\Phi_{\tau})_* \rho \big)(x) \int_\Omega d\rho(y) \: \L(x,y)
- \int_\F d\rho(x) \int_\Omega d\rho(y) \:\L(x,y) = 0 \:. \label{Ncalc1}
\end{split}
\eeq
Likewise, if~$\Phi_\tau$ is a symmetry of the Lagrangian, we can apply~\eqref{Nsymmlagr}. This gives
the identity
\beq \begin{split}
&\int_M d\rho(x) \int_\Omega d\rho(y) \Big( \L\big( \Phi_\tau(x), y\big) - \L(x,y) \Big) \\
&= \int_M d\rho(x) \int_\Omega d\rho(y) \Big( \L\big( x, \Phi_{-\tau}(y) \big) - \L(x,y) \Big) \\
&= \int_\Omega \Big( \ell \big( \Phi_{-\tau}(y) \big) - \ell(y) \Big)\: d\rho(y) \:, \label{Ncalc2}
\end{split}
\eeq
whose first variation vanishes in view of~\eqref{Nlfirst}.

Combining Definition~\ref{Ndefgis} with Proposition~\ref{Nprpuseful} immediately gives the following result.
\begin{Thm} \label{Nthmsymmgis}
Let~$\rho$ be a minimizing measure and $\Phi_\tau$ a 
continuously differentiable generalized integrated symmetry (see Definition~\ref{Ndefgis}).
Then for any compact subset~$\Omega \subset M$,
\[
\frac{d}{d\tau} \int_\Omega d\rho(x) \int_{M \setminus \Omega} d\rho(y) \:\Big( \L\big( \Phi_\tau(x), y\big) -
\L\big( x, \Phi_\tau(y) \big) \Big) \Big|_{\tau=0} = 0 \:.
\]
\end{Thm}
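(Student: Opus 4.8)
The plan is to read the claim off directly from the decomposition in Proposition~\ref{Nprpuseful}, so the argument is essentially a bookkeeping step plus the differentiability reasoning already used in Theorem~\ref{Nthmsymmlag}. First I would apply Proposition~\ref{Nprpuseful} to the given variation~$\Phi_\tau$ and the compact set~$\Omega$, which writes the ``bulk'' double integral \eqref{Nid1} as the sum of the $\ell$-integral \eqref{Nid2} and the surface layer integral \eqref{Nid3}. Since by hypothesis $\Phi_\tau$ is a generalized integrated symmetry in~$\Omega$, Definition~\ref{Ndefgis} says that \eqref{Nid1} vanishes for every $\tau \in (-\tau_{\max}, \tau_{\max})$. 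Rearranging, the surface layer integral occurring in the theorem therefore satisfies
\[
\int_\Omega d\rho(x) \int_{M \setminus \Omega} d\rho(y) \:\Big( \L\big( \Phi_\tau(x), y\big) - \L\big( x, \Phi_\tau(y) \big) \Big) = \int_\Omega \Big( \ell \big( \Phi_\tau(x)\big) - \ell(x) \Big)\, d\rho(x)
\]
for all such~$\tau$.

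The remaining step is to differentiate the right-hand side at~$\tau=0$. Because $\Phi_\tau$ is continuously differentiable in the sense of Definition~\ref{Ndefsymm}, the map $(\tau,x)\mapsto \ell(\Phi_\tau(x))$ is continuous with continuous partial $\tau$-derivative; combined with compactness of~$\Omega$ this lets me interchange $\frac{d}{d\tau}\big|_{\tau=0}$ with $\int_\Omega d\rho(x)$, so the right-hand side is differentiable at $\tau=0$ with derivative $\int_\Omega \partial_\tau \ell(\Phi_\tau(x))\big|_{\tau=0}\, d\rho(x)$. By Lemma~\ref{NlemmaEL} the function~$\ell$ is minimal on $M=\supp\rho$, hence for each fixed $x\in M$ the map $\tau\mapsto \ell(\Phi_\tau(x))$ has a minimum at $\tau=0$; being differentiable there, its derivative vanishes — this is exactly \eqref{Nlfirst}, a consequence of the Euler--Lagrange equations \eqref{NEL1}. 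Thus the integrand is zero pointwise, the right-hand side has vanishing derivative at $\tau=0$, and therefore so does the surface layer integral, which is the assertion of the theorem.

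I do not expect a genuine obstacle here: the only point needing a small amount of care is the interchange of differentiation and integration, which follows from the uniform continuity of $\partial_\tau(\ell\circ\Phi)$ on $[-\tau_0,\tau_0]\times\Omega$ for any $\tau_0<\tau_{\max}$ (giving a uniform bound on difference quotients and dominated convergence), and this is precisely the computation already performed in the proofs of Theorems~\ref{Nthmsymmlag} and~\ref{Nthmsymmum}. Beyond Proposition~\ref{Nprpuseful} and the minimality of~$\ell$ on the support of~$\rho$, no new ingredient is required, which is why the result follows ``immediately'' from the preceding material.
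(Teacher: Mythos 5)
Your proposal is correct and follows essentially the same route as the paper: Proposition~\ref{Nprpuseful} plus the vanishing of~\eqref{Nid1} from Definition~\ref{Ndefgis}, with the $\tau$-derivative of the $\ell$-term killed by the EL equations~\eqref{NEL1} (i.e.~\eqref{Nlfirst}), exactly as in the proofs of Theorems~\ref{Nthmsymmlag} and~\ref{Nthmsymmum}. The paper merely states this as "immediate," while you spell out the interchange of differentiation and integration; no discrepancy.
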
 \noindent
In view of~\eqref{Ncalc1} and~\eqref{Ncalc2},
the previous conservation laws of Theorems~\ref{Nthmsymmlag} and~\ref{Nthmsymmum}
are immediate corollaries of this theorem.

\section{The Setting of Causal Fermion Systems} \label{Nseccfs}
We now turn attention to the setting of causal fermion systems.
After a short review of the mathematical framework and the Euler-Lagrange equations
(Section~\ref{Nseccfsbasic}), we prove Noether-like theorems (Section~\ref{Nseccfsnoether}).
The reader interested in a more detailed introduction to causal fermion systems
is referred to Chapter~\ref{DissIntroCFS} and to the introductory chapter in~\cite{cfs}.

\subsection{Basic Definitions and the Euler-Lagrange Equations} \label{Nseccfsbasic}
\begin{Def} \label{Ndefparticle} {\textbf{(causal fermion system)}} {\em{
Given a separable complex Hilbert space $\H$ with scalar product~$\la .|. \ra_\H$
and a parameter~$n \in \N$ (the {\em{``spin dimension''}}), we let~$\F \subset \Lin(\H)$ be the set of all
self-adjoint operators on~$\H$ of finite rank, which (counting multiplicities) have
at most~$n$ positive and at most~$n$ negative eigenvalues. On~$\F$ we are given
a positive measure~$\rho$ (defined on a $\sigma$-algebra of subsets of~$\F$), the so-called
{\em{universal measure}}. We refer to~$(\H, \F, \rho)$ as a {\em{causal fermion system}}.
}}
\end{Def}

We next introduce the causal action principle. For any~$x, y \in \F$, the product~$x y$ is an operator
of rank at most~$2n$. We denote its non-trivial eigenvalues (counting algebraic multiplicities)
by~$\lambda^{xy}_1, \ldots, \lambda^{xy}_{2n} \in \C$.
We introduce the {\em{spectral weight}}~$| \,.\, |$ of an operator as the sum of the absolute values
of its eigenvalues. In particular, the spectral weight of the operator
products~$xy$ and~$(xy)^2$ is defined by
\[ |xy| = \sum_{i=1}^{2n} \big| \lambda^{xy}_i \big|
\qquad \text{and} \qquad \big| (xy)^2 \big| = \sum_{i=1}^{2n} \big| \lambda^{xy}_i \big|^2 \:. \]
We introduce the Lagrangian and the action by
\begin{align}
\text{\em{Lagrangian:}} && \L(x,y) &= \big| (xy)^2 \big| - \frac{1}{2n}\: |xy|^2 \label{NLagrange} \\
\text{\em{action:}} && \Sact(\rho) &= \iint_{\F \times \F} \L(x,y)\: d\rho(x)\, d\rho(y) \:. \label{NSdef}
\end{align}
The {\em{causal action principle}} is to minimize~$\Sact$ by varying the universal measure
under the following constraints:
\begin{align}
\text{\em{volume constraint:}} && \rho(\F) = \text{const} > 0 \quad\;\; & \label{Nvolconstraint} \\
\text{\em{trace constraint:}} && \int_\F \tr(x)\: d\rho(x) = \text{const} \neq 0 & \label{Ntrconstraint} \\
\text{\em{boundedness constraint:}} && \T(\rho) := \iint_{\F \times \F} |xy|^2\: d\rho(x)\, d\rho(y) &\leq C \:, \label{NTdef}
\end{align}
where~$C$ is a given parameter (and~$\tr$ denotes the trace of linear operators on~$\H$).

\subsubsection{The finite-dimensional setting}
If~$\H$ is {\em{finite-dimensional}} and~$\rho$ has {\em{finite total volume}}, the existence of
minimizers is proven in~\cite{continuum}, and the corresponding EL equations
are derived in~\cite{lagrange}. We now recall a few of these results.
Under the above assumptions, on~$\F$ one considers the topology induced by the
operator norm
\beq \label{Nsupnorm}
\|A\| := \sup \big\{ \|A u \|_\H \text{ with } \| u \|_\H = 1 \big\} \:.
\eeq
In this topology, the Lagrangian as well as the integrands in~\eqref{Ntrconstraint}
and~\eqref{NTdef} are continuous. We vary~$\rho$ within the class of bounded Borel measures of~$\F$.
The existence of minimizers of the action~\eqref{NSdef} under the constraints~\eqref{Nvolconstraint}--\eqref{NTdef} is proven in~\cite[Theorem~2.1]{continuum}.
For our purposes, the resulting EL equations are most conveniently
stated as follows (for a heuristic derivation see the introduction in~\cite{lagrange}).

\begin{Thm} \label{Nthm3} Assume that~$\rho$ is a minimizer of the causal
action principle for~$C$ so large that
\beq
C > \inf \big\{ \T(\mu) \:|\: \text{$\mu$ satisfies~\eqref{Nvolconstraint}  and~\eqref{Ntrconstraint}} \big\} \:. \label{NCbound}
\eeq
Moreover, assume that one of the following two technical assumptions hold:
\begin{enumerate}
\item[\textrm{(i)}] The boundedness constraint is satisfied with a strict inequality,
\beq
\T(\rho) < C \:. \label{NCbound2}
\eeq
\item[\textrm{(ii)}] The minimizer is regular in the sense of~\cite[Definition~3.12]{lagrange}.
\end{enumerate}
Then for a suitable choice of Lagrange multipliers~$\lambda, \kappa \in \R$,
the measure~$\rho$ is supported on the intersection of the level sets
\beq \label{Nhyper}
\Phi_1(x) = -4 \Sact(\rho)  \qquad \text{and} \qquad
\Phi_2(x) = 2 \Sact(\rho) \:,
\eeq
where
\beq
\Phi_1(x) := -\lambda \tr(x) \:,\qquad
\Phi_2(x) := 2 \int_\F \L_\kappa(x,y) \,d\rho(y) \label{NPhi2def}
\eeq
and
\beq\label{NLagrangeKappa}
\L_\kappa(x,y) := \L(x,y) + \kappa \, |xy|^2 \:.
\eeq
Moreover, the function
\[ \Phi(x) := \Phi_1 + \Phi_2 \]
is minimal on the support of~$\rho$, i.e.
\beq \label{NPhiminimal}
\Phi|_{\supp \rho} = \inf_\F \Phi \:.
\eeq
\end{Thm}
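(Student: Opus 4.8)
The plan is to derive Theorem~\ref{Nthm3} by a Lagrange--multiplier (Karush--Kuhn--Tucker) analysis of the constrained quadratic minimization problem, in the spirit of the proof of Lemma~\ref{NlemmaEL} but with the extra trace and boundedness constraints. Two structural facts drive everything. First, $\Sact$ and $\T$ are quadratic forms in $\rho$ with symmetric kernels $\L$ and $|xy|^2$, so their first variations at $\rho$ along a signed measure $\delta\rho$ are, up to a factor $2$, the integrals of $\int_\F\L_\kappa(x,y)\,d\rho(y)$-type functions against $\delta\rho$; the volume and trace constraints are affine in $\rho$. Second, the dilation $x\mapsto sx$ maps $\F$ into itself and scales $\L(sx,y)=s^2\L(x,y)$, $|(sx)y|^2=s^2|xy|^2$, $\tr(sx)=s\,\tr(x)$; hence $\L_\kappa(\cdot,y)$ and $\ell_\kappa(x):=\int_\F\L_\kappa(x,y)\,d\rho(y)$ are homogeneous of degree two, so Euler's relation gives that the radial derivative of $\ell_\kappa$ at $x$ equals $2\,\ell_\kappa(x)$. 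The first step is to absorb the inequality constraint $\T(\rho)\le C$ into a multiplier: one considers variations $\rho_\tau=\rho+\tau\,\delta\rho$, $\tau\ge 0$ small, keeping $\rho_\tau\ge 0$ (e.g.\ $\delta\rho=\tilde\mu-h\rho$ with $\tilde\mu\ge 0$ and $h\ge 0$ bounded) and preserving volume and trace; since $\rho$ minimizes, $\tfrac{d}{d\tau}\Sact(\rho_\tau)|_{0^+}\ge 0$ whenever also $\tfrac{d}{d\tau}\T(\rho_\tau)|_{0^+}\le 0$. The hypothesis $C>\inf\{\T(\mu):\mu\text{ satisfies~\eqref{Nvolconstraint} and~\eqref{Ntrconstraint}}\}$ together with (i) $\T(\rho)<C$ (in which case $\kappa=0$) or (ii) regularity of $\rho$ is precisely the constraint qualification under which a separation argument produces $\kappa\ge 0$ with $\tfrac{d}{d\tau}(\Sact+\kappa\,\T)(\rho_\tau)|_{0^+}\ge 0$ for all volume- and trace-preserving admissible $\delta\rho$. \textbf{This is where I expect the real difficulty to lie}: turning a one-sided \emph{quadratic} inequality constraint into a genuine Lagrange multiplier is exactly what the two technical hypotheses are there to enable, and in case~(ii) it relies on the regularity machinery of~\cite{lagrange}.

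With the boundedness constraint absorbed, write $\Sact_\kappa:=\Sact+\kappa\,\T=\iint\L_\kappa\,d\rho\,d\rho$, whose first variation at $\rho$ along $\delta\rho$ is $\int_\F\Phi_2\,d(\delta\rho)$ with $\Phi_2(x):=2\,\ell_\kappa(x)=2\int_\F\L_\kappa(x,y)\,d\rho(y)$. Testing with $\delta\rho=\tilde\mu-\rho$ for positive measures $\tilde\mu$ of the same total volume and trace, and with pairs of point masses combined so as to keep volume and trace fixed, the resulting inequalities $\int_\F\Phi_2\,d(\delta\rho)\ge 0$ together with the affine trace constraint (multiplier $\lambda$) and volume constraint (additive constant $\nu$) give, by the usual complementary-slackness reasoning (as in the proof of Lemma~\ref{NlemmaEL}), that the function $F(x):=\Phi_2(x)-\lambda\,\tr(x)-\nu$ is non-negative on all of $\F$ and vanishes $\rho$-almost everywhere, hence — by continuity of $\Phi_2$ and $\tr$ — on $\supp\rho$. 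Writing $\Phi_1(x):=-\lambda\,\tr(x)$ and $\Phi:=\Phi_1+\Phi_2$, this says exactly that $\Phi\ge\nu$ on $\F$ and $\Phi|_{\supp\rho}\equiv\nu$, i.e.\ $\Phi|_{\supp\rho}=\inf_\F\Phi$, which is the final assertion~\eqref{NPhiminimal}.

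It remains to split the single relation $\Phi_1+\Phi_2\equiv\nu$ on $\supp\rho$ into the two level sets~\eqref{Nhyper}, and this is where the scaling action is decisive. For small disjoint $\rho$-measurable sets $A,B$, push forward $\rho|_A$ by $x\mapsto sx$ and $\rho|_B$ by $x\mapsto t(s)\,x$, with $t(s)$ chosen near $1$ so that the trace is \emph{exactly} preserved; total volume is automatically preserved, and under~(i) the value $\T$ stays below $C$, so this is an admissible two-sided variation of the minimizer and its first variation vanishes. Computing that first variation by bilinearity of $\Sact_\kappa$ and the Euler relation (radial derivative of $\ell_\kappa$ equals $2\,\ell_\kappa$) yields $\int_A\Phi_2\,d\rho\,/\!\int_A\tr\,d\rho=\int_B\Phi_2\,d\rho\,/\!\int_B\tr\,d\rho$; letting $A,B$ shrink to points gives a second pointwise identity $\Phi_2(x)=c\,\tr(x)$ on $\supp\rho$ for a constant $c$. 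Combined with $\Phi_2-\lambda\,\tr-\nu=0$ on $\supp\rho$, this forces (in the nondegenerate case $\nu\ne 0$) that $\tr$ — and hence each of $\Phi_1$ and $\Phi_2$ — is constant on $\supp\rho$. The values of these constants then follow by integrating $\Phi_2$ against $\rho$ (using $\int_\F\Phi_2\,d\rho=2\Sact_\kappa(\rho)$ and the trace constraint) and by the global dilation identity $\Sact(\rho)\mapsto s^4\Sact(\rho)$ under $x\mapsto sx$, which pins down $\lambda$; with the volume normalised this gives $\Phi_1\equiv-4\Sact(\rho)$ and $\Phi_2\equiv 2\Sact(\rho)$ on $\supp\rho$. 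Finally, case~(ii) runs exactly like case~(i), with each admissibility check for the variations above (preservation of positivity, and the bound $\T\le C$) supplied by the regularity hypothesis in the sense of~\cite{lagrange} rather than by the strict inequality.
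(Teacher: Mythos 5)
Your route is genuinely different from the paper's: the paper does not prove Theorem~\ref{Nthm3} from scratch at all, but obtains the support statement~\eqref{Nhyper} from \cite[Theorem~1.3]{lagrange} (in the case $\T(\rho)<C$) and the minimality~\eqref{NPhiminimal} from \cite[Theorem~3.13]{lagrange}, noting that the regularity condition of \cite[Definition~3.12]{lagrange} is automatic once the trace constraint is imposed and~\eqref{NCbound2} holds. Your reconstruction of the underlying first-order analysis is in the right spirit: the first variation with complementary slackness for the volume and trace constraints, followed by the homogeneity relations $\L_\kappa(tx,y)=t^2\L_\kappa(x,y)$ and $\tr(tx)=t\,\tr(x)$ used through dilations to split the single Euler--Lagrange relation into the two separate level sets, is exactly the mechanism the paper itself employs later when deriving~\eqref{Nseprel} in the infinite-dimensional setting. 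Your two-set dilation with a trace-compensating factor $t(s)$ is a workable variant of that pointwise radial-derivative argument (and in fact the pointwise version, applied to the ray $t\mapsto tx$ using $F\geq 0$ on $\F$ and $F=0$ at $x\in\supp\rho$, gives $4\,\ell_\kappa(x)=\lambda\,\tr(x)$ more directly).

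There are, however, genuine gaps. (1) The existence of the Lagrange multipliers is precisely the content the paper delegates to \cite{lagrange}: positive Borel measures form a convex set, not a vector space, and one must show the constraints locally describe a Banach submanifold. You flag this for $\kappa$, but you dismiss the trace multiplier $\lambda$ as "the usual complementary-slackness reasoning", whereas the argument of Lemma~\ref{NlemmaEL} handles only the volume constraint; the two-point-mass variations you gesture at need a constraint-qualification step that is not carried out, and case~(ii) is reduced to the assertion that the regularity hypothesis "supplies the admissibility checks", which is not an argument. (2) Your identification of the constants only works in case~(i): if the boundedness constraint is active with $\kappa\neq 0$, your own normalization identity gives $\Phi_2|_{\supp\rho}=2\big(\Sact(\rho)+\kappa\,\T(\rho)\big)$ for unit total volume, not the claimed $2\,\Sact(\rho)$, so the final sentence "case~(ii) runs exactly like case~(i)" does not deliver~\eqref{Nhyper} as stated without a further argument about $\kappa\,\T(\rho)$. (3) You exclude the degenerate case $\nu=0$ (equivalently $2c'=\lambda$) without justification. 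It can be repaired: there the radial derivative combined with $2\ell_\kappa=\lambda\tr$ on the support forces $\ell_\kappa\equiv 0$ and $\lambda\,\tr\equiv 0$ on $\supp\rho$, hence $\Sact(\rho)=0$ and, since the trace constraint is nonzero, $\lambda=0$, so both level-set identities hold trivially -- but as written your proof simply stops short of this case.
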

\Proof We first apply~\cite[Theorem~1.3]{lagrange} to the causal variational principle
with trace constraint in the case~$\T(\rho)<C$. This yields that~$\rho$ is supported on the
intersection of the level sets~\eqref{Nhyper}. Moreover, this theorem implies that~$\Phi|_{\supp \rho}
= -2 \Sact(\rho)$.
The minimality~\eqref{NPhiminimal} is proven in~\cite[Theorem~3.13]{lagrange}, noting
that the regularity condition of~\cite[Definition~3.12]{lagrange} is automatically satisfied
if the trace constraint is considered and if~\eqref{NCbound2} holds.
\QED
We remark for clarity that the inequality~\eqref{NCbound} can always be arranged
by choosing~$C$ sufficiently large.
The assumptions~(i) or~(ii) are needed in order for the Lagrange multiplier method
to be applicable. The basic difficulty comes about because the set of positive Borel
measures is not a vector space, but only a convex set.
Moreover, one must make sure that the constraints describe locally
a Banach submanifold. We refer the reader interested in the technical details to the
paper~\cite{lagrange}. In what follows, we take the assumptions~(i) or~(ii) for granted.

For the derivation of our conservation laws, we only need a weaker version of the
EL equations~\eqref{NPhiminimal}. Namely, it suffices to assume that the function~$\Phi$ is constant on the
support of~$\rho$,
\beq \label{NPhiminimala}
\Phi|_{\supp \rho} = \text{const} \:,
\eeq
and that the support of~$\rho$ is a {\em{local}} minimum in the sense that every~$x \in \supp \rho$ has a
neighborhood~$U(x) \subset \F$ such that
\beq \label{NPhiminimalb}
\Phi(x) = \inf_{U(x)} \Phi \:.
\eeq
We subsume~\eqref{NPhiminimala} and~\eqref{NPhiminimalb} by saying that~$\rho$ is a
{\em{local minimizer}} of the causal action principle.
Working with local minimizers is also preferable because
the regularized Dirac sea configurations to be considered in the examples
of Sections~\ref{Nseccurcor} and~\ref{NCorrDiracEM} are known to satisfy~\eqref{NPhiminimala}
and~\eqref{NPhiminimalb} in the continuum limit, but
but they are not global minimizers of the causal action principle
(for a detailed discussion of this point in the connection to microscopic mixing
and second-quantized bosonic fields we refer to~\cite[\S1.5.3]{cfs}).

\subsubsection{The infinite-dimensional setting}
We next consider the case that~$\H$ is infinite-dimensional or the total volume~$\rho(\F)$ is infinite.
First, a scaling argument shows that in the case~$\rho(\F)=\infty$ and~$\dim \H<\infty$, the
action is infinite for all measures satisfying the constraints, so that the variational principle is not sensible.
Similarly, if~$\rho(\F)<\infty$ and~$\dim \H=\infty$, the infimum of the action is zero,
but this infimum is not attained (for details see~\cite[Exercise~1.3]{cfs}). Therefore, the only interesting case is the
{\em{infinite-dimensional}} setting
when~$\rho(\F)=\infty$ and~$\dim \H=\infty$. In this setting, the causal action principle
makes mathematical sense if the volume constraint~\eqref{Nvolconstraint}
is implemented by demanding that the variations~$(\rho(\tau))_{\tau \in (-\tau_{\max}, \tau_{\max})}$
should for all~$\tau, \tau' \in (-\tau_{\max}, \tau_{\max})$ satisfy the conditions
\[ \big| \rho(\tau) - \rho(\tau') \big|(\F) < \infty \qquad \text{and} \qquad
\big( \rho(\tau) - \rho(\tau') \big) (\F) = 0 \]
(where~$|.|$ denotes the total variation of a measure; see~\cite[\S28]{halmosmt}).
But the existence of minimizers has not yet been proven.
Nevertheless, the EL equations are well-defined in the following sense:
\begin{Def} \label{Ndeflocmin}
Let~$(\rho, \H, \F)$ be a causal fermion system (possibly with~$\dim \H=\infty$
and~$\rho(\F)=\infty$). The measure~$\rho$ is a {\textbf{local minimizer}} of the causal action principle
if the integral in~\eqref{NPhi2def} is finite for all~$x \in \F$ and if the EL
equations~\eqref{NPhiminimala} and~\eqref{NPhiminimalb} hold for a suitable parameter~$\lambda \in \R$.
\end{Def} \noindent
Such local minimizers arise naturally when analyzing the continuum limit of
causal fermion systems (see~\cite{cfs}). Also, the physical examples 
in Sections~\ref{Nsecexcurrent} and~\ref{NsecexEM} will be formulated for local minimizers in the
infinite-dimensional setting. Finally, the above notion of local minimizers is
of relevance in view of future extensions of the existence  theory to the infinite-dimensional setting.

Let~$\rho$ be a local minimizer of the causal action principle.
We again define {\em{space-time}} by~$M=\supp \rho$; it is a closed but in
general non-compact subset of~$\F\subset \Lin(\H)$.
We again define the function~$\ell$ by
\beq \label{Nelldef}
\ell(x) = \int_M \L_\kappa(x,y) \,d\rho(y)
\eeq
and for notational convenience set~$\nu = \lambda/2$.
By assumption, this function is well-defined and finite for all~$x \in \F$.
Moreover, the EL equations~\eqref{NPhiminimala} and~\eqref{NPhiminimalb} imply that
\beq \label{NELgen}
\begin{split}
\ell&(x) - \nu \,\tr(x) \qquad \text{is constant on~$M$} \\
\ell(x) - \nu \,\tr(x) &= \inf_{y \in U(x)} \big( \ell(y) - \nu \,\tr(y) \big) \qquad \text{for all~$x \in M$}
\end{split}
\eeq
(where~$U(x) \subset \F$ is again a neighborhood of~$x$).
However, the function~$\ell$ need not be integrable. In particular, the action~\eqref{NSdef}
may be infinite.

These EL equations imply analogs of the relations~\eqref{Nhyper} and~\eqref{NPhi2def}.
Namely, evaluating the identity
\[ \frac{d}{dt} \big( \ell(tx) - \nu\, \tr(tx) \big) \big|_{t=1} = 0 \]
and using that the Lagrangian~\eqref{NLagrange} is homogeneous of degree two, one finds that on~$M$,
\[ 2 \ell(x) - \nu\, \tr(x) = 0 \:. \]
Combining this relation with~\eqref{NELgen}, one concludes that
on~$M$, the two terms in~\eqref{NELgen} are separately constant, i.e.
\beq \label{Nseprel}
\ell(x)  = -\inf_{y \in \F} \big( \ell(y) - \nu \,\tr(y) \big) = \frac{\nu}{2}\: \tr(x) \qquad \text{for all~$x \in M$}\:.
\eeq
These identities are very useful because they show that on~$M$,
both summands in~\eqref{NELgen} are separately constant. Moreover,
these relations make it possible to compute the Lagrange multiplier~$\nu$.

\subsection{Noether-Like Theorems} \label{Nseccfsnoether}
Let~$(\H, \F, \rho)$ be a causal fermion system, where~$\rho$ is a local minimizer
of the causal action (see Definition~\ref{Ndeflocmin}).
We do not want to assume that~$\H$ is finite-dimensional nor that the total volume of~$\rho$
is finite. But we shall assume that~$\rho$ is {\textbf{locally finite}} in the sense that~$\rho(K)< \infty$
for every compact subset~$K \subset \F$.

We again consider variations~$\Phi_\tau$ of~$M$ in~$\F$ described by a mapping~$\Phi$
of the form~\eqref{NPhidef},\\[-.6em]
\beq \label{NPhidef2}
\Phi : (-\tau_{\max}, \tau_{\max}) \times M \rightarrow \F
\qquad \text{with} \qquad \Phi(0,.) = \1 \:.
\eeq
Similar to Definition~\ref{Ndefsymm}, the regularity of
the variation is defined by composing~$\Phi$ with an operator mapping to the real numbers.
However, we now compose both with~$\ell$ and with the trace operation.

\begin{Def} \label{Ndefsymm2} A variation~$\Phi_\tau$ of the form~\eqref{NPhidef2} is
is {\textbf{continuous}} if the compositions
\[ \ell \circ \Phi ,\;\tr \circ \Phi : (-\tau_{\max}, \tau_{\max}) \times M \rightarrow \R \]
are continuous. If in addition their partial derivative~$\partial_\tau(\ell \circ \Phi)$
and~$\partial_\tau(\tr \circ \Phi)$ exist and are continuous on~$(-\tau_{\max}, \tau_{\max})
\times M \rightarrow \R$, then the variation is said to be {\textbf{continuously differentiable}}.
\end{Def}

We now generalize Proposition~\ref{Nprpuseful} to the setting of causal fermion systems.
\begin{Prp} \label{Nprpuseful2} Let~$\Phi_\tau$ be a continuous variation of the form~\eqref{NPhidef2}. Then
for any compact subset~$\Omega \subset M$,
\begin{align}
\int_M d\rho&(x) \int_\Omega d\rho(y) \Big( \L_\kappa\big( \Phi_\tau(x), y\big) - \L_\kappa(x,y) \Big) \label{Nid1n} \\
=\:& \int_\Omega \Big( \ell \big( \Phi_\tau(x)\big) - \ell(x) \Big)\: d\rho(x) \label{Nid2n} \\
&- \int_\Omega d\rho(x) \int_{M \setminus \Omega} d\rho(y) \:\Big( \L_\kappa\big( \Phi_\tau(x), y\big) -
\L_\kappa\big( x, \Phi_\tau(y) \big) \Big) \:. \label{Nid3n}
\end{align}
\end{Prp}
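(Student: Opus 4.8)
The plan is to follow the proof of Proposition~\ref{Nprpuseful} essentially verbatim, with the Lagrangian~$\L$ replaced by~$\L_\kappa$ (see~\eqref{NLagrangeKappa}), paying extra attention to convergence since in the present setting~$M$ and~$\F$ need no longer be compact nor~$\rho$ finite. The first observation is that~$\L_\kappa$ inherits the symmetry~$\L_\kappa(x,y) = \L_\kappa(y,x)$: the Lagrangian~\eqref{NLagrange} is symmetric because the non-trivial eigenvalues of~$xy$ and of~$yx$ coincide (so that~$|(xy)^2| = |(yx)^2|$ and~$|xy|^2 = |yx|^2$), and adding~$\kappa\,|xy|^2$ preserves this. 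With the symmetry of~$\L_\kappa$ available, the algebraic identity is produced exactly as in the proof of Proposition~\ref{Nprpuseful}: I would decompose the outer integral in~\eqref{Nid1n} as~$\int_M = \int_\Omega + \int_{M \setminus \Omega}$, then add and subtract~$\int_\Omega d\rho(x)\int_{M \setminus \Omega} d\rho(y)\,(\cdots)$ in order to rewrite the~$\int_\Omega\!\int_\Omega$ contribution as~$\int_\Omega d\rho(x)\int_M d\rho(y)\,(\cdots)$ minus that term, carry out the~$y$-integration in the~$\int_\Omega\!\int_M$ piece by means of the definition~\eqref{Nelldef} of~$\ell$ to obtain~\eqref{Nid2n}, and in the remaining~$\int_{M \setminus \Omega} d\rho(x)\int_\Omega d\rho(y)$ term interchange the order of integration and relabel~$x \leftrightarrow y$ using the symmetry of~$\L_\kappa$; combining the latter with the subtracted term, the two copies of~$\L_\kappa(x,y)$ cancel and one is left precisely with~\eqref{Nid3n}.

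The point where the argument genuinely differs from the compact case is the need to check that every double integral occurring in this rearrangement converges absolutely, so that the splitting of integration domains, Tonelli's theorem and the interchange of the order of integration are all justified. For this I would use that~$\rho$ is locally finite and~$\Omega$ compact, hence~$\rho(\Omega) < \infty$; that, for a local minimizer in the sense of Definition~\ref{Ndeflocmin}, the function~$\ell$ of~\eqref{Nelldef} is finite on all of~$\F$; and that the variation is continuous (Definition~\ref{Ndefsymm2}), so that~$\ell \circ \Phi$ is continuous on~$(-\tau_{\max}, \tau_{\max}) \times M$ — in particular~$\ell|_M$ and, for each fixed~$\tau$, the function~$x \mapsto \ell(\Phi_\tau(x))$ are continuous, hence bounded on the compact set~$\Omega$. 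Since~$\L_\kappa \geq 0$, Tonelli's theorem applies to each of the non-negative double integrals entering the computation; those in which the inner variable runs over all of~$M$ evaluate, by~\eqref{Nelldef}, to~$\int_\Omega \ell(\Phi_\tau(x))\,d\rho(x)$ or~$\int_\Omega \ell(x)\,d\rho(x)$, which are finite by the above.

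I expect the main obstacle to be the single term in which the \emph{outer} integration runs over the non-compact set~$M \setminus \Omega$ while~$\Phi_\tau$ appears in the first argument, that is~$\int_{M \setminus \Omega} d\rho(x) \int_\Omega d\rho(y)\, \L_\kappa(\Phi_\tau(x), y)$: in contrast to the compact setting there is no bound for it coming from~$\ell$ alone, and its finiteness must be extracted from the assumptions on the variation~$\Phi_\tau$ outside a compact set (writing~$\int_M = \int_K + \int_{M \setminus K}$ with~$\Phi_\tau$ trivial on~$M \setminus K$, the first piece is finite by continuity and compactness and the tail is dominated by~$\int_M d\rho(x) \int_\Omega d\rho(y)\, \L_\kappa(x,y) = \int_\Omega \ell\, d\rho < \infty$). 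Once all the terms are known to be finite, Fubini's theorem legitimises the interchange of the order of integration in the~$\int_{M \setminus \Omega}\!\int_\Omega$ term, the cancellation of the~$\L_\kappa(x,y)$ contributions is valid, and the asserted identity — that~\eqref{Nid1n} equals the sum of~\eqref{Nid2n} and~\eqref{Nid3n} — follows just as in Proposition~\ref{Nprpuseful}.
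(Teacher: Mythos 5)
Your algebraic rearrangement and the use of the symmetry of~$\L_\kappa$ match the paper's strategy, and your bookkeeping of which double integrals are controlled by the boundedness of~$\ell \circ \Phi$ on~$[-\delta,\delta] \times \Omega$ (for~$\delta < \tau_{\max}$) is correct up to the one critical term. The gap lies in how you dispose of that term. You try to prove that~$\int_{M \setminus \Omega} d\rho(x) \int_\Omega d\rho(y)\, \L_\kappa(\Phi_\tau(x), y)$ is finite by writing~$\int_M = \int_K + \int_{M \setminus K}$ ``with~$\Phi_\tau$ trivial on~$M \setminus K$''. No such hypothesis exists: the proposition assumes only that the variation is continuous in the sense of Definition~\ref{Ndefsymm2} (continuity of~$\ell \circ \Phi$ and~$\tr \circ \Phi$), and in the intended applications — the unitary variations~\eqref{Nvarunit}, or the time translations used in Section~\ref{NCorrDiracEM} — $\Phi_\tau$ is not trivial outside any compact set. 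Nor does the continuity assumption control this integral: for~$\tau \neq 0$ the function~$\ell \circ \Phi_\tau$ need not be bounded on the non-compact set~$M$, and~$\int_M \L_\kappa(x, \Phi_\tau(y))\, d\rho(y)$ is not~$\ell(x)$, since~$\rho$ is not assumed invariant under~$\Phi_\tau$. So finiteness of this term simply cannot be extracted from the hypotheses, and your proof as written does not cover the stated generality.

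The correct way out — and this is how the paper argues — is not to prove finiteness but to allow divergence and show it is harmless. The two uncontrolled integrals are~$\int_M d\rho(x) \int_\Omega d\rho(y)\, \L_\kappa(\Phi_\tau(x), y)$, sitting inside~\eqref{Nid1n}, and~$\int_\Omega d\rho(x) \int_{M \setminus \Omega} d\rho(y)\, \L_\kappa(x, \Phi_\tau(y))$, sitting inside~\eqref{Nid3n}. After exchanging the order of integration by Tonelli (legitimate since the integrand is non-negative) and using the symmetry of~$\L_\kappa$, they have the same integrand, and their integration domains differ only by~$\Omega \times \Omega$, over which the integral is finite by the boundedness of~$\ell \circ \Phi$ on~$[-\delta,\delta] \times \Omega$. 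Hence one diverges if and only if the other does; in that case both~\eqref{Nid1n} and the difference of~\eqref{Nid2n} and~\eqref{Nid3n} take the value~$+\infty$ (all remaining contributions being finite), so the identity holds trivially, while if both are finite your compact-case rearrangement goes through verbatim. Without this simultaneous-divergence argument, or an additional hypothesis on~$\Phi_\tau$ that the proposition does not make, the proof is incomplete.
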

\Proof The subtle point is that~$M$ is in general non-compact, so that some of the integrals may diverge.
Therefore, we need to carefully consider the different integrals one after each other:
From~\eqref{Nseprel} we know that the functions~$\ell$ and~$\tr(x)$ are both constant on~$M$.
Moreover, the functions~$\ell \circ \Phi$ and~$\tr \circ \Phi$ are continuous on~$(-\tau_{\max}, \tau_{\max})
\times M$. As a consequence, it follows that for any {\em{compact}} subset~$\Omega \subset M$
and any~$\delta<\tau_{\max}$, the restriction
\[ \ell \circ \Phi \big|_{[-\delta, \delta] \times \Omega} \::\: 
[-\delta, \delta] \times \Omega \rightarrow \R \]
is a bounded function. Using that the Lagrangian is non-negative, this implies that
for any~$\tau \in (-\delta, \delta)$, the double integrals of the form
\[ \int_\Omega d\rho(x) \int_U d\rho(y) \:\L_\kappa\big( \Phi_\tau(x), y \big) \]
are well-defined and finite for any measurable subset~$U \subset M$. Moreover, one
may exchange the orders of integration using Tonelli's theorem
(i.e.\ the version of Fubini's theorem for non-negative integrands).
In particular, we conclude that the following integrals in~\eqref{Nid1n} and~\eqref{Nid3n} are
well-defined and finite,
\[ \int_M d\rho(x) \int_\Omega d\rho(y) \:\L_\kappa(x,y) \qquad \text{and} \qquad
\int_\Omega d\rho(x) \int_{M \setminus \Omega} d\rho(y) \:\L_\kappa\big( \Phi_\tau(x), y\big) \:. \]
For the integral in~\eqref{Nid2n}, we can argue similarly: We saw above that
the functions~$\ell \circ \Phi$ and~$\tr \circ \Phi$ are bounded on~$\{0\} \times M$ and continuous
on~$(-\tau_{\max}, \tau_{\max}) \times M$. Therefore, they are bounded on~$[-\delta, \delta] \times \Omega$,
implying that the integral in~\eqref{Nid2n} is well-defined and finite.

It remains to consider the two integrals
\beq \label{Nintdiverge}
\int_M d\rho(x) \int_\Omega d\rho(y) \:\L_\kappa\big( \Phi_\tau(x), y\big) \quad \text{and} \quad
\int_\Omega d\rho(x) \int_{M \setminus \Omega} d\rho(y) \:\L_\kappa\big( x, \Phi_\tau(y) \big) \:.
\eeq
These integrals could diverge. But since the integrand is non-negative, Tonelli's theorem
nevertheless allows us to exchange the two integrals.
Then the integrands of the two integrals coincide. The integration ranges
coincide up to the compact set~$\Omega \times \Omega$.
Therefore, the first integral in~\eqref{Nintdiverge} diverges if and only if the second integral diverges.
If this is the case, the left and the right side of the equation~\eqref{Nid1n}--\eqref{Nid3n}
both take the value~$+\infty$, so that the statement of the proposition holds.
In the case that the integrals in~\eqref{Nintdiverge} are both finite,
we can repeat the computation in the proof of Proposition~\ref{Nprpuseful}
and apply~\eqref{Nelldef} to obtain the result.
\QED

\begin{Def} \label{Ndefgis2}
The variation~$\Phi_\tau$ is a {\textbf{generalized integrated symmetry}} in
the space-time region~$\Omega \subset M$ if the following two identities hold:
\begin{gather}
\int_M d\rho(x) \int_\Omega d\rho(y) \Big( \L_\kappa\big( \Phi_\tau(x), y\big) - \L_\kappa(x,y) \Big) = 0 \label{NLagint} \\
\int_\Omega \Big( \tr \big( \Phi_\tau(x)\big) - \tr(x) \Big)\: d\rho(x) = 0 \:. \label{Ntraceint}
\end{gather}
\end{Def}

Combining this definition with Proposition~\ref{Nprpuseful2} 
and the EL equations~\eqref{NELgen} immediately gives the following result:
\begin{Thm} \label{Nthmsymmgis2}
Let~$\rho$ be a local minimizer of the causal action (see Definition~\ref{Ndeflocmin})
and $\Phi_\tau$ a continuously differentiable generalized integrated symmetry (see Definitions~\ref{Ndefsymm2}
and~\ref{Ndefgis2}). Then for any compact subset~$\Omega \subset M$, the following
surface layer integral vanishes,
\beq \label{Nconserve5}
\frac{d}{d\tau} \int_\Omega d\rho(x) \int_{M \setminus \Omega} d\rho(y) \:\Big( \L_\kappa\big( \Phi_\tau(x), y\big) -
\L_\kappa\big( x, \Phi_\tau(y) \big) \Big) \Big|_{\tau=0} = 0 \:.
\eeq
\end{Thm}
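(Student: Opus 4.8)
The plan is to deduce the result from the pointwise identity of Proposition~\ref{Nprpuseful2} together with the Euler--Lagrange equations~\eqref{NELgen}, in close analogy with the proof of Theorem~\ref{Nthmsymmgis} in the compact setting. First I would apply Proposition~\ref{Nprpuseful2}: since $\Phi_\tau$ is a continuous variation of the form~\eqref{NPhidef2} and $\Omega\subset M$ is compact, the identity relating the three expressions~\eqref{Nid1n}, \eqref{Nid2n} and~\eqref{Nid3n} holds. By the first defining property~\eqref{NLagint} of a generalized integrated symmetry, the term~\eqref{Nid1n} vanishes for every $\tau$, and in particular it is finite; hence the degenerate case ``$+\infty=+\infty$'' in the proof of Proposition~\ref{Nprpuseful2} does not occur and all iterated integrals in sight are finite. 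Rearranging the identity, one obtains for every $\tau\in(-\tau_{\max},\tau_{\max})$
\begin{equation}\label{PNmain}
\int_\Omega d\rho(x) \int_{M \setminus \Omega} d\rho(y) \:\Big( \L_\kappa\big( \Phi_\tau(x), y\big) - \L_\kappa\big( x, \Phi_\tau(y) \big) \Big) = \int_\Omega \Big( \ell \big( \Phi_\tau(x)\big) - \ell(x) \Big)\: d\rho(x) \:,
\end{equation}
so it suffices to show that the $\tau$-derivative of the right-hand side vanishes at $\tau=0$. This will simultaneously establish that the surface layer integral~\eqref{Nconserve5} is differentiable at $\tau=0$, which is not clear a priori since the pointwise derivatives $\frac{d}{d\tau}\L_\kappa(\Phi_\tau(x),y)$ need not exist.

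Next I would bring in the trace. By the second defining property~\eqref{Ntraceint}, $\int_\Omega\big(\tr(\Phi_\tau(x))-\tr(x)\big)\,d\rho(x)=0$ for all $\tau$, so the right-hand side of~\eqref{PNmain} equals $\int_\Omega\big(g(\Phi_\tau(x))-g(x)\big)\,d\rho(x)$ with $g(x):=\ell(x)-\nu\,\tr(x)$. Since $\Phi_\tau$ is continuously differentiable (Definition~\ref{Ndefsymm2}), the map $(\tau,x)\mapsto g(\Phi_\tau(x))=(\ell\circ\Phi)(\tau,x)-\nu\,(\tr\circ\Phi)(\tau,x)$ has a continuous partial $\tau$-derivative, which is therefore bounded on the compact set $[-\delta,\delta]\times\Omega$; hence the difference quotients are dominated and one may differentiate under the integral sign. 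Finally, the Euler--Lagrange equations~\eqref{NELgen} state that for each $x\in M$ the function $g$ attains a local minimum at $x$, i.e.\ $g(z)\ge g(x)$ for all $z$ in a neighbourhood $U(x)\subset\F$. Because $\Phi_0(x)=x$ and $\tau\mapsto\Phi_\tau(x)$ is continuous, $\Phi_\tau(x)\in U(x)$ for all sufficiently small $|\tau|$, so $\tau=0$ is an interior minimum of the differentiable function $\tau\mapsto g(\Phi_\tau(x))$ and its derivative there vanishes. Thus the integrand $\frac{d}{d\tau}g(\Phi_\tau(x))\big|_{\tau=0}$ is identically $0$ on $\Omega$, the derivative of the right-hand side of~\eqref{PNmain} vanishes, and with it the surface layer integral~\eqref{Nconserve5}.

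I expect the main obstacle to be the bookkeeping of integrability: since $M$ is in general non-compact, one must make sure that each iterated integral occurring in Proposition~\ref{Nprpuseful2} and in the rearrangement leading to~\eqref{PNmain} is well-defined and finite and that Tonelli's theorem may be applied — this is exactly where the compactness of $\Omega$, the local finiteness of $\rho$, the non-negativity of $\L_\kappa$, and the hypothesis~\eqref{NLagint} must be combined carefully (the case distinction between finite and infinite integrals is already carried out inside Proposition~\ref{Nprpuseful2}). Once finiteness is under control, the interchange of $\frac{d}{d\tau}$ with $\int_\Omega$ and the appeal to the Euler--Lagrange equations are routine, and the argument is essentially the same as in the compact case.
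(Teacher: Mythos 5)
Your proposal is correct and follows essentially the same route as the paper: the theorem is obtained by combining Proposition~\ref{Nprpuseful2} with the two defining conditions~\eqref{NLagint} and~\eqref{Ntraceint} of a generalized integrated symmetry and the EL equations~\eqref{NELgen}, differentiating under the integral exactly as in the compact-case proof of Theorem~\ref{Nthmsymmlag}. Your extra remark that \eqref{NLagint} rules out the degenerate ``$+\infty=+\infty$'' case of Proposition~\ref{Nprpuseful2} is a sound and welcome piece of bookkeeping.
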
 \noindent
In order to explain the necessity of the condition~\eqref{Ntraceint},
we point out that, although the functions~$\ell(x)$ and~$\tr(x)$
are both constant on~$M$ (see~\eqref{Nseprel}), this does not imply that transversal derivatives of
these functions vanish. Only for their specific linear combination in~\eqref{Nseprel} the
derivative vanishes on~$M$.
We also note that the condition for the trace~\eqref{Ntraceint}, which did not appear in the compact setting,
can always be satisfied by rescaling the variation according to
\[ \Phi_\tau(x) \rightarrow \Phi_\tau(x)\:\frac{\tr(x)}{\tr\big( \Phi_\tau(x) \big)} \]
(note that by continuity, the trace in the denominator is non-zero for sufficiently small~$\tau$).
However, when doing so, the remaining condition~\eqref{NLagint} as well as the regularity conditions
of Definition~\ref{Ndefsymm2} might become more involved.
This is the reason why we prefer to write two separate conditions~\eqref{NLagint} and~\eqref{Ntraceint}.

The above results give rise to corollaries
which extend Theorems~\ref{Nthmsymmlag} and~\ref{Nthmsymmum} to the setting of
causal fermion systems.
\begin{Def} \label{Ndefsymmnc} A variation~$\Phi_\tau$ of the form~\eqref{NPhidef2}
is a {\textbf{symmetry of the Lagrangian}} if
\beq \label{NsymmLagnc}
\L_\kappa \big( x, \Phi_\tau(y) \big) = \L_\kappa \big( \Phi_{-\tau}(x), y \big) \qquad
\text{for all~$\tau \in (-\tau_{\max}, \tau_{\max})$ and all~$x,y \in M$}\:.
\eeq
It is a {\textbf{symmetry of the universal measure}} if
\[ (\Phi_\tau)_* \rho = \rho \qquad
\text{for all~$\tau \in (-\tau_{\max}, \tau_{\max})$}\:. \]
Moreover, it {\textbf{preserves the trace}} if
\[ \tr \big( \Phi_\tau(x)\big) = \tr(x) \qquad \text{for all~$\tau \in (-\tau_{\max}, \tau_{\max})$
and all~$x \in M$}\:. \]
\end{Def}

\begin{Corollary} \label{Ncorsymmlag}
Let~$\rho$ be a local minimizer of the causal action (see Definition~\ref{Ndeflocmin})
and $\Phi_\tau$ a continuously differentiable variation. Assume that~$\Phi_\tau$ is a
symmetry of the Lagrangian and preserves the trace.
Then for any compact subset~$\Omega \subset M$, the conservation law~\eqref{Nconserve5} holds.
\end{Corollary}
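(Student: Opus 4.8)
The natural approach is to run the argument of Theorem~\ref{Nthmsymmlag} directly in the setting of causal fermion systems: feed the symmetry of the Lagrangian into Proposition~\ref{Nprpuseful2} and then invoke the Euler--Lagrange equations~\eqref{NELgen}. The one new feature is the trace term in~\eqref{NELgen}, and the extra hypothesis ``preserves the trace'' is precisely what is designed to neutralize it. Note that one cannot simply quote Theorem~\ref{Nthmsymmgis2}, since a symmetry of the Lagrangian need not satisfy~\eqref{NLagint} identically in $\tau$ -- only the first variation of that double integral will vanish. Since $\L_\kappa$ is symmetric in its two arguments, relation~\eqref{NsymmLagnc}, read once with $x$ and $y$ interchanged, gives
\[
\L_\kappa\big( \Phi_\tau(x), y \big) \;=\; \L_\kappa\big( x, \Phi_{-\tau}(y) \big) \qquad \text{for all~$x,y \in M$ and~$|\tau| < \tau_{\max}$}\:.
\]
Substituting this into the double integral~\eqref{Nid1n} and carrying out the integration over $x \in M$ by means of the definition~\eqref{Nelldef} of $\ell$ -- which is legitimate by Tonelli's theorem together with the finiteness statements established in the proof of Proposition~\ref{Nprpuseful2} (here all the relevant integrals are in fact finite, because the symmetry turns $\int_M d\rho(x) \int_\Omega d\rho(y)\, \L_\kappa(\Phi_\tau(x),y)$ into $\int_\Omega \ell(\Phi_{-\tau}(y))\, d\rho(y)$, with $\ell \circ \Phi$ bounded on the compact set $\Omega$) -- one obtains
\[
\int_M d\rho(x) \int_\Omega d\rho(y)\, \Big( \L_\kappa\big( \Phi_\tau(x), y\big) - \L_\kappa(x,y) \Big) \;=\; \int_\Omega \Big( \ell\big( \Phi_{-\tau}(y)\big) - \ell(y) \Big)\, d\rho(y)\:.
\]

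Inserting this identity, together with~\eqref{Nid2n}, into the conclusion of Proposition~\ref{Nprpuseful2} makes the surface layer integral~\eqref{Nid3n} collapse to a difference of two integrals of $\ell$ over the compact region $\Omega$:
\[
\int_\Omega d\rho(x) \int_{M \setminus \Omega} d\rho(y)\, \Big( \L_\kappa\big( \Phi_\tau(x), y\big) - \L_\kappa\big( x, \Phi_\tau(y)\big) \Big) \;=\; \int_\Omega \Big( \ell\big( \Phi_\tau(x)\big) - \ell\big( \Phi_{-\tau}(x)\big) \Big)\, d\rho(x)\:.
\]
Because $\Phi_\tau$ is continuously differentiable in the sense of Definition~\ref{Ndefsymm2}, the partial derivative $\partial_\tau(\ell \circ \Phi)$ is continuous and hence bounded on a slab $[-\delta,\delta] \times \Omega$, and since $\rho(\Omega) < \infty$ by local finiteness of $\rho$, dominated convergence allows us to differentiate under the integral sign at $\tau = 0$; thus the $\tau$-derivative of the surface layer integral at $\tau = 0$ equals $2 \int_\Omega \partial_\tau(\ell \circ \Phi)(0,x)\, d\rho(x)$.

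It remains to show that $\partial_\tau(\ell \circ \Phi)(0,x) = 0$ for every $x \in M$, and this is where the trace hypothesis enters. Fix $x \in M$. By the local-minimality part of the Euler--Lagrange equations~\eqref{NELgen}, the function $y \mapsto \ell(y) - \nu\,\tr(y)$ is minimal on a neighbourhood of $x$, so the differentiable map $\tau \mapsto \ell(\Phi_\tau(x)) - \nu\,\tr(\Phi_\tau(x))$ has a critical point at $\tau = 0$ -- the analogue of~\eqref{Nlfirst} in the present setting -- giving $\partial_\tau(\ell \circ \Phi)(0,x) = \nu\, \partial_\tau(\tr \circ \Phi)(0,x)$. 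Since $\Phi_\tau$ preserves the trace, $\tr(\Phi_\tau(x)) = \tr(x)$ is constant in $\tau$, hence $\partial_\tau(\tr \circ \Phi)(0,x) = 0$, and therefore $\partial_\tau(\ell \circ \Phi)(0,x) = 0$. Consequently the $\tau$-derivative at $\tau = 0$ of the surface layer integral vanishes, which is exactly~\eqref{Nconserve5}. The main work -- and the only genuinely delicate point -- is the interplay of the interchanges of integration and of differentiation with the possibly divergent double integrals over the non-compact space-time $M$; but this has already been isolated and dealt with in Proposition~\ref{Nprpuseful2}, so the present argument only has to verify that its hypotheses (continuity of the variation, local finiteness of $\rho$) are in force, which they are by assumption.
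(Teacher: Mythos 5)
Your proof is correct and follows essentially the same route as the paper, whose own (very terse) argument is precisely the combination you use: rewrite the double integral \eqref{Nid1n} via the Lagrangian symmetry as in \eqref{Ncalc2}, insert it into Proposition~\ref{Nprpuseful2}, and kill the first variation of the resulting $\ell$-integrals by the EL equations \eqref{NELgen} together with trace preservation. The extra details you supply (finiteness via Tonelli, differentiation under the integral over the compact $\Omega$, and the explicit critical-point argument replacing \eqref{Nlfirst}) are exactly what the paper leaves implicit in its one-line proof.
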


\begin{Corollary} \label{Ncorsymmum}
Let~$\rho$ be a local minimizer of the causal action (see Definition~\ref{Ndeflocmin})
and $\Phi_\tau$ a continuously differentiable variation. Assume that~$\Phi_\tau$ 
is a symmetry of the universal measure and preserves the trace.
Then for any compact subset~$\Omega \subset M$, the conservation law~\eqref{Nconserve5} holds.
\end{Corollary}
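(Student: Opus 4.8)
The plan is to obtain Corollary~\ref{Ncorsymmum} as an immediate consequence of Theorem~\ref{Nthmsymmgis2}. Concretely, I would show that a continuously differentiable variation~$\Phi_\tau$ which is a symmetry of the universal measure and preserves the trace is, in particular, a continuously differentiable generalized integrated symmetry in every compact region~$\Omega \subset M$ in the sense of Definition~\ref{Ndefgis2}; the conservation law~\eqref{Nconserve5} then follows directly from that theorem. Thus the whole task reduces to verifying the two identities~\eqref{NLagint} and~\eqref{Ntraceint}.

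The trace identity~\eqref{Ntraceint} is trivial: since $\Phi_\tau$ preserves the trace, $\tr(\Phi_\tau(x)) = \tr(x)$ for all $x \in M$, so the integrand vanishes pointwise and the integral is zero for every compact $\Omega$. For the Lagrangian identity~\eqref{NLagint} I would run the argument already used in the compact case in~\eqref{Ncalc1}, now with $\L$ replaced by $\L_\kappa$. Fix a compact $\Omega \subset M$ and put $g(x) := \int_\Omega \L_\kappa(x,y)\, d\rho(y)$, a non-negative measurable function on $\F$. Using that $\rho$ is supported on $M$, the change-of-variables formula for push-forward measures (valid for non-negative measurable integrands), and the hypothesis $(\Phi_\tau)_* \rho = \rho$, one gets
\[
\int_M g\big( \Phi_\tau(x) \big)\, d\rho(x) = \int_\F g\, d\big( (\Phi_\tau)_* \rho \big) = \int_\F g\, d\rho = \int_M g\, d\rho \:,
\]
and by Tonelli's theorem this is exactly the statement that the double integral in~\eqref{NLagint} vanishes. (If $g$ fails to be $\rho$-integrable, both sides equal $+\infty$ and the difference is interpreted as $0$, in line with the bookkeeping of Proposition~\ref{Nprpuseful2}.)

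With~\eqref{NLagint} and~\eqref{Ntraceint} in hand, $\Phi_\tau$ meets Definition~\ref{Ndefgis2} for every compact $\Omega \subset M$, and since it is continuously differentiable by assumption, Theorem~\ref{Nthmsymmgis2} applies and gives~\eqref{Nconserve5}. I do not expect a serious obstacle here; the only delicate point is the measure-theoretic bookkeeping in the possibly infinite-volume setting — justifying the change of variables under the push-forward and the interchange of the order of integration for the non-negative integrand $\L_\kappa$ — and this is precisely what Proposition~\ref{Nprpuseful2} already establishes, so I would invoke it rather than repeat the estimates. (For completeness one could also note that this argument specializes, upon dropping the trace condition, to the analogue of the derivation of Theorem~\ref{Nthmsymmum} from Theorem~\ref{Nthmsymmgis} in the compact setting, the trace condition being the only genuinely new ingredient required by the causal fermion system framework.)
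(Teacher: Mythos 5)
Your proposal is correct and takes essentially the same route as the paper: the paper also obtains this corollary by checking Definition~\ref{Ndefgis2} — the trace identity~\eqref{Ntraceint} from trace preservation and the Lagrangian identity~\eqref{NLagint} from the push-forward computation of~\eqref{Ncalc1} with~$\L_\kappa$ in place of~$\L$ — and then invoking Theorem~\ref{Nthmsymmgis2}. One small remark: your parenthetical fallback for the non-integrable case is unnecessary (and treating $\infty-\infty$ as $0$ would not be legitimate anyway), since for compact~$\Omega$ one has $\int_M d\rho(x)\int_\Omega d\rho(y)\,\L_\kappa(x,y)=\int_\Omega \ell\,d\rho<\infty$, so your push-forward identity already shows both integrals are finite and the difference is genuinely zero.
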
 \noindent
These corollaries follow immediately by calculations similar to~\eqref{Ncalc1} and~\eqref{Ncalc2}.

\section{Example: Current Conservation} \label{Nsecexcurrent}
This section is devoted to the important example of current conservation,
also referred to as charge conservation.
For Dirac particles, the electric charge is (up to a multiplicative constant)
given as the integral over the probability density. Therefore, charge conservation
also corresponds to the conservation of the probability integral in quantum mechanics.
In the context of the classical Noether theorem, charge conservation is a consequence
of an internal symmetry of the system, which can be described by a phase transformation~\eqref{Nglobalphase}
of the wave function and is often referred to as global gauge symmetry.
As we shall see in Section~\ref{Nsecgenchargecons}, causal fermion systems also have such
an internal symmetry, giving rise to a general class of conservation laws (see Theorem~\ref{Nthmcurrent}).
In Section~\ref{Nseccurcor}, these conservation laws are evaluated for Dirac spinors
in Minkowski space, giving a correspondence to the conservation of the Dirac current
(see Theorem~\ref{Nthmcurrentmink} and Corollary~\ref{Ncorcurrent}).
In Section~\ref{Nsecremark}, we conclude with a few clarifying remarks.

\subsection{A General Conservation Law} \label{Nsecgenchargecons}
Let~$\scrA$ be a bounded symmetric operator on~$\H$ and
\beq \label{NUtaudef}
\scrU_\tau := \exp(i \tau \scrA)
\eeq
be the corresponding one-parameter family of unitary transformations.
We introduce the mapping\Chd{$\Phi_\tau \rightarrow \Phi$}%
\beq \label{Nvarunit}
\Phi :\, \R \times \F \rightarrow \F\:,\qquad \Phi(\tau, x) = \scrU_\tau \,x\, \scrU_\tau^{-1} \:.
\eeq
Restricting this mapping to~$(-\tau_{\max}, \tau_{\max}) \times M$, we obtain
a variation~$(\Phi_\tau)_{\tau \in (-\tau_{\max}, \tau_{\max})}$ of the form~\eqref{NPhidef2}.

\begin{Lemma} \label{NPhiSymmLag} The variation~$\Phi_\tau$ given by~\eqref{Nvarunit} is a symmetry of the
Lagrangian and preserves the trace (see Definition~\ref{Ndefsymmnc}).
\end{Lemma}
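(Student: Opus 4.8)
The plan is to check the two assertions of Definition~\ref{Ndefsymmnc} directly, using nothing beyond the invariance of the spectrum of an operator under unitary conjugation. I would begin by recording the basic facts about $\scrU_\tau = \exp(i\tau\scrA)$: since $\scrA$ is bounded and symmetric, $\scrU_\tau$ is unitary with $\scrU_\tau^{-1} = \scrU_\tau^* = \scrU_{-\tau}$, so conjugation $A \mapsto \scrU_\tau A \scrU_\tau^{-1}$ preserves self-adjointness, rank, and the full list of eigenvalues with algebraic multiplicities. In particular $\Phi_\tau(x) = \scrU_\tau\, x\, \scrU_\tau^{-1}$ again lies in $\F$, so that $\Phi$ is a variation of the form~\eqref{NPhidef2}, and the preservation of the trace is immediate from cyclicity: $\tr(\Phi_\tau(x)) = \tr(\scrU_\tau\, x\, \scrU_\tau^{-1}) = \tr(x)$ for all $x \in \F$.

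For the symmetry of the Lagrangian, I would reduce~\eqref{NsymmLagnc} to the claim that, for every fixed $\tau$ and all $x, y \in M$, the two operators $\Phi_{-\tau}(x)\, y$ and $x\, \Phi_\tau(y)$ have the same non-trivial eigenvalues $\lambda^{xy}_1, \dots, \lambda^{xy}_{2n}$ counted with algebraic multiplicity. Indeed, the Lagrangian~\eqref{NLagrange} and hence $\L_\kappa$ in~\eqref{NLagrangeKappa} depend on their two arguments only through these eigenvalues (via the spectral weight $|\,\cdot\,|$ applied to the relevant operator products), so equality of the eigenvalue lists forces $\L_\kappa(\Phi_{-\tau}(x), y) = \L_\kappa(x, \Phi_\tau(y))$, which is exactly~\eqref{NsymmLagnc}. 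To establish the eigenvalue claim I would substitute $\Phi_{-\tau}(x) = \scrU_\tau^{-1}\, x\, \scrU_\tau$ and conjugate the product by $\scrU_\tau$,
\[
\scrU_\tau \,\big( \scrU_\tau^{-1}\, x\, \scrU_\tau\, y \big)\, \scrU_\tau^{-1}
= x\, \scrU_\tau\, y\, \scrU_\tau^{-1} = x\, \Phi_\tau(y) \,,
\]
so that $\Phi_{-\tau}(x)\, y$ and $x\, \Phi_\tau(y)$ are similar and therefore have identical spectra, including multiplicities.

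I do not expect a genuine obstacle here; the only points worth a careful sentence are that the comparison of the \emph{non-trivial} eigenvalues as multisets is legitimate because similar operators share the same characteristic polynomial (which also covers the $|xy|^2$ term in $\L_\kappa$), and that one should note at the outset that $\Phi$ takes values in $\F$, since the ambient framework and Definition~\ref{Ndefsymmnc} presuppose a variation of the form~\eqref{NPhidef2}. With these remarks in place, both parts of the lemma follow in a few lines.
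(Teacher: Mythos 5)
Your proposal is correct and follows essentially the same route as the paper: trace preservation by unitary invariance of the trace, and the symmetry~\eqref{NsymmLagnc} by conjugating the product so that $x\,\Phi_\tau(y)$ and $\Phi_{-\tau}(x)\,y$ are seen to be unitarily equivalent, hence isospectral, which suffices since $\L_\kappa$ depends only on the eigenvalues of the operator product. The extra remarks on membership in $\F$ and counting multiplicities are harmless refinements of the same argument.
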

\Proof Since~$\Phi_\tau(x)$ is unitarily equivalent to~$x$, they obviously have the same trace.
In order to prove~\eqref{NsymmLagnc}, we first recall that the Lagrangian~$\L_\kappa(x,y)$
is defined in terms of the spectrum of the operator product~$xy$ (see~\eqref{NLagrange}).
The calculation
\[ x \: \Phi_\tau(y) = x \; \scrU_\tau \,y\, \scrU_\tau^{-1}
= \scrU \,\big( \scrU_\tau^{-1} \,x \,\scrU_\tau \; y \big) \,\scrU_\tau^{-1}
= \scrU \,\big( \Phi_{-\tau}(x) \: y \big) \, \scrU_\tau^{-1} \]
shows that the operators~$x \,\Phi_\tau(y)$ and~$\Phi_{-\tau}(x) \,y$ are unitarily equivalent
and therefore isospectral. This concludes the proof.
\QED

It remains to verify whether the variation~$\Phi_\tau$ is continuously differentiable
in the sense of Definition~\ref{Ndefsymm2}. For the trace, this is obvious because~$\Phi_\tau$
leaves the trace invariant, so that~$\tr \,\circ\, \Phi_\tau(\tau, x) = \tr(x)$, which clearly depends
continuously on~$x$ (in the topology induced by the $\sup$-norm~\eqref{Nsupnorm}).
For~$\ell \circ \phi$, we cannot in general expect differentiability because the Lagrangian~$\L_\kappa$
is only Lipschitz continuous in general. Therefore, we must include the differentiability of~$\ell \circ \phi$
as an assumption in the follo\-wing theo\-rem.

\begin{Thm} \label{Nthmcurrent}
Given a bounded symmetric operator~$\scrA$ on~$\H$, we let~$\Phi_\tau$ be the
variation~\eqref{Nvarunit}. Assume that the mapping~$\ell \circ \Phi \::\:
(-\tau_{\max}, \tau_{\max}) \times M \rightarrow \R$
is continuously differentiable in the sense that it is continuous and that~$\partial_\tau (\ell \circ \Phi)$
exists and is also continuous on~$(-\tau_{\max}, \tau_{\max}) \times M$.
Then for any compact subset~$\Omega \subset M$, the conservation law~\eqref{Nconserve5} holds.
\end{Thm}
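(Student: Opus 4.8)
The plan is to obtain the theorem directly from the general machinery already in place, namely from Corollary~\ref{Ncorsymmlag} together with Lemma~\ref{NPhiSymmLag}. First I would check that $\Phi$ of~\eqref{Nvarunit}, restricted to $(-\tau_{\max},\tau_{\max}) \times M$, is genuinely a variation of the form~\eqref{NPhidef2}: conjugation by the unitary $\scrU_\tau = \exp(i \tau \scrA)$ preserves self-adjointness, rank, and the numbers of positive and negative eigenvalues, so $\Phi_\tau(x) \in \F$ for every $x \in M$; and $\Phi_0 = \1$ since $\scrU_0 = \id$. Recall also that $\rho$ is the given local minimizer of the causal action, so Corollary~\ref{Ncorsymmlag} is applicable as soon as its remaining hypotheses are verified.

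Next I would invoke Lemma~\ref{NPhiSymmLag}, which already establishes that this particular $\Phi_\tau$ is a symmetry of the Lagrangian and preserves the trace in the sense of Definition~\ref{Ndefsymmnc}. The only hypothesis of Corollary~\ref{Ncorsymmlag} not yet accounted for is that $\Phi_\tau$ be a \emph{continuously differentiable} variation in the sense of Definition~\ref{Ndefsymm2}, i.e.\ that both $\ell \circ \Phi$ and $\tr \circ \Phi$ be continuous on $(-\tau_{\max},\tau_{\max}) \times M$ with continuous partial $\tau$-derivatives. For $\tr \circ \Phi$ this is automatic: trace preservation gives $\tr\big(\Phi_\tau(x)\big) = \tr(x)$ for all $\tau$ and all $x \in M$, and the trace is continuous on $\F$ in the topology induced by the $\sup$-norm (cf.\ the continuity statements in Section~\ref{Nseccfsbasic}); hence $\tr \circ \Phi$ is jointly continuous and its partial $\tau$-derivative vanishes identically, so it too is continuous. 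For $\ell \circ \Phi$ the required continuity and differentiability is precisely the hypothesis imposed in the statement of the theorem. Thus $\Phi_\tau$ meets all assumptions of Corollary~\ref{Ncorsymmlag}, and the conservation law~\eqref{Nconserve5} follows for every compact $\Omega \subset M$.

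There is essentially no obstacle beyond bookkeeping, since the theorem is little more than a specialization of Corollary~\ref{Ncorsymmlag} to the concrete family~\eqref{Nvarunit}. If one wishes to single out the one non-formal point, it is the necessity of \emph{assuming} the differentiability of $\ell \circ \Phi$ rather than proving it: because $\L_\kappa$ is only Lipschitz continuous in general, the map $\tau \mapsto \ell\big(\Phi_\tau(x)\big)$ need not be differentiable, and this regularity cannot be removed. Everything else is quoted from the general setting: the isospectrality computation $x\,\Phi_\tau(y) = \scrU_\tau\big(\Phi_{-\tau}(x)\,y\big)\scrU_\tau^{-1}$ underlying Lemma~\ref{NPhiSymmLag}, the use of the Euler--Lagrange equations~\eqref{NELgen} to conclude that $\partial_\tau\big(\ell(\Phi_\tau(x)) - \nu\,\tr(\Phi_\tau(x))\big)\big|_{\tau=0}$ vanishes together with trace preservation forcing $\partial_\tau\,\ell(\Phi_\tau(x))|_{\tau=0}=0$, and Proposition~\ref{Nprpuseful2}, which converts the generalized integrated symmetry identity into the vanishing surface layer integral.
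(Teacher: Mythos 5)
Your proof is correct and follows exactly the route the paper intends: Lemma~\ref{NPhiSymmLag} gives the symmetry of the Lagrangian and trace preservation, the constancy of $\tr\circ\Phi$ in $\tau$ together with the assumed regularity of $\ell\circ\Phi$ yields continuous differentiability in the sense of Definition~\ref{Ndefsymm2}, and Corollary~\ref{Ncorsymmlag} then delivers the conservation law~\eqref{Nconserve5}. This matches the paper, which leaves the proof implicit for precisely these reasons, including the observation that the differentiability of $\ell\circ\Phi$ must be assumed rather than proved.
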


\subsection{Correspondence to Dirac Current Conservation} \label{Nseccurcor}
The aim of this section is to relate the conservation law of Theorem~\ref{Nthmcurrent} to
the usual current conservation in relativistic quantum mechanics in Minkowski space.

To this end, we consider causal fermion systems~$(\F, \H, \rho^\varepsilon)$
describing the regularized Dirac sea vacuum in Minkowski space~$(\scrM, \la .,. \ra)$.
We briefly recall the construction (for the necessary preliminaries see~\cite[Section~2]{cfsrev}, \cite{cfs},
\cite[Section~4]{lqg} or the introduction in Chapter~\ref{DissIntroCFS}.
As in~\cite[Chapter~3]{cfs} we consider three generations of Dirac particles
of masses~$m_1$, $m_2$ and~$m_3$ (corresponding to the three generations of elementary particles
in the standard model; three generations are necessary in order to obtain well-posed equations
in the continuum limit). Denoting the generations by an index~$\beta$, we consider the
Dirac equations
\beq \label{NDireqns}
(i \Pdd - m_\beta) \,\psi_\beta = 0 \qquad (\beta=1,2,3)\:.
\eeq
On solutions~$\psi = (\psi_\beta)_{\beta=1,2,3}$, we consider the scalar product
\[ ( \psi | \phi) := 2 \pi \sum_{\beta=1}^3 \int_{\R^3} (\overline{\psi}_\beta \gamma^0 \phi_\beta)(t, \vec{x})\: d^3x \:. \]
The Dirac equation has solutions on the upper and lower mass shell, which have positive
respectively negative energy.
In order to avoid potential confusion with other notions of energy, we here prefer the notion of
solutions of positive and negative {\em{frequency}}.
We choose~$\H$ as the subspace spanned by all solutions of negative frequency,
together with the scalar product~$\la .|. \ra_\H := ( .|.)|_{\H \times \H}$.
We now introduce an ultraviolet regularization
(for details see~\cite[Section~2]{cfsrev}) and denote the regularized quantities by
a superscript~$\varepsilon$. Now the local correlation operators are defined by
\[ \la \psi^\varepsilon \,|\, F^\varepsilon(x)\, \phi^\varepsilon \ra_\H
= - \sum_{\alpha, \beta=1}^3 \overline{\psi_\alpha^\varepsilon(x)} \phi_\beta^\varepsilon(x)
\qquad \text{for all~$\psi,\phi \in \H$}\:. \]
Next, the universal measure is defined as the push-forward of the Lebesgue measure~$d\mu = d^4x$,
\[ \rho^\varepsilon := (F_\varepsilon)_*(\mu) \:. \]
Then~$(\H, \F, \rho^\varepsilon)$ is a causal fermion system of spin dimension two.
As shown in~\cite[Chapter~1]{cfs}, the kernel of the fermionic projector~$P(x,y)$
converges as~$\varepsilon \searrow 0$ to the distribution
\beq \label{NPvac}
P(x,y) = \sum_{\beta=1}^3 \int \frac{d^4k}{(2 \pi)^4}\: (\slashed{k}+m_\beta)\:
\delta \big(k^2-m_\beta^2 \big)\: e^{-ik(x-y)}
\eeq
(this configuration is also referred to as three generations in a single sector;
see~\cite[Chapter~3]{cfs}). We remark that our ansatz can be generalized by
introducing so-called weight factors (see~\cite{reg} and Remark~\ref{Nremweights} below).

We want to apply Theorem~\ref{Nthmcurrent}.
Since in this theorem, the set~$\Omega$ must be compact,
we choose it as a lens-shaped region whose boundary is composed of two
space-like hypersurfaces (see the left of Figure~\ref{Nfignoether2}).
\begin{figure}\centering
\psscalebox{1.0 1.0} %
{
\begin{pspicture}(0,-0.9569027)(10.911616,0.9569027)
\definecolor{colour0}{rgb}{0.8,0.8,0.8}
\psframe[linecolor=colour0, linewidth=0.02, fillstyle=solid,fillcolor=colour0, dimen=outer](9.869394,0.56531954)(5.789394,-0.8213471)
\pspolygon[linecolor=colour0, linewidth=0.02, fillstyle=solid,fillcolor=colour0](0.033838496,-0.34579158)(0.19828294,-0.23023602)(0.35828295,-0.15023603)(0.5760607,-0.03912491)(0.94939405,0.10309731)(1.389394,0.21865287)(1.8160607,0.27198622)(2.1227274,0.28531954)(2.4338386,0.26754177)(2.7982829,0.17865287)(3.1227274,0.067541756)(3.4427273,-0.083569355)(3.6827273,-0.21690269)(3.598283,-0.33690268)(3.3938384,-0.46134713)(3.1316164,-0.5991249)(2.8560607,-0.7191249)(2.478283,-0.7946805)(2.1627274,-0.84356934)(1.8160607,-0.85690266)(1.5182829,-0.8480138)(1.0827274,-0.78134716)(0.62939405,-0.63912493)(0.2827274,-0.4791249)
\rput[bl](1.4738384,-0.41468045){$\Omega$}
\rput[bl](7.458283,-0.27690268){\normalsize{$\Omega$}}
\psline[linecolor=black, linewidth=0.04, arrowsize=0.09300000000000001cm 1.0,arrowlength=1.7,arrowinset=0.3]{->}(1.3138385,0.18531954)(1.2116163,0.6919862)
\rput[bl](1.4160607,0.45643064){$\nu$}
\rput[bl](10.027172,0.41865286){\normalsize{$t=t_1$}}
\rput[bl](10.031616,-0.9569027){\normalsize{$t=t_0$}}
\psbezier[linecolor=black, linewidth=0.04](0.011616275,-0.3569027)(0.704925,0.13707085)(1.4876974,0.28290415)(2.0582829,0.28754175)(2.6288686,0.29217938)(3.0587788,0.14209865)(3.7116163,-0.23690268)
\psbezier[linecolor=black, linewidth=0.04](0.014949608,-0.34134713)(0.5271472,-0.62181807)(0.9588085,-0.8170959)(1.6016163,-0.86023605)(2.244424,-0.90337616)(3.12989,-0.7812347)(3.7149496,-0.22134712)
\psline[linecolor=black, linewidth=0.04, arrowsize=0.09300000000000001cm 1.0,arrowlength=1.7,arrowinset=0.3]{->}(2.4071717,-0.81468046)(2.3049495,-0.3080138)
\psline[linecolor=black, linewidth=0.04](5.789394,0.57420844)(9.864949,0.57420844)
\psline[linecolor=black, linewidth=0.04](5.789394,-0.8213471)(9.869394,-0.8257916)
\psline[linecolor=black, linewidth=0.04, arrowsize=0.09300000000000001cm 1.0,arrowlength=1.7,arrowinset=0.3]{->}(6.669394,0.5764306)(6.6738386,1.069764)
\psline[linecolor=black, linewidth=0.04, arrowsize=0.09300000000000001cm 1.0,arrowlength=1.7,arrowinset=0.3]{->}(8.660505,-0.82356936)(8.664949,-0.33023602)
\rput[bl](2.5138385,-0.60579157){$\nu$}
\rput[bl](8.864949,-0.6413471){$\nu$}
\rput[bl](6.8649497,0.7630973){$\nu$}
\end{pspicture}
}
\caption{Choice of the space-time region~$\Omega \subset \scrM$.}
\label{Nfignoether2}
\end{figure}
Considering a sequence
of compact sets~$\Omega_n$ which exhaust the region~$\Omega$
between two Cauchy surfaces at times~$t=t_0$
and~$t=t_1$, the surface layer integral~\eqref{Nconserve5} reduces to the difference
of surface layers integrals at times~$t \approx t_0$ and~$t \approx t_1$.
The detailed analysis (which will be carried out below) gives the following result:
\begin{Thm} {\Thmt{(Current conservation)}} \label{Nthmcurrentmink}
Let~$(\H, \F, \rho^\varepsilon)$ be local minimizers of the causal action which
describe the Minkowski vacuum~\eqref{NPvac}.
Considering the limiting procedure explained in Figure~\ref{Nfignoether2} and taking the
continuum limit, the conservation laws of Theorem~\ref{Nthmcurrent} go over to
a linear combination of the probability integrals in every generation.
More precisely, there are non-negative constants~$c_\beta$ such that
for all~$u \in \H$ for which~$\psi^u$ is a negative-frequency solution of the Dirac equation,
the surface layer integral~\eqref{Nconserve5} goes over the equation
\beq \label{Nthmcurr1}
\sum_{\beta=1}^3 m_\beta\, c_\beta \int_{t=t_0} \Sl \psi_\beta^u(x) | \gamma^0 \psi_\beta^u(x) \Sr\:d^3x
= \sum_{\beta=1}^3 m_\beta\, c_\beta \int_{t=t_1} \Sl \psi_\beta^u(x) | \gamma^0 \psi_\beta^u(x) \Sr\:d^3x \:.
\eeq
\end{Thm}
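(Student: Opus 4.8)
The plan is to evaluate the conservation law~\eqref{Nconserve5} --- which holds by Lemma~\ref{NPhiSymmLag} and Theorem~\ref{Nthmcurrent}, once the differentiability of $\ell\circ\Phi$ is granted --- for the regularized Dirac sea vacuum and to identify its continuum limit. First I would use that $\Phi_\tau$ is a symmetry of the Lagrangian: for $x,y\in M$ one has $\L_\kappa(x,\Phi_\tau(y))=\L_\kappa(\Phi_{-\tau}(x),y)$, so the integrand in~\eqref{Nconserve5} equals $\L_\kappa(\Phi_\tau(x),y)-\L_\kappa(\Phi_{-\tau}(x),y)$ and its $\tau$-derivative at $0$ is $2\,\partial_\tau\L_\kappa(\Phi_\tau(x),y)\big|_{\tau=0}$. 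Choosing $\Omega$ as the lens-shaped region of Figure~\ref{Nfignoether2} and exhausting it by compact sets $\Omega_n$, the double integral over $\Omega_n\times(M\setminus\Omega_n)$ involves only pairs $(x,y)$ of Minkowski distance at most the effective range of the Lagrangian, which in the continuum limit is of the order of the Compton scale (cf.\ Section~\ref{Nseccurcor}). Hence, taking first $\varepsilon\searrow0$ and then $n\to\infty$, the surface layer integral localizes to thin neighborhoods of the two bounding Cauchy surfaces $\{t=t_0\}$ and $\{t=t_1\}$ and becomes the difference of two integrals there.

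Next I would compute the integrand $\partial_\tau\L_\kappa(\Phi_\tau(x),y)\big|_{\tau=0}$. The generator of the flow is $\partial_\tau\Phi_\tau(x)\big|_{\tau=0}=i[\scrA,x]$, and passing through the closed chain --- whose eigenvalues coincide with the nontrivial eigenvalues of $\Phi_\tau(x)\,y$ (Section~\ref{Csecinherent}) --- this derivative can be expressed in terms of the corresponding perturbation of the kernel of the fermionic projector $P(x,y)$ induced by moving the points with $\Phi_\tau$ and by the rotation $\scrU_\tau$ of the occupied states. Inserting the explicit vacuum kernel~\eqref{NPvac}, integrating over $y$ in the Compton-scale neighborhood of $x$, and performing the light-cone expansion of the continuum limit as in~\cite[Chapters~3--4]{cfs}, one obtains that the resulting density is, pointwise in the corresponding Minkowski point, the divergence $\partial_k \mathcal{J}^k$ of a Dirac-type current which --- generation by generation --- is proportional to $m_\beta\big(\overline{\psi_\beta}\,\gamma^k(\scrA\psi)_\beta+\text{c.c.}\big)$, the factors $c_\beta\ge0$ being fixed by the regularization.

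Then I would apply the Gau{\ss} divergence theorem to $\partial_k\mathcal J^k$ over $\Omega$, turning the bulk integral into the difference of the fluxes of $\mathcal J^k$ through $\{t=t_0\}$ and $\{t=t_1\}$; combined with the localization above and the vanishing of~\eqref{Nconserve5}, this equates the two fluxes. Evaluating $\mathcal J^0$ on a Cauchy surface and extracting the contribution of a fixed $u\in\H$ for which $\psi^u$ is a negative-frequency solution yields precisely $\sum_{\beta=1}^{3} m_\beta c_\beta\,\Sl\psi_\beta^u(x)|\gamma^0\psi_\beta^u(x)\Sr$, so that~\eqref{Nconserve5} becomes~\eqref{Nthmcurr1}; since this holds for every admissible $\scrA$, the whole family of conservation laws of Theorem~\ref{Nthmcurrent} reduces to this one statement. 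Along the way one must verify the hypothesis of Theorem~\ref{Nthmcurrent} that $\ell\circ\Phi$ is continuously differentiable, which is unproblematic because in the continuum limit $\ell$ is, to the relevant order, a smooth function and $\tau\mapsto\scrU_\tau$ is smooth.

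The main obstacle is the computation in the second step: showing that the regularized double integral over a thin time strip converges, as $\varepsilon\searrow0$, to the divergence of the mass-weighted Dirac current, and that the regularization constants $c_\beta$ are finite and non-negative. This requires the full machinery of the continuum limit --- controlling the singularity of $P(x,y)$ on the light cone and tracking the regularization parameters --- which is exactly the detailed analysis to be carried out in the body of the section.
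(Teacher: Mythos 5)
Your setup (specializing to the variation $\scrU_\tau = e^{i\tau\scrA}$ with $\scrA=\pi_{\la u\ra}$, reducing to surface layer integrals over the two constant-time hypersurfaces, and expressing $\delta\L$ through $Q(x,y)$ and $\psi^u$) matches the paper, but the core of your argument — a light-cone expansion identifying the $y$-integrated density pointwise with a divergence $\partial_k\mathcal J^k$ of a local mass-weighted Dirac current, followed by the Gau{\ss} theorem — is not how the proof can go, and it hides exactly the step where the content lies. The bulk density $\int_M \delta\L(x,y)\,d\rho(y)=\delta\ell(x)$ vanishes on $M$ by the EL equations, so there is no nonzero local divergence to integrate over the strip; the nontrivial information sits in the genuinely non-local double integral $J=\int_{t\ge0}d^4x\int_{t<0}d^4y\:\im\Sl\psi^u(y)|Q(y,x)\psi^u(x)\Sr$, and the paper evaluates this directly in momentum space (Lemma~\ref{Nlemmagen} as a guide, then Lemmas~\ref{Nlemmaoffdiagonal}--\ref{Nlemmacur2}). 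Moreover, the relevant input is not the light-cone (ultraviolet) behavior of $P(x,y)$ at all, but the \emph{on-shell} structure of $\hat{Q}(k)$ coming from state stability (Definition~\ref{Ndef611}): $\hat Q = a\,\slashed{k}/|k| + b$ with $a+b$ minimal on the mass shells. The constants are $c_\beta=(\partial^+_\omega+\partial^-_\omega)(a+b)(m_\beta^2)$, their non-negativity follows from this minimality (not merely from ``the regularization''), and the same minimality is what makes the inter-generation terms $J_{\alpha,\beta}$, $\alpha\neq\beta$, drop out — a point your generation-by-generation claim simply assumes.

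A second gap is analytic. You argue the double integral localizes because the Lagrangian has ``effective range'' of Compton scale, but $\hat Q$ is only continuous (not differentiable) on the mass shells and only defined inside the lower mass cone, so $Q(x,y)$ does \emph{not} decay rapidly; it decays polynomially with oscillations on the Compton scale, the integrals exist only as improper integrals, and the naive convergence-generating factor $e^{-\eta|t|}$ leaves a remainder term that cannot be controlled — the paper has to pass to compactly supported momentum-space mollifiers $g_\sigma$ (Lemma~\ref{Nlemmacur1}) to justify discarding it. Finally, the differentiability of $\ell\circ\Phi$ is not ``unproblematic'': for a regularized minimizer it will typically fail, and the paper only asserts it for the continuum-limit kernel $\hat Q$, with the regularized case handled by modifying $\Phi$ by microscopic fluctuations (Remark~\ref{Nremdiscuss}). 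So while your first and last steps are sound in outline, the middle of your proof would have to be replaced by the momentum-space computation based on the state-stability form of $\hat Q$; as written, the mechanism that produces $m_\beta c_\beta$, their positivity, and the absence of generation mixing is missing.
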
 \noindent
The constants~$c_\beta$ depend on properties of the distribution~$\hat{Q}$ in the continuum limit,
as will be specified in Definition~\ref{Ndef611} and~\eqref{NUSymm20} below.

Before coming to the proof, we explain the statement and significance of this theorem.
We first note that the restriction to negative-frequency solutions is needed
because the description of positive-frequency solutions involves the 
so-called mechanism of microscopic mixing which for brevity we cannot address in this
paper (see however the remarks in Section~\ref{Nremmicro} below).
Next, we point out that the theorem implies the statement that the function~$\ell \circ \Phi$ in
Theorem~\ref{Nthmcurrent} is continuously differentiable in the continuum limit.
However, this does not necessarily mean that this differentiability statement
holds for any local minimizer~$(\H, \F, \rho^\varepsilon)$ with regularization.
This rather delicate technical point will be discussed in Remark~\ref{Nremdiscuss} below.

Considering Cauchy hyperplanes in~\eqref{Nthmcurr1} is indeed no
restriction because the theorem can be extended immediately to
general Cauchy surfaces:
\begin{Corollary} {\Thmt{(Current conservation on Cauchy surfaces)}}  \label{Ncorcurrent}
Let~$\scrN_0, \scrN_1$ be two Cauchy surfaces in Minkowski space, where~$\scrN_1$
lies to the future of~$\scrN_0$.
Then, under the assumptions of Theorem~\ref{Nthmcurrentmink},
the conservation law of Theorem~\ref{Nthmcurrent} goes over to the conservation law
for the current integrals
\beq \label{NconsCauchy}
\sum_{\beta=1}^3 m_\beta\, c_\beta \int_{\scrN_0} \Sl \psi_\beta^u | \slashed{\nu} \psi_\beta^u \Sr\:d\mu_{\scrN_0}
= \sum_{\beta=1}^3 m_\beta\, c_\beta \int_{\scrN_1} \Sl \psi_\beta^u | \slashed{\nu} \psi_\beta^u \Sr\:d\mu_{\scrN_1} \:,
\eeq
where~$\nu$ denotes the future-directed normal.
\end{Corollary}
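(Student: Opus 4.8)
The plan is to deduce Corollary~\ref{Ncorcurrent} from Theorem~\ref{Nthmcurrentmink} together with the $\Omega$-independence already built into Theorem~\ref{Nthmcurrent}. The key observation is that the analysis underlying Theorem~\ref{Nthmcurrentmink} is \emph{local and Lorentz covariant}: the surface layer integral~\eqref{Nconserve5} receives contributions only from pairs $(x,y)$ with $x$ and $y$ within a Compton-scale neighbourhood of $\partial\Omega$, and on this scale the regularized kernel $P(x,y)$ is governed by the translation-invariant distribution~\eqref{NPvac}. Consequently the local computation that converts the Compton-scale surface layer integral near a Cauchy hyperplane $\{t=\mathrm{const}\}$ into $\sum_{\beta} m_\beta c_\beta \int \Sl \psi_\beta^u | \gamma^0 \psi_\beta^u\Sr\, d^3x$ applies verbatim near \emph{any} spacelike hypersurface $\scrN$, producing $\sum_{\beta=1}^3 m_\beta c_\beta \int_\scrN \Sl \psi_\beta^u|\slashed{\nu}\psi_\beta^u\Sr\,d\mu_\scrN$, since on the scale $\delta$ a smooth Cauchy surface is indistinguishable from its tangent hyperplane.

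\textbf{Exhaustion and splitting.} Given $\scrN_0$ to the past of $\scrN_1$, let $\Omega$ denote the space-time region bounded by $\scrN_0$ and $\scrN_1$ and pick an increasing sequence $\Omega_n$ of compact subsets exhausting $\Omega$, as in Figure~\ref{Nfignoether2}, with $\partial\Omega_n$ consisting of near-$\scrN_0$ and near-$\scrN_1$ pieces together with a remote lateral part whose contribution tends to zero by the assumed spatial decay of $\psi^u$. Applying Theorem~\ref{Nthmcurrent} to each $\Omega_n$ gives that the $\tau$-derivative of the surface layer integral over $\partial\Omega_n$ vanishes; passing to the continuum limit and then letting $n\to\infty$, the Compton-scale (short-range) character of $\L$ lets this surface layer integral split into a contribution localized at $\scrN_0$ and one localized at $\scrN_1$, which must therefore be equal.

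\textbf{Identification of the two contributions.} By the Lorentz-covariant local computation described above (the same one carried out for hyperplanes in the proof of Theorem~\ref{Nthmcurrentmink}), the contribution localized at $\scrN_i$ equals $\sum_{\beta=1}^3 m_\beta c_\beta \int_{\scrN_i} \Sl \psi_\beta^u | \slashed{\nu} \psi_\beta^u\Sr\, d\mu_{\scrN_i}$ for $i=0,1$, which upon substitution into the equality from the previous step yields~\eqref{NconsCauchy}. As an independent consistency check one may interpose a hyperplane $\{t=a\}$ to the past of $\scrN_0$, apply the same argument to the region between $\{t=a\}$ and $\scrN_0$, evaluate the $\{t=a\}$ contribution via Theorem~\ref{Nthmcurrentmink}, and then use the divergence-freeness $\partial_j(\overline{\psi_\beta^u}\gamma^j\psi_\beta^u)=0$ (a consequence of the Dirac equations~\eqref{NDireqns}) together with the Gauss theorem to see that the resulting current integral does not depend on the choice of Cauchy surface; this reproduces the same value for the $\scrN_0$-contribution.

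\textbf{Main obstacle.} The delicate point is the identification step: rigorously justifying that the continuum-limit analysis of the surface layer integral localizes near a \emph{curved} Cauchy surface exactly as near a hyperplane — i.e. that the curvature of $\scrN$ on the Compton scale, the $\varepsilon$-regularization, and the spatial tails can all be controlled simultaneously, so that the only surviving term is the covariant current integral with $\gamma^0$ replaced by $\slashed{\nu}$ and $d^3x$ by $d\mu_\scrN$. Everything else — the exhaustion, the splitting of $\partial\Omega_n$, and the bookkeeping over the three generations — is a routine adaptation of the hyperplane case already treated in Theorem~\ref{Nthmcurrentmink}.
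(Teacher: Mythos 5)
Your main route rests on the claim that the computation in the proof of Theorem~\ref{Nthmcurrentmink} applies ``verbatim'' near any spacelike hypersurface because a smooth Cauchy surface is indistinguishable from its tangent hyperplane on the scale~$\delta$. That is a genuine gap, not a deferrable technicality: here $\delta$ is the Compton scale~$\sim m_\beta^{-1}$, a small but \emph{macroscopic} length (see the discussion after~\eqref{NUSymm20}, which explicitly cautions that surface layer integrals can only be approximated by surface integrals when the surface is almost flat on that scale), and a general Cauchy surface need not be flat on the Compton scale. Moreover, the evaluation in Lemmas~\ref{Nlemmaoffdiagonal}--\ref{Nlemmacur2} is not a local, tangent-plane computation that could be transplanted: it is a global Fourier/Plancherel analysis tied to the splitting of Minkowski space by constant-time hyperplanes (Heaviside factors in~$t$, convergence-generating factors in~$\omega$), and none of it carries over to a curved~$\scrN$ without an entirely new argument. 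So the identification step, which you yourself flag as the ``main obstacle,'' is exactly what is missing, and everything you call routine hinges on it.

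The argument you relegate to an ``independent consistency check'' is in fact the intended proof, and you should promote it. First rewrite the surface layer integral for the region between $\scrN_0$ and $\scrN_1$, using the antisymmetry of the integrand, as the difference of two surface layer integrals attached to $\scrN_1$ and $\scrN_0$ (double integrals over $J^\wedge(\scrN_i) \times J^\vee(\scrN_i)$); no exhaustion or lateral-boundary estimate is needed for this step. Then, for each $\scrN_i$, apply Theorem~\ref{Nthmcurrent} to the region between $\scrN_i$ and a nearby hyperplane $\{t=t_i\}$ to conclude that the surface layer integral over $\scrN_i$ equals the one over the hyperplane; evaluate the latter by Theorem~\ref{Nthmcurrentmink} as $\sum_\beta m_\beta c_\beta \int_{t=t_i} \Sl \psi^u_\beta | \gamma^0 \psi^u_\beta \Sr\, d^3x$; and finally use the classical current conservation $\partial_j(\overline{\psi^u_\beta}\gamma^j\psi^u_\beta)=0$ for each generation (a consequence of the free Dirac equations~\eqref{NDireqns}) together with Gauss' theorem to convert this flat integral into $\int_{\scrN_i} \Sl \psi^u_\beta | \slashed{\nu}\, \psi^u_\beta \Sr\, d\mu_{\scrN_i}$. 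This chain yields~\eqref{NconsCauchy} without ever analyzing the continuum limit near a curved surface, which is precisely how the curvature/regularization difficulty is avoided rather than confronted.
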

\Proof We choose~$\Omega$ as the space-time region between the two Cauchy surfaces.
Using that the integrand in~\eqref{Nconserve5} is anti-symmetric in its arguments~$x$ and~$y$,
the integration range can be rewritten as
\begin{align}
\int_\Omega &d\rho(x) \int_{M \setminus \Omega} d\rho(y) \:\Big( \L_\kappa\big( \Phi_\tau(x), y\big) -
\L_\kappa\big( x, \Phi_\tau(y) \big) \Big) \nonumber \\
&= \int_{J^\wedge(\scrN_1)} d\rho(x) \int_{J^\vee(\scrN_1)} d\rho(y)
\:\Big( \L_\kappa\big( \Phi_\tau(x), y\big) - \L_\kappa\big( x, \Phi_\tau(y) \big) \Big) \label{Nsl1} \\
&\quad\: -\int_{J^\wedge(\scrN_0)} d\rho(x) \int_{J^\vee(\scrN_0)} d\rho(y)
\:\Big( \L_\kappa\big( \Phi_\tau(x), y\big) - \L_\kappa\big( x, \Phi_\tau(y) \big) \Big) \:, \nonumber %
\end{align}
where~$J^\wedge$ and~$J^\vee$ denote the causal past and causal future, respectively.
For ease in notation, we refer to the integrals in~\eqref{Nsl1} as a
{\em{surface layer integral over}}~$\scrN_1$.
Thus the surface layer integral in~\eqref{Nconserve5} is the difference
of two surface layer integrals over the Cauchy surfaces~$\scrN_0$ and~$\scrN_1$.

In order to compute for example the surface layer integral over~$\scrN_0$,
one chooses~$\Omega$ as the region between the
Cauchy surface~$\scrN_0$ and the Cauchy surface~$t=t_0$
(for sufficiently small~$t_0$; in case that these Cauchy surfaces intersect for every~$t_0$, one
modifies~$\scrN_0$ near the asymptotic end without affecting our results).
Applying the conservation law of Theorem~\ref{Nthmcurrent}
to this new region~$\Omega$, one concludes that the the surface layer integral over~$\scrN_0$
coincides with the surface layer integral at time~$t \approx t_0$. The latter surface layer
integral, on the other hand, was computed in Theorem~\ref{Nthmcurrentmink}
to go over to the sum of the probability integrals in~\eqref{Nthmcurr1}.
Finally, the usual current conservation for the Dirac dynamics shows that the
the integrals in~\eqref{Nthmcurr1} coincide with the surface integral over~$\scrN_0$
in~\eqref{NconsCauchy}. This concludes the proof.
\QED
Using similar arguments, Theorem~\ref{Nthmcurrentmink} can also be extended
to interacting systems (see Remark~\ref{Nreminteract} below). \\[-0.7em]

The remainder of this section is devoted to the proof of Theorem~\ref{Nthmcurrentmink}.
We first rewrite the causal action principle in terms of the kernel of the fermionic projector
(for details see~\cite[\S1.1]{cfs}). The kernel of the fermionic projector~$P(x,y)$ is defined by
\beq \label{NPxydef}
P(x,y) = \pi_x \,y|_{S_y} \::\: S_y \rightarrow S_x \:.
\eeq
The {\em{closed chain}} is defined as the product
\[ A_{xy} = P(x,y)\, P(y,x) \::\: S_x \rightarrow S_x\:. \]
The nontrivial eigenvalues~$\lambda^{xy}_1, \ldots, \lambda^{xy}$ of the operator~$xy$
coincide with the eigenvalues of the closed chain.
Moreover, it is useful to express~$P(x,y)$ in terms of the wave evaluation operator defined by
\beq
\label{Nweo}
\Psi(x) \::\: \H \rightarrow S_x\:, \qquad u \mapsto \psi^u(x) = \pi_x u \:. 
\eeq
Namely,
\[ x = - \Psi(x)^* \,\Psi(x)  \qquad \text{and} \qquad P(x,y) = -\Psi(x)\, \Psi(y)^*\:. \]

Our task is to compute the term~$\L_\kappa(\Phi_\tau(x), y)$ in~\eqref{Nconserve5} for~$x, y \in M$.
The detailed computations in ~\cite[\S3.6.1]{cfs} show that the
fermionic projector of the Minkowski vacuum satisfies the
EL equations in the continuum limit for~$\kappa=0$
(in our setting, this result means that the measures~$\rho^\varepsilon$
are local minimizers in the sense of Definition~\ref{Ndeflocmin} in the limiting case~$\varepsilon \searrow 0$).
Therefore, we may set~$\kappa$ to zero.
Thus our task is to compute the term~$\L(\Phi_\tau(x), y)$.
In preparation, we compute~$P(\Phi_\tau(x), y)$.
To this end, we first note that
\beq \label{Nisospec}
\begin{split}
\Phi_\tau(x)\, y &= \scrU_\tau \,x\, \scrU_\tau^{-1}\; y
= \scrU_\tau \,\Psi(x)^* \,\Psi(x)\, \scrU_\tau^{-1}\; \Psi(y)^* \,\Psi(y) \\
&\simeq \Psi(x)\, \scrU_\tau^{-1}\; \Psi(y)^* \,\Psi(y)\, \scrU_\tau \,\Psi(x)^* \:,
\end{split}
\eeq
where in the last line we cyclically commuted the operators
and $\simeq$ means that the operators are isospectral (up to irrelevant zeros in the spectrum).
Therefore, introducing the notations
\begin{gather}
\Psi_\tau(x) = \Psi(x)\, \scrU_\tau^{-1} \::\: \H \rightarrow S_x \\
P\big( \Phi_\tau(x), y \big) = -\Psi_\tau(x)\, \Psi(y)^* \:, \qquad
P\big( y, \Phi_\tau(x) \big) = -\Psi(y)\, \Psi_\tau(x)^* \label{NPmodform}
\end{gather}
one sees that the operator product~$\Phi_\tau(x)\, y$ is isospectral to the modified closed chain
\beq \label{Nchainform}
P\big( \Phi_\tau(x), y \big)\: P\big( y, \Phi_\tau(x) \big) \:.
\eeq
Considering the Lagrangian as a function of this modified closed chain, the variation
is described in a form suitable for computations.

For clarity, we explain in which sense the kernel of the fermionic projector as given by~\eqref{NPmodform}
agrees with the abstract definition~\eqref{NPxydef},
\beq \label{NPtauabstract}
P\big( \Phi_\tau(x), y \big) = \pi_{\Phi_\tau(x)} y \:.
\eeq
It is a subtle point that the point~$\Phi_\tau(x) \in \F$ depends on~$\tau$, so that space-time
itself changes. However, when identifying the spin space~$S_{\Phi_\tau(x)}$ with a corresponding
spinor space in Minkowski space, the base point~$x \in \scrM$ should be kept fixed.
Therefore, the spin space~$S_{\Phi_\tau(x)}$ is to be identified with the spinor space~$S_x \scrM$.
For each~$\tau$, this can be accomplished as explained above.
This identification made, the kernel~\eqref{NPmodform} indeed agrees with~\eqref{NPtauabstract}.
The reason why we do not give the details of this construction is that
the computation~\eqref{Nisospec} already shows that the Lagrangian
can be computed with the closed chain~\eqref{Nchainform}, and this is all we need for what follows.

We now choose~$\scrA =\pi_{\la u \ra}$ as the projection on the one-dimensional subspace
generated by a vector~$u \in \H$ and let~$\pi_{\la u \ra^\perp}$ be the projection on
the orthogonal complement of~$u$. Then
\begin{align*}
\Psi_\tau(x) &= \Psi(x) \; \big( \pi_{\la u \ra^\perp} + e^{-i \tau}\, \pi_{\la u \ra} \big) \\
P\big(\Phi_\tau(x),y \big) &= -\Psi(x) \; \big( \pi_{\la u \ra^\perp} + e^{-i \tau}\, \pi_{\la u \ra} \big)\: \Psi(y)^* \\
&= P(x,y) + (1- e^{-i \tau}) \;\Psi(x) \:\pi_{\la u \ra}\: \Psi(y)^* \:.
\end{align*}
Normalizing~$u$ such that~$\la u|u \ra_\H=1$, the last equation can be written
in the form that for any~$\chi \in S_y$,
\[ P\big(\Phi_\tau(x),y \big) \, \chi
= P(x,y)\, \chi + (1- e^{-i \tau}) \;\psi^u(x)\; \Sl \psi^u(y) \,|\, \chi \Sr_y \:. \]

We now compute the first order variation.
\beq \label{NdelP} \begin{split}
\frac{d}{d\tau} P\big(\Phi_\tau(x),y \big) \big|_{\tau=0} \:\chi &=
i \psi^u(x)\; \Sl \psi^u(y)\,|\, \chi \Sr_y =: \delta P(x,y)\, \chi \\
\frac{d}{d\tau} P\big(y, \Phi_\tau(x)\big) \big|_{\tau=0} &= \big(\delta P(x,y) \big)^*
\end{split}
\eeq
The variation of the Lagrangian can be written as (cf.~\cite[Section~5.2]{PFP}
or~\cite[Section~1.4]{cfs})
\begin{align} \label{NLTrQ} \begin{split}
\delta \L(x,y) \;&\!:=  \frac{d}{d\tau}\: \L\big( \Phi_\tau(x),y \big)\big|_{\tau=0} \\
&= \Tr_{S_y} \big( Q(y,x)\, \delta P(x,y) \big) + \Tr_{S_x} \big( Q(x,y)\, \delta P(x,y)^* \big) \\
&= i \, \Sl \psi^u(y) \,|\, Q(y,x) \,\psi^u(x) \Sr_y
-i \,\Sl \psi^u(x) \,|\, Q(x,y) \,\psi^u(y) \,\Sr_x \:, \end{split}
\end{align}
where in the last line we used~\eqref{NdelP}, and~$Q(x,y)$ is a distributional kernel
to be specified below.
Using that the kernel~$Q(x,y)$ is symmetric in the sense that
\[ Q(x,y)^* = Q(y,x)\:, \]
we can write the variation of the Lagrangian in the compact form
\[ \delta \L(x,y) = -2 \im \big( \Sl \psi^u(y) \,|\, Q(y,x) \,\psi^u(x) \Sr_y \big) \:. \]
Using this identity, the surface layer integral in~\eqref{Nconserve5} can be written as
\[ \int_\Omega d^4x \int_{\scrM \setminus \Omega} d^4y \;\im \big( \Sl \psi^u(y) \,|\, Q(y,x) \,\psi^u(x) \Sr_y \big)
= 0 \:. \]

Taking the liming procedure as shown in Figure~\ref{Nfignoether2}, it suffices to
consider a surface layer integral at a fixed time~$t_0$,
which for convenience we choose equal to zero. Thus our task is to compute the double integral
\beq \label{Ninttask}
J := \int_{t \geq 0} d^4x \int_{t < 0} d^4y \:\im \big( \Sl \psi^u(y) \,|\, Q(y,x) \,\psi^u(x) \Sr \big) \:.
\eeq
Here we omitted the subscript~$y$ at the spin scalar product because in Minkowski space
all spinor spaces can be naturally identified.

In order to explain our method for computing the integrals in~\eqref{Ninttask},
we first state a simple lemma where integrals of this type are computed.
As will be explained below, this lemma cannot be applied to our problem
for technical reasons, but it nevertheless clarifies the structure of our results.
\begin{Lemma} \label{Nlemmagen}
Let~$f : \scrM \times \scrM \rightarrow \R$ be an integrable function
with the following properties:
\begin{itemize}
\item[\textrm{(a)}] $f$ is anti-symmetric, i.e.\ $f(x,y) = -f(y,x)$.
\item[\textrm{(b)}] $f$ is homogeneous in the sense that it depends only on the difference vector~$y-x$.
\item[\textrm{(c)}] The following integral is finite,
\beq \label{Nfinint}
\int_{\scrM} \big|x^0\: f(x,0) \big|\: d^4x < \infty \:.
\eeq
\end{itemize}
Then
\beq \label{Ndint}
\int_{-\infty}^0 dt \int_0^\infty dt' \int_{\R^3} d^3y \:f \big( (t,\vec{x}), (t', \vec{y}) \big) = \frac{i}{2}\:
\frac{\partial}{\partial k^0} \hat{f}(k) \Big|_{k=0} \:,
\eeq
where~$\hat{f}$ is the Fourier transform, i.e.\
\beq \label{NfFourier}
f(x,y) = \int \frac{d^4k}{(2 \pi)^4} \:\hat{f}(k)\: e^{-ik (x-y)} \:.
\eeq
\end{Lemma}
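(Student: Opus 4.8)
The plan is to show that both sides of~\eqref{Ndint} equal the same one‑dimensional integral
\[
  I \;:=\; \int_0^\infty s\, h(s)\, ds\,, \qquad
  h(s) \;:=\; \int_{\R^3} f\big((0,\vec 0),(s,\vec y)\big)\, d^3y \,.
\]

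First I would use hypothesis~(b): since $f(x,y)$ depends only on the difference $y-x$, I write $f(x,y)=G(y-x)$ with $G(\eta):=f\big((0,\vec 0),\eta\big)$, and the substitution $\vec y\mapsto \vec y+\vec x$ shows that the left‑hand side of~\eqref{Ndint} does not depend on $\vec x$; so I set $\vec x=\vec 0$. In the resulting triple integral I substitute $s=t'-t$ and interchange the $t$‑ and $s$‑integrations. The interchange is legitimate by Tonelli's theorem applied to $|G|$, and the resulting integral is finite: one bounds $\int_0^\infty s\,\big(\int_{\R^3}|G((s,\vec y))|\,d^3y\big)\,ds$ by $\int_\scrM |\eta^0\,G(\eta)|\,d^4\eta$, which is finite by~(c) (note $f(x,0)=G(-x)$, so~\eqref{Nfinint} amounts, after the change of variables $\eta=-x$, to $\int_\scrM|\eta^0 G(\eta)|\,d^4\eta<\infty$). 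The one elementary point is that for fixed $s>0$ the set $\{(t,t'):t\le 0\le t',\ t'-t=s\}$ equals $\{(t,t+s):-s\le t\le 0\}$, of Lebesgue measure $s$, and is empty for $s\le 0$; hence $\int_{-\infty}^0 dt\int_0^\infty dt'\,G((t'-t,\vec y))=\int_0^\infty s\,G((s,\vec y))\,ds$, and integrating over $\vec y$ identifies the left‑hand side of~\eqref{Ndint} with $I$.

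Next I would compute the right‑hand side. Reading the convention~\eqref{NfFourier} with $\eta=y-x$ (so that $e^{-ik(x-y)}=e^{ik\eta}$) gives $\hat f(k)=\int_\scrM G(\eta)\,e^{-ik\eta}\,d^4\eta$, and differentiating under the integral sign — permissible since $|\eta^0 G(\eta)|$ is an integrable majorant by~(c) — yields $\partial_{k^0}\hat f(k)\big|_{k=0}=-i\int_\scrM \eta^0\,G(\eta)\,d^4\eta$. Writing this $d^4\eta$‑integral as $\int_{-\infty}^\infty s\,h(s)\,ds$ and using that~(a) forces $G$ to be odd — whence $h(-s)=-h(s)$ after the substitution $\vec y\mapsto-\vec y$, so that $s\,h(s)$ is even — one gets $\int_{-\infty}^\infty s\,h(s)\,ds=2\,I$. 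Therefore $\tfrac{i}{2}\,\partial_{k^0}\hat f(k)\big|_{k=0}=\tfrac{i}{2}\,(-i)\,(2I)=I$, matching the left‑hand side and establishing~\eqref{Ndint}.

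Everything here is routine; the only things to watch are the bookkeeping of signs and Fourier conventions (the placement of the minus sign in $e^{-ik(x-y)}$ relative to $\eta=y-x$, and of the Minkowski product in $\partial_{k^0}$) and ensuring that each interchange of integration order and the differentiation under the integral sign is covered by the single integrability hypothesis~(c). I expect no genuine obstacle beyond this care.
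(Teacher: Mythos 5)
Your proof is correct, and it reaches the result by a route that differs from the paper's in how the Fourier side is handled. The paper substitutes the representation~\eqref{NfFourier} at the outset, carries out the spatial integration to reduce everything to the one-dimensional function~$g(\tau)=\int \frac{d\omega}{2\pi}\,\hat f\big((\omega,\vec 0)\big)e^{-i\omega\tau}$, rewrites the double time integral with a Heaviside function, applies Fubini (justified exactly by~\eqref{Nfinint}, as in your estimate), uses the anti-symmetry of~$g$ to extend the moment integral to all of~$\R$ with a factor~$\tfrac12$, and only then recovers the derivative~$\partial_\omega \hat f|_{\omega=0}$ by reinserting the Fourier representation and invoking Plancherel together with a delta-function-type manipulation. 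You instead stay in position space on both sides: you reduce the left-hand side to the first moment~$\int_0^\infty s\,h(s)\,ds$ by the slice-measure observation (equivalent to the paper's Heaviside/Fubini step, with the same use of~\eqref{Nfinint}), and you evaluate~$\partial_{k^0}\hat f(0)$ directly by differentiating~$\hat f(k)=\int G(\eta)e^{-ik\eta}d^4\eta$ under the integral sign, dominated again by~\eqref{Nfinint}; anti-symmetry enters only at the end through the oddness of~$h$. The common skeleton is the identification of both sides with the same temporal moment, with condition~(c) doing all the analytic work; what your version buys is that the final step is a clean dominated-convergence argument rather than the paper's somewhat formal Fourier-inversion manipulation, at the small price of having to read off the forward transform~$\hat f(k)=\int G(\eta)e^{-ik\eta}d^4\eta$ from the convention~\eqref{NfFourier} (unproblematic here since~$f$ is integrable), and of keeping track of the Minkowski sign in~$\partial_{k^0}(k\eta)=\eta^0$, which you do correctly.
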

\Proof Substituting~\eqref{NfFourier} into the left side of~\eqref{Ndint}, we can carry out the spatial integral to obtain
\beq \label{Ndoubleint}
\int_{-\infty}^0 dt \int_0^\infty dt' \int_{\R^3} d^3y \:f \big( (t,\vec{x}), (t', \vec{y}) \big) =
\int_{-\infty}^0 dt \int_0^\infty dt' \,g(t-t') \:,
\eeq
where
\beq \label{NgFourier}
g(\tau) =  \int_{\R^3} f\big((\tau, \vec{x}), (0, \vec{y}) \big)\: d^3y
= \int_{-\infty}^\infty \frac{d\omega}{2 \pi} \:\hat{f}\big( (\omega, \vec{0} ) \big)\: e^{-i\omega \tau} \:.
\eeq
We now transform variables in the inner integral in~\eqref{Ndoubleint},
\[ \int_0^\infty g(t-t')\: dt' = \int_{-\infty}^t g(\tau)\: d\tau = \int_{-\infty}^0 g(\tau)\: \Theta(t-\tau) \: d\tau \:. \]
Using~\eqref{Nfinint} and~\eqref{NgFourier}, we know that
\[ \iint_{\R^- \times \R^-} \big| g(\tau)\: \Theta(t-\tau) \big| \: dt\, d\tau
= \int_{-\infty}^0 \big| \tau\: g(\tau)\big| \: d\tau
\leq \int_{\scrM} \big|x^0\: f(x,0) \big|\: d^4x < \infty \:. \]
Hence in~\eqref{Ndoubleint} we may switch the order of integration according to Fubini's theorem to obtain
\begin{align*}
\int_{-\infty}^0 &dt \int_0^\infty dt' \,g(t-t')
= \int_{-\infty}^0 d\tau \,g(\tau) \int_{-\infty}^0 dt  \: \Theta(t-\tau) \\
&= \int_{-\infty}^0 d\tau \,g(\tau) \int_\tau^0 dt  = - \int_{-\infty}^0 d\tau \,\tau\, g(\tau) 
=  -\frac{1}{2} \int_{-\infty}^\infty d\tau \,\tau\, g(\tau) \:,
\end{align*}
where in the last step we used the anti-symmetry of~$g$.
Now  we insert~\eqref{NgFourier} and apply Plancherel's theorem,
\begin{align*}
\int_{-\infty}^\infty &dt \int_0^\infty dt' \,g(t-t')
=-\frac{i}{2} \int_{-\infty}^\infty d\tau \int_{-\infty}^\infty \frac{d\omega}{2 \pi} \:\hat{f}\big( (\omega, \vec{0}) \big)\:
\frac{\partial}{\partial \omega} e^{-i\omega \tau} \\
&=\frac{i}{2} \int_{-\infty}^\infty d\tau \int_{-\infty}^\infty \frac{d\omega}{2 \pi} \:
\Big(\frac{\partial}{\partial \omega} \hat{f}\big( (\omega, \vec{0}) \big) \Big)\:
e^{-i\omega \tau} = \frac{i}{2} \:\frac{\partial}{\partial \omega} \hat{f}\big( (\omega, \vec{0}) \big) \Big|_{\omega=0}\:.
\end{align*}
This concludes the proof.
\QED
In order to apply this lemma to our problem, we would have to show that the integrand
in~\eqref{Ninttask} satisfies the condition~\eqref{Nfinint}. As we shall now explain,
this condition will indeed {\em{not}} be satisfied, making it necessary to modify the method.

Let us specify the kernel~$Q(x,y)$. To this end, we make use of the fact
that the fermionic projector of the vacuum should correspond to a stable
minimizer of the causal action. This is made mathematically precise
in the so-called state stability analysis carried out in~\cite[Section~5.6]{PFP},
\cite{vacstab} and~\cite{reg}. The detailed analysis of the continuum
limit in~\cite[Chapter~3]{cfs} shows that in order to obtain well-defined field equations
in the continuum limit, the number of generations must be equal to three.
Therefore, we now consider an unregularized fermionic projector of the vacuum
involving a sum of three Dirac seas~\eqref{NPvac}.
The corresponding kernel~$Q(x,y)$ obtained in the continuum limit depends only on
the difference vector~$y-x$ and can thus be written as the Fourier transform
of a distribution~$\hat{Q}(k)$,
\[ Q(x,y) = \int \frac{d^4k}{(2 \pi)^4} \:\hat{Q}(k)\: e^{-ik (x-y)} \:. \]
The state stability analysis in~\cite[Section~5.6]{PFP} implies that the Fourier
transform~$\hat{Q}$ has the form as specified in the next definition
(cf.~\cite[Definition~5.6.2]{PFP}).

\begin{Def} \label{Ndef611}\index{state stability}
The fermionic projector of the vacuum~\eqref{NPvac} is called {\textbf{state stable}}
if the corresponding operator $\hat{Q}(k)$ is well-defined inside the
lower mass cone
\[ \mathcal{C}^\land := \{ k \in \R^4 \,|\, k^i k_i >0 \text{ and } k^0<0 \} \]
and can be written as
\beq \label{N62f}
\hat{Q} (k) = a\:\frac{k\slsh}{|k|} + b
\eeq
with continuous real functions $a$ and $b$ on $\mathcal{C}^\land$ having
the following properties:
\begin{itemize}
\item[\textrm{(i)}] $a$ and $b$ are Lorentz invariant,
\[ a = a(k^2)\:,\qquad b = b(k^2) \:. \]
\item[\textrm{(ii)}] $a$ is non-negative.
\item[\textrm{(iii)}] The function $a+b$ is minimal on the mass shells,
\beq \label{Nabmin}
(a+b)(m^2_\beta) = \inf_{q \in {\mathcal{C}}^\land} (a+b)(q^2) \quad\mbox{for~$\beta=1,2,3$}\:.
\eeq
\end{itemize}
\end{Def}
We point out that, according to this definition, the function~$\hat{Q}(k)$ does {\em{not}} need to be {\em{smooth}},
but only continuous. In particular, Lemma~\ref{Nlemmagen} cannot be
applied, because the derivative in~\eqref{Ndint} is ill-defined.
If~$\hat{Q}(k)$ were smooth, its Fourier transform~$Q(x,y)$ would decay rapidly as~$(y-x)^2 \rightarrow \pm \infty$.
In this case, $Q(x,y)$ would be of short range as explained in Section~\ref{Nsecsli},
except that~\eqref{Nshortrange} would have to be replaced by
the statement that~$\L(x,y)$ is very small if~$|(y-x)^2|>\delta$
(and~$\L(x,y)$ could indeed be made arbitrarily small by increasing~$\delta$).
The fact that~$\hat{Q}(k)$ does not need to be differentiable implies that~$Q(x,y)$
does not need to decay rapidly, also implying that the condition~\eqref{Nfinint} may be violated.

In fact, this non-smoothness in momentum space
will be of importance in the following computation. Moreover, our results will depend
only on the behavior~$\hat{Q}(k)$ in a neighborhood of the mass shells~$k^2=m_\beta^2$.
Therefore, the crucial role will be played by the regularity of~$\hat{Q}$ on the mass shells.
In order to keep the setting as simple as possible, we shall assume that the functions~$a$
and~$b$ in~\eqref{N62f} are {\em{semi-differentiable}} on the mass shells, meaning that the
left and right derivatives exist. For the resulting semi-derivatives of~$\hat{Q}$ we use the notation
\beq \label{Nsemidiff}
\begin{split}
\partial_\omega^+ \hat Q( -\omega_{\beta, \vec k}, \vec k) &= \lim_{h \searrow 0}\:
\frac{1}{h}\, \Big( \hat Q( -\omega_{\beta, \vec k}+h, \vec k) - \hat Q( -\omega_{\beta, \vec k}, \vec k) \Big)  \\
\partial_\omega^- \hat Q( -\omega_{\beta, \vec k}, \vec k) &= \lim_{h \nearrow 0}\:
\frac{1}{h}\, \Big( \hat Q( -\omega_{\beta, \vec k}+h, \vec k) - \hat Q( -\omega_{\beta, \vec k}, \vec k) \Big) \:,
\end{split} \eeq
where~$\omega_{\beta, \vec k}$ is given by the dispersion relation
\beq \label{Ndisperse}
\omega_{\beta, \vec k} = \sqrt{m_\beta^2 + |\vec{k}|^2}\:.
\eeq
The parameters~$c_\beta$ in Theorem~\ref{Nthmcurrentmink} are given by
\beq  \label{NUSymm20}
c_\beta := \partial^+_\omega  a(m_\beta^2) + \partial^+ _\omega b(m_\beta^2) + \partial^-_\omega  a(m_\beta^2) + \partial^-_\omega b(m_\beta^2)
\eeq
As explained above, even though the function $a+b$ is minimal at $m_\beta^2$,
it is in general not differentiable at this value. But the minimality implies that~$c_\beta \geq 0$.

The discontinuity of the derivatives of~$\hat{Q}$ on the mass shells implies that~$Q(x,y)$
will {\em{not}} decay rapidly as~$(y-x)^2 \rightarrow \pm \infty$. Instead, we obtain contributions which
decay only polynomially and oscillate on the {\em{Compton scale}}
(this oscillatory behavior comes about similar as explained for the Fourier transforms of the mass shells
in detail in~\cite[\S1.2.5]{cfs}).
Due to these oscillations on the Compton scale, the integrals in~\eqref{Ninttask}
are indeed well-defined, and the dominant contribution to the integrals will come from a
layer of width~$\sim m^{-1}$ around the hyperplane~$\{t=0\}$.
Therefore, although~$\L(x,y)$ does not decay rapidly,
the concept of the surface layer integral as introduced in Section~\ref{Nsecsli} remains valid,
and the parameter~$\delta$ shown in Figure~\ref{Nfignoether1} can be identified
with the Compton scale~$\sim m_\alpha^{-1}$ of the Dirac particles. Thus the width of the
surface layer is a small but
macroscopic length scale. In particular, the surface layer integrals cannot be
identified with or considered as a generalization of the surface integrals of the classical Noether theorem.
However, in most situations of interest, when the surface is almost flat on the
Compton scale, the surface layer integral can be well-approximated by a corresponding surface integral.
Theorem~\ref{Nthmcurrentmink} shows that in the limiting case that the surface is a hyperplane,
the surface layer integral indeed goes over to a surface integral.

The just-mentioned oscillatory behavior of the integrand in~\eqref{Ninttask} implies that
the integrals will in general not exist in the Lebesgue sense. But they do exist in the sense
of an improper Riemann integral. For computational purposes, this is implemented
most conveniently by inserting convergence-generating factors.
We begin with the simplest possible choice of a convergence-generating factor~$e^{-\eta |t|}$.
Thus instead of~\eqref{Ninttask} we consider the integral
\beq
J = \lim_{\eta \searrow 0} \int_{t \geq 0} d^4x \int_{t < 0} d^4y \:e^{-\eta x^0 + \eta y^0}\: 
\im \big( \Sl \psi^u(y) \,|\, Q(y,x) \,\psi^u(x) \Sr \big) \label{NCurreg1} \:.
\eeq

We now introduce a convenient representation for~$\hat{\psi}^u(k)$. Since the wave function~$\psi^u$ is
a linear combination of solutions of the Dirac equation corresponding to the masses~$m_\beta$
(with~$\beta=1,2,3$), its Fourier transform is supported on the mass shells~$k^2=m^2_\beta$.
Moreover, since in the Dirac sea vacuum all physical wave functions have negative frequency,
we can write~$\hat{\psi}^u(k) = (\hat{\psi}^u_\beta(k))_{\beta=1,2,3}$ as
\beq \label{NUSymm6}
\hat\psi^u_\beta(k) = 2 \pi\, \chi_\beta(\vec{k}) \: \delta \big( k^0 + \omega_{\beta,\vec k} \big)
\eeq
(with~$\omega_{\beta,\vec k}$ as in~\eqref{Ndisperse}).
The Dirac equations~\eqref{NDireqns} reduce to the algebraic equations
\beq \label{NDiralg}
(\slashed{k}_\beta - m_\beta) \chi_\beta(\vec{k}) = 0 \qquad \text{where} \qquad
k_\beta := \big( -\omega_{\beta,\vec{k}}, \vec{k} \big)\:.
\eeq
The representation~\eqref{NUSymm6} has the convenient feature that the wave function at time~$t$
is given by
\[ \psi^u_\beta(t,\vec{x}) = \int \frac{d^4k}{(2 \pi)^4}\: \hat\psi^u_\beta(k)\: e^{- i k x}
= e^{i \omega_{\beta,\vec k} t} \int \frac{d^3k}{(2 \pi)^3}\: \chi_\beta(\vec k)\: e^{i \vec{k} \vec{x}} \:, \]
showing that~$\chi_\beta(\vec{k})$ simply is the {\em{spatial}} Fourier transform of
the Dirac wave function at time zero.

\begin{Lemma} \label{Nlemmaoffdiagonal}
The integral~\eqref{NCurreg1}, can be written as
\begin{gather}
J = \sum_{\alpha, \beta=1}^3 J_{\alpha, \beta} \:, \label{NJsum} \\
\intertext{where the~$J_{\alpha, \beta}$ are given by}
J_{\alpha, \beta} = \lim_{\eta \searrow 0} \;\im \int \frac{d^4k}{(2 \pi)^4} \, \Sl  \chi_\alpha(\vec k ) \: \frac{ i}{k^0 + \omega_{\alpha,\vec k} + i \eta}  \:|\: \hat Q(k) \:  \chi_\beta(\vec k ) \: \frac{- i}{k^0 + \omega_{\beta,\vec k} - i \eta} \Sr \:. 
\label{NJabDef}
\end{gather}
\end{Lemma}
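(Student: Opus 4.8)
The plan is to evaluate the position-space double integral in~\eqref{NCurreg1} by passing to momentum space, the exponential factors $e^{-\eta x^0+\eta y^0}$ acting — after the computation is carried out — as an $i\eta$-regularization of energy denominators. This is the regularized counterpart of Lemma~\ref{Nlemmagen}; keeping $\eta>0$ throughout is precisely what lets us deal with the fact that $\hat{Q}$ is only continuous, so that Lemma~\ref{Nlemmagen} cannot be applied directly. First I would substitute the available integral representations: the on-shell form~\eqref{NUSymm6} for the two wave functions, which gives $\psi^u(x)=\sum_{\beta=1}^3\int\frac{d^3p}{(2\pi)^3}\,\chi_\beta(\vec p)\,e^{i\omega_{\beta,\vec p}\,x^0+i\vec p\,\vec x}$ and likewise $\psi^u(y)=\sum_{\alpha=1}^3\int\frac{d^3q}{(2\pi)^3}\,\chi_\alpha(\vec q)\,e^{i\omega_{\alpha,\vec q}\,y^0+i\vec q\,\vec y}$, together with $Q(y,x)=\int\frac{d^4k}{(2\pi)^4}\,\hat{Q}(k)\,e^{-ik(y-x)}$. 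Taking the antilinearity of $\Sl\,.\,|\,.\,\Sr$ in its first argument into account, the integrand of~\eqref{NCurreg1} becomes $\im$ of a triple momentum integral over $\vec p,\vec q,k$ of $\Sl\chi_\alpha(\vec q)\,|\,\hat{Q}(k)\,\chi_\beta(\vec p)\Sr$ multiplied by plane waves in $(x,y)$ and the regulator.

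Next I would carry out the four position integrals for fixed $\eta>0$. The two spatial integrals $\int_{\R^3}d^3x$ and $\int_{\R^3}d^3y$ produce the momentum-conserving delta distributions $(2\pi)^3\delta^3(\vec p-\vec k)$ and $(2\pi)^3\delta^3(\vec k-\vec q)$, collapsing the three three-momenta to a single vector $\vec k$. The two temporal integrals are then elementary: from $\int_0^\infty e^{iax^0-\eta x^0}\,dx^0=\frac{i}{a+i\eta}$ and $\int_{-\infty}^0 e^{iay^0+\eta y^0}\,dy^0=\frac{-i}{a-i\eta}$ for $a\in\R$, and tracking the signs coming from the Fourier convention, the $(+,-,-,-)$ signature, and the antilinear slot, one obtains exactly the resolvent denominators $\frac{i}{k^0+\omega_{\alpha,\vec k}+i\eta}$ and $\frac{-i}{k^0+\omega_{\beta,\vec k}-i\eta}$ of~\eqref{NJabDef}. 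It is essential here that the time integrations run over half-lines rather than over all of $\R$: this is why energy denominators rather than energy delta functions appear, which reflects that~\eqref{NCurreg1} is a surface layer integral associated with a single time-slice rather than with all of space-time. Since every operation so far is $\C$-linear, the $\im(\cdot)$ commutes past them and ends up outside the $k$-integral; collecting the sum over the two generation indices $\alpha$ (from $\psi^u(y)$) and $\beta$ (from $\psi^u(x)$) then gives $J=\sum_{\alpha,\beta=1}^3 J_{\alpha,\beta}$ with $J_{\alpha,\beta}$ as stated.

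The step I expect to be the main obstacle is the justification of these interchanges of integration. For fixed $\eta>0$ the factor $e^{-\eta|x^0|-\eta|y^0|}$ on the relevant half-spaces, together with the finiteness of $\la u|u\ra_\H$ (hence suitable decay of the $\chi_\beta$) and the local integrability of $\hat{Q}$ (the only singularity of $\tfrac{k\slsh}{|k|}$ being at $k=0$, which is integrable in four dimensions), should make the combined integrand absolutely integrable, so that Fubini and Tonelli apply and all orders of integration may be exchanged. One must then also observe that the remaining $k^0$-integral in~\eqref{NJabDef} is only conditionally convergent — the two resolvent factors furnishing a decay $\sim(k^0)^{-2}$, which together with the behaviour of $\hat{Q}$ near the lower mass cone controls the integral — and check that no boundary contributions are lost when the limit $\eta\searrow0$ is subsequently taken past it. Since the asserted identity is an identity of improper Riemann integrals, it is enough to establish it at the level of the absolutely convergent regularized ($\eta>0$) expressions and then pass to the limit on both sides simultaneously.
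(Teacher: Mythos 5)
Your proposal is correct and in substance coincides with the paper's own proof: both evaluate~\eqref{NCurreg1} in momentum space, with the mass-shell representation~\eqref{NUSymm6} fixing the spatial momentum and the damped half-line time integrals $\int_0^\infty e^{iat-\eta t}\,dt=\tfrac{i}{a+i\eta}$ --- which is precisely the Fourier transform~\eqref{NhatT} of the regularized Heaviside function used in the paper --- producing the two resolvent denominators of~\eqref{NJabDef}. The only difference is organizational: the paper packages the computation via the convolution theorem and Plancherel applied to the cut-off wave functions $\Theta_\eta(\pm x^0)\,\psi^u$, whereas you insert the plane-wave representations and integrate out the position variables directly (generating spatial delta functions); your closing remarks on Fubini and on the conditional convergence of the remaining $k^0$-integral are at the same formal level of rigor as the paper's argument.
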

\Proof We first rewrite~\eqref{NCurreg1} as
\[ J = \lim_{\eta \searrow 0}\; \im \int d^4x \, \int d^4y \; \Sl \Theta(x^0)\,e^{-\eta x^0} \,  \psi^u(x) \,|\, Q(x,y)\, 
\Theta(-y^0) \, e^{\eta y^0} \,\psi^u(y) \Sr \:. \]
Since~$Q$ depends only on the difference vector~$y-x$, the~$y$-integration can be
regarded as a convolution in position space. We now rewrite this convolution as
a multiplication in momentum space. Setting
\[ \hat \psi_\eta^\pm (k) := \int \Theta_\eta (\pm y^0)\, \psi^u(y) \, e^{ i k y} \: d^4y \:, \]
where we introduced the ``regularized Heaviside function''
\[ \Theta_\eta(x) = \Theta(x)\, e^{-\eta x}\:, \]
we obtain
\[ J = \lim_{\eta \searrow 0}\;\im \int_{M} d^4x \,  \, \Sl \Theta_\eta(x^0) \,  \psi^u(x) \,|\, {\mathcal{F}}^{-1} \big( \hat{Q} \, \hat\psi_\eta^- \big)(x) \Sr \:, \]
where ${\mathcal{F}}^{-1}$ denotes the inverse Fourier transformation. Plancherel's theorem yields
\beq \label{NUSymm2}
J = \lim_{\eta \searrow 0}\;\im \int \frac{d^4k}{(2 \pi)^4} \, \Sl \hat \psi_\eta^+(k)  \,|\, \hat Q(k)
\,\hat \psi_\eta^-(k) \Sr \:.
\eeq

We next compute~$\hat \psi_\eta^\pm (k)$. Since multiplication in position space corresponds to
convolution in momentum space, we know that
\beq \label{Nhatppm}
\hat \psi_\eta^\pm (k) =  \int \frac{d\omega}{2 \pi} \:  \hat \Theta_\eta(\pm \omega) \: \hat \psi^u \big( k-(\omega,\vec 0) \big) \:.
\eeq
Here the Fourier transformation of the regularized Heaviside function is computed by
\beq \label{NhatT}
\hat \Theta_\eta (\omega) = \int_{-\infty}^\infty \Theta_\eta(t) \: e^{i  \omega t}\: dt = \frac{i}{\omega + i \, \eta} \:.
\eeq

Using~\eqref{NhatT} and~\eqref{NUSymm6} in~\eqref{Nhatppm}, we obtain
\[ \hat\psi_\eta^\pm ( k) = \bigg(\chi_\beta(\vec k ) \: \frac{i}{ \pm(k^0 + \omega_{\beta,\vec k}) + i \eta}
\bigg)_{\beta=1,2,3} \:. \]
Using these formulas in~\eqref{NUSymm2} gives the result.
\QED

The next lemma shows that the summands for~$\alpha \neq \beta$ drop out of~\eqref{NJsum}.
\begin{Lemma} The currents~\eqref{NJabDef} satisfy the relation
\[ \sum_{\alpha \neq \beta} J_{\alpha, \beta} = 0 \:. \]
\end{Lemma}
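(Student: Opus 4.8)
The plan is to prove the stronger pairwise cancellation $J_{\alpha,\beta}+J_{\beta,\alpha}=0$ for every $\alpha\neq\beta$; summing over the ordered index pairs then gives $\sum_{\alpha\neq\beta}J_{\alpha,\beta}=0$. First I would rewrite the two integrands in~\eqref{NJabDef} without the operator $\im$. The spin scalar product is anti-linear in its first and linear in its second slot, so the factor $i/(k^0+\omega_{\alpha,\vec k}+i\eta)$ can be pulled out of the left slot at the cost of complex conjugation, becoming $-i/(k^0+\omega_{\alpha,\vec k}-i\eta)$; doing the same with $J_{\beta,\alpha}$ and using that $\hat Q(k)$ is symmetric with respect to the spin scalar product (i.e.\ $\Sl \hat Q(k)\,\xi\,\big|\,\zeta\Sr=\Sl \xi\,\big|\,\hat Q(k)\,\zeta\Sr$, which follows from $Q(x,y)^{*}=Q(y,x)$ and from the explicit form~\eqref{N62f}) one finds $\Sl\chi_\beta(\vec k)\,\big|\,\hat Q(k)\,\chi_\alpha(\vec k)\Sr=\overline{\Sl\chi_\alpha(\vec k)\,\big|\,\hat Q(k)\,\chi_\beta(\vec k)\Sr}$. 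Since the denominator $\big(k^{0}+\omega_{\alpha,\vec k}-i\eta\big)\big(k^{0}+\omega_{\beta,\vec k}-i\eta\big)$ is the \emph{same} for $J_{\alpha,\beta}$ and $J_{\beta,\alpha}$, the integrand of the sum is
\[
  \frac{-\,2\,\re\Sl\chi_\alpha(\vec k)\,\big|\,\hat Q(k)\,\chi_\beta(\vec k)\Sr}{\big(k^{0}+\omega_{\alpha,\vec k}-i\eta\big)\big(k^{0}+\omega_{\beta,\vec k}-i\eta\big)}\,.
\]

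The numerator is now real-valued, so $\im$ may be interchanged with the $k^{0}$-integration. Because $m_\alpha\neq m_\beta$ we have $\omega_{\alpha,\vec k}\neq\omega_{\beta,\vec k}$, and a partial-fraction decomposition together with the Sokhotski--Plemelj formula gives, as $\eta\searrow 0$,
\[
  \im\,\frac{1}{\big(k^{0}+\omega_{\alpha,\vec k}-i\eta\big)\big(k^{0}+\omega_{\beta,\vec k}-i\eta\big)}\;\longrightarrow\;\frac{\pi}{\omega_{\beta,\vec k}-\omega_{\alpha,\vec k}}\Big(\delta\big(k^{0}+\omega_{\alpha,\vec k}\big)-\delta\big(k^{0}+\omega_{\beta,\vec k}\big)\Big)\,.
\]
(Only continuity of $\hat Q$ near the mass shells is needed here; no derivatives of $\hat Q$ enter, in contrast to the diagonal currents.) Carrying out the $k^{0}$-integral then reduces $J_{\alpha,\beta}+J_{\beta,\alpha}$ to a $\vec k$-integral of the difference of $-2\re\Sl\chi_\alpha(\vec k)\,\big|\,\hat Q\,\chi_\beta(\vec k)\Sr$ evaluated on the two mass shells $k=k_\alpha:=\big(-\omega_{\alpha,\vec k},\vec k\big)$ and $k=k_\beta:=\big(-\omega_{\beta,\vec k},\vec k\big)$, divided by $\omega_{\beta,\vec k}-\omega_{\alpha,\vec k}$.

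On these mass shells $\hat Q$ simplifies: from $k_\alpha^{2}=m_\alpha^{2}$ and~\eqref{N62f} one has $\hat Q(k_\alpha)=a(m_\alpha^{2})\,\slashed{k}_\alpha/m_\alpha+b(m_\alpha^{2})$, and the algebraic Dirac equation~\eqref{NDiralg} gives $\slashed{k}_\alpha\,\chi_\alpha(\vec k)=m_\alpha\,\chi_\alpha(\vec k)$, so $\hat Q(k_\alpha)\,\chi_\alpha(\vec k)=(a+b)(m_\alpha^{2})\,\chi_\alpha(\vec k)$; likewise $\hat Q(k_\beta)\,\chi_\beta(\vec k)=(a+b)(m_\beta^{2})\,\chi_\beta(\vec k)$. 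Using the symmetry of $\hat Q$ to move the operator onto $\chi_\alpha$ in the first term and leaving it on $\chi_\beta$ in the second, and that $a,b$ are real, I obtain
\[
  J_{\alpha,\beta}+J_{\beta,\alpha}=-\,2\pi\,\Big((a+b)(m_\alpha^{2})-(a+b)(m_\beta^{2})\Big)\int\frac{d^{3}k}{(2\pi)^{4}}\;\frac{\re\Sl\chi_\alpha(\vec k)\,\big|\,\chi_\beta(\vec k)\Sr}{\omega_{\beta,\vec k}-\omega_{\alpha,\vec k}}\,.
\]
Finally, property~(iii) of the state-stability Definition~\ref{Ndef611} says that $a+b$ takes the same value — namely $\inf_{\mathcal{C}^{\land}}(a+b)$ — on every mass shell, so $(a+b)(m_\alpha^{2})=(a+b)(m_\beta^{2})$ and the prefactor vanishes. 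Hence $J_{\alpha,\beta}+J_{\beta,\alpha}=0$, and summing over $\alpha\neq\beta$ yields the claim.

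The only genuinely delicate point is the justification of the distributional limit and of exchanging $\eta\searrow 0$ with the $k$-integration. Here this is unproblematic precisely because — unlike for the diagonal currents $J_{\alpha,\alpha}$, whose non-smoothness on the mass shell is what produces the semi-derivatives $c_\beta$ — the off-diagonal contributions involve $\hat Q$ only through its continuous values at the two \emph{distinct} mass shells, together with the standing decay assumptions on $\chi_\alpha$ and $\chi_\beta$; I expect this routine estimate to be the main (but mild) obstacle.
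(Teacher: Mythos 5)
Your proof is correct and follows essentially the same route as the paper: the same ingredients (the Sokhotski--Plemelj limit of the pole factors, the symmetry of $\hat Q$, the algebraic Dirac equation~\eqref{NDiralg} together with the form~\eqref{N62f}, and the minimality condition~\eqref{Nabmin}) carry the argument. The only difference is bookkeeping: by symmetrizing in $(\alpha,\beta)$ first you make the numerator real, so the principal-value contributions --- the paper's term~\eqref{Naneqb1}, which there cancels by antisymmetry under $\alpha \leftrightarrow \beta$ --- never appear, and only the $\delta$-terms evaluated on the two mass shells remain, which vanish by~\eqref{Nabmin} exactly as in the paper's treatment of~\eqref{Naneqb2}.
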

\Proof
In the case~$\alpha \neq \beta$, we know that~$\omega_{\alpha,\vec k} \neq \omega_{\beta,\vec k}$,
so that in~\eqref{NJabDef} there are two single poles at~$k^0 = -\omega_{\alpha, \vec{k}}-i\eta$
and~$k^0 = -\omega_{\beta, \vec{k}}+i\eta$. This makes it possible to take the limit~$\eta \rightarrow 0$
using the formula
\[ \lim_{\eta \searrow 0} \frac{1}{x \pm i \eta} = \frac{\text{PP}}{x} \mp i \pi \, \delta(x) \]
(where PP denotes the principal value). We thus obtain
\begin{align*}
J_{\alpha, \beta} \;=& -\im \int_{M} \frac{d^4k}{(2 \pi)^4} \, 
\frac{\text{PP}}{k^0 + \omega_{\alpha,\vec k}}  \:\frac{\text{PP}}{k^0 + \omega_{\beta,\vec k}} \;
\Sl  \chi_\alpha(\vec k ) \:|\: \hat Q(k) \:  \chi_\beta(\vec k ) \Sr \\
&- \im \int_{M} \frac{d^4k}{(2 \pi)^4} \, \Sl  \chi_\alpha(\vec k ) \: \big(- i \pi\, \delta(k^0 + \omega_{\alpha,\vec k}) \big) \:|\:  \hat Q(k) \: \chi_\beta(\vec k ) \: \frac{\text{PP}}{k^0 + \omega_{\beta,\vec k}} \Sr \\
&- \im \int_{M} \frac{d^4k}{(2 \pi)^4} \, \Sl  \chi_\alpha(\vec k ) \: \frac{\text{PP}}{k^0 + \omega_{\alpha,\vec k}}  \:|\: \hat Q(k) \:  \chi_\beta(\vec k ) \: \big(i \pi\, \delta(k^0 + \omega_{\beta,\vec k}) \big) \Sr \:.
\end{align*}
Carrying out the $k^0$-integration in the last two lines gives
\begin{align}
J_{\alpha, \beta} \;=& -\im \int_{M} \frac{d^4k}{(2 \pi)^4} \, 
\frac{\text{PP}}{k^0 + \omega_{\alpha,\vec k}}  \:\frac{\text{PP}}{k^0 + \omega_{\beta,\vec k}} \;
\Sl  \chi_\alpha(\vec k ) \:|\: \hat Q(k) \:  \chi_\beta(\vec k ) \Sr \notag \\
&+\pi \re \int_{M} \frac{d^3k}{(2 \pi)^4} \, \Sl  \chi_\alpha(\vec k ) \:  |\:  \hat
Q \big( -\omega_{\alpha,\vec k}, \vec{k} \big) \: \chi_\beta(\vec k ) \: \frac{\text{PP}}{-\omega_{\alpha,\vec k} + \omega_{\beta,\vec k}} \Sr \notag \\
&+\pi \re \int_{M} \frac{d^3k}{(2 \pi)^4} \, \Sl  \chi_\alpha(\vec k ) \: \frac{\text{PP}}{-\omega_{\beta,\vec k} + \omega_{\alpha,\vec k}}  \:|\: \hat Q\big( -\omega_{\beta,\vec k}, \vec{k} \big) \:  \chi_\beta(\vec k ) \Sr \notag \\
=& -\im \int_{M} \frac{d^4k}{(2 \pi)^4} \, 
\frac{\text{PP}}{k^0 + \omega_{\alpha,\vec k}}  \:\frac{\text{PP}}{k^0 + \omega_{\beta,\vec k}} \;
\Sl  \chi_\alpha(\vec k ) \:|\: \hat Q(k) \:  \chi_\beta(\vec k ) \Sr \label{Naneqb1} \\
&+\pi \re \int_{M} \frac{d^3k}{(2 \pi)^4} \:\frac{\text{PP}}{\omega_{\alpha,\vec k} - \omega_{\beta,\vec k}} \notag \\
&\qquad \qquad \quad \times \Sl  \chi_\alpha(\vec k ) \:  |\:  \Big( \hat{Q} \big( -\omega_{\beta,\vec k}, \vec{k} \big) - \hat{Q} \big( -\omega_{\alpha,\vec k}, \vec{k} \big) \Big) \chi_\beta(\vec k ) \: \Sr \:. \label{Naneqb2}
\end{align}
Obviously, the contribution~\eqref{Naneqb1} is anti-symmetric when exchanging~$\alpha$ and~$\beta$.
In the contribution~\eqref{Naneqb2}, on the other hand,
we can use the Dirac equation~\eqref{NDiralg} together with~\eqref{N62f} to rewrite
the spin scalar product as
\[ \Sl  \chi_\alpha(\vec k ) \:  |\:  \big( (a+b)(m_\beta^2) - (a+b)(m_\alpha^2) \big) \chi_\beta(\vec k ) \: \Sr \:, \]
and this vanishes by~\eqref{Nabmin}. This gives the result.
\QED

Using this lemma, our conserved integral~\eqref{NJsum} simplifies to
\beq \label{NJsum2}
J = \sum_{\beta=1}^3 J_{\beta, \beta} \:.
\eeq
We now compute~$J_{\beta, \beta}$. First,
\begin{align}
& J_{\beta, \beta}
= \lim_{\eta \searrow 0} \;\im \int_{M} \frac{d^4k}{(2 \pi)^4} \, \Sl  \chi_\beta(\vec k ) \: \frac{ i}{k^0 + \omega_{\beta,\vec k} + i \eta}  \:|\: \hat Q(k) \:  \chi_\beta(\vec k ) \: \frac{- i}{k^0 + \omega_{\beta,\vec k} - i \eta}   \Sr \nonumber \\
&\:= \lim_{\eta \searrow 0} \int_{M} \frac{d^4k}{(2 \pi)^4} \, \Sl  \chi_\beta(\vec k ) \:|\: \hat Q(k) \: \frac{1}{2i} \bigg( \frac{-1}{( k^0 + \omega_{\beta,\vec k} - i \eta)^2} - \frac{-1}{( k^0 + \omega_{\beta,\vec k} + i \eta)^2} \bigg)
\chi_\beta(\vec k ) \: \Sr \nonumber  \\
&\,\overset{(\star)}{=} -\lim_{\eta \searrow 0} \int \frac{d^3k}{(2 \pi)^2} \int_{-\infty}^\infty \frac{dq}{2 \pi} \, \Sl 
\chi_\beta(\vec k ) \: |\: \hat Q 
\big(q - \omega_{\beta,\vec k}, \vec k \big) \: \frac{1}{2i} \bigg( \frac{1}{( q + i \eta)^2} - \frac{1}{( q - i \eta)^2} \bigg)
\chi_\beta(\vec k ) \: \Sr \nonumber  \\
&\:= - 2 \lim_{\eta \searrow 0} \int \frac{d^3k}{(2 \pi)^3} \, \Sl  \chi_\beta(\vec k ) \: |\: \int_{-\infty}^\infty \frac{dq}{2 \pi} \:\bigg( \hat Q\big(q - \omega_{\beta,\vec k}, \vec k \big) \;  \frac{q \, \eta}{( q^2 +\eta^2)^2} \bigg)  \chi_\beta(\vec k ) \: \Sr \:,  \label{NCurrentCons1}
\end{align}
where in~$(\star)$ we introduced the variable~$q=k^0 + \omega_{\beta,\vec k}$.
We now use~\eqref{Nsemidiff} to expand~$\hat{Q}$ for small~$q$ according to
\begin{align*}
&\hat{Q}\big( q- \omega_{\beta,\vec k} , \vec k\big) \\
&=  \, \hat Q\big(- \omega_{\beta,\vec k} , \vec k \big)
+ q  \, \Theta(q) \:\partial^+_\omega \hat Q\big( - \omega_{\beta,\vec k} , \vec k \big)
+ q \, \Theta(-q)  \: \partial^-_\omega\hat Q\big( - \omega_{\beta,\vec k} , \vec k \big) + o(q)
\end{align*}
(where $o(q)$ is the usual remainder term).
Substituting this Taylor expansion into~\eqref{NCurrentCons1}, the constant term
of the expansion drops out because the integrand is odd.
For the left and right derivatives, the integral can be carried out explicitly using that
\beq \label{Nexint}
\int_0^\infty \frac{q^2 \,\eta}{(q^2+\eta^2)^2} \, dq = \frac{\pi}{4}
= \int_{- \infty}^0 \frac{q^2 \,\eta}{(q^2+\eta^2)^2} \, dq \:.
\eeq
Thus, disregarding the remainder term, we obtain
\beq \label{NUSymm10}
J_{\beta, \beta}= - \frac{1}{4} \int \frac{d^3k}{(2 \pi)^3} \, \Sl  \chi_\beta(\vec k ) \: |\: 
\Big( (\partial^+_\omega + \partial^-_\omega) \hat Q\big( - \omega_{\beta,\vec k} , \vec k \,\big) \Big)
 \chi_\beta(\vec k ) \: \Sr \:.
\eeq
This formula corresponds to the result of Lemma~\ref{Nlemmagen} 
in our setting where~$\hat{Q}(k)$ is not differentiable on the mass shells.

It remains to analyze the remainder term. Naively, the integrated remainder term is of the order~$\eta$ and
should thus vanish in the limit~$\eta \searrow 0$. This could indeed be proved if
we knew for example that the
function~$\hat{Q}( \,.\, - \omega_{\beta,\vec k}, \vec k )$ is integrable.
However, since~$\hat{Q}$ is only defined on the lower mass cone (see Definition~\ref{Ndef611}),
such arguments cannot be applied. Our method for avoiding this technical problem is to
work with a convergence-generating factor with compact support in momentum space. To this end,
we choose a non-negative test function~$\hat{g} \in C^\infty_0((-1,1))$ with~$\hat{g}(-\omega)=\hat{g}(\omega)$
for all~$\omega \in \R$ and~$\int_\R \hat{g}(\omega)\, d\omega =2 \pi$.
For given~$\sigma >0$ we set
\[ \hat{g}_\sigma(\omega) = \frac{1}{\sigma}\: \hat{g} \Big( \frac{\omega}{\sigma} \Big) \qquad \text{and} \qquad
g_\sigma(t) = \int_{-\infty}^\infty \frac{d\omega}{2 \pi}\: \hat{g}_\sigma(\omega)\: e^{-i \omega t} \:. \]
In the limit~$\sigma \searrow 0$, the functions~$g_\sigma(t)$ go over to the constant function one.

\begin{Lemma} \label{Nlemmacur1} Replacing~\eqref{NCurreg1} by
\beq \label{NJnew}
J = \lim_{\sigma \searrow 0} \int_{t \geq 0} d^4x \int_{t < 0} d^4y \:g_\sigma(x^0) \: g_\sigma(y^0)\:
\im \big( \Sl \psi^u(y) \,|\, Q(y,x) \,\psi^u(x) \Sr \big) \:,
\eeq
the resulting function~$J$ is of the form~\eqref{NJsum2} with~$J_{\beta, \beta}$
as given by~\eqref{NUSymm10}.
\end{Lemma}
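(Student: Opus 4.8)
The plan is to rerun the computation of Lemma~\ref{Nlemmaoffdiagonal}, of the subsequent lemma on the off-diagonal currents, and of~\eqref{NCurreg1}--\eqref{NUSymm10}, the only change being that the regularized Heaviside function~$\Theta_\eta$ is everywhere replaced by the band-limited function~$\Theta\cdot g_\sigma$, and to check that the limit~$\sigma\searrow0$ can be taken at each step. Since~$Q$ depends only on~$y-x$, the $y$-integral in~\eqref{NJnew} is again a convolution, so Plancherel's theorem gives, in analogy with~\eqref{NUSymm2} and~\eqref{Nhatppm},
\[
J = \lim_{\sigma\searrow0}\; \im \int \frac{d^4k}{(2\pi)^4}\, \Sl\, \hat\psi^+_\sigma(k) \;\big|\; \hat Q(k)\, \hat\psi^-_\sigma(k) \,\Sr\,,
\qquad
\hat\psi^\pm_\sigma(k) := \int \frac{d\omega}{2\pi}\; h_\sigma(\pm\omega)\; \hat\psi^u\big( k-(\omega,\vec 0) \big)\,,
\]
where~$h_\sigma:=\widehat{\Theta g_\sigma}$ is the Fourier transform of~$\Theta\cdot g_\sigma$. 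The decisive structural fact is that~$h_\sigma=\tfrac1{2\pi}\,\hat\Theta*\hat g_\sigma$, with~$\hat\Theta(\omega)=\pi\delta(\omega)+i\,\mathrm{PP}\tfrac1\omega$ the limit of~\eqref{NhatT}, is a \emph{smooth} function whose real part~$\re h_\sigma=\tfrac12\hat g_\sigma$ is \emph{supported in}~$(-\sigma,\sigma)$, while~$h_\sigma(\omega)\to i/(\omega+i0)$ as~$\sigma\searrow0$. Inserting~\eqref{NUSymm6} and carrying out the $\omega$-integral gives~$\hat\psi^\pm_\sigma(k)=\big(\chi_\beta(\vec k)\,h_\sigma(\pm(k^0+\omega_{\beta,\vec k}))\big)_{\beta=1,2,3}$, so that~$J=\sum_{\alpha,\beta}J^\sigma_{\alpha,\beta}$ with~$J^\sigma_{\alpha,\beta}$ the obvious $\sigma$-analog of~\eqref{NJabDef}.

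Next I would show that the off-diagonal terms~$\alpha\neq\beta$ drop out. For such terms the two factors in~$J^\sigma_{\alpha,\beta}$ are smoothings of poles at the \emph{distinct} points~$k^0=-\omega_{\alpha,\vec k}$ and~$k^0=-\omega_{\beta,\vec k}$, so no pinching of the contour occurs; hence one may pass to the limit~$\sigma\searrow0$ under the $k$-integral (dominated convergence, with~$\vec k$ on compacta) and recover precisely the principal-value and $\delta$-function expressions in~\eqref{Naneqb1}--\eqref{Naneqb2}. From here the argument given after Lemma~\ref{Nlemmaoffdiagonal} applies verbatim: the contribution~\eqref{Naneqb1} is antisymmetric under~$\alpha\leftrightarrow\beta$ and cancels in the sum, while in~\eqref{Naneqb2} the Dirac equation~\eqref{NDiralg} together with~\eqref{N62f} turns the spin scalar product into a multiple of~$(a+b)(m_\beta^2)-(a+b)(m_\alpha^2)$, which vanishes by the minimality~\eqref{Nabmin}. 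Thus~$\sum_{\alpha\neq\beta}J^\sigma_{\alpha,\beta}\to0$ and~$J=\sum_\beta\lim_{\sigma\searrow0}J^\sigma_{\beta,\beta}$, so the reduction~\eqref{NJsum2} persists.

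It remains to compute~$J^\sigma_{\beta,\beta}$. Substituting~$q=k^0+\omega_{\beta,\vec k}$ exactly as in the passage to~\eqref{NCurrentCons1}, one obtains an expression of the form~$\int\tfrac{d^3k}{(2\pi)^3}\,\Sl\,\chi_\beta(\vec k)\,\big|\,\big(\int\tfrac{dq}{2\pi}\,K_\sigma(q)\,\hat Q(q-\omega_{\beta,\vec k},\vec k)\big)\,\chi_\beta(\vec k)\,\Sr$, where~$K_\sigma$ is a real, \emph{odd} function of~$q$ built from~$h_\sigma$ and — because~$\re h_\sigma=\tfrac12\hat g_\sigma$ — \emph{supported in}~$(-\sigma,\sigma)$; it plays the role of the kernel~$-2q\eta/(q^2+\eta^2)^2$ of~\eqref{NCurrentCons1}, and integrated against the one-sided linear parts of the Taylor expansion of~$\hat Q$ it reproduces a normalization of the same form as~\eqref{Nexint}. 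Inserting the one-sided Taylor expansion~\eqref{Nsemidiff} of~$\hat Q$ about~$q=0$: the constant term drops out because the integrand is odd in~$q$; the terms~$\propto q\,\Theta(\pm q)$ reproduce exactly~$-\tfrac14(\partial^+_\omega+\partial^-_\omega)\hat Q(-\omega_{\beta,\vec k},\vec k)$ as in~\eqref{NUSymm10}; and the remainder~$o(q)$ is now integrated only over~$|q|<\sigma$, a set on which (for~$\vec k$ in a compact set and~$\sigma$ small, since~$m_\beta>0$) the argument~$q-\omega_{\beta,\vec k}$ stays inside the lower mass cone~$\mathcal C^\wedge$ of Definition~\ref{Ndef611} where~$\hat Q$ is defined, so that the remainder contributes~$o(1)$ as~$\sigma\searrow0$. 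Hence~$\lim_{\sigma\searrow0}J^\sigma_{\beta,\beta}$ equals~\eqref{NUSymm10}, which together with~\eqref{NJsum2} is the assertion.

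The main obstacle I expect is exactly this last point: deriving, from the mere semi-differentiability of~$a$ and~$b$ on the mass shells, a bound on the Taylor remainder of~$\hat Q(\,\cdot\,-\omega_{\beta,\vec k},\vec k)$ on the shrinking interval~$|q|<\sigma$ that is uniform in~$\vec k$ on compacta, and then justifying by dominated convergence the interchange of~$\lim_{\sigma\searrow0}$ with the $\vec k$-integration — handling large~$|\vec k|$, where~$q-\omega_{\beta,\vec k}$ can leave~$\mathcal C^\wedge$ even for small~$q$, by the rapid decay of~$\chi_\beta$ or a momentum cutoff. These are precisely the difficulties that could not be overcome with the factor~$e^{-\eta|t|}$, because there~$\hat Q$ would have to be evaluated outside its domain of definition and need not be globally integrable, which is why the band-limited convergence factor~$g_\sigma$ is used here.
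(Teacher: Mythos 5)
Your proposal is correct in substance and follows the same overall skeleton as the paper's proof (re-run the momentum-space reduction of Lemma~\ref{Nlemmaoffdiagonal} with the new convergence factor, check that the off-diagonal cancellation survives the limit, and show that the diagonal terms reproduce~\eqref{NUSymm10} with a now-harmless remainder), but the technical device you use for the diagonal terms is genuinely different. The paper represents $g_\sigma(x^0)$ and $g_\sigma(y^0)$ as Laplace-type superpositions of the old exponential factors, $g_\sigma(x^0)=\tfrac{1}{\sigma}\int_0^\infty h(\eta/\sigma)\,e^{-\eta x^0}d\eta$, so that the computation literally reduces to the already-derived formula~\eqref{NCurrentCons1} with the kernel $q\eta/(q^2+\eta^2)^2$ replaced by an $(\eta,\tilde\eta)$-average of $q(\eta+\tilde\eta)/\bigl(2(q^2+\eta^2)(q^2+\tilde\eta^2)\bigr)$, and the explicit elementary integral $\pi/4$ then gives the normalization of~\eqref{Nexint}; the vanishing of the remainder and the fact that $\hat Q$ is only needed on its domain are then argued from the compact support of $\hat g$. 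You instead compute the smoothed kernel directly as $\im\bigl[h_\sigma(-q)^2\bigr]$ with $h_\sigma=\widehat{\Theta g_\sigma}$, and the observation that $\re h_\sigma=\tfrac12\hat g_\sigma$ is supported in $(-\sigma,\sigma)$ makes the kernel itself compactly supported in $|q|<\sigma$; this renders both the suppression of the Taylor remainder and the confinement of the argument of $\hat Q$ to the lower mass cone of Definition~\ref{Ndef611} completely transparent, which is arguably cleaner than the paper's route. The price is that the normalization constant is not read off from a known integral: you assert that the odd kernel reproduces the $\pi/4$-type constants of~\eqref{Nexint}, and while this is true (by the scaling $h_\sigma(\omega)=\sigma^{-1}h_1(\omega/\sigma)$ together with $\int\hat g=2\pi$, i.e.\ $g(0)=1$, a short distributional computation gives $\int_0^\infty q\,\im[h_1(-q)^2]\,dq=-\pi/2$, matching the $\eta$-kernel), you should carry out this short check rather than leave it as an assertion. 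Your treatment of the off-diagonal terms and of the large-$|\vec k|$/uniformity issue is at the same level of rigor as (indeed slightly more explicit than) the paper's ``straightforward computation with the obvious replacements''.
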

\Proof Again rewriting~\eqref{NJnew} in momentum space and using that~$\hat{g}$ has compact support,
one sees that the resulting integrand of~$J_{\alpha, \beta}$ is well-defined for any~$\vec{k}$
for sufficiently small~$\sigma$.
In order to relate the functions~$g_\sigma$ in~\eqref{NJnew} to the
factor~$e^{-\eta x^0 + \eta y^0}$ in~\eqref{NCurreg1}, it is most convenient to work with
the Laplace transform. Thus we represent the functions~$g_\sigma$ in~\eqref{NJnew}
for~$x^0>0$ and~$y^0<0$ as
\[ g_\sigma(x^0) = \frac{1}{\sigma} \int_0^\infty h\Big( \frac{\eta}{\sigma} \Big)\, e^{-\eta x^0}\: d\eta
\qquad \text{and} \qquad
g_\sigma(y^0) = \frac{1}{\sigma} \int_0^\infty h\Big( \frac{\tilde{\eta}}{\sigma} \Big)\, e^{\tilde{\eta} y^0}\:
d\tilde{\eta} \:, \]
where~$h$ is the inverse Laplace transform of~$g$
(for basics on the Laplace transform see for example~\cite{davies}).
A straightforward computation shows that the result of Lemma~\ref{Nlemmaoffdiagonal}
remains valid with the obvious replacements.
The computation of~$J_{\beta, \beta}$, on the other hand, needs to be modified as follows. 
Formula~\eqref{NCurrentCons1} remains valid after the replacement
\[ \lim_{\eta \searrow 0} \;\cdots\; \frac{q \, \eta}{( q^2 +\eta^2)^2} \;\longrightarrow\;
 \lim_{\sigma \searrow 0} \frac{1}{\sigma^2} \int_0^\infty h\Big( \frac{\eta}{\sigma} \Big)\:d\eta
\int_0^\infty h\Big( \frac{\tilde{\eta}}{\sigma} \Big)\: d\tilde{\eta}
\;\cdots\; \frac{q \, (\eta+\tilde{\eta})}{2 ( q^2 +\eta^2) ( q^2 +\tilde{\eta}^2)} \:. \]
Substituting the Taylor expansion of~$\hat{Q}$, the first integral in~\eqref{Nexint}
is to replaced by the integral
\[ \int_0^\infty \frac{q^2 \, (\eta+\tilde{\eta})}{2 ( q^2 +\eta^2) ( q^2 +\tilde{\eta}^2)}  = \frac{\pi}{4} \]
(and similarly for the second integral in~\eqref{Nexint}).
In this way, one again obtains~\eqref{NUSymm10}, but now the remainder term vanishes in
the limit~$\sigma \searrow 0$. %
\QED

We now compute~$J_{\beta, \beta}$ more explicitly.
\begin{Lemma} \label{Nlemmacur2}
The currents~$J_{\beta, \beta}$ given by~\eqref{NUSymm10} can be written as
\beq \label{NJfinal}
J_{\beta, \beta} = - \frac{m_\beta \,c_\beta}{2} \int_{\R^3} \Sl \psi_\beta^u(x) | \gamma^0 \psi_\beta^u(x) \Sr\:
d^3x
\eeq
with the constants~$c_\beta$ as in~\eqref{NUSymm20}.
\end{Lemma}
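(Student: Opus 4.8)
The plan is to evaluate the $\vec k$-integrand in~\eqref{NUSymm10} pointwise, using the explicit form~\eqref{N62f} of $\hat Q$ together with the Dirac equation~\eqref{NDiralg}, and then to pass back to position space via Plancherel's theorem. First I would substitute $\hat Q(k) = a(k^2)\,k\slsh/|k| + b(k^2)$ into $\Sl\chi_\beta(\vec k)\,|\,\hat Q(k)\,\chi_\beta(\vec k)\Sr$ for a general frequency $k^0$ with $\vec k$ held fixed, split $k\slsh = (k^0+\omega_{\beta,\vec k})\,\gamma^0 + \slashed{k}_\beta$ with $k_\beta = (-\omega_{\beta,\vec k},\vec k)$, and use $\slashed{k}_\beta\,\chi_\beta(\vec k) = m_\beta\,\chi_\beta(\vec k)$ on both sides to obtain
\[
\Sl\chi_\beta\,|\,\hat Q(k)\,\chi_\beta\Sr = \frac{a(k^2)}{\sqrt{k^2}}\Big( (k^0+\omega_{\beta,\vec k})\,\Sl\chi_\beta\,|\,\gamma^0\chi_\beta\Sr + m_\beta\,\Sl\chi_\beta\,|\,\chi_\beta\Sr\Big) + b(k^2)\,\Sl\chi_\beta\,|\,\chi_\beta\Sr .
\]

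Next I would set $q := k^0 + \omega_{\beta,\vec k}$, note that $k^2 = m_\beta^2 - 2\omega_{\beta,\vec k}\,q + q^2$ (so $k^2 = m_\beta^2$, $\sqrt{k^2} = m_\beta$ and $d(k^2)/dq = -2\omega_{\beta,\vec k}$ at $q = 0$), and take the one-sided $q$-derivatives at $q = 0^\pm$ of the expression above. At $q = 0$ the bracket $\big((k^0+\omega_{\beta,\vec k})\Sl\chi_\beta|\gamma^0\chi_\beta\Sr + m_\beta\Sl\chi_\beta|\chi_\beta\Sr\big)/\sqrt{k^2}$ reduces to $\Sl\chi_\beta|\chi_\beta\Sr$, so differentiating the radial profiles $a$ and $b$ yields $\big(\partial_\omega^\pm a(m_\beta^2) + \partial_\omega^\pm b(m_\beta^2)\big)\Sl\chi_\beta|\chi_\beta\Sr$; the only other term, $a(m_\beta^2)$ times the $q$-derivative of the (smooth) bracket, is proportional to $\Sl\chi_\beta|\gamma^0\chi_\beta\Sr + (\omega_{\beta,\vec k}/m_\beta)\Sl\chi_\beta|\chi_\beta\Sr$, which vanishes by the on-shell identity $\Sl\chi_\beta|\gamma^0\chi_\beta\Sr = (k_\beta^0/m_\beta)\Sl\chi_\beta|\chi_\beta\Sr = -(\omega_{\beta,\vec k}/m_\beta)\Sl\chi_\beta|\chi_\beta\Sr$ (a consequence of $\{\slashed{k}_\beta,\gamma^0\} = 2k_\beta^0$ and~\eqref{NDiralg}). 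Adding the $q = 0^+$ and $q = 0^-$ contributions and comparing with~\eqref{NUSymm20} gives $\Sl\chi_\beta\,|\,\big((\partial_\omega^+ + \partial_\omega^-)\hat Q(-\omega_{\beta,\vec k},\vec k)\big)\chi_\beta\Sr = c_\beta\,\Sl\chi_\beta\,|\,\chi_\beta\Sr$.

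Finally I would insert this into~\eqref{NUSymm10}, use the same on-shell identity once more to express $\Sl\chi_\beta(\vec k)|\chi_\beta(\vec k)\Sr$ through the probability density $\Sl\chi_\beta(\vec k)|\gamma^0\chi_\beta(\vec k)\Sr = \chi_\beta(\vec k)^\dagger\chi_\beta(\vec k)$ (keeping track of the precise normalization fixed by~\eqref{NUSymm6}), and apply Plancherel's theorem: since $\chi_\beta$ is the spatial Fourier transform of $\psi_\beta^u$ on $\{t = 0\}$, the $\vec k$-integral $\int d^3k/(2\pi)^3\,(\cdots)$ turns into $\int_{\R^3}\Sl\psi_\beta^u(x)|\gamma^0\psi_\beta^u(x)\Sr\,d^3x$, which delivers~\eqref{NJfinal} with the constants $c_\beta$ of~\eqref{NUSymm20}; their non-negativity $c_\beta \geq 0$ follows from~\eqref{Nabmin}.

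The main obstacle is purely in the bookkeeping of constants and signs: one must keep clear which derivatives are genuinely one-sided (only those acting on $a$ and $b$, since $k\slsh/|k|$ and all mass-shell relations are smooth), account for the sign coming from the fact that raising $k^0$ on the negative-frequency shell lowers $k^2$, and correctly track the mass-shell normalization factor relating $\overline{\chi_\beta}\chi_\beta$ to $\chi_\beta^\dagger\chi_\beta$, so as to land exactly on the coefficient $-m_\beta c_\beta/2$. The conceptual content, however, is clean and is carried entirely by the Dirac equation: the ``geometric'' part $a\,k\slsh/|k|$ of $\hat Q$ contributes nothing to the surface layer integral because its frequency derivative is annihilated on shell, so that only the non-differentiable radial combination $a + b$ survives.
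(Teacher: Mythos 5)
Your strategy coincides with the paper's proof of Lemma~\ref{Nlemmacur2}: substitute the explicit form~\eqref{N62f} of~$\hat{Q}$, use the Dirac equation~\eqref{NDiralg} to show that the smooth contribution $a\,\partial_{k^0}\big(\slashed{k}/|k|\big)$ drops out on the mass shell and to trade $\Sl \chi_\beta | \chi_\beta \Sr$ for $\Sl \chi_\beta | \gamma^0 \chi_\beta \Sr$, and finish with Plancherel's theorem; these are exactly the steps around~\eqref{NUSymm3}--\eqref{NUSymm19} and~\eqref{Nrel2}.

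There is, however, a concrete quantitative gap in the step where you convert the one-sided $q$-derivatives into the constants of~\eqref{NUSymm20}. For the~$c_\beta$ to be constants (as they must be in~\eqref{NJfinal} and in Theorem~\ref{Nthmcurrentmink}), the symbols $\partial^\pm_\omega a(m_\beta^2)$, $\partial^\pm_\omega b(m_\beta^2)$ have to be read as one-sided derivatives of the Lorentz-invariant profiles with respect to their argument; the one-sided $q$-derivatives of $a(k^2(q))$, $b(k^2(q))$ at $q=0$ then carry the chain-rule factor $d(k^2)/dq\,|_{q=0}=-2\omega_{\beta,\vec k}$, which you record but then drop. The correct intermediate identity is $\Sl \chi_\beta \,|\, (\partial^+_\omega+\partial^-_\omega)\hat Q(-\omega_{\beta,\vec k},\vec k)\,\chi_\beta \Sr = -2\omega_{\beta,\vec k}\, c_\beta\, \Sl \chi_\beta | \chi_\beta \Sr$, not $c_\beta\,\Sl \chi_\beta | \chi_\beta \Sr$ (this is~\eqref{NUSymm3} combined with the on-shell cancellation, leading to~\eqref{NUSymm19}). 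The omitted factor is not mere constant bookkeeping: inserting your identity into~\eqref{NUSymm10} and then using $\Sl \chi_\beta | \chi_\beta \Sr = -(m_\beta/\omega_{\beta,\vec k})\,\Sl \chi_\beta | \gamma^0 \chi_\beta \Sr$ from~\eqref{Nrel2} leaves
\[ J_{\beta,\beta} \;=\; \frac{m_\beta\, c_\beta}{4}\int \frac{d^3k}{(2\pi)^3}\:\frac{1}{\omega_{\beta,\vec k}}\: \Sl \chi_\beta(\vec k) \,|\, \gamma^0 \chi_\beta(\vec k) \Sr \:, \]
and the residual momentum-dependent weight $1/\omega_{\beta,\vec k}$ (besides the wrong sign and the $1/4$ in place of $1/2$) cannot be removed by Plancherel, so~\eqref{NJfinal} does not follow from the chain as written. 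Restoring the factor $-2\omega_{\beta,\vec k}$ makes the $\omega_{\beta,\vec k}$ produced by~\eqref{Nrel2} cancel exactly, giving $-\frac{m_\beta c_\beta}{2}\int\frac{d^3k}{(2\pi)^3}\,\Sl \chi_\beta | \gamma^0 \chi_\beta \Sr$, which Plancherel converts into~\eqref{NJfinal}; with this correction your argument is precisely the paper's proof.
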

\Proof Using~\eqref{N62f} and applying the chain rule for semi-derivatives, we obtain
\beq \label{NUSymm3}
\partial^\pm_\omega \, \hat Q \Big( - \omega_{\beta,\vec k}, \vec k \Big) 
= - 2 \omega_{\beta,\vec k} \Big(\partial^\pm_\omega  a(k_-^2) \frac{\slashed{k}_-}{| k_- | } + \partial^\pm_\omega b(k_-^2) \Big) + a(k_-^2) \, \frac{\partial}{\partial k^0} \Big( \frac{\slashed{k} }{|k|} \Big) \Big|_{k=k_-},
\eeq
where we set~$k_- = (- \omega_{\beta,\vec k} , \vec k )$ and~$|k_-| = \sqrt{k_-^2}=m_\beta$.
This formula can be further simplified when taking the expectation value with the spinor~$\chi_\beta(\vec{k})$:
In the last summand in~\eqref{NUSymm3}, we first compute the~$k$-derivative,
\[ \frac{\partial}{\partial k^0} \frac{\slashed{k} }{|k|} \Big|_{k = k_-} = \frac{\gamma^0}{m_\beta} 
- \frac{\slashed{k}_-}{|k_-|^3} \, k_-^0 \:. \]
Taking the expectation value with the spinor~$\chi_\beta(\vec{k})$ and using the Dirac equation
\[ (\slashed{k}_- - m_\beta) \chi_\beta(\vec{k})=0\:, \]
we obtain the relations
\begin{gather}
\Sl \chi_\beta(\vec{k}) | \slashed{k}_-\, \chi_\beta(\vec{k}) \Sr = m_\beta\: \Sl \chi_\beta(\vec{k}) | \chi_\beta(\vec{k}) \Sr \\
2 m_\beta \;\Sl \chi_\beta(\vec{k}) | \gamma^0 \chi_\beta(\vec{k}) \Sr
= \Sl \chi_\beta(\vec{k}) | \big\{ \slashed{k}_-, \gamma^0 \big\} \chi_\beta(\vec{k}) \Sr
= -2 \omega_{\beta,\vec k} \, \Sl \chi_\beta(\vec{k}) | \chi_\beta(\vec{k}) \Sr \:. \label{Nrel2}
\end{gather}
In this way, the last summand in~\eqref{NUSymm3} gives zero.
In the remaining first summand in~\eqref{NUSymm3}, we again employ the Dirac
equation~$(\slashed{k}_- - m_\beta) \chi_\beta(\vec{k})=0$ to obtain
\begin{align*}
\bigg(\partial^\pm_\omega  a(k_-^2) \frac{\slashed{k}_-}{| k_- | }
+ \partial^\pm_\omega b(k_-^2) \bigg) \chi_\beta(\vec{k}) =
\partial^\pm_\omega  \Big(a(k_-^2) + b(k_-^2) \Big) \chi_\beta(\vec{k}) \:.
\end{align*}
We conclude that
\beq  \label{NUSymm19}
J_{\beta, \beta} = \frac{1}{2} \, c_\beta    \int \frac{d^3k}{(2 \pi)^3} \,\omega_{\beta,\vec k}  \:
\Sl \chi_\beta(\vec k) |  \chi_\beta(\vec k ) \Sr
\eeq
with~$c_\beta$ as in~\eqref{NUSymm20}.
We finally use~\eqref{Nrel2} and apply Plancherel's theorem.
\QED

Combining Lemma~\ref{Nlemmacur1} and Lemma~\ref{Nlemmacur2} 
gives the conservation law~\eqref{Nthmcurr1}. This concludes
the proof of Theorem~\ref{Nthmcurrentmink}.

\subsection{Clarifying Remarks} \label{Nsecremark}
The following remarks explain and clarify various aspects of
the above constructions and results.
\begin{Remark}\label{Nremdiscuss}{\em{ {\Remt{(Differentiability of variations)}} 
We now explain in which sense the the differentiability assumption 
on the function~$\ell \circ \Phi$ in Theorem~\ref{Nthmcurrent} is satisfied.
First, the above computations show that, working with the specific form of~$\hat{Q}$ in the continuum limit,
the $\tau$-derivative exists and is finite.
However, this does not necessarily imply that for any UV regularization,
the corresponding local minimizers~$(\H, \F, \rho^\varepsilon)$
also satisfy the differentiability assumptions on the function~$\ell \circ \Phi$ in
Theorem~\ref{Nthmcurrent}. Indeed, thinking of a lattice regularization,
we expect that the function~$\ell \circ \Phi$ with~$\Phi$ according to~\eqref{Nvarunit}
and~\eqref{NUtaudef} will typically {\em{not}} be continuously differentiable in~$\tau$
(because in this case, $\ell$ is a sum of terms involving the Lagrangian, which is only
Lipschitz continuous). In order to bypass this technical problem, 
for a given local minimizer~$(\H, \F, \rho^\varepsilon)$
one can modify~$\Phi$ such as to obtain a variation~$\Phi^\varepsilon$ for which the
function~$\ell \circ \Phi^\varepsilon$
is continuously differentiable in~$\tau$.
For this modified variation, we have the conservation law of Theorem~\ref{Nthmcurrent}.
The strategy is to choose the~$\Phi^\varepsilon$ for every~$\varepsilon>0$
in such a way that in the limit~$\varepsilon \searrow 0$, the variations converge
in a suitable weak topology to the variation~$\Phi_\tau$ as given by~\eqref{Nvarunit}
and~\eqref{NUtaudef}. In non-technical terms, we modify~$\Phi_\tau$ by ``microscopic fluctuations''
in such a way that the functions~$\ell \circ \Phi^\varepsilon$ become differentiable in~$\tau$
for all~$\varepsilon>0$. In the limit~$\varepsilon \searrow 0$, the microscopic fluctuations
should drop out to give Theorem~\ref{Nthmcurrentmink}.

At present, this procedure cannot be carried out because, so far,
no local minimizers~$(\H, \F, \rho^\varepsilon)$ have been constructed which describe
regularized Dirac sea configurations. The difficulty is to arrange the regularization in such a way
that the EL equations are satisfied without error terms. A first step towards the construction of such ``optimal
regularizations'' is given in~\cite{reg}. %
\QEDrem }}
\end{Remark}

\begin{Remark}\label{Nremweights}{\em{ {\Remt{(Weight factors)}} 
As explained in~\cite[Section~2 and Appendix~A]{reg}, one may introduce positive
weight factors~$\rho_\beta$ into the ansatz~\eqref{NPvac},
\[ P(x,y) = \sum_{\beta=1}^3 \rho_\beta \int \frac{d^4k}{(2 \pi)^4}\: (\slashed{k}+m_\beta)\:
\delta \big(k^2-m_\beta^2 \big)\: e^{-ik(x-y)}\:. \]
The above analysis immediately extends to this situation simply by inserting suitable factors of~$\rho_\beta$
into all equations. In particular, the resulting conserved quantity~\eqref{NJfinal} becomes
\[ J_{\beta, \beta} = - \frac{\rho_\beta\, m_\beta \,c_\beta}{2} \int_{\R^3} \Sl \psi_\beta^u(x) | \gamma^0
\psi_\beta^u(x) \Sr\: d^3x \:. \]
Consequently, the conserved current in~\eqref{Nthmcurr1} is to be modified to
\[ \sum_{\beta=1}^3 \rho_\beta\, m_\beta\, c_\beta \int_{t=\text{const}} \!\!\!\!\!\!\!\!\Sl \psi_\beta^u(x)
| \gamma^0 \psi_\beta^u(x) \Sr\:d^3x\:. \]
The role of the weight factors in the interacting case will be explained in the next remark.
\QEDrem }} \end{Remark}

\begin{Remark}\label{Nreminteract}{\em{ {\Remt{(Interacting systems)}} 
We point out that for the derivation of Theorem~\ref{Nthmcurrentmink}, we
worked with the vacuum Dirac equations~\eqref{NDireqns}, so that no
interaction is present. In particular, the generations have an independent dynamics,
implying that current conservation holds separately for each generation, i.e.
\beq \int_{t=t_0} \!\!\!\!\Sl \psi_\beta^u(x) | \gamma^0 \psi_\beta^u(x) \Sr\:d^3x
= \int_{t=t_1} \!\!\!\!\Sl \psi_\beta^u(x) | \gamma^0 \psi_\beta^u(x) \Sr\:d^3x \quad
\text{for all~$\beta=1,2,3$}\:. \label{Nconssep}
\eeq
Let us now discuss the typical situation of a scattering process in which the
Dirac equations~\eqref{NDireqns} only hold asymptotically as~$t \rightarrow \pm \infty$.
In this case, choosing~$\Omega$ so large that it contains the interaction region,
one can compute the surface layer integrals again for the free Dirac equation
to obtain the conservation law~\eqref{Nthmcurr1}, where~$t_0$ lies in the past
and~$t_1$ in the future of the interaction region.
In this way, the conservation law of Theorem~\ref{Nthmcurrentmink}
immediately extends to interacting systems.

In this interacting situation, current conservation no longer holds for each
generation separately (thus~\eqref{Nconssep} is violated). Instead,
as a consequence of the Dirac dynamics, only the total charge
\beq \label{Ntotcharge}
\sum_{\beta=1}^3
\int_{t=\text{const}} \!\!\!\!\!\!\!\!\Sl \psi_\beta^u(x) | \gamma^0 \psi_\beta^u(x) \Sr\:d^3x
\eeq
is conserved.
In order for this conservation law to be compatible with~\eqref{Nthmcurr1}, we need
to impose that
\beq \label{Ncc1}
m_\alpha\, c_\alpha = m_\beta\, c_\beta \qquad \text{for all~$\alpha,\beta=1,2,3$}\:.
\eeq
This is a mathematical consistency condition which gives information on the possible
form of the distribution~$\hat{Q}(k)$ in the continuum limit (as specified in
Definition~\ref{Ndef611} above).
If weight factors are present (see Remark~\ref{Nremweights} above), this consistency condition
must be modified to
\beq \label{Ncc2}
\rho_\alpha\, m_\alpha\, c_\alpha = \rho_\beta\, m_\beta\, c_\beta \qquad \text{for all~$\alpha,\beta=1,2,3$}\:.
\eeq
The conditions~\eqref{Ncc1} and~\eqref{Ncc2} are crucial for the future project of
extending the state stability analysis in~\cite{vacstab} to systems involving neutrinos.
\QEDrem }} \end{Remark}

\begin{Remark}\label{Nremnorm}{\em{ {\Remt{(Normalization of the fermionic projector)}} 
The conservation law of Theorem~\ref{Nthmcurrentmink} has an important implication
for the normalization of the fermionic projector, as we now explain.
As worked out in detail in~\cite{norm}, there are two alternative normalization methods 
for the fermionic projector: the spatial normalization and the mass normalization.
In~\cite[Section~2.2]{norm} the advantages of the spatial normalization are discussed,
but no decisive argument in favor of one of the normalization methods is given.
Theorem~\ref{Nthmcurrentmink} decides the normalization problem in favor of the
spatial normalization. Namely, this theorem shows that the dynamics as described by the causal
action principle gives rise to a conservation law which in the continuum limit reduces
to the spatial integrals~\eqref{Nthmcurr1}. As explained in Remark~\ref{Nremweights} above,
the mathematical consistency to the Dirac dynamics implies that~\eqref{Nthmcurr1}
coincides with the conserved total charge~\eqref{Ntotcharge}.
The resulting conservation law is compatible with the spatial normalization, but contradicts
the mass normalization. We conclude that the spatial normalization of the fermionic projector
is indeed the correct normalization method which reflects the intrinsic conservation laws of the
causal fermion system.
\QEDrem }} \end{Remark}

\section{Example: Conservation of Energy-Momentum} \label{NsecexEM}
The conservation laws in Theorem~\ref{Nthmsymmgis2} also give rise to the
{\em{conservation of energy and momentum}}, as will be worked out in this section.

\subsection{Generalized Killing Symmetries and Conservation Laws}
In the classical Noether theorem, the conservation
laws of energy and momentum
are a consequence of space-time symmetries described most conveniently with
the notion of Killing fields. Therefore, one of our tasks is to extend this notion to the setting of causal
fermion systems. In preparation, we recall the procedure in the classical Noether theorem
from a specific point of view:
In the notion of a Killing field, one distinguishes the background geometry from the
additional particles and fields. The background geometry must have a symmetry
as described by the Killing equation. The additional particles and fields, however, do not
need to have any symmetries. Nevertheless, one can construct a symmetry of the whole system
by actively transporting the particles and fields along the flow lines of the Killing field.
The conservation law corresponding to this symmetry transformation gives rise to
the conservation of energy and momentum.

In a causal fermion system, there is no clear-cut distinction between the background geometry
and the particles and fields of the system, because all of these structures are inherent in the
underlying causal fermion system and mutually influence each other via the causal action principle.
Therefore, instead of working with a symmetry of the background geometry,
we shall work with the notion of an approximate symmetry. By actively transforming
those physical wave functions which do not respect the symmetry,
such an approximate symmetry again gives rise to an exact
symmetry transformation, to which our Noether-like theorems apply.

More precisely, one begins with a $C^1$-family of transformations~$(f_\tau)_{\tau \in(-\tau_{\max}, \tau_{\max})}$ of space-time,
\[ f_\tau \::\: M \rightarrow M \qquad \text{with} \qquad f_0 = \1 \:, \]
which preserve the universal measure in the sense that~$(f_\tau)_* \rho = \rho$.
This family can be regarded as the analog of the flow in space-time along a classical Killing field.
Moreover, one considers a family of unitary transformations~$(\scrU_\tau)_{\tau \in (-\tau_{\max}, \tau_{\max}) }$
on~$\H$ with the property that
\beq \label{NPropU}
\scrU_{-\tau} \,\scrU_\tau = \1 \qquad \text{for all~$\tau \in (-\tau_{\max}, \tau_{\max})$}\:, \eeq
and defines the variation
\beq \label{NEMEq3}
\Phi \::\: (-\tau_{\max}, \tau_{\max}) \times M \rightarrow \F \:,\qquad
\Phi(\tau,x) := \scrU_\tau \, x \, \scrU^{-1}_\tau \:.
\eeq
Combining these transformations should give rise to an
{\em{approximate symmetry}} of the wave evaluation operator~\eqref{Nweo}
in the sense that if we compare the transformation of the space-time point with the unitary transformation
by setting
\beq \label{NEdef}
E_\tau(u,x) := (\Psi u)\big(f_\tau(x) \big) - \big( \Psi \scrU^{-1}_\tau u \big)(x) \qquad (x \in M, u \in \H) \:,
\eeq
then the operator~$E_\tau : \H \rightarrow C^0(M, SM)$ should be so small
that the first variation is well-defined in the continuum limit (for details see Section~\ref{NCorrDiracEM} below).
There are various ways in which this smallness condition could be
formulated. We choose a simple method which is most convenient for our purposes.

\begin{Def} \label{Ndefkilling}
The transformation $(f_\tau)_{\tau \in (-\tau_{\max},\tau_{\max})}$ is called a {\textbf{Killing symmetry with finite-dimensional support}} of the causal fermion system if it is a symmetry of the universal measure that preserves the trace (see Definition~\ref{Ndefsymmnc}) and if there exists a finite-dimensional subspace $K \subset \H$ and 
a family of unitary operators~$(\scrU_\tau)_{\tau \in (-\tau_{\max},\tau_{\max})}$ with the property~\eqref{NPropU} such that
\beq \label{NCondSmall}
E_\tau(u,x) = 0 \qquad \text{for all~$u \in K^\perp$ and~$x \in M$}\:.
\eeq
\end{Def}

We now formulate a general conservation law.
\begin{Thm} \label{NThmKillingCons} Let $\rho$ be a local minimizer (see Definition~\ref{Ndeflocmin}) and $(f_\tau)_{\tau \in (-\tau_{\max},\tau_{\max})}$ be a Killing symmetry of the causal fermion system. Then the following conservation law holds:
\begin{align} \label{NKillingConserve} \begin{split}
\frac{d}{d\tau} \int_\Omega d\rho(x) \int_{M \setminus \Omega} d\rho(y) \:\Big( &\L_\kappa \big( f_\tau(x), y\big) -  \L_\kappa \big(x, f_\tau( y)\big) \\
&-\L_\kappa \big( \Phi_\tau(x), y\big) + \L_\kappa \big( x,\Phi_\tau( y) \big) \Big) \Big|_{\tau=0} = 0 \:.
\end{split}
\end{align}
\end{Thm}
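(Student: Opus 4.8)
The plan is to obtain~\eqref{NKillingConserve} from the fundamental identity of Proposition~\ref{Nprpuseful2}, applied \emph{twice}: once to the variation $f_\tau$ (which is of the form~\eqref{NPhidef2} since $f_0=\1$, and which is automatically continuously differentiable in the sense of Definition~\ref{Ndefsymm2} because, by~\eqref{Nseprel} and trace preservation, $\ell\circ f_\tau$ and $\tr\circ f_\tau$ coincide with continuous functions of $x$ alone), and once to the unitary variation $\Phi_\tau$ of~\eqref{NEMEq3}. Subtracting the identity~\eqref{Nid1n}--\eqref{Nid3n} written for $\Phi_\tau$ from the one written for $f_\tau$, the surface layer integral in~\eqref{NKillingConserve} is rewritten as
\[
\int_\Omega \Big( \ell\big(f_\tau(x)\big) - \ell\big(\Phi_\tau(x)\big) \Big)\, d\rho(x)
\;-\; \int_M d\rho(x) \int_\Omega d\rho(y)\, \Big( \L_\kappa\big(f_\tau(x),y\big) - \L_\kappa\big(\Phi_\tau(x),y\big) \Big) \: .
\]
It then suffices to differentiate these two terms at $\tau=0$.

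The $f_\tau$-contributions are in fact $\tau$-independent: since $f_\tau\colon M\to M$ with $(f_\tau)_*\rho=\rho$, the change-of-variables formula gives $\int_M d\rho(x)\int_\Omega d\rho(y)\,\L_\kappa(f_\tau(x),y)=\int_M d\rho(x)\int_\Omega d\rho(y)\,\L_\kappa(x,y)$, while~\eqref{Nseprel} and trace preservation give $\ell(f_\tau(x))=\frac{\nu}{2}\,\tr(f_\tau(x))=\frac{\nu}{2}\,\tr(x)=\ell(x)$ for all $x\in M$. (As in the proof of Proposition~\ref{Nprpuseful2}, one first establishes, using non-negativity of $\L_\kappa$ and Tonelli's theorem, that all double integrals are well-defined — allowing the value $+\infty$ — and that the orders of integration over the possibly non-compact $M$ may be exchanged.) The $\Phi_\tau$-contributions are handled exactly as in Corollary~\ref{Ncorsymmlag}: by Lemma~\ref{NPhiSymmLag} the unitary conjugation $\Phi_\tau(x)=\scrU_\tau x\scrU_\tau^{-1}$ is a symmetry of the Lagrangian, so $\L_\kappa(\Phi_\tau(x),y)=\L_\kappa(x,\Phi_{-\tau}(y))$, and the definition~\eqref{Nelldef} of $\ell$ turns the remaining double integral into $\int_\Omega \ell(\Phi_{-\tau}(x))\,d\rho(x)$; thus the whole expression above equals $-\int_\Omega\big(\ell(\Phi_\tau(x))-\ell(\Phi_{-\tau}(x))\big)\,d\rho(x)$. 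Differentiating at $\tau=0$ — exchanging $\partial_\tau$ with the integral over the compact set $\Omega$ — and using that $\Phi_\tau$ leaves the trace invariant while every $x\in M$ is a local minimum of $\ell-\nu\,\tr$ (by~\eqref{NELgen} and Definition~\ref{Ndeflocmin}), the first-order condition gives $\partial_\tau\ell(\Phi_\tau(x))\big|_{\tau=0}=0$ pointwise, hence the claim.

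The hard part is the regularity required in this last step: both the interchange of $\partial_\tau$ with $\int_\Omega$ and the pointwise first-order condition presuppose that $\tau\mapsto\ell\circ\Phi$ is continuous with a continuous $\tau$-derivative, which is not automatic since $\L_\kappa$ is only Lipschitz-continuous (cf.\ Remark~\ref{Nremdiscuss}). This is precisely where the Killing-symmetry hypothesis enters: the condition $E_\tau\equiv 0$ on $K^\perp$ (see~\eqref{NEdef}) confines the discrepancy between the geometric transport $f_\tau$ and the unitary transport $\scrU_\tau$ to the finite-dimensional subspace $K$, and it is this finite-dimensionality that makes the relevant first variation well-defined in the continuum limit — the details being the subject of Section~\ref{NCorrDiracEM}. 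I expect the bookkeeping of the possibly divergent integrals over the non-compact space-time $M$ (again via Tonelli's theorem, as in Proposition~\ref{Nprpuseful2}) to be routine by comparison.
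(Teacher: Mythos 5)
Your argument is correct and is essentially the paper's own proof, just unrolled: the paper likewise combines Proposition~\ref{Nprpuseful2} with Lemma~\ref{NPhiSymmLag} for $\Phi_\tau$ and the push-forward/trace invariance for $f_\tau$ (routed through Theorem~\ref{Nthmsymmgis2} and Corollary~\ref{Ncorsymmlag} rather than computed by hand), and both routes reduce the claim to the vanishing first variation of $\int_\Omega \big(\ell(\Phi_\tau(x))-\ell(\Phi_{-\tau}(x))\big)\,d\rho(x)$ using trace preservation and the local minimality of $\ell-\nu\,\tr$. The regularity caveat you flag (continuous differentiability of $\ell\circ\Phi$) is equally present in the paper, where it sits in the hypotheses of Theorem~\ref{Nthmsymmgis2} and is deferred to the discussion following the theorem and to Remark~\ref{Nremdiscuss}.
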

\Proof Again using Lemma~\ref{NPhiSymmLag}, we know that the
variation~\eqref{NEMEq3} is a symmetry of the Lagrangian. Hence
\begin{align*}
\int_M d\rho(x) \int_\Omega d\rho(y) \, \L_\kappa \big(\Phi_\tau(x),y \big) =&  \int_M d\rho(x) \int_\Omega d\rho(y) \, \L_\kappa \big( x, \Phi_{-\tau}(y) \big) \\
=&  \int_\Omega d\rho(y) \: \ell \big(\Phi_\tau (y) \big) \:.
\end{align*}
Using this equation in Proposition~\ref{Nprpuseful2}, we obtain
\beq \label{Nphirel}
0 = \frac{d}{d \tau}  \int_\Omega d\rho(x) \, \int_{M\setminus \Omega} d\rho(y)\; \Big( \L_\kappa \big(\Phi_\tau(x),y
\big) - \L \big( x,\Phi_\tau(y) \big) \Big) \Big|_{\tau = 0} \:.
\eeq
For the transformations~$f_\tau$, on the other hand, we have the relations
\[ \int_M \L_\kappa \big(f_\tau(x) , y \big) \: d\rho(x) = \int_\F \L_\kappa(z,y)\: d\big( (f_\tau)_\ast \rho \big) (z) 
= \int_M  \L_\kappa(x,y) \:d\rho(x) \:, \]
where in the last step we used that $f_\tau$ is a symmetry of the universal measure. Since $f_\tau$ also
preserves the trace, it is a generalized integrated symmetry (see Definition~\ref{Ndefgis2}). Applying
Theorem~\ref{Nthmsymmgis2}, we obtain
\beq \label{Nfrel}
\frac{d}{d\tau} \int_\Omega d\rho(x) \int_{M \setminus \Omega} d\rho(y) \:\Big( \L_\kappa\big( f_\tau(x), y\big) -
\L_\kappa \big( x, f_\tau(y) \big) \Big) \Big|_{\tau=0} = 0 \:.
\eeq
Subtracting~\eqref{Nfrel} from~\eqref{Nphirel} gives the result.
\QED

We remark that the vector field~$w := \delta f$ is tangential to~$M$ and describes
a transformation of the space-time points. The variation~$\delta \Phi$, on the other hand,
is a vector field in~$\F$ along~$M$. It will in general not be tangential to~$M$.
The difference vector field~$v := w - \delta \Phi$ 
can be understood as an active transformation of all the objects in space-time which
do not have the space-time symmetry (similar to the parallel transport of the particles and fields along the flow lines of the Killing field in the classical Noether theorem as described above).
The variation of the integrand in~\eqref{NKillingConserve} can be rewritten as a variation
in the direction~$v$; for example,
\[ \frac{d}{d\tau} \Big( \L_\kappa \big( f_\tau(x), y\big) -\L_\kappa \big( \Phi_\tau(x), y\big) \Big) \Big|_{\tau=0}
= \delta_{v(x)} \L_\kappa(x,y) \:. \]
Expressing~$v$ in terms of the operator~$E$ in~\eqref{NEdef} and using~\eqref{NCondSmall}
will show that~$v$ is indeed so small (in a suitable sense) that the 
corresponding variation of the Lagrangian will be well-defined and finite.

\subsection{Correspondence to the Dirac Energy-Momentum Tensor} \label{NCorrDiracEM}
In order to get the connection to the conservation of energy and momentum,
as in Section~\ref{Nseccurcor} we consider the vacuum Dirac equation and the limiting case
that~$\Omega$ exhausts the region between two Cauchy surfaces~$t=t_0$ and~$t=t_1$
(see Figure~\ref{Nfignoether2}). Recall that the energy-momentum tensor of a Dirac wave function~$\psi$
is given by
\[ T_{jk} = \frac{1}{2}\:
\re  \left( \Sl  \psi |   \, \gamma_j\, i \partial_k \psi \Sr + \Sl  \psi |   \, \gamma_k\, i \partial_j \psi \Sr \right)
= - \im  \Sl  \psi |   \, \gamma_{(j}\, \partial_{k)} \psi \Sr \:. \]
We consider the situation of the vacuum Dirac sea with a finite number of holes
describing the anti-particle states~$\phi_1, \ldots, \phi_{\na}$
(for the description of particle states see again Section~\ref{Nremmicro}).
The effective energy-momentum tensor is minus the sum of the energy-momentum tensors
of all the anti-particle states. Thus for a fixed value of the generation index~$\beta$, we set
\[ (T_\beta)_{jk} = \sum_{i=1}^{\na} \im  \Sl  \phi_{i,\beta} |   \, \gamma_{(j}\, \partial_{k)} \phi_{i, \beta} \Sr \:. \]
In order to treat the generations, as in Theorem~\ref{Nthmcurrentmink} we take a linear combination
involving the non-negative constants~$c_\beta$ introduced in~\eqref{NUSymm20}.

\begin{Thm} {\Thmt{(Energy conservation)}} \label{NthmEMcons}
Let~$(\H, \F, \rho^\varepsilon)$ be local minimizers of the causal action
describing the Minkowski vacuum~\eqref{NPvac} together with particles and anti-particles.
Considering the limiting procedure explained in Figure~\ref{Nfignoether2} and taking the
continuum limit, the conservation law of Theorem~\ref{NThmKillingCons} goes over to
\[ \sum_{\beta=1}^3 m_\beta c_\beta \int_{t=t_0} d^3x \, (T_\beta)^0_0\: d^3x = 
\sum_{\beta=1}^3 m_\beta c_\beta \int_{t=t_1} d^3x \, (T_\beta)^0_0\: d^3x \:. \]
\end{Thm}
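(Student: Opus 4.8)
The plan is to follow the same strategy as in the proof of Theorem~\ref{Nthmcurrentmink}, now applied to the Killing symmetry generated by translations in time. In the causal fermion system $(\H, \F, \rho^\varepsilon)$ describing the regularized Dirac sea vacuum together with the anti-particle states $\phi_1, \ldots, \phi_{\na}$, I would take $f_\tau$ to be the flow on $M \cong \scrM$ induced by the Killing field $\partial_t$, i.e.\ $f_\tau(x) = x + \tau\, e_0$. Since the Dirac sea vacuum is time-translation invariant, this is a symmetry of the universal measure which preserves the trace. The unitary operators $\scrU_\tau$ are chosen so that they implement time translation on the physical wave functions of the sea exactly, while the operator $E_\tau$ in~\eqref{NEdef} is supported on the finite-dimensional subspace $K \subset \H$ spanned by the anti-particle wave functions (and, if needed, their time-translates), so that~\eqref{NCondSmall} holds. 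Thus $(f_\tau)$ is a Killing symmetry with finite-dimensional support, and Theorem~\ref{NThmKillingCons} applies, yielding the conservation law~\eqref{NKillingConserve}. For the momentum components $(T_\beta)^0_i$ one repeats the construction with $f_\tau$ the flow of a spatial translation.

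Next I would evaluate the surface layer integral in~\eqref{NKillingConserve} in the continuum limit, taking $\Omega$ to exhaust the region between the Cauchy hyperplanes $t=t_0$ and $t=t_1$ as in Figure~\ref{Nfignoether2}. As in Section~\ref{Nseccurcor}, the integral reduces to the difference of two surface layer integrals at $t\approx t_0$ and $t\approx t_1$, each of which is a double integral of the form $\int_{t\geq 0} d^4x \int_{t<0} d^4y\,(\cdots)$. Its integrand is the first variation $\delta_{v(x)}\L_\kappa(x,y)$ in the direction of the difference vector field $v = \delta f - \delta\Phi$, which by construction acts nontrivially only on the anti-particle states. Writing this variation in terms of the kernel $Q(x,y)$ as in~\eqref{NLTrQ}, the only change compared with the current-conservation computation is that the variation of the kernel of the fermionic projector now involves a derivative, $\delta P(x,y) \sim \partial_t(\text{anti-particle wave function})$, rather than a mere phase factor. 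Carrying out the Fourier analysis of Lemmas~\ref{Nlemmaoffdiagonal}, \ref{Nlemmacur1} and~\ref{Nlemmacur2}, the off-diagonal contributions again drop out by state stability~\eqref{Nabmin}, and in the diagonal term $J_{\beta,\beta}$ the extra factor $k^0 = -\omega_{\beta,\vec k}$ produced by the derivative combines with the Dirac spinors; using the Dirac equation~\eqref{NDiralg} and the relation $2 m_\beta\, \Sl\chi_\beta|\gamma^0\chi_\beta\Sr = -2\omega_{\beta,\vec k}\,\Sl\chi_\beta|\chi_\beta\Sr$ from~\eqref{Nrel2}, the remaining $\vec k$-integral is recognized, via Plancherel's theorem, as the spatial integral of $(T_\beta)^0_0$. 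Collecting the constants $c_\beta$ from the semi-derivatives of $\hat Q$ on the mass shells exactly as in~\eqref{NUSymm20} then gives the asserted identity, the weight $m_\beta$ entering through the $\slashed k$-structure of $\hat Q$ in~\eqref{NLTrQ}.

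I expect the main obstacle to be the same two technical points as in the current-conservation case, now somewhat more delicate. First, one must verify that the difference vector field $v$ arising from~\eqref{NEdef} is small enough that the surface layer integral in~\eqref{NKillingConserve} is well-defined and that the interchange of $\tau$-differentiation with integration is justified in the continuum limit; since $v$ involves a derivative of the anti-particle wave functions, this requires a careful decay estimate, and, as in Remark~\ref{Nremdiscuss}, for a fixed regularization one may first have to replace $f_\tau$ and $\scrU_\tau$ by modified variations whose compositions with $\ell$ are continuously differentiable in $\tau$, and argue that the modification drops out as $\varepsilon\searrow 0$. Second, because $\hat Q$ is only semi-differentiable on the mass shells, the naive remainder term in the Taylor expansion of $\hat Q$ need not vanish, so one again introduces a compactly supported convergence-generating factor $g_\sigma$ and passes to the limit $\sigma\searrow 0$, as in Lemma~\ref{Nlemmacur1}. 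A further point to check is that the variation of the Lagrangian reproduces precisely the symmetrized combination $\gamma_{(j}\partial_{k)}$ of the Dirac energy-momentum tensor; this should follow from the symmetry $Q(x,y)^* = Q(y,x)$ together with taking the imaginary part, just as the analogous symmetrization produced the Dirac current in~\eqref{NLTrQ}.
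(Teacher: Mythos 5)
Your proposal follows essentially the same route as the paper's proof: time translations as a Killing symmetry with finite-dimensional support, reduction of the first variation to $\partial_t$ acting on the anti-particle wave functions via the decomposition $\Psi = \Psi^\text{vac} + \Delta \Psi$, and then the same Fourier/surface-layer analysis as in the current-conservation case with an extra factor of $\omega_{\beta,\vec k}$, concluded by~\eqref{Nrel2} and Plancherel together with the constants~$c_\beta$ from~\eqref{NUSymm20}. The only step the paper makes explicit that you leave implicit is the normalization $\scrU_\tau|_K = \1_K$ (so that $\scrU_\tau$ preserves~$K$ and the variation of $\Delta P$ is exactly $(\partial_t \psi)(x)\,\overline{\psi(y)}$), justified by observing that any other choice of $\scrU_\tau$ on~$K$ only shifts the conservation law by a linear combination of the already-established current conservation laws.
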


This theorem can be extended immediately to energy-momentum conservation
on general Cauchy surfaces:
\begin{Corollary} {\Thmt{(Energy-momentum conservation on Cauchy surfaces)}} \label{NcorEMcons} \\
Let~$\scrN_0, \scrN_1$ be two Cauchy surfaces in Minkowski space, where~$\scrN_1$
lies to the future of~$\scrN_0$.
Then, under the assumptions of Theorem~\ref{NthmEMcons},
the conservation law of Theorem~\ref{NThmKillingCons} goes over to the conservation law for
the energy and momentum integrals
\beq \label{NTfinal}
\sum_{\beta=1}^3 m_\beta\, c_\beta \int_{\scrN_0} (T_\beta)^j_k \,\nu_j\:d\mu_{\scrN_0}
= \sum_{\beta=1}^3 m_\beta\, c_\beta \int_{\scrN_1} (T_\beta)^j_k \,\nu_j \:d\mu_{\scrN_1} \:,
\eeq
where~$\nu$ again denotes the future-directed normal and~$k \in \{0,\ldots, 3\}$.
\end{Corollary}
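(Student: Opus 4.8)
The plan is to imitate exactly the proof of Corollary~\ref{Ncorcurrent}, replacing the unitary phase variation by the Killing variation of Theorem~\ref{NThmKillingCons} and the probability current by the Dirac energy-momentum tensor. First I would fix the two Cauchy surfaces $\scrN_0 \prec \scrN_1$ and take $\Omega$ to be the space-time region between them. As in~\eqref{Nsl1}, the anti-symmetry of the integrand in the surface layer integral~\eqref{NKillingConserve} in $x \leftrightarrow y$ lets me split the double integral over $\Omega \times (M \setminus \Omega)$ into the difference of a surface layer integral ``over $\scrN_1$'' (integrating $x$ over $J^\wedge(\scrN_1)$ and $y$ over $J^\vee(\scrN_1)$) and the corresponding one ``over $\scrN_0$''. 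Theorem~\ref{NThmKillingCons} then says this difference has vanishing $\tau$-derivative at $\tau=0$, i.e.\ the two surface layer integrals agree. Next, for each surface $\scrN_i$ I would, just as in the proof of Corollary~\ref{Ncorcurrent}, apply Theorem~\ref{NThmKillingCons} once more to the region bounded by $\scrN_i$ and a nearby hyperplane $t=t_i$ (modifying $\scrN_i$ near the asymptotic end if necessary), concluding that the surface layer integral over $\scrN_i$ equals the one at the hyperplane $t \approx t_i$. By Theorem~\ref{NthmEMcons} (whose proof is the real work and is carried out separately), this hyperplane surface layer integral goes over in the continuum limit to $\sum_\beta m_\beta c_\beta \int_{t=t_i}(T_\beta)^0_0\,d^3x$.

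At this point I would have the conservation law for the $(0,0)$-component of the energy-momentum tensor on arbitrary Cauchy surfaces, but only the $\nu_j = (1,0,0,0)$ contraction at equal-time hyperplanes. To upgrade to the full contraction $(T_\beta)^j_k \nu_j$ with an arbitrary future-directed normal $\nu$ and arbitrary lower index $k \in \{0,\ldots,3\}$, I would invoke Lorentz covariance of the whole construction: the Killing symmetry used to produce energy conservation can be taken to be a boost or a spatial translation generator, and the family $(f_\tau)$ can be chosen so that $w = \delta f$ is any of the four translational Killing fields $\partial_k$ of Minkowski space. Running the same argument with $\partial_k$ in place of $\partial_0$ produces $\sum_\beta m_\beta c_\beta \int (T_\beta)^j_k \nu_j\, d\mu$ on a hyperplane orthogonal to $\nu$; since a general Cauchy surface can locally be tilted by a Lorentz transformation and the surface layer integral is well-approximated by the corresponding surface integral when the surface is nearly flat on the Compton scale (as discussed after~\eqref{NUSymm20}), the usual divergence theorem $\partial_j (T_\beta)^{jk}=0$ for solutions of the free Dirac equation lets me deform the hyperplane integral to the integral over $\scrN_i$ in~\eqref{NTfinal}. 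Equating the $\scrN_0$ and $\scrN_1$ expressions gives the claimed identity.

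The main obstacle is not the combinatorial surface-layer bookkeeping — that is a verbatim copy of the proof of Corollary~\ref{Ncorcurrent} — but rather making sure that the Killing-symmetry hypotheses of Definition~\ref{Ndefkilling} and the differentiability hypotheses of Theorem~\ref{NThmKillingCons} are actually met for the regularized Dirac sea configurations with a finite number of particles and anti-particles. Concretely, one must exhibit, for each translational Killing field $\partial_k$, a family $(\scrU_\tau)$ with $\scrU_{-\tau}\scrU_\tau = \1$ such that $E_\tau(u,x)=0$ for $u$ outside a finite-dimensional subspace $K \subset \H$ — i.e.\ the sea states must be transported exactly by the unitary flow, with only the finitely many particle/anti-particle states $\phi_1,\ldots,\phi_{\na}$ requiring a genuine active transformation — and one must check that $\ell \circ \Phi$ is continuously differentiable in $\tau$ in the continuum limit. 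This is the same delicate point flagged in Remark~\ref{Nremdiscuss}, and here it additionally requires that the energy-momentum surface layer integral at a hyperplane converge (it oscillates on the Compton scale, so the convergence-generating-factor technique of Lemmas~\ref{Nlemmacur1}–\ref{Nlemmacur2} must be re-run with $\hat Q$ contracted against $\partial_k \hat\psi^u$ instead of $\hat\psi^u$), which is exactly the content of Theorem~\ref{NthmEMcons} that I am taking as given.
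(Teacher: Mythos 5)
Your overall strategy coincides with the paper's: the surface-layer bookkeeping is borrowed verbatim from the proof of Corollary~\ref{Ncorcurrent}, Theorem~\ref{NthmEMcons} is used to identify the hyperplane surface layer integrals with $\sum_\beta m_\beta c_\beta \int_{t=t_i} (T_\beta)^0_0\,d^3x$, and the classical conservation $\partial_j (T_\beta)^{j}{}_{k}=0$ is what carries the hyperplane integrals over to the general Cauchy surfaces $\scrN_i$ (the paper phrases this as ``conservation of classical energy'' for $k=0$). Where you deviate is in how the components $k\neq 0$ are reached. You propose to re-run the whole Killing-symmetry construction with the spatial translation fields $\partial_k$, which requires a momentum analogue of Theorem~\ref{NthmEMcons} (the continuum-limit evaluation with $\hat Q$ contracted against a factor $k_j$ instead of $\omega_{\beta,\vec k}$); this is not literally ``the content of Theorem~\ref{NthmEMcons}'' as you claim, though the computation would indeed go through in parallel since spatial translations are also symmetries of the universal measure that preserve the trace. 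The paper instead applies a Lorentz boost to the already-established $k=0$ identity: boosting $\partial_t$ yields a Killing field $K$, the conservation law on the (arbitrary) Cauchy surfaces becomes $\sum_\beta m_\beta c_\beta \int_{\scrN_i}(T_\beta)^j{}_k\,\nu_j\,K^k\,d\mu_{\scrN_i}$, and varying the boost recovers all components $k\in\{0,\dots,3\}$. The boost route buys you that no new surface-layer or continuum-limit computation is needed beyond the energy case; your route buys a more direct interpretation (momentum conservation from spatial translation symmetry) at the price of redoing the analytic work of Theorem~\ref{NthmEMcons} for each $\partial_k$. Your aside about tilting the Cauchy surface and approximating surface layer integrals by surface integrals is unnecessary at this stage: once one is in the continuum limit, the deformation from the hyperplane to $\scrN_i$ is a purely classical application of the divergence theorem for the Dirac energy-momentum tensor.
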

\Proof We use similar arguments as in the proof of Corollary~\ref{Ncorcurrent}.
More precisely, the conservation of classical energy implies that
\[ \int_{t=t_0} d^3x \, (T_\beta)^0_0\: d^3x = \int_{\scrN} (T_\beta)^j_0 \,\nu_j\: d\mu_{\scrN} \:. \]
This gives~\eqref{NTfinal} in the case~$k=0$.
Applying a Lorentz boost, we obtain
\[ \sum_{\beta=1}^3 m_\beta\, c_\beta \int_{\scrN_0} (T_\beta)^j_k \,\nu_j\: K^k\:d\mu_{\scrN_0}
= \sum_{\beta=1}^3 m_\beta\, c_\beta \int_{\scrN_1} (T_\beta)^j_k \,\nu_j \:K^k\:d\mu_{\scrN_1} \:, \]
where~$K$ is the Killing field obtained by applying the Lorentz boost to the vector field~$\partial_t$.
This gives the result.
\QED
These results shows that the conservation laws of energy and momentum correspond to more general
conservation laws in the setting of causal fermion systems.

The remainder of this section is devoted to the proof of Theorem~\ref{NthmEMcons}.
Let~$(\H, \F, \rho^\varepsilon)$ be a regularized vacuum Dirac sea configuration
together with anti-particles (for details see~\cite[Sections~1.2 and~3.4]{cfs}).
Then, possibly after extending the particle space (see~\cite[Remark~1.2.2]{cfs}),
we can decompose the wave evaluation operator~$\Psi$ as
\beq \label{NPsicompose}
\Psi = \Psi^\text{vac} + \Delta \Psi \:,
\eeq
where~$\Psi^\text{vac}$ is the wave evaluation operator of the completely filled Dirac sea
(see~\cite[\S1.1.4]{cfs} and the operator~$\Psi(x) = e^\varepsilon_x$ in~\cite[\S1.2.4]{cfs}),
and~$\Delta \Psi$ describes the holes.
The fact that the number of anti-particles is finite implies that the operator~$\Delta \Psi$
is trivial on the orthogonal complement of a finite-dimensional subspace of~$\H$, which we denote
by~$K$,
\beq \label{NDelPsifinite}
\Delta \Psi \,u = 0 \qquad \text{for all~$u \in K^\perp \subset \H$}\:.
\eeq
We now choose~$(f_\tau)_{\tau \in \R}$ as the time translations, i.e.
\[ f_\tau \::\: \scrM \rightarrow \scrM \:,\: f_\tau(t,x_1,x_2,x_3) = (t+\tau, x_1, x_2, x_3) \]
(for the identification of~$\scrM$ with~$M:= \supp \rho$ see~\cite[Section~1.2]{cfs}).
Since the Lebesgue measure~$d^4x$ is translation invariant, it clearly is invariant
under the action of~$f_\tau$.
Constructing the universal measure as the push-forward (see~\cite[\S1.2.1]{cfs}), it follows immediately
that~$f_\tau$ is a symmetry of the universal measure.

Since~$\Psi^\text{vac}$ is composed of plane-wave solutions of the Dirac equation,
on which the time translation operator acts by multiplication with a phase,
the operator~$f_\tau$ can be represented by a unitary transformation in~$\H$.
More precisely, choosing the operator~$\scrU_\tau$ as the multiplication operator in momentum
space~$\hat{\scrU}(k) = e^{i k^0 \tau}$, we have the relation
\beq \label{NPsivacsymm}
(\Psi^\text{vac} \,u)\big(f_\tau(x) \big) = \big( \Psi^\text{vac} \,\scrU^{-1}_\tau u \big)(x) \qquad 
\text{for all~$x \in M, u \in \H$} \:.
\eeq
Using~\eqref{NPsicompose} and~\eqref{NPsivacsymm} in~\eqref{NEdef}, we conclude that
\[ E_\tau(u,x) := (\Delta \Psi u)\big(f_\tau(x) \big) - \big( \Delta \Psi \scrU^{-1}_\tau u \big)(x) \qquad 
\text{for all~$x \in M, u \in \H$} \:. \]
The assumption~\eqref{NDelPsifinite} implies that~$(f_\tau)_{\tau \in \R}$ is indeed
a Killing symmetry with finite-dimensional support (see Definition~\ref{Ndefkilling}).

In order to simplify the setting, we note that a unitary transformation~$\scrU_\tau$
was already used in Section~\ref{Nsecexcurrent} to obtain corresponding conserved currents
(see~\eqref{NUtaudef} and Theorem~\ref{Nthmcurrent}).
This means that the first variations of~$\scrU_\tau$ on the finite-dimensional subspace~$K$
give rise to a linear combination of the corresponding conserved currents.
With this in mind, we may in what follows assume that~$\scrU_\tau$ is trivial on~$K$,
\beq \label{NUtrivial}
\scrU_\tau|_K = \1_K \:.
\eeq
Modifying~$\scrU_\tau$ in this way corresponds to going over to a new conservation law,
which is obtained from the original conservation law by subtracting a
linear combination of electromagnetic currents.

For the computations, it is most convenient to work again with the kernel of the fermionic projector.
Using~\eqref{NPsicompose}, we decompose it as
\[ P(x,y) = P^\text{vac}(x,y) + \Delta P(x,y) \:, \]
where
\begin{align*}
P^\text{vac}(x,y) &= -\Psi^\text{vac}(x) \Psi^\text{vac}(y)^* \\
\Delta P(x,y) &= -\Psi^\text{vac}(x) \big(\Delta \Psi\big)(y)^* -\big(\Delta \Psi\big)(x) \Psi^\text{vac}(y)^*
-\big(\Delta \Psi\big)(x) \big(\Delta \Psi\big)(y)^*\:.
\end{align*}
Since~$\Delta \Psi$ vanishes on the complement of the finite-dimensional subspace~$K$,
the kernel~$\Delta \Psi$ is composed of a finite number of Dirac wave functions, i.e.
\[ \Delta P(x,y) = \sum_{i,j=1}^{\na} c_{ij} \:\phi_i(x) \overline{\phi_j(y)} \]
with~$\na \in \N$ and~$\overline{c_{ij}} = c_{ji}$. Diagonalizing the Hermitian matrix~$(c_{ij})$ by
a basis transformation, we can write~$\Delta P(x,y)$ as
\[ \Delta P(x,y) = \sum_{i=1}^{\na} c_i \:\phi_i(x) \overline{\phi_i(y)} \]
with real-valued coefficients~$c_i$.
Since we only consider first oder variations of the Lagrangian, by linearity we may restrict
attention to one of the summands. Thus it suffices to consider the case
\[ \Delta P(x,y) = \psi(x) \overline{\psi(y)} \:, \]
where~$\psi$ is a negative-frequency solution of the Dirac equation.

Now the first variation of the Lagrangian can be computed similar as in~\eqref{NLTrQ} to obtain
\begin{align*}
&\frac{d}{d\tau} \int_\Omega d\rho(x) \int_{M \setminus \Omega} d\rho(y) \:\Big( \L_\kappa \big( f_\tau(x), y\big) -  \L_\kappa \big( \Phi_\tau(x), y\big) \big) \Big) \Big|_{\tau=0}  \\
&= \int_\Omega d\rho(x) \int_{M \setminus \Omega} d\rho(y)
\Big(   \Tr_{S_y} \big( Q(y,x)\, \delta_{v(x)} P(x,y) \big) + \Tr_{S_x} \big( Q(x,y)\, \delta_{v(x)} P(y,x) \big)  \Big),
\end{align*}
where
\[ \delta_{v(x)} P(x,y) := \frac{d}{d\tau} \Big( P \big( f_\tau(x) ,y \big)
-P \big( \Phi_\tau(x),y \big) \Big) \Big|_{\tau = 0} \:. \]
Since~$P^\text{vac}(x,y)$ has the Killing symmetry~\eqref{NPsivacsymm},
the variation of~$P(x,y)$ simplifies to
\begin{align*}
\delta_{v(x)} P(x,y)&= \frac{d}{d\tau} \Big( \Delta P \big( f_\tau(x) ,y \big)
- \Delta P \big( \Phi_\tau(x),y \big) \Big) \Big|_{\tau = 0} \\
&= \frac{d}{d\tau} \Delta P \big( f_\tau(x) ,y \big) \big|_{\tau = 0}
= \frac{d}{d\tau} \Big( \psi \big( f_\tau(x) \big) \overline{\psi(y)} \Big) \Big|_{\tau = 0} =: (\partial_t \psi)(x) \,
\overline{\psi(y)} \:,
\end{align*}
where in the last line we used~\eqref{NUtrivial}.
Using these relations in~\eqref{NKillingConserve}, we obtain the conservation law
\begin{align*}
0 &= \int_\Omega d\rho(x) \int_{M \setminus \Omega} d\rho(y) 
\Big(   \Tr_{S_y} \big( Q(y,x)\, (\partial_t \psi)(x)\, \overline{\psi(y)} \big) + \Tr_{S_x} \big( Q(x,y)\, \psi(y) 
\,\overline{(\partial_t \psi)(x)}  \big)  \\
&\qquad \qquad -\Tr_{S_y} \big( Q(y,x)\,  \psi(x) \, \overline{(\partial_t \psi)(y)}
 \big) - \Tr_{S_x} \big( Q(x,y)\, (\partial_t \psi)(y)\, \overline{\psi(x)}  \big)  \Big) \\
&= 2\,\re \int_\Omega d\rho(x) \int_{M \setminus \Omega} d\rho(y)
\Big( \Sl \psi (y) | Q(y,x) (\partial_t \psi)(x) \Sr 
- \Sl \psi(x) | Q(x,y) (\partial_t \psi)(y) \Sr \Big) \:.
\end{align*}

Next, we consider the limiting case where~$\Omega$ exhausts the region between two
Cauchy surfaces~$t=t_0$ and~$t=t_1$ (see Figure~\ref{Nfignoether2}).
We thus obtain a conserved current~$J$ which for example at time~$t=0$
is given by
\[ J = \frac{1}{2}\: \re\,  \int_{t\leq 0} d^4x \int_{t>0} d^4y \:
\Big( \Sl \psi (y) | Q(y,x) (\partial_t \psi)(x) \Sr 
- \Sl \psi(x) | Q(x,y) (\partial_t \psi)(y) \Sr \Big) \:. \]
This equation is similar to~\eqref{Ninttask} and can be analyzed in exactly the same manner.
Indeed, regularizing the Heaviside functions and applying Plancherel, we
again obtain~\eqref{Naneqb1} and~\eqref{Naneqb2}, with the only difference
that an additional factor~$\omega_{\beta, \vec{k}}$ appears.
Thus, in analogy to~\eqref{NJsum2} and~\eqref{NUSymm19} we obtain
\[ J = -\sum_{\beta=1}^3 \frac{1}{2} \, c_\beta    \int \frac{d^3k}{(2 \pi)^3} \,\omega_{\beta,\vec k}^2  \:
\Sl \chi_\beta(\vec k) |  \chi_\beta(\vec k ) \Sr \:. \]
Applying~\eqref{Nrel2} and using again Plancherel gives the result.
This concludes the proof of Theorem~\ref{NthmEMcons}.

\section{Example: Symmetries of the Universal Measure} \label{Nsecexrho}
In this section we consider the conserved surface layer integrals corresponding to
symmetries of the universal measure (see Theorem~\ref{Nthmsymmum}
and Corollary~\ref{Ncorsymmum}). We now explain why, under the assumption
that~$\Phi_\tau$ is a bijection, these conserved surface layer integrals
can be expressed merely in terms of the volumes of the sets~$\Omega \setminus \Phi_\tau(\Omega)$
and~$\Phi_\tau(\Omega) \setminus \Omega$.
Our argument shows in particular that in the 
limiting case of Figure~\ref{Nfignoether2} when
the boundary of~$\Omega$ consists of two hypersurfaces, the conserved
surface layer integrals do not give rise to any interesting conservation laws.
Therefore, although the conservation laws
of Theorem~\ref{Nthmsymmum} and Corollary~\ref{Ncorsymmum} give non-trivial information
on the structure of a minimizing universal measure of a causal fermion system,
they do not correspond to any conservation laws in Minkowski space.

The following argument applies for example to the situation considered in Section~\ref{Nseccurcor}
that~$M$ can be identified with Minkowski space, and~$\Omega$ is the past of a Cauchy surface.
But the argument applies in a much more general setting. In particular, we do not need to
assume that~$\Omega$ is compact. We first rewrite the surface layer integral in~\eqref{Nconserve5} as
\begin{align*}
\int_\Omega& d\rho(x) \int_{M \setminus \Omega} d\rho(y) \:\Big( \L_\kappa\big( \Phi_\tau(x), y\big) -
\L_\kappa\big( x, \Phi_\tau(y) \big) \Big) \\
&= \int_\Omega d\rho(x) \int_{M \setminus \Omega} d\rho(y) \:\Big( \L_\kappa\big( \Phi_\tau(x), y\big) -
\L_\kappa(x, y) \Big) \\
&\quad + \int_\Omega d\rho(x) \int_{M \setminus \Omega} d\rho(y) \:\Big( \L_\kappa(x, y) -
\L_\kappa\big( x, \Phi_\tau(y) \big) \Big) \\
&= \bigg( \int_{\Phi_\tau(\Omega)} - \int_\Omega \bigg) \:d\rho(x) \int_{M \setminus \Omega} d\rho(y) \: \L_\kappa(x, y) \\
&\quad + \int_\Omega d\rho(x) \:\bigg( \int_{M \setminus \Omega} - \int_{\Phi_\tau(M \setminus \Omega)} \bigg)\:
d\rho(y) \:\L_\kappa(x, y) \:.
\end{align*}
Assuming that~$\Phi_\tau$ is as bijection, we can write the obtained differences of integrals as illustrated in Figure~\ref{Nfignoether3},
\begin{align*}
\bigg( \int_{\Phi_\tau(\Omega)} - \int_\Omega \bigg) \;\cdots =
\bigg( \int_{\Phi_\tau(\Omega) \setminus \Omega} - \int_{\Omega \setminus \Phi_\tau(\Omega)} \bigg) \;\cdots \\
\bigg( \int_{M \setminus \Omega} - \int_{\Phi_\tau(M \setminus \Omega)} \bigg) \;\cdots =
\bigg( \int_{\Omega \setminus \Phi_\tau(\Omega)} - \int_{\Phi_\tau(\Omega) \setminus \Omega} \bigg) \;\cdots  \: .
\end{align*}
Using that~\eqref{NLagrangeKappa} is symmetric with respect to exchange of its arguments (cf.~\eqref{NsymmL}), we thus obtain
\begin{align*}
\int_\Omega& d\rho(x) \int_{M \setminus \Omega} d\rho(y) \:\Big( \L_\kappa\big( \Phi_\tau(x), y\big) -
\L_\kappa\big( x, \Phi_\tau(y) \big) \Big) \\
&= \bigg( \int_{\Phi_\tau(\Omega) \setminus \Omega} - \int_{\Omega \setminus \Phi_\tau(\Omega)} \bigg)
\:d\rho(x) \int_M d\rho(y) \: \L_\kappa(x, y) \\
&= \bigg( \int_{\Phi_\tau(\Omega) \setminus \Omega} - \int_{\Omega \setminus \Phi_\tau(\Omega)} \bigg)
\:\ell(x)\: d\rho(x) \:,
\end{align*}
where in the last step we used~\eqref{Nelldef}. In view of~\eqref{Nseprel}, the obtained integrand
is constant. Therefore, the surface layer integral can indeed be expressed in terms of the volume of the
sets~$\Phi_\tau(\Omega) \setminus \Omega$ and~$\Omega \setminus \Phi_\tau(\Omega)$.
In particular, the surface layer integral does not capture any interesting dynamical information
of the causal fermion system.
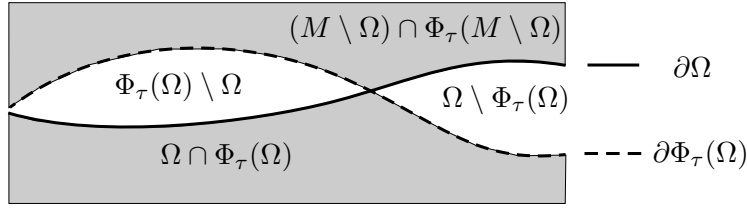
\begin{figure}\centering
\psscalebox{1.0 1.0} %
{
\begin{pspicture}(0,-1.3321118)(9.645718,1.3321118)
\definecolor{colour0}{rgb}{0.8,0.8,0.8}
\pspolygon[linecolor=black, linewidth=0.002, fillstyle=solid,fillcolor=colour0](7.3338523,0.49955472)(6.982104,0.540414)(6.46561,0.5530239)(6.0426135,0.5034688)(5.615636,0.40687802)(5.3149023,0.32343453)(4.7772465,0.16571002)(4.124508,0.4539213)(3.5894282,0.6157628)(2.9047132,0.7188196)(1.9799689,0.7152769)(1.0838828,0.51089174)(0.40659958,0.2000194)(0.014606438,-0.059563804)(0.01385916,1.3311102)(7.3338523,1.3311102)
\pspolygon[linecolor=black, linewidth=0.002, fillstyle=solid,fillcolor=colour0](0.014606438,-0.13511193)(0.35684568,-0.23256378)(0.69938534,-0.26354155)(1.0905135,-0.30791193)(1.5389885,-0.32130453)(2.1811993,-0.3126823)(2.7668514,-0.2552749)(3.4497936,-0.16217859)(4.2947426,0.015554737)(4.786874,0.15783621)(5.7142606,-0.3485638)(6.16883,-0.5659416)(6.7202015,-0.69884527)(7.3427415,-0.68933415)(7.3346066,-1.3311119)(0.014606438,-1.3311119)
\psbezier[linecolor=black, linewidth=0.04](0.005717549,-0.13111193)(1.2346064,-0.4977786)(3.2279398,-0.25555637)(4.2857175,0.019999182)(5.3434954,0.29555473)(6.0857177,0.7111103)(7.339051,0.4999992)
\rput[bl](2.0101619,-0.9066675){$\Omega \cap \Phi_\tau(\Omega)$}
\psbezier[linecolor=black, linewidth=0.04, linestyle=dashed, dash=0.17638889cm 0.10583334cm](0.02793977,-0.057949536)(1.3812732,0.96743506)(3.2146065,0.8825633)(4.46794,0.30888808)(5.721273,-0.26478714)(6.1257176,-0.8108555)(7.339051,-0.682223)
\rput[bl](3.6946065,0.77777696){$(M \setminus \Omega) \cap \Phi_\tau(M \setminus \Omega)$}
\rput[bl](1.4057175,0.01777696){$\Phi_\tau(\Omega) \setminus \Omega$}
\rput[bl](5.716829,-0.14666748){$\Omega \setminus \Phi_\tau(\Omega)$}
\psline[linecolor=black, linewidth=0.04](7.6279397,0.50444365)(8.30794,0.50444365)
\psline[linecolor=black, linewidth=0.04, linestyle=dashed, dash=0.17638889cm 0.10583334cm](7.5879397,-0.6688897)(8.334606,-0.6688897)
\rput[bl](8.485718,-0.87555635){$\partial \Phi_\tau(\Omega)$}
\rput[bl](8.756828,0.34666586){$\partial \Omega$}
\end{pspicture}
}
\caption{The surface layer integral corresponding to a symmetry of the universal measure.}
\label{Nfignoether3}
\end{figure} %

\section{Conservation Laws and Microscopic Mixing}\label{Nremmicro}

We conclude this chapter by again pointing out that the conservation laws of
Theorem~\ref{Nthmcurrent} and Theorem~\ref{NThmKillingCons} hold for causal
fermion systems without taking the continuum limit.
In particular, these conservation laws also hold for regularized Dirac sea
configurations if one analyzes the EL equations corresponding to the causal
action principle without taking the limit~$\varepsilon \searrow 0$ or in ``quantum space-times''
as outlined in Section~\ref{Csecgenst}.

In Theorem~\ref{Nthmcurrentmink} and Theorem~\ref{NthmEMcons} we restricted attention
to negative-frequency solutions of the Dirac equation. On a technical level, this was
necessary because the operator~$\hat{Q}$ is only well-defined inside the {\em{lower}}
mass cone, whereas it diverges outside the lower mass cone (see Definition~\ref{Ndef611}
and~\cite[Section~5.6]{PFP} or~\cite{reg}). In non-technical terms, this means that
introducing Dirac particles into the causal fermion system makes the causal action infinitely large.
But, as explained in detail in~\cite[Section~3]{qft} and briefly in Section~\ref{CFoundations}, the action becomes again finite if
one introduces a so-called {\em{microscopic mixing of the wave functions}}.
In other words, minimizing the causal action gives rise to the mechanism of microscopic mixing
(for more details see~\cite[\S1.5.3]{cfs}).
Microscopic mixing is also important for getting the connection to entanglement
and second-quantized bosonic fields (see~\cite{entangle, qft} and Section~\ref{CFoundations}).

If microscopic mixing is present, the conservation laws of
Theorem~\ref{Nthmcurrent} and Theorem~\ref{NThmKillingCons} again give rise to
conserved surface layer integrals. However, evaluating these surface layer integrals
in Minkowski space is more involved because a homogenization procedure over the
microstructure must be performed (in the spirit of~\cite[Section~5.1]{qft}).
Since these constructions are rather involved, we cannot give them here.
However, even without entering the detailed constructions, the following argument
shows that the conservation laws should apply to the particle states as well:

In an interacting system, a solution of the Dirac equation which at some initial time
has negative frequency may at a later time have positive frequency
(as in the usual pair production process).
The conservation law of Theorem~\ref{Nthmcurrent} implies that the surface layer
integral at the later time coincides with that at the initial time.
Using current conservation of the Dirac dynamics, we conclude that the
surface layer at the later time again coincides with the surface integral
of the Dirac current.
Using arguments of this type, one sees that, no matter
what the microscopic structure of space-time is, the conservation laws
of Theorem~\ref{Nthmcurrent} and Theorem~\ref{NThmKillingCons} should
apply similarly to positive-frequency solutions of the Dirac equation.\medskip

An explanation of the implications of the conservation laws constructed in this chapter
on the foundations of quantum theory (in particular concerning the collapse of the wave
function in the quantum mechanical measurement process)
is given in Section~\ref{IModelQT}.

\newpage \thispagestyle{empty} \  %
\chapter[Hamiltonian Formulation and Linearized Field Equations]{Hamiltonian Formulation\\ and Linearized Field Equations}\label{DissJet}

In this chapter, we give a formulation of the dynamics of causal fermion systems in terms of physical fields on space-time.

After generalizing causal variational principles to a class of lower semi-continuous Lagrangians on a smooth,
possibly non-compact manifold, the corresponding Euler-Lagrange equations are derived (Section~\ref{Jseccvpcfs}).
In Section~\ref{JSecSmooth}, we show under additional smoothness assumptions that the space of solutions of the
Euler-Lagrange equations has the structure of a symplectic Fr\'echet manifold.
The symplectic form is constructed as a surface layer integral which is shown to be invariant under
the time evolution. 
In Section~\ref{JlowerSemiCont}, the results and methods are extended to the lower semi-continuous setting. Evaluating the Euler-Lagrange equations weakly, we derive linearized field equations and the Hamiltonian time evolution.
Finally, in Section~\ref{Jseclattice}, our constructions and results are illustrated in a detailed example
on $\mathbb{R}^{1,1} \times S^1$, where a local minimizer is given by a measure supported on a two-dimensional lattice.

\section{Causal Variational Principles and Causal Fermion Systems} \label{Jseccvpcfs}
This section provides the preliminaries needed for the construction of the Hamiltonian time evolution.
After introducing causal variational principles in the non-compact setting (Section~\ref{Jsecnoncompact}),
we derive the corresponding Euler-Lagrange equations (Section~\ref{JsecEL})
and introduce the concept of local minimizers (Section~\ref{Jseclocmin}).
The connection to the theory of causal fermion systems is established in Section~\ref{Jseccfs}.

\subsection{Causal Variational Principles in the Non-Compact Setting} \label{Jsecnoncompact}
We now introduce causal variational principles in the non-compact setting
(for the simpler compact setting see Section~\ref{Nsecintroc}).
Let~$\F$ be a (possibly non-compact) smooth manifold of dimension~$m \geq 1$
and~$\rho$ a (positive) Borel measure on~$\F$ (the {\em{universal measure}}).
Moreover, we are given a non-negative function~$\L : \F \times \F \rightarrow \R^+_0$
(the {\em{Lagrangian}}) with the following properties:
\begin{itemize}[leftmargin=2em]
\item[(i)] $\L$ is symmetric: $\L(x,y) = \L(y,x)$ for all~$x,y \in \F$.\label{JCond1}
\item[(ii)] $\L$ is lower semi-continuous, i.e.\ for all sequences~$x_n \rightarrow x$ and~$y_{n'} \rightarrow y$,
\[ \L(x,y) \leq \liminf_{n,n' \rightarrow \infty} \L(x_n, y_{n'})\:. \]\label{JCond2}
\end{itemize}
If the total volume~$\rho(\F)$ is finite, the {\em{causal variational principle}} is to minimize the action
\beq \label{JSact} 
\Sact (\rho) = \int_\F d\rho(x) \int_\F d\rho(y)\: \L(x,y) 
\eeq
under variations of the measure~$\rho$, keeping the total volume~$\rho(\F)$ fixed
({\em{volume constraint}}).
If~$\rho(\F)$ is infinite, it is not obvious how to implement the volume constraint,
making it necessary to proceed as follows:
First, we make the following additional assumptions:
\begin{itemize}[leftmargin=2em]
\item[(iii)] The measure~$\rho$ is {\em{locally finite}}
(meaning that any~$x \in \F$ has an open neighborhood~$U$ with~$\rho(U)< \infty$).\label{JCond3}
\item[(iv)] The function~$\L(x,.)$ is $\rho$-integrable for all~$x \in \F$, giving
a lower semi-continuous and bounded function on~$\F$. \label{JCond4}
\end{itemize}
We remark that, since a manifold is second countable, property~(iii) implies that~$\rho$
is $\sigma$-finite.
In view of the computations later in this paper,
it is most convenient to subtract a constant~$\nu/2$ from the integral over~$\L(x,.)$
by introducing the function
\beq \label{Jelldef}
\ell(x) = \int_\F \L(x,y)\: d\rho(y) - \frac{\nu}{2} \::\: \F \rightarrow \R \quad
\text{bounded and lower semi-continuous}\:,
\eeq
where the parameter~$\nu \in \R$ will be specified below.
We let~$\tilde{\rho}$ be another Borel measure on~$\F$
which satisfies the conditions
\beq \label{Jtotvol}
\big| \tilde{\rho} - \rho \big|(\F) < \infty \qquad \text{and} \qquad
\big( \tilde{\rho} - \rho \big) (\F) = 0
\eeq
(where~$|.|$ denotes the total variation of a measure;
see~\cite[\S28]{halmosmt} or~\cite[Section~6.1]{rudin}).
Then the difference of the actions as given by
\beq \label{Jintegrals}
\begin{split}
\big( &\Sact(\tilde{\rho}) - \Sact(\rho) \big) = \int_\F d(\tilde{\rho} - \rho)(x) \int_\F d\rho(y)\: \L(x,y) \\
&\quad + \int_\F d\rho(x) \int_\F d(\tilde{\rho} - \rho)(y)\: \L(x,y) 
+ \int_\F d(\tilde{\rho} - \rho)(x) \int_\F d(\tilde{\rho} - \rho)(y)\: \L(x,y)
\end{split}
\eeq
is well-defined in view of the following lemma.\newpage
\begin{Lemma} The integrals in~\eqref{Jintegrals} are well-defined with values in~$\R \cup \{\infty\}$. Moreover,
\begin{align}\label{Jintegrals2}
\begin{split}
\big( \Sact(\tilde{\rho}) - \Sact(\rho) \big) &= 2 \int_\F \Big(\ell(x) + \frac{\nu}{2} \Big) \:d(\tilde{\rho} - \rho)(x) \\
&\quad + \int_\F d(\tilde{\rho} - \rho)(x) \int_\F d(\tilde{\rho} - \rho)(y)\: \L(x,y) \:.
\end{split}
\end{align}
\end{Lemma}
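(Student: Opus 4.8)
The plan is to reduce everything to the finite signed measure $\mu := \tilde\rho - \rho$ and its Jordan decomposition. By \eqref{Jtotvol}, $\mu$ is a well-defined finite signed Borel measure with $\mu(\F)=0$; write $\mu = \mu^+ - \mu^-$ with $\mu^\pm$ finite positive measures. The one structural input I would single out first is that, since $\mu = \tilde\rho-\rho$ with $\tilde\rho,\rho$ nonnegative, $\mu^+\le\tilde\rho$ and $\mu^-\le\rho$ as set functions; in particular, for every $x\in\F$,
\[
\int_\F \L(x,y)\,d\mu^-(y)\ \le\ \int_\F \L(x,y)\,d\rho(y)\ =\ \ell(x)+\tfrac{\nu}{2}\ \le\ C:=\sup_\F\big(\ell+\tfrac{\nu}{2}\big)<\infty,
\]
the last bound by assumption~(iv)/\eqref{Jelldef}. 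I would also record the routine measurability/$\sigma$-finiteness facts needed for Tonelli: $\L$ is Borel (being lower semi-continuous), $\rho$ is $\sigma$-finite by~(iii), and $\mu^\pm$ are finite, so Tonelli's theorem applies to any of the three pairs of measures.

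Next I would dispose of the first two integrals in \eqref{Jintegrals}. The first one equals $\int_\F\big(\ell(x)+\tfrac{\nu}{2}\big)\,d\mu(x)$ directly from the definition of $\ell$, hence is finite since $\ell$ is bounded and $\mu$ finite. For the second, I split $\mu=\mu^+-\mu^-$ under the $\rho$-integral and apply Tonelli to each of the two resulting nonnegative double integrals, interchanging the order of integration to obtain $\int_\F\big(\ell(y)+\tfrac{\nu}{2}\big)\,d\mu^\pm(y)$; both are $\le C\,\mu^\pm(\F)<\infty$. Therefore the second integral equals $\int_\F\big(\ell(y)+\tfrac{\nu}{2}\big)\,d\mu(y)$ and coincides with the first.

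The substantive step is the third (signed) double integral. I would write its inner integral as $h(x):=\int_\F \L(x,y)\,d\mu(y)=g_+(x)-g_-(x)$ with $g_\pm(x):=\int_\F\L(x,y)\,d\mu^\pm(y)\in[0,\infty]$; the displayed bound gives $g_-\le C$, so $h$ is well-defined with values in $(-\infty,\infty]$ and bounded below by $-C$. Then the third integral equals $\int_\F h\,d\mu=\int_\F g_+\,d\mu-\int_\F g_-\,d\mu$, where the second term is finite ($g_-$ bounded, $\mu$ finite). For the first term, $\int_\F g_+\,d\mu=\int_\F g_+\,d\mu^+-\int_\F g_+\,d\mu^-$, and the subtracted piece $\int_\F g_+\,d\mu^-$ equals $\int_\F g_-\,d\mu^+$ by Tonelli together with the symmetry $\L(x,y)=\L(y,x)$, which is $\le C\,\mu^+(\F)<\infty$. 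Hence $\int_\F g_+\,d\mu$ is well-defined in $(-\infty,\infty]$, and so is the third integral, with value in $\R\cup\{\infty\}$. Adding the three contributions yields $2\int_\F\big(\ell(x)+\tfrac{\nu}{2}\big)\,d(\tilde\rho-\rho)(x)$ from the first two integrals plus the third integral unchanged, which is exactly \eqref{Jintegrals2}.

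The only real obstacle is the bookkeeping for the signed double integral: one must check that no $\infty-\infty$ ambiguity arises. What makes it work is that $\L\ge0$ combined with $\mu^-\le\rho$ and the boundedness of $\ell$ forces every potentially infinite quantity to be $+\infty$ (never $-\infty$), so all the sums above are unambiguous; everything else is an application of Tonelli and the symmetry of $\L$.
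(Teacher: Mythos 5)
Your proof is correct and follows essentially the same route as the paper: Jordan decomposition $\mu=\mu^+-\mu^-$ with $\mu^-\le\rho$, nonnegativity of $\L$ plus Tonelli, and boundedness of $\ell$ to show that every potentially negative contribution is finite (your bounds are exactly the estimates \eqref{Jfinite}, up to one extra Tonelli swap in the cross term). The only difference is slightly more explicit bookkeeping of the third double integral and of the identity \eqref{Jintegrals2}, which the paper leaves largely implicit.
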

\Proof Decomposing the signed measure~$\mu=\tilde{\rho}-\rho$ into its positive and negative parts,
$\mu = \mu^+-\mu^-$ (see the Jordan decomposition in~\cite[\S29]{halmosmt}), the measures~$\mu^\pm$
are both positive measures of finite total volume and~$\mu^- \leq \rho$.
In order to show that the integrals in~\eqref{Jintegrals}
are well-defined, we need to prove that the negative contributions are finite, i.e.
\beq \label{Jfinite}
\int_\F d\mu^-(x) \int_\F d\rho\: \L(x,y) < \infty \qquad \text{and} \qquad
\int_\F d\mu^+(x) \int_\F d\mu^-(y)\: \L(x,y) < \infty
\eeq
(here we apply Tonelli's theorem and make essential use of the fact that the Lagrangian is non-negative).
The bounds~\eqref{Jfinite} follow immediately from the estimates
\begin{align*}
\int_\F & d\mu^-(x) \int_\F d\rho\: \L(x,y) = \int_\F d\mu^-(x) \:\Big(\ell(x) + \frac{\nu}{2} \Big)
\leq \Big( \sup_\F \ell + \frac{\nu}{2} \Big) \:\mu^-(\F) < \infty \\
\int_\F &d\mu^+(x) \int_\F d\mu^-(y)\: \L(x,y) \leq \int_\F d\mu^+(x) \int_\F d\rho\: \L(x,y) \\
&= \int_\F d\mu^+(x) \:\Big(\ell(x) + \frac{\nu}{2} \Big)\:
\leq \Big( \sup_\F \ell + \frac{\nu}{2} \Big) \:\mu^+(\F) < \infty \:,
\end{align*}
where we used the fact that~$\ell$ is assumed to be a bounded function on~$\F$.
\QED

\begin{Def} The measure~$\rho$ is said to be a {\textbf{minimizer}} of the causal action
if the difference~\eqref{Jintegrals2}
is non-negative for all~$\tilde{\rho}$ satisfying~\eqref{Jtotvol},
\[ \big( \Sact(\tilde{\rho}) - \Sact(\rho) \big) \geq 0 \:. \]
\end{Def}

We close this section with a remark on the existence theory.
If~$\F$ is compact, the existence of minimizers can
be shown just as in~\cite[Section~1.2]{continuum} using the Banach-Alaoglu theorem
(the fact that~$\L$ is semi-continuous implies that the weak-$*$-limit of a minimizing sequence of
measures is indeed a minimizer).
In the non-compact setting, the existence theory has not yet been developed
(for more details on this point see~\cite[\S1.1.1]{cfs}). For the purpose of the present paper,
all we need is that the causal action principle admits {\em{local}} minimizers
which satisfy the corresponding Euler-Lagrange equations.
These concepts will be introduced in Sections~\ref{JsecEL} and~\ref{Jseclocmin} below.
Moreover, in Section~\ref{Jseclattice} we will analyze an example where local minimizers
exist although~$\F$ is non-compact.

\subsection{The Euler-Lagrange Equations} \label{JsecEL}
We now derive the Euler-Lagrange (EL) equations, following the method in the
compact setting~\cite[Lemma~3.4]{support}.
\begin{Lemma} \Thmt{(Euler-Lagrange equations)} \label{JlemmaEL}
Let~$\rho$ be a minimizer of the causal action. Then
\beq \label{JEL1}
\ell|_{\supp \rho} \equiv \inf_\F \ell \:.
\eeq
\end{Lemma}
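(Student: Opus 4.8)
The plan is to follow the strategy of the compact-setting proof (Lemma~\ref{NlemmaEL}), namely to compare~$\rho$ with a measure obtained by transporting a little mass from a point of~$\supp \rho$ to an arbitrary competitor point, and to exploit minimality together with lower semi-continuity. The one genuinely new feature is that the test measure~$\tilde{\rho}_\tau = (1-\tau)\,\rho + \tau\, \delta_y$ used in the compact case is no longer admissible once~$\rho(\F) = \infty$, since it violates~\eqref{Jtotvol}; so the mass added at the competitor point must be balanced by mass removed near the base point.

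In detail, I would first record that, by~\eqref{Jelldef}, $\ell$ is bounded, so $m := \inf_\F \ell$ is finite; since~$\ell \geq m$ everywhere, it suffices to show~$\ell(x_0) \leq m$ for every~$x_0 \in \supp \rho$. Fix such an~$x_0$ and an arbitrary~$y_0 \in \F$ with~$y_0 \neq x_0$. Because~$\rho$ is locally finite and~$x_0 \in \supp \rho$, one can choose an open neighbourhood~$U$ of~$x_0$ with~$y_0 \notin U$ and~$0 < \rho(U) < \infty$. For~$\tau \in [0,1]$ I would set
\[
\tilde{\rho}_\tau := \rho - \tau\, \chi_U\, \rho + \tau\, \rho(U)\, \delta_{y_0} \:.
\]
This is a positive Borel measure, and $\mu_\tau := \tilde{\rho}_\tau - \rho$ satisfies $|\mu_\tau|(\F) = 2\tau\, \rho(U) < \infty$ and $\mu_\tau(\F) = 0$, so~\eqref{Jtotvol} holds and the identity~\eqref{Jintegrals2} applies to~$\tilde{\rho}_\tau$.

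Next I would evaluate~\eqref{Jintegrals2}. The constant~$\nu/2$ cancels in the linear term, which becomes $2\tau\, \bigl( \rho(U)\, \ell(y_0) - \int_U \ell\, d\rho \bigr)$, while the quadratic double integral equals~$\tau^2 B$ for a fixed~$B \in \R$ (finite because~$\ell$ is bounded, $\rho(U) < \infty$, $\L(\,\cdot\,, y_0)$ is $\rho$-integrable, and~$\L(y_0, y_0) < \infty$). Minimality of~$\rho$ then gives $2\tau\, \bigl( \rho(U)\, \ell(y_0) - \int_U \ell\, d\rho \bigr) + \tau^2 B \geq 0$ for all~$\tau \in [0,1]$; dividing by~$\tau$ and letting~$\tau \searrow 0$ yields
\[
\frac{1}{\rho(U)} \int_U \ell\, d\rho \;\leq\; \ell(y_0) \:.
\]
Finally, lower semi-continuity of~$\ell$ (inherited from that of~$\L$) gives, for every~$\varepsilon > 0$, such a neighbourhood~$U$ of~$x_0$ on which~$\ell > \ell(x_0) - \varepsilon$; then the left-hand side is at least~$\ell(x_0) - \varepsilon$, so~$\ell(x_0) - \varepsilon \leq \ell(y_0)$. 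Letting~$\varepsilon \searrow 0$ and then taking the infimum over~$y_0 \in \F$ gives~$\ell(x_0) \leq m$, hence~$\ell(x_0) = m$, which is~\eqref{JEL1}.

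The computations are all routine; the points where I would be most careful — and hence the main obstacles — are checking that~$\tilde{\rho}_\tau$ really satisfies both conditions in~\eqref{Jtotvol} in the infinite-volume regime (one cannot simply rescale~$\rho$), verifying finiteness of the quadratic term~$B$, and the last step, where lower semi-continuity (rather than continuity, which is not available here) is precisely what licenses passing from the $\rho$-average of~$\ell$ over~$U$ to the pointwise value~$\ell(x_0)$.
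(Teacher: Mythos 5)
Your proposal is correct and follows essentially the same route as the paper: the same test family $\tilde{\rho}_\tau = \rho - \tau\,\chi_U\,\rho + \tau\,\rho(U)\,\delta_{y}$ (which manifestly satisfies~\eqref{Jtotvol}), the same evaluation of~\eqref{Jintegrals2} yielding the averaged inequality $\frac{1}{\rho(U)}\int_U \ell\,d\rho \leq \ell(y)$, and the same final appeal to lower semi-continuity of~$\ell$. The only cosmetic difference is that you conclude via an $\varepsilon$-argument on shrinking neighbourhoods while the paper phrases the last step as a contradiction; your extra care about finiteness of the quadratic term and the choice $y_0\notin U$ is sound but not essentially new.
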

\Proof Given~$x_0 \in \supp \rho$, we choose an open neighborhood~$U$ with~$0 < \rho(U)<\infty$.
For any~$y \in \F$ we consider the family of measures~$(\tilde{\rho}_\tau)_{\tau \in [0,1)}$ given by
\[ \tilde{\rho}_\tau = \chi_{M \setminus U} \,\rho + (1-\tau)\, \chi_U \, \rho + \tau\, \rho(U)\, \delta_{y} \]
(where~$\delta_y$ is the Dirac measure supported at~$y$). Then
\beq \label{Jtilderho}
\tilde{\rho}_\tau - \rho = -\tau\, \chi_U \, \rho + \tau\, \rho(U)\, \delta_{y} 
= \tau \big( \rho(U)\, \delta_{y} - \chi_U\, \rho \big) \:,
\eeq
implying that~$\tilde{\rho}_\tau$ satisfies~\eqref{Jtotvol}. Hence
\begin{align*} 
0 &\leq \big(\Sact(\tilde{\rho}) - \Sact(\rho) \big) = 2 \tau 
\left( \rho(U)\, \Big( \ell(y) + \frac{\nu}{2} \Big)- \int_U \Big( \ell(x)+ \frac{\nu}{2} \Big)\, d\rho(x) \right) + \O \big(\tau^2 \big) \:.
\end{align*}
As a consequence, the linear term must be non-negative,
\begin{align} \label{JELInt}
\ell(y) \geq \frac{1}{\rho(U)} \int_U \ell(x)\, d\rho(x) \:.
\end{align}
Assume that~\eqref{JEL1} is false. Then there is~$x_0 \in \supp \rho$ and~$y \in \F$ such that $\ell(x_0) > \ell(y)$. 
Lower semi-continuity of $\ell$ implies that there is an open
neighborhood $U$ of $x_0$ such that $\ell(x) > \ell(y)$ for all $x \in U$, in contradiction to~\eqref{JELInt}. This gives the result.
\QED
We always choose~$\nu$ such that~$\inf_\F \ell=0$. Then the EL equations~\eqref{JEL1} become
\begin{align}\label{JELstrong}
\ell|_{\supp \rho} \equiv \inf_\F \ell = 0 \: .
\end{align}
We remark that~$\nu$ can be understood as the Lagrange multiplier describing the volume constraint;
see~\cite[\S1.4.1]{cfs}.

\subsection{Local Minimizers and Second Variations} \label{Jseclocmin}
We now introduce the concept of local minimizers of causal variational principles
and explore the connection to second variations.
We derive a convenient criterion which ensures that a measure~$\rho$ is
a local minimizer (Proposition~\ref{JLocSufficient}).
This criterion will be used in the example of Section~\ref{Jseclattice}
to prove the existence of local minimizers.

We again consider families of variations~$(\tilde{\rho}_\tau)_{\tau \in [0, \delta)}$
with~$\tilde{\rho}_0=\rho$ and assume that the measures~$\tilde{\rho}_\tau$ all satisfy
the conditions in~\eqref{Jtotvol}. Then the family of measures~$\mu_\tau$ defined by
\beq \label{Jmudef}
\mu_\tau := \tilde{\rho}_\tau - \rho \:,
\eeq
are in the Banach space~${\mathfrak{B}}(\F)$ of signed measures on~$\F$
with the norm given by the total variation.
\begin{Def} \label{Jdeflocmin}
The measure~$\rho$ is a {\textbf{local minimizer}} of the causal action if for
every family~$(\tilde{\rho}_\tau)_{\tau \in [0,\delta)}$ of Borel measures which has the property that~$\mu_\tau$
defined by~\eqref{Jmudef} is a smooth regular
curve~$\mu : [0, \delta) \rightarrow {\mathfrak{B(\F)}}$ with $\mu_\tau(\F) = 0$,
there is~$\delta_0 \in (0, \delta)$ such that
\beq \label{JSrhoin}
\big( \Sact(\tilde{\rho}_\tau) - \Sact(\rho) \big) \geq 0 \qquad \text{for all~$\tau \in [0, \delta_0)$} \:.
\eeq
\end{Def}

We first derive the implications of local minimality. To this end, we assume that~$\rho$ is a local minimizer.
Then obviously the EL equations~\eqref{JEL1} hold, because the curve~\eqref{Jtilderho}
has the properties in the above definition. We consider
variations~$(\tilde{\rho}_\tau)_{\tau \in (-\delta, \delta)}$ of the form
\beq \label{Jrhovarpsi}
\tilde{\rho}_\tau = (1 + \tau \psi)\: \rho\:,
\eeq
where~$\psi$ is a real-valued function on~$\F$.
In order to ensure that these measures are again positive for sufficiently small~$\delta>0$,
we must assume that~$\psi$ is
an essentially bounded function. Moreover, these measures satisfy the conditions in the above
definition if and only if
\[ \int_M |\psi|\: d\rho < \infty \qquad \text{and} \qquad \int_M \psi\: d\rho = 0 \:. \]
Hence we must assume that~$\psi$ is in the space
\beq \label{JDomainLrho}
{\mathscr{D}}(\L_\rho) := \Big\{ \psi \in (L^1 \cap L^\infty)(M, d\rho) \:\Big|\: \int_M \psi\: d\rho = 0 \Big\} \:.
\eeq
By interpolation, the function~$\psi$ is also a vector in the Hilbert space~$L^2(M, d\rho)$,
also denoted by~$(\H_\rho, \la .,. \ra_\rho)$.
The operator~$\L_\rho$ in the following lemma was already analyzed in~\cite[Lemma~3.5]{support} in the compact setting.
We now extend this analysis to the non-compact setting.

\begin{Lemma} For~$\psi \in (L^1 \cap L^\infty)(M, d\rho)$, the function~$\L_\rho \psi$ defined by
\[ (\L_\rho \psi)(x) = \int_M \L(x,y)\: \psi(y)\: d\rho(y) \]
is in~$L^2(M, d\rho)$, giving rise to a linear operator
\[ \L_\rho \::\: {\mathscr{D}}(\L_\rho)  \subset \H_\rho \rightarrow \H_\rho \:. \]
\end{Lemma}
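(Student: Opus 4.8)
The plan is to exploit the uniform bound on the ``rows'' of the kernel that is built into assumption~(iv). Since $\ell$ in~\eqref{Jelldef} is by hypothesis a bounded function on~$\F$, there is a constant $C>0$ with
\[ \int_\F \L(x,y)\: d\rho(y) = \ell(x) + \frac{\nu}{2} \;\leq\; C \qquad \text{for all } x \in \F\:. \]
First I would check that $\L_\rho \psi$ is a well-defined, $\rho$-measurable function: $\L$ is Borel on $\F \times \F$ (being lower semi-continuous), $\psi$ is $\rho$-measurable, and for every fixed $x$ the integral $\int_M \L(x,y)\, \psi(y)\, d\rho(y)$ converges absolutely because $|\psi| \leq \|\psi\|_{L^\infty}$ and $\L(x,\cdot)$ is $\rho$-integrable; joint measurability together with Tonelli's theorem then gives measurability of $x \mapsto (\L_\rho\psi)(x)$.

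The heart of the argument is a weighted Cauchy--Schwarz estimate. For fixed $x$, regard $\L(x,y)\, d\rho(y)$ as a positive measure of total mass at most $C$; applying Cauchy--Schwarz with respect to this measure yields
\[ \big|(\L_\rho \psi)(x)\big|^2 \;\leq\; \Big(\int_M \L(x,y)\, d\rho(y)\Big)\,\Big(\int_M \L(x,y)\, |\psi(y)|^2\, d\rho(y)\Big) \;\leq\; C \int_M \L(x,y)\, |\psi(y)|^2\, d\rho(y)\:. \]
Integrating in $x$, swapping the order of integration by Tonelli (the integrand is non-negative), carrying out the $x$-integral with the help of the symmetry~$\L(x,y)=\L(y,x)$, and using the bound once more, I obtain
\[ \int_M \big|(\L_\rho \psi)(x)\big|^2\, d\rho(x) \;\leq\; C \int_M |\psi(y)|^2 \Big(\int_M \L(y,x)\, d\rho(x)\Big)\, d\rho(y) \;\leq\; C^2 \int_M |\psi(y)|^2\, d\rho(y) \;=\; C^2\, \|\psi\|_{L^2(M,d\rho)}^2\:. \]
Since $\psi \in (L^1 \cap L^\infty)(M,d\rho)$ lies in $L^2(M,d\rho)$ by the interpolation inequality $\|\psi\|_{L^2}^2 \leq \|\psi\|_{L^\infty}\, \|\psi\|_{L^1}$, the right-hand side is finite; hence $\L_\rho\psi \in L^2(M,d\rho)$. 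Linearity of $\psi \mapsto \L_\rho\psi$ is immediate from linearity of the integral, so $\L_\rho$ is a well-defined linear map ${\mathscr{D}}(\L_\rho) \subset \H_\rho \to \H_\rho$.

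I do not expect a genuine obstacle. The only points needing care are (a) the measurability statement and the interchange of the order of integration --- both handled by Tonelli since $\L \geq 0$ --- and (b) the observation that assumption~(iv) must be invoked \emph{twice}: once inside the $x$-integral to bring the square down to a single power of~$\L$, and once after swapping the integrals. In fact the same computation gives $\|\L_\rho\psi\|_{\H_\rho} \leq C\,\|\psi\|_{\H_\rho}$, so $\L_\rho$ extends to a bounded operator on $\H_\rho$, matching the analysis of~\cite[Lemma~3.5]{support} in the compact case; but the lemma as stated only claims the mapping property on~${\mathscr{D}}(\L_\rho)$.
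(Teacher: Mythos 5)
Your argument is correct, and it reaches the same quantitative conclusion as the paper, but by a slightly different route at the level of the key estimate. The paper's proof never uses Cauchy--Schwarz: it bounds one factor of the squared integral crudely by $|\psi(y)| \leq \|\psi\|_{L^\infty(M)}$, identifies the remaining $y'$-integral with $\ell(x)+\tfrac{\nu}{2}$, pulls out $\sup_M \big(\ell+\tfrac{\nu}{2}\big)$, and after a single application of Tonelli and the symmetry of~$\L$ lands directly on the bound $\|\psi\|_{L^\infty}\, \sup_M \big(\ell+\tfrac{\nu}{2}\big)^2\, \|\psi\|_{L^1} < \infty$. You instead run a Schur-test style estimate: Cauchy--Schwarz with respect to the weighted measure $\L(x,y)\,d\rho(y)$ gives $\|\L_\rho\psi\|_{L^2} \leq C\, \|\psi\|_{L^2}$ with $C = \sup_\F\big(\ell+\tfrac{\nu}{2}\big)$, and then the interpolation inequality $\|\psi\|_{L^2}^2 \leq \|\psi\|_{L^\infty}\|\psi\|_{L^1}$ (which the paper also invokes, just before the lemma, to view $\psi$ as an element of $\H_\rho$) closes the argument. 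Both proofs use the same two structural inputs --- non-negativity of $\L$ (so Tonelli applies), symmetry, and the uniform bound on $\int_\F \L(x,\cdot)\,d\rho$ coming from the boundedness of $\ell$ --- and your attention to measurability and to where the bound is used twice is sound. What your version buys is the genuinely stronger statement that $\L_\rho$ is bounded as an operator on $L^2(M,d\rho)$ and hence extends continuously to all of $\H_\rho$, which the lemma as stated does not claim; what the paper's version buys is a marginally more elementary computation that uses the $L^1 \cap L^\infty$ hypothesis directly without passing through Cauchy--Schwarz.
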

\Proof We apply Tonelli's theorem to obtain
\begin{align*}
\int_M \big| \L_\rho \psi \big|^2\: d\rho
&\leq \|\psi\|_{L^\infty(M)} \int_M d\rho(x) \int_M d\rho(y)\: \L(x,y)\: |\psi(y)|  \int_M d\rho(y')\: \L(x,y') \\
&= \|\psi\|_{L^\infty(M)} \int_M d\rho(x) \int_M d\rho(y)\: \L(x,y)\: |\psi(y)| \:\Big( \ell(x) + \frac{\nu}{2} \Big) \\
&\leq \|\psi\|_{L^\infty(M)}\: \sup_{M} \Big( \ell + \frac{\nu}{2} \Big) \int_M d\rho(x) \int_M d\rho(y)\: \L(x,y) \: |\psi(y)| \\
&= \|\psi\|_{L^\infty(M)}\: \sup_{M} \Big( \ell + \frac{\nu}{2} \Big) \int_M d\rho(y)\: |\psi(y)| \int_M d\rho(x) \: \L(x,y) \\
&\leq \, \|\psi\|_{L^\infty(M)}\: \sup_{M} \Big( \ell + \frac{\nu}{2} \Big)^2\: \|\psi\|_{L^1(M)}
< \infty \:,
\end{align*}
where we used that~$\L$ is symmetric and that~$\ell$ is bounded according to our assumption~\eqref{Jelldef}.
This gives the result.
\QED

\begin{Prp} If~$\rho$ is a local minimizer, then the operator~$\L_\rho
: {\mathscr{D}}(\L_\rho) \rightarrow \H_\rho$ is positive (but not necessarily strictly positive).
\end{Prp}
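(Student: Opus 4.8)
The plan is to test the local minimality of $\rho$ against the linear family of variations $\tilde\rho_\tau = (1+\tau \psi)\,\rho$ already considered in~\eqref{Jrhovarpsi}, for an arbitrary $\psi \in \mathscr{D}(\L_\rho)$, and to read off the claimed positivity from the sign of the resulting second variation.

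First I would check that this family is admissible in the sense of Definition~\ref{Jdeflocmin}. Since $\psi \in L^\infty(M, d\rho)$, the measures $\tilde\rho_\tau$ are positive for $|\tau| < \|\psi\|_{L^\infty(M)}^{-1}$; since $\psi \in L^1(M, d\rho)$, the signed measure $\mu_\tau := \tilde\rho_\tau - \rho = \tau\, \psi\, \rho$ lies in $\mathfrak{B}(\F)$, and $\tau \mapsto \mu_\tau$ is a smooth (indeed linear) regular curve with $\mu_\tau(\F) = \tau \int_M \psi\, d\rho = 0$ by the definition of $\mathscr{D}(\L_\rho)$ in~\eqref{JDomainLrho}. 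In particular the conditions~\eqref{Jtotvol} hold, so the difference formula~\eqref{Jintegrals2} is applicable to $\tilde\rho_\tau$.

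Next I would evaluate~\eqref{Jintegrals2} for $\mu_\tau = \tau\, \psi\, \rho$. The linear term is $2\tau \int_M \big(\ell(x) + \tfrac{\nu}{2}\big)\, \psi(x)\, d\rho(x)$; since $\ell$ vanishes on $\supp \rho$ by the Euler--Lagrange equations~\eqref{JELstrong} and $\int_M \psi\, d\rho = 0$, this term equals $\nu \tau \int_M \psi\, d\rho = 0$. The quadratic term is $\tau^2 \int_M d\rho(x) \int_M d\rho(y)\, \psi(x)\, \psi(y)\, \L(x,y) = \tau^2\, \la \psi, \L_\rho \psi \ra_\rho$, which is a finite real number by the preceding lemma (ensuring $\L_\rho \psi \in \H_\rho$, so the inner product is well-defined). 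Hence $\Sact(\tilde\rho_\tau) - \Sact(\rho) = \tau^2\, \la \psi, \L_\rho \psi \ra_\rho$, and the local minimality inequality~\eqref{JSrhoin} forces $\la \psi, \L_\rho \psi \ra_\rho \geq 0$ for all $\psi \in \mathscr{D}(\L_\rho)$, which is the assertion.

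The argument is short, and the only points requiring care are the bookkeeping ones just described: verifying admissibility of the variation (using that $\psi$ is bounded and integrable with vanishing $\rho$-integral) and using the normalization $\inf_\F \ell = 0$ together with $\int_M \psi\, d\rho = 0$ to cancel the first-order contribution \emph{exactly}, so that the sign of the second variation is unobstructed. There is no genuine analytic obstacle here. Strict positivity is deliberately not claimed, since e.g.\ for Lagrangians of short range one expects directions $\psi$ on which $\L_\rho$ degenerates.
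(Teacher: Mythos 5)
Your proof is correct and follows essentially the same route as the paper: both test local minimality against the variation~\eqref{Jrhovarpsi}, use~\eqref{Jintegrals2} with the EL equations and $\int_M \psi\, d\rho = 0$ to kill the first-order term, and identify the quadratic term with $\la \psi, \L_\rho \psi \ra_\rho$. Your additional bookkeeping (admissibility of the curve, positivity of $\tilde\rho_\tau$ for small $\tau$) is exactly what the paper leaves implicit from the preceding discussion of $\mathscr{D}(\L_\rho)$.
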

\Proof Computing~\eqref{Jintegrals2} for the variation~\eqref{Jrhovarpsi} gives
\beq \begin{split}
\big( \Sact(\tilde{\rho}) - \Sact(\rho) \big) &= 2 \tau \int_\F \Big(\ell(x) + \frac{\nu}{2} \Big) \:\psi(x)\: d\rho \\
&\qquad + \tau^2 \int_\F \psi(x)\: d\rho(x) \int_\F \psi(y)\: d\rho(y)\: \L(x,y) \:.
\end{split} \label{J216}
\eeq
The first summand vanishes in view of the EL equations~\eqref{JEL1}. The second summand,
on the other hand, exists in view of the estimates
\begin{align*}
\int_\F &\psi(x)\: d\rho(x) \int_\F \psi(y)\: d\rho(y)\: \L(x,y)
\leq \|\psi\|_{L^\infty(M)}\: \int_\F \psi(x)\: d\rho(x) \int_\F  d\rho(y)\: \L(x,y) \\
& \leq \|\psi\|_{L^\infty(M)}\: \sup_{M} \Big( \ell + \frac{\nu}{2} \Big) \int_\F \psi(x)\: d\rho(x)  = \|\psi\|_{L^\infty(M)}\: \sup_{M} \Big( \ell + \frac{\nu}{2} \Big) \|\psi\|_{L^1(M)} \, .
\end{align*}
Rewriting the second summand in~\eqref{J216} as an expectation value, we obtain
\[ \big( \Sact(\tilde{\rho}) - \Sact(\rho) \big) = \tau^2 \: \la \psi, \L_\rho \psi \ra_\rho \:. \]
Applying the inequality~\eqref{JSrhoin} gives the result.
\QED

We finally give a criterion which ensures that~$\rho$ is a local minimizer.

\begin{Prp}\label{JLocSufficient} Let~$\rho$ be a Borel measure with the following properties:
\begin{itemize}[leftmargin=2em]
\item[{\textrm{(a)}}] The EL equations~\eqref{JELstrong} are satisfied and in addition
\beq \label{Jimply}
\ell(x) = 0 \quad \Longrightarrow \quad x \in \supp \rho\:.
\eeq
\item[{\textrm{(b)}}] The Lagrangian~$\L : \F \times \F \rightarrow \R^+_0$ is
a bounded function.
\item[{\textrm{(c)}}] The operator~$\L_\rho
: {\mathscr{D}}(\L_\rho) \rightarrow \H_\rho$ is strictly positive in the sense that
there is~$\varepsilon>0$ such that
\beq \label{Jspos}
\la \psi, \L_\rho \psi \ra_\rho \geq \varepsilon\: \|\psi\|_\rho^2 \qquad \text{for all~$\psi \in {\mathscr{D}}(\L_\rho)$} \:.
\eeq
\end{itemize}
Then~$\rho$ is a local minimizer.
\end{Prp}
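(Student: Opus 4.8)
The plan is to substitute the given variation into the second-variation identity \eqref{Jintegrals2}. Writing $\mu_\tau := \tilde{\rho}_\tau - \rho$ and using $\mu_\tau(\F) = 0$, the Lagrange-multiplier term drops out and
\[
\big( \Sact(\tilde{\rho}_\tau) - \Sact(\rho) \big) = 2 \int_\F \ell\, d\mu_\tau + \int_\F d\mu_\tau(x) \int_\F d\mu_\tau(y)\: \L(x,y) \:.
\]
The first step is the Lebesgue decomposition $\mu_\tau = \psi_\tau\, \rho + \sigma_\tau$ with $\sigma_\tau \perp \rho$. Since $\tilde{\rho}_\tau = \rho + \mu_\tau$ is a positive measure, one gets $\psi_\tau \geq -1$ $\rho$-a.e.\ and, crucially, $\sigma_\tau \geq 0$; also $\int_\F \psi_\tau\, d\rho = -\sigma_\tau(\F)$, so $\sigma_\tau(\F) \leq \|\psi_\tau\|_{L^1(\rho)}$. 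For the linear term, the Euler-Lagrange equations \eqref{JELstrong} give $\ell|_{\supp \rho} \equiv 0$ and $\inf_\F \ell = 0$; as $\psi_\tau\, \rho$ is carried by $\supp \rho$ while $\ell \geq 0$ and $\sigma_\tau \geq 0$, this yields $\int_\F \ell\, d\mu_\tau = \int_\F \ell\, d\sigma_\tau \geq 0$, so the linear term never hurts. Hypothesis \eqref{Jimply} together with lower semi-continuity (which makes $\{\ell = 0\}$ closed) upgrades this to $\{\ell = 0\} = \supp \rho$, which will be needed to locate the singular part.

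For the quadratic term I would expand it, using $\L \geq 0$ and $\sigma_\tau \geq 0$, as $\la \psi_\tau, \L_\rho \psi_\tau \ra_\rho + 2 \int_\F (\L_\rho \psi_\tau)\, d\sigma_\tau + \iint \L\, d\sigma_\tau\, d\sigma_\tau$, where the last summand is nonnegative and the cross term is controlled by boundedness of $\L$ (hypothesis (b)), $\big| \int_\F (\L_\rho \psi_\tau)\, d\sigma_\tau \big| \leq \|\L\|_\infty\, \|\psi_\tau\|_{L^1(\rho)}\, \sigma_\tau(\F)$. For the first summand, since $\int_\F \psi_\tau\, d\rho = -\sigma_\tau(\F)$ is in general nonzero, $\psi_\tau$ is not literally in $\mathscr{D}(\L_\rho)$; I would correct $\psi_\tau$ by a fixed bounded unit-mass density, estimate the correction again with $\|\L\|_\infty$, and then invoke the strict positivity \eqref{Jspos} (after a truncation to reduce to the domain $\mathscr{D}(\L_\rho)$) to obtain $\la \psi_\tau, \L_\rho \psi_\tau \ra_\rho \geq \varepsilon\, \|\psi_\tau\|_\rho^2 - C\, \sigma_\tau(\F)\big( \sigma_\tau(\F) + \|\psi_\tau\|_{L^1(\rho)} \big)$. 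Collecting the pieces and using that $\tau \mapsto \mu_\tau$ is a smooth regular curve through the origin — so $|\mu_\tau|(\F) = O(\tau)$, hence $\|\psi_\tau\|_{L^1(\rho)}, \sigma_\tau(\F) = O(\tau)$ and $\int_\F \ell\, d\mu_\tau = \tau \int_\F \ell\, d\dot{\mu}_0 + O(\tau^2)$ — I would split into cases. If $\int_\F \ell\, d\dot{\mu}_0 > 0$, the linear term is of order $\tau$ and dominates the $O(\tau^2)$ remainder, so the difference is positive for small $\tau$. If $\int_\F \ell\, d\dot{\mu}_0 = 0$, then the singular part of $\dot{\mu}_0$ is concentrated on $\{\ell = 0\} = \supp \rho$, and one keeps the exact inequality $\int_\F \ell\, d\mu_\tau \geq 0$ and relies on the strictly positive term $\varepsilon\, \|\psi_\tau\|_\rho^2$ together with the nonnegative $\sigma_\tau$-self-interaction to absorb the $O(\tau^2)$ contributions from the cross and correction terms.

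I expect the main obstacle to be exactly this last case: controlling the singular part $\sigma_\tau$ of the variation and its interaction with the absolutely continuous part $\psi_\tau\, \rho$. When $\sigma_\tau$ sits on (or near) $\supp \rho$, where $\ell$ is small so the linear term contributes nothing, one must show that the indefinite cross term and the defect from projecting $\psi_\tau$ into $\mathscr{D}(\L_\rho)$ stay quadratically small relative to $\varepsilon\, \|\psi_\tau\|_\rho^2$ and the nonnegative singular self-interaction; here the boundedness hypothesis (b) supplies the crude estimates on the mixed terms, and \eqref{Jimply} is what guarantees that singular mass the variation might place where $\ell = 0$ lies inside $\supp \rho$ rather than in a region that would escape the quadratic-form estimate. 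The conceptual split is clear — the linear term handles the ``normal'' directions that leave $\supp \rho$, while the strictly positive operator $\L_\rho$ handles the ``tangential'' absolutely continuous directions — but the quantitative bookkeeping of the mixed and non-absolutely-continuous contributions is the technical heart of the argument.
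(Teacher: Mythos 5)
Your overall frame — expand via \eqref{Jintegrals2}, note that the linear term is $\geq 0$ because $\ell$ vanishes on $\supp\rho$ and, by \eqref{Jimply}, is strictly positive off $\supp\rho$, then split according to whether $\int_\F \ell\, d\dot\mu_0$ is positive or zero — is the same as the paper's, and your first case is fine: boundedness of $\ell$ and of $\L$ together with $|\mu_\tau|(\F)=O(\tau)$ make the order-$\tau$ linear term dominate the $O(\tau^2)$ quadratic term. The genuine gap is in the second case, and it is exactly the step you yourself call the technical heart. After the Lebesgue decomposition $\mu_\tau=\psi_\tau\rho+\sigma_\tau$, the terms you must absorb — the indefinite cross term $2\int_\F(\L_\rho\psi_\tau)\,d\sigma_\tau$ and the defect from correcting the mean of $\psi_\tau$ to land in $\mathscr{D}(\L_\rho)$ — are bounded only by $C\,\sigma_\tau(\F)\big(\sigma_\tau(\F)+\|\psi_\tau\|_{L^1(\rho)}\big)$, which is of the \emph{same} order $O(\tau^2)$ as the positive terms $\varepsilon\|\psi_\tau\|_\rho^2$, $\iint\L\,d\sigma_\tau\,d\sigma_\tau$ and $2\int\ell\,d\sigma_\tau$ that are supposed to absorb them. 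There is no inequality forcing the positive side to win: $\sigma_\tau(\F)$ and $\|\psi_\tau\|$ are independent quantities; $\iint\L\,d\sigma_\tau\,d\sigma_\tau$ and $\int\ell\,d\sigma_\tau$ can both vanish (e.g.\ $\sigma_\tau$ singular but concentrated on $M$, with $\L$ vanishing on the product of its support); and in the non-compact setting, where $\rho(\F)$ may be infinite, $\|\psi_\tau\|_{L^1}$ is not even controlled by $\|\psi_\tau\|_{L^2}$, so a Young-type absorption of $C\sigma_\tau(\F)\|\psi_\tau\|_{L^1}$ into $\varepsilon\|\psi_\tau\|_\rho^2$ is unavailable. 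In effect your scheme needs positive semi-definiteness of the quadratic form $\iint\L\,d\mu\,d\mu$ on mass-zero measures $\mu$ carried by $M$ \emph{including their singular parts}, which is strictly more than hypothesis (c) provides; as written, the argument does not close.

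It is instructive to see how the paper avoids this. It never decomposes the variation: it Taylor-expands in $\tau$ (so the second-order term $\int\ell\,d\ddot{\tilde\rho}_0\geq 0$ also appears, handled by the same positivity-off-$M$ argument), and in the case $\dot{\tilde\rho}_0(\F\setminus M)=0$ it shows the single quantity $\iint\L\,d\dot{\tilde\rho}_0\,d\dot{\tilde\rho}_0$ is strictly positive by an approximation-and-contradiction argument: pick $\psi_n\in\mathscr{D}(\L_\rho)$ with $\psi_n\rho\to\dot{\tilde\rho}_0$ in $\mathfrak{B}(\F)$ as in \eqref{Jmeasurelim}, use (b) to pass to the limit in the quadratic form, and use \eqref{Jspos} only qualitatively — if the limit were zero, then $\|\psi_n\|_\rho\to 0$, hence $\psi_n\rho\to 0$, contradicting $\dot{\tilde\rho}_0\neq 0$. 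The strictly positive $\tau^2$-coefficient then beats the $O(\tau^3)$ remainder. (Note that this approximation step is where the paper in effect treats the relevant variation as absolutely continuous with respect to $\rho$; your approach confronts the singular part explicitly, but then needs a coercivity statement you do not have.) To repair your route you would either have to reduce to the absolutely continuous case in the same spirit, or replace the quantitative bookkeeping by the paper's qualitative limiting argument; neither is in your outline.
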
 \noindent
We remark that condition~(b) could be replaced by weaker boundedness assumptions.
We do not aim for maximal generality because condition~(b) is suitable for
the applications we have in mind.

\Proof[Proof of Proposition~\ref{JLocSufficient}]
Let~$(\tilde{\rho}_\tau)_{\tau \in [0, \delta)}$ be as in Definition~\ref{Jdeflocmin}.
Since the curve~$\mu_\tau$ in Definition~\ref{Jdeflocmin} is regular, we know
that~$\dot{\tilde{\rho}}_0$ is non-zero.
Expanding~\eqref{Jintegrals2} in powers of~$\tau$, we obtain
\begin{align*}
\big( \Sact(\tilde{\rho}) - \Sact(\rho) \big) &= 2 \tau \int_\F \Big(\ell(x) + \frac{\nu}{2} \Big) \:d\dot{\tilde{\rho}}_0(x) 
+ \tau^2 \int_\F \Big(\ell(x) + \frac{\nu}{2} \Big) \:d\ddot{\tilde{\rho}}_0(x) \\
&\quad + 2 \tau^2 \int_\F d\dot{\tilde{\rho}}_0(x) \int_\F d\dot{\tilde{\rho}}_0(x)\: \L(x,y) + \O\big(\tau^3 \big)\:.
\end{align*}
Due to the volume constraint, the signed measures~$\dot{\tilde{\rho}}_0$ and~$\ddot{\tilde{\rho}}_0$
have total volume zero, so that the terms involving~$\nu$ drop out,
\beq \label{Jvary}
\begin{split}
\big( \Sact(\tilde{\rho}) - \Sact(\rho) \big) &= 2 \tau \int_\F \ell(x) \:d\dot{\tilde{\rho}}_0(x) 
+ \tau^2 \int_\F \ell(x)\:d\ddot{\tilde{\rho}}_0(x) \\
&\quad + 2 \tau^2 \int_\F d\dot{\tilde{\rho}}_0(x) \int_\F d\dot{\tilde{\rho}}_0(x)\: \L(x,y) + \O\big(\tau^3 \big)\:.
\end{split}
\eeq

We first consider the case that the measure~$\chi_{\F \setminus M} \dot{\tilde{\rho}}_0$ is non-zero.
Since the measures~$\tilde{\rho}_\tau$ are all positive,
we know that~$\chi_{\F \setminus M} \dot{\tilde{\rho}}_0$ is a positive measure.
Hence, using~\eqref{Jimply}, we conclude that
\[ \int_\F \ell(x)\:d\dot{\tilde{\rho}}_0(x) > 0 \:. \]
Hence the linear term in~\eqref{Jvary} ensures that~\eqref{JSrhoin} holds for sufficiently small~$\tau$.

It remains to consider the case that the measure~$\dot{\tilde{\rho}}_0$ is supported on~$M$.
Then the linear term in~\eqref{Jvary} vanishes because of the EL equations~\eqref{JEL1}.
Repeating the above argument with~$\dot{\tilde{\rho}}_0$ replaced by~$\ddot{\tilde{\rho}}_0$, we
find that $\chi_{\F \setminus M} \ddot{\tilde{\rho}}_0$  is a positive measure.
From this it follows that
\[ \int_\F \ell(x)\:d\ddot{\tilde{\rho}}_0(x) \geq 0 \:. \]
Therefore, in order to conclude the proof, it remains to show that
\begin{align}\label{Jshow3}
\int_M d\dot{\tilde{\rho}}_0(x) \int_M d\dot{\tilde{\rho}}_0(y)\: \L(x,y) > 0 \:.
\end{align}

We now use the following approximation argument. We choose a sequence~$\psi_n \in
{\mathscr{D}}(\L_\rho)$ such that
\beq \label{Jmeasurelim}
\psi_n\: \rho \rightarrow \dot{\tilde{\rho}}_0 \neq 0 \qquad \text{in~$\mathfrak{B(\F)}$}\:.
\eeq
Then
\begin{align*}
\int_M \psi_n(x)\: d\rho(x) \int_M \psi_n(y)\: d\rho(y)\: \L(x,y)  \rightarrow \int_M d\dot{\tilde{\rho}}_0(x) \int_M d\dot{\tilde{\rho}}_0(x)\: \L(x,y),
\end{align*}
because, setting~$\dot{\tilde{\rho}}_0 = \psi_n\: \rho + \Delta\rho$, we have
\begin{align*}
&\int_M d\dot{\tilde{\rho}}_0(x) \int_M d\dot{\tilde{\rho}}_0(x)\: \L(x,y) - \int_M \psi_n(x)\: d\rho(x) \int_M \psi_n(y)\: d\rho(y)\: \L(x,y)\\
&= \int_M d\Delta\rho(x) \int_M d\Delta\rho(y) \, \L(x,y) + 2 \int_M d\Delta\rho(x) \int_M \psi_n(y)\: d\rho(y) \, \L(x,y) \\
&\leq C \, \| \Delta \rho \|_{\mathfrak{B}(\F)}^2 + 2 \, \| \psi_n \|_{L^\infty(M)} \, \sup_{M} \Big( \ell + \frac{\nu}{2} \Big) \, \| \Delta \rho \|_{\mathfrak{B}(\F)} \rightarrow 0 \:,
\end{align*}
where $C:=\sup_{x,y \in \F}\L(x,y)$ is the pointwise bound of the Lagrangian.
Using the strict positivity~\eqref{Jspos}, we have
\begin{align} \label{Jsposuse}
\varepsilon\: \|\psi_n\|_\rho^2 \leq
\la \psi_n, \L_\rho \psi_n \ra_\H &= \int_M \psi_n(x)\: d\rho(x) \int_M \psi_n(y)\: d\rho(y)\: \L(x,y) ,
\end{align}
hence the left hand side of~\eqref{Jshow3} cannot be negative. 
Let us assume that it is zero,
\begin{align}\label{JlimNull}
\int_M d\dot{\tilde{\rho}}_0(x) \int_M d\dot{\tilde{\rho}}_0(y)\: \L(x,y) = 0 \, .
\end{align}
Using~\eqref{Jsposuse}, it follows that
\[ \|\psi_n\|_\rho^2 \rightarrow 0 \:. \]
Thus $\psi_n \rightarrow 0$ converges pointwise almost everywhere in $M$.
It follows that $\psi_n \rho \rightarrow 0$ in $\mathfrak{B}(\F)$, in contradiction to~\eqref{Jmeasurelim}.
This shows that assumption~\eqref{JlimNull} is false, concluding the proof.
\QED

\subsection{The Setting of Causal Fermion Systems} \label{Jseccfs}
We now explain how the causal action principle for causal fermion systems
(as introduced in Section~\ref{Nseccfsbasic}) can be described within the above setting.
The main difference compared to the setting in Section~\ref{Jsecnoncompact} is
that the causal action principle involves additional constraints, namely the
trace constraint and the boundedness constraint. We now explain how to
incorporate these constraints in a convenient way.
For a minimizer of the causal action, the local trace is constant on the support of the
universal measure (see~\cite[Proposition~1.4.1]{cfs}). With this in mind, we may
restrict attention to operators with fixed trace. When doing so, the trace constraint is
trivially satisfied. The boundedness constraint, on the other hand, can be incorporated by
Lagrange multiplier term. Finally, in the setting of causal fermion systems, the set~$\F$
is not necessarily a smooth manifold. In order to avoid this problem, we restrict attention
to minimizers for which all space-time points are regular (see~\cite[Definition~1.1.5]{cfs}).
Then we may restrict attention to operators which have exactly~$n$ positive and~$n$
negative eigenvalues. The resulting set of operators is a smooth manifold
(see the concept of a flag manifold in~\cite{helgason}). This leads us to the following setup:

Let~$(\H, \la .|. \ra_\H)$ be a finite-dimensional complex Hilbert space. Moreover, we are given
parameters~$n \in \N$ (the spin dimension), $c > 0$ (the constraint for the local trace)
and~$\kappa>0$ (the Lagrange multiplier of the boundedness constraint)\footnote{We
remark that the Lagrange multiplier~$\kappa$ is strictly positive because
otherwise there are no minimizers; see~\cite[Example~2.9]{continuum}
and~\cite[Exercise~1.4]{cfs}.}.
We let~$\F \subset \Lin(\H)$ be the set of all
self-adjoint operators~$F$ on~$\H$ with the following properties:
\begin{itemize}[leftmargin=2em]
\itemsep-.2em
\itemD $F$ has finite rank and (counting multiplicities) has~$n$ positive and~$n$ negative eigenvalues. \\[-0.8em]
\itemD The local trace is constant, i.e.
\[ \tr(F) = c\:. \]
\end{itemize}
On~$\F$ we consider the topology induced by the sup-norm on~$\Lin(\H)$.
For any~$x, y \in \F$, the product~$x y$ is an operator
of rank at most~$2n$. We denote its non-trivial eigenvalues counting algebraic multiplicities
by~$\lambda^{xy}_1, \ldots, \lambda^{xy}_{2n} \in \C$.
We introduce the Lagrangian by
\beq \label{JLdef}
\L(x,y) = \frac{1}{4n} \sum_{i,j=1}^{2n} \Big( \big|\lambda^{xy}_i \big| - \big|\lambda^{xy}_j \big| \Big)^2
+ \kappa\: \bigg( \sum_{i,j=1}^{2n} \big|\lambda^{xy}_i \big| \bigg)^2 \:.
\eeq
Clearly, this Lagrangian is continuous on~$\F \times \F$.
Moreover, since~$\kappa>0$, the Lagrangian is non-negative.

Therefore, we are back in the setting of
Section~\ref{Jsecnoncompact}. The EL equations in Lemma~\ref{JlemmaEL} agree with
the EL equations as derived for the causal action principle with constraints
in~\cite{lagrange} (cf. Theorem~\ref{Nthm3}).

\section{The Symplectic Form in the Smooth Setting}\label{JSecSmooth}
In order to introduce our concepts in the simplest possible setting,
in this section we assume that~$\F$ is a smooth manifold of dimension~$m \geq 1$
and that the Lagrangian~$\L \in C^\infty(\F \times \F, \R^+_0)$ is smooth.
Moreover, we let~$\rho$ be a regular Borel measure on~$\F$
which satisfies the EL equations \eqref{JELstrong} corresponding to the causal action in the sense that
the smooth function~$\ell$ defined by~\eqref{Jelldef},
\[ \ell(x) = \int_\F \L(x,y)\: d\rho(y) - \frac{\nu}{2} \in C^\infty(\F,\R_0^+) \:, \]
is minimal and vanishes on~$M$,
\beq \label{JELsmooth}
\ell|_{\supp \rho} \equiv \inf_\F \ell = 0 \:.
\eeq
The constructions in this section should be seen as a preparation
for the lower semi-continuous setting to be considered in Section~\ref{JlowerSemiCont}.
Before going on, we remark that the value of the parameter~$\nu$ can be changed arbitrarily
by rescaling the measure according to
\[ \rho \rightarrow \lambda \rho \qquad \text{with} \qquad \lambda>0 \:. \]
Therefore, without loss of generality we can keep~$\nu$ fixed when varying or perturbing the measure.

\subsection{The Weak Euler-Lagrange Equations}
Clearly, the EL equations~\eqref{JELsmooth} imply the weaker equations
\beq \label{JELweak}
\ell|_M \equiv 0 \qquad \text{and} \qquad D \ell|_M \equiv 0
\eeq
(where~$D \ell(p) : T_p \F \rightarrow \R$ is the derivative).
In order to combine these two equations in a compact form, we introduce the
smooth one-jets
\beq \label{JJdef}
\J := \big\{ \u = (a,u) \text{ with } a \in C^\infty(\F, \R) \text{ and } u \in \Gamma(\F) \big\} \:,
\eeq
where~$\Gamma(\F)$ denotes the smooth vector fields on~$\F$.
Defining the derivative in direction of a one-jet by
\beq \label{JDjet}
\nabla_{\u} \ell(x) := a(x)\, \ell(x) + \big(D_u \ell \big)(x) \:,
\eeq
we can write~\eqref{JELweak} as
\beq \label{JELweak2}
\nabla_{\u} \ell |_M \equiv 0 \qquad \text{for all~$\u \in \J$}\:.
\eeq
We refer to these equations as the {\em{weak EL equations}}.

\subsection{The Nonlinear Solution Space} \label{Jsecfam}
Our next step is to analyze families of measures which satisfy the weak EL equations.
In order to obtain these families of solutions, we want to vary a given
measure~$\rho_0$ (not necessarily a minimizer) without changing its general structure.
To this end, we multiply~$\rho_0$ by a weight function
and apply a diffeomorphism, i.e.
\beq \label{JrhoFf}
\rho = F_* \big( f \,\rho_0 \big) \:,
\eeq
where~$F : \F \rightarrow \F$ is a smooth diffeomorphism and~$f \in C^\infty(\F, \R^+)$.
We now consider a set of such measures which all satisfy the weak EL equations,
\beq
\calB \subset \left\{ \text{$\rho$ of the form~\eqref{JrhoFf}} \:\big|\: \text{the weak EL equations~\eqref{JELweak2} are satisfied} \right\} \label{JBdef}
\eeq
(for fixed~$\rho_0$). 
We make further simplifying assumptions on $\calB$.
First, we assume that~$\calB$ is a smooth {\em{Fr{\'e}chet manifold}}
(endowed with the compact-open topology on~$C^\infty(\F, \R^+_0)$
and on the diffeomorphisms; for details see Appendix~\hyperref[appclosed]{A}).
Then for~$\rho \in \calB$, a tangent vector~$\v \in T_\rho \calB$,
being an infinitesimal variation of the measures in~\eqref{JrhoFf},
consists of a function~$b$ (describing the infinitesimal change of the weight) and
a vector field~$v$ (being the infinitesimal generator of the diffeomorphism).
Having chosen the Fr{\'e}chet topology such that~$b$ and~$v$ are smooth, we obtain a jet~$\v = (b,v) \in \J$.
Hence the tangent space can be identified with a subspace of the one-jets,
\[ \T_\rho \calB \subset \J \:. \]
A second assumption is needed in order to ensure that we can exchange integration with differentiation in the proof of Lemma~\ref{Jlemmalin} below.
To this end, we assume that for every smooth curve~$(\tilde{\rho}_\tau)_{\tau \in (-\delta, \delta)}$
in~$\calB$, the corresponding functions~$(f_\tau, F_\tau)$ have 
the properties that the derivatives
\begin{align}\label{JBassumption2}
 \frac{d}{d\tau} \L\big(F_\tau(x), F_\tau(y) \big)\: f_\tau(y) \qquad \textrm{and} \qquad  \frac{d}{d\tau} D_1 \L\big(F_\tau(x), F_\tau(y) \big)\: f_\tau(y)
\end{align}
are bounded uniformly in $\tau$ for every $x,y \in M$ and are~$\rho$-integrable in $y$ for every $x \in M$
(just as in~\eqref{JIntreqLinFieldEq}, the subscripts of~$D_1$ and~$D_2$ denote the
partial derivatives acting on the first respectively second argument of the Lagrangian).

\begin{Lemma} \label{Jlemmalin}
For any~$\u \in \J$ and $\v \in T_\rho \calB$,
\beq \label{Jeqlin}
\nabla_\u \nabla_\v \ell(x) + \int_M \nabla_{1,\u} \nabla_{2,\v}\L(x,y)\: d\rho(y) = 0 \qquad \text{for all~$x \in M$}\:.
\eeq
\end{Lemma}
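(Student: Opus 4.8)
The plan is to differentiate the weak Euler--Lagrange equations along a curve in $\calB$ and to collect terms. Since $\calB$ is assumed to be a smooth Fr\'echet manifold, the tangent vector $\v=(b,v)\in T_\rho\calB$ is represented by a smooth curve $(\tilde\rho_\tau)_{\tau\in(-\delta,\delta)}$ in $\calB$ with $\tilde\rho_0=\rho$. After reparametrizing so that $\rho$ itself plays the role of the reference measure, we may write $\tilde\rho_\tau=(F_\tau)_*(f_\tau\,\rho)$ with $F_0=\id$ and $f_0\equiv 1$, so that $v=\dot F_0$ and $b=\dot f_0$, and the associated functions $\ell_\tau(x):=\int_\F\L(x,y)\,d\tilde\rho_\tau(y)-\nu/2$ satisfy
\[
\partial_\tau\ell_\tau(x)\big|_{\tau=0}=\int_M\nabla_{2,\v}\L(x,y)\,d\rho(y),\qquad
\partial_\tau\big(D_u\ell_\tau\big)(x)\big|_{\tau=0}=\int_M D_{1,u}\nabla_{2,\v}\L(x,y)\,d\rho(y),
\]
where the interchange of $\partial_\tau$ with the integral and with $D_u$ is exactly what the uniform boundedness and $\rho$-integrability hypotheses~\eqref{JBassumption2} built into the definition of $\calB$ are meant to license. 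Because $\rho$ itself lies in $\calB$, the weak EL equations~\eqref{JELweak2} hold for $\rho$; taking $\u=(1,0)$ and $\u=(0,w)$ gives $\ell|_M\equiv 0$ and $D_w\ell|_M\equiv 0$ for every vector field $w$, and hence $D_uD_v\ell=D_vD_u\ell$ on $M$, since their difference equals $D_{[u,v]}\ell$.

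The central step will be the observation that the weak EL equations for the whole curve give a function of $\tau$ that vanishes identically. Fix $x\in M$ and set $x_\tau:=F_\tau(x)$; since $\supp\tilde\rho_\tau=F_\tau(\supp\rho)$, the point $x_\tau$ lies in $\supp\tilde\rho_\tau$, so the weak EL equations for $\tilde\rho_\tau\in\calB$ yield
\[
g(\tau):=\nabla_\u\ell_\tau(x_\tau)=a(x_\tau)\,\ell_\tau(x_\tau)+\big(D_u\ell_\tau\big)(x_\tau)=0\qquad\text{for all }\tau .
\]
Differentiating $g$ at $\tau=0$ by the product and chain rules, using $\ell(x)=0$ and $D_u\ell(x)=0$, and inserting the two interchange identities above, I would obtain
\[
0=a(x)\Big(D_v\ell(x)+\int_M\nabla_{2,\v}\L(x,y)\,d\rho(y)\Big)+D_vD_u\ell(x)+\int_M D_{1,u}\nabla_{2,\v}\L(x,y)\,d\rho(y).
\]

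It then remains to match this with the claimed identity~\eqref{Jeqlin}. Expanding its left-hand side with the help of $\ell(x)=0$ and $D_u\ell(x)=0$ gives $\nabla_\u\nabla_\v\ell(x)=a(x)D_v\ell(x)+D_uD_v\ell(x)$, and directly $\int_M\nabla_{1,\u}\nabla_{2,\v}\L(x,y)\,d\rho(y)=a(x)\int_M\nabla_{2,\v}\L(x,y)\,d\rho(y)+\int_M D_{1,u}\nabla_{2,\v}\L(x,y)\,d\rho(y)$; adding these and replacing $D_uD_v\ell(x)$ by $D_vD_u\ell(x)$ (legitimate on $M$) reproduces exactly the expression for $g'(0)$, which is zero. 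The main obstacle here is not this algebra but the analytic bookkeeping underlying it: one must verify that $\ell_\tau$, its spatial derivative $D_u\ell_\tau$, and their $\tau$-derivatives all exist, are continuous, and may be computed by differentiating under the integral sign. Granting the smooth Fr\'echet structure of $\calB$ and the integrability conditions~\eqref{JBassumption2}, every limit interchange used above is justified, and the argument goes through.
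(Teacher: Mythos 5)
Your proposal is correct and follows essentially the same route as the paper: represent $\v$ by a curve $\tilde\rho_\tau=(F_\tau)_*(f_\tau\rho)$ in $\calB$ (Lemma~\ref{Jcurve}), differentiate the weak EL equations along the family at the moving point $F_\tau(x)$ with the interchange of $\tau$-derivative and integral licensed by~\eqref{JBassumption2}, and then use $\ell|_M\equiv 0$, $D\ell|_M\equiv 0$ together with $D_uD_v\ell-D_vD_u\ell=D_{[u,v]}\ell=0$ on $M$ to identify the result with~\eqref{Jeqlin}. The only difference is cosmetic: you differentiate the combined quantity $\nabla_\u\ell_\tau(F_\tau(x))\equiv 0$, whereas the paper differentiates the two equations $\ell_\tau(F_\tau(x))\equiv 0$ and $D\ell_\tau(F_\tau(x))\equiv 0$ separately and then adds $a(x)$ times the first to the second.
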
 \noindent
We refer to~\eqref{Jeqlin} as the {\em{linearized field equations}}
(see also the explanation in the introduction before~\eqref{JIntreqLinFieldEq}).

Before giving the proof of this lemma, we point out that in~\eqref{Jeqlin}, the order of differentiation is irrelevant.
This is obvious for the term~$\nabla_{1,\u} \nabla_{2,\v}\L(x,y)$ because the derivatives act on
different variables. In the first term, it follows from the computation
\[  \nabla_\u \nabla_\v \ell(x) - \nabla_\v \nabla_\u \ell(x) = (D_u a)(x)\, \ell(x) - (D_v b)(x)\, \ell(x)
+ D_{[u,v]} \ell(x) = 0 \:, \]
where in the last step we used the weak EL equations~\eqref{JELweak}.
\Proof[Proof of Lemma~\ref{Jlemmalin}]
Given~$\v=(b,v) \in T_\rho \calB$, we let~$(\tilde \rho_\tau)_{\tau \in (-\delta,\delta)}$ be a smooth curve in~$\calB$
with~$\tilde{\rho}_0=\rho$ and~$\dot{\tilde{\rho}}_0 = \v$. 
As shown in Lemma~\ref{Jcurve} below, there are~$F_\tau$ and~$f_\tau$ such that
\beq \label{Jparamtilrho}
\tilde{\rho}_\tau = (F_\tau)_* \big( f_\tau \,\rho \big) \:,
\eeq
and therefore
\beq \label{Jparamtilrho2}
\dot{\tilde{\rho}}_0 = \frac{d}{d\tau} \Big((F_\tau)_* \big( f_\tau \,\rho \big) \Big) \Big|_{\tau=0}
\qquad \text{with} \qquad \dot{f}_0 = b,\;\;\dot{F}_0 = v \:.
\eeq
Setting~$M_\tau = \supp \tilde{\rho}_\tau$ and using that~$M_\tau = \overline{F_\tau(M)}$,
the weak EL equations~\eqref{JELweak} can be written as
\beq \label{JELweak3}
\ell_\tau \big(F_\tau(x) \big) \equiv 0 \quad \text{and} \quad D \ell_\tau \big( F_\tau(x) \big) \equiv 0
\qquad \text{for all~$x \in M$}\:,
\eeq
where
\begin{align}\label{JellTauSmooth}
 \ell_\tau(z) := \int_\F \L(z,y)\: d\tilde{\rho}_\tau(y) - \frac{\nu}{2} \;\in\; C^\infty(\F, \R)\:.
\end{align}
Differentiating the first equation in~\eqref{JELweak3} with respect to~$\tau$, we obtain
\begin{align*}
0 &= \frac{d}{d\tau} \ell_\tau \big(F_\tau(x) \big) \big|_{\tau=0}
= \frac{d}{d\tau} \int_\F \L\big(F_\tau(x), y \big)\: d \Big( \big(F_\tau \big)_* \big( f_\tau \,\rho \big) \Big) (y)
\Big|_{\tau=0} \\
&= \frac{d}{d\tau} \int_M \L\big(F_\tau(x), F_\tau(y) \big)\: f_\tau(y)\: d\rho (y) \Big|_{\tau=0}\\
&= D_v \ell(x) + \int_M \L(x,y)\: b(y)\: d\rho(y) + \int_\F D_{2,v}\L(x,y)\: d\rho(y) \:.
\end{align*}
In the last step, we exchanged integration with differentiation. This is justified
by our assumption~\eqref{JBassumption2}, which 
ensures that the integrand of the second line is $L^1(\F,d\rho)$ for every $\tau$, is differentiable in $\tau$ for every $x,y \in M$ and is dominated by a $L^1(\F,\rho)$-function uniformly in $\tau$. 
Using the notation~\eqref{JDjet}, we can write this as
\beq \label{JjEL1}
D_v \ell(x) + \int_M \nabla_{2,\v}\L(x,y)\: d\rho(y) = 0 \:.
\eeq
Differentiating the second equation in~\eqref{JELweak3}, a similar computation gives
for any vector field~$u$
\beq \label{JjEL2}
D_v D_u \ell(x) + \int_M D_{1,u} \nabla_{2,\v}\L(x,y)\: d\rho(y) = 0 \:.
\eeq
Multiplying~\eqref{JjEL1} by~$a(x)$ and adding~\eqref{JjEL2}, we obtain
\begin{align*}
0 &= a(x) \:D_v \ell(x) + D_v D_u \ell(x) + \int_M \nabla_{1,\u} \nabla_{2,\v}\L(x,y)\: d\rho(y) \\
&= D_v \nabla_\u \ell(x) - (D_v a)(x)\, \ell(x)+ \int_M \nabla_{1,\u} \nabla_{2,\v}\L(x,y)\: d\rho(y) \\
&= \nabla_v \nabla_\u \ell(x) - b(x)\, \nabla_\u \ell(x)
- (D_v a)(x)\, \ell(x)+ \int_M \nabla_{1,\u} \nabla_{2,\v}\L(x,y)\: d\rho(y) \:.
\end{align*}
Using the weak EL equations~\eqref{JELweak2}, the second and third summands vanish, giving the result.
\QED

\begin{Lemma}\label{Jcurve}
Let $\rho \in \calB$ and $(\tilde \rho_\tau)_{\tau \in (-\delta,\delta)}$ be a curve such that $\tilde \rho_0 = \rho$.
Then there is a family of smooth diffeomorphisms~$F_\tau : \F \rightarrow \F$ and functions~$f_\tau \in C^\infty(\F, \R^+)$ such that
\begin{align}
\label{JCurveCurve}
\tilde{\rho}_\tau = (F_\tau)_* \big( f_\tau \,\rho \big) \: .
\end{align}
If the curve~$(\tilde \rho_\tau)_{\tau \in (-\delta,\delta)}$ is smooth,
then both~$F_\tau$ and $f_\tau$ are smooth in $\tau$.
\end{Lemma}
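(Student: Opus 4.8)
The plan is to take the representation of measures in $\calB$ as push-forwards of weighted copies of the fixed measure $\rho_0$ (this is how $\calB$ is defined in~\eqref{JBdef}) and to re-express it with $\rho$ in place of $\rho_0$. The key algebraic observation is that the relation $\rho = H_*(h\,\rho_0)$ can be inverted to $\rho_0 = \frac{1}{h}\,(H^{-1})_*\rho$, so that $\rho_0$ is itself a weighted push-forward of $\rho$; substituting this into any representation $\tilde\rho_\tau = (H_\tau)_*(h_\tau\,\rho_0)$ and collecting the push-forwards and weights then produces the desired family. Concretely, since $\rho \in \calB$ I would fix once and for all a diffeomorphism $H$ and a function $h \in C^\infty(\F,\R^+)$ with $\rho = H_*(h\,\rho_0)$, and since each $\tilde\rho_\tau$ lies in $\calB$ I would choose representatives $\tilde\rho_\tau = (H_\tau)_*(h_\tau\,\rho_0)$ with $H_\tau$ a diffeomorphism, $h_\tau \in C^\infty(\F,\R^+)$, and $(H_0,h_0)=(H,h)$. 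Setting
\[ F_\tau := H_\tau \circ H^{-1} \qquad \text{and} \qquad f_\tau := \Big(\frac{h_\tau}{h}\Big) \circ H^{-1} \in C^\infty(\F,\R^+)\:, \]
one has $F_0 = \id$, $f_0 \equiv 1$, and a short computation using the elementary identities $(A\circ B)_* = A_* \circ B_*$ and $\psi \cdot A_*\mu = A_*\big((\psi\circ A)\,\mu\big)$ gives
\[ (F_\tau)_*(f_\tau\,\rho) = (H_\tau\circ H^{-1})_*\Big( H_*\big((f_\tau\circ H)\,h\,\rho_0\big) \Big) = (H_\tau)_*\big(h_\tau\,\rho_0\big) = \tilde\rho_\tau\:, \]
where in the last step $f_\tau \circ H = h_\tau/h$ is used. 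This settles the existence statement.

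For the smoothness statement I would invoke the Fr\'echet-manifold structure on $\calB$: by construction it is modelled on the parameters $(f,F)$ (with the compact-open $C^\infty$-topologies, as set up in Appendix~\hyperref[appclosed]{A}), so the parametrization map $(f,F) \mapsto F_*(f\,\rho_0)$ admits smooth local sections. Hence a smooth curve $\tau \mapsto \tilde\rho_\tau$ in $\calB$ lifts to a smooth curve $\tau \mapsto (h_\tau,H_\tau)$ of representatives with $(h_0,H_0)=(h,H)$, and since composition of diffeomorphisms, inversion, and multiplication and division of positive smooth functions are smooth operations in the relevant topology, the families $F_\tau = H_\tau \circ H^{-1}$ and $f_\tau = (h_\tau/h)\circ H^{-1}$ depend smoothly on $\tau$.

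The measure-theoretic manipulations in the first paragraph are routine once the push-forward composition rules are written out; the step I expect to be the main obstacle is the smooth-lifting property used in the second paragraph. Because the representation $\rho = F_*(f\,\rho_0)$ is far from unique, one has to use the specific chart structure of $\calB$ — together with the statement, to be established in Appendix~\hyperref[appclosed]{A}, that the parametrization admits smooth local sections through any given representative — in order to choose the representatives $(h_\tau,H_\tau)$, and in particular to match the base-point representative $(h,H)$ at $\tau=0$, in a way that varies smoothly with $\tau$.
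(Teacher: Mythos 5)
Your proposal is correct and follows essentially the same route as the paper: you invert the representation $\rho = H_*(h\,\rho_0)$ to express $\rho_0$ as a weighted push-forward of $\rho$, substitute into a representation of $\tilde\rho_\tau$ relative to $\rho_0$, and collect push-forwards and weights, which is precisely the paper's construction $F_\tau = \tilde G\circ G^{-1}$, $f_\tau = (\tilde g\circ G^{-1})/(g\circ G^{-1})$. Your treatment of the smoothness statement likewise matches the paper, which simply defers to the Fr\'echet structure of $\calB$ (identified with the parameters $(f,F)$ as in Appendix~\hyperref[appclosed]{A}), so that a smooth curve of measures corresponds to smoothly varying representatives.
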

\Proof Let $\rho_0$ be the measure in the definition of $\calB$,~\eqref{JBdef}. Then there are $G : \F \rightarrow \F$ and~$g \in C^\infty(\F, \R^+)$ such that
\[ \rho = G_* \big( g \,\rho_0 \big) \:.\]
Thus
\[ \rho_0 = \frac{1}{ g} \:\Big( \big(G^{-1} \big)_* \rho \Big) = ( G^{-1})_* \bigg( \frac{1}{ g \circ G^{-1} } \:\rho \bigg) \:. \]
Similarly, there are mappings~$\tilde G: \F \rightarrow \F$ and~$\tilde g\in C^\infty(\F, \R^+)$ such that
\begin{align*}
\tilde \rho_\tau &=  \tilde G_* \big( \tilde g \,\rho_0 \big) 
= \tilde G_* \bigg( \tilde g \:\big( G^{-1} \big)_* \Big( \frac{1}{ g \circ G^{-1} } \:\rho \Big)  \bigg) \\
&= (\tilde G \circ G^{-1})_* \bigg(  \frac{\tilde g \circ G^{-1} }{ g \circ G^{-1} }\: \rho \bigg) \:.
\end{align*}
This gives the desired functions~$F_\tau$ and~$f_\tau$ for fixed~$\tau$.
The claim about smoothness follows from the topology and the differential structure of $\calB$ as described Appendix~\hyperref[appclosed]{A} (a curve~\eqref{JCurveCurve} is smooth if and only if~$F_\tau$ and $f_\tau$ are smooth
in $\tau$).
\QED
We remark that the strict positivity of the weight function~$f$ in~\eqref{JrhoFf} is needed
because in the last proof we divided by these weight functions.

\subsection{The Symplectic Form and Hamiltonian Time Evolution}\label{JSecSympForm}
For any~$\u, \v \in T_\rho \calB$ and~$x,y \in M$, we set
\[ \sigma_{\u, \v}(x,y) := \nabla_{1,\u} \nabla_{2,\v} \L(x,y) - \nabla_{1,\v} \nabla_{2,\u} \L(x,y) \:. \]
For any compact~$\Omega \subset \F$, we introduce the surface layer integral
\beq \label{JOSI}
\sigma_\Omega \::\: T_\rho \calB \times T_\rho \calB \rightarrow \R\:,
\qquad \sigma_\Omega(\u, \v) = \int_\Omega d\rho(x) \int_{M \setminus \Omega} d\rho(y)\:
\sigma_{\u, \v}(x,y) \:.
\eeq

We are now in the position to specify and prove the theorem mentioned in Section~\ref{JIntr} of the introduction.
\begin{Thm} \label{JthmOSI}
Let~$\F$ be a smooth manifold of dimension~$m \geq 1$, and~$\L \in C^\infty(\F \times \F, \R^+_0)$ 
be a smooth Lagrangian. Moreover, let~$\calB$ be a Fr{\'e}chet manifold of measures
of the form~\eqref{JBdef}.
Then for any compact~$\Omega \subset \F$, the surface layer integral~\eqref{JOSI}
vanishes for all~$\u, \v \in T_\rho \calB$.
\end{Thm}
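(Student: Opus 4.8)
The strategy is to reduce the statement to an identity that follows directly from the linearized field equations of Lemma~\ref{Jlemmalin}. First I would unfold the definition of $\sigma_\Omega$ in~\eqref{JOSI}: writing $\sigma_{\u,\v}(x,y) = \nabla_{1,\u}\nabla_{2,\v}\L(x,y) - \nabla_{1,\v}\nabla_{2,\u}\L(x,y)$, the surface layer integral splits into four double integrals over $\Omega \times (M\setminus\Omega)$. The key trick, familiar from the proof of Proposition~\ref{Nprpuseful}, is to rewrite each double integral over $\Omega \times (M\setminus\Omega)$ in terms of integrals over $\Omega \times M$ (which collapse to integrals of $\nabla$-derivatives of $\ell$ against $d\rho$ over $\Omega$, via~\eqref{Jelldef}) plus integrals over $\Omega\times\Omega$. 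Because the Lagrangian is symmetric, the $\Omega\times\Omega$ contributions from the two terms of $\sigma_{\u,\v}$ cancel against each other after relabeling $x \leftrightarrow y$.

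The main computational step is then this: I claim
\[
\int_\Omega d\rho(x) \int_{M\setminus\Omega} d\rho(y)\, \nabla_{1,\u}\nabla_{2,\v}\L(x,y)
= \int_\Omega \Big( \nabla_{\u}\Big(\textstyle\int_M \nabla_{2,\v}\L(x,y)\, d\rho(y)\Big) - \int_\Omega d\rho(y)\, \nabla_{1,\u}\nabla_{2,\v}\L(x,y) \Big) d\rho(x),
\]
where I have used that the $M$-integral of $\nabla_{2,\v}\L(x,\cdot)$ can be differentiated under the integral sign by assumption~\eqref{JBassumption2} to move $\nabla_{1,\u}$ outside. Now Lemma~\ref{Jlemmalin} identifies $\int_M \nabla_{2,\v}\L(x,y)\, d\rho(y) = -\nabla_{\v}\ell(x)$ up to the constant $\nu/2$ absorbed in $\ell$ (more precisely, from~\eqref{JjEL1}, $\int_M \nabla_{2,\v}\L(x,y)\,d\rho(y) = -D_v\ell(x)$, and incorporating the scalar components one gets the clean statement $\nabla_\u\nabla_\v\ell(x) = -\int_M \nabla_{1,\u}\nabla_{2,\v}\L(x,y)\,d\rho(y)$ for $x\in M$). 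Substituting and doing the same for the $\v\leftrightarrow\u$ term, all the $\ell$-terms drop out because $\nabla_\u\nabla_\v\ell|_M = \nabla_\v\nabla_\u\ell|_M$ (the symmetry of second $\nabla$-derivatives on $M$, proved just before the proof of Lemma~\ref{Jlemmalin} using the weak EL equations), and the remaining $\Omega\times\Omega$ double integrals are manifestly antisymmetric under $x\leftrightarrow y$ combined with $\u\leftrightarrow\v$, hence cancel.

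Concretely the order of steps is: (1) split $\sigma_\Omega(\u,\v)$ into the two surface layer integrals $I_1 := \int_\Omega\int_{M\setminus\Omega}\nabla_{1,\u}\nabla_{2,\v}\L$ and $I_2 := \int_\Omega\int_{M\setminus\Omega}\nabla_{1,\v}\nabla_{2,\u}\L$; (2) for $I_1$, extend the inner integral from $M\setminus\Omega$ to $M$ and subtract the $\Omega\times\Omega$ piece, using~\eqref{JBassumption2} to pull $\nabla_{1,\u}$ through the $M$-integral; (3) apply Lemma~\ref{Jlemmalin} to replace the resulting $M$-integral term by $-\nabla_\u\nabla_\v\ell(x)$, which vanishes on $M$ by the weak EL equations~\eqref{JELweak2}, so $I_1 = -\int_{\Omega\times\Omega}\nabla_{1,\u}\nabla_{2,\v}\L$; (4) likewise $I_2 = -\int_{\Omega\times\Omega}\nabla_{1,\v}\nabla_{2,\u}\L$; (5) in $I_2$ relabel $x\leftrightarrow y$ and use the symmetry $\L(x,y)=\L(y,x)$ to see $\int_{\Omega\times\Omega}\nabla_{1,\v}\nabla_{2,\u}\L(x,y) = \int_{\Omega\times\Omega}\nabla_{1,\u}\nabla_{2,\v}\L(x,y)$, so $I_1 - I_2 = 0$.

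**The main obstacle.** The delicate point is not the algebra but justifying the interchange of differentiation (in the direction of the jet $\u$) and integration over $M$ in step~(2), and more fundamentally making sure one is genuinely allowed to invoke Lemma~\ref{Jlemmalin}, whose proof itself relied on the Fr{\'e}chet-manifold structure of $\calB$ and the domination hypothesis~\eqref{JBassumption2}. One has to check that for $\u \in T_\rho\calB$ (not merely an arbitrary jet) the curve realizing $\u$ has representatives $(f_\tau, F_\tau)$ with the uniform integrability of~\eqref{JBassumption2}, so that Lemma~\ref{Jcurve} and Lemma~\ref{Jlemmalin} apply verbatim; for a general $\u\in\J$ one additionally needs that $x\mapsto\int_M\nabla_{2,\v}\L(x,y)\,d\rho(y)$ is itself smooth so that $\nabla_{1,\u}$ of it makes sense pointwise on $M$ — this is where the smoothness of $\L$ and the regularity of $\rho$ are used. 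Once these regularity checks are in place, the cancellation is forced by the symmetry of $\L$ together with the weak EL equations, and the theorem follows.
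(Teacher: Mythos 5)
Your route is essentially the paper's: both proofs combine the linearized field equations of Lemma~\ref{Jlemmalin} with the fact that $\nabla_{\u}$ and $\nabla_{\v}$ commute on $\ell$ at points of $M$, and then kill the $\Omega\times\Omega$ contribution by the antisymmetry of the integrand under $x\leftrightarrow y$ coming from $\L(x,y)=\L(y,x)$. The paper simply antisymmetrizes \eqref{Jeqlin} first, obtaining $\int_M \sigma_{\u,\v}(x,y)\,d\rho(y)=0$ on $M$, and then integrates over $\Omega$ and splits off the $\Omega\times\Omega$ piece; you integrate first and antisymmetrize afterwards. Your worry in the last paragraph is also resolved by the standing hypotheses: the theorem only asks for $\u,\v\in T_\rho\calB\subset\J$, and Lemma~\ref{Jlemmalin} (with the assumption \eqref{JBassumption2} built into the definition of $\calB$) applies with either jet in the role of the test jet, which is all the antisymmetrization needs.

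One assertion in your step list is wrong and should be repaired, even though it does not affect the final cancellation. In steps (3)--(4) you claim that $\nabla_{\u}\nabla_{\v}\ell$ vanishes on $M$ ``by the weak EL equations~\eqref{JELweak2}'', and conclude $I_1=-\int_{\Omega\times\Omega}\nabla_{1,\u}\nabla_{2,\v}\L$. This is false in general: on $M$ one computes $\nabla_{\u}\nabla_{\v}\ell(x)=D_uD_v\ell(x)$, i.e.\ the Hessian of $\ell$ at a minimum point, and the weak EL equations only force first derivatives (and $\ell$ itself) to vanish there -- think of $\ell$ behaving like the squared distance to $M$ in transversal directions. Consequently $I_1$ and $I_2$ each retain the extra term $-\int_\Omega D_uD_v\ell\,d\rho$. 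Your proof still closes because these extra terms are equal and drop out of the difference $I_1-I_2$; that is exactly the commutation identity $\nabla_{\u}\nabla_{\v}\ell|_M=\nabla_{\v}\nabla_{\u}\ell|_M$ which you state correctly in your second paragraph and which the paper establishes just before Lemma~\ref{Jlemmalin} and then exploits via $\nabla_{[\u,\v]}\ell=0$. So: keep the cancellation argument, but delete the claim that the second derivatives of $\ell$ vanish individually.
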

\Proof Anti-symmetrizing~\eqref{Jeqlin} in~$\u$ and~$\v$
and using that~$\nabla_{[\u, \v]} \ell = 0$, we obtain
\[ \int_M \sigma_{\u, \v}(x,y) \:d\rho(y) = 0 \qquad \text{for all~$x \in M$} \:. \]
We integrate this equation over~$\Omega$,
\begin{align}
0 &= \int_{\Omega} d\rho(x) \int_M  d\rho(y)\: \sigma_{\u, \v}(x,y) \notag \\
&= \int_{\Omega} d\rho(x) \int_\Omega d\rho(y)\: \sigma_{\u, \v}(x,y)
+ \int_{\Omega} d\rho(x) \int_{M \setminus \Omega} d\rho(y)\: \sigma_{\u, \v}(x,y) \:. \label{Jconserve}
\end{align}
Since the Lagrangian is symmetric in its two arguments, the
function~$\sigma_{\u, \v}$ is obviously anti-symmetric, i.e.\ $\sigma_{\u, \v}(x,y) = -\sigma_{\u, \v}(y,x)$.
Therefore, the first summand in~\eqref{Jconserve} vanishes. This gives the result.
\QED

At this point, we want to use the construction explained after~\eqref{JIntrOSIN}
in the introduction to obtain a
conserved symplectic form~$\sigma$. In the present smooth setting, this construction can be
made precise as follows.
Let us assume that~$M$ is a smooth manifold being a topological product
\[ M = \R \times N \]
with a (possibly non-compact) smooth manifold~$N$. (This is for example the case if $M$ is
a globally hyperbolic  Lorentzian manifold, where $\R$ corresponds to the time of a foliation
and $N$ corresponds to space.) For any~$t \in \R$, the
set~$N_t := \{t\} \times N$ is a hypersurface in~$M$, and it can be realized as a boundary,
\[ N_t = \partial \Omega_{N_t} \qquad \text{with} \qquad \Omega_{N_t} := (-\infty, 0) \times N \:. \]
Next, let us assume that the jets~$\v = (b,v) \in T_\rho \calB$
have suitable decay properties on $N_t$ (at spatial infinity) which ensure that the surface layer integrals~\eqref{JOSI}
exist for~$\Omega=\Omega_{N_t}$ and every~$t \in \R$.
Under these assumptions, Theorem~\ref{JthmOSI} implies that the bilinear form~$\sigma_{\Omega_{N_t}}$
is well-defined and does not depend on~$t$. This makes it possible to introduce the mapping
\beq \label{Jsigmaform}
\sigma \::\: T_\rho \calB \times T_\rho \calB \rightarrow \R \:, \qquad (\u, \v) \mapsto \sigma_{\Omega_{N_t}}(\u,\v)
\eeq
(where~$t \in \R$ is arbitrary). Due to the anti-symmetry, we can regard~$\sigma$ as a two-form
on~$\calB$. The next lemma shows that~$\sigma$ endows~$\calB$ with
the structure of a presymplectic Fr{\'e}chet manifold.
\begin{Lemma} \label{Jlemmaclosed}
The bilinear form~$\sigma$ is closed.
\end{Lemma}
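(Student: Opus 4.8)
The plan is to check $d\sigma=0$ directly from Cartan's formula, reducing the computation to a surface layer integral of exactly the type handled in Theorem~\ref{JthmOSI}. Let $\u,\v,\w\in T_\rho\calB$ and extend them to vector fields on a chart of $\calB$ around $\rho$ with vanishing pairwise Lie brackets (e.g.\ constant vector fields in the chart). Since $d\sigma$ is tensorial, evaluating it with these extensions gives $d\sigma(\u,\v,\w)=\u\,\sigma(\v,\w)-\v\,\sigma(\u,\w)+\w\,\sigma(\u,\v)$. The core step is to differentiate the surface layer integral \eqref{Jsigmaform}, \eqref{JOSI} along $\calB$. Using the parametrization \eqref{JrhoFf} of curves in $\calB$ (cf.\ Lemma~\ref{Jcurve}), I represent the curve through $\rho$ with velocity $\u=(a,u)$ as $\rho_s=(F_s)_*(f_s\,\rho)$ with $\dot F_0=u$, $\dot f_0=a$, transport the Cauchy surface by $N_t\mapsto F_s(N_t)$ (legitimate since $\sigma$ does not depend on the choice of Cauchy surface), and change variables $x\mapsto F_s(x)$, $y\mapsto F_s(y)$ to bring \eqref{JOSI} onto the fixed domain. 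Differentiating at $s=0$ and exchanging differentiation with integration — justified by an assumption of the type \eqref{JBassumption2} — one obtains, with $\sigma_{\v,\w}$ kept frozen, $\u\,\sigma(\v,\w)=\int_{\Omega}d\rho(x)\int_{M\setminus\Omega}d\rho(y)\,(\nabla_{1,\u}+\nabla_{2,\u})\sigma_{\v,\w}(x,y)$ with $\Omega=\Omega_{N_t}$; in the alternating sum the terms stemming from the variation of the vector fields $\v,\w$ combine into terms of the form $\sigma([\u,\v],\w)$ and cancel by anti-symmetry of $\sigma$ and vanishing of the brackets. Hence $d\sigma(\u,\v,\w)=\int_{\Omega}d\rho(x)\int_{M\setminus\Omega}d\rho(y)\,T_{\u,\v,\w}(x,y)$ with $T_{\u,\v,\w}:=(\nabla_{1,\u}+\nabla_{2,\u})\sigma_{\v,\w}-(\nabla_{1,\v}+\nabla_{2,\v})\sigma_{\u,\w}+(\nabla_{1,\w}+\nabla_{2,\w})\sigma_{\u,\v}$.

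It then remains to show that this surface layer integral vanishes, and the strategy is to reproduce the two mechanisms behind Theorem~\ref{JthmOSI}. First, $\sigma_{\v,\w}(x,y)$ is anti-symmetric under $x\leftrightarrow y$, and the first-order operator $\nabla_{1,\,\cdot}+\nabla_{2,\,\cdot}$ exchanges its two slots under this swap; therefore each summand of $T_{\u,\v,\w}$, and hence $T_{\u,\v,\w}$ itself, is anti-symmetric in $x$ and $y$. Second, $\int_M T_{\u,\v,\w}(x,y)\,d\rho(y)=0$ for all $x\in M$: this is obtained by differentiating the identity $\int_M\sigma_{\v,\w}(x,y)\,d\rho(y)=0$ (the anti-symmetrization of \eqref{Jeqlin} used in the proof of Theorem~\ref{JthmOSI}, valid at every point of $\calB$) along $\calB$ — equivalently, by differentiating the linearized field equations \eqref{Jeqlin} once more along $\calB$ and anti-symmetrizing in $\u,\v,\w$, using the weak Euler--Lagrange equations \eqref{JELweak2}. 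Granting these two facts, the argument of Theorem~\ref{JthmOSI} applies: splitting the $y$-integration over $M$ into the parts over $\Omega$ and $M\setminus\Omega$, $\int_{\Omega}d\rho(x)\int_{M\setminus\Omega}d\rho(y)\,T_{\u,\v,\w}=\int_{\Omega}d\rho(x)\int_M d\rho(y)\,T_{\u,\v,\w}-\int_{\Omega}d\rho(x)\int_{\Omega}d\rho(y)\,T_{\u,\v,\w}=0$, the first term vanishing by the pointwise identity and the second by anti-symmetry in $x,y$; thus $d\sigma(\u,\v,\w)=0$.

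The main obstacle is to make this last step rigorous for the non-compact domain $\Omega=\Omega_{N_t}$. Unlike the compact situation of Theorem~\ref{JthmOSI}, neither of the two iterated integrals on the right-hand side need converge absolutely, so the splitting must be justified by a localization argument: the pointwise identity $\int_M T_{\u,\v,\w}(x,\cdot)\,d\rho=0$ together with the effective short range of $T_{\u,\v,\w}(x,\cdot)$ confines all relevant contributions to a bounded space-time layer around $N_t$, on which the decay of the jets at spatial infinity — the hypothesis already needed for the surface layer integral \eqref{Jsigmaform} to be defined — makes the manipulations and an application of Fubini's theorem legitimate. A second, more technical point concerns the bookkeeping of the commuting extensions and of the parametrization \eqref{JrhoFf} within the Fr\'echet structure of $\calB$, so that all differentiations under the integral sign used above are valid; this runs parallel to the role played by \eqref{JBassumption2} in the proof of Lemma~\ref{Jlemmalin} and should be recorded as an explicit regularity requirement on $\calB$.
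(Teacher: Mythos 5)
Your argument is correct in substance but follows a genuinely different route from the paper. The paper does not verify $d\sigma=0$ directly: it exhibits a local primitive, namely the one-form $\gamma$ of~\eqref{JgammaUdef} built from the surface layer integral of $f(x)\,\nabla_{2,\u}\L(F(x),F(y))\,f(y)$, shows $\sigma=d\gamma$ by the same kind of differentiation along $\calB$ that you perform (with the jet-variation bookkeeping absorbed into the $\gamma([\u,\v])$ term via the commutator identity~\eqref{Jcommrel} proved in Appendix~A), and concludes $d\sigma=d^2\gamma=0$. You instead compute $d\sigma(\u,\v,\w)$ with commuting extensions and reduce it to a surface layer integral of a totally alternating third-derivative kernel $T_{\u,\v,\w}$, which you kill by replaying the two mechanisms of Theorem~\ref{JthmOSI}: anti-symmetry of $T$ in $x\leftrightarrow y$ (which you verify correctly) and the pointwise identity $\int_M T_{\u,\v,\w}(x,\cdot)\,d\rho=0$ obtained by differentiating the anti-symmetrized linearized field equations~\eqref{Jeqlin} once more along $\calB$. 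Both steps are sound given the smoothness of $\L$ and a second-order analogue of the regularity hypothesis~\eqref{JBassumption2}, and your observation that the jet-variation terms assemble into bracket terms that vanish for flat (chart-constant) extensions is the correct way to dispose of the bookkeeping that the paper handles through $\gamma([\u,\v])$. What each approach buys: the paper's is shorter at the point of use and yields the stronger statement that $\sigma$ is locally exact, at the price of having to guess and verify the primitive; yours avoids the primitive but needs third derivatives of $\L$ under the integral and one more differentiation of the field equations. The convergence and Fubini issues on the non-compact domain $\Omega_{N_t}$ that you flag as the main obstacle are real, but they are present to exactly the same degree in the paper's own treatment (both $\gamma$ and $\sigma$ are defined only under unspecified decay assumptions on the jets), so they do not constitute a gap specific to your argument.
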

\Proof Inspired by classical field theory (see for example~\cite[\S2.3]{deligne+freed}),
our strategy is to write~$\sigma$ locally as the exterior derivative of a one-form~$\gamma$.
Then the claim follows immediately from the fact that~$d^2=0$.

We let~$\tilde{\rho}$ be a measure in a neighborhood of~$\rho \in \calB$.
By definition of~$\calB$ we can represent~$\tilde{\rho}$ as
\beq \label{J362}
\tilde{\rho} = F_* \big( f \,\rho \big) \in \calB \:.
\eeq
We next define~$\gamma : T_{\tilde{\rho}} \calB \rightarrow \R$ by
\beq
\gamma(\u) = \int_{\Omega_{N_t}} d\rho \int_{M \setminus \Omega_{N_t}} \!\! d\rho\:
f(x)\: \nabla_{2,\u} \L\big(F(x), F(y)\big)\: f(y) \:. \label{JgammaUdef} 
\eeq
Computing the outer derivative with the formula
\[ (d\gamma)(\u,\v) = \u \gamma(\v) - \v \gamma(\u) - \gamma([\u,\v]) \:, \]
one finds that~$\sigma =  d\gamma$ (for details see Appendix~\hyperref[appclosed]{A}). This concludes the proof.
\QED

In order to obtain a symplectic structure on~$\calB$, the presymplectic form~$\sigma$ must be non-degenerate.
We do not see a general reason why this should be the case.
Therefore, we proceed as follows. Given~$\rho \in \calB$,
an abstract method to obtain a non-degenerate form is to mod out the kernel of~$\sigma$
defined by
\[ \text{ker} \,\sigma = \{ \v \in T_\rho \calB \;\big|\; \sigma(\u, \v) = 0 \text{ for all~$\u \in T_\rho \calB$} \big\} \:. \]
In most applications, it is useful to choose concrete representatives
of the vectors of this quotient space. To this end, one chooses a maximal
subspace~$\J^\text{symp}$ of~$T_\rho \calB$ on which~$\sigma$ is non-degenerate
(the existence of such a subspace is guaranteed by Zorn's lemma).
Then the restriction
\[ \sigma \::\: \J^\text{symp} \times \J^\text{symp} \rightarrow \R \]
is non-degenerate. The specific choice of~$\J^\text{symp}$ depends on the application.

To summarize, the above constructions gave us a presymplectic form~$\sigma$
on~$T_\rho \calB$ which is given as a surface layer integral~\eqref{JIntrOSI}
for~$\Omega=\Omega_{N_t}$. This presymplectic form is independent of~$t$.
In other words, the time evolution as specified by the linearized field equations~\eqref{Jeqlin}
preserves the symplectic form and is thus a symplectomorphism.
This is what we mean by {\em{Hamiltonian time evolution}}.

\section{The Lower Semi-Continuous Setting}\label{JlowerSemiCont}
We now return to the lower semi-continuous setting of Section~\ref{Jseccvpcfs}.
We assume that~$\rho$ satisfies the EL equations of the causal action
(see~\eqref{Jelldef} and~\eqref{JELstrong}). Thus we assume that the function~$\ell : \F \rightarrow \R$ defined by
\beq
\ell(x) := \int_\F \L(x,y)\: d\rho(y) - \frac{\nu}{2} \quad \text{is bounded and lower semi-continuous,} \label{Jelldeflip}
\eeq
and that it is minimal on the support of~$\rho$,
\beq
\ell|_{\supp \rho} \equiv \inf_\F \ell = 0 \label{JELstronglip}
\eeq
(here~$\nu>0$ is again the Lagrange multiplier describing the volume constraint;
see~\cite[\S1.4.1]{cfs}). We again introduce space-time as the support of the
universal measure,
\beq\label{JMlip}
M := \supp \rho\:. 
\eeq

\subsection{The Weak Euler-Lagrange Equations}\label{JWeakElLSC}
Since the function~$\ell$ as defined in~\eqref{Jelldeflip} is only lower semi-continuous,
the derivative in~\eqref{JELweak} in general does not exist. But for lower semi-continuous
functions, it is a reasonable assumption that the semi-derivatives exist, but may
take the value $+\infty$. This leads us to the following additional assumptions:
\begin{itemize}[leftmargin=2.5em]
\item[(v)] $\L$ has directional semi-derivatives in~$\R \cup \{\infty \}$: For any~$x,y \in \F$, $v \in T_x \F$ and
any curve $\gamma \in C^1((-1,1), \F)$ with~$\gamma(0)=x$ and~$\gamma'(0)=v$, the following
generalized semi-derivative exists
\beq \label{JDvL}
D^+_{1,v} \L(x,y) := \lim_{\tau \searrow 0} \frac{1}{\tau} \Big( \L \big( \gamma(\tau), y \big)
- \L \big( \gamma(0), y \big) \Big) \;\in\; \R \cup \{\infty\}
\eeq
and is independent of the choice of~$\gamma$.\label{JCond5}
\item[(vi)] For any~$x \in M$ and~$v \in T_x\F$, both sides of the 
following equation exist and are equal,\label{JCond6}
\[ D^+_{v} \ell(x) = \int_M D^+_{1,v} \L(x,y)\: d\rho(y) \:. \]
\end{itemize}
Under these assumptions, the EL equations~\eqref{JELstronglip} can again be tested weakly
with smooth jets. We define the jet space~$\J$ as in~\eqref{JJdef},
\beq \label{JJdeflip}
\J := \big\{ \u = (a,u) \text{ with } a \in C^\infty(\F, \R) \text{ and } u \in C^\infty(\F, T\F) \big\} \:.
\eeq
We remark that it would suffice to define the mappings~$f$ and~$F$ on an open neighborhood of~$M$. But,
keeping in mind that every such mapping can be extended smoothly to all of~$\F$, there is no
loss in generality to assume that~$f$ and~$F$ are defined on all of~$\F$.
When testing, only the restriction of the jets to~$M$ is of relevance. We thus define the jet space
\beq \label{JJM}
\J|_M := \big\{ \u = (a,u) \text{ with } a \in C^\infty(M, \R) \text{ and } u \in C^\infty(M, T\F) \big\} \:,
\eeq
where smooth functions and sections on~$M$ are defined as those functions (respectively sections)
which have a smooth extension to~$\F$.
Since in general only the semi-derivatives exist, in contrast to~\eqref{JELweak2}
the weak EL equations read
\beq \label{JELweaklip}
\nabla^+_{\u} \ell(x) \geq 0 \qquad \text{for all~$x \in M$ and~$\u \in \J|_M$}\:,
\eeq
where, similar to~\eqref{JDjet}, $\nabla^+_\u$ is defined as
\begin{align}\label{JDjetSemi}
\nabla^+_{\u} \ell(x) := a(x)\, \ell(x) + \big(D^+_u \ell \big)(x) \:.
\end{align}

We introduce~$\Jdiff$ as the subspace of jets on~$M$ such that~$\ell$ is differentiable in
the direction of the vector field, i.e.
\beq\label{JJDiffDef}
\Jdiff := \{ \u \in \J|_M \text{ with } \nabla^+_{\u} \ell = - \nabla^+_{-\u} \ell \} \;\subset\; \J|_M\:.
\eeq
Note that the last equation does not impose a condition for the scalar component of the jet, so that
\begin{align}\begin{split}
\Jdiff &= C^\infty(M, \R) \oplus \Gdiff \qquad \text{where} \\
\Gdiff \,&\!:= \{ u \in C^\infty(M, T\F) \text{ with } D^+_{u} \ell = - D^+_{-u} \ell \} \:.
\end{split} \label{JJDiffLip}
\end{align}
Thus for jets in~$\Jdiff$, the directional derivatives exist, so that~$\nabla^+_u \ell = \nabla_u \ell$.
Then~\eqref{JELweaklip} implies that
\beq\label{JWElDiff}
\nabla_{\u} \ell(x) = 0 \qquad \text{for all~$x \in M$ and~$\u \in \Jdiff$}\:. 
\eeq
As explained in the introduction after~\eqref{JIntreqLinFieldEq}, in physical applications
it suffices to use only part of the information contained in these equations.
To this end, we choose a linear subspace
\beq\label{JGammatest}
\Jtest = \Ctest(M, \R) \oplus \Gtest \;\subset\; \Jdiff
\eeq
and consider the {\em{weak EL equations}}
\beq \label{JELtest}
\nabla_{\u} \ell|_M = 0 \qquad \text{for all~$\u \in \Jtest$}\:.
\eeq
The choice of~$\Jtest$ depends on the specific application and is of no relevance for
the remainder of this section.

\subsection{Families of Solutions and Linearized Solutions}\label{JSmoothVar}
We now consider families of solutions of the weak EL equations.
To this end, we let~$(\tilde{\rho}_\tau)_{\tau \in (-\delta, \delta)}$ be a family of measures,
which similar to~\eqref{JrhoFf} and~\eqref{J362} we assume to be of the form
\beq \label{Jrhotau}
\tilde{\rho}_\tau = (F_\tau)_* \big( f_\tau \, \rho \big) \:,
\eeq
where~$f$ and~$F$ are smooth,
\[ f \in C^\infty\big((-\delta, \delta) \times \F \rightarrow \R^+ \big) \qquad \text{and} \qquad
F \in C^\infty\big((-\delta, \delta) \times \F \rightarrow \F \big) \:, \]
and have the properties~$f_0(x)=1$ and~$F_0(x) = x$ for all~$x \in M$.
Then the support of~$\tilde{\rho}_\tau$ is given by
\[ M_\tau := \supp \tilde{\rho}_\tau = \overline{F_\tau(M)} \:. \]
In order to formulate the weak EL equations~\eqref{JELtest} for~$\tau \neq 0$, there
is the complication that the jets in~$\Jtest$ are defined only on~$M$, whereas the
weak EL equations must be evaluated on~$M_\tau$. Therefore, we must introduce
a jet space~$\Jtest_\tau$ on~$M_\tau$. We choose~$\Jtest_\tau$ as the push-forward of~$\Jtest$
under~$F_\tau$, with an additional scalar component formed of the directional derivative of~$f_\tau$
(the reason for this choice will become clear in Lemma~\ref{Jlemmalinlip}).
More precisely,
\beq \label{JJtesttau}
\Jtest_\tau := \Big\{ \Big( (F_\tau)_* \big( a + D_u \log f_\tau \big), \:(F_\tau)_* u \Big) \text{ with } \u = (a,u) \in \Jtest \Big\} \:,
\eeq
where the push-forward is defined by
\beq \begin{split}
(F_\tau)_* a \:&:\: M_\tau \rightarrow \R^+ \:,\qquad \big( (F_\tau)_* a \big)(F_\tau(x)) = a(x) \\
(F_\tau)_* u \:&:\: M_\tau \rightarrow T\F \:,\qquad \big( (F_\tau)_* u \big)(F_\tau(x)) = DF_\tau|_x \,u(x)
\end{split} \label{Jpushforward}
\eeq
(equivalently, the last relation can be written as~$((F_\tau)_* u)|_{F_\tau(x)} \eta = u|_x (\eta \circ F_\tau)$
for any test function~$\eta$ defined in a neighborhood of~$F_\tau(x)$).

We point out that the push-forward of~$\Jdiff$ is in general not the same as the differentiable
jets corresponding to the measure~$\tilde{\rho}_\tau$, as is illustrated in the following example.

\begin{Example} (the causal variational principle on the sphere) \label{Jexsphere} {\em{
The causal variational principle on the sphere is obtained from the setting of causal
fermion systems by taking a mathematical simplification of a special case.
It was introduced in~\cite[Section~1]{continuum} (cf.~\cite[Examples~1.5, 1.6 and~2.8]{continuum})
and analyzed in more detail in~\cite[Section~5]{support}.
We choose~$\F = S^2$ and let~$\D \in C^\infty(\F \times \F, \R)$ be the smooth function
\[ \D(x,y) := 
2 \tau^2\: (1+ \langle x,y \rangle) \left( 2 - \tau^2 \:(1 - \langle x,y \rangle) \right) , \]
where $\tau \geq 1$ is a parameter of the model and $\langle .,. \rangle$ is the scalar product on~$\R^3$.
Obviously, the function~$\D$ depends only on the angle~$\vartheta \in [0,\pi]$ between the
points~$x, y \in S^2$ (defined by~$\cos \vartheta = \langle x, y \rangle$).
We here choose~$\tau = \sqrt{2}$, so that
\[ \D = \D(\vartheta) = 8 \: (1+ \cos \vartheta ) \cos \vartheta\:. \]
The function~$\D$ has a maximum at~$\vartheta=0$
and changes signs at~$\vartheta_{\max}:= \frac{\pi}{2}$; more precisely
\[ \D|_{[0, \vartheta_{\max})} > 0 \:, \quad \D(\vartheta_{\max})=0\:, \quad \D|_{(\vartheta_{\max}, \pi]} \leq 0 \:. \]
We define the Lipschitz-continuous {\em{Lagrangian}}~$\L$ by
\begin{align} \label{JLform}
\L = \max(0, \D) \in C^{0,1}(\F \times \F, \R^+_0) \:.
\end{align}
Hence $\L(\vartheta)$ is positive if and only if $0 \leq \vartheta < \vartheta_{\max}$.
Furthermore, $\L$ is not differentiable at $\vartheta = \vartheta_{\max}$ since the semi-derivatives $\partial_\vartheta^+ \L(\vartheta)$ and $\partial_\vartheta^- \L(\vartheta)$ do not agree at this point.

It is shown in~\cite{support} that, for our choice of $\tau$, a minimizer of the causal variational
principle~\eqref{JSact} is given by a normalized counting measure supported on an octahedron.
Thus, denoting the set of unit vectors in $\R^3$ by~$\mathbb B:= \{e_1,e_2,e_3\}$,
the measure
\begin{align*}
\rho = \frac{1}{6} \sum_{x \in  \pm \mathbb B } \delta_x
\end{align*}
is a minimizer (where~$\delta_x$ denotes the Dirac measure supported at~$x \in S^2$).
Note that for all distinct points~$x, y \in \supp \rho$, the angle~$\vartheta$
is either~$\frac{\pi}{2}$ or~$\pi$, implying that~$\L(x,y)=0$. As a consequence,
\begin{align*}
\ell(x) = \D(0) \qquad \textrm{ for all } x \in M \, .
\end{align*}

In order to determine $\Jdiff$ as defined in~\eqref{JJDiffLip}, given any~$x \in M$ and a
non-zero vector~$u \in T_x\F$,
we let~$\gamma: (-\delta,\delta) \rightarrow \F$ be a smooth curve with~$\gamma(0) = x$
and~$\dot \gamma(0) = u$. Qualitatively speaking, the function $\ell(\gamma(\tau))$
has a ``cusp-like minimum'' at~$\tau=0$ because for $\tau > 0$, there is at least one point $y \in M$ which contributes to 
$\ell(\gamma(\tau))$ whereas for $\tau < 0$, the same is true for a different point $\tilde y$.
This can be made precise as follows.
There is at least one point~$y \in M$ with~$\vartheta_{x,y}=\vartheta_{\max}$
and~$\partial_\tau \vartheta_{\gamma(\tau), y}<0$ at $\tau=0$. This point
contributes to~$\ell$ for positive~$\tau$, i.e.
\[ \ell(\gamma(\tau)) \geq \D(\vartheta_{\gamma(\tau), x}) + \D(\vartheta_{\gamma(\tau), y})
\qquad \text{if~$\tau\geq0$}  \]
and thus
\begin{align*}
D^+_u \ell(x) &= \partial_\tau^+ \ell \big( \gamma(\tau) \big) \big|_{\tau=0} \geq
\partial_\tau \big( \D(\vartheta_{\gamma(\tau), x}) + \D(\vartheta_{\gamma(\tau), y}) \big) \big|_{\tau=0} \\
&=  \D'(0)\, \partial_\tau\vartheta_{\gamma(\tau), x}|_{\tau=0} + \D'(\vartheta_{\max})
\: \partial_\tau \vartheta_{\gamma(\tau), y} > 0 \:.
\end{align*}
Here, in the second step we used that $\D$ is differentiable, and hence the one-sided derivatives agree with the derivative, and in the last step we used that~$\D'(0)=0$ and~$\D'(\vartheta_{\max})<0$.
Likewise, there is a point~$\tilde{y} \in M$ with~$\vartheta_{x,\tilde{y}}=\vartheta_{\max}$
and~$\partial_\tau \vartheta_{\gamma(\tau), \tilde{y}}>0$. This point contributes to~$\ell$ for negative~$\tau$,
implying that
\begin{align*}
D_{-u} \ell(x) &= -\partial_\tau^- \ell \big( \gamma(\tau) \big) \big|_{\tau=0} \geq
-\partial_\tau \big( \D(\vartheta_{\gamma(\tau), x}) - \D(\vartheta_{\gamma(\tau), \tilde{y}}) \big) \big|_{\tau=0} \\
&= -\D'(0)\, \partial_\tau\vartheta_{\gamma(\tau), x}|_{\tau=0} - \D'(\vartheta_{\max})
\: \partial_\tau \vartheta_{\gamma(\tau), \tilde{y}} > 0 \:.
\end{align*}
Hence~$D^+_u \ell(x) \neq -D^+_{-u}\ell(x)$, so that the directional derivative~$D_u \ell(x)$
does not exist. We conclude that~$\Gdiff = \{0\}$ and thus
\beq \label{JJdiffex}
\Jdiff = C^\infty(M) \oplus \{0\} \:.
\eeq

We next define a family of measures~$\tilde{\rho}_\tau$ as in~\eqref{Jrhotau}.
To this end, we choose trivial weight functions~$f_\tau \equiv 1$, and choose
a family of diffeomorphism which change the angles between the points of~$M$ in the sense that
\[ \vartheta_{F_\tau(x), F_\tau(y)} \neq \vartheta_{\max} \qquad \text{for all~$x,y \in M$ and all~$\tau \neq 0$}\:. \]
Then for any~$\tau \neq 0$, the Lagrangian is smooth on a neighborhood of~$M_\tau \times M_\tau$, so that
\[ \Jdiff(\tilde{\rho}_\tau) = C^\infty(M_\tau) \oplus C^\infty(M_\tau, T\F) \:. \]
On the other hand, using the formula in~\eqref{JJtesttau} to define the push-forward of~$\Jdiff$
as given by~\eqref{JJdiffex}, we obtain the jet space~$C^\infty(M_\tau) \oplus \{0\}$.
Hence the push-forward of~$\Jdiff$ does not coincide with the 
differentiable jets of the measure~$\tilde{\rho}_\tau$.
}} \QEDrem
\end{Example}

In the next lemma we bring the weak EL equations for families of solutions into a convenient form.
To this end, as in~\eqref{JellTauSmooth}, we define
\beq\label{Jelltau}
\ell_\tau(z) = \int_\F \L(z,y)\: d{\rho}_\tau(y) - \frac{\nu}{2} \:.
\eeq

\begin{Lemma} \label{Jlemmalinlip}
Assume that~$(\tilde{\rho}_\tau)_{\tau \in (-\delta, \delta)}$ is a family of
solutions of the weak EL equations~\eqref{JELtest} in the sense that
\beq\label{JweakNull}
\nabla_{\u(\tau)} \ell_\tau(z) = 0 \qquad \text{for all~$z \in M_\tau$ and~$\u(\tau) \in \Jtest_\tau$}\:.
\eeq
Then for any~$\u \in \Jtest$ and all~$\tau \in (-\delta, \delta)$,
\beq \label{Jfinal}
0 = \nabla_{\u} \bigg( \int_M f_\tau(x) \:\L\big(F_\tau(x), F_\tau(y) \big)\: f_\tau(y)\: d\rho(y) 
-\frac{\nu}{2} \: f_\tau(x) \bigg) \:.
\eeq
\end{Lemma}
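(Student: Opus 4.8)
The plan is to reduce the claimed identity~\eqref{Jfinal} to the standing hypothesis~\eqref{JweakNull} by evaluating the latter at the point $z=F_\tau(x)\in M_\tau$ and then unwinding the definitions of the pushed-forward jet and of the pushed-forward measure. Fix $\tau\in(-\delta,\delta)$, a point $x\in M$, and a test jet $\u=(a,u)\in\Jtest$. By definition~\eqref{JJtesttau}, the jet $\u(\tau):=\big((F_\tau)_*(a+D_u\log f_\tau),\,(F_\tau)_* u\big)$ lies in $\Jtest_\tau$, and since $F_\tau(x)\in F_\tau(M)\subset M_\tau$, hypothesis~\eqref{JweakNull} applies with this jet at this point. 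Writing it out with~\eqref{JDjetSemi} (and observing that, as~\eqref{JweakNull} is posed with the full two-sided jet derivative, it presupposes that $\ell_\tau$ is differentiable along $(F_\tau)_* u$ at $F_\tau(x)$), it reads
\[
0=\Big((F_\tau)_*\big(a+D_u\log f_\tau\big)\Big)\!\big(F_\tau(x)\big)\;\ell_\tau\big(F_\tau(x)\big)
+\big(D_{(F_\tau)_* u}\,\ell_\tau\big)\!\big(F_\tau(x)\big).
\]

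Next I would use the definition of the push-forward~\eqref{Jpushforward}: the scalar part evaluates to $a(x)+(D_u\log f_\tau)(x)$, and the vectorial part, by the chain rule along $F_\tau$ (legitimate precisely because the directional derivative in question exists), gives $\big(D_{(F_\tau)_* u}\ell_\tau\big)(F_\tau(x))=D_u\big(\ell_\tau\circ F_\tau\big)(x)$. By~\eqref{Jelltau} and the definition of the push-forward measure in~\eqref{Jrhotau},
\[
g_\tau(x):=\ell_\tau\big(F_\tau(x)\big)=\int_M\L\big(F_\tau(x),F_\tau(y)\big)\,f_\tau(y)\,d\rho(y)-\frac{\nu}{2}\,,
\]
so the relation above becomes $0=\big(a(x)+(D_u\log f_\tau)(x)\big)\,g_\tau(x)+D_u g_\tau(x)$. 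Multiplying by the strictly positive number $f_\tau(x)$, using $f_\tau\,D_u\log f_\tau=D_u f_\tau$ together with the product rule $D_u(f_\tau g_\tau)=(D_u f_\tau)\,g_\tau+f_\tau\,D_u g_\tau$, this is exactly
\[
0=a(x)\,\big(f_\tau g_\tau\big)(x)+D_u\big(f_\tau g_\tau\big)(x)=\nabla_{\u}\big(f_\tau g_\tau\big)(x).
\]
Since $(f_\tau g_\tau)(x)=\int_M f_\tau(x)\,\L\big(F_\tau(x),F_\tau(y)\big)\,f_\tau(y)\,d\rho(y)-\frac{\nu}{2}\,f_\tau(x)$ is precisely the argument of $\nabla_\u$ in~\eqref{Jfinal}, this proves the lemma.

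The computation is essentially algebraic; the points requiring care are: (a) that the pushed-forward test jet lies in $\Jtest_\tau$ — immediate from~\eqref{JJtesttau}, and this is exactly why the scalar component of $\Jtest_\tau$ carries the extra summand $D_u\log f_\tau$; (b) the chain-rule identity $\big(D_{(F_\tau)_* u}\ell_\tau\big)(F_\tau(x))=D_u(\ell_\tau\circ F_\tau)(x)$, which is valid because the two-sided directional derivatives of $\ell_\tau$ in the relevant direction are assumed to coincide (Example~\ref{Jexsphere} shows this does not follow automatically from $\u\in\Jdiff$, so it genuinely is part of the hypothesis on the family); and (c) the tacit interchange of $D_u$ with the $\rho$-integral over $y$ if one wishes to view $D_u g_\tau$ as $\int_M D_{1,u}(\cdots)\,d\rho(y)$ — which is not actually needed for the conclusion, only the existence of $D_u g_\tau$, again guaranteed by~\eqref{JweakNull}. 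This is the same mechanism as in the smooth case (Lemma~\ref{Jlemmalin}), with the smoothness of the Lagrangian replaced by assumptions (v)--(vi) and the equality in~\eqref{JweakNull}. I expect the only (mild) obstacle to be stating these regularity caveats precisely rather than any substantive difficulty.
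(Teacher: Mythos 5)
Your proposal is correct and follows essentially the same route as the paper: evaluate the weak EL equations~\eqref{JweakNull} at $z=F_\tau(x)$ with the pushed-forward jet from~\eqref{JJtesttau}, unwind the push-forward of measure and jet to rewrite $\ell_\tau\circ F_\tau$ as the $y$-integral against $d\rho$, then multiply by $f_\tau(x)$ and absorb the $D_u\log f_\tau$ term via the product rule to arrive at~\eqref{Jfinal}. Your explicit chain-rule step and the remark that differentiability of $\ell_\tau$ along $(F_\tau)_*u$ is built into the hypothesis are just a more spelled-out version of what the paper does in its first displayed equation.
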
 \noindent
\Proof Writing the jet~$\u(\tau)$ as in~\eqref{JJtesttau} as
\beq\label{Jpushfdef}
\u(\tau) = \Big( (F_\tau)_* \big( a + D_u \log f_\tau \big), \:(F_\tau)_* u \Big) 
\eeq
with~$\u \in \Jtest$ and using the definition of the
push-forward~\eqref{Jpushforward}, the weak EL equations~\eqref{JweakNull} yield
\beq \label{Jprelim}
0 = \nabla_{\tilde{\u}} \bigg( \int_M \L\big(F_\tau(x), F_\tau(y) \big)\: f_\tau(y)\: d\rho(y) - \frac{\nu}{2} \bigg)
\eeq
with~$\tilde{\u}=(a+(D_u \log f_\tau), \,u)$, valid for all~$\tau \in (-\delta, \delta)$ and $x \in M$.
Multiplying by~$f_\tau(x)$, we obtain
\begin{align*}
0 &= f_\tau(x) \:\nabla_{\tilde{\u}} \bigg( \int_M \L\big(F_\tau(x), F_\tau(y) \big)\: f_\tau(y)\: d\rho(y) -\frac{\nu}{2} \bigg) \\
&= f_\tau(x) \Big( a(x) + (D_u \log f_\tau)(x) + D_u \Big)
\bigg(  \int_M \L\big(F_\tau(x), F_\tau(y) \big)\: f_\tau(y)\: d\rho(y) -\frac{\nu}{2} \bigg) \\
&= \big( a(x) + D_u \big) \: f_\tau(x) \bigg(  \int_M \L\big(F_\tau(x), F_\tau(y) \big)\: f_\tau(y)\: d\rho(y) 
-\frac{\nu}{2} \bigg) \\
&= \nabla_\u \bigg( \int_M f_\tau(x) \:\L\big(F_\tau(x), F_\tau(y) \big)\: f_\tau(y)\: d\rho(y) -\frac{\nu}{2} \: f_\tau(x) \bigg)\:,
\end{align*}
making it possible to write~\eqref{Jprelim} ``more symmetrically'' in the form~\eqref{Jfinal}.
\QED

Differentiating~\eqref{Jfinal} naively with respect to~$\tau$, we obtain
the {\em{linearized field equations}}
\beq \label{Jeqlinlip}
\la \u, \Delta \v \ra|_M = 0 \qquad  \textrm{ for all } \u \in \Jtest
\eeq
with
\beq \label{Jeqlinlip2}
\la \u, \Delta \v \ra(x) := \nabla_{\u} \bigg( \int_M \big( \nabla_{1, \v} + \nabla_{2, \v} \big) \L(x,y)\: d\rho(y) - \nabla_\v \:\frac{\nu}{2} \bigg) \:,
\eeq
where~$\v$ is the jet~$\v = (\dot{f}_0, \dot{F}_0)$.
In order to see the connection to the linearized field equations in the smooth setting~\eqref{Jeqlin},
we note that a formal computation using~\eqref{Jelldeflip} gives
\[ \la \u, \Delta \v \ra(x) = \nabla_\u \nabla_\v \ell(x) + \int_M \nabla_{1, \u} \nabla_{2, \v} \L(x,y)\: d\rho(y)\:. \]
In the present lower semi-continuous setting, it is not clear if the derivatives exist, nor if
the derivatives may be interchanged with the integrals. It turns out to be preferable to
work with~\eqref{Jeqlinlip2}. In order to make sense of this expression, we
need to impose conditions on~$\v$. This leads us to the following definition:

\begin{Def} \label{Jdeflin} A jet~$\v \in \J$ 
is referred to as a {\textbf{solution of the linearized field equations}}
(or for brevity a {\textbf{linearized solution}}) if it has the following properties:
\begin{itemize}[leftmargin=3em]
\item[\textrm{(l1)}] For all~$y \in M$ and all~$x$ in an open neighborhood of~$M$, the 
following combination of derivatives exists,
\beq \label{Jderex1}
\big( \nabla_{1, \v} + \nabla_{2, \v} \big) \L(x,y) \;\in\; \R \:.
\eeq
Here the combination of directional derivatives is defined by
\[ \big( D_{1, v} + D_{2, v} \big) \L(x,y) := \frac{d}{d\tau} 
\L\big( F_\tau(x), F_\tau(y) \big) \big|_{\tau=0} \:,\]
where~$F_\tau$ is the flow of the vector field~$v$.
\item[\textrm{(l2)}] Integrating the expression~\eqref{Jderex1} over~$y$ with respect to the measure~$\rho$,
the resulting function (defined on an open neighborhood of~$M$) is differentiable in the
direction of every jet~$\u \in \Jtest$ and satisfies the linearized field equations~\eqref{Jeqlinlip}.
\end{itemize}
The vector space of all linearized solutions is denoted by~$\Jlin \subset \J$.
\end{Def}

In order to illustrate the significance of the condition~\eqref{Jderex1}, we now give an
an example where the derivative in~\eqref{Jderex1} does not exist.

\begin{Example} (the causal variational principle on the sphere continued) \label{Jexspherecont} {\em{
We return to the causal variational principle on the sphere as considered in Example~\ref{Jexsphere}.
We first want to give an example where for $\L$ as given in~\eqref{JLform},
the condition~\eqref{Jderex1} is violated.
To this end, we choose two points~$x, y \in S^2$ such that~$\vartheta_{x, y} = \vartheta_{\max}$.
Furthermore, we choose a vector field~$v$ which vanishes in a neighborhood of~$y$. Then
\beq \label{JLFF}
\L(F_\tau(x),F_\tau(y)) = \L(F_\tau(x),y)
\eeq
(where~$F_\tau$ is again the flow of the vector field~$v$).
Next, we choose~$v(x)$ to be nonzero, tangential to the great circle joining~$x$ and~$y$
and pointing in the direction of smaller geodesic distance to $y$. Then
\[ D^+_{1, -v} \L(x,y) = 0 \qquad \text{but} \qquad
D^+_{1, v} \L(x,y) = -\D'(\vartheta_{\max}) > 0\:. \]
Hence~\eqref{JLFF} is not differentiable at~$\tau=0$.
We conclude that the derivative in~\eqref{Jderex1} does not exist.

Despite the just-explained difficulty to satisfy the condition~\eqref{Jderex1},
the space~$\Jlin$ contains non-trivial vector fields. For example, if~$(F_\tau)_{\tau \in (-\delta, \delta)}$
is a smooth family of isometries of the sphere (for example rotations around a fixed axis), then the
corresponding jet~$\v = (0, v)$ with $v = \partial_\tau F_\tau|_{\tau=0}$ is in~$\Jlin$.
This follows because
\begin{align*}
\frac{d}{d\tau} \L\big( F_\tau(x), F_\tau(y) \big) \big|_{\tau=0} = 0 \, ,
\end{align*}
i.e.\ the derivatives in~\eqref{Jderex1} exist and are zero. It follows that condition (l2),
including the linearized field equations, are satisfied as well.
This shows that~$\v \in \Jlin$.
We conclude that~$\Jlin$ is a vector space of dimension at least three.

In view of~\eqref{JJdiffex}, this example also illustrates that~$\Jlin$ is in general not a subspace of~$\Jdiff$.
}} \QEDrem
\end{Example}

\subsection{The Symplectic Form and Hamiltonian Time Evolution}\label{JlscSymp}
Following the construction in Section~\ref{JSecSympForm}, we want
to anti-symmetrize the linearized field equations~\eqref{Jeqlinlip} in the jets~$\u$ and~$\v$.
To this end, we now consider~$\u, \v \in \Jtest \cap \Jlin$.
The conditions in Definition~\ref{Jdeflin} ensure that the linearized field equations~\eqref{Jeqlinlip} are well-defined.
For the construction of the symplectic form, we need additional technical assumptions.
In order to make minimal assumptions, we work with semi-derivatives only. The existence of right semi-derivatives in $\R \cup  \{\infty\}$ is
guaranteed by condition~(v) on page~\pageref{JDvL}. We define the left semi-derivative as
\beq \label{JLeftSemiDeriv}
D^-_{1,v} \L(x,y) := \lim_{\tau \nearrow 0} \frac{1}{\tau} \Big( \L \big( \gamma(\tau), y \big)
- \L \big( \gamma(0), y \big) \Big) \;\in\; \R \cup \{\infty\} \:,
\eeq
so that~$D^-_{1,v} = - D^+_{1,-v}$. Similar to~\eqref{JDjetSemi}, we define
\[ \nabla^-_{\u} \ell(x) := a(x)\, \ell(x) + \big(D^-_u \ell \big)(x) \:. \]
Before stating the additional assumptions, we explain why they are needed. First, we must
ensure that the individual terms in~\eqref{Jeqlinlip} exist and that we may exchange the differentiation with integration.
Second, when taking second derivatives, we must take into account that the jets are also differentiated.
We use the notation
\beq \label{JnotationJetsDiff}
\nabla^s_{\u(x)} \nabla^{s'}_{1, \v} \L(x,y)
\eeq
to indicate that the $\u$-derivative also acts on~$\v$. 
With this notation, we can state the additional technical assumptions as follows:
\begin{itemize}[leftmargin=3em]
\item[\textrm{(s1)}] The first and second semi-derivatives of the Lagrangian
in the direction of~$\Jtest \cap \Jlin$ exist in~$\R$. \label{JCondS1}
Moreover, for all~$x$ and~$y$ in a neighborhood of~$M$, the symmetrized first semi-derivatives of the Lagrangian
\[ \big( \nabla^+_{1,\u} + \nabla^-_{1,\u} \big) \L(x,y) \]
are linear in~$\u \in \Jtest \cap \Jlin$.
\item[\textrm{(s2)}] The second semi-derivatives can be interchanged with the $M$-integration, i.e.\
for all~$\u, \v \in \Jtest$ and~$s, s' \in \{\pm\}$,
\begin{align*}
\int_M \nabla^s_{\u(x)} \nabla^{s'}_{1,\v} \L(x,y) \:d\rho(y) &= 
\nabla^s_{\u} \int_M \nabla^{s'}_{1,\v}  \L(x,y) \:d\rho(y)
\\
\int_M \nabla^s_{1,\u} \nabla^{s'}_{2,\v} \L(x,y) \:d\rho(y) &= \nabla^s_{\u} \int_M \nabla^{s'}_{2,\v} \L(x,y) \:d\rho(y) \:.
\end{align*}
\item[\textrm{(s3)}] For any~$\u, \v \in \Jtest \cap \Jlin$, the 
commutator~$[\u, \v]$ is in~$\Jtest$.
\end{itemize}

\begin{Thm} \label{JthmOSIlip}
Under the above assumptions~{\em{(l1), (l2)}} and~{\em{(s1)--(s3)}}, 
for any compact~$\Omega \subset \F$, the surface layer integral
\beq \label{JOSIlip}
\sigma^{s,s'}_\Omega(\u, \v) = \int_\Omega d\rho(x) \int_{M \setminus \Omega} d\rho(y)\:
\sigma^{s,s'}_{\u, \v}(x,y) 
\eeq
with
\beq \label{JOSIIntegrand}
\sigma^{s,s'}_{\u, \v}(x,y) := \nabla^s_{1,\u} \nabla^{s'}_{2,\v} \L(x,y) - \nabla^{s'}_{1,\v} \nabla^s_{2,\u} \L(x,y)
\eeq
vanishes for all~$\u, \v \in \Jtest \cap \Jlin$ and all~$s,s' \in \{\pm\}$.
\end{Thm}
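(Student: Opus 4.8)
The plan is to mimic the proof of Theorem~\ref{JthmOSI} from the smooth setting, but carefully tracking semi-derivatives and using the technical assumptions (s1)--(s3) to justify each manipulation that was automatic in the smooth case. The starting point is Lemma~\ref{Jlemmalinlip}, which gives, for a family of solutions of the weak EL equations, the identity~\eqref{Jfinal}; differentiating this in~$\tau$ at~$\tau=0$ yields the linearized field equations~\eqref{Jeqlinlip} with~\eqref{Jeqlinlip2}, and condition~(l2) in Definition~\ref{Jdeflin} guarantees that the relevant derivative exists and may be interchanged with the $M$-integration. So for~$\v \in \Jtest \cap \Jlin$ and~$\u \in \Jtest$ we have
\[
\nabla_{\u} \bigg( \int_M \big( \nabla_{1, \v} + \nabla_{2, \v} \big) \L(x,y)\: d\rho(y) - \nabla_\v \:\frac{\nu}{2} \bigg) = 0 \qquad \text{for all~$x \in M$}\:.
\]
The first step is to re-express the left-hand side using assumption~(s2) to pull the outer $\u$-derivative inside the integral, and to split it according to which argument of~$\L$ is being differentiated. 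Here one must be careful that $\nabla_\u$ acts not only on the Lagrangian but also on~$\v$ (since in a curve of solutions the jet~$\u(\tau)$ varies), which is exactly what the notation~\eqref{JnotationJetsDiff} is designed to encode; the bookkeeping of the $\nabla_{\u(x)}$ terms is what replaces the clean product rule of the smooth case.

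\textbf{Key steps.} First I would write out $\nabla_\u \nabla_\v \ell$-type and $\nabla_{1,\u}\nabla_{2,\v}\L$-type contributions separately, obtaining from the linearized field equation for~$\v$ (tested against~$\u$) and the linearized field equation for~$\u$ (tested against~$\v$) two identities. Second, I would subtract them to anti-symmetrize. The terms involving only the function~$\ell$ (or its semi-derivatives) and the terms where the jet-derivative hits the other jet combine into a commutator contribution~$\nabla_{[\u,\v]}\ell$; by assumption~(s3), $[\u,\v]\in\Jtest$, so this is controlled by the weak EL equations~\eqref{JELtest} and vanishes (this is the analogue of ``$\nabla_{[\u,\v]}\ell=0$'' used in the proof of Theorem~\ref{JthmOSI}). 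Using assumption~(s1) --- linearity of the symmetrized first semi-derivatives in the test jet --- ensures that the symmetrization is well-behaved and that the choice of signs $s,s'\in\{\pm\}$ does not obstruct the cancellation. What survives after anti-symmetrization is precisely
\[
\int_M \sigma^{s,s'}_{\u, \v}(x,y)\: d\rho(y) = 0 \qquad \text{for all~$x \in M$}\:,
\]
with $\sigma^{s,s'}_{\u,\v}$ as in~\eqref{JOSIIntegrand}; assumption~(s2) is again what lets one recognize the integrated second semi-derivatives as the integrand of a surface layer integral. Third, integrate this identity over a compact~$\Omega\subset\F$, split the $y$-integral as $\int_M = \int_\Omega + \int_{M\setminus\Omega}$, and observe that $\sigma^{s,s'}_{\u,\v}(x,y) = -\sigma^{s,s'}_{\u,\v}(y,x)$ because~$\L$ is symmetric (property~(i)). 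Hence the $\Omega\times\Omega$ double integral vanishes by anti-symmetry, leaving $\sigma^{s,s'}_\Omega(\u,\v)=0$, which is the claim.

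\textbf{Main obstacle.} The routine part is the integration-over-$\Omega$ and anti-symmetry argument, which is verbatim the end of the proof of Theorem~\ref{JthmOSI}. The delicate part --- and where assumptions (s1)--(s3) do the real work --- is the anti-symmetrization of the linearized field equations when the Lagrangian is only Lipschitz (so only one-sided derivatives exist) and when the test jets themselves are being differentiated. Concretely, the hard point is to show that the ``extra'' terms produced when $\nabla_\u$ falls on the vectorial part of~$\v$ (and vice versa) assemble into $\nabla_{[\u,\v]}\ell$ modulo terms that cancel in the difference, so that one genuinely recovers $\int_M \sigma^{s,s'}_{\u,\v}(x,y)\,d\rho(y)=0$ rather than an expression with uncontrolled remainder. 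Verifying that the sign choices $s,s'$ propagate consistently through this computation --- i.e.\ that each one-sided derivative appearing can legitimately be paired as in~\eqref{JOSIIntegrand} --- is the step I expect to require the most care, and it is precisely there that the linearity condition in~(s1) and the interchange condition~(s2) are invoked.
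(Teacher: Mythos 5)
Your proposal is correct and takes essentially the same route as the paper's own proof: anti-symmetrize the linearized field equations~\eqref{Jeqlinlip} in $\u,\v$ (and in $s,s'$), use (s2) to interchange semi-derivatives with the $M$-integration, absorb the commutator contribution via (s3), assumption~(vi) and the weak EL equations so that only $\int_M \sigma^{s,s'}_{\u,\v}(x,y)\,d\rho(y)=0$ survives, and then integrate over $\Omega$ and invoke the anti-symmetry of the integrand coming from the symmetry of $\L$. The only cosmetic difference is that the paper tracks the sign bookkeeping explicitly through the relation $\nabla^s_{1,\v}=ss'\,\nabla^{s'}_{1,ss'\v}$ and lets the $\nu/2$ terms cancel against $\nabla_{[\u,\v]}(\ell+\nu/2)$, which is the precise form of the cancellation you describe more loosely.
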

\Proof Our starting point are the linearized field equations~\eqref{Jeqlinlip}.
Condition (s1) ensures that we can treat the terms of~\eqref{Jeqlinlip2} independently if we
take semi-derivatives. First, 
for $\u, \v \in \Jlin \cap \Jtest$
we consider the term
\[ \nabla^s_{\u} \int_M \nabla^{s'}_{1, \v} \L(x,y)\: d\rho(y)  =  \int_M \nabla^s_{\u(x)} \nabla^{s'}_{1, \v} \L(x,y)\:
d\rho(y) \:. \]
We now exchange $\u$ and $\v$ as well as $s$ and $s'$ and take the difference. Using the relation
\[ \nabla_{1,\v}^{s} = ss' \, \nabla^{s'}_{1, ss' \v} \:, \]
we obtain
\begin{align*}
&\int_M \big( \nabla^s_{\u(x)} \nabla^{s'}_{1, \v} -\nabla^{s'}_{\v(x)} \nabla^{s}_{1, \u} \big)\L(x,y)\: d\rho(y) =
s s' \,  \int_M  \nabla^s_{1,ss' [\u,\v]} \L(x,y)\: d\rho(y)\\
&\;\;= s s' \,  \nabla^s_{ss' [\u,\v]} \Big( \ell(x) + \frac{\nu}{2} \Big)  =   \nabla_{[\u,\v]} \Big( \ell(x) + \frac{\nu}{2} \Big)  = \nabla_{[\u,\v]}  \frac{\nu}{2}
\end{align*}
Here, in the second step we used the assumption~(vi). In the third step, we applied assumptions~(s3)
and~\eqref{JGammatest}. In the last step, we used the weak EL equations.

Using this result, anti-symmetrizing~\eqref{Jeqlinlip2} (again by exchanging~$\u$ and $\v$ as well as $s$ and $s'$)
and using~\eqref{Jeqlinlip}, we obtain the equations
\[ \int_M \sigma^{s,s'}_{\u, \v}(x,y) d\rho(y) = 0 \:. \]
Integrating over~$\Omega$ gives
\beq \label{Jterm1.2}
\int_\Omega d\rho(x) \int_M d\rho(y)\: \sigma^{s,s'}_{\u, \v}(x,y) = 0 \:.
\eeq
Moreover, it is obvious by the anti-symmetry of~\eqref{JOSIIntegrand} that
\[ \int_\Omega d\rho(x) \int_\Omega d\rho(y)\: \sigma^{s,s'}_{\u, \v}(x,y) = 0 \:. \]
Subtracting this equation from~\eqref{Jterm1.2} gives the result.
\QED

Theorem~\ref{JthmOSIlip} again makes it possible to introduce a Hamiltonian time evolution,
just as explained in the introduction and in Section~\ref{JSecSympForm}.
The only difference compared to Section~\ref{JSecSympForm}
is that, due to the semi-derivatives in~\eqref{JOSIIntegrand}, the surface layer integral~\eqref{JOSIlip}
is in general not linear in the jets~$\u$ and~$\v$. In order to obtain a bilinear form,
we modify~\eqref{Jsigmaform} as follows.
\begin{Prp}\label{JlscSympBilin} The mapping~$\sigma$ defined by
\beq \label{JlscSympBilinEq}
\sigma \::\: (\Jtest \cap \Jlin) \times (\Jtest \cap \Jlin) \rightarrow \R \:, \qquad (\u, \v) \mapsto
\frac{1}{4} \sum_{s,s'=\pm} \sigma^{s,s'}_{\Omega_{N_t}}( \u, \v)
\eeq
is bilinear.
\end{Prp}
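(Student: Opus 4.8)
The plan is to deduce full bilinearity from two reductions: an antisymmetry observation, and the passage from the surface layer integral to a pointwise statement about its integrand. First I would record that, after summing over $s,s' \in \{\pm\}$, the integrand in~\eqref{JOSIlip} is antisymmetric under exchanging $\u$ and $\v$: relabelling the dummy indices $s \leftrightarrow s'$ in
\[ \sum_{s,s'} \sigma^{s,s'}_{\v,\u}(x,y) = \sum_{s,s'} \Big( \nabla^s_{1,\v} \nabla^{s'}_{2,\u} \L(x,y) - \nabla^{s'}_{1,\u} \nabla^s_{2,\v} \L(x,y) \Big) \]
turns the right-hand side into $-\sum_{s,s'}\sigma^{s,s'}_{\u,\v}(x,y)$, so that $\sigma(\v,\u) = -\sigma(\u,\v)$. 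Hence it suffices to prove that, for fixed $\v \in \Jtest \cap \Jlin$, the map $\u \mapsto \sigma(\u,\v)$ is $\R$-linear. Since the surface layer integral over $\Omega_{N_t}$ is linear in its integrand — the integral existing under the assumptions of Theorem~\ref{JthmOSIlip} and Definition~\ref{Jdeflin} — it is enough to show that $\u \mapsto \tfrac14\sum_{s,s'}\sigma^{s,s'}_{\u,\v}(x,y)$ is $\R$-linear for each fixed $x$ and $y$. As semi-derivatives are positively homogeneous, their symmetrized combinations are odd and hence $\R$-homogeneous; so the only point that really requires work is \emph{additivity} in $\u$.

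Next I would exploit that each operator $\nabla^s_{i,\w}$, regarded as a map on functions for fixed jet $\w$ and fixed sign $s$, is linear, which — using that the relevant semi-derivatives exist by assumption~(s1) — lets the double sum factorize:
\[ \tfrac14 \sum_{s,s'} \sigma^{s,s'}_{\u,\v}(x,y) = \nabla^{\mathrm{sym}}_{1,\u} \nabla^{\mathrm{sym}}_{2,\v} \L(x,y) - \nabla^{\mathrm{sym}}_{1,\v} \nabla^{\mathrm{sym}}_{2,\u} \L(x,y), \qquad \nabla^{\mathrm{sym}}_{i,\w} := \tfrac12 \big( \nabla^+_{i,\w} + \nabla^-_{i,\w} \big). \]
Writing $\u = (a,u)$, $\v = (b,v)$ and expanding $\nabla^{\mathrm{sym}}_{i,(c,w)} f = c\, f + D^{\mathrm{sym}}_{i,w} f$ with $D^{\mathrm{sym}}_{i,w} := \tfrac12(D^+_{i,w}+D^-_{i,w})$, the term $\nabla^{\mathrm{sym}}_{1,\u}\nabla^{\mathrm{sym}}_{2,\v}\L(x,y)$ splits into $a(x)\,b(y)\,\L(x,y)$, $a(x)\,D^{\mathrm{sym}}_{2,v}\L(x,y)$, $b(y)\,D^{\mathrm{sym}}_{1,u}\L(x,y)$, and $D^{\mathrm{sym}}_{1,u} D^{\mathrm{sym}}_{2,v}\L(x,y)$. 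In the first three, the $\u$-dependence is through $a(x)$ alone, which is linear, or through $D^{\mathrm{sym}}_{1,u}\L(x,y)$, which is linear in $u$ by the moreover-clause of~(s1); so these are manifestly linear in $\u$ with $\v$ fixed.

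The hard part will be the vector--vector piece $D^{\mathrm{sym}}_{1,u} D^{\mathrm{sym}}_{2,v}\L(x,y) - D^{\mathrm{sym}}_{1,v} D^{\mathrm{sym}}_{2,u}\L(x,y)$: here $D^{\mathrm{sym}}_{1,\cdot}$ is only known to be additive \emph{in the direction} when applied to $\L$ itself (via~(s1)), not when applied to the auxiliary function $D^{\mathrm{sym}}_{2,v}\L(\cdot,y)$. To handle this I would invoke assumption~(s1) in full — the existence in $\R$ of the first \emph{and second} semi-derivatives of $\L$ along jets of $\Jtest \cap \Jlin$ (which rules out the divergences occurring for generic variations, cf.\ Example~\ref{Jexspherecont}) — together with the symmetry $\L(x,y)=\L(y,x)$ and the interchangeability~(s2), to conclude that the mixed symmetrized second semi-derivative $x \mapsto D^{\mathrm{sym}}_{1,u} D^{\mathrm{sym}}_{2,v}\L(x,y)$ is linear in $u$ for fixed $v$. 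Granting this, all four pieces are linear in $\u$, hence so is $\tfrac14\sum_{s,s'}\sigma^{s,s'}_{\u,\v}(x,y)$; integrating over $\Omega_{N_t}$ (using, as in Section~\ref{JSecSympForm}, that by Theorem~\ref{JthmOSIlip} the resulting surface layer integral does not depend on $t$) gives linearity of $\u \mapsto \sigma(\u,\v)$, and the antisymmetry from the first step upgrades this to bilinearity. I expect that the only genuinely delicate step is precisely the last: controlling how additivity interacts with one-sided differentiation of an already once-differentiated, merely Lipschitz Lagrangian, i.e.\ extracting the needed linearity of the mixed symmetrized second semi-derivative from~(s1)--(s2).
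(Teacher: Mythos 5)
Your argument is correct and essentially the paper's own: both rewrite $\tfrac14\sum_{s,s'}\sigma^{s,s'}_{\u,\v}(x,y)$ as $\tfrac12\big(\nabla^+_{1,\u}+\nabla^-_{1,\u}\big)\,\tfrac12\big(\nabla^+_{2,\v}+\nabla^-_{2,\v}\big)\L(x,y)$ minus the term with $\u$ and $\v$ exchanged, and then deduce linearity from assumption (s1) combined with the symmetry of $\L$. The mixed second-semi-derivative term you single out as the delicate point is passed over just as quickly in the paper's proof, and your preliminary antisymmetry reduction is only a small shortcut, not a genuinely different route.
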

\Proof We have
\begin{align*}
\sigma(\u,\v) = \int_{\Omega_{N_t}} d\rho(x) \int_{M \setminus \Omega_{N_t}} d\rho(y)\:  \sigma_{\u, \v}(x,y) 
\end{align*}
with
\begin{align*}
&\sigma_{\u, \v}(x,y) =  \frac{1}{4} \sum_{s,s'=\pm} \sigma^{s,s'}_{\u, \v}(x,y) \\
&\;= \frac{1}{4} \: \Big( \nabla_{1,\u}^+ + \nabla_{1,\u}^- \Big) \Big( \nabla_{2,\v}^+  + \nabla_{2,\v}^- \Big) \L(x,y)
- \frac{1}{4} \: \Big( \nabla_{1,\v}^+ + \nabla_{1,\v}^- \Big) \Big( \nabla_{2,\u}^+ + \nabla_{2,\u}^- \Big) \L(x,y) \,.
\end{align*}
The assumption (s1) and the symmetry of $\L(x,y)$ (condition (i) on page~\pageref{JCond1}) imply that $\big( \nabla^+_{i,\u} + \nabla^-_{i,\u} \big) \L(x,y)$
is linear in $\u \in \Jtest \cap \Jlin$ for $i=1,2$. Hence~$\sigma_{\u, \v}(x,y)$ is indeed linear in~$\u$
and~$\v$.
\QED
Since the other considerations at the end of Section~\ref{JSecSympForm} apply without changes,
we do not repeat them here.

\section{Example: A Lattice System in~$\R^{1,1} \times S^1$} \label{Jseclattice}
We now illustrate the previous constructions in a detailed example
on two-dimensio\-nal Minkowski space~$\R^{1,1}$. This example is inspired by physical field theories
whose fields take values in~$S^1$, but it differs substantially form any such theory. 
Its main purpose is to allow for a test of the above constructions, it is not supposed 
to represent an actual model of a physical application of causal fermion systems.
This is the case mainly because we choose a special Lagrangian $\L$ instead of working with~\eqref{JLdef} or a special
case thereof.
Our choice of $\L$ has the advantage that the minimizer is discrete,
making the system suitable for a numerical analysis.

\subsection{The Lagrangian}
Let~$(\R^{1,1}, \la .,. \ra)$ be two-dimensional Minkowski space. Thus, denoting the
space-time points by~$\underline{x} = (x^0, x^1)$ and~$\underline{y}$, the inner product
takes the form
\[ \la \underline{x}, \underline{y} \ra = x^0 y^0 - x^1 y^1 \:. \]
Moreover, let~$\F$ be the set
\[ \F = \R^{1,1} \times S^1\:. \]
We denote points in~$x \in \F$ by $x=(\underline{x}, x^\varphi)$
with~$\underline{x} \in \R^{1,1}$ and~$x^\varphi \in [-\pi, \pi)$.
Next, we let~$A$ be the square
\[ A = (-1,1)^2 \subset \R^{1,1} \:. \]
Moreover, given~$\varepsilon \in (0,\frac{1}{4})$,
we let~$I$ be the the following subset of the interior of the light cones,
\[ I = \big\{ \underline{x} \in \R^{1,1} \, \big| \, \la \underline{x}, \underline{x} \ra > 0
\textrm{ and } |x^0| < 1 +\varepsilon \big\} \:. \]
Furthermore, we let~$f : \R^{1,1} \rightarrow \R$ be the function
\[ f(\underline{x}) = \chi_{B_\varepsilon(0,1)}(\underline{x}) + \chi_{B_\varepsilon(0,-1)}(\underline{x})
- \chi_{B_\varepsilon(1,0)}(\underline{x}) - \chi_{B_\varepsilon(-1,0)}(\underline{x}) \]
(where~$\chi$ is the characteristic function and~$B_\varepsilon$ denotes the open Euclidean ball 
of radius~$\varepsilon$ in~$\R^2 \simeq \R^{1,1}$). Finally, we let~$V : S^1 \rightarrow \R$ be the function
\[ V(\varphi) = 1 - \cos \varphi \:. \]
Given parameters~$\delta > 0$, $\lambda_I \geq 2$ and~$\lambda_A \geq 2 \lambda_I+\varepsilon$, the
Lagrangian~$\L$ is defined by
\beq
\begin{split}
\L(x,y) &= \lambda_A \ \chi_{A}( \underline x - \underline y) + \lambda_I \ \chi_{I}( \underline x - \underline y) +
V(x^\varphi -y^\varphi) \: f(\underline x - \underline y) \\
&\quad + \delta \, \chi_{B_\varepsilon(0,0)}(\underline{x}-\underline{y})\: V(x^\varphi -y^\varphi )^2 \:.
\end{split} \label{JDefL}
\eeq

\begin{Lemma}
The function~$\L(x,y)$ is non-negative and satisfies the conditions~\textrm{(i)} and \textrm{(ii)} on page~\pageref{JCond1}.
\end{Lemma}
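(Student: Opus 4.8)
The plan is to verify the two required properties directly from the explicit formula~\eqref{JDefL}, treating the four summands separately and then combining. Recall that property~(i) is symmetry, $\L(x,y)=\L(y,x)$, and property~(ii) is lower semi-continuity. Non-negativity, although stated in the lemma, is the easiest part and I would dispatch it first: $\lambda_A, \lambda_I, \delta$ are positive constants and the characteristic functions are non-negative, $V(\varphi)=1-\cos\varphi \geq 0$ everywhere, and $V(\cdot)^2\geq 0$; the only potentially negative contribution is the term $V(x^\varphi-y^\varphi)\,f(\underline x-\underline y)$, because $f$ takes the value $-1$ on $B_\varepsilon(\pm 1,0)$. So the crux of non-negativity is to show that wherever $f(\underline x - \underline y) = -1$, the other terms compensate. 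On $B_\varepsilon(\pm 1,0)$ one checks that $\underline x - \underline y$ lies in $A=(-1,1)^2$ (since $\varepsilon<\frac14$, a ball of radius $\varepsilon$ around $(\pm 1,0)$ is contained in $(-1-\varepsilon,1+\varepsilon)\times(-\varepsilon,\varepsilon)$, hmm, that is not inside $A$) — so instead one uses that such points lie in $I$: indeed $(\pm 1,0)$ has $\la\underline x,\underline x\ra = 1 > 0$ and $|x^0|=1<1+\varepsilon$, and a small enough ball stays in the open set $I$; then the contribution $\lambda_I\,\chi_I \geq \lambda_I \geq 2$ dominates $|V\cdot f| \leq \max V = 2$. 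I would state this containment $B_\varepsilon(\pm 1,0)\subset I$ as the key elementary geometric fact, using $\varepsilon<\frac14$, and conclude $\L\geq 0$ pointwise.

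For symmetry~(i), I would argue summand by summand. The sets $A$, $I$, and $B_\varepsilon(0,0)$ are all symmetric under $\underline z \mapsto -\underline z$ (the square $A$ is symmetric about the origin, $I$ is defined via the even quantities $\la\underline z,\underline z\ra$ and $|z^0|$, and the Euclidean ball is symmetric), so $\chi_A(\underline x-\underline y)=\chi_A(\underline y-\underline x)$ and likewise for $\chi_I$ and $\chi_{B_\varepsilon(0,0)}$. The function $f$ is even, $f(-\underline z)=f(\underline z)$, since the four balls $B_\varepsilon(0,\pm 1)$ and $B_\varepsilon(\pm 1,0)$ are interchanged in pairs by $\underline z\mapsto -\underline z$ and the signs are preserved within each pair. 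Finally $V(\varphi)=1-\cos\varphi$ is even, so $V(x^\varphi-y^\varphi)=V(y^\varphi-x^\varphi)$ and similarly for $V^2$. Combining, every summand in~\eqref{JDefL} is invariant under swapping $x\leftrightarrow y$, giving~(i).

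For lower semi-continuity~(ii), the point is that $\L$ is a sum of a continuous part and jump discontinuities coming from the characteristic functions, and all the jumps go the ``right way''. Concretely, $V$ and $V^2$ are continuous on $S^1$, so the genuinely delicate terms are $\lambda_A\chi_A$, $\lambda_I\chi_I$, the product $V\cdot f$ (through the discontinuous $f$), and $\delta\,\chi_{B_\varepsilon(0,0)}V^2$. I would use the standard fact that $\chi_U$ is lower semi-continuous precisely when $U$ is open, and $\chi_U$ upper semi-continuous when $U$ is closed. Here $A$, $I$, $B_\varepsilon(0,0)$, and the four balls defining $f$ are all open; hence $\chi_A$, $\chi_I$, $\chi_{B_\varepsilon(0,0)}$ and $\chi_{B_\varepsilon(0,\pm1)}$ are lower semi-continuous, while $-\chi_{B_\varepsilon(\pm1,0)}$ is a negative multiple of a lower semi-continuous function of an \emph{open} set, which is only upper semi-continuous — so one must be careful. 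The resolution is again the containment $B_\varepsilon(\pm1,0)\subset I$: on the closure $\overline{B_\varepsilon(\pm1,0)}$ (for $\varepsilon$ small, still inside the open set $I$) the term $\lambda_I\chi_I$ is locally constant equal to $\lambda_I$, and there the negative term $-\chi_{B_\varepsilon(\pm1,0)}V$ can be absorbed by rewriting $\lambda_I\chi_I - \chi_{B_\varepsilon(\pm1,0)}V = (\lambda_I - \chi_{B_\varepsilon(\pm1,0)}V)\chi_I$ near those balls, and $(\lambda_I - \chi_{B_\varepsilon(\pm1,0)}V)$ is a bounded function on the open set $I$ which is \emph{upper} semi-continuous and bounded below by $\lambda_I - 2 \geq 0$; more carefully, one checks that the sum $\lambda_I\chi_I + V f$ restricted to a neighborhood of each bad ball is lower semi-continuous by a direct limit computation at boundary points of $B_\varepsilon(\pm1,0)$. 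I expect \textbf{this absorption of the negative jump} to be the main obstacle: one has to track the behaviour of $\L$ at the boundary spheres $\partial B_\varepsilon(\pm1,0)$ and at $\partial A$, $\partial I$ simultaneously, checking that at every point $(\underline z_0,\varphi_0)$ the value $\L$ does not exceed $\liminf$ along approaching sequences. Away from these finitely many boundary hypersurfaces $\L$ is continuous, so the verification reduces to a finite case analysis of the jumps; in each case the jump is non-negative (upward) because the relevant set is open, except for the $-\chi_{B_\varepsilon(\pm1,0)}$ term whose downward jump is compensated pointwise by the constant $\lambda_I\geq 2 \geq \max(V\cdot|f|)$ from $\chi_I$, which is locally constant there. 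I would organize this as: (a) decompose $\F\times\F$ into the open ``generic'' region where $\L$ is continuous and the finitely many boundary sets; (b) on generic points, continuity implies~(ii); (c) at boundary points, compute $\liminf$ explicitly using openness of $A,I,B_\varepsilon(0,0)$ and the four balls, plus the containment $B_\varepsilon(\pm1,0)\subset I$, to conclude $\L(x,y)\leq \liminf$. This completes the proof of the lemma.
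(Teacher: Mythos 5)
Your proofs of non-negativity and of the symmetry condition~(i) are correct and coincide with the paper's own argument: the only negative contribution is $V(x^\varphi-y^\varphi)\,f(\underline{x}-\underline{y})$ on $B_\varepsilon(\pm 1,0)$, where it is bounded below by $-2$ and dominated by $\lambda_I\,\chi_I\geq 2$ because $B_\varepsilon(\pm 1,0)\subset I$ (your containment check via $\varepsilon<\tfrac{1}{4}$ is exactly the paper's), and symmetry follows summand by summand from the point symmetry of $A$, $I$, $B_\varepsilon(0,0)$ and the evenness of $f$ and of $V$.

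The lower semi-continuity argument, however, has a genuine gap, and it sits precisely at the step you flag as the main obstacle. You correctly observe that $-V\,\chi_{B_\varepsilon(\pm1,0)}$ is only upper semi-continuous, but the proposed absorption into $\lambda_I\chi_I$ cannot work: apart from the two points $(\pm(1+\varepsilon),0)$, the spheres $\partial B_\varepsilon(\pm1,0)$ lie \emph{inside} the open set $I$, where $\chi_I$ is locally constant, and a locally constant summand can never compensate a downward jump in a lower semi-continuity estimate. The pointwise domination $\lambda_I\geq 2\geq V\,|f|$ yields non-negativity of the sum but says nothing about the direction of its jumps; what would be needed is a summand that jumps \emph{up} by at least $V(x^\varphi-y^\varphi)$ when one crosses from the sphere into the open ball, and no summand of~\eqref{JDefL} does so. Carrying out the ``direct limit computation'' that you defer makes this explicit: take $\underline{x}-\underline{y}=(1,\varepsilon)\in\partial B_\varepsilon(1,0)$ and $x^\varphi-y^\varphi=\pi$, so that $\chi_A=0$ (first component equal to $1$), $\chi_I=1$, $f=0$ and hence $\L(x,y)=\lambda_I$; along $\underline{x}_n-\underline{y}=(1,\varepsilon-\tfrac1n)$, which stays inside $B_\varepsilon(1,0)$ and inside $I$ but outside $A$, one finds $\L(x_n,y)=\lambda_I-2$. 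The liminf from inside the ball is thus strictly smaller than the value whenever $V(x^\varphi-y^\varphi)>0$, so your step~(c) cannot be closed, and no regrouping of the given summands changes this.

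For comparison, the paper's own proof of~(ii) consists of the single remark that $V$ is continuous and that characteristic functions of open sets are lower semi-continuous; it does not perform the compensation you attempt and does not address the sign in front of $\chi_{B_\varepsilon(\pm1,0)}$. So the difficulty you identified is real and is not removed by merely invoking openness; establishing condition~(ii) at the boundary spheres of the negatively weighted balls requires changing the argument (or the choice of the sets entering $f$), not a finer case analysis of the jumps as proposed.
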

\Proof The only negative contributions to~$\L(x,y)$ arise in the
term~$V(x^\varphi -y^\varphi) \: f(\underline x - \underline y)$ if $\underline x - \underline y \in B_\varepsilon(1,0) \cup B_\varepsilon(-1,0)$. For $x$ and $y$ with this property, we have
\begin{align*}
V(x^\varphi -y^\varphi) \: f(\underline x - \underline y) \geq -2 \, \quad \textrm{and} \quad
\lambda_I \: \chi_{I}( \underline x - \underline y) \geq 2 \,
\end{align*}
because $V(S^1) \subset [0,2] \subset \R$,~$\lambda_I \geq 2$ and $ B_\varepsilon(1,0) \cup B_\varepsilon(-1,0) \subset I$. We conclude that~$\L(x,y) \geq 0$.

Condition~(i) is satisfied because the sets $A$ and $I$ are point-symmetric around $\underline x = 0$, $f$ is a sum of characteristic functions of sets which are mutually point-symmetric and $V(\varphi) =  V(-\varphi)$.
Condition~(ii) is satisfied because $V$ is continuous and
because characteristic functions of open sets are lower semi-continuous.
\QED

\subsection{A Local Minimizer}
We next introduce a universal measure~$\rho$ supported on the unit
lattice~$\Gamma := \Z^2 \subset \R^{1,1}$
and show that for any~$\delta > 0$, it is a local minimizer of the causal action in the sense of Definition~\ref{Jdeflocmin}.
\begin{Lemma} The measure~$\rho$ given by
\beq
\label{JDefRho}
\rho =  \sum_{\underline x \in \Gamma} \: \delta_{(\underline x,0)}
\eeq
satisfies the conditions~\textrm{(iii)} and~\textrm{(iv)} (on page~\pageref{JCond3}) as well as~\textrm{(v)} and~\textrm{(vi)} (on page~\pageref{JCond5}).
\end{Lemma}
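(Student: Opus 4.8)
The plan is to verify the four regularity conditions for the explicit measure $\rho = \sum_{\underline{x} \in \Gamma} \delta_{(\underline{x},0)}$ one after another, all of which are essentially bookkeeping once the structure of the Lagrangian~\eqref{JDefL} is understood on lattice points. First I would observe that for $x = (\underline{x},0)$ and $y = (\underline{y},0)$ with $\underline{x}, \underline{y} \in \Gamma$, the angular variables vanish, so $V(x^\varphi - y^\varphi) = V(0) = 0$ and the last two terms of~\eqref{JDefL} drop out; thus $\L(x,y)$ reduces to $\lambda_A\,\chi_A(\underline{x}-\underline{y}) + \lambda_I\,\chi_I(\underline{x}-\underline{y})$, which is supported only on those finitely many lattice differences lying in $A \cup I$ (for $\varepsilon < \tfrac14$ these are just a bounded set of lattice vectors). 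This immediately gives local finiteness~(iii), since any open neighbourhood small enough contains at most one lattice point and hence has finite $\rho$-measure. For~(iv), I would note that $\L(x,\cdot)$ is a sum of characteristic functions and the function $V(x^\varphi - \cdot^\varphi) f(\underline{x} - \cdot)$; integrating against $\rho$ only picks up the lattice points $\underline{y}$ with $\underline{x} - \underline{y} \in A \cup I$, a finite set, so $\ell + \nu/2$ is a bounded function on $\F$ (one should check boundedness uniformly in $x^\varphi$, which is clear since $|V| \le 2$ and $f$ has bounded support), and lower semi-continuity of $\ell$ follows from condition~(ii) together with Fatou-type arguments as in the general theory.

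Next I would check the semi-derivative conditions~(v) and~(vi). For~(v), the Lagrangian~\eqref{JDefL} is a finite sum of terms each of which is either (a) the characteristic function of an open set (namely $A$, $I$, the balls $B_\varepsilon(\cdot)$), which has directional semi-derivatives equal to $0$ or $+\infty$ depending on whether the curve stays inside, exits, or enters the set, independent of the chosen curve $\gamma$ since the boundary is a smooth hypersurface; or (b) the smooth function $V(x^\varphi - y^\varphi)$ times such a characteristic function, which by the product rule again has well-defined semi-derivatives in $\R \cup \{\infty\}$. The curve-independence is the point that needs care: it holds because each relevant set has a $C^1$ boundary and the semi-derivative of $\chi_U$ along $\gamma$ depends only on the sign of the normal component of $\gamma'(0)$, i.e.\ only on $v$. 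For~(vi) I would argue that the sum over $y \in \supp \rho$ defining $\ell$ is, near any fixed $x$, a finite sum (only finitely many lattice points $\underline{y}$ contribute), so differentiation under the (counting) integral is trivial and $D^+_v \ell(x) = \sum_{y} D^+_{1,v}\L(x,y) = \int_M D^+_{1,v}\L(x,y)\,d\rho(y)$.

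The main obstacle I anticipate is the curve-independence in~(v) at the ``corners'': the square $A = (-1,1)^2$ has a non-smooth boundary, and a point $\underline{x} - \underline{y}$ could in principle sit exactly on an edge or vertex of $A$ for some lattice difference. One has to check that this does not actually occur for lattice points, or handle it by noting that for lattice differences the point $\underline{x} - \underline{y}$ is an integer vector and the relevant sets $A$, $I$, $B_\varepsilon(\cdot)$ are chosen (via $\varepsilon < \tfrac14$) so that no integer vector lies on any boundary — indeed $\partial A$ consists of points with a coordinate equal to $\pm 1$, and an integer vector on $\partial A$ would be $(\pm 1, k)$ or $(k, \pm 1)$, which does lie in $\overline{A} \setminus A$; so one must be slightly more careful and verify that the contributions from such boundary lattice points are handled consistently, or that they lie in $I$ (whose closure absorbs them) rather than creating ambiguity. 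Once one confirms that for every pair of lattice points the value $\underline{x}-\underline{y}$ avoids the ``bad'' parts of the boundaries (or that the semi-derivative is still curve-independent there because the relevant sets are nested appropriately), conditions~(v) and~(vi) follow, and the lemma is complete. The remaining verifications~(iii) and~(iv) are then routine given the finite-support structure.
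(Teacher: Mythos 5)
Your verification follows essentially the same route as the paper's own proof: condition (iii) via a small neighbourhood meeting only finitely many lattice points, (iv) from the boundedness and compact support of $\L(x,\cdot)$ together with $\ell$ being locally a finite sum of lower semi-continuous functions, (v) from the existence of generalized semi-derivatives of characteristic functions, and (vi) because the $\rho$-integral is a finite sum, so differentiation and summation commute. The boundary/corner subtlety you flag for (v) is not addressed in the paper either — its proof simply asserts that the generalized derivative of characteristic functions exists — so your treatment is, if anything, more careful than the published one.
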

\Proof Condition~(iii) is satisfied because for every $x \in \F$, a neighborhood $U$ with $\rho(U) < \infty$
is given for example by~$U = B_\varepsilon(\underline x)  \times  (x^\varphi -\varepsilon,x^\varphi +\varepsilon) \subset \R^{1,1} \times S^1$.
Condition~(iv) is satisfied because the function~$\L(x, .)$ is bounded and has compact support (which implies $\rho$-integrability and boundedness of $\ell$), and because $\ell$ is a finite sum of lower semi-continuous functions.

Condition~(v) is satisfied because the generalized derivative of characteristic functions exists.
Condition~(vi) holds because $\ell(x)$ is a finite sum of terms of the form $\L(x,y)$. Hence differentiation and
integration may be interchanged.
\QED

Clearly, the support of the above measure is given by
\beq \label{JMlattice}
M:= \supp \rho = \Gamma \times \{0\} \subset \F \:.
\eeq

\begin{Lemma}\label{JExStrongEL} The measure~\eqref{JDefRho}
satisfies the EL equations~\eqref{JELstrong} if the parameter~$\nu$ in~\eqref{Jelldef} is chosen as
\beq \label{Jnuval}
\nu = 2 \lambda_A + 4 \lambda_I \:.
\eeq
If~$\delta>0$, the implication~\eqref{Jimply} holds.
\end{Lemma}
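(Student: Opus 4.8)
The plan is to compute the function~$\ell$ of~\eqref{Jelldef} explicitly on all of~$\F$. Since~$\rho$ in~\eqref{JDefRho} is the counting measure on~$M = \Gamma \times \{0\}$ (see~\eqref{JMlattice}), for~$x = (\underline{x}, x^\varphi) \in \F$ we have~$\ell(x) = \sum_{\underline{y} \in \Gamma} \L\big((\underline{x}, x^\varphi), (\underline{y}, 0)\big) - \nu/2$, and it remains to sum the four contributions to~$\L$ in~\eqref{JDefL}. The clean step to do first is the cancellation~$\sum_{\underline{y} \in \Gamma} f(\underline{x} - \underline{y}) = 0$, valid for every~$\underline{x}$: because~$\varepsilon < \tfrac14$, the coset~$\underline{x} + \Gamma$ meets each of the four balls~$B_\varepsilon(0,\pm1)$, $B_\varepsilon(\pm1,0)$ in at most one point, and it meets all of them (exactly when the distance of~$\underline{x}$ to~$\Gamma$ is less than~$\varepsilon$) or none; since~$f$ is the alternating sum~$+,+,-,-$ of their characteristic functions, the sum is zero either way. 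Hence the~$V(x^\varphi)\,f$-term drops out, and with~$n_A(\underline{x})$, $n_I(\underline{x})$ denoting the numbers of points of~$\underline{x} + \Gamma$ in~$A$ and in~$I$, the same ball-counting argument applied to~$B_\varepsilon(0,0)$ gives
\[
\ell(x) = \lambda_A\, n_A(\underline{x}) + \lambda_I\, n_I(\underline{x}) + \delta\, V(x^\varphi)^2\, \chi_{\{\mathrm{dist}(\underline{x},\Gamma)<\varepsilon\}} - \frac{\nu}{2} \:.
\]

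Next I would carry out the elementary lattice counts. For~$A = (-1,1)^2$ the count factorizes over the two coordinates, so~$n_A(\underline{x}) = 1, 2, 4$ according to whether both, exactly one, or neither of~$x^0, x^1$ lies in~$\Z$; in particular~$n_A = 1$ precisely on~$\Gamma$. For~$I = \{|z^1| < |z^0| < 1+\varepsilon\}$ one uses~$\varepsilon < \tfrac14$ to bound~$|z^0| < \tfrac54$ for a coset point in~$I$, and a short case analysis yields~$n_I = 2$ when~$\underline{x} \in \Gamma$ (the points~$z = (\pm1, 0)$) and~$n_I \geq 2$ whenever exactly one of~$x^0, x^1$ is an integer. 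Plugging~$\underline{x} \in \Gamma$, $x^\varphi = 0$ into the displayed formula and using the choice~\eqref{Jnuval} of~$\nu$ gives~$\ell|_M \equiv 0$. For the lower bound I would prove~$\lambda_A\, n_A(\underline{x}) + \lambda_I\, n_I(\underline{x}) \geq \lambda_A + 2\lambda_I$ with equality only for~$\underline{x} \in \Gamma$: equality holds on~$\Gamma$; if exactly one coordinate is an integer then~$n_A = 2$ and~$n_I \geq 2$ give~$\geq 2\lambda_A + 2\lambda_I$; if neither is, then~$n_A = 4$ and~$\lambda_A \geq 2\lambda_I$ give~$4\lambda_A \geq \lambda_A + 2\lambda_I + 4\lambda_I$ --- both strict. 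Since the~$\delta$-term is non-negative, this shows~$\ell \geq 0$ on~$\F$, so~$\inf_\F \ell = 0$ and it is attained on~$M$; this is exactly~\eqref{JELstrong}.

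For the implication~\eqref{Jimply} when~$\delta > 0$, suppose~$\ell(x) = 0$. The two non-negative summands~$\big(\lambda_A n_A(\underline{x}) + \lambda_I n_I(\underline{x}) - \nu/2\big)$ and~$\delta\, V(x^\varphi)^2\, \chi_{\{\mathrm{dist}(\underline{x},\Gamma)<\varepsilon\}}$ then both vanish. Vanishing of the first, by the ``equality only on~$\Gamma$'' statement, forces~$\underline{x} \in \Gamma$; then~$\mathrm{dist}(\underline{x},\Gamma) = 0 < \varepsilon$, so vanishing of the second (here~$\delta > 0$ is used) forces~$V(x^\varphi)^2 = 0$, i.e.~$\cos x^\varphi = 1$, i.e.~$x^\varphi = 0$. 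Thus~$x = (\underline{x}, 0) \in M$, which is~\eqref{Jimply}.

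I expect the only real work to be the lattice-point counts for~$n_I$, especially the bound~$n_I \geq 2$ in the residue classes where one coordinate of~$\underline{x}$ is an integer, where one must keep track simultaneously of the Lorentzian condition~$|z^1| < |z^0|$ and of the truncation~$|z^0| < 1+\varepsilon$ and check that no coset point is missed or counted twice; the identity~$\sum_{\underline{y}}f(\underline{x}-\underline{y}) = 0$ is what reduces the rest to bookkeeping.
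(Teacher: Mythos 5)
Your proof is correct and takes essentially the same route as the paper's: evaluate $\ell$ explicitly via $\ell(x)=\lambda_A n_A(\underline{x})+\lambda_I n_I(\underline{x})+\delta\,V(x^\varphi)^2\chi_{\{\mathrm{dist}(\underline{x},\Gamma)<\varepsilon\}}-\tfrac{\nu}{2}$ (using $\sum_{\underline{y}}f(\underline{x}-\underline{y})=0$), check $\ell|_M=0$ with the choice~\eqref{Jnuval}, show strict positivity off~$\Gamma$, and use $\delta>0$ for the $\varphi$-direction to get~\eqref{Jimply}. You merely spell out the lattice-point counts (in particular the sublattice case with $n_A=4$, $n_I$ possibly $0$, where $\lambda_A\geq 2\lambda_I$ is needed) that the paper's proof states only tersely.
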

\Proof If $x \in M$, a direct computation using~\eqref{JDefL} and~\eqref{Jelldef} shows that
\[ \ell(x) =  \lambda_A + 2 \lambda_I - \frac{\nu}{2} = 0\:. \]
Conversely, if~$x \notin M$, then either~$\underline{x} \notin \Gamma$ or~$x^\varphi \neq 0$.
In the first case, the characteristic function~$\chi_A(\underline{x} - . )$ equals one on at least two lattice points,
implying that~$\ell(x) \geq  2 \lambda_A + 2 \lambda_I - \frac{\nu}{2}=\lambda_A>0$.
In the remaining case~$\underline{x} \in \Gamma$ and~$x^\varphi \neq 0$,
the term~$\delta \: V(x^\varphi -0 )^2$ is non-negative, and it is strictly positive
if~$\delta>0$. This concludes the proof.
\QED
This lemma shows that for $\delta >0$, condition~(a) in Proposition~\ref{JLocSufficient} is satisfied. The following Lemma shows that condition~(c) holds as well:
\begin{Lemma}\label{JExSpos} Choosing~$\lambda_A \geq 2 \lambda_I + \varepsilon$,
the inequality
\[ \la \psi, \L_\rho \psi \ra_\rho \geq \varepsilon\: \|\psi\|_\rho^2 \qquad
\text{holds for all~$\psi \in {\mathscr{D}}(\L_\rho)$} \:. \]
\end{Lemma}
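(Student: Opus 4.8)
The plan is to compute the quadratic form $\langle \psi, \L_\rho \psi\rangle_\rho$ explicitly using the discreteness of $\rho$ and the specific structure of the Lagrangian~\eqref{JDefL}, and to show that the non-negative terms in $\L$ dominate any possible negative contribution coming from the off-diagonal $V(x^\varphi - y^\varphi)\,f(\underline x - \underline y)$ piece. Since $M = \Gamma \times \{0\}$, a function $\psi \in {\mathscr{D}}(\L_\rho)$ is just a summable family $(\psi_{\underline x})_{\underline x \in \Gamma}$ of real numbers, and
\[
\langle \psi, \L_\rho \psi\rangle_\rho = \sum_{\underline x, \underline y \in \Gamma} \psi_{\underline x}\, \L\big((\underline x,0),(\underline y,0)\big)\, \psi_{\underline y}\:.
\]
First I would read off which lattice difference vectors $\underline x - \underline y$ actually contribute. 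On $M$ the $S^1$-coordinates are all zero, so $V(x^\varphi - y^\varphi) = V(0) = 0$ and the $\delta$-term vanishes; hence only the two characteristic-function terms survive, $\L((\underline x,0),(\underline y,0)) = \lambda_A \chi_A(\underline x - \underline y) + \lambda_I \chi_I(\underline x - \underline y)$. The set $A = (-1,1)^2$ meets $\Gamma$ only at the origin, so $\chi_A$ contributes $\lambda_A$ exactly on the diagonal $\underline x = \underline y$. The set $I$ meets $\Gamma$ at the origin and, since $|x^0| < 1+\varepsilon$ forces $x^0 \in \{0\}$ wait — one must check: $I$ requires $\langle \underline x,\underline x\rangle > 0$, i.e. $|x^0| > |x^1|$, together with $|x^0| < 1+\varepsilon$; on $\Gamma$ this gives $x^0 \in \{-1,0,1\}$ and correspondingly the only lattice points in $I$ are $(\pm 1, 0)$ (and the origin is excluded since $\langle 0,0\rangle = 0$ is not $>0$). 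So $\chi_I$ on $\Gamma$ is supported on $\underline x - \underline y \in \{(1,0),(-1,0)\}$.

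Thus on $M \times M$ the kernel reduces to
\[
\L\big((\underline x,0),(\underline y,0)\big) = \lambda_A\, \delta_{\underline x, \underline y} + \lambda_I\,\big( \delta_{\underline x - \underline y,\,(1,0)} + \delta_{\underline x - \underline y,\,(-1,0)} \big)\:,
\]
and therefore
\[
\langle \psi, \L_\rho \psi\rangle_\rho = \lambda_A \sum_{\underline x} \psi_{\underline x}^2 + 2\lambda_I \sum_{\underline x} \psi_{\underline x}\, \psi_{\underline x + (1,0)}\:.
\]
Now I would estimate the cross term by the elementary inequality $|2\psi_{\underline x}\psi_{\underline x + (1,0)}| \le \psi_{\underline x}^2 + \psi_{\underline x+(1,0)}^2$; summing over $\underline x \in \Gamma$ and using translation invariance of the sum gives $\big|2\sum_{\underline x}\psi_{\underline x}\psi_{\underline x+(1,0)}\big| \le 2 \|\psi\|_\rho^2$. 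Hence
\[
\langle \psi, \L_\rho \psi\rangle_\rho \ge (\lambda_A - 2\lambda_I)\, \|\psi\|_\rho^2 \ge \varepsilon\, \|\psi\|_\rho^2\:,
\]
where the last step uses the hypothesis $\lambda_A \ge 2\lambda_I + \varepsilon$. This is exactly the claimed bound. Combined with Lemma~\ref{JExStrongEL} (which gives conditions~(a) and, since $\L$ is manifestly a bounded function with compact support, (b) of Proposition~\ref{JLocSufficient}), this will establish that $\rho$ is a local minimizer for every $\delta > 0$.

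I do not expect a serious obstacle here; the only point requiring care is the bookkeeping of which lattice points lie in $A$ and in $I$ — in particular correctly treating the boundary of $I$ (the condition is a strict inequality $\langle \underline x,\underline x\rangle > 0$, which excludes the diagonal lattice directions and the origin) and confirming that no lattice point other than $(\pm 1,0)$ has timelike separation from the origin with $|x^0| < 1+\varepsilon$. One should also note explicitly that the restriction of $\psi$ and of $\L$ to $M\times M$ is all that enters $\L_\rho$, so the $S^1$-factor and the $\delta$-term play no role in this computation. If one wanted to be thorough one would remark that the translation-invariance step (re-indexing $\sum_{\underline x}\psi_{\underline x+(1,0)}^2 = \sum_{\underline x}\psi_{\underline x}^2$) is legitimate because $\psi \in L^1 \cap L^\infty \subset \ell^2(\Gamma)$, so all sums converge absolutely.
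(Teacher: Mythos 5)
Your proof is correct and follows essentially the same route as the paper: compute the quadratic form on the lattice, observe that only the $\lambda_A$-diagonal and the $\lambda_I$-terms at $\underline x-\underline y=(\pm 1,0)$ survive (the paper states this directly as $(\L_\rho\psi)(\underline x)=\lambda_A\psi(\underline x)+\lambda_I(\psi(\underline x+e_t)+\psi(\underline x-e_t))$, while you verify the lattice bookkeeping explicitly), and then bound the cross term via Young's inequality to get $\la\psi,\L_\rho\psi\ra_\rho\geq(\lambda_A-2\lambda_I)\|\psi\|_\rho^2$. The only cosmetic difference is that the paper allows complex-valued $\psi$ (hence the conjugation $\overline{\psi(\underline x)}$), but your estimate goes through verbatim with absolute values.
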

\begin{proof}
For $\psi \in {\mathscr{D}}(\L_\rho)$ as defined by~\eqref{JDomainLrho} and $e_t := (1,0) \in \R^{1,1}$, we have
\[ \big( \L_\rho \psi \big)(\underline x) = \lambda_A \psi(\underline x) + \lambda_I \big( \psi \big(\underline x + e_t \big) + \psi \big (\underline x - e_t \big) \big) \]
and hence
\beq
\la \psi, \L_{\rho} \: \psi \ra_\rho = \lambda_A \| \psi \|_\rho^2  + \lambda_I \sum_{\underline x \in \Gamma}
 \overline{\psi \big(\underline x \big)}
\Big( \psi \big(\underline x + e_t \big) + \psi \big (\underline x - e_t \big) \Big) \:. \label{JLrhoEstimate}
\eeq
Applying Young's inequality
\[ \Big| \overline{\psi \big(\underline x \big)} \:\psi \big(\underline x + e_t \big) \Big| \leq \frac{1}{2} \Big(
\big| \psi \big(\underline x + e_t \big) \big|^2 + \big| \psi \big(\underline x \big) \big|^2 \Big) \:, \]
the second term of~\eqref{JLrhoEstimate} can be estimated by
\[ \lambda_I \:\sum_{\underline x \in \Gamma}
 \overline{\psi \big(\underline x \big)}
\Big( \psi \big(\underline x + e_t \big) + \psi \big (\underline x - e_t \big) \Big) \geq -2 \lambda_I
\| \psi \|_\rho ^2 \:. \]
Hence~$\la \psi, \L_{\rho} \: \psi \ra_\rho \geq (\lambda_A-2 \lambda_I) \| \psi \|_\rho ^2$,
giving the result.
\end{proof}
\begin{Corollary}%
The measure $\rho$ is a local minimizer of the causal action.
\end{Corollary}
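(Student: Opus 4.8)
The plan is to verify the three hypotheses of Proposition~\ref{JLocSufficient} for the measure~$\rho$ given by~\eqref{JDefRho}, since that proposition then immediately yields that~$\rho$ is a local minimizer. Two of these hypotheses have essentially already been checked in the lemmas preceding the corollary. First, condition~(a) of Proposition~\ref{JLocSufficient}---that the strong EL equations~\eqref{JELstrong} hold together with the implication~\eqref{Jimply}---is exactly the content of Lemma~\ref{JExStrongEL} (for~$\delta>0$, which is the standing assumption of the corollary), provided one fixes~$\nu = 2\lambda_A + 4\lambda_I$ as in~\eqref{Jnuval}. Second, condition~(c)---the strict positivity~\eqref{Jspos} of the operator~$\L_\rho$ with~$\varepsilon$ the constant appearing in the definition of the model---is precisely Lemma~\ref{JExSpos}, which gives $\langle \psi, \L_\rho \psi\rangle_\rho \geq (\lambda_A - 2\lambda_I)\|\psi\|_\rho^2 \geq \varepsilon \|\psi\|_\rho^2$ by the hypothesis $\lambda_A \geq 2\lambda_I + \varepsilon$.

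It remains to record condition~(b), namely that the Lagrangian~$\L : \F\times\F \to \R^+_0$ defined in~\eqref{JDefL} is a bounded function. This is immediate: $\L$ is a finite sum of terms, each of which is manifestly bounded---the terms $\lambda_A\,\chi_A$ and $\lambda_I\,\chi_I$ are bounded by $\lambda_A$ and $\lambda_I$ respectively, the potential $V$ takes values in $[0,2]$ so that $V(x^\varphi - y^\varphi)\,f(\underline x - \underline y)$ is bounded by $2$ and $\delta\,\chi_{B_\varepsilon(0,0)}\,V^2$ is bounded by $4\delta$. Hence $\L$ is bounded by $\lambda_A + \lambda_I + 2 + 4\delta$. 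One should also note that the earlier lemmas already established that~$\rho$ and~$\L$ satisfy the structural conditions~(iii)--(vi) of the lower semi-continuous setting, so that Proposition~\ref{JLocSufficient} is indeed applicable in this situation.

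Assembling these three facts, Proposition~\ref{JLocSufficient} applies and gives that~$\rho$ is a local minimizer of the causal action, which is the assertion of the corollary.

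I do not expect any genuine obstacle here: the corollary is a bookkeeping consequence that collects the three preceding lemmas and feeds them into Proposition~\ref{JLocSufficient}. The only point requiring a word of care is the matching of constants---one must use the \emph{same}~$\varepsilon$ throughout, i.e. the parameter $\varepsilon \in (0,\tfrac14)$ fixed in the construction of the sets $I$ and of $f$ is the constant witnessing strict positivity in~\eqref{Jspos}, which is why the hypothesis $\lambda_A \geq 2\lambda_I + \varepsilon$ was imposed on the model from the outset.
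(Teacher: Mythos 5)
Your proof is correct and follows exactly the paper's own route: it feeds Lemma~\ref{JExStrongEL} (condition~(a)), the boundedness of~$\L$ from~\eqref{JDefL} (condition~(b)), and Lemma~\ref{JExSpos} (condition~(c)) into Proposition~\ref{JLocSufficient}. The only difference is that you spell out an explicit bound on~$\L$ and remark on the matching of the constant~$\varepsilon$, which the paper leaves implicit.
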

\begin{proof} Lemma~\ref{JExStrongEL} shows that condition~(a) in Proposition~\ref{JLocSufficient} holds.
Condition~(b) follows because~$\L(x,y)$ as defined by~\eqref{JDefL} is bounded on $\F \times \F$.
Lemma~\ref{JExSpos} yields condition~(c).
\end{proof}

\subsection{The Jet Spaces}
We next determine the jet spaces. Clearly, in our setting of a discrete lattice~\eqref{JMlattice}, every function on~$M$
can be extended smoothly to a neighborhood of~$M$. Thus the jet space~\eqref{JJM} can be written as
\[ \J|_M = \big\{ \u = (a,u) \text{ with } a : M \rightarrow \R \text{ and } u : M \rightarrow T\F \big\} \:. \]
When extending these jets to~$\F$, for convenience we always
choose an extension~$\u : \F \rightarrow T\F$ which is locally constant on~$M$.
We denote the vector component by $u=(u^0, u^1, u^\varphi)$.
In order to determine the differentiable jets, we recall from the the proof of Lemma~\ref{JExStrongEL} that
\[ \left\{ \begin{array}{ll}
\ell(\underline x, x^\varphi) =  \delta \, V(x^\varphi)^2 & \textrm{if }\underline{x} \in \Gamma \\[0.3em]
\ell(\underline x, x^\varphi) \geq \lambda_A + \delta \, V(x^\varphi)^2 &  \textrm{if } \underline{x} \notin \Gamma \:.
\end{array} \right. \]
Hence the differentiable jets~\eqref{JJDiffLip} are given by
\[ \Jdiff = \big\{ \u = (a,u) \text{ with } a : M \rightarrow \R \text{ and } u=(0,0,u^\varphi) : M \rightarrow T\F \big\} \:. \]
We choose $\Jtest = \Jdiff$. 

\begin{Prp}\label{JExLin} The linearized solutions $\Jlin$ of Definition~\ref{Jdeflin} consist of all
jets $\v = (b,v) \in \J$ with the following properties:
\begin{itemize}
\item[{\textrm{(A)}}] The scalar component~$b : M \rightarrow \R$ satisfies the equation
\beq \label{JExScalComp}
\lambda_A \: b(\underline x,0) + \lambda_I\, \big( b(\underline x + e_t,0) + b(\underline x - e_t,0)
\big) = 0 \:.
\eeq
\item[{\textrm{(B)}}] The vector component~$v : M \rightarrow T\F$ consists of a constant
vector~$\underline{v} \in \R^{1,1}$ and a function~$v^\varphi : M \rightarrow \R$, i.e.
\[ v(x) = \big(\underline{v}, v^\varphi(x) \big) \:, \]
where the function~$v^\varphi$ satisfies the discrete wave equation on $\Gamma$,
\beq \label{JExDiscrWaveEq}
\sum_{\underline y \in \Gamma}  f(\underline x - \underline y) \, v^\varphi\!(\underline y, 0) =  0 \, .
\eeq
\end{itemize}
\end{Prp}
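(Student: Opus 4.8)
The plan is to unwind Definition~\ref{Jdeflin} in this concrete situation, treating condition~(l1) and condition~(l2) in turn. Throughout I write $\underline v = (v_{(0)}, v_{(1)})$ for the $\R^{1,1}$-component of $v$, set $e_t := (1,0)$ and $e_s := (0,1)$, and use the locally constant extensions of jets chosen above, so that $\nabla_\u g(x) = a(x)\, g(x) + u^\varphi(x)\, \partial_\varphi g(x)$ for $\u = (a,u) \in \Jtest = \Jdiff$ and $g$ smooth near $M$. \emph{Step 1: (l1) holds iff $\underline v$ is constant.} Since $\L$ depends on the $\R^{1,1}$-variables only through $\underline x - \underline y$ while the flow $F_\tau$ of $v$ shifts the $\R^{1,1}$-coordinates of $x$ and $y$ by $\tau\,\underline v(x)$ and $\tau\,\underline v(y)$, the $\R^{1,1}$-difference is frozen along the flow precisely when $\underline v(x) = \underline v(y)$. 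As $V(x^\varphi - y^\varphi)$ and its square are smooth, the only obstruction to $(\nabla_{1,\v} + \nabla_{2,\v})\L(x,y)$ being a real number comes from the discontinuities of $\chi_A$, $\chi_I$, $\chi_{B_\varepsilon(0,0)}$ and $f$. If $\underline v$ is constant these factors are constant along the flow, the combined derivative reduces to the smooth $V$-contribution, and (l1) holds. Conversely, restricting to $x,y \in M$ so $\underline x - \underline y \in \Z^2$, one checks that all four factors are locally constant at every point of $\Z^2$ except at the eight ``unit-ring'' points $\{\pm e_t, \pm e_s, \pm e_t \pm e_s\}$, where $\chi_A$ jumps across the faces and corners of $\partial A$ and, at the four corners $\pm e_t \pm e_s$, $\chi_I$ additionally jumps across the light cone. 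At such a difference, a velocity $w = \underline v(x) - \underline v(y)$ transverse to the relevant hypersurface produces a one-sided $\tau$-derivative equal to $\pm\infty$; collecting the resulting constraints over all pairs at unit-ring distance (the light-cone jump at a corner forces $w_{(0)} = w_{(1)}$, the corner jump of $\chi_A$ then forces $w = 0$, the face jumps at $\pm e_t$ and $\pm e_s$ give $e_t$-invariance of $v_{(0)}$ and $e_s$-invariance of $v_{(1)}$, and together these yield $\underline v$ constant on $\Gamma$) establishes the first assertion of~(B).

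\emph{Step 2: reduction of (l2).} Assuming now $\underline v$ constant, I compute $g(x) := \int_M (\nabla_{1,\v} + \nabla_{2,\v})\L(x,y)\, d\rho(y) - \nabla_\v \frac{\nu}{2}$ near $M$. Using $\rho = \sum_{\underline x \in \Gamma} \delta_{(\underline x,0)}$, the fact that $\underline x - \underline y$ lies in $\{0, \pm e_t, \pm e_s\}$ at the only lattice differences where the summands of $\L$ do not vanish, the identity $\sum_{\underline z} f(\underline z) = 0$, the values $V(0) = V'(0) = 0$, $V''(0) = 1$, and $\ell(\underline x, x^\varphi) = \delta\, V(x^\varphi)^2$, $\nu/2 = \lambda_A + 2\lambda_I$ from the proof of Lemma~\ref{JExStrongEL}, a direct calculation gives, for $\underline x \in \Gamma$ and $x^\varphi$ near $0$,
\begin{align*}
g(\underline x, x^\varphi) &= 2\delta\, b(\underline x)\, V(x^\varphi)^2 + \lambda_A\, b(\underline x) + \lambda_I\big(b(\underline x - e_t) + b(\underline x + e_t)\big) \\
&\quad + V(x^\varphi)\big(b(\underline x - e_s) + b(\underline x + e_s) - b(\underline x - e_t) - b(\underline x + e_t)\big) - V'(x^\varphi)\, W(\underline x)\,,
\end{align*}
where $W(\underline x) := \sum_{\underline y \in \Gamma} f(\underline x - \underline y)\, v^\varphi(\underline y, 0)$. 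In particular $g$ is smooth in $x^\varphi$, so the differentiability demanded in~(l2) is automatic, and~(l2) becomes $\nabla_\u g|_M = 0$ for all $\u \in \Jdiff$.

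\emph{Step 3: conclusion.} Evaluating at $x^\varphi = 0$ gives $g(\underline x, 0) = \lambda_A\, b(\underline x) + \lambda_I\big(b(\underline x - e_t) + b(\underline x + e_t)\big)$ and $\partial_\varphi g(\underline x, x^\varphi)\big|_{x^\varphi = 0} = -W(\underline x)$. Hence $\nabla_\u g|_M = a\, g(\cdot,0) + u^\varphi\, \partial_\varphi g(\cdot,0)$ vanishes for all $\u = (a,u) \in \Jdiff$ if and only if $g(\underline x, 0) = 0$ and $W(\underline x) = 0$ for every $\underline x \in \Gamma$, i.e. if and only if~(A) and the discrete wave equation in~(B) hold. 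Together with Step~1, and noting that the computation of Step~2 also proves the converse implication (if $\underline v$ is constant and (A), (B) hold, then (l1) and (l2) are satisfied), this proves the proposition.

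I expect the ``only if'' half of Step~1 to be the main obstacle: one must pin down exactly which lattice points lie on $\partial A$ and on the light cone $\partial I$, verify that the offending one-sided derivatives of $\chi_A$ and $\chi_I$ are genuinely infinite rather than merely discontinuous (so that they violate the requirement ``$\in \R$'' in~(l1)), and then patch together the per-pair constraints — crucially using the mixed corner points $\pm e_t \pm e_s$, where $\chi_A$ and $\chi_I$ both jump — into the global statement that $\underline v$ is constant on $\Gamma$. The rest is a bookkeeping computation with the explicit Lagrangian~\eqref{JDefL} and measure~\eqref{JDefRho}.
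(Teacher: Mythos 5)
Your proof is correct and follows essentially the same route as the paper's: condition (l1) is reduced to the jumps of $\chi_A$ and $\chi_I$ at the lattice differences $\pm e_t$, $\pm e_s$, $\pm e_t \pm e_s$ (the paper treats the diagonal differences first, using the light-cone jump of $\chi_I$ and then the corner jump of $\chi_A$, and afterwards the face differences, exactly as in your parenthetical), and conditions (A) and (B) are then read off by testing the linearized field equation with jets $(a,0)$ and $(0,u)$. Your explicit formula for $g(\underline x, x^\varphi)$ matches what the paper obtains by computing $\nabla_{\u(x)}\nabla_{1,\v}\L$ and $\nabla_{1,\u}\nabla_{2,\v}\L$ and integrating against $\rho$, so the two computations are the same up to reorganization.
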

\Proof For ease in notation, we identify~$M = \Gamma \times \{0\}$ with the lattice~$\Gamma$.
Our first aim is to show that that a jet~$\v \in \J$ satisfies condition~(l1) in Definition~\ref{Jdeflin}
if and only if it is of the form
\beq \label{Jvform}
\v(\underline{x}) = \big(b(\underline{x}), (\underline{v}, v^\varphi(\underline{x})) \big)
\eeq
with a constant vector~$\underline{v} \in \R^{1,1}$ and mappings~$b, v^\varphi : \Gamma \rightarrow \R$.

Since~(l1) does not pose a condition on the scalar component, it suffices to
consider the vector component~$v=(\underline{v}, v^\varphi)$.
Moreover, using that the Lagrangian is smooth in the variables~$x^\varphi$ and~$y^\varphi$,
it suffices to consider the component~$\underline{v}$.
If~$\underline{v}$ is a constant vector in Minkowski space, its flow does not change the
difference vector~$\underline{x}-\underline{y}$ in the Lagrangian~\eqref{JDefL}.
Therefore, the combination of directional derivatives in~\eqref{Jderex1} exists and vanishes,
implying that the jets of the form~\eqref{Jvform} satisfy the condition~(l1).

The following argument shows that for every jet satisfying~(l1) the component~$\underline{v}$
is indeed constant: Assume conversely that~$\underline{v} : \Gamma \rightarrow \R^{1,1}$ is not
constant. Then there are neighboring points~$\underline{x}, \underline{y} \in \Gamma$
with~$\underline{v}(\underline{x}) \neq \underline{v}(\underline{y})$.
In order for the combination of directional derivatives in~\eqref{Jderex1} to exist,
the function~$\L(F_\tau(x), F_\tau(y))$ must be differentiable in~$\tau$.
This implies that the characteristic functions in~\eqref{JDefL} must be continuous at~$\tau=0$.
We first evaluate this condition if~$\underline{x}$ and~$\underline{y}$ are diagonal neighbors
(i.e.\ $\underline{y} = \underline{x} + (\pm1, \pm1)$). In this case, for the characteristic
function~$\chi_I(F_\tau(x)-F_\tau(y))$ to be continuous, the
vector~$\underline{v}(\underline{x})-\underline{v}(\underline{y})$ must be
collinear to~$\underline{x}-\underline{y}$. But then the continuity of the
characteristic function~$\chi_A((F_\tau(x)-F_\tau(y))$ implies
that~$\underline{v}(\underline{x})=\underline{v}(\underline{y})$. This a contradiction.
We conclude that~$\underline{v}$ is constant on the even and odd sublattices of~$\Gamma$.
In the remaining case that~$\underline{v}$ describes a constant translation of the even sublattice relative to the
odd sublattice, we can choose neighboring lattice points~$\underline{x}, \underline{y} \in \Gamma$
on the odd and even sublattice, respectively, such that the vector~$\underline{v}(\underline{x})
-\underline{v}(\underline{y})$ is non-zero and is not tangential to the discontinuity of the characteristic
function~$\chi_A$. This implies that the function~$\chi_A(F_\tau(x)-F_\tau(y))$ is not continuous at~$\tau=0$,
which is again a contradiction.

We conclude that condition (l1) in Definition~\ref{Jdeflin} is satisfied precisely by all jets of the form~\eqref{Jvform}.
For such jets, the first part of condition (l2) is satisfied because $\L(x,.)$ has bounded support and is smooth in $x^\varphi$. It remains to evaluate the linearized field equations~\eqref{Jeqlinlip}. 
For the constant component $\underline{v} \in \R^{1,1}$,~\eqref{Jderex1} vanishes. Therefore
we can set $\underline{v} = 0$ in the remainder of this proof.
Since all our jets are locally constant,  for $\u = (a,u) \in \Jtest$ and $x, y \in M$ we have
\begin{align*}
\nabla_{\u(x)} \nabla_{1, \v} \L(x,y) &= \Big( a(\underline x) + u^\varphi\!( \underline x) \, 
\frac{\partial}{\partial x^\varphi} \Big) \Big( b(\underline x) + 
v^\varphi\!(\underline x) \, \frac{\partial}{\partial x^\varphi}  \Big) \L(x,y)  \\
&\stackrel{(\ast)}{=} \Big( a(\underline x) \,b(\underline x) 
+ u^\varphi\!( \underline x) \, v^\varphi\!(\underline x) \,\frac{\partial^2}{\partial {x^\varphi}^2}   \Big) \L(x,y) \\[0.2em]
&= a(\underline x) \,b(\underline x) \, \L(x,y) 
- u^\varphi\!( \underline x) \,v^\varphi\!(\underline x) \, f(\underline x - \underline y) \,,
\end{align*}
where in $(\ast)$ we used that the term involving the first derivative vanishes since $V(\varphi)$ is minimal
at $\varphi = 0$.  Similarly,
\begin{align} \label{JExPart}
&\nabla_{1, \u} \nabla_{2, \v} \, \L(x,y) =  a(\underline x) \,b(\underline y) \,\L(x,y) + u^\varphi\!( \underline x) \,v^\varphi\!(\underline y)  \, f(\underline x - \underline y) \:.
\end{align}
Hence, for any~$x \in M$, the linearized field equation~\eqref{Jeqlinlip} can be written as
\beq \label{JExWeakEL}
\begin{split}
\big( \lambda_ A + 2 \lambda_I \big) \,a(\underline{x})\, b(\underline{x}) &+
\lambda_A \,a(\underline{x}) \,b(\underline{x}) + \lambda_I \big( a(\underline{x}) \,b(\underline{x}+e_t)
+ a(\underline{x}) \,b(\underline{x}-e_t) \big) \\
&+  u^\varphi\!( \underline x) \sum_{\underline y \in \Gamma} v^\varphi\!(\underline y)  \, f(\underline x - \underline y)     - a(\underline x) \,b(\underline x) \:\frac{\nu}{2} \:,
\end{split}
\eeq
where we used the fact that~$\sum_{\underline y \in \Gamma} f(\underline x - \underline y) =0$. Evaluating~\eqref{JExWeakEL} for $\u = (a,0) \in \Jtest$ and using~\eqref{Jnuval} gives~\eqref{JExScalComp}.
Similarly, evaluating~\eqref{JExWeakEL} for $\u = (0,u) \in \Jtest$ yields~\eqref{JExDiscrWaveEq}. 
\QED

\subsection{The Symplectic Form}
We now construct the symplectic form. In preparation, we 
need to verify all technical assumptions.
\begin{Lemma} The conditions \emph{(s1)} to \emph{(s3)} on page~\pageref{JCondS1} are satisfied.
\end{Lemma}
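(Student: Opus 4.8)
The plan is to exploit the special structure of $\Jtest \cap \Jlin$ in this example. Recall that $\Jtest = \Jdiff$ consists of jets $\u = (a,u)$ whose vector component has the form $u = (0,0,u^\varphi)$, while by Proposition~\ref{JExLin} every jet in $\Jlin$ has a \emph{constant} Minkowski vector component. Hence for $\u \in \Jtest \cap \Jlin$ the Minkowski component of~$u$ vanishes identically, so the vector component only generates a flow on the $S^1$-factor, leaving all Minkowski coordinates fixed. Since the non-smooth ingredients of the Lagrangian~\eqref{JDefL} — the characteristic functions~$\chi_A$, $\chi_I$ and the balls entering~$f$ — depend only on the difference vector~$\underline{x}-\underline{y}\in\R^{1,1}$, they are constant along the flows generated by the vector components of jets in $\Jtest \cap \Jlin$. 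Consequently, directional derivatives of~$\L(x,y)$ along such jets act only on the smooth factor~$V(x^\varphi - y^\varphi)$ (together with the locally constant jet components), so all first and second semi-derivatives are in fact ordinary derivatives.

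First I would verify~(s1). By the observation above, for $\u = (a,u) \in \Jtest \cap \Jlin$ and $x,y$ in a neighborhood of~$M$ one has
\[ \nabla^+_{1,\u} \L(x,y) = \nabla^-_{1,\u} \L(x,y) = a(x)\, \L(x,y) + u^\varphi(x)\, \partial_{x^\varphi} \L(x,y) \in \R \:, \]
and likewise the second semi-derivatives $\nabla^s_{1,\u}\nabla^{s'}_{2,\v}\L$, $\nabla^s_{1,\u}\nabla^{s'}_{1,\v}\L$ and $\nabla^s_{\u(x)}\nabla^{s'}_{1,\v}\L$ are honest second derivatives of $V(x^\varphi-y^\varphi)$ times locally constant jet data, hence lie in~$\R$; in particular the left and right semi-derivatives coincide. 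Linearity of $(\nabla^+_{1,\u} + \nabla^-_{1,\u})\L(x,y) = 2\,\nabla_{1,\u}\L(x,y)$ in $\u$ is then immediate from the displayed formula.

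Next I would check~(s2). For fixed~$x$ the function $\L(x,\cdot)$ has compact support, and $\rho$ as in~\eqref{JDefRho} is a locally finite sum of Dirac measures on the lattice~$\Gamma$; moreover the flows generated by jets in $\Jtest \cap \Jlin$ fix the Minkowski coordinates of~$x$, so the set of lattice points $\underline{y}$ contributing to the integral does not change under these flows. Thus $\int_M(\,\cdots)\,d\rho(y)$ is a finite sum over lattice points in a fixed bounded region, and since each summand is differentiable in the relevant directions, the sum may be (semi-)differentiated term by term — which is precisely~(s2). Finally, for~(s3) I would note that the vector component of $[\u,\v]$ is the Lie bracket of the vector components; for $\u,\v \in \Jtest \cap \Jlin$ these are vector fields of the form $u^\varphi\partial_\varphi$ and $v^\varphi\partial_\varphi$, whose Lie bracket is again of the form $w^\varphi\partial_\varphi = (0,0,w^\varphi)$, so $[\u,\v]$ has a vector component of the shape admitted by $\Jdiff = \Jtest$ and an arbitrary scalar component, hence $[\u,\v]\in\Jtest$. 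The lemma is genuinely easy given the setup; the only point requiring care is the one made at the outset — that the characteristic functions in~\eqref{JDefL} are invisible to the directional derivatives involved precisely because jets in $\Jtest\cap\Jlin$ carry no Minkowski vector component — but this is exactly why $\Jlin$ was defined with that restriction, so no real obstacle arises.
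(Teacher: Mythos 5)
Your proof is correct and follows essentially the same route as the paper: (s1) because jets in $\Jtest\cap\Jlin$ have only a $\varphi$-component and $\L$ is smooth in $x^\varphi$ and $y^\varphi$, and (s2) because the compact support of $\L(x,\cdot)$ together with the discrete measure~$\rho$ reduces the integration to a finite sum that can be differentiated term by term. For (s3) the paper simply observes that the commutator vanishes for the locally constant jets chosen, while your argument (the bracket of purely $\varphi$-directional fields is again $\varphi$-directional, hence of the form admitted by $\Jdiff=\Jtest$) reaches the same conclusion in a marginally more general way.
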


\begin{proof}
Condition (s1) follows because jets in $\Jlin \cap \Jtest$ only have a $v^\varphi$-component and $\L(x,y)$
is smooth in $x^\varphi$ and $y^\varphi$.
Condition (s2) follows from definition~\eqref{JDefRho} and the fact that the support of~$\L(x,.)$
is finite, so that the $\rho$-integration reduces to a finite sum.
Condition (s3)  is satisfied because for $\u, \v \in \Jtest \cap \Jlin$, the commutator vanishes
due to our choice of locally constant jets.
\end{proof}

Next, in order to find an explicit expression for the symplectic form~\eqref{JOSIlip}, we choose a constant time slice
\begin{align*}
N_t &= \big\{ \underline x \in \Gamma \,\big|\,  x^0 = t\big\} \quad \text{with} \quad t \in \Z\:.
\end{align*}
We let~$\Omega_{N_t}$ be the past of~$N_t$, i.e.
\[  \Omega_{N_t} = \big\{ \underline x \in \Gamma \,\big|\, x^0 \leq t \big\} \:. \]
\begin{Prp}\label{JExSympPrp}
If we choose~$\Jtest$ as the jets with spacelike compact support,
\[ \Jtest = \big\{ \u \in \Jdiff \, \big| \, \text{$\supp \u|_{N_t} $ is a finite set for all~$t \in \Z$} \big\} \:, \]
the symplectic form~\eqref{JOSIlip} is given by
\begin{align}\label{JExSympl}
\begin{split}
\sigma_{\Omega_{N_t}}(\u, \v) &= \lambda_I  \sum_{\underline x \in N_t} \, \big( a(\underline x) \:b(
\underline{x} + e_t) - a(\underline{x} + e_t) \:b(\underline{x}) \big) \\
& \quad \, \, + \sum_{\underline x \in N_t} \big( \,u^\varphi\!( \underline x+e_t) \,v^\varphi\!(\underline x)   - u^\varphi\!( \underline x) \,v^\varphi\!(\underline{x}+e_t) \big) \:,
\end{split}
\end{align}
where again $e_t := (1,0) \in \R^{1,1}$.
\end{Prp}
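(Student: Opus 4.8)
The plan is to compute the surface layer integral \eqref{JOSIlip} (symmetrized as in \eqref{JlscSympBilinEq}, though here all the relevant semi-derivatives coincide with ordinary derivatives since jets in $\Jlin \cap \Jtest$ only have a $v^\varphi$-component) directly from the explicit formula for the integrand obtained in the proof of Proposition~\ref{JExLin}. From that proof we already have, for locally constant jets $\u=(a,u),\v=(b,v)\in\Jtest\cap\Jlin$ and $x,y\in M$,
\[
\nabla_{1,\u}\nabla_{2,\v}\L(x,y) = a(\underline x)\,b(\underline y)\,\L(x,y) + u^\varphi(\underline x)\,v^\varphi(\underline y)\,f(\underline x-\underline y)\:,
\]
so that the anti-symmetrized integrand \eqref{JOSIIntegrand} becomes
\[
\sigma_{\u,\v}(x,y) = \big(a(\underline x)b(\underline y)-b(\underline x)a(\underline y)\big)\L(x,y) + \big(u^\varphi(\underline x)v^\varphi(\underline y)-v^\varphi(\underline x)u^\varphi(\underline y)\big) f(\underline x-\underline y)\:.
\]
Here I use that $\L$ is symmetric so the $\L$-term is correctly anti-symmetrized, and that $f$ is even (it is a sum of characteristic functions of mutually point-symmetric balls), so the $f$-term is also genuinely anti-symmetric in $x,y$.

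Next I would plug this into $\sigma_{\Omega_{N_t}}(\u,\v) = \sum_{\underline x\in\Omega_{N_t}}\sum_{\underline y\in M\setminus\Omega_{N_t}}\sigma_{\u,\v}(x,y)$ and identify which pairs $(\underline x,\underline y)$ with $\underline x\in\Omega_{N_t}$, $\underline y\notin\Omega_{N_t}$ actually contribute. For the $\L$-term, the support of $\L(x,\cdot)$ is contained in the set where $\underline x-\underline y\in \overline A\cup \overline I$; since $\lambda_A$ and $\lambda_I$ have their characteristic functions $\chi_A,\chi_I$ nonvanishing only for $|\underline x^0-\underline y^0|<1$ resp.\ $<1+\varepsilon$ with $\varepsilon<\tfrac14$, the only lattice pairs straddling $N_t$ are those with $\underline x\in N_t$ and $\underline y = \underline x + e_t$ (then $\chi_A$ and $\chi_I$ both contribute, giving $\lambda_A+\lambda_I$... but one must be careful: on $M$ the $V$-terms vanish since $x^\varphi=y^\varphi=0$, and the $\delta$-term vanishes too). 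Actually the coefficient multiplying $\big(a(\underline x)b(\underline x+e_t)-b(\underline x)a(\underline x+e_t)\big)$ is $\L\big((\underline x,0),(\underline x+e_t,0)\big)$ evaluated at difference vector $-e_t=(-1,0)\in A\cap I$, which equals $\lambda_A+\lambda_I$ — so I would double-check against the claimed coefficient $\lambda_I$ in \eqref{JExSympl}; the resolution is that the difference vector $e_t=(1,0)$ lies on the boundary $\partial A$ (since $A=(-1,1)^2$ is open) so $\chi_A(e_t)=0$, while $e_t\in I$ so $\chi_I(e_t)=1$, yielding coefficient $\lambda_I$ as stated. This boundary bookkeeping for the open sets $A,I,B_\varepsilon$ is the main subtlety and the step I expect to be the real obstacle: one has to argue carefully, using $\varepsilon<\tfrac14$ and the explicit geometry, that no other off-slice lattice pairs lie in $\overline A\cup\overline I$ and, for the $f$-term, that $f(\underline x-\underline y)\neq 0$ forces $\underline x-\underline y\in B_\varepsilon(0,\pm1)\cup B_\varepsilon(\pm1,0)$, which among lattice vectors (distance $\varepsilon<\tfrac14$ from these centers) only permits $\underline y=\underline x\mp e_t$ (and $\underline y = \underline x\mp e_x$, but those do not straddle $N_t$).

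Having reduced both double sums to single sums over $\underline x\in N_t$, I would collect terms: the $\L$-contribution gives $\lambda_I\sum_{\underline x\in N_t}\big(a(\underline x)b(\underline x+e_t)-b(\underline x)a(\underline x+e_t)\big)$ — noting the sign flip relative to \eqref{JExSympl}, so I would track orientation (whether $\underline x+e_t$ is in the future or past of $N_t$) to land on the stated sign. The $f$-contribution gives $f(-e_t)\sum_{\underline x\in N_t}\big(u^\varphi(\underline x)v^\varphi(\underline x+e_t)-v^\varphi(\underline x)u^\varphi(\underline x+e_t)\big)$; since $-e_t=(-1,0)=(0,-1)$-type vector... here $f(\underline x-\underline y)$ at difference $(1,0)$: none of the balls $B_\varepsilon(0,\pm1),B_\varepsilon(\pm1,0)$ contain lattice points near $(1,0)$ except $B_\varepsilon(1,0)$ itself where $\underline x-\underline y=(1,0)$ exactly, giving $f=-1$. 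This produces $-\sum(u^\varphi v^\varphi-\cdots)$, i.e.\ $\sum_{\underline x\in N_t}\big(u^\varphi(\underline x+e_t)v^\varphi(\underline x)-u^\varphi(\underline x)v^\varphi(\underline x+e_t)\big)$, matching the second line of \eqref{JExSympl}. The spacelike-compact-support hypothesis on $\Jtest$ guarantees these sums over $N_t$ are finite, hence well-defined, and also ensures the original double sum is absolutely convergent so the rearrangements are legitimate. Finally I would remark that by Theorem~\ref{JthmOSIlip} this expression is independent of $t$, as it must be, which also serves as a consistency check: one can verify directly that $\sigma_{\Omega_{N_{t+1}}}-\sigma_{\Omega_{N_t}}$ vanishes using the linearized equations \eqref{JExScalComp} and \eqref{JExDiscrWaveEq}, though this is not needed for the statement.
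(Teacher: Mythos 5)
Your proposal is correct and takes essentially the same route as the paper: apply the integrand formula \eqref{JExPart} from the proof of Proposition~\ref{JExLin}, antisymmetrize, and evaluate the double sum over the lattice pairs straddling $N_t$, where your boundary bookkeeping ($\chi_A(e_t)=0$ because $A$ is open, so the coefficient is $\lambda_I$, and $f(\pm e_t)=-1$) is precisely the step the paper leaves implicit. Two tiny slips that do not affect the result: there is no sign flip in the scalar term, since $a(\underline{x})\,b(\underline{x}+e_t)-b(\underline{x})\,a(\underline{x}+e_t)$ already coincides with the first line of \eqref{JExSympl} by commutativity, and the relevant difference vector is $-e_t=(-1,0)$ (not a spatial one), on which $f$ indeed equals $-1$.
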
 
\begin{Remark}\em Note that the second sum~\eqref{JExSympl} is the usual symplectic form associated to
a discrete version of the wave equation on $\R^{1,1}$. Namely, after adding the terms $v^\varphi\!( \underline x) \,u^\varphi\!(\underline x )-v^\varphi\!( \underline x) \,u^\varphi\!(\underline x ) = 0$ 
to the each summand (and similarly for the scalar component)
and taking a suitable limit $e_t \rightarrow 0$, we obtain
\begin{align*}
\sigma_{\Omega_{N_t}}(\u, \v) = \sum_{\underline x \in N_t} \big( \dot u^\varphi\!(\underline x ) \,v^\varphi\!( \underline x)  - u^\varphi\!( \underline x) \,
\dot v^\varphi\!(\underline x ) \big) + \sum_{\underline x \in N_t}  \lambda_I \big( a(\underline x)\, \dot b(\underline x)
- \dot a(\underline x) \,b(\underline x)     \big) \, ,
\end{align*}
where the dot denotes the discrete $t$-derivative.
\QEDrem
\end{Remark}
 
\begin{proof}[Proof of Proposition~\ref{JExSympPrp}]
The proof of Proposition~\ref{JExLin} still goes through if~$\Jtest$ is restricted to jets with 
spatially compact support. Given $\u, \v \in \Jtest \cap \Jlin$, by applying~\eqref{JExPart} and Proposition~\ref{JExLin}, we can compute~\eqref{JOSIIntegrand} to obtain
(for any choice of~$s,s' \in \{\pm\}$)
\begin{align*}
&\sigma_{\u, \v}(x,y) = \nabla_{1,\u} \nabla_{2,\v} \L(x,y) - \nabla_{1,\v} \nabla_{2,\u} \L(x,y) \\
& = a(\underline x) \,b(\underline y) \,\L(x,y) - b(\underline x) \,a(\underline y) \,\L(x,y)
+ u^\varphi\!( \underline x) \,v^\varphi\!(\underline y)  \, f(\underline x - \underline y) - v^\varphi\!( \underline x)
\,u^\varphi\!(\underline y)  \, f(\underline x - \underline y)
\end{align*}
(where we again identified~$M= \Gamma \times \{0\}$ with~$\Gamma$).
Using that
\begin{align*}
 \sigma_{\Omega_{N_t}}(\u, \v) &= \sum_{\underline x \in \Omega_{N_t}} \, \sum_{\underline y \in M \setminus \Omega_{N_t}}  
 \sigma_{\u, \v}\big((\underline x,0) ,(\underline y,0) \big) \:,
\end{align*}
we obtain~\eqref{JExSympl}.
 \end{proof}

\section*{Appendix A: The Fr{\'e}chet Manifold Structure of~$\calB$}
\phantomsection
\addcontentsline{toc}{section}{Appendix A: The Fr{\'e}chet Manifold Structure of~$\calB$}\label{appclosed}

As in Section~\ref{JSecSmooth} we assume that~$\F$ is a smooth manifold of dimension~$m \geq 1$.
We want to endow the set~$\calB$ with the structure of a Fr{\'e}chet manifold.
The first step is to specify the topology on the set~$\calB$ in~\eqref{JBdef}.
We choose the {\em{compact-open topology}} defined as follows.
First, parametrizing the measures according to~\eqref{JrhoFf}
by a pair~$(f, F) \in C^\infty(\F, \R) \times C^\infty(\F, \F)$, we can identify~$\calB$ with a subset of
the space of such pairs,
\[ \calB \subset C^\infty(\F, \R) \times C^\infty(\F, \F) \:. \]
Our task is to endow the sets~$C^\infty(\F, \R)$ and~$C^\infty(\F, \F)$
with the structure of a Fr{\'e}chet manifold. Once this has been accomplished,
the Fr{\'e}chet structure on~$\calB$ can be introduced simply by
assuming that~$\calB$ is a Fr{\'e}chet submanifold of the product manifold~$C^\infty(\F, \R) \times C^\infty(\F, \F)$.

Being a vector space, the space~$C^\infty(\F, \R)$ can be endowed even with the structure
of a Fr{\'e}chet space. To this end, on~$\F$ we choose an at most countable
atlas~$(x_\lambda, U_\lambda)_{\lambda \in \Lambda}$ (with an index set~$\Lambda \subset \N$)
whose charts $x_\lambda : U_\lambda \rightarrow \R^m$
are defined on relative compact subsets~$U_\lambda \subset \F$.
We then consider the Fr{\'e}chet topology induced by the~$C^k$-norms in these charts, i.e.\
\[ \| f \|_{k,\lambda} := \big\| f \circ x_\lambda^{-1} \big\|_{C^k(x_\lambda(U_\lambda))} \:. \]
The resulting topology is metrizable. It is induced for example by the distance function
\[ d(f,g) := \sum_{k=0}^\infty \;\sum_{\lambda \in \Lambda} \:2^{-k-\lambda}\, \arctan  \| f -g \|_{k,\lambda} \:. \]

In order to endow~$C^\infty(\F, \F)$ with the structure of a Fr{\'e}chet manifold,
we work locally in a neighborhood of a point~$F \in C^\infty(\F, \F)$.
First, we refine the previous atlas~$(x_\lambda, U_\lambda)$
in such a way that the domains~$U_\lambda$ are all
convex geodesic neighborhoods with respect to a chosen Riemannian metric~$g$ on~$\F$.
Moreover, we further refine this atlas to a new atlas~$(y_\gamma, V_\gamma)_{\gamma \in \Gamma}$
in such a way that~$F$ maps the domains of the new charts to domains of the old charts, meaning
that for every~$\gamma \in \Gamma$ there is~$\lambda(\gamma) \in \Lambda$ such that
\[ F(V_\gamma) \subset U_{\lambda(\gamma)} \]
(for example, one can choose the domains of the new charts as open subsets of the
sets~$U_\lambda \cap F^{-1}(U_\nu)$ for~$\lambda, \nu \in \Lambda$ and introduce the
charts as the restrictions of~$x_\lambda$ to the new domains).
We restrict attention to mappings~$G$ which are so close to~$F$ that they
map to the same charts, i.e.
\beq \label{JGcond}
G(V_\gamma) \subset U_{\lambda(\gamma)} \qquad \text{for all~$\gamma \in \Gamma$}\:.
\eeq
For such mappings, we can define the $C^k$-norms by
\[ \|G - F\|_{k, \gamma} = \big\| x_{\lambda(\gamma)} \circ G \circ y_\gamma^{-1} 
- x_{\lambda(\gamma)} \circ F \circ y_\gamma^{-1} \big\|_{C^k(y_\gamma(V_\gamma))} \:. \]
The resulting Fr{\'e}chet topology is again metrizable, as becomes obvious for example by setting
\begin{align*}
d(F,G) &= \left\{ \begin{array}{cl} 4 & \text{if~\eqref{JGcond} is violated} \\
\displaystyle 
\sum_{k=0}^\infty \;\sum_{\gamma \in \Gamma} \:2^{-k-\lambda}\, \arctan \| F-G \|_{k,\gamma}
& \text{if~\eqref{JGcond} holds}\:.
\end{array} \right.
\end{align*}
It remains to construct a local chart around~$F$. To this end, it suffices to consider mappings~$G$
which satisfy~\eqref{JGcond}. Then, since the domains of the charts~$U_\lambda$
are all geodesically convex, for any~$x \in \F$ there is a unique vector~$v(x) \in T_{F(x)} \F$
with the property that~$G(x) = \exp_{F(x)} v(x)$. In this way, the mapping~$G$ can be
described uniquely by a vector field~$v \in C^\infty(F(\F), T\F)$ on~$\F$ along~$F(\F)$.
The mapping~$G \rightarrow v$ is the desired chart, taking values in the linear space~$C^\infty(F(\F), T\F)$.

For clarity, we finally explain what the tangent vectors of~$\calB$ are, how these tangent vectors
act on functions, and how these derivatives are related to the derivative~$\nabla_{\u}$
as defined in~\eqref{JDjet}. These elementary facts are also needed for the
computation of the exterior derivative~$d\gamma$ in the proof of Lemma~\ref{Jlemmaclosed}.
Given~$\v =(b,v) \in T_\rho {\mathcal{B}}$, we let~$\tilde{\rho}_\tau$ be a smooth curve in~$\calB$
with~$\tilde{\rho}_\tau |_{\tau = 0}=\rho$ and~$\dot{\rho}_\tau |_{\tau =0} = \v$.
We again write the measures~$\tilde{\rho}_\tau$ in the form~\eqref{Jparamtilrho}
(see Lemma~\ref{Jcurve}), so that~\eqref{Jparamtilrho2} holds.
Then the directional derivative of a smooth function~$\phi$ on~$\calB$ is defined as usual by
\[ \v \phi = \frac{d}{d\tau} \phi\big(\tilde{\rho}_\tau \big) \big|_{\tau=0} \:. \]
In particular, the derivative of~$\gamma(\u)$ as defined in~\eqref{JgammaUdef} is given by
\begin{align*}
\v \gamma(\u)\big|_{\tilde{\rho}} &= \frac{d}{d\tau}  \int_{\Omega_{N_t}} d\rho \int_{M \setminus \Omega_{N_t}}
d\rho\:
f_\tau(x)\: \nabla_{2,\u} \L\big(F_\tau(x), F_\tau(y)\big)\: f_\tau(y) \Big|_{\tau=0} \\
&= \int_{\Omega_{N_t}} d\rho \int_{M \setminus \Omega_{N_t}} d\rho\:  
\big( \nabla_{\v(x)} + \nabla_{\v(y)} \big) \nabla_{2,\u} \L(x, y) \:.
\end{align*}
We point out that here the derivative~$\nabla_{\v(y)}$ also acts on the jet~$\u$
in the derivative~$\nabla_{2,\u}$. The commutator of such products of derivatives
can be computed with the help of the following lemma.

\begin{Lemma}%
For $\u, \v \in T_\rho\calB$, we have
\beq \label{Jcommrel}
\nabla_{[\u, \v]} =  \big[ \nabla_\u, \nabla_\v \big] \:.
\eeq
\end{Lemma}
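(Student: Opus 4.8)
The statement to prove is the commutator relation $\nabla_{[\u,\v]} = [\nabla_\u, \nabla_\v]$ for $\u=(a,u)$ and $\v=(b,v)$ in $T_\rho\calB \subset \J$. The plan is to verify this by a direct computation on a test function, using the definition $\nabla_{\w}\eta = c\,\eta + D_w\eta$ for $\w=(c,w)$ and the known commutator formula for vector fields $[u,v]$ acting as a derivation. First I would recall that if $\w = [\u,\v]$ is the Lie bracket on jets, then by the natural bracket structure on $C^\infty(\F,\R)\oplus\Gamma(\F)$ its scalar component is $D_u b - D_v a$ and its vector component is the usual Lie bracket $[u,v]$ of vector fields; so $\nabla_{[\u,\v]}\eta = (D_u b - D_v a)\,\eta + D_{[u,v]}\eta$.

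Then I would expand the right-hand side on an arbitrary smooth function $\eta$ (on $\F$, or on the relevant neighborhood):
\begin{align*}
\nabla_\u \nabla_\v \eta &= \nabla_\u\big( b\,\eta + D_v\eta\big)
= a\big(b\,\eta + D_v\eta\big) + D_u\big(b\,\eta + D_v\eta\big) \\
&= ab\,\eta + a\,D_v\eta + (D_u b)\,\eta + b\,D_u\eta + D_u D_v\eta\:.
\end{align*}
Subtracting the same expression with $\u$ and $\v$ interchanged, the terms $ab\,\eta$ cancel, the cross terms $a\,D_v\eta$ and $b\,D_u\eta$ cancel against their counterparts, and one is left with
\[
[\nabla_\u,\nabla_\v]\eta = (D_u b)\,\eta - (D_v a)\,\eta + \big(D_u D_v - D_v D_u\big)\eta\:.
\]
Since $D_u D_v - D_v D_u = D_{[u,v]}$ by the standard identity for directional derivatives of smooth functions (the Lie bracket of vector fields acting on functions), this equals $(D_u b - D_v a)\,\eta + D_{[u,v]}\eta = \nabla_{[\u,\v]}\eta$, which gives the claim.

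The computation is entirely routine; the only point requiring a word of care is the precise definition of the bracket $[\u,\v]$ on the jet space and the fact that $T_\rho\calB$ is closed under it — but this is built into the Fr\'echet manifold structure of $\calB$ (the bracket of two tangent vector fields on $\calB$ is again tangent), and the scalar component $D_u b - D_v a$ is exactly what is forced by the Leibniz-type expansion above, so consistency is automatic. Thus there is no real obstacle; the lemma is essentially a bookkeeping identity confirming that $\nabla_{(\cdot)}$ is a Lie algebra homomorphism from jets (with their natural bracket) to first-order differential operators on $\F$, which is precisely what is needed to justify the interchange of the order of differentiation in the surface-layer-integral computations (e.g.\ in the proof of Lemma~\ref{Jlemmaclosed} and Theorem~\ref{JthmOSI}).
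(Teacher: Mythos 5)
Your expansion of the right-hand side is correct and coincides with the paper's first display: $[\nabla_\u,\nabla_\v]\eta = (D_u b)\,\eta-(D_v a)\,\eta+D_{[u,v]}\eta$. The gap is on the left-hand side. You take $[\u,\v]$ to be the ``natural'' bracket on $C^\infty(\F)\oplus\Gamma(\F)$ with components $(D_u b-D_v a,\,[u,v])$, which turns the lemma into a tautology. In the paper, however, $\u,\v\in T_\rho\calB$ are tangent vectors to the Fr\'echet manifold $\calB$, and $[\u,\v]$ is the Lie bracket determined by that manifold structure, i.e.\ the commutator of the corresponding derivations acting on functions of measures; this is the bracket that enters the Cartan formula $(d\gamma)(\u,\v)=\u\gamma(\v)-\v\gamma(\u)-\gamma([\u,\v])$ in the proof of Lemma~\ref{Jlemmaclosed}, which is what the lemma is needed for. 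The substantive content of the lemma is therefore to compute this bracket in terms of the jet components. The paper does this by realizing $\u$ and $\v$ through the flows $\mu\mapsto(F_\tau)_*(f_\tau\,\mu)$ and $\mu\mapsto(\tilde F_s)_*(\tilde f_s\,\mu)$ on $\calB$, evaluating $\tfrac{d^2}{d\tau\,ds}\int_\F\eta\,d\big(\Phi_\tau\tilde\Phi_s\rho\big)$ at $\tau=s=0$, and antisymmetrizing, which yields $\int_\F\big(D_{[v,u]}\eta+(D_v a)\,\eta-(D_u b)\,\eta\big)\,d\rho$.

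Your claim that ``consistency is automatic'' is exactly where the danger sits: the naive composition of the jet actions on functionals of the measure produces the bracket with the \emph{opposite} sign, $(D_v a-D_u b,\,[v,u])$. Only after observing that jets act on functions on $\calB$ while $\nabla_\u,\nabla_\v$ act on functions on $M$, so that the order of composition is reversed when translating between the two (this is precisely the closing remark of the paper's proof), does one obtain $[\u,\v]=(D_u b-D_v a,\,[u,v])$ and hence \eqref{Jcommrel} rather than its negative. So your argument needs the flow computation on $\calB$ (or an equivalent identification of the $\calB$-bracket with the formal jet bracket, including the sign bookkeeping); as written, you have assumed the half of the lemma that the paper actually proves.
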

\Proof
Again denoting~ $\u=(a,u)$ and~$\v=(b,v)$, for any smooth function~$\eta$ on~$\F$ we have
\begin{align}\begin{split}\label{Jcomm}
&\nabla_\u \nabla_\v \eta(x) = \big( a(x) + D_u \big) \big( b(x) + D_v \big) \eta(x) \qquad \textrm{and}\\
&\big[\nabla_\u, \nabla_\v \big] \eta(x) = D_{[u,v]} \eta(x) + (D_u b)(x)\, \eta(x) - (D_v a)(x)\, \eta(x)\:.
\end{split}\end{align}
In order to compute the commutator~$[\u, \v]$, we consider
diffeomorphisms~$\Phi_\tau, \tilde{\Phi}_s :  \calB \rightarrow \calB$
along the vector fields~$\u$ and~$\v$, i.e.
\[ \Phi_0 = \text{id},\quad \partial_\tau \Phi_\tau = \u \circ \Phi_\tau \qquad \text{and} \qquad
\tilde{\Phi}_0 = \text{id},\quad \partial_s \tilde{\Phi}_s = \v \circ \tilde{\Phi}_s\:. \]
Then
\begin{align*}
\int_\F \eta(x)\, d \big(\Phi_\tau \tilde{\Phi}_s \rho \big) (x) &=
\int_\F f_\tau(x)\, \eta\big(F_\tau(x) \big) \,d \big(\tilde{\Phi}_s \rho \big)(x) \\
&= \int_\F \tilde{f}_s(x) \: f_\tau\big(\tilde{F}_s(x) \big)\, \eta\Big(F_\tau \big(\tilde{F}_s(x) \big) \Big) \,d \rho(x) 
\end{align*}
Differentiating with respect to~$s$ and~$\tau$ at~$\tau=s=0$ gives
\begin{align*}
\int_\F &\eta(x)\, d \big(\u \v \rho \big) (x) =
\frac{d^2}{d\tau ds} \int_\F \eta(x)\, d \big(\Phi_\tau \tilde{\Phi}_s \rho \big) (x) \bigg|_{s=\tau=0} \\
&= \frac{d^2}{d\tau ds} \int_\F \tilde{f}_s(x) \: f_\tau\big(\tilde{F}_s(x) \big)\, \eta\Big(F_\tau \big(\tilde{F}_s(x) \big) \Big) \,d \rho(x) \bigg|_{s=\tau=0} \\
&= \int_\F \Big( a(x) \,b(x) + (D_v a)(x) \Big)\, \eta(x)\, d\rho(x) \\
&\quad + \int_\F \Big( b(x)\: (D_u\eta)(x) + a(x)\: (D_v\eta)(x) \Big) d\rho(x) \\
&\quad + \int_\F D_v D_u \eta(x) \,  d\rho(x) \:.
\end{align*}
Likewise, exchanging the two diffeomorphism gives the vector~$\v \u \rho$. Hence
\begin{align}\label{JCommutator}
 \int_\F \eta(x)\, d \big([\u, \v] \rho \big) (x) = \int_\F \Big( D_{[v, u]} \eta
+ (D_v a)\: \eta - (D_u b)\: \eta \Big) d\rho(x) \:,
\end{align}
This shows that 
\[ [\u, \v] = \big( D_u b - D_v a \, , [u, v] \big)\,. \]
Comparing with~\eqref{Jcomm} and the definition of $\nabla_{\u}$ in~\eqref{JDjet} gives~\eqref{Jcommrel}.\medskip

We finally note for clarity that
the minus sign in~\eqref{JCommutator} arises because jets $\u, \v$ act on functions on~$\calB$,
whereas the derivatives $\nabla_\u$ and~$\nabla_\v$
act on functions on~$M$. When rewriting compositions of jets $\u \v$ as compositions of derivatives
on~$M$, the order of the composition is interchanged to~$\nabla_v \nabla_u$. 
\QED

\chapter[Stochastic and Non-Linear Correction Terms]{Stochastic and Non-Linear\\ Correction Terms}\label{DissStoch}

In Chapter~\ref{DissJet}, we have established a formalism which allows to relate causal variational principles,
the core analytic principle in the theory of causal fermion systems,
to fields on space-time. These fields are jets $\u,\v,\w \in \J$ as defined in~\eqref{JJdeflip} and~\eqref{JJM}. 
We have found that the jets satisfy the linearized field equations~\eqref{Jeqlinlip}, which in turn generate an evolution which is invariant w.r.t to an adaptation of a symplectic form
to this setting (cf. Theorem~\ref{JthmOSIlip}). We referred to this evolution as \emph{Hamiltonian time evolution}. The linearized field equations are conjectured to give rise to the fundamental equations of contemporary physics in the continuum limit.

The crucial step in the derivation of the linearized field equations is to test the Euler-Lagrange equations~\eqref{JELstrong} weakly in directions of \emph{differentiable jets} $\Jdiff$. Here, ``differentiable'' means that the directional derivative of the the function $\ell(x)$, as defined in~\eqref{JDjetSemi}, exists (cf.~\eqref{JJDiffDef} and~\eqref{JJDiffLip}). However, since the Lagrangian of the theory of causal fermion systems, given by~\eqref{NLagrange}, is only Lipschitz continuous,
there may in general only be comparably few differentiable directions. This motivates the question of
what happens to the linearized field equations if one tests in non-differentiable directions and which information is contained in the non-differentiable directions.

In this chapter, we show that extending the space of test-jets to include non-differentiable directions leads to the appearance of a stochastic term. Furthermore, we show 
that terms which are non-linear in the fields arise. As explained in Section~\ref{IModelQT}, this supports the conjecture that in the continuum limit, the theory of causal fermion systems gives rise to a dynamical collapse theory. Furthermore, the correction terms might ultimately relate to experimental examination.

This chapter is organized as follows. In Section~\ref{SPrelim}, we give preliminaries for this chapter, fix notation and introduce two technical assumptions. Section~\ref{SNonDiffFieldEq} is devoted to the derivation of the non-differentiable linearized field equations. A new term arises which has a stochastic nature. In Section~\ref{StochTermVan}, we show that a natural assumption can be made which implies that this term
vanishes macroscopically. Section~\ref{SConDiff} contains the connection to the differentiable case of Chapter~\ref{DissJet}. In Section~\ref{SympStoch}
we study the implications for the Hamiltonian time evolution. In Section~\ref{SO2} we prove the full non-differentiable field equations to second order, which include contributions which are non-linear in the fields (Theorem~\ref{SFull2O}). We conclude in Section~\ref{SNoether} by extending a Noether-like theorem of Chapter~\ref{DissNoether} to the setting of this chapter.

\section{Preliminaries}\label{SPrelim}

As in Chapter~\ref{DissJet}, for a given choice of \emph{minimizing} measure $\rho$, we consider families of measures $(\rho_\tau)_{\tau \in (-\delta, \delta)}$ of the form
\beq \label{Srhotau}
\rho_\tau = (F_\tau)_* \big( f_\tau \, \rho \big) \:,
\eeq
where~$f$ and~$F$ are smooth,
\[ f \in C^\infty\big((-\delta, \delta) \times \F \rightarrow \R^+ \big) \qquad \text{and} \qquad
F \in C^\infty\big((-\delta, \delta) \times \F \rightarrow \F \big) \:, \]
with~$f_0(x)=1$ and where $F$ is a flow. (Details about flows are reviewed in Section~\ref{SO2Prelim}. Here,
it suffices to note that $F_0(x) = x$ and that $F_\tau := F(\tau, .)$ is a diffeomorphism onto an open subset of $\F$.)
As before, infinitesimally, the family~\eqref{Srhotau} gives rise to a jet $\v = (\dot f_0, v) \in \J$ where
$\dot f_0 = \frac{d}{d\tau} f_\tau |_{\tau=0}$ and where $v$ is the infinitesimal generator of $F$. We refer to $\v$ as the \emph{generator}
of the family~\eqref{Srhotau} and use dots to indicate also higher $\tau$-derivatives (such as $\ddot f_0$).
For the sake of brevity, we introduce the following abbreviations. \Later{Do we need to assume that $F_\tau$ is a global diffeo for small $\tau$?}
\begin{Notation}\label{Sgeneralnotation}\em
We define the notions of `integrability', `almost everywhere' and `measure zero' to refer to the given measure $\rho$ unless specified otherwise,\\[-1.8em]
\begin{itemize}
\itemsep-.2em
\itemD `integrability' $:=$ $\rho$-integrability,
\itemD `almost everywhere' $:=$ $\rho$-almost everywhere,
\itemD `measure zero' $:=$ measure zero with respect to $\rho$,\\[-1.8em]
\end{itemize}
and use the following shorthand notation for the right and left semi-derivatives,
\begin{align}\begin{split} \label{Snotation0}
\frac{d}{ds}^{\! +}_{|_0} \! h(s) &= \frac{d}{ds}^{\! +} \! h(s) \big|_{s=0} = \lim_{s \searrow 0} \frac{h(s) - h(0)}{s} \\
\frac{d}{ds}^{\! -}_{|_0} \! h(s) &= \frac{d}{ds}^{\! -} \! h(s) \big|_{s=0} = \lim_{s \nearrow 0} \frac{h(s) - h(0)}{s} = - \frac{d}{ds}^{\! +}_{|_0} \! h(-s) \: ,
\end{split}\end{align}
where $h$ is any function whose semi-derivatives exist.
For any $x \in \F$ and $v \in T_x \F$, we define directional semi-derivatives as
\beq \label{SDefDSDeriv}
D^\pm_{v} \: f(x) := \frac{d}{d\tau}^{\! \pm}_{|_0} f \big( \gamma(\tau) \big) \:,
\eeq
where $\gamma \in C^1((-1,1), \F)$ is any curve with~$\gamma(0)=x$ and~$\gamma'(0)=v$,
and we again indicate on which argument of a function $f(x,y)$ a semi-derivative acts by subscripts,
$D^\pm_{i,v} \: f(x,y)$ for $i=1,2$. Furthermore, \emph{semi-derivatives in direction of a jet} $\v=(b,v) \in \J$ are defined as
\beq\label{SDjetSemi}
\nabla^\pm_\v := b(x) + D^\pm_v 
\eeq
and we denote \emph{symmetric (directional) semi-derivatives} as
\beq \label{SsymDD}
\frac{\tilde d}{d s}_{|_0} = \frac{1}{2} \Big( \: \frac{d}{d s}^{\!+}_{|_0} +  \frac{d}{d s}^{\!-}_{|_0} \Big) \quad \textrm{ and } \quad \tilde D_v := \frac{1}{2} \big( D^+_{v} - D^+_{-v} \big) = \frac{1}{2} \big( D^+_{v} + D^-_{v} \big) \: ,
\eeq
as well as
\beq\label{SnotationSymmSem}
\widetilde \nabla_\v := \frac{1}{2} \: \big( \nabla^+_\v - \nabla^+_{-\v} \big) = \frac{1}{2} \: \big( \nabla^+_\v + \nabla^-_\v \big) \: .
\eeq
This notation is compatible with the commutator of two jets $\v,\w \in \J$, defined as
\beq \label{SDefCommutator}
\nabla^+_{[\w,\v]} := \nabla^+_\w \nabla^+_\v - \nabla^+_\v \nabla^+_\w \: ,
\eeq
because
\begin{align*}\begin{split}
&\wnabla_\w \: \wnabla_\v - \wnabla_\v \: \wnabla_\w\\
&\: = \frac{1}{4} \: \big( \nabla^+_\w - \nabla^+_{-\w} \big) \big( \nabla^+_\v - \nabla^+_{-\v} \big) - \frac{1}{4} \: \big( \nabla^+_\v - \nabla^+_{-\v} \big) \big( \nabla^+_\w - \nabla^+_{-\w} \big) \\
& \: = \frac{1}{4} \: \big( \nabla^+_{[\w,\v]}  - \nabla^+_{[\w,-\v]} - \nabla^+_{[-\w,\v]} + \nabla^+_{[-\w,-\v]} \big) \\
& \: = \frac{1}{2} \: \big( \nabla^+_{[\w,\v]}  - \nabla^+_{-[\w,\v]} \big) = \wnabla_{[\w,\v]} \: ,
\end{split}\end{align*}
and analogously for $\tilde D_{[w,v]}$.

In the case of repeated derivatives which act on the same argument, such as $\nabla_{i,\v} \nabla_{i,\w}$,
we \emph{always define the first derivative to act on the second derivative and on the corresponding argument of the function} (the $i$th argument in this case).
E.g., with respect to the commutator~\eqref{SDefCommutator} of two jets $\v=(b,v), \ \w =(c,w) \in \J$,
this implies
\begin{align*}\begin{split}
\nabla^+_{[\w,\v]} &= \big( c(x) + D^+_w \big) \: \big( b(x) +  D^+_v \big) - \big( b(x) +  D^+_v \big) \: \big( c(x) +  D^+_w \big) \\
&= (D_w b)  - (D_v c)  +  D^+_{[w,v]} \: ,
\end{split}\end{align*}
or written in terms of the notation~\eqref{SnotationSymmSem},
\begin{align}\begin{split}\label{SnablaCommutator}
\wnabla_{[\w,\v]} &= \big( c(x) + \tilde D_w \big) \: \big( b(x) + \tilde D_v \big) - \big( b(x) + \tilde D_v \big) \: \big( c(x) + \tilde D_w \big) \\
&= (D_w b)  - (D_v c)  + \tilde D_{[w,v]} \: .
\end{split}\end{align}
We do not use the notation~\eqref{JnotationJetsDiff} in this chapter.

Finally, in the context of the family~\eqref{Srhotau}, for every $\Omega \in \Sigma(\F)$, where $\Sigma(\F)$ denotes the Borel sigma algebra of $\F$, we use the notation
\[
\Omega_\tau := F_\tau(\Omega) \: .
\]

Throughout this chapter, we use $\J$ as defined in~\eqref{JJdeflip}. I.e., we use jets which
are defined on $\F$ rather than on $M$ (where $M$ is defined in~\eqref{JMlip}).
\QEDrem \end{Notation}

We work in the non-compact setting introduced in Section~\ref{Jsecnoncompact}. In particular, we make (and refer to) Assumptions~(i) and~(ii) on page~\pageref{JCond1}, as well as Assumptions~(iii) and~(iv). Assumptions~(v) and~(vi) on page~\pageref{JCond5} have to be strengthened in the present context to ensure well-definedness. One of the main differences to Chapter~\ref{DissJet} is that in the latter, second semi-derivatives of $\L(x,y)$ are only required to exist in directions of differentiable jets (cf. Assumption~(s1) on page~\pageref{JCondS1}), whereas here we demand that the second semi-derivatives exist for all jets.
\Later{Sagen, dass das Lipschitz-Kontinutität erzwingt.}

To state the following assumptions, we define a function $h(x,y)$ to be\\[-1.3em]
\begin{center}
integrable in $x$ and$_{\textrm{c}}$/or $y$ 
\end{center}
if it is integrable in $x$ or $y$ over every $\Omega \in \Sigma(\F)$ and integrable in $x$ and $y$ for every
$\Omega \times \Omega' \in \Sigma(\F \times \F)$ whenever either $\Omega$ or $\Omega'$ is compact.
(Note that integrability always refers to the chosen minimizer $\rho$, cf. Notation~\ref{Sgeneralnotation}. The subscript `c'
stands for ``compact'', indicating that one of two integrations needs to have compact domain.)

\begin{Assumption}\label{SA12}\em For any $x,y \in \F$, $v \in \Gamma(T\F)$, $ F,G \in C^\infty\big((-\delta, \delta) \times \F \rightarrow \F \big)$, $f,g \in C^\infty\big((-\delta, \delta) \times \F \rightarrow \R^+ \big)$, $\tau \in (-\delta,\delta)$ and $
h_\tau(x,y) :=  f_\tau(x) \: \L(F_\tau(x),G_\tau(y)) \: g_\tau(y) $
we assume that \\[-1.5em]
\begin{enumerate}
\item[\textrm{(v)}] $h_\tau(x,y)$ is integrable in $x$ and$_{\textrm{c}}$/or $y$ and the semi-derivatives 
\beq\label{SCond8Eq}
\frac{d}{d\tau}^{\!+}_{|_0} h_\tau(x,y)  \quad \textrm{ and } \quad D^+_{1,v} \frac{d}{d\tau}^{\!+}_{|_0} h_\tau(x,y)
\eeq
as well as
\beq\label{SCond8Eq2}
\frac{d^{\,2}}{d\tau^2}^{\!+}_{|_0} h_\tau(x,y)  \quad \textrm{ and } \quad D^+_{1,v} \frac{d^{\,2}}{d\tau^2}^{\!+}_{|_0} h_\tau(x,y)
\eeq
exist and are integrable in $x$ and$_{\textrm{c}}$/or $y$ as well. Furthermore, we assume that $\frac{d}{d\tau}^{\!+}_{|_{\tilde \tau}} \L(F_\tau(x),G_\tau(y)) = \frac{d}{d\tau}^{\!+}_{|_{\tilde \tau}} \L(F_\tau(x),G_{\tilde \tau}(y)) + \frac{d}{d\tau}^{\!+}_{|_{\tilde \tau}} \L(F_{\tilde \tau}(x),G_\tau(y))$ for all $\tilde \tau \in (-\delta,\delta)$.
\item[\textrm{(vi)}] the semi-derivatives in~\eqref{SCond8Eq} and~\eqref{SCond8Eq2} can be exchanged with integration over $x$ and$_{\textrm{c}}$/or $y$.
\Later{In dem Abschnitt mit dem Jet-Beispiel Annahmen geeignet abschwächen. In diesem Abschnitt auch die Annahmen am Beispiel erläutern.}%
\end{enumerate} 
\end{Assumption}

We give some direct consequences of the assumptions. First, note that~\eqref{SCond8Eq}
implies that for any $v \in T_x\F$, $D^+_{1,v} \: \L(x,y)$ and $D^+_{2,v} \: \L(x,y)$ exist in $\R$.
Furthermore, choosing $f=1=g$, $G(x) = x$ and $F = \Phi$, where $\Phi$ is the flow of $v$, as well as $\Omega = M$,
Assumption~(vi) implies that
\[
\int_M D^+_{1,v} \L(x,y) \: d\rho(y) =  D^+_{1,v} \int_M \L(x,y) \: d\rho(y) =  D^+_{1,v} \ell(x)
\]
for any $x \in \F$, where $\ell$ is defined in~\eqref{Jelldeflip}. Therefore, Assumptions~(v) and~(vi) of Chapter~\ref{DissJet} (page~\pageref{JCond5}) are satisfied. In the following, by~(v) and~(vi) we always refer to Assumptions~\ref{SA12}. For completeness, we mention that Assumptions~\ref{SA12} also imply the first condition of Assumption~(s1)
as well as Assumption~(s2) on page~\pageref{JCondS1}.

\noindent Finally, note that since $\rho$ is a minimizer (cf.~\eqref{Srhotau}), we have 
\beq\label{SELgk}
D^+_v \ell(x) \geq 0 \qquad \textrm{ but } \qquad D^-_v \ell(x) \leq 0
\eeq
for any $x \in M$ and $v \in T_x\F$ (a consequence of the notation~\eqref{Snotation0}), and equally so for
$\nabla^+_\u\ell(x)$ and $\nabla^-_\u\ell(x)$ with $\u \in \J$ (cf.~\eqref{JELweaklip}).

\Later{Remark einfügen, mit einem "Setting" in dem die Annahmen erfüllt sind, also, z.B. L und Ableitung und zweite Ableitung dominiert, jets bounded, etc.}

\section{Non-Differentiable Linearized Field Equations}\label{SNonDiffFieldEq}

In this section, we derive field equations for the non-differentiable case. The next proposition gives the result. Note that $\ell_\tau$ is defined in~\eqref{Jelltau}.

\begin{Prp}\label{SmodFieldEqA}
Let $\v \in \J$ be the generator of a family~\eqref{Srhotau} of minimzers and $\w \in \J$ arbitrary. Then
\beq \label{SlinFEqA}
\frac{1}{2} \: \big( \nabla^+_{\w} - \nabla^+_{- \w} \big)  \: \int_M d\rho(y) \: 
\big( \nabla^+_{1,\v} + \nabla^+_{2,\v} \big) \: \L \big( x, y \big)  - \nabla^+_\w \nabla^+_\v \: \frac{\nu}{2}\: = \chi_{\w,\v}(x)
\eeq
for every $x \in M$, with \Later{Man kann~\eqref{SchiA} direkt aus~\eqref{Sinhom} durch anwenden der Symmetrie des Lagr. bekommen und umgekehrt.}
\beq\label{SchiA}
\chi_{\w,\v}(x) =   \frac{1}{2} \: \frac{d}{d s}^{\!+}_{|_0}  \frac{d}{d \tau}^{\!+}_{|_0}  \: 
f_\tau(x) \: \Big( \ell_\tau \big (F_\tau(\Phi_s (x)) \big)   -  \ell_\tau \big( F_\tau(\Phi_{-s}(x))\Big) \: ,
\eeq
where $\Phi$ is the flow of the vectorial component $w$ of $\w$.
\end{Prp}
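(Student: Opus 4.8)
The plan is to derive equation~\eqref{SlinFEqA} by starting from the minimality of $\rho$, which by~\eqref{Srhotau} means that $\rho_\tau$ is a minimizer for all $\tau \in (-\delta,\delta)$, and exploiting the weak Euler-Lagrange inequalities~\eqref{SELgk} together with their $\tau$-dependent versions. Concretely, since $\rho_\tau$ is a minimizer, the function $\ell_\tau$ defined in~\eqref{Jelltau} satisfies the weak EL equations on $M_\tau = \supp\rho_\tau = \overline{F_\tau(M)}$; composing with the diffeomorphism $F_\tau$ gives (as in Lemma~\ref{Jlemmalinlip}) that $\nabla^+_{\tilde\u}$ applied to $\int_M \L(F_\tau(x),F_\tau(y))\,f_\tau(y)\,d\rho(y) - \nu/2$ has the right sign for all test jets pushed forward appropriately. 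The first key step is therefore to write out the quantity
\[
g_s(\tau,x) := f_\tau(x)\,\ell_\tau\big(F_\tau(\Phi_s(x))\big),
\]
where $\Phi$ is the flow of $w$, and observe that $g_0(\tau,x) = f_\tau(x)\,\ell_\tau(F_\tau(x)) \equiv 0$ for all $\tau$ by the EL equations~\eqref{JELstronglip} applied to $\rho_\tau$ (note $F_\tau(x) \in M_\tau$ for $x \in M$, up to the closure subtlety handled exactly as in Chapter~\ref{DissJet}).

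The second step is to take semi-derivatives. Because $g_0(\tau,\cdot)\equiv 0$ as a function of $\tau$ and $g_s(\tau,x) \geq$ or $\leq$ its value at $s=0$ depending on the sign of $s$ — this is where the minimizer inequality~\eqref{SELgk} enters, since $D^+_w \ell_\tau \geq 0$ and $D^-_w\ell_\tau \leq 0$ on $M_\tau$ — the symmetrized $s$-derivative
\[
\chi_{\w,\v}(x) = \frac{1}{2}\,\frac{d}{ds}^{\!+}_{|_0}\frac{d}{d\tau}^{\!+}_{|_0}\Big(g_s(\tau,x) - g_{-s}(\tau,x)\Big)
\]
is the natural object: the inner $\tau$-semi-derivative of $g_s - g_{-s}$ at $\tau=0$ is well-defined and integrable by Assumption~\ref{SA12}(v), applied with $F = F_\tau\circ\Phi_s$, $G(y)=F_\tau(y)$, $f = f_\tau$, $g = f_\tau$. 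I would then expand $\frac{d}{d\tau}^{\!+}_{|_0} g_s(\tau,x)$ using the definition~\eqref{Jelltau} of $\ell_\tau$ and the splitting property of the $\tau$-semi-derivative of $\L(F_\tau(x),G_\tau(y))$ postulated in Assumption~\ref{SA12}(v), exchanging semi-derivative and integration via Assumption~\ref{SA12}(vi). This produces, at $s=0$, the term $f_0(x)\cdot\big(\dot f_0(x)\,\ell(x) + D_v\ell(x) + \int_M (\nabla^+_{1,v}+\nabla^+_{2,v})\L(x,y)\,d\rho(y)\big)$ plus the $s$-dependent corrections, and using $\ell|_M = 0$ and $f_0 = 1$ the bare part reduces to $\nabla^+_\v\ell(x) + \int_M(\dots)$; matching the Lagrange-multiplier bookkeeping (the $-\nu/2$ in $\ell_\tau$) gives the $\nabla^+_\w\nabla^+_\v\,\nu/2$ term on the left of~\eqref{SlinFEqA}.

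The third step is to identify what the outer $\frac12(\frac{d}{ds}^{+} - \frac{d}{ds}^{-})|_0 = \frac12(\frac{d}{ds}^{+}_{|_0} + \frac{d}{ds}^{-}_{|_0})$ acting on the $\tau$-semi-derivative produces from the $\Phi_s$-dependence: differentiating the composition $\ell_\tau\circ F_\tau\circ\Phi_s$ in $s$ at $s=0$ pulls down $D^+_w$ acting on the first argument of everything, i.e. it converts the already-computed integral expression into $\frac12(\nabla^+_\w - \nabla^+_{-\w})$ applied to $\int_M(\nabla^+_{1,\v}+\nabla^+_{2,\v})\L(x,y)\,d\rho(y) - \nabla^+_\v\,\nu/2$, which is exactly the left-hand side of~\eqref{SlinFEqA}; the leftover, namely the genuinely second-order mixed semi-derivative that does not collapse because $\ell$ is only Lipschitz (so the $s$- and $\tau$-semi-derivatives do not commute and cannot both be replaced by ordinary derivatives), is precisely $\chi_{\w,\v}(x)$ as written in~\eqref{SchiA}. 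I expect the main obstacle to be the careful justification of this non-commutativity step: one must argue that the anti-symmetrization in $s$ (forming $g_s - g_{-s}$) is exactly what isolates the obstruction term, that no spurious contributions survive from terms like $\dot f_0(x)\,D^+_w\ell(x)$ (which vanish because $\ell|_M = 0$) or $f_0(x)$-derivatives (which are controlled since $f_0 \equiv 1$ on $M$), and that all exchanges of limits and integrals are licensed by Assumption~\ref{SA12} — in particular the $\sigma$-finiteness and the ``integrable in $x$ and$_{\textrm{c}}$/or $y$'' conditions are needed to handle the non-compactness of $M$ just as in the proof of Proposition~\ref{Nprpuseful2}. Once the bookkeeping of which derivative lands on which argument is pinned down (using the convention fixed in Notation~\ref{Sgeneralnotation} that the first derivative acts on the second), the identity~\eqref{SlinFEqA} follows by collecting terms, and the alternative form of $\chi_{\w,\v}$ via the symmetry $\L(x,y)=\L(y,x)$ is a routine relabelling.
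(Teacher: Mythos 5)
Your route is essentially the paper's: you work with the quantity $f_\tau(x)\,\ell_\tau\big(F_\tau(\Phi_{\pm s}(x))\big)$, expand $\ell_\tau$ via \eqref{Jelltau}, exchange the $s$- and $\tau$-semi-derivatives with the $y$-integration using Assumptions~\ref{SA12}, and use the Euler-Lagrange equations along the family \eqref{Srhotau} for the $\nu/2$ bookkeeping. However, two concrete points in your plan are wrong or missing. First, the claim that contributions of the form $\dot f_0(x)\,D^+_w\ell(x)$ ``vanish because $\ell|_M=0$'' is false: only $\ell$ itself vanishes on $M$, while $D^\pm_w\ell$ is in general nonzero there (cf.~\eqref{SELgk}); the nonvanishing of such one-sided derivatives is the entire point of this chapter. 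The term $b(x)\,\tilde D_w\ell(x)$ is a genuine constituent of $\chi_{\w,\v}$ (it is the first line of the expansion~\eqref{SdiffCase}) and at the same time sits inside $\tfrac{1}{2}(\nabla^+_\w-\nabla^+_{-\w})\int_M(\nabla^+_{1,\v}+\nabla^+_{2,\v})\L\,d\rho$ on the left of~\eqref{SlinFEqA}; it must be carried on both sides, not discarded. Relatedly, your appeal to the sign information in~\eqref{SELgk} is a red herring: the proof uses only the equalities $\ell|_M\equiv 0$ and $\ell_\tau\circ F_\tau|_M\equiv 0$, never the inequalities.

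Second, the scalar component $c$ of $\w$ does not come out of your differentiation, so your third step overclaims. Both $\chi_{\w,\v}$ in~\eqref{SchiA} and the mixed $s$--$\tau$ semi-derivative of $g_s-g_{-s}$ depend only on the vector part $w$ (through the flow $\Phi$); exchanging derivatives with the integral therefore yields only $\tfrac{1}{2}\big(D^+_w-D^+_{-w}\big)\int_M(\nabla^+_{1,\v}+\nabla^+_{2,\v})\L(x,y)\,d\rho(y)-(D_wb)(x)\,\tfrac{\nu}{2}=\chi_{\w,\v}(x)$, which is the paper's~\eqref{SchiA2}, not the full left-hand side of~\eqref{SlinFEqA} with $\nabla^+_\w$. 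To obtain the $c(x)$-terms you must use your observation $g_0(\tau,x)\equiv 0$ a second time: its $\tau$-semi-derivative gives $\int_M(\nabla^+_{1,\v}+\nabla^+_{2,\v})\L(x,y)\,d\rho(y)-b(x)\,\tfrac{\nu}{2}=0$ for $x\in M$ (the paper's~\eqref{SchiA7}--\eqref{SchiA4}); multiplying this identity by $c(x)$ and adding it is what converts $\tfrac{1}{2}(D^+_w-D^+_{-w})$ into $\tfrac{1}{2}(\nabla^+_\w-\nabla^+_{-\w})$ and $(D_wb)\,\tfrac{\nu}{2}$ into $\nabla^+_\w\nabla^+_\v\,\tfrac{\nu}{2}$. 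You state the fact $g_0\equiv 0$ but never make this move; as written, your argument establishes~\eqref{SlinFEqA} only for jets of the form $\w=(0,w)$.
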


\noindent We refer to~\eqref{SlinFEqA} as the \textbf{non-differentiable linearized field equations} and denote the set of all $\v$ which satisfy~\eqref{SlinFEqA}
for all~$\w \in \J$ by $\Jfield$. (``Non-differentiable'' indicates that the assumption of $\w$ being a differentiable jet is not necessary.) We give an interpretation of this result after its proof.

\Proof Choose $\v = (b,v)$, $\w=(c,w)$, $f$ and $F$ as in~\eqref{Srhotau} and $\Phi$ as in the proposition. For every $x \in \F$, $\tau \in (-\delta,\delta)$ and $s \in (-s_{\max},s_{\max})$ we have  by definition of~$\ell_\tau$
\begin{align}\begin{split}\label{SchiA1}
& \int_{M_\tau} d\rho_\tau(y) \: f_\tau(x)
\: \Big( \L \big( F_\tau(\Phi_s(x)),y \big) - \L \big( F_\tau(\Phi_{-s}(x)), y \big) \Big) \\
&\qquad =  f_\tau(x) \: \Big(  \ell_\tau \big(F_\tau(\Phi_s(x)) \big) + \frac{\nu}{2}   -  \ell_\tau \big(F_\tau(\Phi_{-s}(x))\big) - \frac{\nu}{2} \Big) \\
&\qquad =  f_\tau(x) \: \Big(  \ell_\tau \big(F_\tau(\Phi_s(x)) \big)   -  \ell_\tau \big( F_\tau(\Phi_{-s}(x)) \big) \Big) \:,
\end{split}\end{align}
where the existence of the terms follows from Assumption~(v).
By definition of the directional semi-derivatives~\eqref{SDefDSDeriv},
\begin{align*}
\frac{d}{d \tau}^{\!+}_{|_0} \L \big(F_\tau(\Phi_s (x)), y \big) = D^+_{1,v|_{\Phi_s(x)}} \L(\Phi_s(x) , y) \: ,
\end{align*}
where we have added the subscript $\Phi_s(x) $ to $v$ merely to highlight that the derivative acts at the point $\Phi_s(x) \in \F$,
\[
D^+_{1,v} \L(\Phi_s(x) , y) = D^+_{1,v|_{\Phi_s(x)}} \L(\Phi_s(x) , y) \:.
\]
Therefore,
\begin{align}\begin{split}\label{SchiA5}
& \frac{d}{d s}^{\!+}_{|_0} \frac{d}{d \tau}^{\!+}_{|_0} \int_M  f_\tau(x) \:  \L \big(F_\tau(\Phi_s (x)),  F_\tau(y) \big)  \: f_\tau(y) \: d\rho(y) \\
& = \frac{d}{d s}^{\!+}_{|_0} \int_M \frac{d}{d \tau}^{\!+}_{|_0}  f_\tau(x) \:  \L \big(F_\tau(\Phi_s (x)),  F_\tau(y) \big)  \: f_\tau(y) \: d\rho(y) \\
& = \frac{d}{d s}^{\!+}_{|_0} \int_M \big( b(x) + D^+_{1,v|_{\Phi_s(x)}} + \nabla^+_{2,\v} \big) \: \L \big( \Phi_s (x), y \big) \: d\rho(y) \\
&= D^+_w \int_M  \big( b(x) + D^+_{1,v} + \nabla^+_{2,\v} \big) \: \L \big( x, y \big)  \: d\rho(y) - (D_w) b(x) \int_M \L (x,y) \: d\rho(y) \\
&= D^+_w \int_M \big( \nabla^+_{1,\v} + \nabla^+_{2,\v} \big) \: \L \big( x, y \big) \: d\rho(y) - (D_w) b(x) \int_M \L (x,y)  \: d\rho(y) \:,
\end{split}\end{align}
where we have exchanged the differentiation with integration using Assumption~(vi) and where the existence follows again from Assumption~(v).
For $x \in M$, this implies that
\begin{align*}
&\frac{d}{d s}^{\!+}_{|_0}  \frac{d}{d \tau}^{\!+}_{|_0}  \int_{M_\tau} d\rho_\tau(y)  \: f_\tau(x) \: \Big( \L \big(F_\tau(\Phi_s (x)),y \big) - \L \big(F_\tau(\Phi_{-s} (x)),y \big)  \Big) \\
&\ \ = \frac{d}{d s}^{\!+}_{|_0}  \frac{d}{d \tau}^{\!+}_{|_0}    \int_{M} d\rho(y) \: f_\tau(x) \: \Big( \L \big(F_\tau(\Phi_s (x)), F_\tau(y) \big) - \L \big(F_\tau(\Phi_{-s} (x)), F_\tau(y) \big) \Big) \: f_\tau(y) \\
&\stackrel{\eqref{SchiA5}}{=} \big( D^+_w - D^+_{-w} \big) \int_{M} d\rho(y) \: \big( \nabla^+_{1,\v} + \nabla^+_{2,\v} \big) \: \L \big( x, y \big) - 2 \: (D_w b)(x) \int_{M} d\rho(y)  \: \L \big(x, y \big) \\
&\ \ = \big( D^+_w - D^+_{-w} \big) \int_{M} d\rho(y) \: \big( \nabla^+_{1,\v} + \nabla^+_{2,\v} \big) \: \L \big( x, y \big)  -  2\: (D_w b)(x) \: \frac{\nu}{2} \:,
\end{align*}
where we have used that $-D^+_{-w} b=D^-_w b = D_wb$ since $b$ is differentiable, and where in the last step we have used the definition of $\ell$ in~\eqref{Jelldef} and $\ell|_M \equiv 0$ (which follows from $\rho$ being a minimizer).
Therefore,~\eqref{SchiA1} gives
\begin{align}\begin{split} \label{SchiA2}
&\frac{1}{2} \: \big( D^+_w  - D^+_{-w} \big) \int_{M} d\rho(y) \: \big( \nabla^+_{1,\v} + \nabla^+_{2,\v} \big) \: \L \big( x, y \big) 
-  (D_w b)(x) \: \frac{\nu}{2} \\
&\qquad =  \frac{1}{2} \: \frac{d}{d s}^{\!+}_{|_0}  \frac{d}{d \tau}^{\!+}_{|_0}  \: 
f_\tau(x) \: \Big( \ell_\tau \big (F_\tau(\Phi_s (x)) \big)   -  \ell_\tau \big( F_\tau(\Phi_{-s}(x))\Big) \: ,
\end{split}\end{align}
where the extistence of the right hand side follows from the existence of the left hand side.
Furthermore, we have
\begin{align}\begin{split} \label{SchiA3}
& c(x) \int_M d\rho(y) \: \big( \nabla^+_{1,\v} + \nabla^+_{2,\v} \big) \: \L (x, y) \\
& \qquad = c(x) \int_M d\rho(y) \: \frac{d}{d\tau}^{\!+}_{|_0} f_\tau(x) \: \L\big( F_\tau(x), F_\tau(y) \big) \: f_\tau(y) \\
&\qquad = c(x) \: \frac{d}{d\tau}^{\!+}_{|_0}   f_\tau(x) \: \Big( \ell_\tau \big(F_\tau(x) \big) + \frac{\nu}{2} \Big) \: ,
\end{split} \end{align}
where existence follows again from Assumption~(v). In the second step we have used Assumption~(vi) and
the definition of $\ell_\tau$ in~\eqref{Jelltau}.
Since $\v$ generates a family of minimizers, for every $x \in M$, we have \Later{Term "generates a family" formal einführen.}
\[ \ell_\tau(F_\tau(x)) \equiv 0 \qquad \text{ on } (-\delta,\delta) \:, \]
which implies
\[ f_\tau(x) \: \ell_\tau(F_\tau(x)) \equiv 0 \qquad \text{ on } (-\delta,\delta)\] 
and therefore 
\beq\label{SchiA7}
\frac{d}{d\tau}^{\!+}_{|_0}  f_\tau(x) \: \ell_\tau(F_\tau(x)) = 0 \: .
\eeq
Hence~\eqref{SchiA3} gives
\beq \label{SchiA4}
c(x) \int_M d\rho(y) \: \big( \nabla^+_{1,\v} + \nabla^+_{2,\v} \big) \: \L (x, y) - c(x) \: b(x) \: \frac{\nu}{2} = 0 \: 
\eeq
for every $x \in M$, where we have used that $b = \dot f_0$ due to our designation of $\v$ as $(b,v)$.
Adding~\eqref{SchiA4} and~\eqref{SchiA2} gives the result.
\QED

To interpret Proposition~\ref{SmodFieldEqA}, note that the left hand side of~\eqref{SlinFEqA} is equal to the left hand side of the linearized field equations~\eqref{Jeqlinlip}
if $\int_M d\rho(y) \big( \nabla_{1,\v} + \nabla_{2,\v} \big) \L \big( x, y \big)$ exists and is differentiable.
Therefore,~\eqref{SlinFEqA} constitutes a generalization of the linearized field equations~\eqref{Jeqlinlip} to include testing with
jets $\w \in \J$ which need not be elements of $\Jdiff$. We investigate the precise relation to the differentiable case in Section~\ref{SConDiff} below.

This brings us to the interpretation of the term $\chi_{\w,\v}(x)$. Expanding it as
\begin{align}\begin{split}\label{SdiffCase}
\chi_{\w,\v}(x) & =\frac{1}{2} \: b(x) \: \frac{d}{d s}^{\!+}_{|_0}  \Big( \ell \big (\Phi_s (x)\big)   -  \ell \big(\Phi_{-s}(x)\big) \Big) \\
& + \frac{1}{2} \: \frac{d}{d s}^{\!+}_{|_0}  \frac{d}{d \tau}^{\!+}_{|_0}   \Big( \ell_\tau \big (F_\tau(\Phi_s (x))\big)   -  \ell_\tau \big(F_\tau(\Phi_{-s}(x))\big) \Big) \:,
\end{split}\end{align}
we see that $\chi_{\w,\v}(x)$ is related to the appearance of symmetric directional semi-derivatives $\tilde D_w\ell$ as defined in~\eqref{SsymDD}.
Due to~\eqref{SELgk}, for $x \in M$, $\tilde D_w\ell(x)$ in general has an arbitrary sign, depending on the exact numerical values of $D^+_w\ell(x)$ and $D^-_w\ell(x)$ at the space-time point $x$. Since $\ell$ is in general only Lipschitz-continuous, $D^+_w\ell(x)$ and $D^-_w\ell(x)$ need not even be continuous in $x$. Therefore, we are led to interpret $\chi_{\w,\v}(x)$ as a \emph{stochastic term}, fluctuating with varying sign as $x$ changes. 
(The connection to random variables, which is implicit in this name, is intentional: If the microscopic structure of space-time, and hence the point-wise behaviour of $\ell$, is not known, the the term effectively appears as a random contribution and can be modelled as a random variable.) Due to this interpretation, in the following, we refer to $\chi_{\w,\v}(x)$ as ``the stochastic term''.

The mathematical side of this interpretation is supported by the next example. In the following sections, we study the behaviour of the stochastic term in more detail.

\begin{Example}\label{SExStoch}\em  Due to the EL equations~\eqref{JELstrong}, for fixed $x \in M$, the function $\ell \big(\Phi_s(x)\big)$ has a cusp-like behaviour as $s$ varies (cf. also~\eqref{SELgk}).
In a first approximation, this behaviour is similar to the behaviour of a function $l: \R \rightarrow \R$ given by
\begin{align*} 
l(x) = \begin{cases} 
\ \ \alpha \: x & x \geq 0\\
- \beta \: x & x < 0 \:,
\end{cases}
\end{align*}
where $\alpha, \beta \in \R$, $\alpha, \beta >0$ and $x \in \R$. We use this function to illustrate the behaviour of the stochastic term~\eqref{SchiA}.
Denote by $w$ a unit-vector field on $\R$
so that we have $D^+_w f = \partial^+f$ for any $f: \R \rightarrow \R$, where $\partial^+$ denotes the right semi-derivative in~\eqref{Snotation0}.
This gives
\begin{align*}
D^+_w l (x) =  \begin{cases} 
\ \ \alpha  & x > 0\\
\ \ \alpha & x = 0\\
- \beta & x < 0 \:,
\end{cases} \qquad 
D^+_{-w} l (x) =  \begin{cases} 
 -\alpha  & x > 0\\
\ \ \beta & x = 0\\
\ \ \beta & x < 0 
\end{cases}
\end{align*}
and
\begin{align*}
D^-_{w} l (x) =  \begin{cases} 
\ \ \alpha  & x > 0\\
-  \beta & x = 0\\
- \beta & x < 0 \:.
\end{cases}
\end{align*}
This shows that 
\[ \big(D^+_w - D^+_{-w} \big) l(0) = \alpha - \beta \:, \]
which changes sign depending on the particular values of $\alpha$ and $\beta$, indicating that the first line of~\eqref{SdiffCase} has varying sign depending on the behaviour of the function $\ell \big (\Phi_s (x)\big)$ at the space-time point $x \in M$.

We can include a variation~\eqref{Srhotau} of minimizers in this model by promoting $\alpha$ and $\beta$ to functions of a parameter $\tau$,
\begin{align*} 
l_\tau(x) = \begin{cases} 
\ \ \alpha_\tau \: x & x \geq 0\\
- \beta_\tau \: x & x < 0 \:.
\end{cases}
\end{align*}
Assuming that the semi-derivatives $\dot \alpha^+_0 := \frac{d}{d\tau}^{\!+}_{|_0} \alpha_\tau$ and $\dot \beta^+_0 := \frac{d}{d\tau}^{\!+}_{|_0} \beta_\tau$ exist, we have
\begin{align*} 
\frac{d}{d\tau}^{\!+}_{|_0} l_\tau(x) = \begin{cases} 
\ \ \dot \alpha^+_0 \: x & x \geq 0\\
- \dot \beta^+_0 \: x & x < 0 \:,
\end{cases}
\end{align*}
giving
\[ \big(D^+_w - D^+_{-w}\big) \frac{d}{d\tau}^{\!+}_{|_0} l_\tau(0)  = \dot \alpha^+_0  - \dot \beta^+_0 \:. \]
In general, based on the theory of causal fermion systems, there is no reason to postulate a particular sign of $\dot \alpha^+_0$ or $\dot \beta^+_0$.
Hence also the second line in~\eqref{SdiffCase} does not have a fixed sign. (We come back to this point at the end of
Section~\ref{StochTermVan}.)
\end{Example}

\section{Stochastic Term Vanishes Macroscopically}\label{StochTermVan}

In the last section, we have seen that a stochastic term appears in the non-differentiable linearized field equations.
In this section, we show that a typical physical assumption implies that the stochastic term vanishes when averaged over macroscopic regions. Even so this assumption is not
enforced by the first principles of the theory, it seems to be supported a posteriori by the fact that most contemporary experiments seem to be aligned with the predictions of contemporary physics within experimental errors.

Before giving the assumption, we define a minimal notion of ``macroscopicity'' in the context of causal fermion systems. The idea is that for a given physical situation which is described by the theory, the data required by the following definition is given.

\begin{Def}\label{SDefMac}\textbf{\em(Macroscopicity)} We assume that the macroscopic regions of a minimizer $\rho$ of the causal variational principle  are
described by a subset $\mathscr M \subset \Sigma(\F)$.
\Chd{$\Sigma(\F)$ ist nach Gleichung~\eqref{SnablaCommutator} definiert.}%
Furthermore we assume that for a family~\eqref{Srhotau} of minimizers, the macroscopic regions
\chd{are invariant with respect to the diffeomorphisms $F_\tau$.}
(I.e., for every $\tilde \Omega \in \mathscr M$  and every $\tau \in (-\delta,\delta)$, we have
$ F_\tau( \tilde \Omega) \in \mathscr M_\tau$, where $\mathscr M_\tau \subset \Sigma(\F)$ denotes the macroscopic region associated with $\rho_\tau$.)
\Later{Was zu Diffeo onto subset sagen?}%
\Later{Brauchen wir $M \setminus \tilde \Omega \in \mathscr M$ for all $\tilde \Omega \in \mathscr M$?}%
\end{Def}

The next definition captures the physical intuition that when averaged over a macroscopic region, fluctuations form the different space-time points might cancel each other, or ``average out''. Here, this means that the 
sign-varying terms $\frac{1}{2} (D^+_w \ell  - D^+_{-w} \ell)(x)$ at different space-time points $x$ cancel each other's contribution 
to an integration over a macroscopic region $\tilde \Omega$.

\begin{Def} \label{SmacSymm0} \textbf{\em(Symmetric derivatives vanish macroscopically)}
We assume that for every minimizer~$\rho$ of the causal variational principle, every $\Phi \in C^\infty(\F \times (-\delta,\delta) \rightarrow \F)$ and every compact macroscopic
region $\tilde \Omega \in \mathscr M$,
\beq\label{SmacDiffMin0}
\frac{1}{2} \: \frac{d}{d s}^{\!+}_{|_0} \int_{\tilde \Omega} \Big( \ell \big(\Phi_s(x)\big) - \ell \big(\Phi_{-s}(x)\big)  \Big) \, d\rho(x) = 0 \:. 
\eeq
\end{Def}

Note that the restriction to compact $\tilde \Omega$ arises because Assumption~(v) guarantees existence of the left hand side
of~\eqref{SmacDiffMin0} only if $\tilde \Omega$ is compact.
\Later{ Was zu nicht-kompakten makroskopischen Regionen sagen. Exhaustion by compact sets und ggf. nutzen, dass Borel-Maß regulär ist.}

\begin{Remark}\em \Remt{(Symmetric derivatives vanish approximately)}
One might wonder why the symmetric derivatives should vanish exactly, as compared to vanishing approximately, when integrated over a macroscopic region.
Indeed, this is not crucial, the right hand side of~\eqref{SmacDiffMin0} could be replaced by~$\varepsilon$ for any~$\varepsilon \in \R^+_0$, 
or by $\rho(\tilde \Omega)^{-1}$, thus requiring 
that the symmetric derivatives only vanish approximately. The following proposition would still hold, with the right hand side of~\eqref{SchiOmega} replaced by~$\varepsilon$
or~$\rho(\tilde \Omega)^{-1}$, respectively.
\Later{Wie ist das bei den anderen Resultaten in den weiteren Sections?}%
\Later{Sagen dass $\rho(\tilde \Omega)^{-1}$ endlich ist, weil $\tilde \Omega$ kompakt ist und wir mit lokal endlichen Maßen arbeiten (Annahme (iii) oder so). Dann folgt schon dass jedes kompakte Gebiet ein endliches Maß hat, nach Definition vn `kompakt'.}%
\QEDrem
\end{Remark}

\begin{Remark}\em \Remt{(Relation to macroscopic differentiability)}
We note that even though~\eqref{SmacDiffMin0} implies that the derivative
$\frac{d}{d s}_{|_0} \int_{\tilde \Omega} \: ( \ell (\Phi_s(x)) - \ell  (\Phi_{-s}(x)) ) \: d\rho(x) $
exists and vanishes, this does not imply that the individual terms 
$ \frac{d}{d s}_{|_0} \int_{\tilde \Omega} \:  \ell (\Phi_s(x)) \: d\rho(x)$
exist. Therefore, Definition~\ref{SmacSymm0} is considerably weaker than the requirement of minimizers to be macroscopically differentiable
in the sense that $ \frac{d}{d s}_{|_0} \int_{\tilde \Omega} \:  \ell (\Phi_s(x)) \: d\rho(x)$ exists.
\QEDrem
\end{Remark}

The next proposition shows that the assumption of Definition~\ref{SmacSymm0} is sufficient to conclude that the stochastic term vanishes macroscopically. 

\begin{Prp}\label{SstochTermVanishes} Assume that symmetric derivatives vanish macroscopically and let $\v \in \J$ be the generator
of a family of minimizers~\eqref{Srhotau}. Then for every $\w \in \J$ and every compact $\tilde \Omega \in \mathscr M$, 
\beq\label{SchiOmega}
\int_{\tilde \Omega}  \chi_{\w,\v}(x) \: d\rho(x)  = 0 \: .
\eeq
\end{Prp}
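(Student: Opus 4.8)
The plan is to integrate the defining formula~\eqref{SchiA} for the stochastic term over $\tilde\Omega$ and to show that the resulting quantity vanishes by a direct application of the macroscopicity assumption in Definition~\ref{SmacSymm0}. First I would write, for $\w = (c,w) \in \J$ with flow $\Phi$ of $w$ and $\v=(b,v)$ generating the family~\eqref{Srhotau},
\[
\int_{\tilde\Omega} \chi_{\w,\v}(x)\, d\rho(x)
= \frac{1}{2}\int_{\tilde\Omega} \frac{d}{ds}^{\!+}_{|_0} \frac{d}{d\tau}^{\!+}_{|_0}
f_\tau(x)\,\Big(\ell_\tau\big(F_\tau(\Phi_s(x))\big) - \ell_\tau\big(F_\tau(\Phi_{-s}(x))\big)\Big)\, d\rho(x)\,.
\]
The first step is to justify pulling the two semi-derivatives out of the $\tilde\Omega$-integral. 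This is precisely what Assumption~(vi) provides once Assumption~(v) has guaranteed the relevant integrability: applying these with the integrand $h_\tau(x,y)$ built from $f$, $F$, $\Phi$ and using that $\tilde\Omega$ is compact, one obtains
\[
\int_{\tilde\Omega} \chi_{\w,\v}(x)\, d\rho(x)
= \frac{1}{2}\, \frac{d}{ds}^{\!+}_{|_0} \frac{d}{d\tau}^{\!+}_{|_0}
\int_{\tilde\Omega} f_\tau(x)\,\Big(\ell_\tau\big(F_\tau(\Phi_s(x))\big) - \ell_\tau\big(F_\tau(\Phi_{-s}(x))\big)\Big)\, d\rho(x)\,.
\]

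The second step is to identify the inner $\tau$-derivative. I expect that the change-of-variables structure already used in the proof of Proposition~\ref{SmodFieldEqA} (see~\eqref{SchiA1}, \eqref{SchiA5}) together with the fact that $\v$ generates a family of \emph{minimizers} (so $f_\tau(x)\,\ell_\tau(F_\tau(x)) \equiv 0$ and hence~\eqref{SchiA7} holds) will collapse the $\tau=0$ semi-derivative to something expressible purely in terms of $\ell$ on the support of $\rho$. Concretely, using $\ell|_M \equiv 0$, $f_0 = 1$, and the product rule for the semi-derivative, the $\tau$-derivative of $f_\tau(x)\,\ell_\tau(F_\tau(\Phi_{\pm s}(x)))$ at $\tau=0$ should reduce, after integrating over $\tilde\Omega$, to (a multiple of) $\int_{\tilde\Omega}\big(\ell(\Phi_s(x)) - \ell(\Phi_{-s}(x))\big)\,d\rho(x)$ plus terms that are themselves symmetric combinations, together with a $b$-weighted term which by the same argument vanishes because $\ell|_M \equiv 0$. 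The bookkeeping here is the analogue of the passage from~\eqref{SchiA1} to~\eqref{SchiA2} in the previous proof, now carried out under the integral sign over $\tilde\Omega$ rather than pointwise.

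The third step is then to invoke Definition~\ref{SmacSymm0} with $\Phi$ the flow of $w$: since $\tilde\Omega\in\mathscr M$ is compact and $F_\tau$ preserves macroscopic regions (Definition~\ref{SDefMac}), the assumption~\eqref{SmacDiffMin0} applies both to $\tilde\Omega$ and to $F_\tau(\tilde\Omega)\in\mathscr M_\tau$, and forces the remaining symmetric $s$-derivative to vanish. Taking the outer $\frac{d}{ds}^{\!+}_{|_0}$ of a quantity that is identically zero in $s$ then yields~\eqref{SchiOmega}. The main obstacle I anticipate is the second step: carefully keeping track of which terms survive when the two one-sided derivatives are applied in the presence of the weight $f_\tau$ and the diffeomorphism $F_\tau$, and making sure that every term that is not manifestly a macroscopic symmetric derivative is killed either by $\ell|_M\equiv 0$, by~\eqref{SchiA7}, or by the integrability/exchange hypotheses. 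One should also double-check that the compactness of $\tilde\Omega$ (needed for Assumption~(v) and for the applicability of~\eqref{SmacDiffMin0}) is used consistently, and — as noted in the remark following Definition~\ref{SmacSymm0} — that replacing the right-hand side of~\eqref{SmacDiffMin0} by $\varepsilon$ would propagate to give $\int_{\tilde\Omega}\chi_{\w,\v}\,d\rho = \varepsilon$, which is the only place where the ``exactly zero'' assumption is actually exploited.
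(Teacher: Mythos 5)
Your overall strategy---apply Definition~\ref{SmacSymm0} to every member $\rho_\tau$ of the family on the transported region $\tilde\Omega_\tau=F_\tau(\tilde\Omega)\in\mathscr M_\tau$ (using Definition~\ref{SDefMac}), then differentiate the resulting identity in $\tau$---is exactly the paper's, and your first step (exchanging the semi-derivatives with the $\tilde\Omega$-integration via Assumptions~(v),(vi)) is unproblematic. The gap is in your second step, and it is not mere bookkeeping. After the change of variables $z=F_\tau(x)$, what the macroscopic assumption controls is the double semi-derivative of $\int_{\tilde\Omega}f_\tau(x)\big(\ell_\tau(\Phi_s(F_\tau(x)))-\ell_\tau(\Phi_{-s}(F_\tau(x)))\big)\,d\rho(x)$, i.e.\ with the flows composed as $\Phi_s\circ F_\tau$ and the $s$-derivative taken first (this is \eqref{SmacSymmFam2}--\eqref{SmacSymmFam3}); whereas $\chi_{\w,\v}$ in \eqref{SchiA} is built from $\ell_\tau(F_\tau(\Phi_{\pm s}(x)))$, i.e.\ from $F_\tau\circ\Phi_s$ with the $\tau$-derivative taken first. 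Reconciling the two orders of composition (and of differentiation) is precisely the content of the computation \eqref{SympMac}, and it leaves a residual commutator term: one obtains $\int_{\tilde\Omega}\chi_{\w,\v}\,d\rho=\frac{1}{2}\int_{\tilde\Omega}\big(D^+_{[w,v]}-D^+_{-[w,v]}\big)\ell\,d\rho$, which does not vanish for free but requires a \emph{second, separate} application of the macroscopic assumption, namely \eqref{SmacDiffMin0infini} with $u=[w,v]$. Your phrase ``plus terms that are themselves symmetric combinations'' never identifies this term, and your step~3 then asserts the remaining symmetric $s$-derivative is killed directly---skipping the commutator entirely. Note also that \eqref{SmacDiffMin0} does not say the integral is ``identically zero in $s$''; it only gives vanishing of the semi-derivative at $s=0$, so the closing sentence of your step~3 misstates what is available.

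A second, smaller error is your claim that the $b$-weighted term ``vanishes because $\ell|_M\equiv 0$''. That term is $b(x)\big(\ell(\Phi_s(x))-\ell(\Phi_{-s}(x))\big)$, and $\Phi_s(x)\notin M$ in general for $s\neq 0$, so it does not vanish; after the outer $s$-semi-derivative it contributes $b(x)\,\big(D^+_w-D^+_{-w}\big)\ell(x)$, which is exactly one of the sign-varying pieces of the stochastic term. In the paper's argument this piece is not discarded: it appears identically in the expansions of both double derivatives in \eqref{SympMac} and therefore cancels in the comparison (alternatively it could be absorbed by applying the assumption to the smooth vector field $b\,w$, but ``$\ell|_M\equiv 0$'' is not a valid reason, since that identity only kills terms evaluated on $M$ itself, as in \eqref{SchiA7}). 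Once the commutator term is identified and disposed of as above, your outline coincides with the paper's proof.
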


\Proof For every $u \in \Gamma(T\F)$, Definition~\ref{SmacSymm0} and Assumption~(vi) imply
\beq\label{SmacDiffMin0infini}
\frac{1}{2} \int_{\tilde \Omega} \big( D^+_u - D^+_{-u} \big) \ell(x) \: d\rho(x) = 0 \:. 
\eeq
Denote $\w=(c,w)$ and let $\Phi$ be the flow of $w$. Since $(\rho_\tau)_{\tau \in (-\delta, \delta)}$, as defined
in~\eqref{Srhotau}, is a family of minimizers by assumption and since $\tilde \Omega$ compact implies that $\tilde \Omega_\tau$ is compact,~\eqref{SmacDiffMin0}, evaluated for $\rho_\tau$, reads
\beq \label{SmacSymmFam2}
\frac{1}{2} \: \frac{d}{d s}^{\!+}_{|_0} \int_{\tilde \Omega_\tau} \Big( \ell_\tau \big(\Phi_s(x)\big) - \ell_\tau \big(\Phi_{-s}(x)\big)  \Big) \, d\rho_\tau(x) = 0 \qquad \text{for all }\tau \in (-\delta,\delta) \: ,
\eeq
which in turn implies
\beq \label{SmacSymmFam3}
\frac{1}{2} \: \frac{d}{d \tau}^{\!+}_{|_0} \frac{d}{d s}^{\!+}_{|_0} \int_{\tilde \Omega_\tau} \Big( \ell_\tau \big(\Phi_s(x)\big)
 - \ell_\tau \big(\Phi_{-s}(x)\big)  \Big) \, d\rho_\tau(x) = 0 \: .
\eeq
The existence of this integral is guaranteed by Assumption~(v). We have
\begin{align}\begin{split}\label{SympMac}
 &  \: \frac{d}{d \tau}^{\!+}_{|_0} \frac{d}{d s}^+_{|_0} \int_{\tilde \Omega} f_\tau(x) \:  \ell_\tau \big(\Phi_s(F_\tau(x))\big) \: d\rho(x) \\
& =  \frac{d}{d \tau}^{\!+}_{|_0}  \int_{\tilde \Omega} f_\tau(x) \: D^+_{w|_{F_\tau(x)}}  \ell_\tau \big( F_\tau(x) \big) \: d\rho(x) \\
&= \int_{\tilde \Omega} \big( b(x) \:  D^{+}_w +  D^{+}_v  D^{+}_w \big) \ell(x) \: d\rho(x) + \int_{\tilde \Omega} \frac{d}{d\tau}^{\!+}_{|_0}  D^{+}_w \ell_\tau(x) \: d\rho(x)\\
& =  \int_{\tilde \Omega} \big( b(x) \: D^{+}_w +   D^{+}_w  D^{+}_v \big) \ell(x) \: d\rho(x) \\
& \quad  + \int_{\tilde \Omega} \frac{d}{d\tau}^{\!+}_{|_0}  D^{+}_w \ell_\tau(x) \: d\rho(x) + \int_{\tilde \Omega} D^{+}_{[v,w]}\ell(x) \: d\rho(x) \\
& =  \frac{d}{d s}^{\!+}_{|_0}  \int_{\tilde \Omega} \big( b(x) \: \ell(\Phi_s(x)) +  D^{+}_{v|_{\Phi_s(x)}} \ell(\Phi_s(x) \big) \: d\rho(x)  \\
& \quad  + \int_{\tilde \Omega} \frac{d}{d\tau}^{\!+}_{|_0} \frac{d}{ds}^{\!+}_{|_0}  \ell_\tau(\Phi_s(x)) \: d\rho(x) + \int_{\tilde \Omega} D^+_{[v,w]}\ell(x) \: d\rho(x) \\
& = \frac{d}{d s}^{\!+}_{|_0}  \frac{d}{d \tau}^{\!+}_{|_0} \int_{\tilde \Omega} 
f_\tau(x) \:  \ell_\tau \big (F_\tau(\Phi_s (x)) \big) \: d\rho(x) + \int_{\tilde \Omega} D^+_{[v,w]}\ell(x) \: d\rho(x) \: ,
\end{split}\end{align}
where we have used Assumption~(vi) to exchange differentiation and integration, and where in the last step we have used
\beq \label{SympMac2}
 \frac{d}{d\tau}^{\!+}_{|_0} \frac{d}{ds}^{\!+}_{|_0}  \ell_\tau(\Phi_s(x))  = \frac{d}{ds}^{\!+}_{|_0} \frac{d}{d\tau}^{\!+}_{|_0} \ell_\tau(\Phi_s(x)) \: ,
\eeq
which holds because the derivatives act on different variables (apparent upon expanding $\ell_\tau$).
Therefore,~\eqref{SmacSymmFam3} implies
\[
\int_{\tilde \Omega} \chi_{\w,\v}(x) \: d\rho(x) = \frac{1}{2} \: \int_{\tilde \Omega} \big( D^+_{[w,v]} - D^+_{-[w,v]} \big) \ell(x) \: d\rho(x) \: .
\]
Using~\eqref{SmacDiffMin0infini} for $u = [w,v]$ gives the result.
\QED

We conclude this section with the remark that Proposition~\ref{SstochTermVanishes} supports the interpretation of $\chi_{\w,\v}$ as a stochastic term.
To this end, recall that Example~\ref{SExStoch} shows that the first line in~\eqref{SdiffCase} has varying sign, but only implies that the second line
in~\eqref{SdiffCase} does not have a fixed sign (since $\dot \alpha^+_0$ and $\dot \beta^+_0$ do not both have the same sign in general).
The previous proposition shows that if the symmetric derivatives~\eqref{SmacDiffMin0} vanish macroscopically, also
the second line in \eqref{SdiffCase} vanishes macroscopically. This implies that it also needs to have varying sign in general.

Furthermore, we remark that the assumption in Definition~\ref{SmacSymm0} is also supported by Proposition~\ref{SConservLawMacroscopic} in Section~\ref{SNoether}, where we show that
if symmetric derivatives vanish macroscopically, the conservation laws of Chapter~\ref{DissNoether} also apply in the non-differentiable case considered here.

\section{Connection to the Differentiable Case}\label{SConDiff}

In this section, we study the connection to the differentiable case established in Chapter~\ref{DissJet}.
We give two different sets of assumptions which imply that the stochastic term~\eqref{SchiA} vanishes pointwise.
In our first approach, we adapt the Definition~\eqref{JJDiffDef} of differentiable jets to a family~\eqref{Srhotau}. In the second approach,
we work with~\eqref{JJtesttau} to extend differentiable jets $\Jdiff$ to $\tau \in (-\delta,\delta)$ using the push-forward~\eqref{Jpushforward}.

\begin{Def}\label{SdiffFjet} A jet $\w \in \J$ is \textbf{differentiable with respect to a family} $(\rho_\tau)_{\tau \in (-\delta,\delta)}$ of measures if
for all $\tau \in (-\delta,\delta)$ and all $x \in M$ we have
\[
(\nabla^+_{\w}\ell_\tau)(F_\tau(x)) =  (- \nabla^+_{-\w} \ell_\tau)(F_\tau(x)) \:.
\]
\end{Def}

\begin{Prp}\label{Sdiffchi0} Let $\w \in \J$ be differentiable with respect to a family of minimizers of the form~\eqref{Srhotau} whose generator is $\v$. If $[\w,\v] \in \Jdiff$,
for every $x \in M$, we have
\[
\chi_{\w,\v}(x) = 0 \:.
\]
\end{Prp}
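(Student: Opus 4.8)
The plan is to show that under the hypothesis of Proposition~\ref{Sdiffchi0}, the expression~\eqref{SchiA} for $\chi_{\w,\v}$ collapses term by term. First I would recall the decomposition~\eqref{SdiffCase},
\[
\chi_{\w,\v}(x) = \frac{1}{2}\, b(x)\, \frac{d}{ds}^{\!+}_{|_0}\Big( \ell\big(\Phi_s(x)\big) - \ell\big(\Phi_{-s}(x)\big)\Big) + \frac{1}{2}\,\frac{d}{ds}^{\!+}_{|_0}\,\frac{d}{d\tau}^{\!+}_{|_0}\Big(\ell_\tau\big(F_\tau(\Phi_s(x))\big) - \ell_\tau\big(F_\tau(\Phi_{-s}(x))\big)\Big),
\]
and observe that the first summand involves only $\tilde D_w \ell(x)$ evaluated at $x \in M$. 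Since $\w$ is differentiable with respect to the family of minimizers, the special case $\tau=0$ of Definition~\ref{SdiffFjet} gives $\nabla^+_\w \ell(x) = -\nabla^+_{-\w}\ell(x)$ for all $x \in M$, i.e. $\widetilde\nabla_\w \ell(x) = 0$; combined with $\ell|_M \equiv 0$ this forces $\tilde D_w\ell(x) = 0$, so the first summand vanishes.

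For the second summand, the key step is to recognize — exactly as in the computation~\eqref{SympMac2} leading to Proposition~\ref{SstochTermVanishes} — that the two semi-derivatives $\frac{d}{ds}^{\!+}_{|_0}$ and $\frac{d}{d\tau}^{\!+}_{|_0}$ act on genuinely different variables (this becomes manifest upon expanding $\ell_\tau$ as in~\eqref{Jelltau}), so they commute, and moreover the differentiability of $\w$ with respect to the family is precisely the statement that the inner $s$-derivative is a two-sided derivative: $\frac{d}{ds}^{\!+}_{|_0}(\ell_\tau(F_\tau(\Phi_s(x))) - \ell_\tau(F_\tau(\Phi_{-s}(x)))) = 2\, \widetilde D_{w}\big(\ell_\tau \circ F_\tau\big)|_x = 2\, D_w\big(\ell_\tau\circ F_\tau\big)|_x$ for all $\tau$ and all $x \in M$. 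I would then mimic the chain of equalities in~\eqref{SympMac}, using Assumption~(vi) only implicitly through Assumption~\ref{SA12} at the pointwise level (here no integration is needed), to rewrite
\[
\frac{1}{2}\,\frac{d}{d\tau}^{\!+}_{|_0}\frac{d}{ds}^{\!+}_{|_0}\Big(f_\tau(x)\,\ell_\tau(F_\tau(\Phi_s(x))) - f_\tau(x)\,\ell_\tau(F_\tau(\Phi_{-s}(x)))\Big) = \widetilde\nabla_{[\w,\v]}\,\ell(x)\, ,
\]
where the right-hand side is the symmetrized commutator-jet derivative of $\ell$ as in~\eqref{SnablaCommutator}. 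The algebraic rearrangement here is the same one that produced the term $\frac12(D^+_{[w,v]} - D^+_{-[w,v]})\ell$ in the proof of Proposition~\ref{SstochTermVanishes}, and I would cite that computation rather than redo it, taking care to track the weight-function contributions through $\nabla^+_\v$ using~\eqref{SchiA7}.

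Finally, the hypothesis $[\w,\v] \in \Jdiff$ means by~\eqref{JJDiffDef} that $\nabla^+_{[\w,\v]}\ell = -\nabla^+_{-[\w,\v]}\ell$, hence $\widetilde\nabla_{[\w,\v]}\ell(x) = \nabla_{[\w,\v]}\ell(x)$, and the weak EL equations~\eqref{JWElDiff} (valid since $[\w,\v]\in\Jdiff$) give $\nabla_{[\w,\v]}\ell(x) = 0$ for all $x \in M$. Combining the vanishing of both summands yields $\chi_{\w,\v}(x) = 0$. The main obstacle I anticipate is bookkeeping: one must be careful that the semi-derivative $\frac{d}{d\tau}^{\!+}_{|_0}$ genuinely distributes over the product $f_\tau(x)\,\ell_\tau(F_\tau(\cdots))$ and that the term $(D_w b)(x)\,\ell(x)$-type pieces either cancel or vanish because $\ell|_M\equiv 0$; this requires invoking $f_\tau(x)\,\ell_\tau(F_\tau(x))\equiv 0$ (which holds since $\v$ generates minimizers) together with the differentiability of the scalar components $b,c$, just as in the passage from~\eqref{SchiA3} to~\eqref{SchiA4}. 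Since all the required manipulations already appear in the proofs of Proposition~\ref{SmodFieldEqA} and Proposition~\ref{SstochTermVanishes}, the proof should be short and primarily a matter of assembling those pieces correctly.
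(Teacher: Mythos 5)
Your overall skeleton (kill the raw $\tilde D_w\ell$-terms using differentiability plus minimality, reduce $\chi_{\w,\v}$ to a symmetrized commutator derivative, then kill that using $[\w,\v]\in\Jdiff$ and the EL equations) is the paper's route, and your final identity is exactly the paper's intermediate result~\eqref{SDiffCase5}, $\chi_{\w,\v}(x)=\tfrac12\big(D^+_{[w,v]}-D^+_{-[w,v]}\big)\ell(x)$. But the justification of the central step has a genuine gap. You claim (a) that the $s$- and $\tau$-semi-derivatives in the second summand of~\eqref{SdiffCase} commute ``because they act on different variables, as in~\eqref{SympMac2}'', and (b) that Definition~\ref{SdiffFjet} is ``precisely'' the statement that $\frac{d}{ds}^{\!+}_{|_0}\ell_\tau\big(F_\tau(\Phi_s(x))\big)$ is a two-sided derivative equal to $D_w(\ell_\tau\circ F_\tau)(x)$. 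Both are incorrect. The commutation argument~\eqref{SympMac2} applies to $\ell_\tau(\Phi_s(x))$, where the $\tau$-dependence sits only in the measure and the $s$-dependence only in the point; in $\ell_\tau\big(F_\tau(\Phi_s(x))\big)$ the parameter $\tau$ enters both the measure and the point map $F_\tau$, which is \emph{composed} with $\Phi_s$, and it is exactly the non-commutativity of the flows $F_\tau$ and $\Phi_s$ that produces the commutator term $D^+_{[v,w]}\ell$ in the chain~\eqref{SympMac}. Likewise, Definition~\ref{SdiffFjet} concerns $(D_w\ell_\tau)(F_\tau(x))$, i.e.\ the derivative along the curve $s\mapsto\Phi_s(F_\tau(x))$ (direction $w|_{F_\tau(x)}$), \emph{not} along $s\mapsto F_\tau(\Phi_s(x))$, whose velocity is the push-forward $DF_\tau|_x\,w(x)$; the paper stresses that these differ (Example~\ref{Jexsphere} and the closing remark of Section~\ref{SConDiff}). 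If your claims (a) and (b) were literally true, the commutator term would never appear and the hypothesis $[\w,\v]\in\Jdiff$ would be superfluous --- yet you need it at the end, which signals that the step as you justify it cannot be right.

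The repair is the paper's argument: from Definition~\ref{SdiffFjet} and the fact that each $\rho_\tau$ is a minimizer you get $(D_w\ell_\tau)(F_\tau(x))=0$ for all $\tau$ (this is~\eqref{SDiffCase3}), hence
\[
\frac{d}{d\tau}^{\!+}_{|_0}\frac{d}{ds}^{\!+}_{|_0}\, f_\tau(x)\,\Big(\ell_\tau\big(\Phi_s(F_\tau(x))\big)-\ell_\tau\big(\Phi_{-s}(F_\tau(x))\big)\Big)=0 ,
\]
note the composition order $\Phi_s\circ F_\tau$ here. Only then does one run the chain of equalities~\eqref{SympMac} (pointwise, no integration needed), which exchanges the order of the flows and of the derivatives at the cost of the term $D^+_{[v,w]}\ell(x)$, yielding~\eqref{SDiffCase5}; the step~\eqref{SympMac2} is used only for the intermediate object $\ell_\tau(\Phi_s(x))$, where the variables genuinely separate. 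Your treatment of the first summand and of the final step ($[\w,\v]\in\Jdiff$ plus $\ell|_M\equiv0=\inf_\F\ell$ forcing the existing derivative $D_{[w,v]}\ell(x)$ to vanish) is fine, apart from the loose phrasing that equality of the two semi-derivatives ``i.e.'' gives $\widetilde\nabla_\w\ell(x)=0$: equality only gives existence of the derivative; its vanishing needs the minimality of $x$, i.e.\ the EL inequalities~\eqref{SELgk}.
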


\Proof Denote $\w = (c,w)$ and let $\Phi$ be the flow of $w$.
Since $\w$ is differentiable with respect to the family~\eqref{Srhotau}, for every $x \in M$ and all $\tau \in (-\delta,\delta)$, $(D_w \ell_\tau)(F_\tau(x))$ exists.
Since the family consist of minimizers, we have 
\beq\label{SDiffCase3}
(D_w \ell_\tau)(F_\tau(x)) = 0
\eeq
and therefore
\[
\frac{1}{2} \: \frac{d}{ds}^{\!+}_{|_0} \Big( \ell_\tau \big (\Phi_s (F_\tau(x))\big)   -  \ell_\tau \big(\Phi_{-s}(F_\tau(x))\big)  \Big)
=  \frac{d}{ds}_{|_0} \ell_\tau \big (\Phi_s (F_\tau(x))\big) = 0
\]
for all $\tau \in (-\delta,\delta)$.
This implies
\beq\label{SDiffCase}
\frac{1}{2} \: \frac{d}{d\tau}^{\!+}_{|_0} \frac{d}{ds}^{\!+}_{|_0} f_\tau(x) \Big( \ell_\tau \big (\Phi_s (F_\tau(x))\big)   -  \ell_\tau \big(\Phi_{-s}(F_\tau(x))\big)  \Big) =  0 \: .
\eeq
Arguing as in~\eqref{SympMac}, we have
\begin{align*}\begin{split}
 &  \: \frac{d}{d \tau}^{\!+}_{|_0} \frac{d}{d s}^+_{|_0}  f_\tau(x) \:  \ell_\tau \big(\Phi_s(F_\tau(x))\big)  
 =  \frac{d}{d \tau}^{\!+}_{|_0}  f_\tau(x) \: D^+_{w|_{F_\tau(x)}}  \ell_\tau \big( F_\tau(x) \big) \\
&=  \big( b(x) \:  D^{+}_w +  D^{+}_v  D^{+}_w \big) \ell(x) +  \frac{d}{d\tau}^{\!+}_{|_0}  D^{+}_w \ell_\tau(x) \\
& =  \big( b(x) \: D^{+}_w +   D^{+}_w  D^{+}_v \big) \ell(x) + \frac{d}{d\tau}^{\!+}_{|_0}  D^{+}_w \ell_\tau(x) +  D^{+}_{[v,w]}\ell(x) \\
& =  \frac{d}{d s}^{\!+}_{|_0} \big( b(x) \: \ell(\Phi_s(x)) +  D^{+}_{v|_{\Phi_s(x)}} \ell(\Phi_s(x) \big)    + \frac{d}{d\tau}^{\!+}_{|_0} \frac{d}{ds}^{\!+}_{|_0}  \ell_\tau(\Phi_s(x))+  D^+_{[v,w]}\ell(x) \\
& = \frac{d}{d s}^{\!+}_{|_0}  \frac{d}{d \tau}^{\!+}_{|_0} 
f_\tau(x) \:  \ell_\tau \big (F_\tau(\Phi_s (x)) \big) + D^+_{[v,w]}\ell(x) \: ,
\end{split}\end{align*}
where in the last step we have again used~\eqref{SympMac2}.
Therefore,~\eqref{SDiffCase} implies
\beq \label{SDiffCase5}
\chi_{\w,\v}(x) = \frac{1}{2} \big( D^+_{[w,v]}\ell(x) - D^+_{-[w,v]}\ell(x) \big) \:.
\eeq
(We remark that this equation can also be obtained by noting that~\eqref{SDiffCase3} implies
\begin{align}\begin{split}\label{SDiffCase4}
0 &= \frac{d}{d\tau}_{|_0}  \frac{d}{ds}_{|_0} \ell_\tau \big (\Phi_s (F_\tau(x))\big) = \frac{d}{d\tau}_{|_0}  D_{w|_{F_\tau(x)}} \ell_\tau \big (F_\tau(x)\big) \\
&= D_v D_w \ell(x) +  \frac{d}{d\tau}_{|_0}  D_{w} \ell_\tau \big (x\big) \: .
\end{split}\end{align}
Using~\eqref{SchiA2} and the definition of $\ell_\tau$, for the last term we have
\begin{align*}
\frac{d}{d\tau}_{|_0}  D_{w} \ell_\tau \big (x\big) &= \frac{d}{d\tau}_{|_0}  D_{w} \, \Big( \int_M \L \big(x,F_\tau(y) \big)  \: f_\tau(y) \: d\rho(y) - \frac{\nu}{2} \: \Big) \\
& = D_{w} \int_M \nabla_{2,\v} \: \L \big(x,y \big)  \: d\rho(y) \\
& = - \frac{1}{2} \: \big( D^+_w  - D^+_{-w} \big) \int_{M} d\rho(y) \: \nabla^+_{1,\v} \: \L \big( x, y \big) 
+  (D_w b)(x) \: \frac{\nu}{2} + \chi_{\w,\v}(x) \\
& = - \frac{1}{2} \: \big( D^+_w  - D^+_{-w} \big) \nabla^+_{1,\v}  \Big( \ell(x) + \frac{\nu}{2} \Big)
+  (D_w b)(x) \: \frac{\nu}{2} + \chi_{\w,\v} (x)\\
& \stackrel{(\star)}{=}  - \frac{1}{2} \: \big( D^+_w  - D^+_{-w} \big) D^+_{v}  \ell (x) + \chi_{\w,\v} (x)
\end{align*}
where in $(\star)$ we have used $D_w b(x) \ell(x) = 0$ by~\eqref{SDiffCase3} and since $\rho$ is a minimizer.
Thus~\eqref{SDiffCase4} implies
\begin{align*}
0 &= \frac{1}{2} \: D^+_v \big( D^+_w  - D^+_{-w} \big) \ell(x) - \frac{1}{2} \: \big( D^+_w  - D^+_{-w} \big) D^+_{v}  \ell(x)  + \chi_{\w,\v} (x) \\
& = \frac{1}{2} \big( D^+_{[v,w]} - D^+_{-[v,w]} \big) \ell(x)  + \chi_{\w,\v} (x) \: ,
\end{align*}
giving~\eqref{SDiffCase5}.)

Since $[\w,\v] \in \Jdiff$, $D_{[w,v]} \ell(x)$ exists, and since $\rho$ is a minimizer, this implies that
\[ \frac{1}{2} \big( D^+_{[w,v]}\ell(x) - D^+_{-[w,v]}\ell(x) \big)  = D_{[w,v]} \ell(x) = 0 \]
for every $x \in M$. Thus~\eqref{SDiffCase5} gives the result. \QED

This concludes the first approach to show that the stochastic term vanishes pointwise. 
For the second approach, working with the push-forward~\eqref{JJtesttau}, the following definition is necessary.
It allows us to exchange the order of differentiation in the proof of Proposition~\ref{SDiffCase_2}.

\begin{Def} \label{Ssuitdiff} A transformation~$\Phi \in  C^\infty\big((-s_{\max}, s_{\max}) \times \F \rightarrow \F \big) $ is
\Chd{Ich würde den Namen ``sufficiently differentiable'' in der Diss lassen, und dann in der Publikation ändern, falls das iO ist?}%
{\textbf{sufficiently differentiable}} with respect to a family~\eqref{Srhotau} if the
partial derivatives
\[ \partial_\tau (\ell \circ F \circ \Phi ), \ \partial_s (\ell \circ F \circ \Phi) \text{ and } \: \partial_\tau \partial_s (\ell \circ F \circ \Phi) \]
exist in $(-\delta,\delta) \times (-s_{\max}, s_{\max}) $ and if $\partial_\tau \partial_s (\ell \circ F \circ \Phi)$
is continuous in $s$ and $\tau$ at the point $(s,\tau) = (0,0)$.
\Later{Bei Gelegenheit mal überlegen, ob wir vielleicht ein \textcolor{lightred}{``besseres''
Theorem zum vertauschen von \emph{Semi-}Ableitungen} geben können. - Also mit weniger Annahmen.}%
\Later{Bemerkung: Falls wir im Beweis von Prp.~\ref{SmodFieldEqA} die $s$ und $\tau$-Ableitungen vertauschen würden, bräuchten
wir diese Annahmen die ganze Zeit nicht. Allerdings sähen dann die Feldgleichungen anders aus.}
\end{Def}

\begin{Prp}\label{SDiffCase_2} Let $\w=(c,w)  \in \Jdiff$. Let $\v \in \J$ be the generator of a family~\eqref{Srhotau} of 
minimizers which satisfies \beq\label{SweakNull}
\nabla_{\w(\tau)} \ell_\tau(z) = 0 \qquad \text{for all~$z \in M_\tau$} \: ,
\eeq
where $\w(\tau)$ is the push-forward~\eqref{Jpushfdef} of $\w$.
Assume that the flow $\Phi$ of $w$ is sufficiently differentiable
with respect to the family~\eqref{Srhotau}.
Then
\[
\chi_{\w,\v}(x) = 0 \:.
\]
\end{Prp}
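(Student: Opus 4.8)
The plan is to show that the double semi-derivative defining $\chi_{\w,\v}(x)$ in~\eqref{SchiA} vanishes by exploiting the hypothesis~\eqref{SweakNull}, which says that $\w(\tau)$ tests the Euler--Lagrange equations of $\rho_\tau$ to zero for all $\tau$. The key point is that~\eqref{SweakNull}, combined with the argument already used in Lemma~\ref{Jlemmalinlip} (multiplying by $f_\tau(x)$ and absorbing the $D_u\log f_\tau$ term into the push-forward), implies
\[
\nabla_\w \Big( \int_M f_\tau(x)\, \L\big(F_\tau(x),F_\tau(y)\big)\, f_\tau(y)\, d\rho(y) - \frac{\nu}{2}\, f_\tau(x) \Big) = 0 \qquad \text{for all } \tau \in (-\delta,\delta)\,.
\]
Equivalently, writing $\w=(c,w)$ with flow $\Phi$ and using $\ell|_M \equiv 0$ together with $f_\tau(x)\,\ell_\tau(F_\tau(x)) \equiv 0$ (which holds since $\v$ generates a family of minimizers, cf.~\eqref{SchiA7}), one obtains that the function
$s \mapsto f_\tau(x)\,\ell_\tau\big(F_\tau(\Phi_s(x))\big)$ has vanishing $s$-derivative at $s=0$, for every fixed $\tau$ and every $x\in M$. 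Since $\w\in\Jdiff$ and the flow $\Phi$ is sufficiently differentiable in the sense of Definition~\ref{Ssuitdiff}, this $s$-derivative exists (not merely as a semi-derivative), so
\[
\frac{d}{ds}_{|_0}\, f_\tau(x)\,\ell_\tau\big(F_\tau(\Phi_s(x))\big) = 0 \qquad \text{for all } \tau\in(-\delta,\delta)\,.
\]

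First I would make this last display precise: differentiate~\eqref{SweakNull}, push the scalar component through as in~\eqref{SchiA5}, and use that for $\w\in\Jdiff$ the directional derivative of $\ell_\tau\circ F_\tau$ in direction $w$ exists, so the right semi-derivative in $s$ coincides with the genuine derivative and vanishes by the minimality of $\rho_\tau$. Next, since by Definition~\ref{Ssuitdiff} the mixed partial $\partial_\tau\partial_s(\ell\circ F\circ\Phi)$ exists on $(-\delta,\delta)\times(-s_{\max},s_{\max})$ and is continuous at $(0,0)$, I may apply the standard symmetry-of-mixed-partials / interchange argument to conclude
\[
\frac{d}{d\tau}^{\!+}_{|_0}\frac{d}{ds}^{\!+}_{|_0}\, f_\tau(x)\,\ell_\tau\big(F_\tau(\Phi_s(x))\big) = 0\,,
\]
because taking the outer $\tau$-derivative of a function that is identically zero in $s$ near $s=0$ gives zero. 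Then, repeating verbatim the computation~\eqref{SympMac} (dropping the $\int_{\tilde\Omega}$ and reading it pointwise in $x$), which rearranges the mixed semi-derivative using $\frac{d}{d\tau}\frac{d}{ds}\ell_\tau(\Phi_s(x)) = \frac{d}{ds}\frac{d}{d\tau}\ell_\tau(\Phi_s(x))$ (valid since these act on distinct variables, cf.~\eqref{SympMac2}), I obtain
\[
0 = \frac{d}{ds}^{\!+}_{|_0}\frac{d}{d\tau}^{\!+}_{|_0}\, f_\tau(x)\,\ell_\tau\big(F_\tau(\Phi_s(x))\big) + D^+_{[v,w]}\ell(x)\,.
\]
Symmetrizing this identity in $\pm s$ (i.e.\ subtracting the same expression with $\Phi_{-s}$) and comparing with~\eqref{SchiA} yields
\[
\chi_{\w,\v}(x) = \frac{1}{2}\big( D^+_{[w,v]}\ell(x) - D^+_{-[w,v]}\ell(x)\big)\,,
\]
exactly as in~\eqref{SDiffCase5}. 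Finally, since $\w\in\Jdiff$ and~\eqref{SweakNull} forces the push-forward commutator structure to be differentiable as well — more precisely, one checks that $[\w,\v]$ acts differentiably on $\ell$, so $D_{[w,v]}\ell(x)$ exists — the minimality of $\rho$ (i.e.\ $D^+_{[w,v]}\ell(x) = D^-_{[w,v]}\ell(x) = D_{[w,v]}\ell(x) = 0$ on $M$, by~\eqref{SELgk}) gives $\chi_{\w,\v}(x)=0$.

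The main obstacle I anticipate is the interchange of the two (semi-)derivatives $\tfrac{d}{d\tau}$ and $\tfrac{d}{ds}$: unlike the interchange with integration (handled by Assumption~(vi)), there is no black-box theorem for semi-derivatives, which is precisely why Definition~\ref{Ssuitdiff} was introduced. I would handle this carefully, using that for $\w\in\Jdiff$ combined with~\eqref{SweakNull} the inner object is actually differentiable (not just semi-differentiable) in $s$ for each $\tau$, so the continuity of the mixed partial at $(0,0)$ from Definition~\ref{Ssuitdiff} is enough to justify the swap; a secondary, more bookkeeping-heavy point is verifying that $[\w,\v]$ really lands in a space on which $\ell$ is differentiable, which follows by expanding the commutator~\eqref{SnablaCommutator} and using that both the scalar components and $w$ act differentiably while the $v$-direction derivative of $\ell$ exists by hypothesis on $\v$. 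I would also note, as the excerpt does in the parenthetical after~\eqref{SDiffCase4}, that the same conclusion can be reached by the alternative route of inserting Proposition~\ref{SmodFieldEqA} (equation~\eqref{SchiA2}) into the expansion of $\tfrac{d}{d\tau}D_w\ell_\tau$, which provides a useful consistency check but relies on the same interchange.
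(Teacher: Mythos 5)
Your derivation is solid up to the paper's intermediate identity \eqref{SDiffCase_21}: evaluating \eqref{SweakNull} at $z=F_\tau(x)$, using $\ell_\tau(F_\tau(x))=0$ and the fact that the push-forward vector $(F_\tau)_*w$ is tangent to the curve $s\mapsto F_\tau(\Phi_s(x))$, indeed gives $\frac{d}{d s}_{|_0}\, f_\tau(x)\,\ell_\tau\big(F_\tau(\Phi_s(x))\big)=0$ for every $\tau$, hence the vanishing of $\frac{d}{d \tau}^{\!+}_{|_0}\frac{d}{d s}^{\!+}_{|_0}$ of that expression; this is even slightly more direct than the paper's route through Lemma~\ref{Jlemmalinlip} and \eqref{SchiA4}. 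The gap is in your concluding step. The computation \eqref{SympMac} cannot be ``repeated verbatim'' here: its left-hand side is the mixed semi-derivative of $f_\tau(x)\,\ell_\tau\big(\Phi_s(F_\tau(x))\big)$, with composition order $\Phi_s\circ F_\tau$, and the commutator $D^+_{[v,w]}\ell$ it produces arises precisely from trading that composition order for $F_\tau\circ\Phi_s$ — not from swapping the order of the $s$- and $\tau$-derivatives. The quantity you have shown to vanish is the one with composition $F_\tau\circ\Phi_s$; the one with composition $\Phi_s\circ F_\tau$ is controlled only under the hypothesis of Proposition~\ref{Sdiffchi0} (Definition~\ref{SdiffFjet}), not under \eqref{SweakNull}. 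So your displayed identity $0=\frac{d}{d s}^{\!+}_{|_0}\frac{d}{d \tau}^{\!+}_{|_0} f_\tau(x)\,\ell_\tau\big(F_\tau(\Phi_s(x))\big)+D^+_{[v,w]}\ell(x)$ is unjustified, and even granting it you would still need $D_{[w,v]}\ell$ to exist so that the symmetric semi-derivative vanishes on $M$ — that is exactly the extra hypothesis $[\w,\v]\in\Jdiff$ of Proposition~\ref{Sdiffchi0}, which is not assumed here and does not follow from $\w\in\Jdiff$ and $\v$ generating the family: for a Lipschitz $\ell$, directional (semi-)differentiability in the directions $w$ and $v$ says nothing about the new direction $[w,v]$. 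If your chain were valid it would in addition prove $\tilde D_{[w,v]}\ell(x)=0$ under the stated hypotheses, a conclusion the proposition does not make.

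The correct last step — and the sole purpose of Definition~\ref{Ssuitdiff}, which you invoke only where it is not needed (differentiating in $\tau$ a function that vanishes identically in $\tau$) — is to interchange the two derivatives of $f_\tau(x)\,\ell_\tau\big(F_\tau(\Phi_s(x))\big)$ itself: by Definition~\ref{Ssuitdiff} and the classical mixed-partials theorem (Rudin, Theorem~9.41), $0=\frac{d}{d \tau}_{|_0}\frac{d}{d s}_{|_0} f_\tau(x)\,\ell_\tau\big(F_\tau(\Phi_s(x))\big)=\frac{d}{d s}_{|_0}\frac{d}{d \tau}_{|_0} f_\tau(x)\,\ell_\tau\big(F_\tau(\Phi_s(x))\big)$, and since the full $s$-derivative of the inner $\tau$-derivative exists, it coincides with the symmetrized expression $\frac{1}{2}\,\frac{d}{d s}^{\!+}_{|_0}\frac{d}{d \tau}^{\!+}_{|_0} f_\tau(x)\big(\ell_\tau\big(F_\tau(\Phi_s(x))\big)-\ell_\tau\big(F_\tau(\Phi_{-s}(x))\big)\big)$, which is precisely $\chi_{\w,\v}(x)$. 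This gives $\chi_{\w,\v}(x)=0$ directly, with no commutator term and no assumption on $[\w,\v]$; please replace the \eqref{SympMac}-based passage by this interchange argument.
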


\Proof
From $\w \in \Jdiff$, it follows that $\tilde{\w}:=(c+(D_w \log f_\tau), \,w) \in \Jdiff$. Hence the proof of Lemma~\ref{Jlemmalinlip} 
applies and we have
\begin{align*}
0 &= \frac{d}{d\tau}_{|_0} \nabla_{\w} \bigg( \int_M f_\tau(x) \:\L\big(F_\tau(x), F_\tau(y) \big)\: f_\tau(y)\: d\rho(y) 
-\frac{\nu}{2} \: f_\tau(x) \bigg) \\
&= \frac{d}{d\tau}_{|_0} \frac{d}{ds}_{|_0} \bigg( \int_M f_\tau(\Phi_s(x)) \:\L\big(F_\tau(\Phi_s(x)), F_\tau(y) \big)\: f_\tau(y)\: d\rho(y) 
-\frac{\nu}{2} \: f_\tau(\Phi_s(x)) \bigg) \\
& \quad +  c(x) \bigg( \int_M \frac{d}{d\tau}_{|_0} f_\tau(x) \:\L\big(F_\tau(x), F_\tau(y) \big)\: f_\tau(y)\: d\rho(y) 
-\frac{\nu}{2} \: b(x) \bigg) 
\end{align*}
The last line vanishes by~\eqref{SchiA4}. 
Hence we are left with
\begin{align}\begin{split}\label{SDiffCase_21}
0 &=   \frac{d}{d\tau}_{|_0} \frac{d}{ds}_{|_0} \bigg( \int_M f_\tau(\Phi_s(x)) \:\L\big(F_\tau(\Phi_s(x)), F_\tau(y) \big)\: f_\tau(y)\: d\rho(y) 
-  \frac{\nu}{2}  \: f_\tau(\Phi_s(x)) \bigg)  \\
&=   \frac{d}{d\tau}_{|_0} \frac{d}{ds}_{|_0} \bigg( f_\tau(\Phi_s(x)) \: \ell_\tau\big(F_\tau(\Phi_s(x)) \big) + \frac{\nu}{2}  \: f_\tau(\Phi_s(x))
-  \frac{\nu}{2}  \: f_\tau(\Phi_s(x)) \bigg)  \\
&=   \frac{d}{d\tau}_{|_0} \frac{d}{ds}_{|_0} f_\tau(x) \: \ell_\tau\big(F_\tau(\Phi_s(x)) \big) 
+ \frac{d}{d\tau}_{|_0} \frac{d}{ds}_{|_0} f_\tau(\Phi_s(x)) \: \ell_\tau\big(F_\tau(x) \big) \\
&=   \frac{d}{d\tau}_{|_0} \frac{d}{ds}_{|_0} f_\tau(x) \: \ell_\tau\big(F_\tau(\Phi_s(x)) \big) \: ,
\end{split}\end{align}
where in the first step we have used the definition of $\ell_\tau$ and in the last step we have argued similarly as in~\eqref{SchiA7} to conclude that
\[
\frac{d}{d\tau}_{|_0} \frac{d}{ds}_{|_0} f_\tau(\Phi_s(x)) \: \ell_\tau\big(F_\tau(x) \big)  = 0
\]
(using that~\eqref{Srhotau} consists of minimizers  by assumption).
The conditions in Definition~\ref{Ssuitdiff} imply that the assumptions of Theorem~9.41 of~\cite{RudinAna} are satisfied so that we can exchange the $s$- and $\tau$-derivative
in~\eqref{SDiffCase_21},
\begin{align*}
0 &=  \frac{d}{ds}_{|_0} \frac{d}{d\tau}_{|_0}  f_\tau(x) \: \ell_\tau\big(F_\tau(\Phi_s(x)) \big) = \frac{1}{2} \Big( \frac{d}{ds}^{\!+}_{|_0}  + \frac{d}{ds}^{\!-}_{|_0} \Big) \frac{d}{d\tau}_{|_0}  f_\tau(x) \: \ell_\tau\big(F_\tau(\Phi_s(x)) \big) \\
& = \frac{1}{2} \: \frac{d}{d s}^{\!+}_{|_0}  \frac{d}{d \tau}^{\!+}_{|_0}  \: 
f_\tau(x) \: \Big( \ell_\tau \big (F_\tau(\Phi_s (x)) \big)   -  \ell_\tau \big( F_\tau(\Phi_{-s}(x))\Big) = \chi_{\w,\v}(x)  \: ,
\end{align*}
giving the result.
\QED

\begin{Remark}\em \Remt{(Interpretation of Propositions~\ref{Sdiffchi0} and~\ref{SDiffCase_2})}
The above results show that the non-differentiable
linearized field equations~\eqref{SlinFEqA} indeed generalize the linearized field equations~\eqref{Jeqlinlip}
in the sense that either of the assumptions of\\[-1.8em]
\begin{itemize}
\itemsep-.2em
\itemD $\w \in \J$ being differentiable with respect to the family~\eqref{Srhotau} generated by $\v$ and $[\w,\v] \in \Jdiff$ (Proposition~\ref{Sdiffchi0}), or
\itemD the flow of the vector component of $\w$ being sufficiently differentiable and the push-forward $\w(\tau)$ being a differentiable jet for every $\tau \in (-\delta,\delta)$
(Proposition~\ref{SDiffCase_2})\\[-1.8em]
\end{itemize} 
imply that the stochastic term vanishes. The fact that it does not suffice to demand $\w \in \Jdiff$ to obtain the linearized field equations
from the non-differentiable linearized field equations arises because in order to derive~\eqref{Jeqlinlip} from~\eqref{Jfinal}
one needs to exchange the order of differentiation as in the proof of Proposition~\ref{SDiffCase_2}. This is implicit in Definition~\ref{Jdeflin}. Indeed, Definition~\ref{Jdeflin}
\Chd{``it'' $\rightarrow$ `` Definition~\ref{Jdeflin}''}%
does imply that for every solution of the linearized field equations we have $\chi_{\w,\v}(x) = 0 $ for all $x \in M$ if $\w \in \Jdiff$.

We remark that Example~\ref{Jexsphere} shows that the push-forward $\Jdiff_\tau$, defined as in~\eqref{JJtesttau}, and Definition~\ref{SdiffFjet} do not agree in general.
Thus the two approaches considered in this section are distinct. 
Since the push-forward $\w(\tau)$ of a differentiable jet $\w$ is not in general differentiable with respect to the member $\rho_\tau$ of the family~\eqref{Srhotau}, 
Definition~\ref{SdiffFjet} and Proposition~\ref{Sdiffchi0} could be considered to be more adequate than the Assumption~\eqref{SweakNull} in Proposition~\ref{SDiffCase_2}.
\QEDrem
\end{Remark}

\section{The Symplectic Form and Hamiltonian Time Evolution}\label{SympStoch}

In this section, we study the symplectic form introduced in Section~\ref{JlscSymp}, in the context of the non-differentiable linearized field equations.
Thus for any compact $\Omega \in \Sigma(\F)$, we consider the mapping
\beq \label{SOSIlipNonDiff}
\sigma_\Omega(\u, \v) = \int_\Omega d\rho(x) \int_{M \setminus \Omega} d\rho(y)\:
\sigma_{\u, \v}(x,y) 
\eeq
with
\begin{align}\begin{split} \label{SOSIIntegrandNonDiff}
&\sigma_{\u, \v}(x,y) := \frac{1}{4} \sum_{s,s'=\pm} \sigma^{s,s'}_{\u, \v}(x,y) \qquad \textrm{ and}  \\
&\sigma^{s,s'}_{\u, \v}(x,y) = \nabla^s_{1,\u} \nabla^{s'}_{2,\v} \L(x,y) - \nabla^{s'}_{1,\v} \nabla^s_{2,\u} \L(x,y) 
\end{split}\end{align}
as defined in~\eqref{JOSIlip} and~\eqref{JlscSympBilinEq}. (Recall that the name ``symplectic form''
for this mapping is justified by Section~\ref{JSecSmooth}, where we show that if $\L$ and $\ell$ are smooth,~\eqref{SOSIlipNonDiff} indeed is a symplectic form.)

In Chapter~\ref{DissJet}, we have found that, given assumptions to ensure the existence of certain terms, for
all $\u,\v$ which are differentiable jets and solutions of the linearized field equations,
this symplectic form vanishes,
\beq\label{SOSIlip0}
\sigma_\Omega(\u,\v) = 0,
\eeq
(Theorem~\ref{JthmOSIlip}) and is bilinear (Proposition~\ref{JlscSympBilin}).
As explained in Section~\ref{JIntr} of the introduction and in Section~\ref{JSecSympForm}, 
taking the limit indicated in Figure~\ref{Jfigjet1} and assuming suitable decay properties of jets,~\eqref{SOSIlip0} gives rise to a surface layer integral which is conserved under the evolution of the linearized
field equations, the \emph{Hamiltonian time evolution}.

The next theorem specifies how the stochastic term~\eqref{SchiA} relates to the Hamiltonian time evolution. 

\begin{Thm}\label{SthmOSIlipNonDiff}
Let $\w$ and $\v$ be solutions of the non-differentiable linearized field equations~\eqref{SlinFEqA}.
Then for any compact~$\Omega \in \Sigma(\F)$, the symplectic form~\eqref{SOSIlipNonDiff}
satisfies
\begin{align}\label{SsympNonDiff}
\sigma_{ \Omega}(\w, \v) &= \int_{ \Omega} d\rho(x) \: \widetilde \chi_{\w,\v}(x)  - \int_{ \Omega} d\rho(x) \:  \wnabla_{[\w,\v]} \: \ell(x)  \: ,
\end{align}
where
\beq\label{SantiSym5}
\widetilde \chi_{\w,\v}(x) = \frac{1}{2} \big( \chi_{\w,\v}(x) -  \chi_{\w,-\v}(x) - \chi_{\v,\w}(x) + \chi_{\v,-\w}(x) \big) \: .
\eeq
\end{Thm}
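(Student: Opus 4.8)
The plan is to follow the strategy of the proof of Theorem~\ref{JthmOSIlip} — antisymmetrize the linearized field equations in $\w$ and $\v$, integrate over $\Omega$, and discard the $\Omega\times\Omega$ double integral by antisymmetry — but now keeping track of the two additional contributions that survive in the non-differentiable setting. Since $\w$ and $\v$ are both solutions (so that both lie in $\Jfield$), equation~\eqref{SlinFEqA} holds with either jet in the generator slot, in particular at the four argument pairs $(\w,\v)$, $(\w,-\v)$, $(\v,\w)$, $(\v,-\w)$. First I would form the linear combination of these four instances with coefficients $\tfrac12,-\tfrac12,-\tfrac12,+\tfrac12$, so that by the very definition~\eqref{SantiSym5} the right-hand side becomes $\widetilde\chi_{\w,\v}(x)$. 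Using the elementary relations $\nabla^+_{-\v}=-\nabla^-_\v$ (hence $\nabla^+_{i,\v}+\nabla^-_{i,\v}=2\,\widetilde\nabla_{i,\v}$ for $i=1,2$, and $\nabla^+_\v+\nabla^-_\v=2\,\widetilde\nabla_\v$) together with the notation~\eqref{SnotationSymmSem}, the left-hand side of this combination collapses to
\[ \wnabla_\w\int_M\big(\widetilde\nabla_{1,\v}+\widetilde\nabla_{2,\v}\big)\L(x,y)\,d\rho(y)\;-\;\wnabla_\v\int_M\big(\widetilde\nabla_{1,\w}+\widetilde\nabla_{2,\w}\big)\L(x,y)\,d\rho(y)\;-\;\wnabla_{[\w,\v]}\,\frac{\nu}{2}\: , \]
where the last term arises from collecting the $\frac{\nu}{2}$-contributions and using~\eqref{SnablaCommutator} (the Lagrange-multiplier terms depend only on the scalar components of the jets, and $\wnabla_{[\w,\v]}$ applied to the constant $\frac{\nu}{2}$ reproduces exactly $(D_w b - D_v c)\,\frac{\nu}{2}$).

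Next I would pull the symmetric semi-derivatives into the $y$-integral — which is precisely where Assumptions~\ref{SA12} enter, and why the definition of $\sigma_{\u,\v}$ in~\eqref{SOSIIntegrandNonDiff} uses the symmetrized average $\tfrac14\sum_{s,s'}$ rather than individual $\sigma^{s,s'}$ — and split each integral into its ``second-slot'' and ``first-slot'' parts. The second-slot parts assemble into $\int_M \sigma_{\w,\v}(x,y)\,d\rho(y)$, since $\tfrac14\sum_{s,s'}\sigma^{s,s'}_{\w,\v}=\widetilde\nabla_{1,\w}\widetilde\nabla_{2,\v}\L-\widetilde\nabla_{1,\v}\widetilde\nabla_{2,\w}\L$. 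The first-slot parts combine into $\int_M\big(\widetilde\nabla_{1,\w}\widetilde\nabla_{1,\v}-\widetilde\nabla_{1,\v}\widetilde\nabla_{1,\w}\big)\L(x,y)\,d\rho(y)$, which by the commutator computation following~\eqref{SnotationSymmSem} (applied to the first argument) equals $\int_M\widetilde\nabla_{1,[\w,\v]}\L(x,y)\,d\rho(y)$; since $[\w,\v]=\big(D_w b-D_v c,\,[w,v]\big)$ is again a smooth jet in $\J$, Assumption~(vi) yields $\int_M\widetilde\nabla_{1,[\w,\v]}\L(x,y)\,d\rho(y)=\wnabla_{[\w,\v]}\big(\ell(x)+\frac{\nu}{2}\big)$. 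The $\frac{\nu}{2}$-pieces then cancel against the term already present, so that for every $x\in M$
\[ \int_M \sigma_{\w,\v}(x,y)\,d\rho(y)=\widetilde\chi_{\w,\v}(x)-\wnabla_{[\w,\v]}\ell(x)\: . \]
In contrast to Theorem~\ref{JthmOSIlip}, the term $\wnabla_{[\w,\v]}\ell$ need not vanish here: the Euler--Lagrange equations only give $\nabla^+_{[\w,\v]}\ell\ge 0$ and $\nabla^-_{[\w,\v]}\ell\le 0$ on $M$ (cf.~\eqref{SELgk}), so their symmetrization is in general nonzero — this is the genuine new obstruction.

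Finally I would integrate the last identity over $x\in\Omega$ with respect to $\rho$. Writing $\int_M d\rho(y)=\int_\Omega d\rho(y)+\int_{M\setminus\Omega}d\rho(y)$ and using that $\sigma_{\w,\v}(x,y)=-\sigma_{\w,\v}(y,x)$ (immediate from the symmetry $\L(x,y)=\L(y,x)$ of Assumption~(i)), the $\Omega\times\Omega$ integral vanishes — its finiteness, needed for this cancellation, is guaranteed by Assumption~(v) because $\Omega$ is compact — so the left-hand side reduces to $\sigma_\Omega(\w,\v)$ and we obtain~\eqref{SsympNonDiff}. The main obstacle I anticipate is the semi-derivative sign bookkeeping: one must verify carefully that the chosen linear combination of~\eqref{SlinFEqA} at the four argument pairs produces $\widetilde\chi_{\w,\v}$ exactly as in~\eqref{SantiSym5}, that the $\nu$-terms recombine into $\wnabla_{[\w,\v]}\frac{\nu}{2}$, and that all exchanges of symmetric semi-derivatives with the $y$-integration are covered by Assumptions~\ref{SA12} (which here replace the stronger Assumption~(s1) used in Chapter~\ref{DissJet}).
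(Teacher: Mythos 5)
Your proposal is correct and follows essentially the same route as the paper's proof: your single four-term combination of~\eqref{SlinFEqA} at $(\w,\pm\v)$ and $(\v,\pm\w)$ is exactly the paper's two successive anti-symmetrizations (first in $\v\mapsto-\v$, then in $\w\leftrightarrow\v$) carried out in one step, and the subsequent use of the commutator identity with Assumption~(vi) to produce $\wnabla_{[\w,\v]}\,\ell$, the identification of $\tfrac14\sum_{s,s'}\sigma^{s,s'}_{\w,\v}$ with $\widetilde\nabla_{1,\w}\widetilde\nabla_{2,\v}\L-\widetilde\nabla_{1,\v}\widetilde\nabla_{2,\w}\L$, and the cancellation of the $\Omega\times\Omega$ double integral by antisymmetry coincide with the paper's argument. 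No gaps beyond those already implicit in the paper's own treatment.
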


Note that for $\wnabla_{[\w,\v]} $, we have used Notation~\eqref{SnotationSymmSem}. We interpret this result in Remark~\ref{SHamiltInt}. 
An alternative form of the right hand side of~\eqref{SsympNonDiff} is given in Proposition~\ref{SAltFormSymp}.
\Proof
Anti-symmetrizing~\eqref{SlinFEqA} in $\v$ and $-\v$ yields
\beq \label{SantiSym1}
\wnabla_\w  \: \int_M d\rho(y) \: 
\big( \wnabla_{1,\v} + \wnabla_{2,\v} \big) \: \L \big( x, y \big)  - \wnabla_\w \wnabla_\v \: \frac{\nu}{2}\: =  \frac{1}{2} \big( \chi_{\w,\v}(x) -  \chi_{\w,-\v}(x) \big)
\eeq
Define
\beq\label{SantiSym2}
\sigma_{\w,\v}(x) := \int_M d\rho(y) \: \big( \wnabla_{1,\w} \: \wnabla_{2,\v}  - \wnabla_{1,\v} \: \wnabla_{2,\w} \big) \: \L(x,y) \: .
\eeq
Anti-symmetrizing~\eqref{SantiSym1} in $\w$ and $\v$ gives
\beq\label{SantiSym6}
\wnabla_{[\w,\v]} \: \big( \ell(x) + \frac{\nu}{2} \big) + \sigma_{\w,\v}(x)  - \wnabla_{[\w,\v]} \frac{\nu}{2} = \wnabla_{[\w,\v]} \: \ell(x)  + \sigma_{\w,\v}(x) = \widetilde \chi_{\w,\v}(x) \: ,
\eeq
where we have used Assumption~(vi) to exchange differentiation and integration.
We now integrate~\eqref{SantiSym6} over a compact $ \Omega \subset \F$.
Assumption~(v) implies that the integrals exist.\Later{Noch etwas ausführen.}
Hence we have 
\beq\label{SantiSym4}
\int_{ \Omega} d\rho(x) \: \sigma_{\w,\v}(x)  =  \int_{ \Omega} d\rho(x) \: \widetilde \chi_{\w,\v}(x)  - \int_{ \Omega} d\rho(x) \:  \wnabla_{[\w,\v]} \: \ell(x)  \:.
\eeq
Since the integrand of~\eqref{SantiSym2} is anti-symmetric with respect to $x$ and $y$, the integration over $\Omega \times \Omega$ vanishes,
\beq\label{SantiSym3}
\int_\Omega d\rho(x) \: \int_\Omega d\rho(y) \: \big( \wnabla_{1,\w} \: \wnabla_{2,\v}  - \wnabla_{1,\v} \: \wnabla_{2,\w} \big) \: \L(x,y) = 0 \:.
\eeq
From~\eqref{SOSIIntegrandNonDiff} we have
\begin{align*}
&\sigma_{\w, \v}(x,y)
= \frac{1}{4} \: \big( \nabla_{1,\w}^+ + \nabla_{1,\w}^- \big) \big( \nabla_{2,\v}^+  + \nabla_{2,\v}^- \big) \L(x,y) \\
& \; - \frac{1}{4} \big( \nabla_{1,\v}^+ + \nabla_{1,\v}^- \big) \big( \nabla_{2,\w}^+ + \nabla_{2,\w}^- \big) \L(x,y) = \big( \wnabla_{1,\w} \: \wnabla_{2,\v} - \wnabla_{1,\v} \: \wnabla_{2,\w} \big)  \L(x,y) \: .
\end{align*}
Together with~\eqref{SantiSym3}, this implies
\[
\sigma_{ \Omega}(\w, \v) = \int_{ \Omega} d\rho(x) \:  \sigma_{\w,\v}(x) \: .
\]
Therefore,~\eqref{SantiSym4} gives~\eqref{SsympNonDiff}.
\QED

Thus, in general, the evolution according to the non-differentiable linearized field equations does not conserve the surface layer integrals~\eqref{JIntrOSIN}. 
However, they are conserved if Definition~\ref{SmacSymm0} holds, as shown by the next proposition. Note that since Definition~\ref{SmacSymm0} makes a statement
about minimizers, in the next proposition we need to assume that the jets $\w$ and $\v$ are generators of families~\eqref{Srhotau} of minimizers.
\Later{Könnte man anpassen.}

\begin{Prp}\label{SthmOSIlipNonDiffMac}
If symmetric derivatives vanish macroscopically 
and $\w, \v$ are generators of families of minimizers of the form~\eqref{Srhotau} (and hence solutions of the non-differentiable linearized field equations),
for every compact $\tilde \Omega \in \mathscr M$, we have
\Chd{$\Omega \subset \Sigma(\F) \rightarrow \tilde \Omega \in \mathscr M$}%
\Chd{$\sigma_{\Omega} \rightarrow \sigma_{\tilde \Omega}$}%
\[
\sigma_{\tilde \Omega}(\w, \v) = 0 \: .
\]
\end{Prp}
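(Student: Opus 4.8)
The plan is to obtain the statement as a direct consequence of Theorem~\ref{SthmOSIlipNonDiff}, whose right-hand side already isolates the two obstructions to conservation of the surface layer integral, and then to show that under Definition~\ref{SmacSymm0} each of these obstructions integrates to zero over a compact macroscopic region. Since $\w$ and $\v$ are generators of families~\eqref{Srhotau} of minimizers, they are in particular solutions of the non-differentiable linearized field equations~\eqref{SlinFEqA} (Proposition~\ref{SmodFieldEqA}), so Theorem~\ref{SthmOSIlipNonDiff} applies and gives
\[
\sigma_{\tilde\Omega}(\w,\v) = \int_{\tilde\Omega} \widetilde\chi_{\w,\v}(x)\: d\rho(x) - \int_{\tilde\Omega} \wnabla_{[\w,\v]}\,\ell(x)\: d\rho(x)\:.
\]
It thus suffices to prove that both integrals on the right vanish.

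For the first integral, I would expand $\widetilde\chi_{\w,\v}$ according to its definition~\eqref{SantiSym5} as a linear combination of $\chi_{\w,\v}$, $\chi_{\w,-\v}$, $\chi_{\v,\w}$ and $\chi_{\v,-\w}$. In the first two of these the second slot ($\v$, respectively $-\v$) is the generator of a family of minimizers — for $-\v$ one simply reparametrizes the family by $\tau\mapsto-\tau$ — while the first slot is an arbitrary jet; in the last two the roles of $\w$ and $\v$ are exchanged, so again the second slot ($\w$, respectively $-\w$) generates a family of minimizers. Hence Proposition~\ref{SstochTermVanishes} applies to all four terms and yields $\int_{\tilde\Omega}\chi_{\cdot,\cdot}(x)\,d\rho(x)=0$ for each, so that $\int_{\tilde\Omega}\widetilde\chi_{\w,\v}(x)\,d\rho(x)=0$.

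For the second integral, writing $\v=(b,v)$ and $\w=(c,w)$, I would use the expansion~\eqref{SnablaCommutator}, namely $\wnabla_{[\w,\v]}\,\ell(x) = (D_w b)(x)\,\ell(x) - (D_v c)(x)\,\ell(x) + \tilde D_{[w,v]}\,\ell(x)$. The first two summands vanish $\rho$-almost everywhere because $\ell|_M\equiv 0$ by the Euler--Lagrange equations~\eqref{JELstronglip}, so that $\int_{\tilde\Omega}\wnabla_{[\w,\v]}\,\ell\,d\rho = \frac{1}{2}\int_{\tilde\Omega}\big(D^+_{[w,v]} - D^+_{-[w,v]}\big)\ell\,d\rho$. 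Since $[w,v]$ is an ordinary vector field, this is precisely the quantity~\eqref{SmacDiffMin0infini} — equivalently, Definition~\ref{SmacSymm0} evaluated for the flow of $[w,v]$ after exchanging the semi-derivative with the integral by Assumption~(vi) — which vanishes for the compact macroscopic region $\tilde\Omega$. Combining the two vanishing statements gives $\sigma_{\tilde\Omega}(\w,\v)=0$.

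This is essentially a bookkeeping corollary of Theorem~\ref{SthmOSIlipNonDiff} and Proposition~\ref{SstochTermVanishes}, so there is no genuinely hard step; the only points that require a little care are verifying that $-\v$ and $-\w$ still generate families of minimizers, so that Proposition~\ref{SstochTermVanishes} is applicable to every term of $\widetilde\chi_{\w,\v}$, and checking that the scalar components of $\wnabla_{[\w,\v]}$ drop out on $\supp\rho$, leaving precisely the symmetric directional derivative of $\ell$ that is assumed to vanish macroscopically.
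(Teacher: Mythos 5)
Your proof is correct and follows essentially the same route as the paper: both start from Theorem~\ref{SthmOSIlipNonDiff} and then use the assumption that symmetric derivatives vanish macroscopically to annihilate the right-hand side. The only cosmetic difference is that the paper recomputes $\int_{\tilde \Omega} \widetilde \chi_{\w,\v}(x)\, d\rho(x)$ directly (finding it equal to $2\int_{\tilde \Omega} \wnabla_{[\w,\v]}\,\ell(x)\, d\rho(x)$, so that $\sigma_{\tilde \Omega}(\w,\v)=\int_{\tilde \Omega} \wnabla_{[\w,\v]}\,\ell(x)\, d\rho(x)$, which then vanishes by a single application of~\eqref{SmacDiffMin0infini} with $u=[w,v]$), whereas you invoke Proposition~\ref{SstochTermVanishes} four times (justifying, correctly, that $-\v$ and $-\w$ also generate families of minimizers) and treat the commutator term separately — the same ingredients in a slightly different order.
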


\Proof
We again use that for a family~\eqref{Srhotau} of minimizers,
Definition~\ref{SmacSymm0} implies~\eqref{SmacSymmFam2}.
Proceeding as in~\eqref{SmacSymmFam3} and~\eqref{SympMac}, we obtain
\begin{align*}
&\int_{\tilde \Omega} \widetilde \chi_{\w,\v}(x) \: d\rho(x) 
=  - \frac{1}{4} \: \int_{\tilde \Omega} \Big( \big( D^+_{[v,w]} - D^+_{[v,-w]} \big) - \big( D^+_{[-v,w]} - D^+_{[-v,-w]}  \big) \\
&- \big( D^+_{[w,v]} - D^+_{[w,-v]} \big) + \big( D^+_{[-w,v]} - D^+_{[-w,-v]} \big) 
\Big) \: \ell(x) \: d\rho(x)  \\
&=  - \frac{1}{2} \: \int_{\tilde \Omega} \Big( D^+_{[v,w]} - D^+_{-[v,w]} -  D^+_{[w,v]} + D^+_{-[w,v]}  
\Big) \: \ell(x) \: d\rho(x)  \\
&= - \int_{\tilde \Omega} \Big( D^+_{[v,w]} - D^+_{-[v,w]} 
\Big) \: \ell(x) \: d\rho(x)  = - 2 \int_{\tilde \Omega} \tilde D_{[v,w]}  \ell(x) \: d\rho(x)  \\
& \stackrel{(\star)}{=}  - 2 \int_{\tilde \Omega} \tilde \nabla_{[\v,\w]}  \ell(x) \: d\rho(x)  = 2 \int_{\tilde \Omega} \tilde \nabla_{[\w,\v]}  \ell(x) \: d\rho(x)  \: ,
\end{align*}
where in $(\star)$ we have used~\eqref{SnablaCommutator} and $\ell|_M = 0$. In the last step we have used the definition of $\nabla^+_{[\w,\v]}$,~\eqref{SDefCommutator}.
Hence~\eqref{SsympNonDiff} implies
\[
\sigma_{\tilde \Omega}(\w, \v) = \int_{\tilde \Omega} \tilde \nabla_{[\w,\v]}  \ell(x) \: d\rho(x)  \: .
\]
This term vanishes by~\eqref{SmacDiffMin0infini},
\[
\int_{\tilde \Omega} \tilde \nabla_{[\w,\v]}  \ell(x) \: d\rho(x) = \int_{\tilde \Omega} \tilde D_{[w,v]}  \ell(x) \: d\rho(x) = 0 \: ,
\]
where we have again used~\eqref{SnablaCommutator} and $\ell|_M \equiv 0$.
\QED

\begin{Prp}\label{SAltFormSymp} If $ \v=(b,v), \: \w=(c,w)$ and $\Omega$ are as in Theorem~\ref{SthmOSIlipNonDiff}, we have
\begin{align}\begin{split}\label{SsympNonDiff2}
\sigma_{ \Omega}(\w, \v)&=  \int_{ \Omega} d\rho(x) \: \big(  b(x) \: \wnabla_\w \ell (x)  - c(x) \: \wnabla_\v \ell (x) \big) \\
& \ \ \  +  \int_{ \Omega} d\rho(x)  \int_M d\rho(y) \: \Big( \tilde D_{1,w}  \wnabla_{2,\v} \: \L(x,y) - \tilde D_{1,v}  \wnabla_{2,\w} \: \L(x,y) \Big) \:.
\end{split}
\end{align}
\end{Prp}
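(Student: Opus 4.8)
The plan is to re-derive \eqref{SsympNonDiff2} directly from the definition of the surface layer integral, rather than by transforming \eqref{SsympNonDiff}. The starting point is the intermediate identity obtained in the proof of Theorem~\ref{SthmOSIlipNonDiff}, namely $\sigma_\Omega(\w,\v)=\int_\Omega d\rho(x)\,\sigma_{\w,\v}(x)$ with $\sigma_{\w,\v}(x)=\int_M d\rho(y)\,\big(\wnabla_{1,\w}\wnabla_{2,\v}-\wnabla_{1,\v}\wnabla_{2,\w}\big)\L(x,y)$. First I would split the outer symmetric jet-derivatives into scalar and vectorial parts, using $\wnabla_{1,\w}=c(x)+\tilde D_{1,w}$ and $\wnabla_{1,\v}=b(x)+\tilde D_{1,v}$ (which follows from \eqref{SDjetSemi}, \eqref{SsymDD} and \eqref{SnotationSymmSem}, since $\nabla^+_\w=c(x)+D^+_w$ and $\nabla^+_{-\w}=-c(x)+D^+_{-w}$). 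This writes $\sigma_{\w,\v}(x)$ as the sum of $c(x)\int_M\wnabla_{2,\v}\L(x,y)\,d\rho(y)$, of $-b(x)\int_M\wnabla_{2,\w}\L(x,y)\,d\rho(y)$, and of the term $\int_M\big(\tilde D_{1,w}\wnabla_{2,\v}-\tilde D_{1,v}\wnabla_{2,\w}\big)\L(x,y)\,d\rho(y)$; after integrating over $\Omega$ the last term is already the second line of \eqref{SsympNonDiff2}.

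The core step is the identity $\int_M\wnabla_{2,\v}\L(x,y)\,d\rho(y)=-\wnabla_\v\ell(x)$ for $x\in M$ (and its analogue with $\v$ replaced by $\w$). To obtain it I would use that $\v$ generates a family \eqref{Srhotau} of minimizers, so $\ell_\tau(F_\tau(x))\equiv 0$ on $M$ for all $\tau$, hence $\frac{d}{d\tau}^{\!+}_{|_0}\ell_\tau(F_\tau(x))=0$. Writing $\ell_\tau(F_\tau(x))+\tfrac{\nu}{2}=\int_M\L(F_\tau(x),F_\tau(y))\,f_\tau(y)\,d\rho(y)$ and applying the splitting property built into Assumption~(v) (which gives $\frac{d}{d\tau}^{\!+}_{|_0}\L(F_\tau(x),F_\tau(y))=(D^+_{1,v}+D^+_{2,v})\L(x,y)$) together with the interchange of $\frac{d}{d\tau}^{\!+}_{|_0}$ with the $y$-integration from Assumption~(vi), one gets $0=\int_M\big[(D^+_{1,v}+D^+_{2,v})\L(x,y)+b(y)\L(x,y)\big]d\rho(y)=D^+_v\ell(x)+\int_M\nabla^+_{2,\v}\L(x,y)\,d\rho(y)$, using the consequence of Assumption~(vi) that $\int_M D^+_{1,v}\L(x,y)\,d\rho(y)=D^+_v\ell(x)$. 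Thus $\int_M\nabla^+_{2,\v}\L(x,y)\,d\rho(y)=-D^+_v\ell(x)$; running the identical computation with the left semi-derivative (equivalently, with the reparametrized family $\tau\mapsto-\tau$) yields $\int_M\nabla^-_{2,\v}\L(x,y)\,d\rho(y)=-D^-_v\ell(x)$, and averaging gives $\int_M\wnabla_{2,\v}\L(x,y)\,d\rho(y)=-\tilde D_v\ell(x)$. Since $\ell|_M\equiv 0$ the scalar part of $\wnabla_\v\ell$ vanishes on $M$, so $\tilde D_v\ell(x)=\wnabla_\v\ell(x)$ there, which gives the claimed identity. Essentially this is the same bookkeeping already carried out in the proof of Proposition~\ref{SmodFieldEqA} (cf.~\eqref{SchiA1}, \eqref{SchiA3}, \eqref{SchiA4}, \eqref{SchiA7}), so it can be invoked rather than repeated.

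Substituting this identity and its analogue for $\w$ back into the expansion of $\sigma_{\w,\v}(x)$, and using that $\rho$ is supported on $M$ so the $\Omega$-integration only sees $\Omega\cap M$, turns $c(x)\int_M\wnabla_{2,\v}\L(x,y)\,d\rho(y)-b(x)\int_M\wnabla_{2,\w}\L(x,y)\,d\rho(y)$ into $b(x)\,\wnabla_\w\ell(x)-c(x)\,\wnabla_\v\ell(x)$, which together with the vectorial term produces exactly \eqref{SsympNonDiff2} after integrating over $\Omega$; existence of all integrals involved is guaranteed by Assumption~(v). I expect the main obstacle to be the careful justification of the right-differentiation of the composite $\tau\mapsto\ell_\tau(F_\tau(x))$ — verifying that Assumptions~(v) and~(vi) legitimately deliver both the splitting identity and the interchange of semi-derivative with integral, and handling the left semi-derivative case symmetrically — but since this mirrors arguments already established earlier in the chapter, it should go through without new difficulties.
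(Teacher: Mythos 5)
Your proof is correct, but it takes a genuinely different route from the paper's. The paper starts from the conclusion of Theorem~\ref{SthmOSIlipNonDiff}, expands $\chi_{\w,\v}$ into the three contributions \eqref{SdiffCase2-1}--\eqref{SdiffCase2+1}, shows that the middle one combines under the antisymmetrization \eqref{SantiSym5} to $\tilde D_{[w,v]}\ell$ and cancels against the $\wnabla_{[\w,\v]}\,\ell$-term of \eqref{SsympNonDiff} (via \eqref{SnablaCommutator} and $\ell|_M\equiv 0$), and then identifies the surviving pieces with the right-hand side of \eqref{SsympNonDiff2}. You bypass $\chi_{\w,\v}$ and the commutator entirely: starting from the intermediate identity $\sigma_\Omega(\w,\v)=\int_\Omega\sigma_{\w,\v}(x)\,d\rho(x)$ with $\sigma_{\w,\v}(x)$ as in \eqref{SantiSym2}, you split $\wnabla_{1,\w}=c(x)+\tilde D_{1,w}$ and $\wnabla_{1,\v}=b(x)+\tilde D_{1,v}$, and convert the scalar pieces by the first-variation identity $\int_M\wnabla_{2,\v}\,\L(x,y)\,d\rho(y)=-\wnabla_\v\,\ell(x)$ on $M$ (and its analogue for $\w$). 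Your signs work out ($-c\,\tilde D_v\ell+b\,\tilde D_w\ell$, which equals the first line of \eqref{SsympNonDiff2} since $\ell|_M\equiv 0$), and the vectorial remainder is literally the second line, so no interchange of $\tilde D_w$ with the $y$-integration is needed at the end, which the paper's route uses implicitly. What your approach buys is directness and a transparent interpretation of the first line as the weighted first variation; what the paper's route buys is an explicit account of how \eqref{SsympNonDiff} and \eqref{SsympNonDiff2} relate, i.e.\ which parts of $\widetilde \chi_{\w,\v}$ cancel against the commutator term.

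One point on hypotheses: you justify the key identity by assuming that $\v$ (and, for its analogue, $\w$) generates a family of minimizers, whereas the proposition, inheriting the hypotheses of Theorem~\ref{SthmOSIlipNonDiff}, only assumes that $\w$ and $\v$ solve the non-differentiable linearized field equations. This is harmless but should be stated: evaluating \eqref{SlinFEqA} on purely scalar test jets $(c,0)$ (for which the flow is the identity and hence the stochastic term vanishes) reproduces exactly \eqref{SchiA4}, which together with $\ell|_M\equiv 0$ gives $\int_M\nabla^+_{2,\v}\L(x,y)\,d\rho(y)=-D^+_v\ell(x)$ on $M$; the left semi-derivative version follows, as you note, from the reparametrized family $\tau\mapsto-\tau$, equivalently from \eqref{SlinFEqA} for $-\v$, which the theorem's proof uses anyway. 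With that small adjustment your argument goes through under the stated hypotheses.
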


\Proof
To show~\eqref{SsympNonDiff2}, expand $\chi_{\w,\v}(x)$ similarly to~\eqref{SdiffCase}
to obtain
\begin{align}\label{SdiffCase2-1}
\chi_{\w,\v}(x) & =\frac{1}{2} \: b(x) \: \frac{d}{d s}^{\!+}_{|_0}  \Big( \ell \big (\Phi_s (x)\big)   -  \ell \big(\Phi_{-s}(x)\big) \Big) \\ \label{SdiffCase2}
& + \frac{1}{2} \: \frac{d}{d s}^{\!+}_{|_0}  \frac{d}{d \tau}^{\!+}_{|_0}   \Big( \ell \big (F_\tau(\Phi_s (x))\big)   -  \ell \big(F_\tau(\Phi_{-s}(x))\big) \Big) \\
& + \frac{1}{2} \: \frac{d}{d s}^{\!+}_{|_0}  \frac{d}{d \tau}^{\!+}_{|_0}   \Big( \ell_\tau \big (\Phi_s (x)\big)   -  \ell_\tau \big(\Phi_{-s}(x)\big) \Big) \:. \label{SdiffCase2+1}
\end{align}
The second term~\eqref{SdiffCase2} yields a contribution to~\eqref{SantiSym5} of
\[
\tilde D_{[w, v]} \ell(x) \:.
\]
Using~\eqref{SnablaCommutator}, we see that the $\tilde D_{[w,v]}$-terms in~\eqref{SantiSym4} cancel. Furthermore, for $x \in M$ we have
\[
(\tilde D_w b) \ell(x)  =  (\tilde D_v c)  \ell(x) \: = 0 \: .
\]
Hence the only contributions to the right hand side of~\eqref{SantiSym4} come from~\eqref{SdiffCase2-1} and~\eqref{SdiffCase2+1}.
In~\eqref{SantiSym5}, they add up to
\[
b(x) \: \tilde D_w \ell (x)  - c(x) \: \tilde D_v \ell (x) + \tilde D_w \int_M \wnabla_{2,\v} \: \L(x,y) d\rho(y)  - \tilde D_v \int_M \wnabla_{2,\w} \: \L(x,y) d\rho(y) \:.
\]
We add $b(x) \: c(x) \: \ell (x)  - c(x) \: b(x) \: \ell(x) $ to this equation. This gives
\[
b(x) \: \wnabla_\w \ell (x)  - c(x) \: \wnabla_\v \ell (x) + \tilde D_w \int_M \wnabla_{2,\v} \: \L(x,y) d\rho(y)  - \tilde D_v \int_M \wnabla_{2,\w} \: \L(x,y) d\rho(y) \:.
\]
Integrating over $\Omega$ yields~\eqref{SsympNonDiff2}.
\QED

\begin{Remark}\label{SHamiltInt}\em \Remt{(Interpretation of Theorem~\ref{SthmOSIlipNonDiff})}
To interpret Theorem~\ref{SthmOSIlipNonDiff}, suppose that space-time $M$ carries a suitable notion of Cauchy surfaces $(\Sigma_t)_{t \in \R}$.
Denote the restriction of $\J$ to $\Sigma_t$ by $\J_{\Sigma_t}$. We define an operator $\mathcal U_{t_2,t_1}: \J_{\Sigma_{t_1}} \rightarrow \J_{\Sigma_{t_2}}$
by $\mathcal U_{t_2,t_1} \: \u_1 := \u|_{\Sigma_{t_2}}$, where $\u \in \Jfield$ is such that $\u|_{\Sigma_{t_1} }= \u_1$. Put differently,
$\mathcal U_{t_2,t_1}$ represents the time evolution, as specified by the non-differentiable linearized field equations~\eqref{SlinFEqA}.
(This operator might not be unique.) 
As explained in the context of~\eqref{JIntrOSIN} in the introduction, we can introduce a symplectic form on $\Sigma_t$ by
choosing~$\Omega = \Omega_t$ in~\eqref{SOSIlipNonDiff},  where $\Omega_t$ is the past of the Cauchy surface $\Sigma_t$.
If suitable decay conditions hold for the solutions of the non-differentiable linearized field equations as one approaches infinity on the Cauchy surfaces
(``spatial infinity''), this surface layer integral exists and Theorem~\ref{SthmOSIlipNonDiff} and Proposition~\ref{SthmOSIlipNonDiffMac} apply.

Theorem~\ref{SthmOSIlipNonDiff} implies that
\[
\sigma_{\Sigma_{t_2}}(\mathcal U_{t_2,t_1}  \u_1  ,  \mathcal U_{t_2,t_1} \v_1) \neq \sigma_{\Sigma_{t_1}} (\u_1  ,  \v_1) 
\]
for all $t_1, t_2$ and $\u_1,\v_1 \in \J_{\Sigma_{t_1}}$ in general. I.e., the the time evolution operator $\mathcal U_{t_2,t_1}$
does not constitute a  symplectomorphism with respect to the symplectic form~\eqref{SOSIlipNonDiff} in general.
If however, Definition~\ref{SmacSymm0} holds, the situation is different whenever $t_1$ and $t_2$ are such that the time-strip between them is 
a macroscopic region $\tilde \Omega \in \mathscr M$ as in Definition~\eqref{SDefMac}. Then Proposition~\eqref{SthmOSIlipNonDiffMac}
implies that 
\[
\sigma_{\Sigma_{t_2}}(\mathcal U_{t_2,t_1}  \u_1  ,  \mathcal U_{t_2,t_1} \v_1) = \sigma_{\Sigma_{t_1}} (\u_1  ,  \v_1) 
\]
for all $\u_1,\v_1 \in \J_{\Sigma_{t_1}}$.

As mentioned in the context of Definition~\ref{SDefMac}, the specification of what is `macroscopic'
depends on the application. (In fact, in principle it is possible to prove or disprove~\eqref{SmacDiffMin0} by investigating properties
of general minimizers of the causal action principle. Hence one might be able to determine from first principles which regions $\tilde \Omega$
\Chd{``to turn Definition ... around'' gelöscht (Denglish)}%
are `macroscopic' in the sense that~\eqref{SmacDiffMin0} is satisfied.)
Based on contemporary physics, it is reasonable to suppose that a necessary requirement for a region to be macroscopic is that its diameter is \chd{many} orders
\Chd{``several'' $\rightarrow$ ``many''}%
of magnitude larger than the Planck length. In this case, Proposition~\ref{SthmOSIlipNonDiffMac} would imply that for times which are separated by an interval which is \chd{many} orders of magnitude larger than the Planck time, the evolution is in fact a symplectomorphism, and differences from the Hamiltonian time evolution disappear.

To understand what this means, let us compare the time evolution operator $ \mathcal U_{t_2,t_1}$ to the time evolution operator $U(t_2,t_1)$ in non-relativistic quantum theory  and the property that $\mathcal U_{t_2,t_1}$ is a symplectomorphism to the property that $U(t_2,t_1)$ is unitary. Then the above explanations
would compare to a situation where microscopically, the unitary evolution is broken by a stochastic term, opening the doors for ``dynamical reduction'' to occur  (cf. Section~\ref{IModelQT}). Macroscopically, however, the stochastic contributions cancel, giving rise to a unitary time evolution on time-scales which are \chd{many} orders of magnitude larger than the Planck time. \Later{Hier evt. wirklich ein $\Lambda$ einführen, wie im auskommentierten Teil, um die stochastische Korrektur zu fassen. (Ggf. sogar gleich für quadratische Korrekturen ebenso machen.) Vergleiche mit~\cite{TumulkaQFTCollapse}.}
\QEDrem
\end{Remark}

\section{Second Order Terms}\label{SO2}

In this section, we study higher order corrections to the non-differentiable linearized field equations. For the connection to quantum 
theory (see Section~\ref{IModelQT}), it is sufficient to consider second order terms, and for simplicity and clarity we restrict the following analysis to this order.

A systematic expansion of the weak Euler-Lagrange equations in the differentiable case can be found in~\cite{perturb}. Since the ansatz in~\cite{perturb} differs from the
following ansatz, we include some preliminary and motivating remarks. A short comparison of the two approaches can be found im Remark~\ref{ScPerturb}. 

\subsection{Preliminaries} \label{SO2Prelim}

Recall that integral curves of a vector field $v \in \Gamma(T\F)$ are defined as maps $\gamma: \R \supset  J \rightarrow \F$ such that
$\gamma(t)' = v|_{\gamma(t)}$ for all $t \in J$, where $J$ is an open interval. A \emph{flow domain} on $\F$ is an open subset $\mathcal D \subset \R \times \F$
such that for every $x \in \F$, the set $\mathcal D_x := \{ t \in \R | (t,x) \in \mathcal D \}$ is an open interval containing $0$ (cf. e.g.~\cite[Chapter 12]{LeeSmooth}).
A \emph{flow} on $\F$ is a smooth map $F : \mathcal D \rightarrow \F$ that satisfies
\begin{align*}
&F_0(x) = x  & &\textrm{ for all } x \in \F \\
&F_{\tau'}(F_{\tau}(x)) = F_{\tau' + \tau}(x)  & &\textrm{ for all } \tau \in \mathcal D_x \textrm{ and } \tau' \in \mathcal D_{F_\tau(x)} \:,
\end{align*}
where as before we have used the notation $F_\tau(x) = F(\tau,x)$. A flow is maximal if it cannot be extended to a larger flow domain.
It follows that the \emph{infinitesimal generator} $v$ of a flow $F$ defined
by
\[
v|_x = (F|_x)'(0) \:,
\]
where $F|_x(\tau) = F(\tau,x)$ is the orbit of $x$ under $F$, is a smooth vector field on $\F$. 
Crucial for the following ansatz is that the reverse is also true. Even so the infinitesimal generator is defined to be the derivative of the flow
at $\tau = 0$, is uniquely determines the flow $F$ in the following sense.

\noindent \emph{Theorem}: (\cite[Theorem 12.9]{LeeSmooth})
Let $v$ be a smooth vector field on a smooth manifold~$\F$. Then there is a unique maximal flow whose infinitesimal generator is $v$.

\begin{Remark}\label{ScPerturb}\em \Remt{(Comparison to~\cite{perturb})} 
The crucial difference between this approach and the approach in~\cite{perturb} is that in~\cite{perturb}, one assumes that a smooth vector field $\tilde v$ in a neighborhood of $M$ is given. Since an extension to $\F$ remains arbitrary, $\tilde v$ does not specify a unique flow on $\F$ and further information is necessary to determine it.
Here, on the other hand, we are working with a
smooth vector field $v$ on $\F$, the freedom in extending $\tilde v$ to $\F$ is implicit in the choice of family~\eqref{Srhotau}.
Which approach is suitable depends on the application.
\QEDrem
\end{Remark}

\subsection{Second Order Terms}

Let $h \in C^\infty(\F,\R)$ and let $v \in \Gamma(\F)$ be the infinitesimal generator of a flow $F$. The above definitions imply that
\beq\label{Sflow1}
v(h)\big|_{F_{\tilde \tau}(x)} = 
\frac{d}{d\tau} h(F_\tau(x)) \big|_{\tau = \tilde \tau} \:,
\eeq
where we have used that $F|_x(\tau)$ is an integral curve of $v$ by~\cite[Lemma 12.7(d)]{LeeSmooth}.
Since $h$ is smooth and $F$ is smooth (by definition of a flow), it follows that the right hand side of~\eqref{Sflow1}
is smooth in $\tau$ as well. Thus, we can take a further $\tau$-derivative. This gives
\begin{align}\begin{split}\label{Sflow2}
& \frac{d^2}{d\tau^2} h(F_\tau(x)) \big|_{\tau = \bar \tau} = \frac{d}{d\tilde \tau} \Big( \frac{d}{d\tau} h(F_\tau(x)) \big|_{\tau = \tilde \tau} \Big)  \Big|_{\tilde \tau = \bar \tau}\\
& \qquad = \frac{d}{d\tilde \tau} \Big(  v(h)\big|_{F_{\tilde \tau}(x)}  \Big)  \Big|_{\tilde \tau = \bar \tau} = v(v(h))\big|_{F_{\bar \tau}(x)} \: .
\end{split}\end{align}
Clearly, this object is not a vector field. \Later{Evt. Verbindung mit $2$-jets herstellen.}
To align~\eqref{Sflow2} with the usual notation in this chapter, we write
\[
\frac{d^2}{d\tau^2} h(F_\tau(x)) \big|_{\tau = \bar \tau} =  D_{1,v} D_{1,v} h(F_{\bar \tau}(x)) \: ,
\]
where as usual the first derivative also acts on the argument of the second.

Note that whereas a vector field $v \in \Gamma(T\F)$ determines its flow $F_\tau(x) $ for all $\tau \in \mathcal D_x$,
this is not the case for the scalar component of the transformation~\eqref{Srhotau}. The first derivative $\dot f_0$ does of course not allow
to deduce the full function $f_\tau$. Therefore, higher $\tau$-derivatives of $f_\tau$ appear in what follows.
In order to present the result in an elegant way, for a given family~\eqref{Srhotau}, we define the \emph{directional quadratic semi-derivative} for jets as
\beq\label{Sflow4}
\nabla^{+}_{\v,\v} := \ddot f_0 + 2 \dot f_0 D^+_v + D^+_v D^+_v \: ,
\eeq
and respectively $\nabla^{+}_{i,\v,\v}$ to indicate that the directional quadratic semi-derivative acts on the $i$th argument of a function.

\begin{Lemma}\label{Sprepfull1}
For any family~\eqref{Srhotau} of measures with generator $\v = (b,v)$, any $x \in \F$ and any flow $\Phi$, we have
\begin{align*}
& \frac{1}{2} \: \frac{d}{ds}^{\!+}_{|_0}   \frac{d^{\, 2}}{d\tau^2}^{\!+}_{|_0}  \int_{M_\tau} d\rho_\tau(y) \: f_\tau(x) \:
\Big( \L \big( F_\tau(\Phi_s(x)),y \big)  - \L \big( F_\tau(\Phi_{-s}(x)),y \big) \Big) \\
&\qquad = \tilde D_{w}  \int_M d\rho(y) \: \Big( \nabla^{+}_{1,\v,\v} + \nabla^{+}_{2,\v,\v} +  2 \, \nabla^+_{1,\v} \nabla^+_{2,\v}  \Big) \: 
\L(x,y) \\
& \qquad  -  \int_M d\rho(y) \: \Big((D_w \ddot f_0)(x) + 2 \, (D_w b) (x) \big( D^+_{1,v}  + D^+_{2,v} + b(y) \big) \Big) \: \L(x,y) \: ,
\end{align*}
where $w$ is the generator of $\Phi$.
\end{Lemma}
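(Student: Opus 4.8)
The plan is to follow the computation~\eqref{SchiA5} in the proof of Proposition~\ref{SmodFieldEqA}, but carried to second order in $\tau$. First I would use $\rho_\tau = (F_\tau)_*(f_\tau\rho)$ to rewrite the $y$-integral over $M_\tau$ as an integral over $M$ with respect to $\rho$, namely $\int_{M_\tau}d\rho_\tau(y)\,g(y) = \int_M d\rho(y)\,f_\tau(y)\,g(F_\tau(y))$; applied with $g(y) = f_\tau(x)\,\big(\L(F_\tau(\Phi_s(x)),y) - \L(F_\tau(\Phi_{-s}(x)),y)\big)$ this turns the left-hand side into the $s$- and $\tau$-semi-derivatives of $\int_M d\rho(y)\,f_\tau(x)\,\big(\L(F_\tau(\Phi_s(x)),F_\tau(y)) - \L(F_\tau(\Phi_{-s}(x)),F_\tau(y))\big)\,f_\tau(y)$. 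By Assumptions~(v) and~(vi), applied with the transformation $F_\tau\circ\Phi_{\pm s}$ (which for fixed $s$ is again an element of $C^\infty((-\delta,\delta)\times\F\to\F)$), the second $\tau$-semi-derivative exists, is $\rho$-integrable in $y$, and may be pulled inside the $M$-integral; moreover, since $\tfrac12\tfrac{d}{ds}^{+}_{|_0}\big(h(\Phi_s(x)) - h(\Phi_{-s}(x))\big) = \tilde D_{w}h(x)$ by~\eqref{SsymDD}, the outer operation is precisely the symmetric first-argument semi-derivative $\tilde D_{w}$, which again commutes with $\int_M$ by~(vi). So it suffices to compute $\tfrac{d^{\,2}}{d\tau^2}^{+}_{|_0}\big[f_\tau(x)\,\L(F_\tau(\Phi_s(x)),F_\tau(y))\,f_\tau(y)\big]$ for fixed $s$, then apply $\tilde D_{w}$ in $s$ and integrate.

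Second, I would carry out the inner $\tau^2$-semi-derivative. Writing the integrand as a triple product $f_\tau(x)\cdot A_\tau\cdot f_\tau(y)$ with $A_\tau := \L(F_\tau(\Phi_s(x)),F_\tau(y))$ and applying the second-order Leibniz rule at $\tau=0$ (where $f_0\equiv1$ and $\dot f_0 = b$), one obtains the six terms
\[
\ddot f_0(x)\,A_0 + \ddot A_0 + A_0\,\ddot f_0(y) + 2\,b(x)\,\dot A_0 + 2\,b(x)\,b(y)\,A_0 + 2\,b(y)\,\dot A_0 .
\]
Here $A_0 = \L(\Phi_s(x),y)$; the flow property of $F$ together with the sum rule of Assumption~(v) gives $\dot A_0 = (D^+_{1,v} + D^+_{2,v})\L(\Phi_s(x),y)$; and iterating this argument — the smooth-flow part as in~\eqref{Sflow1}--\eqref{Sflow2}, the Lipschitz part by applying the sum rule to the function $(D^+_{1,v}+D^+_{2,v})\L$ (whose semi-derivatives exist by the second-order hypotheses~\eqref{SCond8Eq2}), and using that $D^+_{1,v}$ and $D^+_{2,v}$ act on different arguments — gives $\ddot A_0 = \big(D^+_{1,v}D^+_{1,v} + 2\,D^+_{1,v}D^+_{2,v} + D^+_{2,v}D^+_{2,v}\big)\L(\Phi_s(x),y)$. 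Regrouping the six terms and invoking the definition~\eqref{Sflow4} of $\nabla^+_{i,\v,\v}$, the sum equals $\big(\nabla^+_{1,\v,\v} + \nabla^+_{2,\v,\v} + 2\,\nabla^+_{1,\v}\nabla^+_{2,\v}\big)\L(\Phi_s(x),y)$, with the one proviso that the scalar factors $\ddot f_0(x),b(x)$ stemming from $f_\tau(x)$ are evaluated at $x$, not at $\Phi_s(x)$.

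Third, I would apply $\tilde D_w$ in $s$. Since the only $s$-dependence is through $\Phi_s(x)$ in the first argument of $\L$, $\tilde D_w$ acts as the symmetric first-argument semi-derivative $\tilde D_{1,w}$ on $\L$ and on its $x$-derivatives, while leaving the scalars $\ddot f_0(x),b(x),\ddot f_0(y),b(y)$ untouched. To re-express the result as $\tilde D_w\big[(\nabla^+_{1,\v,\v} + \nabla^+_{2,\v,\v} + 2\,\nabla^+_{1,\v}\nabla^+_{2,\v})\L(x,y)\big]$, where now $\tilde D_w$ is a genuine derivative at $x$ hitting every $x$-dependence, I would use the elementary identity $c(x)\,\tilde D_{1,w}\eta(x,y) = \tilde D_w\big[c(x)\,\eta(x,y)\big] - (D_w c)(x)\,\eta(x,y)$, valid for $c\in\{\ddot f_0,b\}$ because $b$ and $\ddot f_0$ are differentiable so their symmetric semi-derivatives are ordinary derivatives. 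Applying this to the scalar factors in the $\nabla^+_{1,\v,\v}$- and $2\nabla^+_{1,\v}\nabla^+_{2,\v}$-parts (the $\nabla^+_{2,\v,\v}$-part carries no $x$-scalars and needs no correction) produces exactly the accumulated correction $-\big[(D_w\ddot f_0)(x) + 2(D_w b)(x)(D^+_{1,v} + D^+_{2,v} + b(y))\big]\L(x,y)$. Integrating over $y$, interchanging $\tilde D_w$ with $\int_M$ by Assumption~(vi) (the relevant integrals being finite and dominated as guaranteed by Assumption~(v)), then gives the claimed identity.

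The main obstacle, and the only place genuine care is needed, is the bookkeeping in the second and third steps: justifying the second-order chain/sum rule for $\tau\mapsto\L(F_\tau(\Phi_s(x)),F_\tau(y))$ from the first-order sum rule of Assumption~(v) (one must apply that rule to $(D^+_{1,v}+D^+_{2,v})\L$, which is legitimate precisely because of the $\tfrac{d^{\,2}}{d\tau^2}^{+}$-hypotheses), and — most importantly — tracking at which base point each scalar factor $f_\tau(x)$, $f_\tau(y)$ is evaluated, since it is exactly the mismatch between ``$\ddot f_0$ (resp.\ $b$) at $x$'' and ``$\ddot f_0$ (resp.\ $b$) at $\Phi_s(x)$'' that produces the correction terms on the right-hand side. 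Everything else is the same differentiate / change variables / exchange with integration routine already used for Proposition~\ref{SmodFieldEqA}.
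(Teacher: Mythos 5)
Your proposal is correct and follows essentially the same route as the paper's proof: rewrite the $M_\tau$-integral via the push-forward, expand the second $\tau$-semi-derivative of $f_\tau(x)\,\L(F_\tau(\Phi_s(x)),F_\tau(y))\,f_\tau(y)$ by the second-order Leibniz rule into the same six terms, regroup them as $\big(\nabla^+_{1,\v,\v}+\nabla^+_{2,\v,\v}+2\nabla^+_{1,\v}\nabla^+_{2,\v}\big)\L(\Phi_s(x),y)$, and obtain the correction terms from the mismatch between the scalars evaluated at $x$ and the first-argument derivative at $\Phi_s(x)$. The only cosmetic difference is that you apply the symmetric derivative $\tilde D_w$ directly, whereas the paper first takes the one-sided $s$-derivative and anti-symmetrizes in $w$ at the end; these are equivalent.
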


\Proof
We compute the second $\tau$-semi-derivative of the left hand side of~\eqref{SchiA1}. We have
\begin{align*}\begin{split}
& \frac{d^{\, 2}}{d\tau^{2}}^{\!+}_{|_0}  \int_{M_\tau} d\rho_\tau(y) \: f_\tau(x) \: \L \big( F_\tau(\Phi_s(x)),y \big)   = \int_M d\rho(y) \: \mathcal O(x,y)
\end{split}\end{align*}
where
\begin{align*}
&\mathcal O(x,y) = \Big( \ddot f_0(x) + D^+_{1,v} D^+_{1,v} + D^+_{2,v} D^+_{2,v} + \ddot f_0(y) \Big) \: \L\big(\Phi_s(x),y \big) \\
&+ 2 \: \Big( b(x) D^+_{1,v} + b(x) D^+_{2,v} + b(x) b(y) + D^+_{1,v} D^+_{2,v} + D^+_{1,v} b(y) + b(y) D^+_{2,v} \Big) \: \L\big(\Phi_s(x),y \big) \\
& = \Big( \nabla^{+}_{1,\v,\v} + \nabla^{+}_{2,\v,\v}	+  2 \, \nabla^+_{1,\v} \nabla^+_{2,\v}  \Big) \: \L\big(\Phi_s(x),y \big) \: .
\end{align*}
Here, Assumption~(v) guarantees existence and we have exchanged the derivatives with integration using Assumption~(vi).
Note that the derivatives which act on the first argument of $ \L $ are evaluated at the space-time-point $\Phi_s(x)$.
Therefore, taking the $s$-semi-derivative at $s=0$ gives
\begin{align*}\begin{split}
& \frac{d}{ds}^{\!+}_{|_0}   \frac{d^{\, 2}}{d\tau^2}^{\!+}_{|_0}  \int_{M_\tau} d\rho_\tau(y) \: f_\tau(x) \: \L \big( F_\tau(\Phi_s(x)),y \big)   = \int_M d\rho(y) \: \mathcal O'(x,y)
\end{split}\end{align*}
 with
\begin{align*}
\mathcal O'(x,y) &= D^+_{1,w} \: \Big(\nabla^{+}_{1,\v,\v}+\nabla^{+}_{2,\v,\v}+  2 \, \nabla^+_{1,\v} \nabla^+_{2,\v}  \Big) \: 
\L(x,y)\\
&  - \Big( (D_w \ddot f_0)(x) + 2 \, (D_w b) (x) \big( D^+_{1,v}  + D^+_{2,v} + b(y) \big) \Big) \: \L(x,y).
\end{align*}
Here, the conditions on~\eqref{SCond8Eq2} in Assumption~(v) guarantee existence. Anti-symmetrizing in $w$ gives the result.
\QED

\begin{Lemma}\label{Sprepfull2}
For a family~\eqref{Srhotau} of minimizers, $c \in C^\infty(\F,\R)$ and $x \in M$, we have
\begin{align*}
c(x) \int_M d\rho(x) \: \Big( \nabla^{+}_{1,\v,\v} + \nabla^{+}_{2,\v,\v} +  2 \, \nabla^+_{1,\v} \nabla^+_{2,\v}  \Big) \: \L (x,y) 
- c(x) \: \ddot f_0(x) \: \frac{\nu}{2} = 0 \: .
\end{align*}
\end{Lemma}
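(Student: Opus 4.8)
The plan is to mimic the proof of equation~\eqref{SchiA4} in Proposition~\ref{SmodFieldEqA}, but now carrying the expansion of $\ell_\tau(F_\tau(x))$ to second order in $\tau$ instead of first. The starting observation is that since $\v$ generates a family of minimizers, for every $x \in M$ we have $\ell_\tau(F_\tau(x)) \equiv 0$ on $(-\delta,\delta)$, hence also $f_\tau(x)\,\ell_\tau(F_\tau(x)) \equiv 0$, and therefore \emph{all} $\tau$-derivatives of this product vanish at $\tau = 0$; in particular $\frac{d^{\,2}}{d\tau^2}{}^{\!+}_{|_0}\big( f_\tau(x)\,\ell_\tau(F_\tau(x)) \big) = 0$. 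Multiplying by $c(x)$ (which is $\tau$-independent) keeps this zero.

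Next I would expand this second $\tau$-derivative using the definition of $\ell_\tau$ in~\eqref{Jelltau}, namely $\ell_\tau(z) = \int_\F \L(z,y)\,d\rho_\tau(y) - \frac{\nu}{2}$, so that
\[
f_\tau(x)\,\Big(\ell_\tau(F_\tau(x)) + \frac{\nu}{2}\Big) = \int_M f_\tau(x)\,\L\big(F_\tau(x), F_\tau(y)\big)\,f_\tau(y)\,d\rho(y)\,.
\]
Applying $\frac{d^{\,2}}{d\tau^2}{}^{\!+}_{|_0}$ to the right-hand side and exchanging differentiation with integration via Assumption~(vi) (whose hypotheses on~\eqref{SCond8Eq2} guarantee existence), the integrand produces exactly the combination
\[
\Big( \nabla^{+}_{1,\v,\v} + \nabla^{+}_{2,\v,\v} + 2\,\nabla^+_{1,\v}\nabla^+_{2,\v} \Big)\,\L(x,y)\,,
\]
just as the mixed term $\nabla^+_{1,\v}\nabla^+_{2,\v}$ appears in the first-order computation~\eqref{SchiA5} and the diagonal quadratic terms come from the definition~\eqref{Sflow4} of $\nabla^+_{\v,\v}$ together with the analogous identity~\eqref{Sflow2} for second derivatives along the flow. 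On the left-hand side, applying $\frac{d^{\,2}}{d\tau^2}{}^{\!+}_{|_0}$ to $f_\tau(x)\,\big(\ell_\tau(F_\tau(x)) + \frac{\nu}{2}\big)$ and using $\ell_\tau(F_\tau(x)) \equiv 0$ leaves only $\frac{d^{\,2}}{d\tau^2}{}^{\!+}_{|_0}\big( f_\tau(x)\,\frac{\nu}{2}\big) = \ddot f_0(x)\,\frac{\nu}{2}$. Multiplying the resulting identity by $c(x)$ and using $\ell_\tau(F_\tau(x)) \equiv 0$ once more to kill all terms containing an undifferentiated factor $\ell$ then yields the claimed equation.

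The bookkeeping obstacle — and the step I expect to require the most care — is verifying that the second $\tau$-semi-derivative genuinely distributes over the threefold product $f_\tau(x)\,\L(F_\tau(x),F_\tau(y))\,f_\tau(y)$ in the stated form, i.e.\ that the Leibniz rule for right semi-derivatives applies and produces precisely $\nabla^{+}_{1,\v,\v} + \nabla^{+}_{2,\v,\v} + 2\,\nabla^+_{1,\v}\nabla^+_{2,\v}$ with no extra cross terms; this hinges on the additivity assumption $\frac{d}{d\tau}{}^{\!+}_{|_{\tilde\tau}}\L(F_\tau(x),G_\tau(y)) = \frac{d}{d\tau}{}^{\!+}_{|_{\tilde\tau}}\L(F_\tau(x),G_{\tilde\tau}(y)) + \frac{d}{d\tau}{}^{\!+}_{|_{\tilde\tau}}\L(F_{\tilde\tau}(x),G_\tau(y))$ in Assumption~(v) and on the smoothness of $f$ in $\tau$, which makes the product rule for the scalar factors unproblematic. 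Once the distribution is justified and the convention that repeated derivatives act on the corresponding argument (Notation~\ref{Sgeneralnotation}) is applied, the identity follows from $\ell|_M \equiv 0$ exactly as in the first-order case.
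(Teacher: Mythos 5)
Your proposal is correct and follows essentially the same route as the paper: take the second right $\tau$-semi-derivative of the identity $\int_M f_\tau(x)\,\L(F_\tau(x),F_\tau(y))\,f_\tau(y)\,d\rho(y) = f_\tau(x)\big(\ell_\tau(F_\tau(x))+\tfrac{\nu}{2}\big)$, use that $f_\tau(x)\,\ell_\tau(F_\tau(x))\equiv 0$ on $M$ kills everything but $\ddot f_0(x)\,\tfrac{\nu}{2}$ on the right, expand the left via Assumption~(vi) into $\nabla^{+}_{1,\v,\v}+\nabla^{+}_{2,\v,\v}+2\,\nabla^+_{1,\v}\nabla^+_{2,\v}$, and multiply by $c(x)$. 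The distribution of the second semi-derivative over the product that you flag as the delicate step is exactly the computation the paper performs explicitly in Lemma~\ref{Sprepfull1} and simply cites here.
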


\Proof
We consider again the equation
\begin{align*}\begin{split}
& \int_{M_\tau} d\rho_\tau(y) \: f_\tau(x) \: \L \big( F_\tau(x),y \big)  \\
&\qquad =  f_\tau(x) \: \Big(  \ell_\tau \big(F_\tau(x) \big) + \frac{\nu}{2} \, \Big) \: .
\end{split}\end{align*}
Taking the $ \frac{d^{\, 2}}{d\tau^2}^{\!+}_{|_0}$ semi-derivative of the left hand side yields as above
\begin{align*}
\int_M d\rho(x) \: \Big( \nabla^{+}_{1,\v,\v} + \nabla^{+}_{2,\v,\v}	+  2 \, \nabla^+_{1,\v} \nabla^+_{2,\v}  \Big) \: \L (x,y) \: .
\end{align*}
Concerning the right hand side, we can argue as in~\eqref{SchiA7} to conclude
\[
\frac{d^{\, 2}}{d\tau^2}^{\!+}_{|_0}  f_\tau(x) \:   \ell_\tau \big(F_\tau(x) \big)   = 0 
\]
for all $x \in M$. Hence we are left with
\[
\frac{d^{\, 2}}{d\tau^2}^{\!+}_{|_0}  f_\tau(x) \: \Big(    \ell_\tau \big(F_\tau(x) \big) + \frac{\nu}{2} \, \Big) = \ddot f_0(x) \: \frac{\nu}{2} \: .
\]
Multiplying by $c(x)$ gives the result.
\QED

\begin{Thm} \Thmt{(Full non-differentiable field equations to second order)} \label{SFull2O}
\Chd{Ich denke ``full'' müsste hier grammatikalisch schon okay sein.
``Fully non-differentiable field equations'' würde ich eher verstehen als
vollkommen nicht-differenzierbar, und das soll hier ja nicht ausgesagt werden.}%
For every family~\eqref{Srhotau} of minimizers with generator $\v$, any $x \in M$ and any $\w \in \J$, we have
\begin{align*}\begin{split}
&\wnabla_\w  \int_M d\rho(y) \: \big( \nabla^+_{1,\v} + \nabla^+_{2,\v} \big) \: \L \big( x, y \big)  - \wnabla_\w \nabla^+_\v \: \frac{\nu}{2}\: \\
& + \wnabla_\w \int_M d\rho(y) \: \Big( \nabla^+_{1,\v,\v} + \nabla^+_{2,\v,\v} +  2 \, \nabla^+_{1,\v} \nabla^+_{2,\v}  \Big) \: 
\L(x,y) - \wnabla_\w \:  \nabla^+_{\v,\v} \: \frac{\nu}{2} \\
& \quad = \chi_{\w,\v}(x) + \chi^{(2)}_{\w,\v}(x) \: ,
\end{split}\end{align*}
where~$\chi_{\w,\v}(x)$ is as in~\eqref{SchiA} and
\[
\chi^{(2)}_{\w,\v}(x) =   \frac{1}{2} \: \frac{d}{d s}^{\!+}_{|_0}  \frac{d^{\, 2}}{d \tau^2}^{\!+}_{|_0}  \: 
f_\tau(x) \: \Big( \ell_\tau \big (F_\tau(\Phi_s (x)) \big)   -  \ell_\tau \big( F_\tau(\Phi_{-s}(x))\Big) \: .
\]
\end{Thm}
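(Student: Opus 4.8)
The plan is to follow the proof of Proposition~\ref{SmodFieldEqA} step by step, but to expand the family~\eqref{Srhotau} of minimizers to \emph{second} order in the parameter $\tau$; the full identity will then be obtained by adding the first-order non-differentiable linearized field equations~\eqref{SlinFEqA} to the second-order relation assembled from the two auxiliary Lemmas~\ref{Sprepfull1} and~\ref{Sprepfull2}.

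First I would start from the identity~\eqref{SchiA1}, which holds for all $x \in \F$, all $\tau \in (-\delta, \delta)$ and all $s$ near $0$, and apply the operator $\frac{1}{2} \frac{d}{ds}^{\!+}_{|_0} \frac{d^{\,2}}{d\tau^2}^{\!+}_{|_0}$ to both sides. By the definition of $\chi^{(2)}_{\w,\v}$ (with $\Phi$ the flow of the vectorial component $w$ of $\w$), the right-hand side becomes $\chi^{(2)}_{\w,\v}(x)$, while the left-hand side is evaluated by Lemma~\ref{Sprepfull1}. Writing $\w = (c,w)$ and $\v = (b,v)$, this yields
\[ \chi^{(2)}_{\w,\v}(x) = \tilde D_{w} \int_M d\rho(y) \: \Big( \nabla^{+}_{1,\v,\v} + \nabla^{+}_{2,\v,\v} + 2\, \nabla^{+}_{1,\v} \nabla^{+}_{2,\v} \Big) \L(x,y) - \int_M d\rho(y) \: \Big( (D_w \ddot f_0)(x) + 2\,(D_w b)(x) \big( D^{+}_{1,v} + D^{+}_{2,v} + b(y) \big) \Big) \L(x,y) \:. \]
Adding the identity of Lemma~\ref{Sprepfull2}, i.e.\ $c(x) \int_M d\rho(y) ( \nabla^{+}_{1,\v,\v} + \nabla^{+}_{2,\v,\v} + 2 \nabla^{+}_{1,\v} \nabla^{+}_{2,\v} ) \L(x,y) = c(x)\, \ddot f_0(x)\, \frac{\nu}{2}$, and using that $\wnabla_\w = c(x) + \tilde D_w$ (which follows from $\nabla^{+}_\w = c(x) + D^{+}_w$ together with Notation~\eqref{SnotationSymmSem}), the two quadratic-derivative integrals combine to $\wnabla_\w \int_M d\rho(y) ( \nabla^{+}_{1,\v,\v} + \nabla^{+}_{2,\v,\v} + 2 \nabla^{+}_{1,\v} \nabla^{+}_{2,\v} ) \L(x,y)$.

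It then remains to simplify the leftover terms using $\ell|_M \equiv 0$. For $x \in M$ one has $\int_M \L(x,y)\, d\rho(y) = \ell(x) + \frac{\nu}{2} = \frac{\nu}{2}$, so $\int_M (D_w \ddot f_0)(x)\, \L(x,y)\, d\rho(y) = (D_w \ddot f_0)(x) \frac{\nu}{2}$; together with the term $c(x)\, \ddot f_0(x) \frac{\nu}{2}$ this is exactly $\wnabla_\w \nabla^{+}_{\v,\v} \frac{\nu}{2}$, because $\nabla^{+}_{\v,\v} \frac{\nu}{2} = \ddot f_0\, \frac{\nu}{2}$ by~\eqref{Sflow4} is a smooth function, on which $\wnabla_\w$ and $\nabla^{+}_\w$ act identically. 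For the last leftover I would use the operator identity $D^{+}_{1,v} + D^{+}_{2,v} + b(y) = \nabla^{+}_{1,\v} + \nabla^{+}_{2,\v} - b(x)$, so that $\int_M (D^{+}_{1,v} + D^{+}_{2,v} + b(y)) \L(x,y)\, d\rho(y) = \int_M (\nabla^{+}_{1,\v} + \nabla^{+}_{2,\v}) \L(x,y)\, d\rho(y) - b(x) \frac{\nu}{2}$ for $x \in M$; but the computation in the proof of Proposition~\ref{SmodFieldEqA} leading to~\eqref{SchiA4}, read with trivial scalar factor and using $f_\tau(x)\, \ell_\tau(F_\tau(x)) \equiv 0$ for $x \in M$, gives $\int_M (\nabla^{+}_{1,\v} + \nabla^{+}_{2,\v}) \L(x,y)\, d\rho(y) = b(x) \frac{\nu}{2}$ for $x \in M$, so this contribution vanishes. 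What survives is precisely $\chi^{(2)}_{\w,\v}(x) = \wnabla_\w \int_M d\rho(y) ( \nabla^{+}_{1,\v,\v} + \nabla^{+}_{2,\v,\v} + 2 \nabla^{+}_{1,\v} \nabla^{+}_{2,\v} ) \L(x,y) - \wnabla_\w \nabla^{+}_{\v,\v} \frac{\nu}{2}$. Finally, adding the first-order equation~\eqref{SlinFEqA} of Proposition~\ref{SmodFieldEqA} — whose left-hand side coincides with the first line of the theorem since $\nabla^{+}_\w \nabla^{+}_\v \frac{\nu}{2} = \wnabla_\w \nabla^{+}_\v \frac{\nu}{2}$, and whose right-hand side is $\chi_{\w,\v}(x)$ — produces the claimed identity.

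I expect the main obstacle to be not conceptual but bookkeeping: one must track carefully on which argument each iterated semi-derivative acts (in particular that the $D^{+}_{1,w}$ appearing in Lemma~\ref{Sprepfull1} is evaluated at $\Phi_s(x)$ before the $s$-derivative is taken, and that semi-derivatives of products with the smooth weights $f_\tau$ expand by the Leibniz rule), and one must justify each interchange of $\frac{d^{\,2}}{d\tau^2}^{\!+}$ and $\frac{d}{ds}^{\!+}$ with the $\rho$-integration. These interchanges are exactly what Assumptions~\ref{SA12}(v) and~(vi), in particular the conditions on the second semi-derivatives~\eqref{SCond8Eq2}, provide, and they are already carried out inside Lemmas~\ref{Sprepfull1} and~\ref{Sprepfull2}; at the level of this theorem the argument therefore reduces to the algebraic assembly above.
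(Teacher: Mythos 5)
Your proposal is correct and follows essentially the same route as the paper's proof: the identity is assembled by adding the first-order relation of Proposition~\ref{SmodFieldEqA} to the second-order relation obtained from Lemmas~\ref{Sprepfull1} and~\ref{Sprepfull2}, and the leftover terms are cancelled using $\int_M \L(x,y)\,d\rho(y) = \tfrac{\nu}{2}$ on $M$ together with~\eqref{SchiA4}. Your handling of the $2(D_w b)(x)(\cdots)$ term via the operator identity $D^+_{1,v}+D^+_{2,v}+b(y)=\nabla^+_{1,\v}+\nabla^+_{2,\v}-b(x)$ is just an algebraic rephrasing of the paper's step of adding $2(D_w b)(x)\,b(x)\,\ell(x)=0$, so the two arguments coincide.
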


\Proof
Consider~\eqref{SchiA1}. Terms which arise from first derivatives in~$\tau$ are given by Proposition~\ref{SmodFieldEqA}. The terms which arise from second derivatives in $\tau$
are obtained by adding Lemmas~\ref{Sprepfull1} and~\ref{Sprepfull2}.
Furthermore, using that $\rho$ is a minimizer, for every $x \in M$, we have
\begin{align*}
& \int_M d\rho(y) \Big( (D_w \ddot f_0)(x) + 2 (D_w b) (x) \big( D^+_{1,v}  + D^+_{2,v} + b(y) \big) \Big) \: \L(x,y) + c(x) \ddot f_0(x) \: \frac{\nu}{2}  \\
& = ( \nabla_\w \ddot f_0)(x) \: \frac{\nu}{2} +  2 (D_w b) (x) \Big( D^+_{1,v}  \ell(x)   + \int_M d\rho(y) \: \nabla^+_{2,\v} \L(x,y) \Big) \\
& \stackrel{(\star)}{=} ( \nabla_\w \ddot f_0)(x) \: \frac{\nu}{2} +  2 (D_w b) (x) \Big( \nabla^+_{1,\v}  \ell(x)   + \int_M d\rho(y) \: \nabla^+_{2,\v} \L(x,y) \Big)  
\end{align*}
where in $(\star)$ we have added the term $2 (D_w b) (x) \: b(x) \: \ell(x)$, which vanishes for every $x \in M$. For the last term in brackets, we have
\begin{align*}
 \nabla^+_{1,\v}  \ell(x)   + \int_M d\rho(y) \: \nabla^+_{2,\v} \L(x,y) = \int_M d\rho(y) \: \big(  \nabla^+_{1,\v} + \nabla^+_{2,\v} \big) \: \L(x,y)  -  \nabla^+_\v \frac{\nu}{2} \:,
\end{align*}
which vanishes by~\eqref{SchiA4} since $\nabla^+_\v \frac{\nu}{2} = b(x) \:  \frac{\nu}{2} $.
Finally, using~\eqref{Sflow4} the remaining term $ ( \nabla_\w \ddot f_0)(x) \: \frac{\nu}{2}$ can be written as
\[
\nabla_\w \ddot f_0 (x) \: \frac{\nu}{2}  = \nabla_\w  \: \nabla^+_{\v,\v} \: \frac{\nu}{2} = \wnabla_\w  \: \nabla^+_{\v,\v} \: \frac{\nu}{2} \: ,
\]
giving the result. \QED

\noindent For an interpretation of Theorem~\ref{SFull2O} we refer to Section~\ref{IModelQT}.

\Later{Symplektische form mit second order term?}
\Later{Verschwindet $\chi_{\w,\v}(x)$ stochastisch falls man ... annimmt?}

\section{Noether-Like Theorems in the Non-Differentiable Setting}\label{SNoether}

We conclude this chapter by extending Theorem~\ref{Nthmsymmlag} to our setting of jets which are not differentiable.
Note that in Chapter~\ref{DissNoether}, we have worked with a slightly different definition of $\ell$ compared to~\eqref{Jelldeflip} (cf.~\eqref{Nldef}).

\begin{Thm}\label{SNoetherNonDiff}
Let $\Phi_\tau$ be a symmetry of the Lagrangian as in Definition~\ref{Ndefsymmlagr}. Then for any measure $\rho$ and any compact~$\Omega \in \Sigma(M)$,
we have
\beq
\label{SconservationLagrEqNd}
\frac{1}{2} \: \frac{d}{ds}_{|_0} \int_\Omega d\rho(x) \int_{M \setminus \Omega} d\rho(y)\:
\Big( \L \big( \Phi_s(x),y \big) - \L \big( \Phi_{-s}(x), y \big) \Big) = 
\int_\Omega d\rho(x) \: \tilde D_w \ell(x)   \:.
\eeq
\end{Thm}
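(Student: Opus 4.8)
The plan is to mimic the proof of Theorem~\ref{Nthmsymmlag}, replacing the full $\tau$-derivative by the symmetric semi-derivative $\frac{1}{2}\frac{d}{ds}_{|_0}$ and keeping careful track of the term that no longer vanishes. As in the earlier proof, I would start from the defining property of a symmetry of the Lagrangian, $\L(x,\Phi_s(y)) = \L(\Phi_{-s}(x),y)$ for all $x,y\in M$, multiply by $f(x)f(y)$ for a bounded measurable $f$ and integrate over $M\times M$. Using the symmetry $\L(x,y)=\L(y,x)$ and the symmetry of the integrand under exchange of $x$ and $y$, this rewrites as
\[
0 = \iint_{M\times M} f(x)f(y)\,\big(\L(\Phi_s(x),y)-\L(\Phi_{-s}(x),y)\big)\,d\rho(x)\,d\rho(y)\,.
\]
Then I replace $f(y)$ by $1-(1-f(y))$, expand, and use the definition of $\ell$ (in the present chapter's normalization, $\ell(x)=\int_M\L(x,y)\,d\rho(y)-\frac{\nu}{2}$, where the constant cancels in the difference) to obtain, after choosing $f=\chi_\Omega$, the identity
\[
\int_\Omega d\rho(x)\int_{M\setminus\Omega}d\rho(y)\,\big(\L(\Phi_s(x),y)-\L(\Phi_{-s}(x),y)\big)
= \int_\Omega \big(\ell(\Phi_s(x))-\ell(\Phi_{-s}(x))\big)\,d\rho(x)\,.
\]
This is the analogue of~\eqref{Nform}, and it holds for arbitrary $\rho$ and arbitrary compact $\Omega$, with no minimality assumption; the only input is the symmetry of $\L$ and the measurability needed for Tonelli/Fubini, which is covered by Assumptions~(v) and~(vi).

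Next I apply $\frac{1}{2}\frac{d}{ds}_{|_0}$ to both sides. On the left this directly produces the left-hand side of~\eqref{SconservationLagrEqNd}, provided one may interchange the semi-derivative with the $x$- and $y$-integrations; this interchange is exactly what Assumption~(vi) (together with the integrability in Assumption~(v), applied with $f=1$, $G=\id$, $F=\Phi$) guarantees. On the right-hand side I write $\frac{1}{2}\frac{d}{ds}_{|_0}\big(\ell(\Phi_s(x))-\ell(\Phi_{-s}(x))\big) = \frac{1}{2}\big(D^+_w-D^+_{-w}\big)\ell(x) = \tilde D_w\ell(x)$, using the definition~\eqref{SsymDD} of the symmetric directional derivative and $\frac{d}{ds}^{-}_{|_0}h(s)=-\frac{d}{ds}^{+}_{|_0}h(-s)$, where $w$ is the vector field generating the flow $\Phi$ (so $\frac{d}{ds}^{+}_{|_0}\ell(\Phi_s(x))=D^+_w\ell(x)$). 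Again Assumption~(vi) allows pulling this derivative through $\int_\Omega$, yielding $\int_\Omega\tilde D_w\ell(x)\,d\rho(x)$, which is the right-hand side of~\eqref{SconservationLagrEqNd}.

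The main point to handle with care — and the place where the statement genuinely differs from Theorem~\ref{Nthmsymmlag} — is that here $\rho$ need not be a minimizer, so one cannot invoke the Euler–Lagrange equations~\eqref{Nlfirst} to kill the right-hand side; it simply remains as $\int_\Omega\tilde D_w\ell\,d\rho$. One should also be slightly careful that the semi-derivatives in $s$ exist: since $\Phi$ is a flow (smooth in $s$) and $\L$ has directional semi-derivatives by Assumption~(v), $s\mapsto\L(\Phi_s(x),y)$ has one-sided derivatives at $s=0$, and by the identity above so does the integrated right-hand side; the symmetrized combination is then automatically well-defined. A minor bookkeeping issue is the difference in normalization of $\ell$ between Chapters~\ref{DissNoether} and~\ref{DissStoch}, but since only the difference $\ell(\Phi_s(x))-\ell(\Phi_{-s}(x))$ enters, the additive constant $-\nu/2$ is irrelevant and the argument goes through verbatim. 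Thus I expect no serious obstacle: the proof is a direct adaptation of the earlier one, the novelty being merely to retain rather than discard the boundary term, and to justify the interchange of the symmetric semi-derivative with the integrals via Assumptions~(v) and~(vi).
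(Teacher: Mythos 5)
Your proposal is correct and follows essentially the same route as the paper: derive the identity equating the surface layer integral to $\int_\Omega(\ell(\Phi_s(x))-\ell(\Phi_{-s}(x)))\,d\rho(x)$ exactly as in Theorem~\ref{Nthmsymmlag}, then take the ($s$-symmetrized) derivative, exchange semi-derivatives with the integration via Assumption~(vi), and keep the term $\int_\Omega \tilde D_w\ell\,d\rho$ instead of killing it with the EL equations. The only superfluous step is your interchange of the derivative with the $x,y$-integrations on the left-hand side, which is not needed since the theorem's left-hand side is by definition the derivative of the double integral itself.
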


\Proof We proceed exactly as in the proof of Theorem~\ref{Nthmsymmlag}.
We multiply~\eqref{Nsymmlagr} by a bounded measurable function~$f$ on~$M$ with compact support 
and integrate. For all $s \in (-s_{\max},s_{\max})$, this gives
\begin{align*}
0 &= \iint_{M \times M} f(x)\, f(y)\: \Big(
\L \big( x, \Phi_s(y) \big) - \L \big( \Phi_{-s}(x), y \big) \Big)\, d\rho(x)\, d\rho(y) \\
&=\iint_{M \times M} f(x)\, f(y)\: \Big(
\L \big( \Phi_s(x),y \big) - \L \big( \Phi_{-s}(x), y \big) \big) \Big) \, d\rho(x)\, d\rho(y)\:,
\end{align*}
where in the last step we used the symmetry in the arguments of the Lagrangian (Assumption~(i) on page~\pageref{JCond1}).
Existence of this expression is given by Assumption~(v). We replace~$f(y)$ by~$1 - (1-f(y))$, multiply out and use the definition of~$\ell$, \eqref{Jelldeflip}.
We thus obtain
\begin{align*}
0 &=\int_M f(x)\: 
\Big(  \ell \big( \Phi_s(x)\big) + \frac{\nu}{2}  - \ell\big( \Phi_{-s}(x)) - \frac{\nu}{2} \Big) \,d\rho(x) \\
&\quad -\iint_{M \times M} f(x)\, \big(1-f(y) \big)\:
\Big( \L \big( \Phi_s(x),y \big) - \L \big( \Phi_{-s}(x), y \big) \Big) \, d\rho(x)\, d\rho(y)\:.
\end{align*}
Choosing~$f$ as the characteristic function of~$\Omega$, we obtain the identity
\[ \begin{split}
\int_\Omega &d\rho(x) \: \int_{M \setminus \Omega} d\rho(y)\:
\Big( \L \big( \Phi_s(x), y \big) - \L \big(\Phi_{-s}(x),  y \big) \Big) \\
&= \int_\Omega \Big( \ell \big( \Phi_s(x) \big)  - \ell \big( \Phi_{-s}(x) \big) \Big)\: d\rho(x) \:.
\end{split}%
\]
Since this equation holds for all $s \in (-s_{\max},s_{\max})$, its $s$-derivative exists. However, using Assumption~(vi), on the right hand side
we can only exchange the semi-derivatives with integration. Doing so gives the result.
\QED

The previous theorem shows that in general, the non-differentiability of $\L$ and $\ell$ leads to a break-down of the conservation law~\eqref{NconservationLagrEq}
and thus, by Sections~\ref{Nsecexcurrent} and~\ref{NsecexEM}, to a break-down of current conservation and conservation of energy-momentum. (Recall that we had to include the a differentiability assumption in Theorem~\ref{Nthmcurrent}.) However, the Euler-Lagrange equations imply that the right hand side of~\eqref{SconservationLagrEqNd} has a varying sign (cf.~\eqref{SELgk}),
making Definition~\ref{SmacSymm0} applicable. The next proposition shows that, given this assumption, the conservation laws hold macroscopically.

\begin{Prp}\label{SConservLawMacroscopic}
If symmetric derivatives vanish macroscopically, for every minimizer~$\rho$ of the causal variational principle and every compact macroscopic $\tilde \Omega \in \mathscr M$,
we have
\[
\frac{d}{ds}_{|_0} \int_{\tilde \Omega} d\rho(x) \int_{M \setminus \tilde \Omega} d\rho(y)\:
\Big( \L \big( \Phi_s(x),y \big) - \L \big( \Phi_{-s}(x), y \big) \Big) = 0 \: .
\]
\end{Prp}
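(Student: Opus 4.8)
The proof of Proposition~\ref{SConservLawMacroscopic} will be a short deduction from Theorem~\ref{SNoetherNonDiff} combined with Definition~\ref{SmacSymm0}. First I would invoke Theorem~\ref{SNoetherNonDiff}: since $\Phi_\tau$ is a symmetry of the Lagrangian (Definition~\ref{Ndefsymmlagr}) and $\tilde\Omega \in \mathscr M$ is compact (hence in $\Sigma(M)$ after intersecting with $M$, or already a subset of $M$ by the conventions of Definition~\ref{SDefMac}), the identity~\eqref{SconservationLagrEqNd} holds with $\Omega$ replaced by $\tilde\Omega$. This gives
\[
\frac{1}{2}\,\frac{d}{ds}_{|_0}\int_{\tilde\Omega} d\rho(x)\int_{M\setminus\tilde\Omega} d\rho(y)\,\Big(\L\big(\Phi_s(x),y\big)-\L\big(\Phi_{-s}(x),y\big)\Big) = \int_{\tilde\Omega} d\rho(x)\,\tilde D_w\,\ell(x)\,,
\]
where $w$ is the vectorial generator of the flow $\Phi$. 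So it remains only to show that the right-hand side vanishes.

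Second, I would observe that by definition of the symmetric directional derivative~\eqref{SsymDD}, $\tilde D_w\ell(x) = \tfrac12\big(D^+_w\ell(x) - D^+_{-w}\ell(x)\big)$, so that
\[
\int_{\tilde\Omega} \tilde D_w\,\ell(x)\, d\rho(x) = \frac12\int_{\tilde\Omega}\big(D^+_w\ell(x) - D^+_{-w}\ell(x)\big)\, d\rho(x)\,.
\]
Now I would apply Definition~\ref{SmacSymm0}: taking $\Phi$ in that definition to be the flow of $w$, the hypothesis ``symmetric derivatives vanish macroscopically'' states that
\[
\frac{1}{2}\,\frac{d}{ds}^{\!+}_{|_0}\int_{\tilde\Omega}\Big(\ell\big(\Phi_s(x)\big)-\ell\big(\Phi_{-s}(x)\big)\Big)\, d\rho(x) = 0\,.
\]
Using Assumption~(vi) to exchange the right and left semi-derivatives with the integration (exactly as in the passage from~\eqref{SmacDiffMin0} to~\eqref{SmacDiffMin0infini} in the proof of Proposition~\ref{SstochTermVanishes}), this is equivalent to $\tfrac12\int_{\tilde\Omega}\big(D^+_w - D^+_{-w}\big)\ell(x)\, d\rho(x) = 0$, which is precisely what we need. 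Note that the compactness of $\tilde\Omega$ is used twice: once to guarantee existence of the integral in~\eqref{SconservationLagrEqNd} via Assumption~(v), and once because Definition~\ref{SmacSymm0} only asserts the vanishing for compact macroscopic regions.

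Combining these, the right-hand side of the displayed Noether identity is zero, hence so is the left-hand side, and multiplying by $2$ yields the claimed statement. The only mild subtlety — which I would spell out briefly — is the bookkeeping between the two-sided derivative $\tfrac{d}{ds}_{|_0}$ appearing in Theorem~\ref{SNoetherNonDiff} and Proposition~\ref{SConservLawMacroscopic} versus the one-sided semi-derivatives in Definition~\ref{SmacSymm0}; this is handled by noting that once the relevant limit exists (which Theorem~\ref{SNoetherNonDiff} guarantees for the surface-layer side and Definition~\ref{SmacSymm0} guarantees for the $\ell$-side), the two-sided derivative agrees with $\tfrac12(\tfrac{d}{ds}^{+}_{|_0} + \tfrac{d}{ds}^{-}_{|_0})$ and $\tfrac{d}{ds}^{-}_{|_0}h(s) = -\tfrac{d}{ds}^{+}_{|_0}h(-s)$ converts the $\Phi_{-s}$ contribution appropriately. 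I do not anticipate any real obstacle here — the proposition is essentially an immediate corollary of the two preceding results, and the ``hard part'' (establishing the non-differentiable Noether identity and the consequences of the macroscopic vanishing assumption) has already been done.
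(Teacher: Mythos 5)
Your proposal is correct and follows essentially the same route as the paper: Theorem~\ref{SNoetherNonDiff} reduces the claim to the vanishing of $\int_{\tilde \Omega} \tilde D_w \ell(x)\, d\rho(x)$, which follows from Definition~\ref{SmacSymm0} together with Assumption~(vi) to exchange the semi-derivative with the integration. The extra bookkeeping you spell out between the two-sided derivative and the one-sided semi-derivatives is left implicit in the paper but is consistent with its argument.
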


\Proof
Definition~\ref{SmacSymm0} implies 
\[ \int_{\tilde \Omega} d\rho(x) \: \tilde D_w \ell(x)  = 0 \:, \]
where we have exchanged differentiation and integration with Assumption~(vi).
Hence Theorem~\ref{SNoetherNonDiff} gives the result.
\QED

\Later{
Explain why macroscopic regions do not contain a limit of neighbourhoods of a point which converge towards that point.
And hence can't use Lebesgue differentiation theorem to conclude that $\chi_{\w,\v} (x) = 0$ in some setting from a.e.?}

\chapter{Outlook}\label{DissOutlook}

We conclude this thesis by giving a list of
research projects which could be carried out based on the results which have been established in this thesis.\\

\noindent \emph{\large Connected to the Continuum Limit:}
\begin{itemize}[leftmargin=1.5em]
\itemsep.3em
\itemD Most importantly, it is necessary to evaluate the jet-formalism, the linearized field equations and the correction terms
of Theorem~\ref{SFull2O} in the continuum limit.
If successful, this will provide correction terms to fundamental equations of contemporary physics.
\itemD Connected to this is that our formalism might be a suitable starting point for getting the connection
to the canonical formulation of quantum field theory in Fock spaces.
Working out physical applications in this context might ultimately contribute to the goal of making experimental predictions in the form of corrections to
measurable physical quantities. 
\end{itemize}\medskip

\noindent \emph{\large Internal to the Theory:}
\begin{itemize}[leftmargin=1.5em]
\itemsep.3em
\itemD One could derive higher order corrections to the linearized field equations. In particular, it seems promising to check whether
higher order field equations, as well as higher order stochastic terms, can be controlled by Assumption~\ref{SDefMac} or similar
assumptions, and how the higher order terms relate to the Hamiltonian time evolution. In case it would turn out that these higher order corrections
cannot be suppressed by reasonable assumptions, anticipating an analysis in the continuum limit, this might hint towards correction terms to
the fundamental equations of contemporary physics which cannot be ``argued away''.
\itemD In the same direction, it would be very interesting to study whether Definition~\ref{SDefMac} can be proven or disproven for minimizers
of the causal action. Even if this turns out to be impossible at the time due to computational limitations, the investigation of minimizing examples might 
give hints for one or the other case. To see why research in this direction might be relevant, suppose that one were able to prove that
Assumption~\ref{SDefMac} cannot hold for minimizers of the causal action. This would imply that the major argument for the negligibility of the stochastic term
on macroscopic scales would break away, again paving the path to new experimental predictions.
Put differently, investigations into the validity of Assumption~\ref{SDefMac} might yield lower bounds for the correction terms established in Chapter~\ref{DissStoch}.
\itemD It would be promising to establish a direct connection between Noether-like theorems and the dynamics of jets. More concretely, one might ask
in which way the existence of symmetries of the Lagrangian or of symmetries of the universal measure is related to conservation laws for solutions of the non-differentiable linearized field equations.
\itemD One could follow up on the example constructed in Section~\ref{Jseclattice}.
E.g., one could modify the Lagrangian~\eqref{JDefL} to make it compatible with the assumptions of Chapter~\ref{DissStoch}.
This would allow to study the stochastic and non-linear correction terms in an example. 
Furthermore, one could compare this calculation to numerical analysis. 
\end{itemize}\medskip

\noindent \emph{\large Concerning Foundations of Quantum Theory:}\\[-.5em]

As explained in Section~\ref{IModelQT}, the results of this thesis suggest a particular resolution of the quantum mechanical measurement problem
based on the theory of causal fermion systems. 
Of course, the exact investigation of the form of the stochastic term in the continuum limit (cf. above) is one of the most promising research projects with
respect to foundations of quantum theory, as it might give rise to a relativistic and interacting dynamical collapse model. 
Furthermore, the following questions seem promising to study from the perspective of causal fermion systems:
\begin{itemize}[leftmargin=1.5em]
\itemsep.3em
\itemD Can the Born rule be derived from the mathematical structure of the theory of causal fermion systems? Or can it be shown to emerge when minimizing the
causal action principle? Is there higher order interference?
\itemD Is there any explanation, based on the theory of causal fermion systems, of why quantum theory determines the tensor product to be the right structure to compose systems?
\itemD Which stochastic processes are compatible with the stochastic term derived in this thesis?
\end{itemize}

\clearpage\phantomsection
\addcontentsline{toc}{chapter}{Bibliography}
\newcommand{\etalchar}[1]{$^{#1}$}

\end{document}